\newcolumntype{Y}{>{\raggedleft\arraybackslash}X}
\newcolumntype{Z}{>{\centering\arraybackslash}X}
\DeclareMathAlphabet{\pazocal}{OMS}{zplm}{m}{n} 
\DeclareFontFamily{OT1}{pzc}{}
\DeclareFontShape{OT1}{pzc}{m}{it}{<-> s * [1.10] pzcmi7t}{}
\DeclareMathAlphabet{\mathpzc}{OT1}{pzc}{m}{it}
\newtheorem{assumption}{Assumption}
\newtheorem{definition}{Definition}
\newtheorem{lemma}{Lemma}
\newtheorem{proposition}{Proposition}
\newtheorem{theorem}{Theorem}
\newtheorem{remark}{Remark}
\newcolumntype{P}[1]{>{\centering\arraybackslash}p{#1}}
\newcommand{\bm}{\boldsymbol}
\newcommand{\cm}[1]{\mbox{\boldmath$\mathscr{#1}$}}
\newcommand{\Fr}{{\mathrm{F}}}
\newcommand{\op}{{\mathrm{op}}}
\newcommand{\ma}{{\mathrm{MA}}}
\newcommand{\ar}{{\mathrm{AR}}}
\newcommand{\colsp}{\mathrm{colsp}}
\newcommand{\rsc}{{\mathrm{rsc}}}
\newcommand{\dev}{{\mathrm{dev}}}
\newcommand{\init}{{\mathrm{init}}}
\newcommand{\HW}{{\mathrm{HW}}}
\newcommand{\stcomp}[1]{{#1}^\complement} 
\def\HH{{\mathrm{\scriptscriptstyle\mathsf{H}}}}
\DeclareMathOperator*{\vect}{vec}
\DeclareMathOperator*{\rank}{rank}
\DeclareMathOperator*{\trace}{tr}
\DeclareMathOperator*{\argmin}{arg\,min}
\DeclareMathOperator*{\diag}{diag}
\DeclareMathOperator*{\var}{var}
\renewcommand{\arraystretch}{1}
\numberwithin{equation}{section}
\title{\vspace{-30pt} An Interpretable and Efficient Infinite-Order Vector Autoregressive Model for High-Dimensional Time Series}
\author{Yao Zheng\\\textit{University of Connecticut}}
\date{}
\begin{document}
\begin{bibunit}[apalike]	
\setlength{\parindent}{16pt}
	
\maketitle
	
\begin{abstract}
As a special infinite-order vector autoregressive (VAR) model, the vector autoregressive moving average (VARMA) model can capture much richer temporal patterns than the widely used finite-order VAR model. However, its practicality has long been hindered by its non-identifiability, computational intractability, and difficulty of interpretation, especially for high-dimensional time series. This paper proposes a novel sparse infinite-order VAR model for high-dimensional time series, which avoids all above drawbacks while inheriting essential temporal patterns of the VARMA model. As another attractive feature, the temporal and cross-sectional structures of the VARMA-type dynamics captured by this model can be interpreted separately, since they are characterized by different sets of parameters. This separation naturally motivates the sparsity assumption on the parameters determining the cross-sectional dependence. As a result, greater statistical efficiency and interpretability can be achieved with little loss of temporal information. We introduce two $\ell_1$-regularized estimation methods for the proposed model, which can be efficiently implemented via block coordinate descent algorithms, and derive the corresponding nonasymptotic error bounds. A consistent model order selection method based on the Bayesian information criteria is also developed. The merit of the proposed approach is supported by simulation studies and a real-world macroeconomic data analysis.
\end{abstract}

\textit{Keywords}:  Granger causality; High-dimensional time series; Infinite-order vector autoregression; Sparse estimation; VARMA


\newpage
\section{Introduction}\label{section:intro}

Let $\bm{y}_t\in\mathbb{R}^N$ be the observation of an $N$-dimensional time series at time $t$. The need for modeling $\bm{y}_t$ with a large dimension $N$ is ubiquitous, ranging from economics and finance \citep{NWBM20, WBBM21} to biology and neuroscience \citep{LZLR09, Gorrostieta2012}, and to environmental and health sciences \citep{DP16, DZZ16}. For modeling $\bm{y}_t$, three issues are of particular importance: 
\begin{itemize}[itemsep=2pt,parsep=2pt,topsep=4pt,partopsep=2pt]
	\item [(I1)] Flexibility of temporal dynamics: As $N$ increases, it is more likely that $\bm{y}_t$ contains component series with complex temporal dependence structures. Then information further in the past may be needed to generate more flexible temporal dynamics. 
	\item [(I2)] Efficiency: It is important that the estimation is efficient both statistically and computationally under large $N$, so that accurate forecasts can be obtained.
	\item [(I3)] Interpretability: Ideally, the model should have easy interpretations, such as direct  implications of Granger causality  \citep{Granger69} among the $N$ component series.
\end{itemize}

The  finite-order vector autoregressive (VAR) model, coupled with dimension reduction techniques such as sparse \citep{BM21} and low-rank \citep{Wang2021High}  methods, has been widely studied for high-dimensional time series. This model is highly popular due to its theoretical and computational tractability, and the coefficient matrices have intuitive interpretations analogous to those in the multivariate linear regression. However, in practice, a large lag order is often required   for the VAR model to adequately fit the data \citep{CEK16, NWBM20}. Thus, it is more realistic to assume that the data  follow the more general, infinite-order VAR (VAR($\infty$)) process:
\begin{equation} \label{eq:VARinf}
	\bm{y}_t =\sum_{h=1}^{\infty} \bm{A}_h\bm{y}_{t-h}+\bm{\varepsilon}_t,
\end{equation}
where $\bm{\varepsilon}_t$ are the innovations, and $\bm{A}_h\in\mathbb{R}^{N\times N}$ are the AR coefficient matrices; in particular, it reduces to the VAR($P$) model when $\bm{A}_h=\bm{0}$ for $h>P$. In fact, if a sample $\{\bm{y}_t\}_{t=1}^T$ is generated from \eqref{eq:VARinf}, we can approximate it by a VAR($P$) model provided that $P\rightarrow\infty$ at an appropriate rate as the sample size $T\rightarrow\infty$ \citep{Lutkepohl2005}, which in turn explains the practical need for a large $P$. Nonetheless, for $\bm{y}_t$ in \eqref{eq:VARinf} to be stationary, $\bm{A}_h$ must diminish quickly as $h\rightarrow\infty$; otherwise, the infinite sum will be ill-defined.
The decay property of $\bm{A}_h$, coupled with a large $P$,  will not only pose difficulties in high-dimensional estimation, but make the fitted VAR($P$) model hard to interpret. Take the Lasso estimator of the VAR($P$) model with sparse  $\bm{A}_h$'s. Since all entries of $\bm{A}_h$ must be small at even moderately large $h$,  the Lasso may fail to capture the significant yet small entries. Moreover, the sparsity pattern of  $\bm{A}_h$ for the fitted model generally  varies substantially across $h$, making it even more difficult to interpret $\bm{A}_h$'s simultaneously \citep{SBM12, NWBM20}.

In the literature on multivariate time series, an alternative approach to infinite-order VAR modeling is to consider  the  vector autoregressive moving average (VARMA) model. For example, the VARMA($1,1$) model is 
\begin{equation} \label{eq:VARMA11}
	\bm{y}_t = \bm{\Phi}\bm{y}_{t-1}+\bm{\varepsilon}_t -\bm{\Theta}\bm{\varepsilon}_{t-1},
\end{equation} 
where $\bm{\Phi}, \bm{\Theta}\in\mathbb{R}^{N\times N}$ are the AR and MA coefficient matrices.
Assuming that \eqref{eq:VARMA11} is invertible, that is, all eigenvalues of $\bm{\Theta}$ are less than one in absolute value, \eqref{eq:VARMA11} can be written as the VAR($\infty$) process in \eqref{eq:VARinf} with $\bm{A}_h=\bm{A}_h(\bm{\Phi},\bm{\Theta})=\bm{\Theta}^{h-1}(\bm{\Phi}-\bm{\Theta})$ for $h\geq 1$. Note that $\bm{A}_h$
diminishes quickly as $h\to \infty$ due to the exponential factor $\bm{\Theta}^{h-1}$, so the VAR($\infty$) process is well defined. Hence, the MA part of the model is the key to  parsimoniously generating VAR($\infty$)-type temporal dynamics. For the general VARMA($p,q$) model, $\bm{y}_t =\sum_{i=1}^{p} \bm{\Phi}_i\bm{y}_{t-i}+\bm{\varepsilon}_t - \sum_{j=1}^{q}\bm{\Theta}_j\bm{\varepsilon}_{t-j}$, the richness of temporal patterns will increase with $p$ and $q$, but with only small orders $p$ and $q$, the VARMA model   can usually provide more accurate forecasts than large-order VAR models in practice \citep{AV08, CEK16}. 
Compared with finite-order VAR models, the VARMA model is more favorable in terms of (I1) but suffers from severe drawbacks regarding (I2), as its computation is generally complicated due to the following two problems:
\begin{itemize}[itemsep=2pt,parsep=2pt,topsep=4pt,partopsep=2pt]
	\item [(P1)] Non-identifiability: 
	For example, in the VARMA($1,1$) case, there are multiple pairs of $(\bm{\Theta}, \bm{\Phi})$ corresponding to the same process. The root cause of this problem is the matrix multiplications in the parametric form of $\bm{A}_h(\bm{\Phi},\bm{\Theta})=\bm{\Theta}^{h-1}(\bm{\Phi}-\bm{\Theta})$.
	\item [(P2)] High-order matrix polynomials: Consider as an example the ordinary least squares (OLS)  estimation of the VARMA($1,1$) model. For a sample $\{\bm{y}_t\}_{t=1}^T$, since $\bm{A}_h(\bm{\Phi},\bm{\Theta})$ is an $h$th-order matrix polynomial for $1\leq h\leq T$, the loss function  will have a computational complexity of $O(T^2 N^3)$\footnote{The computational complexity in this paper is calculated in a model of computation where field operations (addition and multiplication) take constant time.}, hence unscalable under large $N$.
\end{itemize}

While recent attempts have been made to improve the feasibility of VARMA models \citep{MS07,CEK16,Dias18,WBBM21}, they do not tackle  (P1) and (P2) directly, but rather resort to sophisticated  identification constraints and optimization methods. Moreover,  high-dimensional VARMA models can be difficult  to interpret due to their latent MA structures. Particularly, while it may be natural to assume that $\bm{\Theta}$ and $\bm{\Phi}$ in \eqref{eq:VARMA11} are sparse under large $N$  \citep{WBBM21}, this  does not necessarily result in a sparse VAR($\infty$) model; i.e., $\bm{A}_h(\bm{\Phi},\bm{\Theta})$'s may not be sparse. Thus, the sparse VARMA model is not particularly attractive in terms of (I3).

For high-dimensional time series, we aim to develop a sparse VAR($\infty$) model  that is favorable in all of (I1)--(I3). The proposed approach is motivated by reparametrizing the  VAR($\infty$) form of the VARMA($p,q$) model into formulation \eqref{eq:VARinf} with
\begin{equation}\label{eq:parA}
	\bm{A}_h= \sum_{k=1}^d \ell_{h,k}(\bm{\omega})  \bm{G}_k \quad\text{for}\quad h\geq1,
\end{equation}
where  $\bm{G}_1, \dots, \bm{G}_d\in\mathbb{R}^{N\times N}$ are unknown coefficient matrices, $\{\ell_{h,k}(\cdot)\}_{h=1}^\infty$ for $1\leq k\leq d$ are different  sequences of real-valued functions characterizing the exponential decay pattern of $\bm{A}_h$, with $\ell_{h,k}(\bm{\omega})\rightarrow0$ as $h\rightarrow\infty$ for each $k$,  and  $\bm{\omega}$ is an unknown low-dimensional parameter vector;  see also \cite{HZLL22} for a high-dimensional Tucker-low-rank time series model concurrently developed  from (1.3) with different techniques and interpretations. 
Similar to the orders $(p,q)$ of the VARMA model, $d$  can be viewed as the overall order that controls the complexity of temporal patterns of the VAR($\infty$) model; see Section \ref{section:model} for the detailed model formulation.
Note that  \eqref{eq:parA}  preserves the essential temporal patterns of the VARMA process, since it is derived directly from the former with little loss of generality. Thus, it is fundamentally more flexible than finite-order VAR models, i.e.,  more desirable regarding (I1). 
Moreover, each $\bm{A}_h=\bm{A}_h(\bm{\omega}, \bm{G}_1, \dots, \bm{G}_d)$ in \eqref{eq:parA} is a linear combination of matrices. Hence, unlike $\bm{A}_h(\bm{\Phi},\bm{\Theta})$ mentioned above, this form of $\bm{A}_h$ gets rid of all matrix multiplications. As a result, both problems (P1) and (P2) are eliminated, and then (I2) can be achieved.
To tackle the high dimensionality, we  assume that $\bm{G}_k$'s are sparse, leading to the proposed sparse parametric VAR($\infty$) (SPVAR($\infty$)) model.
In addition to improving the estimation efficiency  as required by (I2), the sparsity assumption enables greater interpretability, i.e., (I3), thanks to the novel separation of temporal and cross-sectional dependence in parameterizing the VARMA-type dynamic structure:
\begin{itemize}[itemsep=2pt,parsep=2pt,topsep=4pt,partopsep=2pt]
	\item[(D1)] Temporal dependence: In  \eqref{eq:parA}, the decay pattern of $\bm{A}_h$ as $h\rightarrow\infty$ is fully characterized by the scalar weights $\ell_{h,k}(\bm{\omega})$'s. 
	\item[(D2)] Cross-sectional dependence: The $\bm{G}_k$'s,  independent of the above decay pattern as $h\rightarrow\infty$, fully capture the cross-sectional dependence.
\end{itemize}
As a result of (D2), the Granger causal network of the $N$ component series of $\bm{y}_t$ is directly linked to the aggregate sparsity pattern of $\bm{G}_k$'s. Moreover, as  detailed in Section \ref{subsec:VARMA}, $\{\ell_{h,k}(\bm{\omega})\}_{h=1}^\infty$'s in \eqref{eq:parA} are specifically  defined such 
that $\bm{A}_k=\bm{G}_k$ for $1\leq k\leq p$, whereas $\bm{A}_{p+j}$ for $j\geq 1$ are expressed as linear combinations of $\bm{G}_{p+1},\dots, \bm{G}_d$, where $p$ is the AR order of the VARMA($p,q$)  model from which  \eqref{eq:parA} originates. Consequently, there is  an interesting  dichotomy  in the interpretations of different $\bm{G}_k$'s: On the one hand, each $\bm{G}_k$ with $1\leq k\leq p$ has the same interpretation as the lag-$k$  AR coefficient matrix of the VAR($p$) model, capturing the short-term cross-sectional dependence. On the other hand, the ``MA'' coefficient matrices $\bm{G}_{p+1}, \dots, \bm{G}_d$ encapsulate the cross-sectional dependence associated with the VARMA-type temporal structure, i.e., the long-term influence among the component series that extends into high lags. It is worth noting that the Granger causal network each $\bm{G}_k$ individually captures is specific to a particular temporal pattern characterized by $\{\ell_{h,k}(\bm{\omega})\}_{h=1}^\infty$.
This granularity provides a more detailed perspective on Granger causality from a temporal standpoint; see Section \ref{subsec:model} for details.
Additionally, in view of (D1), the sparsity of $\bm{G}_k$'s  incurs little loss of temporal information, so the essential VARMA-type temporal pattern is well preserved. This is a distinct advantage over regularized VARMA models \citep{CEK16, WBBM21}.

In fact, even compared to sparse finite-order VAR models, the proposed model can  be more interpretable for the  following two reasons. Firstly,  while the AR coefficient matrices $\bm{A}_h$ must diminish quickly as $h\rightarrow\infty$ to ensure stationarity of $\bm{y}_t$, $\bm{G}_k$'s do not need to decay thanks to the diminishing  $\ell_{h,k}(\bm{\omega})$'s. Consequently, $\bm{G}_k$'s, which have relatively strong signals, can be easier to interpret than the diminishing  $\bm{A}_h$'s.
Second, similar to the orders $(p,q)$ of VARMA models, the required $d$ is generally small in practice. For example, $d=2$ works well for the macroeconomic data in Section \ref{sec:empirical}, so we only need to interpret two adjacency matrices $\bm{G}_1$ and $\bm{G}_2$. However, if the VAR($P$) model were fitted,  we would have to interpret $P$ adjacency matrices, where the required $P$ would be much larger.  

We summarize the  main contributions of this paper as follows:
\begin{itemize}[itemsep=2pt,parsep=2pt,topsep=4pt,partopsep=2pt]
	\item [(i)] A sparse parametric VAR($\infty $) model is introduced for high-dimensional time series,
	which is favorable regarding (I1)--(I3), while avoiding problems (P1) and (P2).
	\item [(ii)] We develop two $\ell_1$-regularized estimators, which can be implemented via efficient block coordinate descent algorithms, and derive their nonasymptotic error bounds under weak sparsity; particularly, our theory takes into account the effect of initializing $\bm{y}_t=\bm{0}$ for $t\leq 0$, which is needed for feasible estimation of VAR($\infty$) models. 
	\item [(iii)]  A high-dimensional Bayesian information criterion (BIC) is proposed for model order selection, and its consistency is established. 
\end{itemize}

The remainder of this paper is organized as follows. Section \ref{section:model} introduces the proposed model and its interpretation. Section \ref{sec:HDmethod} presents two $\ell_1$-regularized estimators and their nonasymptotic theory. Section \ref{sec:BIC} introduces the proposed BIC. Sections \ref{sec:sim} and \ref{sec:empirical} provide simulation  and empirical studies. Section \ref{sec:conclusion} concludes with a brief discussion. The block coordinate descent  algorithms for implementing the estimation, additional simulation and empirical results, and all technical proofs are provided in a separate supplementary file. 

Unless otherwise specified, we denote scalars, vectors and matrices by lowercase letters (e.g., $x$), boldface lowercase letters (e.g., $\bm{x}$), and boldface capital letters (e.g., $\bm{X}$), respectively. Let $\mathbb{I}_{\{\cdot\}}$ be the indicator function taking value one when the condition is true and zero otherwise. 
For any $a,b\in\mathbb{R}$, let $a\vee b=\max\{a,b\}$ and $a\wedge b=\min\{a,b\}$. 
The $\ell_q$-norm of any $\bm{x}\in\mathbb{R}^p$ is denoted by $\|\bm{x}\|_q=(\sum_{j=1}^p|x_j|^q)^{1/q}$ for $q>0$. 
For any $\bm{X}\in\mathbb{R}^{d_1\times d_2}$, let $\bm{X}^\top$, $\sigma_{\max}(\bm{X})$ (or $\sigma_{\min}(\bm{X})$), $\lambda_{\max}(\bm{X})$ (or $\lambda_{\min}(\bm{X})$),  $\vect(\bm{X})$, $ \|\bm{X}\|_\op$, and $\|\bm{X}\|_\Fr$ be its transpose, largest (or smallest) singular value, largest (or smallest)  eigenvalue, vectorization, operator norm $\|\bm{X}\|_\op=\sigma_{\max}(\bm{X})$, and Frobenius norm $\|\bm{X}\|_\Fr=\sqrt{\trace(\bm{X}^\top \bm{X})}$, respectively. We use $C>0$ (or $c>0$) to denote generic large (or small) absolute constants. 
For any sequences $x_n$ and $y_n$, denote $x_n\lesssim y_n$ (or $x_n\gtrsim y_n$) if there is  $C>0$ such that $x_n\leq C y_n$ (or $x_n\geq C y_n$). We write $x_n\asymp y_n$ if $x_n\lesssim y_n$ and $x_n\gtrsim y_n$. In addition, $x_n\gg y_n$ if $y_n/x_n\rightarrow 0$ as $n\rightarrow\infty$.

\section{Proposed model} \label{section:model}
\subsection{Motivation: Reparameterization of VARMA models}\label{subsec:VARMA}

This section introduces the motivation behind the proposed model. Recall that the shared root cause of problems (P1) and (P2) of the VARMA($1,1$) model, as discussed in Section \ref{section:intro}, lies in the matrix multiplications involved in computing the AR coefficient matrices $\bm{A}_h(\bm{\Phi},\bm{\Theta})=\bm{\Theta}^{h-1}(\bm{\Phi}-\bm{\Theta})$ in the VAR($\infty$) form of the  model. Thus, the key to overcoming both problems is to eliminate the  matrix multiplications in the parameterization of $\bm{A}_h$. 

To this end, we show that a reparameterization of $\bm{A}_h(\bm{\Phi},\bm{\Theta})$ free of matrix multiplications can be derived via the following two main steps: (1) Block-diagonalize $\bm{\Theta}$ via the Jordan decomposition,
$\bm{\Theta}= \bm{B} \bm{J}\bm{B}^{-1}$, where  $\bm{B}\in\mathbb{R}^{N \times N}$ is an invertible matrix, and $\bm{J}\in\mathbb{R}^{N \times N}$ is the real Jordan form   containing eigenvalues of $\bm{\Theta}$; see \eqref{eq:realJ} below for details. (2) Then, merge $\bm{B}$ with all remaining components in the expression of $\bm{A}_h(\bm{\Phi},\bm{\Theta})$. 

Specifically, by Theorem 1 in \cite{hartfiel1995dense}, for any $0<n\leq N$, real matrices with $n$ distinct nonzero eigenvalues are dense in the set of all $N\times N$ real matrices with rank at most $n$. Thus, with only a little loss of generality, we can assume that $\bm{\Theta}$ is a real matrix with $n$ distinct nonzero eigenvalues, where $n=\rank(\bm{\Theta})$;  a  more general result allowing repeated eigenvalues is derived in  the technical appendix of \cite{HZLL22}.
Then suppose that $\bm{\Theta}$ has $r$ nonzero real eigenvalues, $\lambda_1, \dots, \lambda_r$, and $s$ conjugate pairs of nonzero complex eigenvalues, $(\lambda_{r+2m-1}, \lambda_{r+2m})=(\gamma_m e^{i\theta_m},\gamma_m e^{-i\theta_m})$ for $1\leq m\leq s$, where   $|\lambda_j|\in(0,1)$ for $1\leq j\leq r$,  $\gamma_m\in(0,1)$ and $\theta_m \in (0, \pi)$ for $1\leq m\leq s$, and $i$ represents the imaginary unit. 
Therefore, $n=r+2s$, and the real Jordan form of $\bm{\Theta}$ is a real block diagonal matrix:
\begin{equation}\label{eq:realJ}
	\bm{J}=\diag\left \{\lambda_1, \dots, \lambda_r, \bm{C}_1, \dots, \bm{C}_s, \bm{0}\right \},\quad	
	\bm{C}_m=
	\gamma_m\cdot \left(\begin{matrix}
		\cos \theta_m& \sin \theta_m\\
		- \sin \theta_m& \cos \theta_m
	\end{matrix}\right) \in\mathbb{R}^{2\times 2}, 
\end{equation}
where $1\leq m\leq s$; see Chapter 3 in \cite{HJ12}.  

Let $\bm{A}_1=\bm{\Phi}-\bm{\Theta}:=\bm{G}_1$. Substituting the Jordan decomposition $\bm{\Theta}= \bm{B} \bm{J}\bm{B}^{-1}$ into the expression of $\bm{A}_h$, we can show that for all $h\geq 2$, $\bm{A}_{h}=\bm{B} \bm{J}^{h-1}\bm{B}^{-1}(\bm{\Phi}-\bm{\Theta})=\sum_{j=1}^{r} \lambda_j^{h-1} \bm{G}_{1+j}
+\sum_{m=1}^{s} \gamma_{m}^{h-1} \left [ \cos\{(h-1) \theta_{m}\} \bm{G}_{1+r+2m-1}+\sin\{(h-1) \theta_{m}\} \bm{G}_{1+r+2m}\right ]$,
where $\bm{G}_{2}, \dots, \bm{G}_{1+r+2s}\in\mathbb{R}^{N\times N}$ are determined jointly by $\bm{B}$ and $\bm{B}^{-1}(\bm{\Phi}-\bm{\Theta})$;  see the proof of Proposition 1 in the supplementary file for  details.
This result is a reparameterization of $\bm{A}_h$'s in terms of the scalars $\lambda_j$'s, $\gamma_{m}$'s, $\theta_{m}$'s, and matrices $\bm{G}_1, \dots, \bm{G}_{1+r+2s}$. 
As each $\bm{A}_h$ is a linear combination of $\bm{G}_1, \dots, \bm{G}_{1+r+2s}$, problems (P1) and (P2) are tackled at their root: It not only ensures the identifiability of the parameters $\lambda_j$'s, $\gamma_{m}$'s, $\theta_{m}$'s, and the $\bm{G}$-matrices, up to a permutation in the indices $j$ and $m$, but also leads to a significantly reduced computational complexity, such as $O(TN^2+T^2N)$ for the squared loss function. 

In general, the VARMA($p,q$) model is given by
$\bm{y}_t =\sum_{i=1}^{p} \bm{\Phi}_i\bm{y}_{t-i}+\bm{\varepsilon}_t - \sum_{j=1}^{q}\bm{\Theta}_j\bm{\varepsilon}_{t-j}$,
where $\bm{\Phi}_i, \bm{\Theta}_j\in\mathbb{R}^{N\times N}$ for $1\leq i\leq p$ and $1\leq j\leq q$.
Assuming invertibility, it has the following VAR($\infty$) representation:
\begin{equation}\label{eq:VARMA_inf}
	\bm{y}_t =  \sum_{h=1}^{\infty} \underbrace{ \left ( \sum_{i=0}^{p\wedge h} \bm{P} \underline{\bm{\Theta}}^{h-i} \bm{P}^\top \bm{\Phi}_i \right )}_{\bm{A}_h}\bm{y}_{t-h}+\bm{\varepsilon}_t, \hspace{5mm} \underline{\bm{\Theta}}=
	\left(\begin{matrix}
		\bm{\Theta}_1&\bm{\Theta}_2&\cdots&\bm{\Theta}_{q-1}&\bm{\Theta}_q\\
		\bm{I}&\bm{0}&\cdots&\bm{0}&\bm{0}\\
		\bm{0}&\bm{I}&\cdots&\bm{0}&\bm{0}\\
		\vdots&\vdots&\ddots&\vdots&\vdots\\
		\bm{0}&\bm{0}&\cdots&\bm{I}&\bm{0}
	\end{matrix} \right ), 
\end{equation}
where  $\bm{\Phi}_0=-\bm{I}$ and $\bm{P} = (\bm{I}_{N}, \bm{0}_{N\times N(q-1)})$ are constant matrices,  $\underline{\bm{\Theta}}$ is called the MA companion matrix, and all eigenvalues of $\underline{\bm{\Theta}}$ are less than one in absolute value; see \cite{Lutkepohl2005}. Similar to the VARMA($1,1$) case,  the following reparameterization can be derived. 

\begin{proposition}\label{prop:VARMA}
	Suppose that all nonzero eigenvalues of $\underline{\bm{\Theta}}$ are distinct, and  there are $r$ distinct nonzero real eigenvalues of  $\underline{\bm{\Theta}}$, $\lambda_j\in(-1,0)\cup(0,1)$ for $1\leq j\leq r$, and $s$ distinct conjugate pairs of nonzero complex eigenvalues of  $\underline{\bm{\Theta}}$, $(\lambda_{r+2m-1}, \lambda_{r+2m})=(\gamma_m e^{i\theta_m},\gamma_m e^{-i\theta_m})$ with $\gamma_m\in(0,1)$ and $\theta_m \in (0, \pi)$ for $1\leq m\leq s$. Then for all $h\geq1$, we have 
	\begin{align}\label{eq:linearcomb2}
		\begin{split}
			\bm{A}_{h} 
			&=\sum_{k=1}^{p}\mathbb{I}_{\{h=k\}}\bm{G}_{k}
			+ 
			\sum_{j=1}^{r} \mathbb{I}_{\{h\geq p+1\}} \lambda_j^{h-p} \bm{G}_{p+j}\\
			&\hspace{5mm} +\sum_{m=1}^{s} \mathbb{I}_{\{h\geq p+1\}} \gamma_{m}^{h-p} \left [ \cos\{(h-p) \theta_{m}\} \bm{G}_{p+r+2m-1} + \sin\{(h-p)\theta_{m}\} \bm{G}_{p+r+2m} \right],
		\end{split}
	\end{align}
	where  $\bm{G}_k=\bm{A}_k$ for $1\leq k\leq p$, and  $\{\bm{G}_{k}\}_{k=p+1}^{p+r+2s}$  are determined jointly by $\bm{\widetilde{B}}$ and $\bm{\widetilde{B}}_{-}$, with $\bm{\widetilde{B}}=\bm{P} \bm{B}$
	and $\bm{\widetilde{B}}_-=\bm{B}^{-1}\left (\sum_{i=0}^{p} \underline{\bm{\Theta}}^{p-i} \bm{P}^\top \bm{\Phi}_i \right )$. In addition, the corresponding term in \eqref{eq:linearcomb2} is suppressed if $p, r$ or $s$ is zero.
\end{proposition}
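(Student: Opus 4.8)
The plan is to break the computation at the threshold $h=p$, which is precisely where the truncation $p\wedge h$ in the formula for $\bm{A}_h$ in \eqref{eq:VARMA_inf} changes behavior. For $1\leq h\leq p$ one has $p\wedge h=h$, so I would simply \emph{define} $\bm{G}_h:=\bm{A}_h=\sum_{i=0}^{h}\bm{P}\underline{\bm{\Theta}}^{h-i}\bm{P}^\top\bm{\Phi}_i$; since $\mathbb{I}_{\{h\geq p+1\}}=0$ throughout this range, the right-hand side of \eqref{eq:linearcomb2} reduces to $\bm{G}_h$ and the identity is trivial. All the content therefore lies in the range $h\geq p+1$, where $p\wedge h=p$, and, using $\underline{\bm{\Theta}}^{h-i}=\underline{\bm{\Theta}}^{h-p}\underline{\bm{\Theta}}^{p-i}$ for $0\leq i\leq p\leq h$, I would factor
\begin{equation*}
\bm{A}_h=\sum_{i=0}^{p}\bm{P}\underline{\bm{\Theta}}^{h-i}\bm{P}^\top\bm{\Phi}_i=\bm{P}\,\underline{\bm{\Theta}}^{h-p}\Bigl(\sum_{i=0}^{p}\underline{\bm{\Theta}}^{p-i}\bm{P}^\top\bm{\Phi}_i\Bigr).
\end{equation*}

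Next I would insert the real Jordan decomposition $\underline{\bm{\Theta}}=\bm{B}\bm{J}\bm{B}^{-1}$. Under the hypothesis on $\underline{\bm{\Theta}}$ (and the genericity reduction behind \eqref{eq:realJ}, which also makes the zero eigenvalue semisimple), $\bm{J}=\diag\{\lambda_1,\dots,\lambda_r,\bm{C}_1,\dots,\bm{C}_s,\bm{0}\}$ with $\bm{C}_m=\gamma_m\bigl(\begin{smallmatrix}\cos\theta_m & \sin\theta_m\\ -\sin\theta_m & \cos\theta_m\end{smallmatrix}\bigr)$. The display then becomes $\bm{A}_h=\bm{\widetilde{B}}\,\bm{J}^{h-p}\,\bm{\widetilde{B}}_-$ with $\bm{\widetilde{B}}=\bm{P}\bm{B}$ and $\bm{\widetilde{B}}_-=\bm{B}^{-1}\bigl(\sum_{i=0}^{p}\underline{\bm{\Theta}}^{p-i}\bm{P}^\top\bm{\Phi}_i\bigr)$, i.e.\ exactly the two real matrices named in the proposition, so every $\bm{G}_k$ built below is automatically real. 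Since $h-p\geq1$, the zero block of $\bm{J}^{h-p}$ is annihilated, and on the remaining blocks $\bm{J}^{h-p}=\diag\{\lambda_1^{h-p},\dots,\lambda_r^{h-p},\bm{C}_1^{h-p},\dots,\bm{C}_s^{h-p}\}$; a one-line induction gives the scalar--rotation identity $\bm{C}_m^{k}=\gamma_m^{k}\bigl(\begin{smallmatrix}\cos(k\theta_m) & \sin(k\theta_m)\\ -\sin(k\theta_m) & \cos(k\theta_m)\end{smallmatrix}\bigr)$ because $\bm{C}_m$ is $\gamma_m$ times a planar rotation by $\theta_m$.

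It then remains to partition $\bm{\widetilde{B}}$ into column groups and $\bm{\widetilde{B}}_-$ into row groups conformally with the blocks of $\bm{J}$: one column $\bm{b}_j$ and one row $\bm{d}_j^\top$ per real eigenvalue $\lambda_j$, a pair of columns $(\bm{b}_{r+2m-1},\bm{b}_{r+2m})$ and a pair of rows $\bm{d}_{r+2m-1}^\top,\bm{d}_{r+2m}^\top$ per complex pair, and leftover columns and rows that multiply the zero block and drop out. Block multiplication of $\bm{\widetilde{B}}\,\bm{J}^{h-p}\,\bm{\widetilde{B}}_-$ then gives $\sum_{j=1}^{r}\lambda_j^{h-p}\bm{b}_j\bm{d}_j^\top+\sum_{m=1}^{s}(\bm{b}_{r+2m-1},\bm{b}_{r+2m})\,\bm{C}_m^{h-p}\,(\bm{d}_{r+2m-1},\bm{d}_{r+2m})^\top$; expanding the $2\times2$ middle factor with the rotation identity and collecting the coefficients of $\cos\{(h-p)\theta_m\}$ and $\sin\{(h-p)\theta_m\}$ forces
\begin{equation*}
\bm{G}_{p+j}=\bm{b}_j\bm{d}_j^\top,\qquad \bm{G}_{p+r+2m-1}=\bm{b}_{r+2m-1}\bm{d}_{r+2m-1}^\top+\bm{b}_{r+2m}\bm{d}_{r+2m}^\top,\qquad \bm{G}_{p+r+2m}=\bm{b}_{r+2m-1}\bm{d}_{r+2m}^\top-\bm{b}_{r+2m}\bm{d}_{r+2m-1}^\top.
\end{equation*}
These depend only on $\bm{\widetilde{B}}$ and $\bm{\widetilde{B}}_-$, and substituting them back reproduces \eqref{eq:linearcomb2} term by term in the range $h\geq p+1$ (all indicators $\mathbb{I}_{\{h\geq p+1\}}=1$, the finite-order sum empty).

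I expect the main obstacle to be organizational rather than analytic: keeping the conformal partitions of $\bm{\widetilde{B}}$ and $\bm{\widetilde{B}}_-$ synchronized with the ordering of the Jordan blocks, and correctly matching the two trigonometric cross-terms produced by $\bm{C}_m^{h-p}$ with $\bm{G}_{p+r+2m-1}$ and $\bm{G}_{p+r+2m}$. No genuine analytic difficulty survives once the $h\leq p$ versus $h\geq p+1$ split is made and the zero block of $\bm{J}$ is seen to vanish under every positive power; the degenerate conventions ($p$, $r$, or $s$ equal to zero) are absorbed by reading the corresponding sums as empty.
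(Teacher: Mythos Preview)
Your proposal is correct and follows essentially the same route as the paper's proof: split at $h\leq p$ versus $h\geq p+1$, factor $\bm{A}_{p+j}=\bm{\widetilde{B}}\bm{J}^{j}\bm{\widetilde{B}}_-$ via the real Jordan decomposition, use the rotation identity for $\bm{C}_m^{j}$, partition $\bm{\widetilde{B}}$ and $\bm{\widetilde{B}}_-$ conformally with $\bm{J}$, and read off the $\bm{G}$-matrices from the block expansion. The only differences are notational (you index the complex-pair columns as $\bm{b}_{r+2m-1},\bm{b}_{r+2m}$ where the paper writes $\bm{\widetilde{b}}_{r+m}^{(1)},\bm{\widetilde{b}}_{r+m}^{(2)}$), and your identification of $\bm{G}_{p+r+2m-1}$ and $\bm{G}_{p+r+2m}$ matches the paper's.
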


Throughout this paper, we denote
$d = p+r+2s$. Let $\bm{\omega}=(\lambda_1, \dots, \lambda_r, \bm{\eta}_1^\top, \dots \bm{\eta}_s^\top)^\top \in\mathbb{R}^{r+2s}$, where $\bm{\eta}_m=(\gamma_m,\theta_m)^\top $ for $1\leq m\leq s$, and 
$\bm{g}=\vect(\bm{G})\in\mathbb{R}^{N^2d}$, where $\bm{G}=(\bm{G}_1, \dots, \bm{G}_d)\in\mathbb{R}^{N\times Nd}$.
Then, we can succinctly write \eqref{eq:linearcomb2} in the parametric form of $\bm{A}_h=\bm{A}_h(\bm{\omega},\bm{g})=\sum_{k=1}^{d}\ell_{h,k}(\bm{\omega})\bm{G}_k$ for all $h\geq1$. Here $\ell_{h,k}(\cdot)$'s  are  real-valued functions predetermined according to \eqref{eq:linearcomb2}, which can   be defined conveniently through a matrix as follows: for any $h\geq1$ and $1\leq k\leq d$, $\ell_{h,k}(\bm{\omega})$  is the $(h,k)$-th entry of the $\infty\times d$ matrix, 
\begin{equation*}
	\bm{L}(\bm{\omega})=\left (\ell_{h,k}(\bm{\omega})\right )_{h\geq1, 1\leq k\leq d}
	=\left (\begin{matrix}
		\bm{I}_p &\bm{0}_{p\times1}&\cdots &\bm{0}_{p\times1}&\bm{0}_{p\times2}&\cdots&\bm{0}_{p\times2}\\
		\bm{0}_{\infty\times p}& \bm{\ell}^{I}(\lambda_1) & \cdots & \bm{\ell}^{I}(\lambda_r) & \bm{\ell}^{II}(\bm{\eta}_1) & \cdots & \bm{\ell}^{II}(\bm{\eta}_s)
	\end{matrix}\right ) \in\mathbb{R}^{\infty\times d},
\end{equation*}
where, for  any $\lambda$ and $\bm{\eta}=(\gamma,\theta)^\top$,  the blocks $\bm{\ell}^{I}(\lambda)$ and $\bm{\ell}^{II}(\bm{\eta})$ are defined as
\begin{equation*}
	\bm{\ell}^{I}(\lambda)= (\lambda, \lambda^2, \lambda^3, \dots)^\top \in\mathbb{R}^\infty,
	\quad
	\bm{\ell}^{II}(\bm{\eta}) =
	\left (\begin{array}{cccc}
		\gamma \cos(\theta)& \gamma^2\cos(2\theta)& \gamma^3\cos(3\theta)&\cdots\\
		\gamma \sin(\theta)& \gamma^2\sin(2\theta)& \gamma^3\sin(3\theta)&\cdots\\
	\end{array}\right )^\top  \in\mathbb{R}^{\infty\times 2}.
\end{equation*}	

\subsection{Proposed sparse parametric VAR($\infty$) model}\label{subsec:model}

Motivated by the discussion in Section \ref{subsec:VARMA}, we propose the following VAR($\infty$) model for high-dimensional time series:
\begin{equation}\label{eq:model-scalar}
	\bm{y}_t=\sum_{h=1}^\infty  \bm{A}_h(\bm{\omega},\bm{g}) \bm{y}_{t-h}+\bm{\varepsilon}_t
	=\sum_{k=1}^{d}\bm{G}_k \sum_{h=1}^\infty\ell_{h,k}(\bm{\omega}) \bm{y}_{t-h}+\bm{\varepsilon}_t,
\end{equation}
where  $\bm{\omega}\in (-1,1)^r \times \bm{\varPi}^s \subset \mathbb{R}^{r+2s}$ is a parameter vector, with $\bm{\varPi}=[0, 1)\times (0, \pi)$, $\ell_{h,k}(\cdot)$'s are known real-valued functions defined as in  Section \ref{subsec:VARMA},  $\bm{G}_k\in\mathbb{R}^{N\times N}$ for $1\leq k\leq d$ are parameter matrices with  $d=p+r+2s$. To handle the high-dimensionality, we assume that  $\bm{G}_k$'s are sparse matrices. In this section, we will focus on the exact sparsity as it is instrumental for model interpretability. However, it will be relaxed to weak sparsity in our theoretical analysis; see Assumptions \ref{assum:sparse} and \ref{assum:rowsparse} in Section \ref{sec:HDmethod}. We call model \eqref{eq:model-scalar} with exactly or weakly sparse $\bm{G}_k$'s the Sparse Parametric VAR($\infty$)  (SPVAR($\infty$)) model. 

Note that if no sparsity assumption is imposed on $\bm{G}_k$'s, then \eqref{eq:model-scalar} provides an alternative low-dimensional time series model comparable to the VARMA model; see Section \ref{subsec:statn} for its stationarity condition. While formulation \eqref{eq:model-scalar} is derived from the VARMA model, it is worth clarifying that it relaxes the restrictions on $\bm{G}_{p+j}$ for $1\leq j\leq r+2s$. Specifically, by Proposition \ref{prop:VARMA}, if $\{\bm{y}_t\}$ is indeed generated from  a VARMA model, then $\bm{G}_{p+j}$'s would fulfill certain restrictions as determined by the Jordan decomposition of the MA companion matrix $\underline{\bm{\Theta}}$. By contrast, \eqref{eq:model-scalar}  treats these matrices as free parameters. 

The resemblance between  \eqref{eq:model-scalar} and the VARMA model is mainly achieved by  $\ell_{h,k}(\cdot)$'s, which yield VARMA-type decay patterns of  $\bm{A}_h$ as $h\rightarrow\infty$. According to \eqref{eq:linearcomb2},  $\ell_{h,k}(\cdot)$'s  implicitly depend on the orders $(p,r,s)$. Note that $p$  and $(r,s)$ are counterparts of the AR and MA orders of the VARMA model, respectively.  In fact, when $r=s=0$, \eqref{eq:model-scalar} reduces to  the VAR($p$) model, $\bm{y}_t=\sum_{h=1}^{p}\bm{G}_h \bm{y}_{t-h}+\bm{\varepsilon}_t$.  
For this reason, we call $\bm{G}_{1},\dots, \bm{G}_p$ and $\bm{G}_{p+1},\dots, \bm{G}_d$ the AR and MA coefficient matrices of the model, respectively.
While larger $(p,r,s)$  allow for more complex temporal patterns, similar to the VARMA model,  usually it suffices to use  small orders  in practice; see Section \ref{sec:empirical} for empirical evidence.

\begin{figure}[t]
	\centering
	\includegraphics[width=\textwidth]{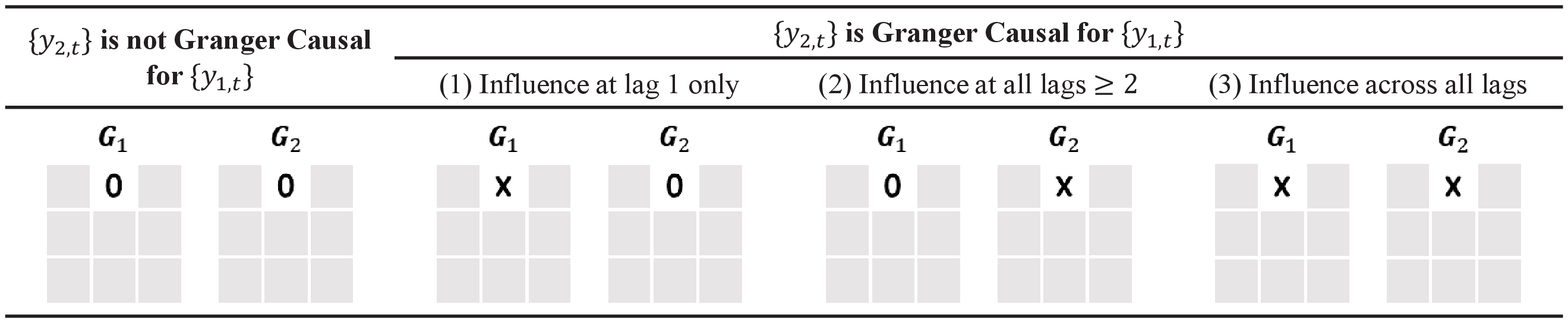}
	\caption{Illustration for different scenarios of  Granger causality of $\{y_{2,t}\}$ for $\{y_{1,t}\}$ when $(p,r,s)=(1,1,0)$ and $N=3$, as determined by the $(1,2)$th entry of $\bm{G}_1$ and $\bm{G}_2$. Cell $(1,2)$ of $\bm{G}_k$ is marked with ``0'' when $g_{1,2,k}= 0$, and ``X'' when $g_{1,2,k}\neq 0$.}
	\label{fig:2G}
\end{figure}

The proposed model can be directly used to infer the multivariate Granger causality (MGC), which concerns Granger causal (GC) relations \citep{Granger69} between any pair of component series in $\bm{y}_t=(y_{1,t}, \dots, y_{N,t})^\top$; see \cite{SF21} for an excellent review. By definition, $\{y_{j,t}\}$ is GC for $\{y_{i,t}\}$ if the past information of  $y_{j,t}$ can improve the forecast of $y_{i,t}$, where $1\leq i\neq j\leq N$. 
Most existing works study the MGC under the finite-order VAR for its convenience: Under the model $\bm{y}_t=\sum_{h=1}^P \bm{A}_h\bm{y}_{t-h}+\bm{\varepsilon}_t$, $\{y_{j,t}\}$ is GC for $\{y_{i,t}\}$ if $a_{i,j,h}\neq 0$ for some $h\in\{1,\dots,P\}$, where $a_{i,j,h}$ is the $(i,j)$-th entry of $\bm{A}_h$, for $1\leq i\neq j\leq N$. Notably, while  working with $\bm{A}_h$'s would be infeasible when $P=\infty$, we can directly infer the MGC through $\bm{G}_k$'s:
By \eqref{eq:model-scalar}, we have that $\{y_{j,t}\}$ is GC for $\{y_{i,t}\}$ if $g_{i,j,k}\neq 0$  for some $k\in\{1,\dots,d\}$, where $g_{i,j,k}$ is the $(i,j)$-th entry of $\bm{G}_k$, for $1\leq i\neq j\leq N$; see Figure \ref{fig:2G} for an illustration with $(i,j)=(1,2)$, $(p,r,s)=(1,1,0)$, and  $N=3$.

More interestingly,  since each  $\bm{G}_k$ captures a piece of cross-sectional information associated with a particular sequence $\{\ell_{h,k}(\bm{\omega})\}_{h=1}^\infty$, we can discern the decay pattern of any GC relations over time, achieving a more granular understanding of the MGC. For simplicity, consider the model for $y_{1,t}$ when $(p,r,s)=(1,1,0)$:
$y_{1,t}= \sum_{j=1}^N  g_{1,j,1} y_{j,t-1} +\sum_{j=1}^{N} g_{1,j,2}  \sum_{h=2}^{\infty} \lambda^{h-1}y_{j,t-h} + \varepsilon_{1,t}$, 
where $g_{i,j,k}$ denotes the $(i,j)$-th entry of $\bm{G}_k$. First, it is clear that $\{y_{j,t}\}$ is GC for $\{y_{1,t}\}$ if $g_{1,j,1}$ and $g_{1,j,2}$ are not both zero. Second, if this GC relation exists, the lagged influence of  $\{y_{j,t}\}$ on $\{y_{1,t}\}$  can be classified into the following three scenarios: (1) \textit{lag-one only}, if $g_{1,j,1}\neq0$ and $g_{1,j,2}=0$; (2)\textit{all lags beyond lag one}, if $g_{1,j,1}=0$ and $g_{1,j,2}\neq 0$; and (3) \textit{all lags}, if $g_{1,j,1}\neq 0$ and $g_{1,j,2}\neq 0$. In scenarios (2) and (3), the exponential decay of the influence over time is determined by $\lambda$; see Figure \ref{fig:2G} for an illustration for $j=2$.

\begin{figure}[t]
	\centering
	\includegraphics[width=\textwidth]{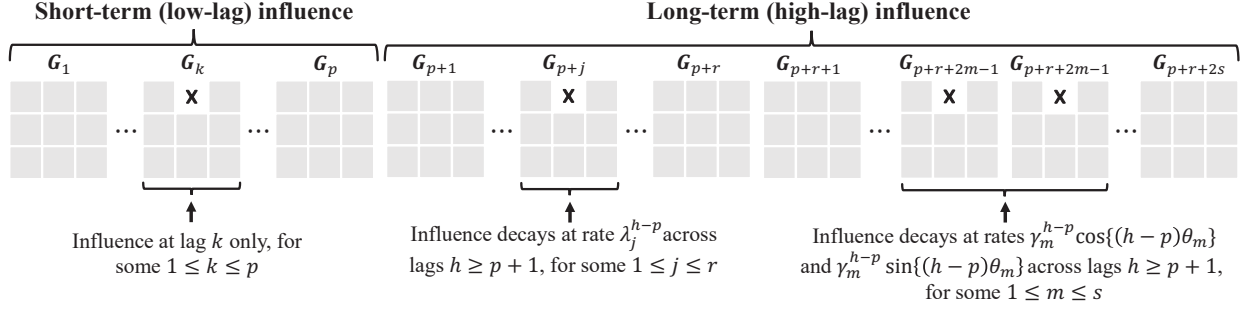}
	\caption{Illustration for different types of lagged influence of  $\{y_{2,t}\}$ on $\{y_{1,t}\}$ under general orders $(p,r,s)$ and $N=3$. Cell $(1,2)$ of $\bm{G}_k$ is marked with ``X'' when $g_{1,2,k}\neq 0$.}
	\label{fig:allG}
\end{figure}

In general, with orders $(p,r,s)$, the model equation for $y_{1,t}$ will consist of two conditional mean terms: The first term involves the sum of $g_{1,j,k} y_{j,t-k}$ for lags $1\leq k\leq p$, whereas the second term captures the influence beyond lag $p$. The latter involves a weighted mixture of  $r$ distinct exponential decay rates and $s$ distinct pairs of damped cosine and sine waves. Then the lagged influence of  $\{y_{j,t}\}$ on $\{y_{1,t}\}$ can be generalized to the following three scenarios, if the GC relation exists:  (1) \textit{short-term only}, if $g_{1,j,k}\neq 0$ for some $1\leq k \leq p$, while $g_{1,j,p+1}=\cdots =g_{1,j,d}=0$; (2)  \textit{long-term only}, if  $g_{1,j,1}=\cdots =g_{1,j,p}=0$, while $g_{1,j,k}\neq 0$ for some $p+1\leq k \leq d$; and (3) \textit{both short-term and long-term influences}, if $g_{1,j,k}\neq 0$ for  some $1\leq k \leq p$ and some $p+1\leq k \leq d$. A more detailed illustration is given in Figure \ref{fig:allG}.

\begin{remark}
	In many applications, the cross-sectional dependence  may not be time-invariant; e.g., \cite{BB17} found that the estimated Granger causal network in a sparse VAR system for stock volatilities may be time-varying. Time-varying cross-sectional dependence is also common in behavioral and neural studies: e.g., different segments of video time series of freely moving animals may correspond to distinct behaviors \citep{Linderman2022}, and discrete shifts in the dynamics of neural activity may reflect changes in underlying brain state \citep{Fiecas2023}. To accommodate  such applications, the proposed model can be extended to allow $\bm{G}_k$'s to be time varying; e.g., a Markov-switching  SPVAR($\infty$) model may be developed along the lines of \cite{LSS2022}.
\end{remark}

\begin{remark}
	In VAR models, the GC relations as captured by the coefficient matrices $\bm{A}_h$'s correspond to lagged cross-sectional dependence, whereas the instantaneous cross-sectional dependence is captured by the variance-covariance matrix $\bm{\Sigma}_\varepsilon$ of $\bm{\varepsilon}_t$. 
	While this section focuses on the former, $\bm{\Sigma}_\varepsilon$ can also be estimated based on residuals from the fitted SPVAR($\infty$) model; see Remark \ref{remark:Sigma} in Section \ref{subsec:JE}.
\end{remark}

\begin{remark}\label{remark:IRA}
	We can also conduct impulse response analysis based on the VMA($\infty$) form of the proposed model; see Theorem \ref{thm:stationary} in Section \ref{subsec:statn} for the  VMA($\infty$) representation. For example, when $(p,r,s)=(1,1,0)$, the corresponding MA coefficient matrices are $\bm{\Psi}_1=\bm{G}_1$, $\bm{\Psi}_2=\bm{G}_1^2+\lambda \bm{G}_2$, $\bm{\Psi}_3=\bm{G}_1^3+\lambda \bm{G}_1\bm{G}_2 + \lambda \bm{G}_2\bm{G}_1 + \lambda^2\bm{G}_2$, etc. When  $\bm{G}_1$ and $\bm{G}_2$ are both sparse with their non-zero entries in sufficiently different positions, all $\bm{\Psi}_j$'s will also tend to be sparse; this is indeed the case for the empirical example in Section \ref{sec:empirical}. Thus, we can alternatively interpret the high-dimensional time series via the impulse response analysis.
\end{remark}

\subsection{Stationarity condition}\label{subsec:statn}
We  provide a sufficient condition on $\bm{\omega}$ and $\bm{G}_k$'s for the existence of a unique strictly stationary solution for \eqref{eq:model-scalar} in the following theorem, which is valid whether $\bm{G}_k$'s are sparse or not. Similar  to the AR companion matrix of a VARMA($p,q$) model, denote
\begin{equation*}
	\underline{\bm{G}}_1=
	\left(\begin{matrix}
		\bm{G}_1&\bm{G}_2&\cdots&\bm{G}_{p-1}&\bm{G}_p\\
		\bm{I}&\bm{0}&\cdots&\bm{0}&\bm{0}\\
		\bm{0}&\bm{I}&\cdots&\bm{0}&\bm{0}\\
		\vdots&\vdots&\ddots&\vdots&\vdots\\
		\bm{0}&\bm{0}&\cdots&\bm{I}&\bm{0}
	\end{matrix} \right ).
\end{equation*}

\begin{theorem}\label{thm:stationary}
	Suppose that there exists $0<\bar{\rho}<1$ such that 
	\[
	\max\{|\lambda_1|,\ldots,|\lambda_r|, \gamma_1,\ldots,\gamma_s\}\leq \bar{\rho} \quad \text{and} \quad \rho(\underline{\bm{G}}_1)+\frac{\bar{\rho}}{1-\bar{\rho}} \sum_{k=1}^{r+2s}\rho(\bm{G}_{p+k}) <1,
	\]
	where $\rho(\cdot)$ denotes the spectral radius of a matrix, and $\rho(\underline{\bm{G}}_1)$ disappears when $p=0$. Moreover, $\{\bm{\varepsilon}_t\}$ is a strictly stationary sequence. 
	Then there exists a unique strictly stationary solution to the model equation in \eqref{eq:model-scalar}, given by $\bm{y}_t = \bm{\varepsilon}_t+ \sum_{j=1}^{\infty}\bm{\Psi}_j\bm{\varepsilon}_{t-j}$, where 
	$\bm{\Psi}_j=\sum_{k=1}^{\infty}\sum_{j_1+\cdots+ j_k=j}\bm{A}_{j_1}\cdots \bm{A}_{j_k}$ for $j\geq1$, with
	$\bm{A}_h=\sum_{k=1}^{d}\ell_{h,k}(\bm{\omega}) \bm{G}_k$ for $h\geq 1$.
\end{theorem}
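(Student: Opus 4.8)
The plan is to reduce Theorem~\ref{thm:stationary} to a single analytic statement — that the candidate VMA($\infty$) coefficients are absolutely summable, $\sum_{j\ge1}\|\bm{\Psi}_j\|_\op<\infty$ — and to obtain it by recasting \eqref{eq:model-scalar} as a \emph{finite-dimensional} VAR(1). Granting this summability, everything downstream is routine: the series $\bm{y}_t=\bm{\varepsilon}_t+\sum_{j\ge1}\bm{\Psi}_j\bm{\varepsilon}_{t-j}$ converges absolutely almost surely by a Borel--Cantelli argument built on the strict stationarity of $\{\bm{\varepsilon}_t\}$ (needing at most $E\log^+\|\bm{\varepsilon}_t\|<\infty$); it defines a strictly stationary process because it is a fixed measurable functional of $(\bm{\varepsilon}_s)_{s\le t}$ for every $t$; and substituting it into \eqref{eq:model-scalar} and rearranging the absolutely convergent double series, via the convolution identity $\bm{\Psi}_j=\sum_{i=1}^{j}\bm{A}_i\bm{\Psi}_{j-i}$ with $\bm{\Psi}_0=\bm{I}_N$ (equivalently, $\sum_{j\ge0}\bm{\Psi}_jz^j$ is the formal inverse of $\bm{I}_N-\sum_{h\ge1}\bm{A}_hz^h$), shows it solves the model equation. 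Uniqueness will be handled together with the construction below.

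To prove $\sum_j\|\bm{\Psi}_j\|_\op<\infty$ I would introduce, for each of the $r+2s$ decay sequences, the auxiliary partial sum $\bm{x}_{k,t}:=\sum_{h\ge p+1}\ell_{h,p+k}(\bm{\omega})\bm{y}_{t-h}$. Because $\ell_{h,p+j}(\bm{\omega})=\lambda_j^{h-p}$ is geometric and the complex-pair weights are damped harmonics, each $\bm{x}_{k,t}$ satisfies a short linear recursion forced by $\bm{y}_{t-p-1}$ — e.g.\ $\bm{x}_{j,t}=\lambda_j\bm{y}_{t-p-1}+\lambda_j\bm{x}_{j,t-1}$ for a real eigenvalue, and for a conjugate pair $(\bm{x}^c_m,\bm{x}^s_m)$ a coupled recursion whose coefficient matrix has spectral radius $\gamma_m$. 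Stacking $\bm{\zeta}_t:=(\bm{y}_t^\top,\bm{y}_{t-1}^\top,\dots,\bm{y}_{t-p}^\top,\bm{x}_{1,t}^\top,\dots)^\top$ then produces a genuine VAR(1), $\bm{\zeta}_t=\bm{M}\bm{\zeta}_{t-1}+\bm{e}_t$ with $\bm{e}_t=(\bm{\varepsilon}_t^\top,\bm{0}^\top,\dots)^\top$ and a finite matrix $\bm{M}$ whose leading block-rows carry $\bm{G}_1,\dots,\bm{G}_d$ and whose auxiliary block-rows carry the scalars $\lambda_j$ and the $2\times2$ rotation--scalings. The top-left $N\times N$ block of $\bm{M}^j$, written $[\bm{M}^j]_{00}$, is exactly $\bm{\Psi}_j$ — this is the block-elimination (Schur-complement) identity $[(\bm{I}-z\bm{M})^{-1}]_{00}=(\bm{I}_N-\bm{A}(z))^{-1}$, where $\bm{A}(z)=\sum_{h\ge1}\bm{A}_hz^h$ — so $\sum_j\|\bm{\Psi}_j\|_\op<\infty$ as soon as $\rho(\bm{M})<1$.

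The crux, and the only place where the hypothesis is genuinely used, is therefore $\rho(\bm{M})<1$ under $\rho(\underline{\bm{G}}_1)+\frac{\bar{\rho}}{1-\bar{\rho}}\sum_{k=1}^{r+2s}\rho(\bm{G}_{p+k})<1$. I would argue via $\det(\bm{I}-z\bm{M})\ne0$ for all $|z|\le1$. Eliminating the auxiliary block-rows one at a time contributes only factors $(1-\lambda_jz)^N$ and $|1-\gamma_me^{i\theta_m}z|^{2N}$, which are nonzero on $|z|\le1$ because $|\lambda_j|,\gamma_m\le\bar{\rho}<1$; what remains is the non-vanishing on $|z|\le1$ of $\det(\bm{I}_N-\bm{A}(z))$, where $\bm{A}(z)=\sum_{h=1}^{p}\bm{G}_hz^h+z^p\bm{R}(z)$ and $\bm{R}(z)=\sum_j g_j(z)\bm{G}_{p+j}+\sum_m\bigl(g^c_m(z)\bm{G}_{p+r+2m-1}+g^s_m(z)\bm{G}_{p+r+2m}\bigr)$ collects the rational MA generating functions $g_j(z)=\sum_{n\ge1}\lambda_j^nz^n$, etc. Two ingredients finish it: $\bm{I}_N-\sum_{h=1}^{p}\bm{G}_hz^h$ is invertible on $|z|\le1$, which is exactly the classical equivalent of $\rho(\underline{\bm{G}}_1)<1$; and on $|z|\le1$ each of $|g_j(z)|,|g^c_m(z)|,|g^s_m(z)|$ is bounded by $\sum_{n\ge1}\bar{\rho}^{n}=\bar{\rho}/(1-\bar{\rho})$. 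Factoring $\bm{I}_N-\bm{A}(z)=(\bm{I}_N-\sum_{h}\bm{G}_hz^h)\,\bigl(\bm{I}_N-(\bm{I}_N-\sum_{h}\bm{G}_hz^h)^{-1}z^p\bm{R}(z)\bigr)$ and bounding the perturbation in an operator norm tailored to the AR companion then gives invertibility precisely under the stated inequality. I expect this last bookkeeping to be the main obstacle: to recover the spectral radii $\rho(\bm{G}_{p+k})$ rather than their operator norms one must choose the vector norm with care — using that $\rho(\underline{\bm{G}}_1)<1$ supplies a norm with $\|\underline{\bm{G}}_1\|<1$, running a Neumann expansion, and invoking $\rho(XY)=\rho(YX)$ in place of a crude submultiplicative bound — and getting the constant exactly $\bar{\rho}/(1-\bar{\rho})$ requires combining the $z^p$ prefactor and the moduli $|g_\cdot(z)|$ tightly.

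With $\rho(\bm{M})<1$ in hand, the VAR(1) $\bm{\zeta}_t=\bm{M}\bm{\zeta}_{t-1}+\bm{e}_t$ has the unique strictly stationary solution $\bm{\zeta}_t=\sum_{j\ge0}\bm{M}^j\bm{e}_{t-j}$, whose $\bm{y}$-block reads $\bm{y}_t=\sum_{j\ge0}[\bm{M}^j]_{00}\bm{\varepsilon}_{t-j}=\bm{\varepsilon}_t+\sum_{j\ge1}\bm{\Psi}_j\bm{\varepsilon}_{t-j}$, the identification $[\bm{M}^j]_{00}=\bm{\Psi}_j=\sum_{k\ge1}\sum_{j_1+\cdots+j_k=j}\bm{A}_{j_1}\cdots\bm{A}_{j_k}$ being the Neumann expansion of $(\bm{I}_N-\bm{A}(z))^{-1}$ read off the block-elimination identity. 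Unrolling the auxiliary recursions confirms $\bm{x}_{k,t}=\sum_{h\ge p+1}\ell_{h,p+k}(\bm{\omega})\bm{y}_{t-h}$ — the remainders $\bar{\rho}^{M}\bm{x}_{k,t-M}\to\bm{0}$ since $\bar{\rho}<1$ and $\{\bm{x}_{k,t}\}$ is stationary — so this $\bm{y}_t$ solves \eqref{eq:model-scalar}, and $\sum_j\|\bm{\Psi}_j\|_\op<\infty$ then gives, as in the first paragraph, strict stationarity. Finally, for uniqueness: any strictly stationary solution of \eqref{eq:model-scalar} has well-defined auxiliary states $\bm{x}_{k,t}$ (the defining series converge, again because $\bar{\rho}<1$) obeying the same recursions, hence its augmentation solves $\bm{\zeta}_t=\bm{M}\bm{\zeta}_{t-1}+\bm{e}_t$ and must coincide with $\sum_{j\ge0}\bm{M}^j\bm{e}_{t-j}$; reading off the $\bm{y}$-block shows it equals the solution just constructed.
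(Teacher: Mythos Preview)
Your route is genuinely different from the paper's. The paper proves $\sum_{h}\|\bm{\Psi}_h\|<\infty$ by a direct combinatorial expansion: it writes $\bm{\Psi}_h=\sum_{k}\sum_{\iota_1+\cdots+\iota_k=h}\bm{A}_{\iota_1}\cdots\bm{A}_{\iota_k}$, then (for $p=1$; the general case is reduced to this via the companion $\underline{\bm{G}}_1$) classifies each product by how many of the indices $\iota_\cdot$ equal $1$---that is, how many factors are $\bm{G}_1$ versus some $\bm{A}_{h\ge 2}$---and sums the resulting double series with the identity $\sum_{l\ge 2m}\binom{l-m-1}{m-1}x^{l-m}=x^m(1-x)^{-m}$. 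Your state-space embedding with auxiliary geometric states is conceptually cleaner: it reduces everything to $\rho(\bm{M})<1$, delivers uniqueness for free through the VAR(1) recursion, and avoids all the combinatorics. The paper's route, in turn, never has to construct $\bm{M}$, carry out block elimination, or analyze $\det(\bm{I}_N-\bm{A}(z))$ on the closed disk.

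The gap is exactly where you flag it, and the fixes you sketch do not close it. After factoring $\bm{I}_N-\bm{A}(z)=(\bm{I}_N-\bm{P}(z))\bigl(\bm{I}_N-(\bm{I}_N-\bm{P}(z))^{-1}z^p\bm{R}(z)\bigr)$ with $\bm{P}(z)=\sum_{h\le p}\bm{G}_hz^h$, you need $\rho\bigl((\bm{I}_N-\bm{P}(z))^{-1}z^p\bm{R}(z)\bigr)<1$ on $|z|\le 1$; but $\bm{R}(z)=\sum_k g_k(z)\bm{G}_{p+k}$ is a \emph{sum}, and neither the spectral radius nor the identity $\rho(XY)=\rho(YX)$ distributes over sums. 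Choosing a norm with $\|\underline{\bm{G}}_1\|$ close to $\rho(\underline{\bm{G}}_1)$ gives you no control over $\|\bm{G}_{p+k}\|$ in that same norm, and in general no single submultiplicative norm makes $\|\bm{G}_{p+k}\|\approx\rho(\bm{G}_{p+k})$ for all $k$ simultaneously. So this step is not bookkeeping; it is the heart of the matter. The paper's combinatorial route meets the very same spectral-radius-versus-norm issue at its final step (it invokes Lemma~5.6.10 of Horn--Johnson to pass from $\sum_k\rho(\bm{G}_{p+k})$ to $\sum_k\|\bm{G}_{p+k}\|$ in one fixed norm) and handles it briskly there as well; whichever path you take, that passage needs to be made explicit rather than deferred.
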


When $r=s=0$, the condition in Theorem \ref{thm:stationary} reduces to $\rho(\underline{\bm{G}}_1)<1$, which coincides with the necessary and sufficient condition for the strict stationarity of the VAR($p$) model. When $r$ and $s$ are not both zero, the stationarity region for $\bm{G}_{k}$'s in Theorem \ref{thm:stationary} will be larger if $\bar{\rho}$ becomes smaller, i.e., if $\bm{A}_{h}$ diminishes more quickly as $h\rightarrow\infty$.

\begin{remark}\label{remark:stat}
	If $\{\bm{y}_t\}$ is  a VARMA($p,q$) process fulfilling the representation in \eqref{eq:model-scalar}, it is known that the necessary and sufficient condition for its strict stationarity is simply $\rho(\underline{\bm{G}}_1)<1$; see \cite{Lutkepohl2005}. This suggests that the sufficient condition in  Theorem \ref{thm:stationary}  could sometimes be restrictive. Indeed, the condition on $\bm{\omega}$ and $\bm{G}_k$'s in  Theorem \ref{thm:stationary} is derived from the necessary and sufficient condition: 
	$\sum_{j=1}^{\infty}\|\bm{\Psi}_j\| < \infty$,
	where  $\bm{\Psi}_j$'s are functions of $\bm{A}_h$'s as defined in the VMA($\infty$) form of $\{\bm{y}_t\}$ in Theorem \ref{thm:stationary}, and $\|\cdot\|$ is any submultiplicative matrix norm. This motivates us to  recommend a more general numerical method to check stationarity for practical use: first compute the sequence $\{\bm{\Psi}_j\}$ using the parameters $\bm{\omega}$ and $\bm{G}_k$'s, and then numerically check whether the partial sum $\sum_{j=1}^{J}\|\bm{\Psi}_j\|$ converges as $J\rightarrow \infty$.
	This method is applied in Section \ref{sec:empirical} to check the stationarity of the fitted model.
\end{remark}

\section{High-dimensional estimation}\label{sec:HDmethod}
\subsection{$\ell_1$-regularized joint estimator}\label{subsec:JE}

We first  propose an $\ell_1$-regularized estimator for the SPVAR($\infty$) model via jointly fitting all component series of $\bm{y}_t$. An alternative estimator will be introduced in the next section.

For $\{\bm{y}_t\}_{t=1}^T$ generated from \eqref{eq:model-scalar} with orders $(p, r, s)$, the  squared loss is $\mathbb{L}_T(\bm{\omega},\bm{g})=T^{-1}\sum_{t=1}^{T} \| \bm{y}_t -  \sum_{h=1}^\infty\bm{A}_h(\bm{\omega}, \bm{g}) \bm{y}_{t-h}\|_2^2=T^{-1}\sum_{t=1}^{T} \| \bm{y}_t - \sum_{k=1}^{d}\bm{G}_k \sum_{h=1}^\infty\ell_{h,k}(\bm{\omega}) \bm{y}_{t-h}\|_2^2$. Here  $\bm{g}=\vect(\bm{G})$, where  $\bm{G}=(\bm{G}_1,\dots, \bm{G}_d)\in\mathbb{R}^{N\times Nd}$. Since the loss function depends on observations in the  infinite past, initial values for $\{\bm{y}_t, t\leq 0\}$ will be needed in practice. We  set them to zero as $\mathbb{E}(\bm{y}_t)=\bm{0}$, and then the corresponding loss becomes
\begin{equation}\label{eq:sqrloss}
	\widetilde{\mathbb{L}}_T(\bm{\omega},\bm{g})
	=\frac{1}{T}\sum_{t=1}^{T} \Big \| \bm{y}_t -  \sum_{h=1}^{t-1}\bm{A}_h(\bm{\omega}, \bm{g}) \bm{y}_{t-h}\Big\|_2^2=\frac{1}{T}\sum_{t=1}^{T} \Big\| \bm{y}_t - \sum_{k=1}^{d}\bm{G}_k \sum_{h=1}^{t-1}\ell_{h,k}(\bm{\omega}) \bm{y}_{t-h}\Big\|_2^2.
\end{equation}
The initialization effect will be taken into account in our theoretical analysis, and its negligibility  is confirmed by our simulation study;  see Lemmas S6--S8 and Section S2 in the supplementary file.
We propose the $\ell_1$-regularized joint estimator (JE) as follows:
\begin{equation}\label{eq:lasso}
	(\bm{\widehat{\omega}},\bm{\widehat{g}})=\argmin_{\bm{\omega}\in\bm{\Omega}, \bm{g}\in\mathbb{R}^{N^2d} }\left \{\widetilde{\mathbb{L}}_T(\bm{\omega},\bm{g})+\lambda_g \|\bm{g} \|_1 \right \},
\end{equation}
where $\lambda_g>0$ is the regularization parameter, and
$\bm{\Omega}\subset (-1,1)^r \times \bm{\varPi}^s$ denotes the parameter space of 
$\bm{\omega}$.
Let $\bm{a}=\vect(\bm{A})$, where $\bm{A}=(\bm{A}_1,\bm{A}_2,\dots)$ is the horizontal concatenation of $\{\bm{A}_h\}_{h=1}^\infty$. Note that $\bm{a}=(\bm{L}(\bm{\omega})\otimes \bm{I}_{N^2})\bm{g}$. Based on \eqref{eq:lasso}, the estimator of $\bm{A}_h$ is $\bm{\widehat{A}}_h=\sum_{k=1}^{d}\ell_{h,k}(\bm{\widehat{\omega}})\bm{\widehat{G}}_k$ for $h\geq 1$. Then, $\bm{\widehat{a}}=\vect(\bm{\widehat{A}})=(\bm{L}(\bm{\widehat{\omega}})\otimes \bm{I}_{N^2})\bm{\widehat{g}}$, where $\bm{\widehat{A}}=(\bm{\widehat{A}}_1,\bm{\widehat{A}}_2,\dots)$.   


Denote the true value  of any parameter with the superscript ``$*$'', e.g., $\bm{g}^*$, $\bm{\omega}^*$, and $\bm{a}^*$.  For $\bm{\omega}^* \in\bm{\Omega}$, let
$\nu_{\mathrm{lower}}^*= (\min_{1\leq j\leq r}|\lambda_j^*|)\wedge (\min_{1\leq m\leq s}|\gamma_m^*|)$ and $\nu_{\mathrm{gap}}^*=\min_{1\leq j\neq k\leq r+2s} |x_j^*-x_k^*|$, 
where $x_j^* = \lambda_j^*$ for $1\leq j \leq r$ and $(x_{r+2m-1}^*, x_{r+2m}^*)  = (\gamma_m^* e^{i\theta_m^*}, \gamma_m^* e^{-i\theta_m^*})$ for $ 1\leq m\leq s$. The assumptions for our theoretical analysis are presented as follows.

\begin{assumption}[Parameter space and stationarity]\label{assum:statn}
	(i)  There exists an  absolute constant $0<\bar{\rho}<1$ such that $|\lambda_1|,\ldots,|\lambda_r|, \gamma_1,\ldots,\gamma_s\leq \bar{\rho}$ for all $\bm{\omega}\in\bm{\Omega}$; and (ii) the time series $\{\bm{y}_t\}$ is stationary.
\end{assumption}

\begin{assumption}[Separability]\label{assum:gap}
	(i) There exists an absolute constant $c_\nu>0$ such that $\nu_{\mathrm{lower}}^*\geq c_\nu$ and $\nu_{\mathrm{gap}}^*\geq c_\nu$; and (ii) $r$ and $s$ are fixed.
\end{assumption}

\begin{assumption}[Sub-Gaussian errors]\label{assum:error}
	Let $\bm{\varepsilon}_t = \bm{\Sigma}_\varepsilon^{1/2}\bm{\xi}_t$, where  $\bm{\xi}_t$ is a sequence of i.i.d. random vectors with zero mean and $\var(\bm{\xi}_t) = \bm{I}_{N}$, and $\bm{\Sigma}_\varepsilon$ is a positive definite covariance matrix.
	In addition, the coordinates $(\xi_{it})_{1\leq i\leq N}$ within $\bm{\xi}_t$ are mutually independent and $\sigma^2$-sub-Gaussian.
\end{assumption}

Assumption \ref{assum:statn}(i) ensures that $|\lambda_{j}|$'s and $\gamma_m$'s are bounded away from one. A sufficient condition for Assumption \ref{assum:statn}(ii) is given in Theorem \ref{thm:stationary}. Under stationarity, $\{\bm{y}_t\}$ has the  VMA($\infty$) form 
$\bm{y}_t =\bm{\Psi}_*(B)\bm{\varepsilon}_{t}$, where $\bm{\Psi}_*(B) = \bm{I}_N+\sum_{j=1}^{\infty}\bm{\Psi}_j^* B^j$, and $B$ is the backshift operator; see Theorem \ref{thm:stationary}. Let
$\mu_{\min}(\bm{\Psi}_*) = \min_{|z|=1}\lambda_{\min}(\bm{\Psi}_*(z)\bm{\Psi}_*^{\HH}(z))$ and $\mu_{\max}(\bm{\Psi}_*) = \max_{|z|=1}\lambda_{\max}(\bm{\Psi}_*(z)\bm{\Psi}_*^{\HH}(z))$,
where   $\bm{\Psi}_*^{\HH}(z)$ is the conjugate transpose of $\bm{\Psi}_*(z)$ for $z\in\mathbb{C}$. It can be verified that $\mu_{\min}(\bm{\Psi}_*) > 0$; see also \cite{basu2015regularized}. Then we define the positive
constants $\kappa_1=	\lambda_{\min}(\bm{\Sigma}_\varepsilon)\mu_{\min}(\bm{\Psi}_*)$  and  $\kappa_2=\lambda_{\max}(\bm{\Sigma}_\varepsilon)\mu_{\max}(\bm{\Psi}_*)$.
Assumption \ref{assum:gap}(i) requires that different $\lambda_{j}^*$'s or $\bm{\eta}_m^*$'s are bounded away from zero and from each other. Since these parameters lie in bounded parameter spaces, this also entails that $r$ and $s$ must be fixed; see Assumption \ref{assum:gap}(ii). Assumption \ref{assum:error} relaxes the  Gaussian assumption commonly used in the literature on high-dimensional time series models \citep[e.g.,][]{basu2015regularized}  to sub-Gaussianity. 

Let $\bm{g}_\ar=\vect(\bm{G}_\ar)$ and $\bm{g}_\ma=\vect(\bm{G}_\ma)$, where  $\bm{G}_\ar=(\bm{G}_{1},\dots, \bm{G}_p)\in\mathbb{R}^{N\times Np}$ and $\bm{G}_\ma=(\bm{G}_{p+1},\dots, \bm{G}_d)\in\mathbb{R}^{N\times N(r+2s)}$. Let $g_{i,j,k}$ be the $(i,j)$th entry of $\bm{G}_k$. Then, we define the weak sparsity of $\bm{g}_\ar^*$ and $\bm{g}_\ma^*$ by restricting them into the $\ell_q$-``balls'', $\mathbb{B}_q(R_q^\ar):=\{\bm{g}_\ar\in\mathbb{R}^{N^2p}\mid \sum_{k=1}^{p}\sum_{i=1}^{N}\sum_{j=1}^{N} |g_{i,j,k}|^q\leq R_q^\ar\}$ and $\mathbb{B}_q(R_q^\ma):=\{\bm{g}_\ma\in\mathbb{R}^{N^2 (r+2s)}\mid  \sum_{k=p+1}^{d}\sum_{i=1}^{N}\sum_{j=1}^{N} \allowbreak  |g_{i,j,k}|^q\leq R_q^\ma\}$, respectively, which is a  more  general assumption than exact sparsity.

\begin{assumption}[Weak sparsity]\label{assum:sparse}
	There exists $q\in[0,1]$ such that $\bm{g}_\ar^*\in\mathbb{B}_q(R_q^\ar)$ and $\bm{g}_\ma^*\in\mathbb{B}_q(R_q^\ma)$ for some radii  $R_q^\ar, R_q^\ma>0$.
\end{assumption}

Assumption \ref{assum:sparse} implies that $\bm{g}^*\in \mathbb{B}_q(R_q)$, where $R_q:=R_q^\ar + R_q^\ma$ and $\mathbb{B}_q(R_q):=\{\bm{g}\in\mathbb{R}^{N^2d}\mid \sum_{k=1}^{d}\sum_{i=1}^{N}\sum_{j=1}^{N} |g_{i,j,k}|^q\leq R_q\}$. If $q=0$, Assumption \ref{assum:sparse} becomes the exact sparsity constraints---$\bm{g}_\ar^*$ and $\bm{g}_\ma^*$ have at most $R_q^\ar$ and $R_q^\ma$  nonzero entries, respectively. If $q\in(0,1]$, the $\ell_q$-``balls'' enforce a certain decay rate on the absolute values of the entries  in  $\bm{g}^*$ as the dimension $N$ grows. Note that we do not require  $R_q^\ar$ and $R_q^\ma$ to be fixed.

A main theoretical  challenge is that the loss function $\widetilde{\mathbb{L}}_T(\bm{\omega},\bm{g})$ is highly nonconvex with respect to $\bm{\omega}$. 
Consequently, the global statistical consistency commonly established for high-dimensional convex M-estimators is not available. However,  if the nonconvex loss function  exhibits a benign  convex curvature over local regions, then a form of  local statistical consistency  can be established; see, e.g.,  \cite{Loh2017}. For many nonconvex $M$-estimators, certain convexity holds within a constant-radius neighborhood of the true parameter value; for the high-dimensional setup, this is termed as local restricted strong convexity in \cite{Loh2017}. Then it can be shown that all local optima within this region  can  enjoy the same convergence rate as the $\ell_1$-regularized least squared estimator for  linear regression; see also \cite{JG2021} and \cite{WH2022}  for other works on local statistical guarantees for estimators with nonconvex losses or regularizers.
Our method is reminiscent of that for high-dimensional nonconvex M-estimators in the literature. However, our setting is special in that $\widetilde{\mathbb{L}}_T(\bm{\omega},\bm{g})$  is only partially nonconvex, as it is  convex with respect to $\bm{g}$, for any fixed $\bm{\omega}$. Thus, unlike \cite{Loh2017}, we only need to restrict $\bm{\omega}$ within a local region of restricted curvature around $\bm{\omega}^*$, while $\bm{g}$ can be free. 

Let $\underline{\alpha}_\ma = \min_{1\leq j\leq r+2s} \|\bm{G}_{p+j}^*\|_{\Fr}$ and  $\overline{\alpha}_\ma = \max_{1\leq j\leq r+2s} \|\bm{G}_{p+j}^*\|_{\Fr}$, which are both allowed to grow with $N$. Then let $\alpha= \overline{\alpha}_\ma/\underline{\alpha}_\ma$. The local convexity of our loss function around $\bm{\omega}^*$ is an immediate consequence of the following proposition.

\begin{proposition}\label{prop:perturb}
	Suppose that $\underline{\alpha}_\ma>0$. Then under Assumptions \ref{assum:statn}(i) and \ref{assum:gap}, there exists a  constant $c_{\bm{\omega}}= \min (2,   c /\alpha)>0$ such that for any  $\bm{\omega}\in\bm{\Omega}$ with $\|\bm{\omega} - \bm{\omega}^*\|_2\leq c_{\bm{\omega}}$, it holds $\|\bm{g}-\bm{g}^*\|_{2} + \underline{\alpha}_\ma \|\bm{\omega} - \bm{\omega}^*\|_2 \lesssim \|\bm{a}-\bm{a}^*\|_{2}^2\lesssim \|\bm{g}-\bm{g}^*\|_{2} + \overline{\alpha}_\ma \|\bm{\omega} - \bm{\omega}^*\|_2$, where $\bm{a}=(\bm{L}(\bm{\omega})\otimes \bm{I}_{N^2})\bm{g}$.
\end{proposition}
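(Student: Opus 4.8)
The plan is to treat $\bm a=(\bm L(\bm\omega)\otimes\bm I_{N^2})\bm g$ as a smooth map of $(\bm\omega,\bm g)$ and to read the two-sided bound off its Jacobian at $(\bm\omega^*,\bm g^*)$. Writing $\bm L(\bm\omega)=\bm L(\bm\omega^*)+\partial_{\bm\omega}\bm L(\bm\omega^*)[\bm\omega-\bm\omega^*]+\bm R_{\bm L}(\bm\omega)$ (a second-order Taylor expansion in $\bm\omega$, legitimate once $c_{\bm\omega}$ is small enough that Assumption~\ref{assum:statn}(i) keeps all $\lambda_j,\gamma_m$ inside a fixed compact subset of the open unit disc) gives
\[
\bm a-\bm a^*=\underbrace{(\bm L(\bm\omega^*)\otimes\bm I_{N^2})(\bm g-\bm g^*)}_{=:\ \bm M_1(\bm g-\bm g^*)}+\underbrace{\big((\partial_{\bm\omega}\bm L(\bm\omega^*)[\bm\omega-\bm\omega^*])\otimes\bm I_{N^2}\big)\bm g^*}_{=:\ \bm M_2(\bm\omega-\bm\omega^*)}+\bm R ,
\]
where $\bm R$ collects the Taylor remainder of $\bm L$ applied to $\bm g^*$ and the cross term $((\bm L(\bm\omega)-\bm L(\bm\omega^*))\otimes\bm I_{N^2})(\bm g-\bm g^*)$. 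The upper bound is the easy direction and needs no Taylor step: from $\sum_{h\ge1}h^2\bar\rho^{2(h-1)}<\infty$ one reads off $\|\bm L(\bm\omega)\|_\op\lesssim1$ and the Lipschitz estimates $\|\bm\ell^I(\lambda)-\bm\ell^I(\lambda')\|_2\lesssim|\lambda-\lambda'|$ and its $\bm\ell^{II}$ analogue, hence $\|\bm L(\bm\omega)-\bm L(\bm\omega^*)\|_\Fr\lesssim\|\bm\omega-\bm\omega^*\|_2$; since $\|\bm G_k^*\|_\Fr\le\overline\alpha_\ma$ for $k>p$ and there are only $r+2s$ MA blocks, $\|\bm a-\bm a^*\|_2\le\|\bm L(\bm\omega)\|_\op\|\bm g-\bm g^*\|_2+\sqrt{r+2s}\,\overline\alpha_\ma\|\bm L(\bm\omega)-\bm L(\bm\omega^*)\|_\Fr\lesssim\|\bm g-\bm g^*\|_2+\overline\alpha_\ma\|\bm\omega-\bm\omega^*\|_2$. (Throughout I treat $\|\bm a-\bm a^*\|_2$ as appearing to the first power.)

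The crux of the plan is a conditioning estimate for the Jacobian. I would collect into one infinite-row, fixed-column matrix $\bm W^*$ all columns occurring in $\bm M_1$ or $\bm M_2$: the block $\bm I_p$ (supported on rows $\le p$); the value columns $\bm\ell^I(\lambda_j^*)$ and the two columns of $\bm\ell^{II}(\bm\eta_m^*)$; and the derivative columns $\partial_\lambda\bm\ell^I|_{\lambda_j^*}$ together with the (two-dimensional, one per complex pole) span of the $(\gamma,\theta)$-partials of $\bm\ell^{II}|_{\bm\eta_m^*}$. Apart from the $\bm I_p$ block, $\bm W^*$ is a confluent Vandermonde matrix with nodes the $n=r+2s$ distinct values $\{\lambda_1^*,\dots,\lambda_r^*,\gamma_1^*e^{\pm i\theta_1^*},\dots,\gamma_s^*e^{\pm i\theta_s^*}\}$, each of confluency two. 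The key claim is that $c\le\sigma_{\min}(\bm W^*)\le\sigma_{\max}(\bm W^*)\le C$ with absolute constants: the upper bound is again the $\sum_h h^2\bar\rho^{2(h-1)}<\infty$ estimate, while the lower bound holds because the nodes are distinct, mutually separated, and bounded away from $0$ (Assumption~\ref{assum:gap}(i): $\nu_{\mathrm{lower}}^*\wedge\nu_{\mathrm{gap}}^*\ge c_\nu$) and from the unit circle (Assumption~\ref{assum:statn}(i)), there are only finitely many of them (Assumption~\ref{assum:gap}(ii)), and $\sigma_{\min}$ of such a confluent Vandermonde is a continuous, strictly positive function of the nodes over the corresponding compact set, hence bounded below.

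Given the conditioning of $\bm W^*$, the lower bound follows by writing $\bm M_1(\bm g-\bm g^*)+\bm M_2(\bm\omega-\bm\omega^*)=(\bm W^*\otimes\bm I_{N^2})(\bm\zeta+\bm\zeta')$, where $\bm\zeta$ is supported on the value block-coordinates with that part equal to $\bm g-\bm g^*$, and $\bm\zeta'$ is supported on the disjoint derivative block-coordinates, with that part obtained by multiplying the MA blocks of $\bm g^*$ against a block-diagonal matrix $\bm\Delta(\bm\omega-\bm\omega^*)$ encoding the perturbation directions. Then $\sigma_{\min}(\bm W^*)\gtrsim1$ yields $\|\bm M_1(\bm g-\bm g^*)+\bm M_2(\bm\omega-\bm\omega^*)\|_2\gtrsim\|\bm g-\bm g^*\|_2+\|\bm\zeta'\|_2$, and a trace computation gives $\underline\alpha_\ma\|\bm\omega-\bm\omega^*\|_2\lesssim\|\bm\zeta'\|_2\lesssim\overline\alpha_\ma\|\bm\omega-\bm\omega^*\|_2$, since $\bm\Delta(\bm\omega-\bm\omega^*)^\top\bm\Delta(\bm\omega-\bm\omega^*)$ is diagonal with trace comparable to $\|\bm\omega-\bm\omega^*\|_2^2$ while every MA block has $\|\bm G_k^*\|_\Fr\in[\underline\alpha_\ma,\overline\alpha_\ma]$. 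Finally, the same Lipschitz and second-derivative estimates give $\|\bm R\|_2\lesssim\|\bm\omega-\bm\omega^*\|_2\|\bm g-\bm g^*\|_2+\overline\alpha_\ma\|\bm\omega-\bm\omega^*\|_2^2\le c_{\bm\omega}\|\bm g-\bm g^*\|_2+\overline\alpha_\ma c_{\bm\omega}\|\bm\omega-\bm\omega^*\|_2$; choosing $c_{\bm\omega}$ below an absolute constant and below an absolute multiple of $\underline\alpha_\ma/\overline\alpha_\ma=1/\alpha$ makes $\|\bm R\|_2$ at most half of the main term, which is precisely the origin of the formula $c_{\bm\omega}=\min(2,c/\alpha)$.

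I expect the main obstacle to be this Jacobian conditioning together with the remainder absorption. One cannot argue directly from the exact splitting $\bm a-\bm a^*=(\bm L(\bm\omega)\otimes\bm I_{N^2})(\bm g-\bm g^*)+((\bm L(\bm\omega)-\bm L(\bm\omega^*))\otimes\bm I_{N^2})\bm g^*$, because the columns $\bm\ell^I(\lambda_j)-\bm\ell^I(\lambda_j^*)$ form a Vandermonde system with two nearly coincident nodes $\lambda_j\approx\lambda_j^*$ and are therefore ill-conditioned; only after replacing each such finite difference by its leading term, a multiple of $\partial_\lambda\bm\ell^I|_{\lambda_j^*}$, does the governing matrix become a confluent Vandermonde with well-separated nodes to which Assumption~\ref{assum:gap} applies. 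The price of that replacement is an $O(\|\bm\omega-\bm\omega^*\|_2^2)$ error carrying a factor $\overline\alpha_\ma$, and balancing it against the genuine first-order contribution $\gtrsim\underline\alpha_\ma\|\bm\omega-\bm\omega^*\|_2$ is exactly what forces the neighborhood radius to shrink like $1/\alpha$.
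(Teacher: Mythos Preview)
Your proposal is correct and follows essentially the same route as the paper: Taylor-expand $\bm a(\bm\omega,\bm g)$ in $\bm\omega$ about $\bm\omega^*$, identify the linearized piece as $(\bm L_{\rm stack}(\bm\omega^*)\otimes\bm I_{N^2})$ acting on a stacked vector built from $\bm g-\bm g^*$ and the $\bm G_k^*$-weighted perturbation of $\bm\omega$, control the remainder via second-derivative bounds (the paper's Lemma~\ref{cor1}), and absorb it by shrinking $c_{\bm\omega}$ proportionally to $\underline\alpha_\ma/\overline\alpha_\ma$. The only substantive difference is in how the confluent-Vandermonde conditioning is obtained: you invoke compactness and continuity, whereas the paper (Lemma~\ref{lemma:fullrank}) computes the determinant explicitly via the generalized Vandermonde formula to get a quantitative lower bound $c_{\bar\rho}$ in terms of $\nu_{\rm lower}^*,\nu_{\rm gap}^*,\bar\rho$; either argument suffices for the qualitative statement under Assumption~\ref{assum:gap}.
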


Proposition \ref{prop:perturb} shows that the mapping  $(\bm{\omega}, \bm{g})\rightarrow\bm{a}$ is linear within a constant-radius neighborhood of $\bm{\omega}^*$. Then, since the squared loss of our model is convex with respect to $\bm{a}$, it is also convex with respect to $(\bm{\omega}, \bm{g})$ jointly within the local region of $\bm{\omega}^*$. Note that the radius $c_{\bm{\omega}}$ is a  constant independent of $N$ and $T$  under the mild condition that $\underline{\alpha}_\ma \asymp \overline{\alpha}_\ma$, in which case 
$\{\|\bm{G}_{p+j}^*\|_{\Fr}\}_{j=1}^{r+2s}$  are of the same order of magnitude.

Since Proposition \ref{prop:perturb} relies on confining  $\bm{\omega}$ to a local neighborhood of $\bm{\omega}^*$, the theoretical guarantees derived in this paper are applicable to local estimators. That is, to derive nonasymptotic error bounds, we need to  assume that the estimator $\bm{\widehat{\omega}}$ obtained from \eqref{eq:lasso} lies within the local region of  $\bm{\omega}^*$ defined in Proposition \ref{prop:perturb}. We will discuss the practical aspect of this assumption after stating the main result. 
For simplicity, denote
\[
\eta_{T}= \sqrt{\frac{\kappa_2 \lambda_{\max}(\bm{\Sigma}_{\varepsilon})\log \{N(p\vee 1)\}}{\kappa_1^2 T}} \quad\text{and}\quad \varpi =   \frac{\lambda_{\max}(\bm{\Sigma}_{\varepsilon})}{\kappa_2 (p\vee 1)}.
\]

\begin{theorem}\label{thm:lasso}
	Suppose that Assumptions \ref{assum:statn}--\ref{assum:sparse} hold with $\sum_{j=0}^{\infty} \|\bm{\Psi}_j^* \|_{\op}^2 <\infty$, $R_q \lesssim \varpi/\eta_{T}^{2-q}$,  
	$\alpha^2 \lesssim R_q/R_q^\ma$,  $\varpi \lesssim  \overline{\alpha}_\ma^2 R_q/R_q^\ma$, and $\underline{\alpha}_\ma>0$.
	In addition, assume that  $\log N \gtrsim (\kappa_2/\kappa_1)^2$,
	$T\gtrsim \max\{\kappa_2 (p\vee1)^4, (\kappa_2/\kappa_1)^2  (p\vee1) \log\{ (\kappa_2/\kappa_1)\alpha N (p\vee1)\}\}$, and we solve \eqref{eq:lasso} with  $\lambda_g \asymp \sqrt{\kappa_2 \lambda_{\max}(\bm{\Sigma}_{\varepsilon}) \log \{N(p\vee 1)\} /T}$. If $\|\bm{\widehat{\omega}} - \bm{\omega}^*\|_{2}\leq c_{\bm{\omega}}$, then  with probability at least $1 - C (p\vee 1) e^{-c (\kappa_1 /\kappa_2)^2 \log N}$,
	\[
	\|\bm{\widehat{a}}-\bm{a}^*\|_{2} \lesssim \eta_{T}^{1-q/2} \sqrt{R_q}
	\quad\text{and}\quad
	\frac{1}{T}\sum_{t=1}^{T} \left \| \sum_{h=1}^{t-1}(\bm{\widehat{A}}_h-\bm{A}_h^*) \bm{y}_{t-h} \right \|_2^2 \lesssim \frac{\eta_{T}^{2-q}  R_q}{\kappa_1^{1-q}}.
	\]
\end{theorem}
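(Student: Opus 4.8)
\textbf{Proof plan for Theorem~\ref{thm:lasso}.} The strategy is the standard two-ingredient recipe for $\ell_1$-regularized $M$-estimation, adapted to the partially nonconvex structure: (a) a \emph{deviation bound} controlling the gradient of the loss at the truth, which dictates the choice of $\lambda_g$; and (b) a \emph{restricted strong convexity} (RSC) lower bound on the loss curvature, valid on the relevant cone of approximately sparse directions. The key point enabling this is Proposition~\ref{prop:perturb}: on the event $\|\bm{\widehat{\omega}}-\bm{\omega}^*\|_2\le c_{\bm{\omega}}$, the map $(\bm{\omega},\bm{g})\mapsto\bm{a}=(\bm{L}(\bm{\omega})\otimes\bm{I}_{N^2})\bm{g}$ is essentially bilinear and the loss $\widetilde{\mathbb{L}}_T$ becomes convex in $(\bm{\omega},\bm{g})$ jointly, so that the quantity $\|\bm{\widehat{a}}-\bm{a}^*\|_2$ can be bounded by the classical linear-regression argument applied in the $\bm{a}$-coordinates, with the sparsity controlled through $\bm{g}$.

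\textbf{Step 1 (reduction to an $\bm{a}$-space problem).} First I would rewrite the design: stacking the truncated regressors, the loss is $\widetilde{\mathbb{L}}_T(\bm{\omega},\bm{g})=T^{-1}\|\bm{Y}-\mathbb{X}(\bm{\omega})\bm{g}\|_2^2$ where the columns of $\mathbb{X}(\bm{\omega})$ encode $\sum_{h=1}^{t-1}\ell_{h,k}(\bm{\omega})\bm{y}_{t-h}$; equivalently, in the full VAR($\infty$) parametrization it is $T^{-1}\|\bm{Y}-\mathbb{Z}\bm{a}\|_2^2$ up to the truncation, where $\mathbb{Z}$ is built from the raw lagged data and $\bm{a}=(\bm{L}(\bm{\omega})\otimes\bm{I}_{N^2})\bm{g}$. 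On the good event, Proposition~\ref{prop:perturb} gives the two-sided comparison $\|\bm{g}-\bm{g}^*\|_2+\underline{\alpha}_\ma\|\bm{\omega}-\bm{\omega}^*\|_2 \lesssim \|\bm{a}-\bm{a}^*\|_2 \lesssim \|\bm{g}-\bm{g}^*\|_2+\overline{\alpha}_\ma\|\bm{\omega}-\bm{\omega}^*\|_2$ (I read the middle term as $\|\bm{a}-\bm{a}^*\|_2$, not its square), so error control transfers cleanly between the two coordinate systems; in particular $\|\bm{\widehat g}-\bm{g}^*\|_1$ and $\|\bm{\widehat g}-\bm{g}^*\|_2$ are controlled by $\|\bm{\widehat a}-\bm{a}^*\|_2$ on the sparse cone.

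\textbf{Step 2 (deviation bound).} I would show that with the stated probability, $\|\nabla_{\bm{a}}\widetilde{\mathbb{L}}_T(\bm{a}^*)\|_\infty \lesssim \sqrt{\kappa_2\lambda_{\max}(\bm{\Sigma}_\varepsilon)\log\{N(p\vee1)\}/T}$, which justifies the choice of $\lambda_g$. This decomposes into (i) the martingale term $T^{-1}\mathbb{Z}^\top\bm{\varepsilon}$ and (ii) the initialization/truncation bias term arising from replacing $\bm{y}_t=\bm{0}$ for $t\le0$. For (i) I invoke the sub-Gaussian concentration for bilinear forms in dependent data under the spectral-density bounds $\mu_{\min},\mu_{\max}$ of $\bm{\Psi}_*$ (Assumption~\ref{assum:error} plus stationarity), in the style of \cite{basu2015regularized}; for (ii) I use the summability $\sum_j\|\bm{\Psi}_j^*\|_{\op}^2<\infty$ together with the geometric decay of $\ell_{h,k}(\bm{\omega})$ guaranteed by Assumption~\ref{assum:statn}(i) to bound the truncation error — this is where Lemmas S6--S8 cited in the text are used.

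\textbf{Step 3 (restricted strong convexity).} I would establish that, uniformly over the sparse cone $\{\bm{\Delta}_a: \|\bm{\Delta}_{a,S^c}\|_1\le 3\|\bm{\Delta}_{a,S}\|_1\}$ intersected with the local $\bm{\omega}$-neighborhood, $T^{-1}\|\mathbb{Z}\bm{\Delta}_a\|_2^2 \gtrsim \kappa_1\|\bm{\Delta}_a\|_2^2 - (\text{tolerance})\cdot(\log N / T)\|\bm{\Delta}_a\|_1^2$, again via the spectral-density/Hanson--Wright machinery for sub-Gaussian linear processes. Because the loss is genuinely convex in $\bm{g}$ (only $\bm{\omega}$ is nonconvex), I only need this bound on the restricted region carved out by Proposition~\ref{prop:perturb}, which is exactly the simplification over \cite{Loh2017}. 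The conditions $T\gtrsim(\kappa_2/\kappa_1)^2(p\vee1)\log\{\cdots\}$ and $\log N\gtrsim(\kappa_2/\kappa_1)^2$ are what make the tolerance term negligible against the curvature term on the cone, and the weak-sparsity radius condition $R_q\lesssim\varpi/\eta_T^{2-q}$ is what allows the standard $\ell_q$-ball truncation argument (splitting coordinates at threshold $\eta_T$) to close.

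\textbf{Step 4 (assembling the oracle inequality).} With Steps 2--3 in hand, the basic inequality $\widetilde{\mathbb{L}}_T(\bm{\widehat a})+\lambda_g\|\bm{\widehat g}\|_1\le\widetilde{\mathbb{L}}_T(\bm{a}^*)+\lambda_g\|\bm{g}^*\|_1$ — here I pass from $\bm{g}$ to $\bm{a}$ using the near-isometry of $\bm{L}(\bm{\omega})\otimes\bm{I}_{N^2}$ on the sparse cone (Proposition~\ref{prop:perturb}), absorbing the $\alpha$, $\overline{\alpha}_\ma$, $\underline{\alpha}_\ma$ factors, which is why the conditions $\alpha^2\lesssim R_q/R_q^\ma$ and $\varpi\lesssim\overline{\alpha}_\ma^2 R_q/R_q^\ma$ appear — yields the cone membership of $\bm{\widehat a}-\bm{a}^*$ and then the bound $\|\bm{\widehat a}-\bm{a}^*\|_2\lesssim\lambda_g^{1-q/2}\sqrt{R_q}/\kappa_1^{?}\asymp\eta_T^{1-q/2}\sqrt{R_q}$ by the usual $\ell_q$-ball optimization of $\lambda_g^{1-q/2}R_q^{1/2}$ against the curvature $\kappa_1$. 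The prediction-error bound $T^{-1}\sum_t\|\sum_h(\bm{\widehat A}_h-\bm{A}_h^*)\bm{y}_{t-h}\|_2^2\lesssim\eta_T^{2-q}R_q/\kappa_1^{1-q}$ then follows from the same basic inequality and the RSC upper companion bound $T^{-1}\|\mathbb{Z}\bm{\Delta}_a\|_2^2\lesssim\kappa_2\|\bm{\Delta}_a\|_2^2+\cdots$.

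\textbf{Main obstacle.} The hardest part is Step~3 combined with the transfer in Step~1: establishing RSC in the \emph{$\bm{a}$}-coordinates while the estimator is defined and penalized in the \emph{$\bm{g}$}-coordinates, and doing so only on the local $\bm{\omega}$-region where Proposition~\ref{prop:perturb}'s near-bilinearity holds — the subtlety being that $\|\bm{\widehat\omega}-\bm{\omega}^*\|_2\le c_{\bm{\omega}}$ is an assumed event, so the argument must stay self-consistent within it (no circularity between bounding $\|\bm{\widehat\omega}-\bm{\omega}^*\|_2$ and using the local RSC). A secondary technical nuisance is propagating the initialization bias of Step~2(ii) through the RSC constant without degrading the rate, i.e. showing the truncated design matrix $\mathbb{X}(\bm{\omega})$ satisfies the same lower bound as its non-truncated idealization up to the $T\gtrsim\kappa_2(p\vee1)^4$ burn-in condition.
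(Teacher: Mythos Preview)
Your high-level architecture---deviation bound, RSC, initialization control, and the basic inequality, with Proposition~\ref{prop:perturb} mediating between the $(\bm{\omega},\bm{g})$- and $\bm{a}$-parametrizations---is exactly the paper's plan. You also correctly flag the $\bm{a}$-vs-$\bm{g}$ coordinate tension as the main obstacle.

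There is, however, a concrete gap in how you describe Steps~2--3. You propose to bound $\|\nabla_{\bm{a}}\widetilde{\mathbb{L}}_T(\bm{a}^*)\|_\infty$ and to state RSC with a tolerance term in $\|\bm{\Delta}_a\|_1^2$. But $\bm{a}\in\mathbb{R}^\infty$: the $(i,j,h)$-th coordinate of the score is $T^{-1}\sum_t\varepsilon_{i,t}y_{j,t-h}$, which does \emph{not} decay in $h$, so a naive union bound over all lags fails; likewise $\|\bm{\Delta}_a\|_1$ need not be finite, so the usual RSC tolerance is meaningless in $\bm{a}$-coordinates. The paper resolves this by stating both the deviation bound and the RSC lower bound with the finite-dimensional surrogates $\|\bm{\widehat d}\|_1$ and $\|\bm{g}_\ma^*\|_1\|\bm{\widehat\phi}\|_2$ in place of any $\bm{a}$-norm (Lemmas~\ref{lemma:devb} and~\ref{lemma:rsclasso}). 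The mechanism is to split the lags into the AR block $1\le h\le p$ (finite union bound) and the MA tail $h\ge p+1$, and on the tail exploit $|\ell_{h,k}(\bm{\omega})|\le\bar\rho^{h-p}$ to sum the infinite series against a concentration bound whose tail probability is made summable in $h$; this is what makes the $\sup_{\bm{\omega}\in\bm{\Omega}}$ controllable without any covering of $\bm{a}$. The RSC proof similarly decomposes $\bm{\Delta}\bm{x}_t$ into four pieces via the Taylor expansion $\bm{\Delta}=\bm{\widetilde H}+(\bm{0},\bm{R})$ (so the linearization of Proposition~\ref{prop:perturb} is used \emph{inside} the RSC proof, not only to transfer the final bound), and each piece is handled by a separate Hanson--Wright/covering argument in the $(\bm{\omega},\bm{g})$-coordinates.

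Once the deviation and RSC are written in terms of $\|\bm{\widehat d}\|_1$, the closing argument also differs slightly from the textbook cone step you outline: because the deviation bound carries the extra term $\|\bm{g}_\ma^*\|_1\|\bm{\widehat\phi}\|_2$, you first derive $\|\bm{\widehat d}\|_1\le 8\|\bm{g}^*_{S^\complement(\eta)}\|_1+8\|\bm{\widehat d}_{S(\eta)}\|_1+3\|\bm{g}_\ma^*\|_1\|\bm{\widehat\phi}\|_2$, and then use Proposition~\ref{prop:perturb} together with the assumption $\alpha^2\lesssim R_q/R_q^\ma$ (which controls $\underline\alpha_\ma^{-2}\|\bm{g}_\ma^*\|_1^2\lesssim R_q\eta^{-q}$) to convert this into $\|\bm{\widehat d}\|_1^2\lesssim \|\bm{g}^*_{S^\complement(\eta)}\|_1^2+R_q\eta^{-q}\|\bm{\widehat\Delta}\|_\Fr^2$. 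This is the step where the conditions $\alpha^2\lesssim R_q/R_q^\ma$ and $\varpi\lesssim\overline\alpha_\ma^2 R_q/R_q^\ma$ actually enter; your Step~4 places them correctly but the route through them is via this $\|\bm{\widehat d}\|_1^2$ bound, not via a near-isometry of $\bm{L}(\bm{\omega})\otimes\bm{I}$ on a cone.
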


Combining Theorem \ref{thm:lasso} with Proposition \ref{prop:perturb}, we immediately have the estimation error bounds $\|\bm{\widehat{g}}-\bm{g}^*\|_2\lesssim \eta_{T}^{1-q/2} \sqrt{R_q}$ and $\|\bm{\widehat{\omega}}-\bm{\omega}^*\|_2\lesssim \underline{\alpha}_\ma^{-1} \eta_{T}^{1-q/2} \sqrt{R_q}$. In particular, under exact sparsity, when $r=s=0$, the bound for $\|\bm{\widehat{a}}-\bm{a}^*\|_{2}$ in Theorem \ref{thm:lasso} matches that for the Lasso estimator of VAR($p$) models in \cite{basu2015regularized}, while the Gaussian assumption is relaxed. Also note that we do not require the uniqueness of the optimal solution to \eqref{eq:lasso}, that is,  Theorem \ref{thm:lasso} is valid for all local optima within the constant-radius neighborhood of $\bm{\omega}^*$. 

The JE   can be efficiently implemented via the block coordinate descent algorithm; see Section S1.1 of the supplementary file for details.
While the value of $c_{\bm{\omega}}$ is unknown in practice, it is known to be independent of $N$ and $T$  under the mild condition that $\underline{\alpha}_\ma \asymp \overline{\alpha}_\ma$. 
The practical implication of the condition $\|\bm{\widehat{\omega}} - \bm{\omega}^*\|_{2}\leq c_{\bm{\omega}}$ is that a reasonably good  initialization for $\bm{\omega}$ will be needed for the optimization algorithm of \eqref{eq:lasso}. 
For nonconvex estimators, to meet such requirements, commonly a convex preliminary estimator is used to initialize the algorithm  \citep[e.g.,][]{JG2021}.
However, for our model, the initialization task can  be  simplified, because the $r$ values  $\lambda_1,\dots, \lambda_r \in(-1,1)$ and the $s$ values $\bm{\eta}_1,\dots,\bm{\eta}_s \in [0,1)\times (0,\pi)$ are restricted to bounded spaces and must be well separated from one another; see Assumptions \ref{assum:statn}(i) and \ref{assum:gap}(i). In fact, when $r$ and $s$ are larger, the initialization of  $\bm{\omega}$ will be even easier, as the selected $r$ and $s$ values  will be denser on the bounded space and hence naturally tend to be closer to the true values. In practice, we recommend considering several different initial values for $\bm{\omega}$ and   selecting the solution of the optimization with minimum in-sample squared loss;  see Section S1.2 of the supplementary file for details.

\begin{remark}\label{remark:Sigma}
	Following the method for sparse VAR($P$) models in \cite{KP2021}, under a weak sparsity assumption on $\bm{\Sigma}_\varepsilon$, we can construct a high-dimensional estimator of $\bm{\Sigma}_\varepsilon$ as $\bm{\widehat{\Sigma}}_\varepsilon=\textrm{THR}_{\lambda_\varepsilon}(T^{-1}\sum_{t=1}^T \bm{\widehat{\varepsilon}}_t \bm{\widehat{\varepsilon}}_t^\top)$, where the residuals $\bm{\widehat{\varepsilon}}_t$ are obtained based on $\bm{\widehat{A}}_h$'s, and  $\textrm{THR}_{\lambda_\varepsilon}(\cdot)$ is the entrywise thresholding function with a chosen threshold parameter $\lambda_\varepsilon>0$; see  \cite{KP2021} for details. Then, based on $\bm{\widehat{\Sigma}}_\varepsilon$ and $\bm{\widehat{A}}_h$'s, we can estimate $\var(\bm{y}_t)$, so  the instantaneous cross-sectional dependence can be interpreted. We leave a rigorous theoretical study of this estimation for future research.
\end{remark}

\begin{remark}\label{remark:algoconv}
	While Theorem \ref{thm:lasso} establishes statistical error bounds, an interesting avenue for future research is to develop a more comprehensive estimation theory that integrates both statistical and algorithmic convergence analyses; see similar works such as \cite{ANW2012} and \cite{Loh2017}. To tackle the theoretical challenges arising from the nonconvexity of the loss function, Proposition \ref{prop:perturb} may be leveraged to transform the problem into a convex one within a local region around $\bm{\omega}^*$. 
\end{remark}

\subsection{$\ell_1$-regularized rowwise estimator}\label{subsec:RE}

While Theorem \ref{thm:lasso} allows $R_q$ to grow with $N$, it requires $R_q \lesssim \varpi/\eta_{T}^{2-q}$; e.g., if $q=0$, then this essentially will become $R_0 \lesssim T/ \log \{N(p\vee 1)\}$. However, this requirement could be stringent when $T$ is relatively small. To relax the sparsity requirement, we further introduce  a rowwise estimator (RE) based on separately fitting each row of the proposed model.

For $1\leq i\leq N$,  the $i$th row of model \eqref{eq:model-scalar} is $y_{i,t}=\sum_{h=1}^{\infty}\bm{a}_{i,h}^\top \bm{y}_{t-h} + \varepsilon_{i,t}$, 
where $\bm{a}_{i,h}=\sum_{k=1}^{d} \ell_{h,k}(\bm{\omega}) \bm{g}_{i,k}\in\mathbb{R}^N$ is the $i$th row of $\bm{A}_h$, and $\bm{g}_{i,k}\in\mathbb{R}^N$ is the $i$th row of $\bm{G}_k$. Then, the squared loss for the $i$th row is $\mathbb{L}_{i,T}(\bm{\omega}, \bm{g}_i)=T^{-1}\sum_{t=1}^{T} ( y_{i,t} - \sum_{h=1}^{\infty}\bm{a}_{i,h}^\top \bm{y}_{t-h})^2= T^{-1} \sum_{t=1}^{T} \{ y_{i,t} - \sum_{k=1}^{d}\bm{g}_{i,k}^\top \sum_{h=1}^{\infty}\ell_{h,k}(\bm{\omega}) \bm{y}_{t-h} \}^2$, where $\bm{g}_i=(\bm{g}_{i,1}^\top, \dots, \bm{g}_{i,d}^\top)^\top\in\mathbb{R}^{Nd}$ is the $i$th row of $\bm{G}=(\bm{G}_1, \dots, \bm{G}_d)$.
Note that joint loss function as defined in the previous section can be decomposed as
$\mathbb{L}_{T}(\bm{\omega}, \bm{g})=\sum_{i=1}^{N}\mathbb{L}_{i,T}(\bm{\omega}, \bm{g}_i)$. Thus, the rowwise losses $\mathbb{L}_{i,T}(\cdot)$'s  can be minimized separately with respect to $\bm{g}_i$ for $1\leq i\leq N$. Meanwhile, since $\bm{\omega}$ is shared by all $\mathbb{L}_{i,T}(\cdot)$'s, each rowwise minimization can yield a consistent estimator of $\bm{\omega}$. This motivates us to consider the following $\ell_1$-regularized RE for $1\leq i \leq N$:
\begin{equation}\label{eq:lassorow}
	(\bm{\widehat{\omega}}_{i}, \bm{\widehat{g}}_{i})=\argmin_{\bm{\omega}\in \bm{\Omega}, \, \bm{g}_i \in\mathbb{R}^{Nd} }\left \{\widetilde{\mathbb{L}}_{i,T}(\bm{\omega}, \bm{g}_i)+\lambda_g \|\bm{g}_i \|_1 \right \},
\end{equation}
where $\lambda_g >0$ is the regularization parameter, and $\widetilde{\mathbb{L}}_{i,T}(\bm{\omega}, \bm{g}_i)$ is  defined by setting the initial values $\{y_{i,s}, s\leq 0\}$  to zero, i.e., 
$\widetilde{\mathbb{L}}_{i,T}(\bm{\omega}, \bm{g}_i) =T^{-1}\sum_{t=1}^{T}  (y_{i,t} - \sum_{h=1}^{t-1}\bm{a}_{i,h}^\top \bm{y}_{t-h} )^2
=T^{-1}\sum_{t=1}^{T} \{ y_{i,t} - \sum_{k=1}^{d}\bm{g}_{i,k}^\top \sum_{h=1}^{t-1}\ell_{h,k}(\bm{\omega}) \bm{y}_{t-h}\}^2$.	
Let $\bm{a}_i=(\bm{a}_{i,1}^\top,  \bm{a}_{i,2}^\top, \dots)^\top\in\mathbb{R}^{\infty}$ be the $i$th row of $\bm{A}=(\bm{A}_1, \bm{A}_2, \dots)$ for $1\leq i\leq N$. 
Note that $\bm{a}_i= (\bm{L}(\bm{\omega})\otimes \bm{I}_N) \bm{g}_{i}$. Based on \eqref{eq:lassorow}, we have $\bm{\widehat{a}}_{i}=(\bm{\widehat{a}}_{i,1}^\top, \bm{\widehat{a}}_{i,2}^\top, \dots)^\top=(\bm{L}(\bm{\widehat{\omega}})\otimes \bm{I}_N) \bm{\widehat{g}}_{i}$, where $\bm{\widehat{g}}_{i}=(\bm{\widehat{g}}_{i,1}^\top, \dots, \bm{\widehat{g}}_{i,d}^\top)^\top$, and $\bm{\widehat{a}}_{i,h} =\sum_{k=1}^{d} \ell_{h,k}(\bm{\widehat{\omega}}_{i}) \bm{\widehat{g}}_{i,k}$. The algorithm for the RE is provided in Section S1.1 of  the supplementary file.

Similar to the previous section, we can derive the nonasymptotic error bounds for the RE. For $1\leq i\leq N$, let $\bm{g}_{i,\ar}=(\bm{g}_{i,1}^\top,\dots, \bm{g}_{i,p}^\top)^\top \in\mathbb{R}^{Np}$ and $\bm{g}_{i,\ma}=(\bm{g}_{i,p+1}^\top,\dots, \bm{g}_{i,d}^\top)^\top  \in\mathbb{R}^{N(r+2s)}$. To define the  weak sparsity of $\bm{g}_{i,\ar}^*$ and $\bm{g}_{i,\ma}^*$, we consider the $\ell_q$-``balls'', $\mathbb{B}_q(R_{i,q}^\ar):=\{\bm{g}_{i,\ar}\in\mathbb{R}^{Np}\mid \sum_{k=1}^{p}\sum_{j=1}^{N} |g_{i,j,k}|^q\leq R_{i,q}^\ar\}$ and $\mathbb{B}_q(R_{i,q}^\ma):=\{\bm{g}_{i,\ma}\in\mathbb{R}^{N(r+2s)}\mid \sum_{k=p+1}^{d}\sum_{j=1}^{N} |g_{i,j,k}|^q\leq R_{i,q}^\ma\}$. The following is the row-wise counterpart of Assumption \ref{assum:sparse}.

\renewcommand{\theassumption}{\arabic{assumption}$^\prime$}
\setcounter{assumption}{3}
\begin{assumption}[Rowwise weak sparsity]\label{assum:rowsparse}
	For $1\leq i\leq N$, there exists $q\in[0,1]$ such that $\bm{g}_{i,\ar}^*\in\mathbb{B}_q(R_{i,q}^\ar)$ and $\bm{g}_{i,\ma}^*\in\mathbb{B}_q(R_{i,q}^\ma)$ for some radii  $R_{i,q}^\ar, R_{i,q}^\ma>0$.
\end{assumption}

Let $R_{i,q}=R_{i,q}^\ar + R_{i,q}^\ma$, and then by Assumption \ref{assum:rowsparse}, $\bm{g}_i^*\in \mathbb{B}_q(R_{i,q}):=\{\bm{g}_i\in\mathbb{R}^{Nd}\mid \sum_{k=1}^{d}\sum_{j=1}^{N} |g_{i,j,k}|^q\leq R_{i,q}\}$. Moreover, Assumption \ref{assum:rowsparse}  implies the overall sparsity level in Assumption \ref{assum:sparse}, since it leads to $\bm{g}_\ar^*\in \mathbb{B}_q(R_{q}^\ar)$, $\bm{g}_\ma^*\in \mathbb{B}_q(R_{q}^\ma)$, and consequently $\bm{g}^*\in \mathbb{B}_q(R_{q})$, where $R_{q}^\ar=\sum_{i=1}^{N}R_{i,q}^\ar$, $R_{q}^\ma=\sum_{i=1}^{N}R_{i,q}^\ma$, and $R_q=R_{q}^\ma+R_{q}^\ar=\sum_{i=1}^N R_{i,q}$.

For  $1\leq i\leq N$, let $\underline{\alpha}_{i,\ma} = \min_{1\leq j\leq r+2s} \|\bm{g}_{i,p+j}^*\|_{2}$ and  $\overline{\alpha}_{i,\ma} = \max_{1\leq j\leq r+2s} \|\bm{g}_{i,p+j}^*\|_{2}$, which are both allowed to grow with $N$. Denote $\alpha_i= \overline{\alpha}_{i,\ma}/\underline{\alpha}_{i,\ma}$. The rowwise counterparts of   Proposition \ref{prop:perturb} and Theorem \ref{thm:lasso} are established as follows.

\begin{proposition}\label{prop:perturbrow}
	Fix $1\leq i\leq N$. Suppose that $\underline{\alpha}_{i,\ma}>0$.	Then under Assumptions \ref{assum:statn}(i) and \ref{assum:gap}, there exists a constant $c_{i,\bm{\omega}}= \min (2,   c/ \alpha_i)>0$ such that for any $\bm{\omega}\in\bm{\Omega}$ with $\|\bm{\omega} - \bm{\omega}^*\|_2\leq c_{i,\bm{\omega}}$, it holds $\|\bm{g}_i-\bm{g}_i^*\|_{2} + \underline{\alpha}_{i,\ma} \|\bm{\omega} - \bm{\omega}^*\|_2 \lesssim \|\bm{a}_i-\bm{a}_i^*\|_{2}^2\lesssim \|\bm{g}_i-\bm{g}_i^*\|_{2} + \overline{\alpha}_{i,\ma} \|\bm{\omega} - \bm{\omega}^*\|_2$, where  $\bm{a}_i= (\bm{L}(\bm{\omega})\otimes \bm{I}_N) \bm{g}_{i}$.
\end{proposition}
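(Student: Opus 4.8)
The plan is to carry out, for a fixed row $i$, the same perturbation analysis that underlies Proposition~\ref{prop:perturb}. The starting point is the exact identity
$\bm{a}_i - \bm{a}_i^* = (\bm{L}(\bm{\omega})\otimes\bm{I}_N)(\bm{g}_i - \bm{g}_i^*) + \big((\bm{L}(\bm{\omega}) - \bm{L}(\bm{\omega}^*))\otimes\bm{I}_N\big)\bm{g}_i^*$,
which splits the deviation of $\bm{a}_i$ into a ``$\bm{g}$-direction'' term and an ``$\bm{\omega}$-direction'' term. Since only the MA block of $\bm{L}(\cdot)$ depends on $\bm{\omega}$, the second term depends only on $\bm{g}_{i,\ma}^*$, and I would Taylor-expand $\bm{L}(\bm{\omega}) - \bm{L}(\bm{\omega}^*)$ about $\bm{\omega}^*$, isolating the linear part $\sum_{\ell}(\omega_\ell - \omega_\ell^*)\,\partial_\ell\bm{L}(\bm{\omega}^*)$ and a quadratic remainder. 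Every constant that appears below depends only on $\bar\rho$, $c_\nu$, $r$ and $s$, hence is absolute under Assumptions~\ref{assum:statn}(i) and~\ref{assum:gap}.

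For the upper bound in the statement I would record two operator-norm estimates on $\bm{L}$, viewed as a bounded map $\mathbb{R}^d\to\ell_2$: a uniform bound $\sup_{\bm{\omega}\in\bm{\Omega}}\|\bm{L}(\bm{\omega})\|_{\op}\le C$, and a Lipschitz bound $\|\bm{L}(\bm{\omega}) - \bm{L}(\bm{\omega}^*)\|_{\op}\le C\|\bm{\omega}-\bm{\omega}^*\|_2$. Both follow from Assumption~\ref{assum:statn}(i): the columns of $\bm{L}(\bm{\omega})$ are the sequences $(\lambda^h)_h$, $(\gamma^h\cos h\theta)_h$, $(\gamma^h\sin h\theta)_h$, and their $\ell_2$-norms, as well as their $\bm{\omega}$-derivatives — which only bring down polynomial-in-$h$ prefactors that are absorbed by $\bar\rho^{\,h}$ since $|\lambda|,\gamma\le\bar\rho<1$ — are uniformly bounded; Assumption~\ref{assum:gap}(ii) keeps the number of columns fixed. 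Feeding these into the identity, together with $\big\|\big((\bm{L}(\bm{\omega}) - \bm{L}(\bm{\omega}^*))\otimes\bm{I}_N\big)\bm{g}_i^*\big\|_2 \lesssim \overline{\alpha}_{i,\ma}\|\bm{\omega} - \bm{\omega}^*\|_2$ (using $\max_j\|\bm{g}_{i,p+j}^*\|_2 = \overline{\alpha}_{i,\ma}$), produces the upper bound.

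The main obstacle is the lower bound, which is a quantitative ``no near-cancellation'' statement between the $\bm{g}$- and $\bm{\omega}$-directions. Its crux is a lower bound on the smallest singular value of the linearized map $(\delta\bm{g}_i,\delta\bm{\omega})\mapsto(\bm{L}(\bm{\omega}^*)\otimes\bm{I}_N)\delta\bm{g}_i + \sum_{\ell}\delta\omega_\ell\,(\partial_\ell\bm{L}(\bm{\omega}^*)\otimes\bm{I}_N)\bm{g}_i^*$: its columns are the sequences $(\lambda_j^{*h})_h$, $(\gamma_m^{*h}\cos h\theta_m^*)_h$, $(\gamma_m^{*h}\sin h\theta_m^*)_h$ together with their confluent (derivative) counterparts such as $(h\lambda_j^{*\,h-1})_h$, and this family is linearly independent with a Gram matrix whose smallest eigenvalue is bounded below by an absolute constant — a confluent-Vandermonde-type separation estimate, made uniform precisely by the fact that the $\lambda_j^*$'s and $\bm{\eta}_m^*$'s are mutually separated and bounded away from zero by $c_\nu$ with $r,s$ fixed (Assumption~\ref{assum:gap}). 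Each $\bm{\omega}$-direction is weighted by one of the norms $\|\bm{g}_{i,p+j}^*\|_2$, so the least favorable direction carries weight $\underline{\alpha}_{i,\ma}$, which is positive by hypothesis; this is what makes $\bm{\omega}$ identifiable from row $i$ alone. Finally, the quadratic Taylor remainder has size $\lesssim\overline{\alpha}_{i,\ma}\|\bm{\omega} - \bm{\omega}^*\|_2^2$ while the first-order $\bm{\omega}$-contribution is $\gtrsim\underline{\alpha}_{i,\ma}\|\bm{\omega} - \bm{\omega}^*\|_2$, so the remainder is dominated exactly when $\|\bm{\omega} - \bm{\omega}^*\|_2\lesssim\underline{\alpha}_{i,\ma}/\overline{\alpha}_{i,\ma}=1/\alpha_i$; this is the origin of the admissible radius $c_{i,\bm{\omega}}=\min(2,c/\alpha_i)$. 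Combining the singular-value lower bound with this remainder control yields the lower bound in the statement.

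One caveat worth stating: the rowwise proposition is not a formal corollary of Proposition~\ref{prop:perturb}, since that result controls the aggregate deviation in terms of the aggregate quantity $\underline{\alpha}_\ma$, which may strictly exceed the row-wise $\underline{\alpha}_{i,\ma}$ for a particular $i$. The argument must therefore be rerun at the row level; but because fixing one row fully decouples the linear algebra — one simply replaces $\bm{I}_{N^2}$ by $\bm{I}_N$ and the Frobenius-norm quantities $\|\bm{G}_{p+j}^*\|_{\Fr}$, $\underline{\alpha}_\ma$, $\overline{\alpha}_\ma$ by the row $\ell_2$-quantities $\|\bm{g}_{i,p+j}^*\|_2$, $\underline{\alpha}_{i,\ma}$, $\overline{\alpha}_{i,\ma}$ — each step above is a line-by-line transcription of the corresponding step in the proof of Proposition~\ref{prop:perturb}.
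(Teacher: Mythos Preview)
Your proposal is correct and follows essentially the same route as the paper: the paper's proof is exactly the row-level transcription you describe, Taylor-expanding each $\bm{a}_{i,p+h}$ about $\bm{\omega}^*$, expressing the first-order part as $(\bm{L}_{\rm stack}(\bm{\omega}^*)\otimes\bm{I}_N)\bm{g}_{i,\rm stack}(\bm{\phi},\bm{d}_i)$, invoking the confluent-Vandermonde singular-value bound (Lemma~\ref{lemma:fullrank}) for $\sigma_{\min}(\bm{L}_{\rm stack}(\bm{\omega}^*))$, and controlling the remainder via Lemma~\ref{cor1} to obtain the radius $c_{i,\bm{\omega}}=\min\{2,\underline{\alpha}_{i,\ma}(1-\bar\rho)(1\wedge c_{\bar\rho})/(8\sqrt{2}C_\ell\,\overline{\alpha}_{i,\ma})\}$. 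Your remark that this is not a formal corollary of Proposition~\ref{prop:perturb} but a parallel argument with $\|\bm{G}_{p+j}^*\|_{\Fr}$, $\underline{\alpha}_\ma$, $\overline{\alpha}_\ma$ replaced by $\|\bm{g}_{i,p+j}^*\|_2$, $\underline{\alpha}_{i,\ma}$, $\overline{\alpha}_{i,\ma}$ is exactly how the paper presents it.
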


\begin{theorem}\label{thm:lassorow}
	Suppose that Assumptions \ref{assum:statn}--\ref{assum:error}  and \ref{assum:rowsparse} hold with $\sum_{j=0}^{\infty} \|\bm{\Psi}_j^* \|_{\op}^2 <\infty$, $R_{i,q} \lesssim \varpi/\eta_{T}^{2-q}$, $\alpha_i^2 \lesssim R_{i,q}/R_{i,q}^\ma$, $\varpi \lesssim  \overline{\alpha}_{i,\ma}^2 R_{i,q}/R_{i,q}^\ma$, and $\underline{\alpha}_{i,\ma}>0$, for $1\leq i\leq N$. In addition, assume that  $\log N \gtrsim (\kappa_2/\kappa_1)^2$, 	$T\gtrsim \max\{\kappa_2 (p\vee1)^4, (\kappa_2/\kappa_1)^2  (p\vee1) \log\{ (\kappa_2/\kappa_1)\alpha_{\max} N (p\vee1)\}\}$, with $\alpha_{\max}=\max_{1\leq i\leq N}\alpha_i$, and we solve \eqref{eq:lassorow} with  $\lambda_g \asymp \sqrt{\kappa_2 \lambda_{\max}(\bm{\Sigma}_{\varepsilon}) \log \{N(p\vee 1)\} /T}$. For $1\leq i\leq N$, if $\|\bm{\widehat{\omega}}_i - \bm{\omega}^*\|_{2}\leq c_{i,\bm{\omega}}$, then  with probability at least $1 - C (p\vee 1) e^{-c (\kappa_1 /\kappa_2)^2 \log N}$,
	\[
	\|\bm{\widehat{a}}_{i}-\bm{a}_i^*\|_{2} \lesssim \eta_{T}^{1-q/2} \sqrt{R_{i, q}} \quad\text{and}\quad 
	\frac{1}{T}\sum_{t=1}^{T} \left \| \sum_{h=1}^{t-1}(\bm{\widehat{a}}_{i,h}-\bm{a}_{i,h}^*)^\top \bm{y}_{t-h} \right \|_2^2 \lesssim \frac{\eta_{T}^{2-q} R_{i, q}}{\kappa_1^{1-q}}.
	\]
\end{theorem}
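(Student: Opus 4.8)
The plan is to reproduce the proof of Theorem~\ref{thm:lasso}, carrying out the argument one row at a time, and then observe that the per-row statement needs no union bound over $i$. Fix $1\le i\le N$. The key structural fact is that the rowwise loss with zero initialization can be written as a plain quadratic in the reparametrized coefficient vector: $\widetilde{\mathbb{L}}_{i,T}(\bm\omega,\bm g_i)=T^{-1}\sum_{t=1}^{T}(y_{i,t}-\bm a_i^\top\tilde{\bm x}_t)^2$, where $\tilde{\bm x}_t=(\bm y_{t-1}^\top,\dots,\bm y_1^\top,\bm 0,\dots)^\top$ is the history truncated by the initialization and $\bm a_i=(\bm L(\bm\omega)\otimes\bm I_N)\bm g_i$; thus the loss depends on $(\bm\omega,\bm g_i)$ only through $\bm a_i$, and all of its nonconvexity sits in the map $(\bm\omega,\bm g_i)\mapsto\bm a_i$. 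Since $\bm{\widehat{\omega}}_i$ is assumed to satisfy $\|\bm{\widehat{\omega}}_i-\bm\omega^*\|_2\le c_{i,\bm\omega}$, Proposition~\ref{prop:perturbrow} shows this map is bi-Lipschitz on the relevant region, so it suffices to establish a restricted-strong-convexity (RSC) condition and a deviation bound for the quadratic loss in $\bm a_i$, and then translate the resulting $\bm a_i$-error bound into the stated conclusions via Proposition~\ref{prop:perturbrow} and the explicit block structure of $\bm L(\cdot)$. The geometric decay of the weight functions $\ell_{h,k}$ (with rate at most $\bar\rho<1$) makes $\sum_{h\ge1}|\ell_{h,k}(\bm\omega)|$ bounded by a constant depending only on $\bar\rho$, so $\|\bm a_i\|_1\lesssim\|\bm g_i\|_1$ and cone/sparsity conditions transfer between $\bm g_i$-space and $\bm a_i$-space with constants depending only on $\bar\rho$.

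The second and central step is the RSC condition: on an event of probability at least $1-C(p\vee1)e^{-c(\kappa_1/\kappa_2)^2\log N}$,
\[
\frac{1}{T}\sum_{t=1}^{T}(\bm\Delta^\top\tilde{\bm x}_t)^2\ \geq\ c\,\kappa_1\|\bm\Delta\|_2^2-C\,\eta_T^2\,\|\bm\Delta\|_1^2\qquad\text{for all }\bm\Delta,
\]
the rowwise analogue of the RSC behind Theorem~\ref{thm:lasso}. The population version follows from the spectral-density lower bound of \cite{basu2015regularized}, valid under stationarity and $\sum_{j\ge0}\|\bm\Psi_j^*\|_{\op}^2<\infty$, which gives $\lambda_{\min}\gtrsim\kappa_1$ for the Gram operator of the true history $\bm x_t$; the empirical-to-population deviation of $T^{-1}\sum_t\bm x_t\bm x_t^\top$ restricted to cone directions is obtained from a Hanson--Wright-type inequality for quadratic forms of sub-Gaussian linear processes combined with a standard sparse-eigenvalue discretization; and the effect of replacing $\bm x_t$ by the truncated $\tilde{\bm x}_t$ contributes only lower-order terms because $\|\bm x_t-\tilde{\bm x}_t\|$ is exponentially small (this is precisely Lemmas~S6--S8 in the supplement, applied uniformly over $\bm\omega$ in the local region via a covering of $\bm\Omega$). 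The conditions $\log N\gtrsim(\kappa_2/\kappa_1)^2$ and $T\gtrsim\max\{\kappa_2(p\vee1)^4,(\kappa_2/\kappa_1)^2(p\vee1)\log\{(\kappa_2/\kappa_1)\alpha_{\max}N(p\vee1)\}\}$ are exactly what is needed to push the tolerance $\eta_T^2$ below the curvature $\kappa_1$ on the restricted set.

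Third, the deviation bound and the basic inequality. Writing $y_{i,t}-\bm a_i^{*\top}\tilde{\bm x}_t=\varepsilon_{i,t}+\bm a_i^{*\top}(\bm x_t-\tilde{\bm x}_t)$, I would bound $\|T^{-1}\sum_t\tilde{\bm x}_t\varepsilon_{i,t}\|_\infty\lesssim\sqrt{\kappa_2\lambda_{\max}(\bm\Sigma_\varepsilon)\log\{N(p\vee1)\}/T}$ by a union bound over the $N$ coordinates of $\bm y$ (and the $p$ short lags) using sub-Gaussianity of $\varepsilon_{i,t}$ and of the linear-process regressors, while the truncation piece $T^{-1}\sum_t\tilde{\bm x}_t\,\bm a_i^{*\top}(\bm x_t-\tilde{\bm x}_t)$ is again exponentially negligible by Lemmas~S6--S8; this fixes $\lambda_g\asymp\sqrt{\kappa_2\lambda_{\max}(\bm\Sigma_\varepsilon)\log\{N(p\vee1)\}/T}\asymp\kappa_1\eta_T$. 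Feasibility of $(\bm\omega^*,\bm g_i^*)$ in \eqref{eq:lassorow} yields $\widetilde{\mathbb{L}}_{i,T}(\bm{\widehat{a}}_i)+\lambda_g\|\bm{\widehat{g}}_i\|_1\le\widetilde{\mathbb{L}}_{i,T}(\bm a_i^*)+\lambda_g\|\bm g_i^*\|_1$, from which the usual manipulation places $\bm{\widehat{g}}_i-\bm g_i^*$ in an approximate $\ell_1$-cone; for $q\in(0,1]$ the weak-sparsity argument splits the $\ell_q$-ball into a ``large'' coordinate set of cardinality $\asymp R_{i,q}\lambda_g^{-q}$ and a residual tail contributing $\asymp R_{i,q}\lambda_g^{2-q}$. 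Combining the cone condition with the RSC lower bound and the deviation bound, transferring to $\bm a_i$-space by Proposition~\ref{prop:perturbrow}, and solving the resulting quadratic inequality gives $\|\bm{\widehat{a}}_i-\bm a_i^*\|_2\lesssim\eta_T^{1-q/2}\sqrt{R_{i,q}}$; the in-sample prediction bound $T^{-1}\sum_t\|\sum_{h=1}^{t-1}(\bm{\widehat{a}}_{i,h}-\bm a_{i,h}^*)^\top\bm y_{t-h}\|_2^2\lesssim\eta_T^{2-q}R_{i,q}/\kappa_1^{1-q}$ then follows by feeding this back into the basic inequality. Because the RSC event does not depend on $i$ (the regressor $\tilde{\bm x}_t$ is shared across rows) and the deviation event is invoked for the single fixed row $i$, the overall probability is as stated, with no extra factor of $N$.

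The main obstacle is Step~2: $\tilde{\bm x}_t$ is an effectively infinite-dimensional, serially dependent, merely sub-Gaussian regressor, so one must simultaneously import the spectral-density lower bound for the VAR($\infty$)/VMA($\infty$) process, establish a concentration inequality for its quadratic forms sharp enough to dominate the $\log N$ factor on the restricted set, and verify that the $\bm y_t=\bm 0$ initialization is genuinely negligible uniformly over $\bm\omega$ in the local region and over the cone --- the last point being where the geometric structure of the $\ell_{h,k}$'s and the $p$-dependent sample-size condition enter. The remaining technical conditions $\alpha_i^2\lesssim R_{i,q}/R_{i,q}^\ma$ and $\varpi\lesssim\overline{\alpha}_{i,\ma}^2R_{i,q}/R_{i,q}^\ma$ serve only to ensure that, after the translation via Proposition~\ref{prop:perturbrow}, the $\bm\omega$-contribution to the $\bm a_i$-error does not dominate the $\bm g_i$-contribution, so that the $\bm g_i$-sparsity-driven rate $\eta_T^{1-q/2}\sqrt{R_{i,q}}$ is the one that surfaces.
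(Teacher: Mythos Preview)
Your proposal is essentially correct and follows the same route as the paper's proof, which is an explicit rowwise transcription of the argument for Theorem~\ref{thm:lasso}: optimality/basic inequality, rowwise deviation bound (Lemma~S9), rowwise RSC (Lemma~S10), rowwise initialization lemmas (Lemmas~S11--S13), weak-sparsity thresholding at level $\eta=\lambda_g/\kappa_1$, and Proposition~\ref{prop:perturbrow} to pass between $\bm a_i$- and $(\bm\omega,\bm g_i)$-coordinates. Two small organizational differences are worth noting. First, the paper does \emph{not} fold the initialization into a single RSC for $\tilde{\bm x}_t$; it proves RSC for the true history $\bm x_t$ and handles the three initialization pieces $S_1(\widehat{\bm\delta}_i),S_2(\widehat{\bm\delta}_i),S_3(\widehat{\bm\delta}_i)$ separately, which is where the condition $T\gtrsim\kappa_2(p\vee1)^4$ is used. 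Second, your description of the deviation step as bounding $\|T^{-1}\sum_t\tilde{\bm x}_t\varepsilon_{i,t}\|_\infty$ is slightly off: that vector is effectively infinite-dimensional and its coordinatewise maximum over \emph{all} lags is not uniformly $O(\sqrt{\log\{N(p\vee1)\}/T})$. The paper instead exploits the factorization $\widehat{\bm\delta}_i^\top\bm x_t=\widehat{\bm d}_{i,\ar}^\top\bm x_t^p+\widehat{\bm d}_{i,\ma}^\top\{\bm L^{\ma}(\widehat{\bm\omega}_i)\otimes\bm I_N\}^\top\bm x_{t-p}+\cdots$, bounding a finite $\|\cdot\|_{\max}$ for the $p$ short lags and, for the long lags, a \emph{weighted} sum $\sum_{h>p}\bar\rho^{h-p}\max_{j}|T^{-1}\sum_t\varepsilon_{i,t}y_{j,t-h}|$ in which the per-lag maxima are allowed to grow linearly in $h-p$ but are killed by the geometric weight; this is what ultimately yields the bound in terms of $\|\widehat{\bm d}_i\|_1+\|\bm g_{i,\ma}^*\|_1\|\widehat{\bm\phi}_i\|_2$ rather than $\|\widehat{\bm\delta}_i\|_1$.
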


Compared to Theorem \ref{thm:lassorow}, the sparsity condition in Theorem \ref{thm:lassorow} is much weaker, i.e., $R_{i,q} \lesssim \varpi/\eta_{T}^{2-q}$ for $1\leq i\leq N$; or essentially, $R_{i,0} \lesssim T/ \log \{N(p\vee 1)\}$ when $q=0$. Thus, the RE may be preferred in practice when $T$ is relatively small. 

Moreover, by Theorem \ref{thm:lassorow} and Proposition \ref{prop:perturbrow}, we have $\|\bm{\widehat{g}}_i-\bm{g}_i^*\|_2\lesssim \eta_{T}^{1-q/2} \sqrt{R_{i,q}}$ and $\|\bm{\widehat{\omega}}_i-\bm{\omega}^*\|_2\lesssim \underline{\alpha}_{i,\ma}^{-1} \eta_{T}^{1-q/2} \sqrt{R_{i,q}}$ for $1\leq i\leq N$.  
Note that each RE $\bm{\widehat{\omega}}_i$ is a consistent estimator of $\bm{\omega}^*$, and the estimation error is proportional to $\underline{\alpha}_{i,\ma}^{-1}\sqrt{R_{i,q}}$. On the other hand, as implied by Theorem \ref{thm:lasso}, the estimation error of the JE for $\bm{\omega}^*$ is proportional to  $\underline{\alpha}_{\ma}^{-1}\sqrt{R_q}$. For example, if $R_{i,q}\asymp R_q/N$ and $\underline{\alpha}_{i,\ma}^2 \asymp \underline{\alpha}_{\ma}^2/N$, then the two bounds will be comparable. However, intuitively, allowing different estimators $\bm{\widehat{\omega}}_i$ for different rows may enhance the flexibility in practice, although it may also increase the risk of overfitting.
In addition, combining the results for $\bm{\widehat{a}}_i$, $\bm{\widehat{g}}_i$ and the prediction error across all rows, we have  $\|\bm{\widehat{a}}-\bm{a}^*\|_{2} \lesssim \eta_{T}^{1-q/2} \sqrt{R_q}$, $\|\bm{\widehat{g}}-\bm{g}^*\|_{2} \lesssim \eta_{T}^{1-q/2} \sqrt{R_q}$, and $T^{-1}\sum_{t=1}^{T} \| \sum_{h=1}^{t-1}(\bm{\widehat{A}}_h-\bm{A}_h^*) \bm{y}_{t-h} \|_2^2 \lesssim \eta_{T}^{2-q}  R_q / \kappa_1^{1-q}$. Here, with a slight abuse of  notation,  $\bm{\widehat{a}}$,  $\bm{\widehat{g}}$ and $\bm{\widehat{A}}_h$'s represent the estimates obtained based on  merging the RE $\bm{\widehat{a}}_i$ or $\bm{\widehat{g}}_i$ for $1\leq i\leq N$. Note that these bounds match exactly those of the JE in the previous section.

In addition to the above upper bounds analysis, we numerically assess the actual comparative performance of RE and JE via simulations in  Section S2.2 of the supplementary file. It is shown that they can perform very similarly for the estimation of $\bm{g}^*$, while RE may outperform JE for the estimation of $\bm{\omega}^*$, resulting in an overall advantage for the estimation of $\bm{a}^*$. However, as long as $T$ is not too small compared to $R_q$, JE and RE tend to have similar out-of-sample forecast accuracy; see the empirical analysis in Section \ref{sec:empirical} and the simulation study in Section S2.4 of the supplementary file for details. Furthermore, as commented by one referee, the competitive numerical performance of the JE might hint that its more stringent sparsity condition could be an artifact of the proof technique.

\section{Model order selection}\label{sec:BIC}

In this section, we introduce a Bayesian information criterion (BIC) based approach to selecting the model orders for the proposed high-dimensional SPVAR($\infty$) model. 

Let $\pazocal{M}^*=(p^*,r^*,s^*)$ denote the true orders.
For the feasibility of order selection, it is crucial to  ensure that  $\pazocal{M}^*$ is irreducible; i.e., if $\{\bm{y}_t\}$ is generated with orders $\pazocal{M}^*$,   there is no alternative parameterization with reduced orders. As  established in Lemma S14 in the supplementary file, the irreducibility of $r^*$ and $s^*$ is  guaranteed if $\lambda_j^*$'s, $\gamma_m^*$'s, and  $\underline{\alpha}_\ma$ are  nonzero. On the other hand, $p^*$ is irreducible under the following assumption.

\renewcommand{\theassumption}{\arabic{assumption}}
\begin{assumption}[Irreducibility] \label{assum:irred}
	$\bm{G}_{p^*}\neq\sum_{j=1}^{r^*}\bm{G}_{p^*+j} +\sum_{m=1}^{s^*} \bm{G}_{p^*+r^*+2m-1}$.
\end{assumption}

To select the model orders, for any $\pazocal{M}=(p,r,s)$, we define the high-dimensional BIC,
\begin{equation}\label{eq:bic}
	\textup{BIC}(\pazocal{M}) = \log \widetilde{\mathbb{L}}_T(\widehat{\bm{\omega}}_{\pazocal{M}},\widehat{\bm{g}}_{\pazocal{M}})  + \tau_N d \left [\frac{\log\{N(p\vee1)\}}{T}\right ]^{1-q/2} \log T,
\end{equation}
where $\bm{\widehat{\omega}}_{\pazocal{M}}$ and $\widehat{\bm{g}}_{\pazocal{M}}$ denote estimates obtained by fitting the model with orders $\pazocal{M}$ using either the JE in \eqref{eq:lasso} or the RE in \eqref{eq:lassorow}.  In particular, if the RE is employed, then $\widetilde{\mathbb{L}}_T(\widehat{\bm{\omega}}_{\pazocal{M}},\widehat{\bm{g}}_{\pazocal{M}})=\sum_{i=1}^{N}\mathbb{L}_{i,T}(\widehat{\bm{\omega}}_{i,\pazocal{M}}, \widehat{\bm{g}}_{i,\pazocal{M}})$, where $\widehat{\bm{\omega}}_{\pazocal{M}}$ and $\widehat{\bm{g}}_{\pazocal{M}}$ denote collections of $\widehat{\bm{\omega}}_{i,\pazocal{M}}$'s and $\widehat{\bm{g}}_{i,\pazocal{M}}$'s, respectively. Note that for notational simplicity, we suppress the dependence of $\widetilde{\mathbb{L}}_T(\cdot)$ and $\mathbb{L}_T(\cdot)$ on $\pazocal{M}$   in this section.
Additionally, $\tau_{N}>0$ is a sequence possibly dependent on $N$ satisfying the following condition.

\begin{assumption}[Penalty parameter]\label{assum:penalty}
	$\tau_{N}\gtrsim N^{-1} R_q \{\kappa_2 \lambda_{\max}(\bm{\Sigma}_{\varepsilon})\}^{1-q/2}/\kappa_1^{3-2q}$.
\end{assumption}

Assumption \ref{assum:penalty} ensures that the proposed BIC can rule out any overspecified model, $\pazocal{M}\in\mathcal{M}_{\textup{over}}= \{\pazocal{M}\in \mathcal{M}\mid  p\geq p^*,  r\geq r^* \text{ and } s\geq s^*\}\setminus \pazocal{M}^*$.
When the constants $\kappa_1, \kappa_2$ and $\lambda_{\max}(\bm{\Sigma}_{\varepsilon})$  are fixed, Assumption \ref{assum:penalty} can be simplified to $\tau_{N}\gtrsim  N^{-1} R_q$. While $R_q$ is unknown in practice, to set a reasonable $\tau_{N}$, we may assume that $R_q\lesssim N$; e.g., this will hold if $\bm{G}_k^*$'s are (weakly) row-sparse. Then it would suffice to fix $\tau_{N}\equiv\tau>0$. 
In practice, we may simply set $q=0$. We recommend $\tau=0.05$, which performs well in our simulations.  

Based on \eqref{eq:bic}, we estimate the model orders by
\[
\widehat{\pazocal{M}}=(\widehat{p}, \widehat{r}, \widehat{s}) = \argmin_{\pazocal{M}\in \mathcal{M}} \textup{BIC}(\pazocal{M}),
\]
where $\mathcal{M}=\{(p,r,s)\mid 0\leq p \leq \overline{p}, 0\leq r \leq \overline{r}, 0\leq s \leq \overline{s} \}$, with $\overline{\pazocal{M}}:=(\overline{p},\overline{r},\overline{s})$ being predetermined maximum orders. Since the true orders are usually small in practice, $\overline{\pazocal{M}}$ need not be large; e.g. $\overline{p}= \overline{r}=  \overline{s}=6$ may be sufficient for most applications. Our simulations show that $\widehat{\pazocal{M}}$ is insensitive to the choice of $\overline{\pazocal{M}}$  as long as it is large enough compared to $\pazocal{M}^*$.

Let
$\mathcal{M}_{\textup{mis}}= \{\pazocal{M}\in \mathcal{M}\mid  p< p^*,  r< r^* \text{ or } s< s^*\}$.
To  establish the conditions that prevent the  proposed BIC from selecting any misspecified model, we need to accurately quantify the minimum difference  between any $\pazocal{M}\in \mathcal{M}_{\textup{mis}}$ and $\pazocal{M}^*$. This analysis is challenging since there is no monotonic nested ordering over $\mathcal{M}$ due to the involvement of three different orders,  $p,r$ and $s$. Particularly,
$\pazocal{M}\in\mathcal{M}_{\textup{mis}}$ may not be nested within $\pazocal{M}^*$ regarding all   three orders. For instance, if $\pazocal{M}^*=(1,1,0)$, then a misspecified model may be  $\pazocal{M}_1=(\overline{p},0,0)$ or $\pazocal{M}_2=(0,\overline{r},\overline{s})$, where, e.g., $\overline{p}=\overline{r}=\overline{s}=6$. Clearly, we cannot simply treat $\pazocal{M}_1$ or $\pazocal{M}_2$ as a smaller model than  $\pazocal{M}^*$, as they possess  orders as large as  $\overline{p}$, $\overline{r}$, or $\overline{s}$.

To uniformly accommodate the possibly nonnested relationship between $\pazocal{M}\in\mathcal{M}_{\textup{mis}}$ and  $\pazocal{M}^*$,  we leverage their connections  with a common model, $\overline{\pazocal{M}}=(\overline{p},\overline{r}, \overline{s})$. Specifically, we can show that model \eqref{eq:model-scalar} with any orders $\pazocal{M}=(p,r,s)\in \mathcal{M}$ can be reparameterized as the model with  $\overline{\pazocal{M}}=(\overline{p},\overline{r}, \overline{s})$. In addition, the corresponding parameter vectors, denoted $\bm{\overline{\omega}}\in (-1,1)^{\overline{r}}\times \bm{\varPi}^{\overline{s}}$ and $\bm{\overline{g}} \in\mathbb{R}^{N\times N \overline{d}}$, satisfy the following equality constraints: 
\begin{equation}\label{eq:Cmats}
	\bm{\overline{C}}_1^{\pazocal{M}} \bm{\overline{\omega}}=\bm{0} \quad\text{and}\quad \left (\bm{\overline{C}}_2^{\pazocal{M}}( \bm{\overline{\omega}})  \otimes \bm{I}_{N^2} \right ) \bm{\overline{g}}=\bm{0},
\end{equation}
where $\bm{\overline{C}}_1^{\pazocal{M}}\in\mathbb{R}^{(\delta_r+2\delta_s)\times (\overline{r}+2\overline{s})}$ is a constant matrix encoding  $(\delta_r+2\delta_s)$   constraints on $\bm{\overline{\omega}}$, specifying which elements are restricted to zero, and the matrix function $\bm{\overline{C}}_2^{\pazocal{M}}(\bm{\overline{\omega}})\in \mathbb{R}^{\delta_d \times \overline{d}}$ encodes $\delta_d$ equality constraints on $\bm{\overline{g}}$ for any given $\bm{\overline{\omega}}$, with  $\delta_r=\overline{r}-r$, $\delta_s=\overline{s}-s$,  and $\delta_d=\overline{d}-d$; see Section S7.3 in the supplementary file for detailed definitions of $\bm{\overline{C}}_1^{\pazocal{M}}$ and $\bm{\overline{C}}_2^{\pazocal{M}}( \cdot)$. In particular, increasing $p$ by one amounts to deleting a particular row from the constraint matrix $\bm{\overline{C}}_2^{\pazocal{M}}( \cdot)$. On the other hand, increasing $r$ (or $s$) by one is equivalent to deleting a particular row (or a pair of rows) from both $\bm{\overline{C}}_1^{\pazocal{M}}$ and $\bm{\overline{C}}_2^{\pazocal{M}}( \cdot)$.

Note that $\bm{\overline{C}}_2^{\pazocal{M}}(\cdot)$ cannot reduce to a constant matrix independent of $\bm{\overline{\omega}}$ except in the special cases where $p=\overline{p}-1$ or $r=s=0$. In particular, when $p=\overline{p}-1$,   the second equation in \eqref{eq:Cmats}   is essentially the reducibility  condition of $\overline{p}$, which resembles that for $p^*$ in Assumption \ref{assum:irred}(i). However, in general, this equation represents  much more intricate constraints, since $\bm{\overline{C}}_2^{\pazocal{M}}(\cdot)$ is a nonlinear function.
The complexity of this form can be understood from two perspectives. First, due to the nonlinearity of  model  \eqref{eq:model-scalar} in $\bm{\omega}$, the effect of any underspecification in $r$ or $s$ will be highly nonlinear. Second,  the order $p$ plays a special role in the definition of $\ell_{h,k}(\cdot)$'s as it is involved in $\mathbb{I}_{\{h\geq p+1\}}\lambda_j^{h-p}$ and $\mathbb{I}_{\{h\geq p+1\}}\gamma_{m}^{h-p}$; see \eqref{eq:linearcomb2}. Then, whenever $p\neq p^*$, the exponent $h-p$  will differ from that under $\pazocal{M}^*$ for all lags  $h\geq p+1$, thereby affecting all $\ell_{h,k}(\cdot)$'s.  Consequently, due to the interplay between $p$ and  $\ell_{h,k}(\cdot)$'s, an  underspecification in $p$ generally will also have a nonlinear effect.

Let $\bm{\Gamma}_{\pazocal{M}}=\{ \bm{\overline{\omega}}\in (-1,1)^{\overline{r}} \times \bm{\varPi}^{\overline{s}}, \; \bm{\overline{g}} \in\mathbb{R}^{N^2 \overline{d}}: \bm{\overline{C}}_1^{\pazocal{M}} \bm{\overline{\omega}}=\bm{0} \text{ and }
(\bm{\overline{C}}_2^{\pazocal{M}}( \bm{\overline{\omega}})  \otimes \bm{I}_{N^2}) \bm{\overline{g}}=\bm{0} \}$ denote the restricted parameter space for any candidate model $\pazocal{M}$. By leveraging \eqref{eq:Cmats}, we can characterize the minimum difference  between the true model and the approximated model of orders $\pazocal{M}\in \mathcal{M}_{\textup{mis}}$ via the quantity $\delta_{\pazocal{M}}:=\kappa_1 \inf_{(\bm{\omega}, \bm{g})\in \bm{\Gamma}_{\pazocal{M}}} \|(\bm{L}(\bm{\omega})\otimes \bm{I}_{N^2})\bm{g}-\bm{a}^*\|_2^2$; see Proposition S1 and the proof of Theorem \ref{thm:selection} in Section S7 of the supplementary file for details. We may regard $\delta_{\pazocal{M}}$ as the signal strength of the misspecification. The following assumption guarantees that $\delta_{\pazocal{M}}$ is large enough for the BIC to detect the misspecification.

\begin{assumption}[Minimum signal strength]\label{assum:signaljoint}
	(i) $\min_{\pazocal{M}\in \mathcal{M}_{\textup{mis}}}\delta_{\pazocal{M}} /N\gg    (T^{-1}\log N)^{1-q/2} \tau_N  \log T$; and (ii) $\max_{\pazocal{M}\in \mathcal{M}_{\textup{mis}}} \delta_{\pazocal{M}}^{-1}  |\widetilde{\mathbb{L}}_T(\widehat{\bm{\omega}}_{\pazocal{M}},\widehat{\bm{g}}_{\pazocal{M}})- \mathbb{E}\{\mathbb{L}_T(\bm{\omega}_{\pazocal{M}}^{\circ}, \bm{g}_{\pazocal{M}}^{\circ})\}|=o_p(1)$, where $(\bm{\omega}_{\pazocal{M}}^{\circ}, \bm{g}_{\pazocal{M}}^{\circ})$ is the minima of $\mathbb{E}\{\mathbb{L}_T(\bm{\omega}_{\pazocal{M}}, \bm{g}_{\pazocal{M}})\}$ over the parameter space $\bm{\omega}_{\pazocal{M}}\in (-1,1)^r\times \bm{\varPi}^s$ and $\bm{g}_{\pazocal{M}}\in\mathbb{R}^{N^2d}$.
\end{assumption}

Note that $\delta_{\pazocal{M}}/N$ can be viewed as the average level of misspecification across $N$ rows of the model equation. As mentioned earlier, we may let $\tau_N\equiv \tau$ under mild condition. Thus, the lower bound in Assumption \ref{assum:signaljoint}(i) tends to zero as $T\rightarrow\infty$. Assumption \ref{assum:signaljoint}(ii) requires that the empirical loss for any fitted misspecified model converges to some population loss at a rate faster than $\delta_{\pazocal{M}}$ as $T\rightarrow\infty$. Here the mispecified model with parameters $(\bm{\omega}_{\pazocal{M}}^{\circ}, \bm{g}_{\pazocal{M}}^{\circ})$  can be understood as the best approximation of the process $\{\bm{y}_t\}$ under the misspecification. Now we are ready to establish the consistency of the estimator $\widehat{\pazocal{M}}$.

\begin{theorem}\label{thm:selection}
	If the JE (or the RE) is used,  suppose that for any $\pazocal{M}\in\mathcal{M}_{\textup{over}}$, there is a subvector  $\bm{\widehat{\omega}}_{\pazocal{M}^*}\in(-1,1)^{r^*}\times\bm{\varPi}^{s^*}$  of $\bm{\widehat{\omega}}_{\pazocal{M}}$ (or $\bm{\widehat{\omega}}_{i,\pazocal{M}^*}\in(-1,1)^{r^*}\times\bm{\varPi}^{s^*}$  of $\bm{\widehat{\omega}}_{i,\pazocal{M}}$ with $1\leq i\leq N$) such that  $\|\bm{\widehat{\omega}}_{\pazocal{M}^*} - \bm{\omega}^*\|_2\leq c_{\bm{\omega}}$ 
	(or $\|\bm{\widehat{\omega}}_{i,\pazocal{M}^*} - \bm{\omega}^*\|_2\leq c_{i,\bm{\omega}}$ with $1\leq i\leq N$), and the conditions in  Theorem \ref{thm:lasso} (or \ref{thm:lassorow}) hold with $\pazocal{M}=\pazocal{M}^*$.
	In addition, suppose that $\overline{\pazocal{M}}$ is fixed, with $\overline{p}\geq p^*, \overline{r}\geq r^*$ and $\overline{s}\geq s^*$. Under Assumptions \ref{assum:irred}--\ref{assum:signaljoint}, $\mathbb{P}( \widehat{\pazocal{M}} =\pazocal{M}^*) \to 1$ as $N,T\rightarrow\infty$.
\end{theorem}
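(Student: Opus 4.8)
\noindent\emph{Proof plan.} The strategy is to rule out every competitor of $\pazocal{M}^*$. Since $\overline{\pazocal{M}}$ is fixed, $\mathcal{M}$ is a finite set and $\mathcal{M}=\{\pazocal{M}^*\}\cup\mathcal{M}_{\textup{over}}\cup\mathcal{M}_{\textup{mis}}$, so it suffices to show that, for each fixed $\pazocal{M}\in\mathcal{M}_{\textup{over}}$ and each fixed $\pazocal{M}\in\mathcal{M}_{\textup{mis}}$, $\textup{BIC}(\pazocal{M})>\textup{BIC}(\pazocal{M}^*)$ with probability tending to one as $N,T\to\infty$, and then union over the $O(1)$ candidates. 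Writing $\textup{BIC}(\pazocal{M})-\textup{BIC}(\pazocal{M}^*)=\log\{\widetilde{\mathbb{L}}_T(\widehat{\bm{\omega}}_{\pazocal{M}},\widehat{\bm{g}}_{\pazocal{M}})/\widetilde{\mathbb{L}}_T(\widehat{\bm{\omega}}_{\pazocal{M}^*},\widehat{\bm{g}}_{\pazocal{M}^*})\}+(\text{penalty difference})$, the common input is that the fitted loss at the true orders sits at the noise floor: by the hypothesis that the conditions of Theorem \ref{thm:lasso} (or \ref{thm:lassorow}) hold with $\pazocal{M}=\pazocal{M}^*$, the prediction-error bound there together with the usual basic-inequality/deviation bookkeeping, and the negligibility of the initialization effect (Lemmas S6--S8 in the supplement), $\widetilde{\mathbb{L}}_T(\widehat{\bm{\omega}}_{\pazocal{M}^*},\widehat{\bm{g}}_{\pazocal{M}^*})$ lies within $O_p(\eta_T^{2-q}R_q/\kappa_1^{1-q})$ of $T^{-1}\sum_{t=1}^T\|\bm{\varepsilon}_t\|_2^2$, which concentrates around $\trace(\bm{\Sigma}_\varepsilon)$; in particular it is $\gtrsim\lambda_{\min}(\bm{\Sigma}_\varepsilon)N$ and $\lesssim\lambda_{\max}(\bm{\Sigma}_\varepsilon)N$ with high probability.

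For $\pazocal{M}\in\mathcal{M}_{\textup{over}}$ the penalty strictly increases: $p\ge p^*,r\ge r^*,s\ge s^*$ with at least one strict inequality, so $d\ge d^*+1$ and $\log\{N(p\vee1)\}\ge\log\{N(p^*\vee1)\}$, giving a penalty gap of at least $\tau_N[T^{-1}\log\{N(p^*\vee1)\}]^{1-q/2}\log T\gtrsim\tau_N(T^{-1}\log N)^{1-q/2}\log T$. It remains to show the log-loss term is negligible against this. Expanding $\widetilde{\mathbb{L}}_T(\widehat{\bm{\omega}}_{\pazocal{M}},\widehat{\bm{g}}_{\pazocal{M}})$ around $\bm{\varepsilon}_t$, using the $\ell_1$-basic inequality of the regularized problem and the empirical-process (deviation) control of the cross terms with $\bm{\varepsilon}_t$ from the proofs of Theorems \ref{thm:lasso}--\ref{thm:lassorow}, one shows that $\widetilde{\mathbb{L}}_T(\widehat{\bm{\omega}}_{\pazocal{M}},\widehat{\bm{g}}_{\pazocal{M}})$ too lies within $O_p(\eta_T^{2-q}R_q/\kappa_1^{1-q})$ of $T^{-1}\sum_t\|\bm{\varepsilon}_t\|_2^2$; here the hypothesis that a subvector $\widehat{\bm{\omega}}_{\pazocal{M}^*}$ of $\widehat{\bm{\omega}}_{\pazocal{M}}$ lies within $c_{\bm{\omega}}$ of $\bm{\omega}^*$, together with $\overline{\pazocal{M}}$ being fixed, keeps all order-dependent quantities under control (the true coefficients being representable in the $\pazocal{M}$-parameterization via \eqref{eq:Cmats} and still weakly sparse). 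Hence $\widetilde{\mathbb{L}}_T(\widehat{\bm{\omega}}_{\pazocal{M}},\widehat{\bm{g}}_{\pazocal{M}})/\widetilde{\mathbb{L}}_T(\widehat{\bm{\omega}}_{\pazocal{M}^*},\widehat{\bm{g}}_{\pazocal{M}^*})\ge 1-O_p(\eta_T^{2-q}R_q/(\kappa_1^{1-q}\lambda_{\min}(\bm{\Sigma}_\varepsilon)N))$, so by $\log(1+u)\ge u/(1+u)$ the log-loss term is $\ge-O_p(\eta_T^{2-q}R_q/(\kappa_1^{1-q}\lambda_{\min}(\bm{\Sigma}_\varepsilon)N))$. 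By Assumption \ref{assum:penalty} and the form of $\eta_T$, this lower bound matches the penalty gap up to the extra factor $\log T$, so the penalty gap dominates for $T$ large and $\textup{BIC}(\pazocal{M})>\textup{BIC}(\pazocal{M}^*)$.

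For $\pazocal{M}\in\mathcal{M}_{\textup{mis}}$ the penalty may decrease, but by no more than the penalty at $\pazocal{M}^*$, which is $\lesssim\tau_N(T^{-1}\log N)^{1-q/2}\log T$ since $\overline{\pazocal{M}}$ is fixed. Against this, any misspecified fit must pay for the signal it cannot represent. Using the reparameterization of every candidate onto the common model $\overline{\pazocal{M}}$ and the description of $\bm{\Gamma}_{\pazocal{M}}$ through the constraint matrix $\bm{\overline{C}}_1^{\pazocal{M}}$ and the matrix function $\bm{\overline{C}}_2^{\pazocal{M}}(\cdot)$, Proposition S1 shows the population loss of the best approximation in $\bm{\Gamma}_{\pazocal{M}}$ exceeds that of the truth by $\gtrsim\delta_{\pazocal{M}}$ (the curvature lower bound being precisely the constant $\kappa_1$ already built into $\delta_{\pazocal{M}}$); Assumption \ref{assum:signaljoint}(ii) then transfers this gap to the empirical loss, $\widetilde{\mathbb{L}}_T(\widehat{\bm{\omega}}_{\pazocal{M}},\widehat{\bm{g}}_{\pazocal{M}})\ge\mathbb{E}\{\mathbb{L}_T(\bm{\omega}^*,\bm{g}^*)\}+c\,\delta_{\pazocal{M}}-o_p(\delta_{\pazocal{M}})$, while the first paragraph gives $\widetilde{\mathbb{L}}_T(\widehat{\bm{\omega}}_{\pazocal{M}^*},\widehat{\bm{g}}_{\pazocal{M}^*})\le\mathbb{E}\{\mathbb{L}_T(\bm{\omega}^*,\bm{g}^*)\}+O_p(\eta_T^{2-q}R_q/\kappa_1^{1-q}+\sqrt{N/T})$. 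Consequently $\widetilde{\mathbb{L}}_T(\widehat{\bm{\omega}}_{\pazocal{M}},\widehat{\bm{g}}_{\pazocal{M}})/\widetilde{\mathbb{L}}_T(\widehat{\bm{\omega}}_{\pazocal{M}^*},\widehat{\bm{g}}_{\pazocal{M}^*})\ge 1+c'\delta_{\pazocal{M}}/(\lambda_{\max}(\bm{\Sigma}_\varepsilon)N)-o_p(1)$, so the log-loss term is $\gtrsim\delta_{\pazocal{M}}/N$, which by Assumption \ref{assum:signaljoint}(i) strictly dominates the worst-case penalty decrease (and also absorbs the sampling fluctuation of the noise floor); thus $\textup{BIC}(\pazocal{M})>\textup{BIC}(\pazocal{M}^*)$. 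If the RE is used the only change is that $\widetilde{\mathbb{L}}_T=\sum_{i=1}^N\widetilde{\mathbb{L}}_{i,T}$, so every estimate above is obtained rowwise (invoking Theorem \ref{thm:lassorow}, Proposition \ref{prop:perturbrow}, and the rowwise local hypotheses $\|\widehat{\bm{\omega}}_{i,\pazocal{M}^*}-\bm{\omega}^*\|_2\le c_{i,\bm{\omega}}$) and summed over $i$ using $R_q=\sum_{i=1}^N R_{i,q}$.

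The step I expect to be the main obstacle is the misspecified case, specifically establishing $\delta_{\pazocal{M}}>0$ with the required magnitude uniformly over $\mathcal{M}_{\textup{mis}}$: there is no monotone nesting in $(p,r,s)$, and because $p$ enters the decay exponents $h-p$ in the $\ell_{h,k}(\cdot)$'s, underspecifying $p$ perturbs all $\ell_{h,k}(\cdot)$'s and is therefore as nonlinear in its effect as underspecifying $r$ or $s$. Handling this forces the detour through $\overline{\pazocal{M}}$ and a careful analysis of the set $\bm{\Gamma}_{\pazocal{M}}$ cut out by the linear matrix $\bm{\overline{C}}_1^{\pazocal{M}}$ and the nonlinear matrix function $\bm{\overline{C}}_2^{\pazocal{M}}(\cdot)$; it is precisely here that Assumption \ref{assum:irred} (irreducibility of $p^*$) and the nonvanishing of $\lambda_j^*,\gamma_m^*$ and $\underline{\alpha}_\ma$ (irreducibility of $r^*,s^*$) are used to guarantee $\delta_{\pazocal{M}}>0$ for every $\pazocal{M}\in\mathcal{M}_{\textup{mis}}$. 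A secondary but more routine task is the uniform control of the initialization discrepancy $\widetilde{\mathbb{L}}_T-\mathbb{L}_T$ across all candidate orders, which is handled with Lemmas S6--S8.
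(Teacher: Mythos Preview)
Your proposal is correct and follows essentially the same route as the paper: split $\mathcal{M}\setminus\{\pazocal{M}^*\}$ into $\mathcal{M}_{\textup{over}}$ and $\mathcal{M}_{\textup{mis}}$, show that for overspecified models the fitted loss stays within $O_p(N\widetilde{\varphi}_{T,\pazocal{M}})$ of the noise floor (via the adapted Theorem~\ref{thm:lasso}/\ref{thm:lassorow} argument under the subvector hypothesis) so the penalty gap $\asymp\widetilde{\varphi}_{T,\pazocal{M}^*}\log T$ wins, and for misspecified models bound the population loss gap below by $\delta_{\pazocal{M}}$ through the $\overline{\pazocal{M}}$-reparameterization and $\bm{\Gamma}_{\pazocal{M}}$, then invoke Assumption~\ref{assum:signaljoint} to make $\delta_{\pazocal{M}}/N$ dominate the penalty difference. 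The paper organizes the same comparison through the explicit decomposition $D_{\pazocal{M}}=D_{\pazocal{M},1}-D_{\pazocal{M}^*,2}+D_{\pazocal{M},3}$ and uses the elementary inequalities $\log(1+x)\ge\min\{x/2,\log 2\}$ and $\log(1+x)\le x$ in place of your $\log(1+u)\ge u/(1+u)$, but these are cosmetic differences; your identification of the misspecified case---and the need to pass through $\overline{\pazocal{M}}$ because $p$ enters the exponents $h-p$---as the crux is exactly right.
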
 

\section{Simulation experiments}\label{sec:sim}
In this section, we present two simulation experiments to verify the estimation error rates of the JE and the consistency of the BIC. Four additional  experiments on the estimation error  of the RE, its comparison with the JE, sensitivity analysis of the initialization for $\{\bm{y}_t, t\leq 0\}$, and comparison of the proposed estimators with competing approaches  are provided in Section S2 of the supplementary file.

Throughout this section, we generate $\{\bm{y}_t\}$ from model \eqref{eq:model-scalar}, where $\{\bm{\varepsilon}_t\}$ are generated independently from $N(\bm{0}, \sigma^2\bm{I}_N)$ with $\sigma=0.2$, and each $\bm{G}_k$ is exactly sparse with $cN$ nonzero entries for $1\leq k\leq d$, so the overall sparsity level is $R_0= cdN$. We generate $\{\bm{G}_k\}_{k=1}^d$ by drawing their nonzero entries independently from the uniform distribution on $[-0.5,0.5]$.  Then, to ensure the  stationarity of $\{\bm{y}_t\}$, after setting $\bm{\omega}$, we rescale all $\bm{G}_k$'s by a common factor such that $\rho(\underline{\bm{G}}_1)+\bar{\rho} \sum_{k=1}^{r+2s}\rho(\bm{G}_{p+k}) / (1-\bar{\rho})=0.8$; see Theorem \ref{thm:stationary}. 

\begin{figure}[t]
	\centering
	\includegraphics[width=0.9\textwidth]{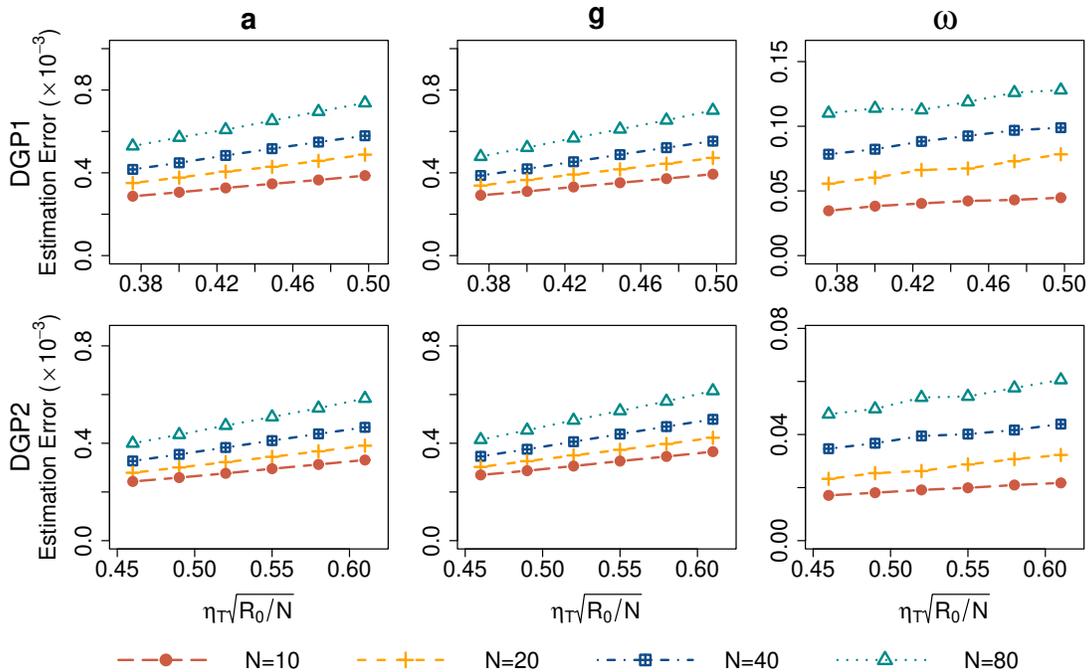}
	\caption{Plots of scaled estimation errors   $\|\bm{\widehat{a}}-\bm{a}^*\|_2 /\sqrt{N}$ (left panel), $\|\bm{\widehat{g}}-\bm{g}^*\|_2 /\sqrt{N}$ (middle panel), and $\underline{\alpha}_\ma \|\bm{\widehat{\omega}}-\bm{\omega}^*\|_2 /\sqrt{N}$ (right panel) against theoretical rate $\eta_T\sqrt{R_0/N}$ for JE. }
	\label{fig:exp1}
\end{figure}

In the first experiment, we examine the estimation error rates for the JE.   Two data generating processes are considered: $(p,r,s)=(1,1,0)$ (DGP1) and $(1,0,1)$ (DGP2), where $\lambda_1 = -0.6$ for DGP1, and $(\gamma_1,\theta_1)=(0.6, \pi/4)$ for DGP2. We let all $\bm{G}_k$'s be row-sparse matrices with three nonzero entries in each row, i.e., $R_0=3dN$, where $N=10, 20, 40$ or 80. Note that by Theorem \ref{thm:lasso}, we have $\|\bm{\widehat{a}}-\bm{a}^*\|_2 /\sqrt{N}\lesssim \eta_T\sqrt{R_0/N}$,
$\|\bm{\widehat{g}}-\bm{g}^*\|_2 /\sqrt{N} \lesssim \eta_T\sqrt{R_0/N}$, and $\underline{\alpha}_\ma \|\bm{\widehat{\omega}}-\bm{\omega}^*\|_2 /\sqrt{N} \lesssim \eta_T\sqrt{R_0/N}$, where $\eta_T=\sqrt{T^{-1}\log N}$.
To verify these bounds, we choose a grid of equally spaced values for the theoretical rate $\eta_T\sqrt{R_0/N}=\sqrt{3 T^{-1}d\log N}$ within the range of  $\mathcal{I}_1=[0.3756, 0.4981]$ for DGP1 and $\mathcal{I}_2=[0.46,0.61]$ for DGP2. Then we compute $T$ given the theoretical rate, $N$ and $d$. The selected ranges $\mathcal{I}_1$ and $\mathcal{I}_2$ lead to the same range of $T$ for both DGPs under any $N$; i.e., the ranges of the x-axis in Figure \ref{fig:exp1} are set such that the corresponding points in upper and lower panels share the same $T$. Across all settings, $T$ falls in the range of $[55,186]$. Figure \ref{fig:exp1} plots the scaled  estimation errors $\|\bm{\widehat{a}}-\bm{a}^*\|_2 /\sqrt{N}$, $\|\bm{\widehat{g}}-\bm{g}^*\|_2 /\sqrt{N}$, and $\underline{\alpha}_\ma \|\bm{\widehat{\omega}}-\bm{\omega}^*\|_2 /\sqrt{N}$, averaged over 500 replications, against the theoretical rate $\eta_T\sqrt{R_0/N}$. An approximately linear relationship  can be observed across all settings, confirming our theoretical results.

In the second experiment, we verify the consistency of the proposed BIC. Three cases of true model orders are considered: $(p^*,r^*,s^*)=(0,0,1)$, $(0,1,1)$, and $(1,0,1)$, referred to as DGPs 1, 2, and 3, respectively. We set $N=40$, $\theta_1=\pi/4$, and $\lambda_{1}=-\gamma_1=\bar{\rho}$, where three choices of the decay rate are considered: $\bar{\rho}\in\{0.45,0.5,0.5\}$. For $1\leq k\leq d$, each $\bm{G}_k$ contains $3N$ nonzero entries, so $R_0=3dN$, but unlike the first experiment, we do not restrict each row of $\bm{G}_k$ to have exactly three nonzero entries. We set $\tau=0.05$ and $\overline{p}=\overline{r}=\overline{s}=9$; the results are found to be unchanged if the maximum orders are 3. Figure \ref{fig:exp2} displays the proportion of  correct order selection based on 500 replications for each setting, with the models fitted by the JE; the results for the RE are very similar and hence omitted. It shows that the BIC generally performs better as $T$ or $\bar{\rho}$ increases, and the proportion of  correct order selection eventually becomes close to one with sufficiently large $T$. Thus, the consistency of the BIC is verified. Additionally, the required sample size for achieving accurate order selection follows this order among the three DGPs: DGP1 $<$ DGP3 $<$ DGP2. To understand this, first note that $R_0=6N, 9N$, and $9N$ for DGPs 1, 2, and 3, respectively. Thus, the estimation accuracy is highest for DGP1, and so is the order selection accuracy. Moreover, since DGP2 has a more complex temporal structure than DGP3, it leads to greater challenges in estimating $\bm{\omega}$ and, consequently, in order selection.

\begin{figure}[t]
	\centering
	\includegraphics[width=0.9\textwidth]{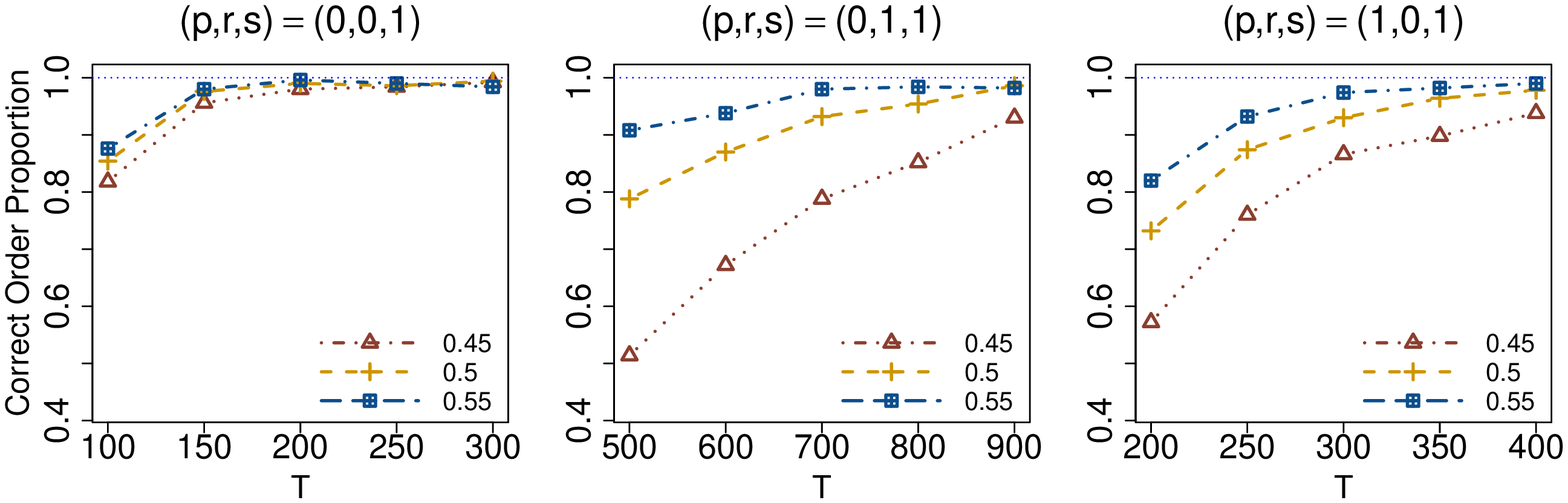}
	\caption{Proportion of correct model order selection for three DGPs and three choices of decay rates, $\bar{\rho}\in\{0.45,0.5,0.55\}$.}
	\label{fig:exp2}
\end{figure}

\section{Empirical analysis}\label{sec:empirical}
We analyze  $N=20$ quarterly macroeconomic variables of the United States from the first quarter of 1969 to the fourth quarter  of 2007. These are key economic and financial indicators collected by \cite{Koop13}, seasonally adjusted as needed. We conduct the transformations following \cite{Koop13} to make all series stationary, resulting in a sample of length $T=194$.  Then  each series is normalized to have zero mean and unit variance; see Table S1 in the supplementary file for detailed descriptions of the twenty variables.

\begin{figure}[t]
	\centering
	\includegraphics[width = 0.9\textwidth]{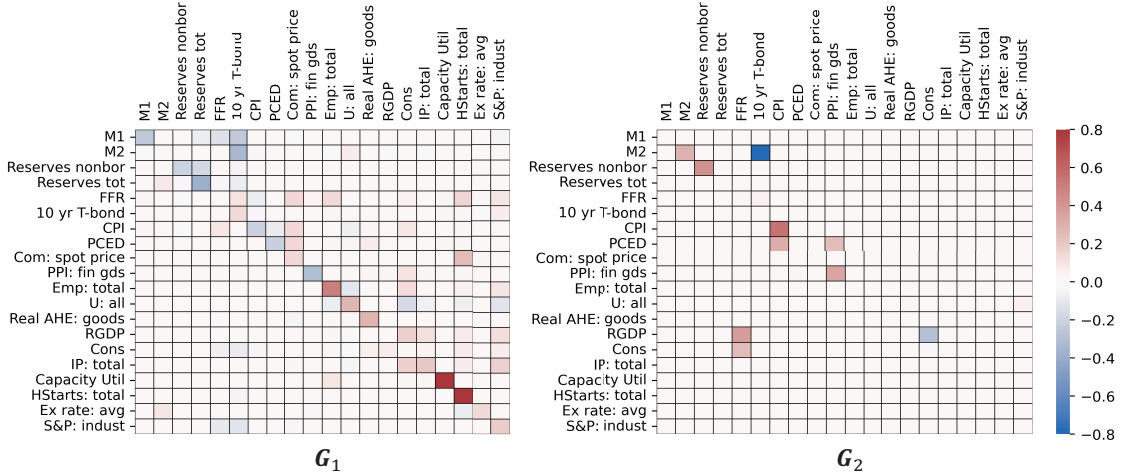}
	\caption{Estimates of $\bm{G}_1$ and $\bm{G}_2$ for the proposed model based on JE.}
	\label{fig:Gmat}
\end{figure}

We first fit the proposed model  to the entire dataset. Using the JE and the proposed BIC, we select $(p,r,s)=(1,1,0)$, so $d=2$, and the fitted model is 
$\bm{y}_t=\bm{\widehat{G}}_1\bm{y}_{t-1}+  \sum_{h=2}^\infty  (-0.45)^{h-1} \bm{\widehat{G}}_{2} \bm{y}_{t-h}+\bm{\varepsilon}_t$, 
where $\bm{\widehat{G}}_1$ and $\bm{\widehat{G}}_2$ are displayed in Figure \ref{fig:Gmat}; the estimation results based on the RE are roughly similar and provided in the supplementary file. The stationarity of the model is confirmed by the method in Remark \ref{remark:stat}. As discussed in Section \ref{subsec:model}, $\bm{\widehat{G}}_1$ and   $\bm{\widehat{G}}_2$ captures lag-one (or short-term) and higher-lag (or long-term) dependence, respectively. Note that $\bm{\widehat{G}}_1$ is much denser than $\bm{\widehat{G}}_2$, suggesting that many dynamic interactions are short-term. However, most of the nonzero entries in  $\bm{\widehat{G}}_2$ are fairly large in absolute value, supporting the necessity of a VARMA-type model.  For the Granger causal (GC) interpretation, take the model equation for real GDP (RGDP) as an example:
\begin{align*}
	y_{\text{RGDP}, t} & = 0.17 y_{\text{Cons}, t-1}+ 0.11 y_{\text{IP:total}, t-1} + 	0.07 y_{\text{HStarts:total}, t-1}  + 0.12 y_{\text{S\&P:indust}, t-1} \\
	&\hspace{5mm}+ \sum_{h=2}^\infty  (-0.45)^{h-1} (0.39 y_{\text{FFR}, t-h} - 0.30 y_{\text{Cons}, t-h}  )+\varepsilon_{\text{RGDP},t}, 
\end{align*}
suppressing other lag-one terms with coefficients less than 0.014 in absolute value  for brevity. The above equation indicates that five time series are GC for RGDP and can be categorized as follows: (1) the industrial production index (IP: total), housing starts (HStarts: total), and S\&P stock price index (S\&P: indust) only have short-term influence on RGDP; (2) the federal funds rate (FFR) only has long-term influence on RGDP; (3) the real personal consumption expenditures (Cons) has both short-term and long-term influence on RGDP. For other insights from the estimation results, see Section S3 in the supplementary file for more discussions.

Next we evaluate the forecasting performance via a rolling procedure: First set the forecast origin to $t=166$  (Q4-2000). For each $k=1,\dots, 28$, fit the  model using the data  of $1\leq t\leq  T_{\textrm{train}}=165+k$, and then compute the one-step ahead forecast for $t=166+k$. Thus, rolling forecasts over the period of Q1-2001 to Q4-2007 are obtained. We measure the  forecast error  by $\|\bm{\widehat{y}}_t-\bm{y}_t\|_2$; our findings based on the $\ell_1$-norm are similar and hence are omitted. For the proposed model, we consider both JE and RE, and implement them using a fixed  regularization parameter $\lambda_{g}$  throughout the forecasting period. 
Five other competing approaches are considered as follows:
\begin{itemize}[itemsep=2pt,parsep=2pt,topsep=4pt,partopsep=2pt]
	\item [(i)] VAR OLS: As a low-dimensional baseline, we consider the VAR($4$) model fitted via the OLS method, where the lag order $4$ is employed following \cite{Koop13}.
	\item [(ii)] VAR Lasso: Since the VAR($\infty$) model can be approximated by the VAR($P$) with $P\rightarrow\infty$ as $T\rightarrow\infty$, we fit the sparse VAR($P$) model via the Lasso with $P=\lfloor 1.5\sqrt{T_{\textrm{train}}}\rfloor$  following the first-stage estimation in \cite{WBBM21}.
	
	\item[(iii)] VAR HLag: Same as (ii) except that the hierarchical lag (HLag) regularization in \cite{NWBM20} is used instead of the $\ell_1$-regularization.
	
	\item [(iv)] VARMA $\ell_1$: Sparse VARMA($p,q$) \citep{WBBM21} with the $\ell_1$-regularization for the second stage and $p=q=\lfloor 0.75\sqrt{T_{\textrm{train}}}\rfloor$ as in the above paper.
	
	\item [(v)] VARMA HLag: Same as (iv) except that the HLag regularization is used at the second stage.
\end{itemize}

We implement (ii)--(v) by the R package \texttt{bigtime} which offers two regularization parameter selection methods, cross validation (CV) and BIC. We observe that neither one of these two  methods uniformly outperforms the other throughout the forecasting period. To better ensure the competitiveness of (ii)--(v), we obtain the forecast errors under both CV and BIC and only report the smaller value for each rolling step. 

The average forecast error over the entire forecast period is 5.367, 4.307,	4.069, 4.318, 4.144, 3.971,	and 3.968 for VAR OLS, VAR Lasso, VAR HLag, VARMA $\ell_1$, VARMA HLag, SPVAR($\infty$) JE, and SPVAR($\infty$) RE, respectively. Among the 28 rolling steps, each of these approaches performs best 4, 4, 0, 2, 2, 10, and 6 times, respectively. Thus, based on these measures, SPVAR($\infty$) has the highest  overall forecast accuracy among all models, and the performance of JE and RE are very similar; see Table S2 in the supplementary file for the forecast errors of all seven methods for each rolling step. Moreover, to check whether the advantage of the SPVAR($\infty$)-based forecasts is statistically significant, we conduct the model confidence set (MCS) procedure of \cite{HLN2011} implemented by the  R package \texttt{MCS}. We find that based on either the  Tmax or TR statistic, the 97.5\% MCS only includes  SPVAR($\infty$) JE and SPVAR($\infty$) RE, confirming that the   proposed model indeed  outperforms the competing ones in terms of forecasting   for the data.

\section{Conclusion and discussion}\label{sec:conclusion}

This paper develops the SPVAR($\infty$) model as a tractable variant of the VARMA model for high-dimensional time series. It overcomes the drawbacks in identification, computation, and interpretation of the latter, while greater statistical efficiency and Granger causal interpretations are achieved by imposing sparsity  on the parameter matrices capturing the cross-sectional dependence. To the best of our knowledge, it is the first high-dimensional sparse VARMA- or VAR($\infty$)-type model with all of the above advantages.

There is a vast literature on nonlinear and nonstationary VAR models \citep[e.g.,][]{KMS16, ZW21}, factor-augmented VAR \citep{MPS22}, and other extensions. The method in this paper can be extended to develop corresponding VAR($\infty$) counterparts; e.g., \eqref{eq:model-scalar} can be extended to the nonlinear model:
$\bm{y}_t=f(\bm{x}_t^{[1]}, \dots, \bm{x}_t^{[d]})+\bm{\varepsilon}_t$,
where 
$\bm{x}_t^{[k]}=\sum_{h=1}^\infty\ell_{h,k}(\bm{\omega}) \bm{y}_{t-h}$ for $1\leq k\leq d$ parsimoniously summarize the temporal information over all lags into $d$ predictors. Other interesting extensions include imposing group sparsity on $\bm{G}_k$'s to capture group-wise homogeneity \citep{BSM15}, extending $\ell_{h,k}(\bm{\omega})$'s to polynomial decay functions for long-memory time series \citep{Chung02}, and incorporating dynamic factor structures \citep{Wang2021High}.  Lastly,  it is important to study the high-dimensional statistical inference under the proposed model, e.g., hypothesis testing for Granger causality \citep{CHHW21,BGS22}.




\putbib[SparseSARMA]
\end{bibunit}

\newpage
\renewcommand{\arraystretch}{0.85}

\begin{bibunit}[apalike]
\vspace*{10pt}	
\begin{center}
{\Large \bf Supplementary Material: An Interpretable and Efficient Infinite-Order Vector Autoregressive Model for High-Dimensional Time Series}
\end{center}
\vspace{10pt}

\begin{abstract}
This supplementary file is organized into eight sections. Section \ref{section:algo} presents the algorithms for the proposed estimators. Section \ref{asec:sim} provides four additional simulation experiments, while Section \ref{asec:emp} offers more details for the empirical example discussed in the main paper. Sections  \ref{sec:proofmain}--\ref{asec:bic} contain the proofs of (1) Proposition \ref{prop:VARMA} and Theorem \ref{thm:stationary}, (2) Proposition \ref{prop:perturb} and Theorem \ref{thm:lasso}, (3) Proposition \ref{prop:perturbrow} and Theorem \ref{thm:lassorow}, and  Theorem \ref{thm:selection}, respectively. Finally, Section \ref{asec:aux} provides the proofs of all auxiliary lemmas.
\end{abstract}	

\renewcommand{\thesection}{S\arabic{section}}
\renewcommand{\thesubsection}{S\arabic{section}.\arabic{subsection}}
\renewcommand{\theequation}{S\arabic{equation}}
\renewcommand{\thetable}{S\arabic{table}}
\renewcommand{\thefigure}{S\arabic{figure}}
\renewcommand{\thelemma}{S\arabic{lemma}}
\renewcommand{\theproposition}{S\arabic{proposition}}
\setcounter{lemma}{0}
\setcounter{section}{0}

\section{Algorithm and implementation}\label{section:algo}
\subsection{Block coordinate descent algorithms}\label{subsec:algo}
We present the block coordinate descent algorithms for implementing the proposed estimators in this section.

First consider the JE in Section \ref{subsec:JE}. Observe that if  $\bm{\omega}$ is given, then the optimization problem in \eqref{eq:lasso} will simply become the $\ell_1$-regularized least squares optimization for multivariate linear regression, which can be efficiently solved by the proximal gradient descent  (i.e., iterative soft-thresholding)  algorithm \citep{ANW2012}. On the other hand, if $\bm{g}$ is given, we can rewrite $\widetilde{\mathbb{L}}_T(\bm{\omega},\bm{g})$ in the form of
\begin{equation}\label{eq:sepa}
	\widetilde{\mathbb{L}}_T(\bm{\omega}) = \frac{1}{T}
	\sum_{t=1}^{T} \Big\| \bm{y}_t^- - \sum_{j=1}^r  F_t^I(\lambda_{j}) - \sum_{m=1}^s F_t^{II}(\bm{\eta}_{m}) \Big\|_2^2,
\end{equation}
where  $F_t^I(\lambda_{j}) =\bm{G}_{p+j} f^I(\bm{\widetilde{x}}_t; \lambda_{j})$, $F_t^{II}(\bm{\eta}_{m})=  \sum_{\iota=1}^{2}\bm{G}_{p+r+2(m-1)+\iota} f^{II,\iota}(\bm{\widetilde{x}}_t; \bm{\eta}_{m})$, and
$\bm{y}_t^{-}=\bm{y}_t - \sum_{k=1}^{p}\bm{G}_k\bm{y}_{t-k}$, with $\bm{\widetilde{x}}_{t}= (\bm{y}_{t-1}^\top,\dots,\bm{y}_1^\top,0,0,\dots)^\top$ being the initialized version of the infinite-dimensional vector $\bm{x}_t=(\bm{y}_{t-1}^\top,\bm{y}_{t-2}^\top,\dots)^\top$. Here, $f^I(\bm{\widetilde{x}}_t; \lambda_{j})=\sum_{h=p+1}^{t-1}\lambda_j^{h-p} \bm{y}_{t-h}$, $f^{II,1}(\bm{\widetilde{x}}_t; \bm{\eta}_{m})=\sum_{h=p+1}^{t-1} \gamma_m^{h-p} \cos\{(h-p) \theta_m\} \bm{y}_{t-h}$, and $f^{II,2}(\bm{\widetilde{x}}_t; \bm{\eta}_{m})=\sum_{h=p+1}^{t-1} \gamma_m^{h-p} \sin\{(h-p) \theta_m\} \bm{y}_{t-h}$. Since each $\lambda_j$ or $\bm{\eta}_m$  appears in only one of the summands in \eqref{eq:sepa}, this  structure  allows for  acceleration via parallel implementation across $r+s$ machines. In addition, since each  $\lambda_j$ or $\bm{\eta}_m$ is only one- or two-dimensional, the computation cost of updating each $\lambda_j$ and $\bm{\eta}_m$ will be very low. 

\begin{algorithm}[t]
	\caption{Block coordinate descent algorithm for the JE}
	\label{alg:JE}
	\textbf{Input:}  model orders $(p,r,s)$, regularization parameter $\lambda_g$, initialization $\bm{\omega}^{(0)}$, $\bm{{g}}^{(0)}$, step length $\alpha$, constraint sets $\pazocal{C}_{\lambda}$, $\pazocal{C}_{\bm{\eta}}$.\\
	\textbf{repeat} $\iota=0,1,2,\dots$\\\vspace{2mm}
	\hspace*{5mm}\textbf{for} $j=1,\dots,r$:\\\vspace{2mm}
	\hspace*{10mm}$\displaystyle \lambda_j^{(\iota+1)} \leftarrow P_{\pazocal{C}_{\lambda}}\Big(\lambda_{j}^{(\iota)}-\alpha\times \nabla_{\lambda_{j}}\widetilde{\mathbb{L}}_{T}( \bm{\omega}^{(\iota)},\bm{g}^{(\iota)})\Big)$\\\vspace{2mm}
	\hspace*{5mm}\textbf{for} $m=1,\dots,s$:\\\vspace{2mm}
	\hspace*{10mm}	$\displaystyle\bm{\eta}_m^{(\iota+1)} \leftarrow P_{\pazocal{C}_{\bm{\eta}}}\Big(\bm{\eta}_m^{(\iota)}-\alpha\times \nabla_{\bm{\eta}_{m}}\widetilde{\mathbb{L}}_{T}(\bm{\omega}^{(\iota)},\bm{g}^{(\iota)})\Big)$\\\vspace{2mm}
	
	\hspace*{5mm}$\displaystyle \bm{g}^{(\iota+1)} \leftarrow S_{\alpha\lambda_g}\Big(\bm{g}^{(\iota)}-\alpha\times \nabla_{\bm{g}}\widetilde{\mathbb{L}}_{T}(\bm{\omega}^{(\iota+1)},\bm{g}^{(\iota)})\Big)$\\
	
	\textbf{until convergence}
\end{algorithm}

The above discussion motivates us to propose the block coordinate descent algorithm for the JE as displayed in Algorithm \ref{alg:JE}.  At each iteration, the following two steps are conducted: (S1) fixing $\bm{g}$, update  $\lambda_j$'s and $\bm{\eta}_m$'s by projected gradient descent; (S2) fixing $\bm{\omega}$, get the proximal gradient update of $\bm{g}$ via soft-thresholding. Both (S1) and (S2) can be implemented either successively or in parallel.
That is, in Algorithm \ref{alg:JE}, lines 3--6 can be realized on $r+s$ nodes, and the update of $\bm{g}$ in line 7 can be realized coordinate-wisely on $N^2d$ nodes.
In addition, since the projected  gradient descent requires the constraint set to be closed, we search $\lambda_j$ within  $\pazocal{C}_\lambda=[-1+\epsilon, 1-\epsilon]$ and $\bm{\eta}_m$ within $\pazocal{C}_{\bm{\eta}}= [0,1-\epsilon]\times[\epsilon,\pi-\epsilon]$, for a small $\epsilon>0$, e.g., $\epsilon=0.05$.  In Algorithm \ref{alg:JE},  $P_{\pazocal{C}}(\bm{x})=\argmin_{\bm{z}\in\pazocal{C}}\|\bm{x}-\bm{z}\|_2^2$ is the projection operator for any set $\pazocal{C}$, and   $S_{\tau}(\bm{z})$ is the soft-thresholding operator with coordinates
$[S_{\tau}(\bm{z})]_j=\textrm{sign}(z_j)\max\{|z_j|-\tau, 0\}$ for any threshold $\tau>0$.

\begin{algorithm}[t]
	\caption{Block coordinate descent algorithm for the RE}
	\label{alg:RE}
	\textbf{Input:}  model orders $(p,r,s)$, regularization parameter $\lambda_g$, initialization $\bm{\omega}_i^{(0)}=\bm{\omega}^{(0)}$ for $1\leq i\leq N$, $\bm{{g}}^{(0)}$, step length $\alpha$, constraint sets $\pazocal{C}_{\lambda}$, $\pazocal{C}_{\bm{\eta}}$.\\
	\textbf{for} $i=1,\dots,N$:\\\vspace{2mm}
	\hspace*{5mm}\textbf{repeat} $\iota=0,1,2,\dots$\\\vspace{2mm}
	\hspace*{10mm}\textbf{for} $j=1,\dots,r$:\\\vspace{2mm}
	\hspace*{15mm}$\displaystyle \lambda_{i,j}^{(\iota+1)} \leftarrow P_{\pazocal{C}_{\lambda}}\Big(\lambda_{i,j}^{(\iota)}-\alpha\times \nabla_{\lambda_{i,j}}\widetilde{\mathbb{L}}_{i,T}( \bm{\omega}_i^{(\iota)},\bm{g}_i^{(\iota)})\Big)$\\\vspace{2mm}
	\hspace*{10mm}\textbf{for} $m=1,\dots,s$:\\\vspace{2mm}
	\hspace*{15mm}	$\displaystyle\bm{\eta}_{i,m}^{(\iota+1)} \leftarrow P_{\pazocal{C}_{\bm{\eta}}}\Big(\bm{\eta}_{i,m}^{(\iota)}-\alpha\times \nabla_{\bm{\eta}_{i,m}}\widetilde{\mathbb{L}}_{i,T}(\bm{\omega}_i^{(\iota)},\bm{g}_i^{(\iota)})\Big)$\\\vspace{2mm}
	
	\hspace*{10mm}$\displaystyle \bm{g}_i^{(\iota+1)} \leftarrow S_{\alpha\lambda_g}\Big(\bm{g}_i^{(\iota)}-\alpha\times \nabla_{\bm{g}_i}\widetilde{\mathbb{L}}_{i,T}(\bm{\omega}_i^{(\iota+1)},\bm{g}_i^{(\iota)})\Big)$\\
	
	\hspace*{5mm}\textbf{until convergence}
\end{algorithm}

For the RE in Section \ref{subsec:RE}, a similar  block coordinate descent algorithm can be applied to each rowwise minimization \eqref{eq:lassorow}; see Algorithm \ref{alg:RE} for details. Here we denote $\lambda_{i,j}^{(\iota)}$ for $1\leq j\leq r$ and $\bm{\eta}_{i,m}^{(\iota)}$ for $1\leq m\leq s$ as the parameters in $\bm{\omega}_i^{(\iota)}$, where $1\leq i\leq N$, and $\iota$ is the iteration number.
Note that the $N$ rowwise minimizations can alternatively be implemented in parallel, allowing further acceleration. From our simulation studies in Sections \ref{subsec:compare} and  \ref{subsec:time}, we observe that the minimization for each individual row in Algorithm \ref{alg:RE}  tends to converge more quickly than the joint minimization in Algorithm \ref{alg:JE}. Nonetheless, the total computation time of  Algorithm \ref{alg:RE} across all $N$ rows tends to be higher than that of Algorithm \ref{alg:JE} if the $N$ rowwise minimizations are implemented successively rather than in parallel. In addition, especially when $N$ is relatively large, Algorithm \ref{alg:RE} is usually more stable than Algorithm \ref{alg:JE}, which is likely due to the weaker sparsity requirement for RE; see Section \ref{subsec:RE}.

\subsection{Algorithm initialization}\label{subsec:init}

We discuss the model parameter initialization  for  Algorithms \ref{alg:JE} and \ref{alg:RE} as follows. 
First, as shown in  Section \ref{sec:BIC}, the orders  $(p,r,s)$ can be selected by the proposed BIC. Meanwhile, for any fixed $(p,r,s)$, the corresponding optimal regularization parameter $\lambda_{g}$ can be selected using the high-dimensional BIC in \cite{WZ11}. Combining the two methods, we can select the model orders together with $\lambda_{g}$. 

Recall that the nonasymptotic error bounds in Theorems \ref{thm:lasso} and  \ref{thm:lassorow} are established for a local region of  $\bm{\omega}^*$. Algorithmically, this means we need a reasonably good initial value $\bm{\omega}^{(0)}$, although it need not be a consistent estimator of $\bm{\omega}^*$. For our model, it turns out that the boundedness of the parameter space of $\bm{\omega}$ makes finding a good initialization easier than  general nonconvex estimation problems. This is because $\lambda_1,\dots, \lambda_{r}$ must be well separated and lie within $(-1,0)\cup(0,1)$. Similarly, $(\gamma_1,\theta_1),\dots, (\gamma_s,\theta_s)$ must be well separated and lie within $(0,1)\times (0,\pi)$. Thus, given $r$ and $s$, setting initial values for these parameters is essentially the same as defining a grid of values on bounded intervals. Moreover, when $r$ and $s$ are larger, the grid will be denser and consequently even more likely to be closer to the true parameter values.
In practice,  we recommend the following procedure:
\begin{enumerate}
	\item Set a grid of initial values  for each element of $\bm{\omega}$ within their respective bounded intervals. For example, if $r,s\leq4$, then we may consider $\lambda_j \in \{\pm0.3, \pm0.6\}$, $\gamma_m\in\{0.3,0.6\}$, and $\theta_m\in\{\pi/4,3\pi/4\}$, for $1\leq j\leq r$ and $1\leq m\leq s$. Or, if $r=1$ or $s=1$, then we may consider denser grids such as $\lambda_1\in\{\pm 0.2, \pm 0.4, \pm 0.6, \pm 0.8\}$, $\gamma_1=\{0.2, 0.4, 0.6, 0.8\}$, and $\theta_1=\{\pi/4, \pi/2, 3\pi/4\}$.
	
	Then,  by considering  all combinations of distinct initial values chosen from the grids, we form the set of candidate initial values for $\bm{\omega}$.
	
	\item Run the algorithm with each candidate initial value   $\bm{\omega}^{(0)}$, and  select the solution with the minimum squared loss.
\end{enumerate}
Our simulations suggest that the above selection procedure  performs almost as well as initializing $\bm{\omega}$ with the true value.

To improve the stability of the algorithm, we recommend setting  $\bm{g}^{(0)}$ based on a preliminary estimator $\bm{a}^{(0)}$ of $\bm{a}$, given  any candidate initial value $\bm{\omega}^{(0)}$. Specifically, we first fit a sparse VAR($P$) model via the Lasso with $P=\lfloor 1.5\sqrt{T}\rfloor$ to obtain $\bm{A}_1^{(0)},\dots, \bm{A}_P^{(0)}$, and set $\bm{A}_h^{(0)}=\bm{0}$ for $h>P$. Note that it is infeasible to exactly solve for $\bm{g}$ given $\bm{a}$ and $\bm{\omega}$.
As a remedy, we define the pseudoinverse of $\bm{L}(\bm{\omega}^{(0)})$ as $\bm{L}^+(\bm{\omega}^{(0)})=[\{ \bm{L}^\top(\bm{\omega}^{(0)}) \bm{L}(\bm{\omega}^{(0)})\}^{-1} \bm{L}^\top(\bm{\omega}^{(0)})] \in\mathbb{R}^{d\times \infty}$. Then, we can obtain $\bm{g}^{(0)}=(\bm{L}^{+}(\bm{\omega}^{(0)}) \otimes\bm{I}_{N^2} ) \bm{a}^{(0)}$. 



\section{Additional simulation experiments}\label{asec:sim}

We  provide four additional simulation experiments to (1) verify the estimation error rates of the RE, (2) compare the estimation errors of JE and RE, (3) investigate the sensitivity of the estimation to the initialization $\bm{y}_t=\bm{0}$ for $t\leq 0$, and (4) compare the computational and forecasting performance of the proposed estimators  to competing ones in high dimensions.

\subsection{Finite-sample performance of the RE}

In the first experiment, we examine the estimation error rates for the RE.  The data are generated under the same settings as those in the first experiment in Section \ref{sec:sim} of the main paper. That is,  two data generating processes with  $N=10, 20, 40$ or 80 are considered: $(p,r,s)=(1,1,0)$ (DGP1) and $(1,0,1)$ (DGP2), where $\lambda_1 = -0.6$ for DGP1, and $(\gamma_1,\theta_1)=(0.6, \pi/4)$ for DGP2. 
In addition, each $\bm{G}_k$ is a row-sparse matrix with three nonzero entries in each row, i.e., $R_{i,0}=3d$ for $1\leq i\leq N$ and $R_{\max,0}=\max_{1\leq i\leq N} R_{i,0}=3d$. 

\begin{figure}[t]
	\centering
	\includegraphics[width=0.9\textwidth]{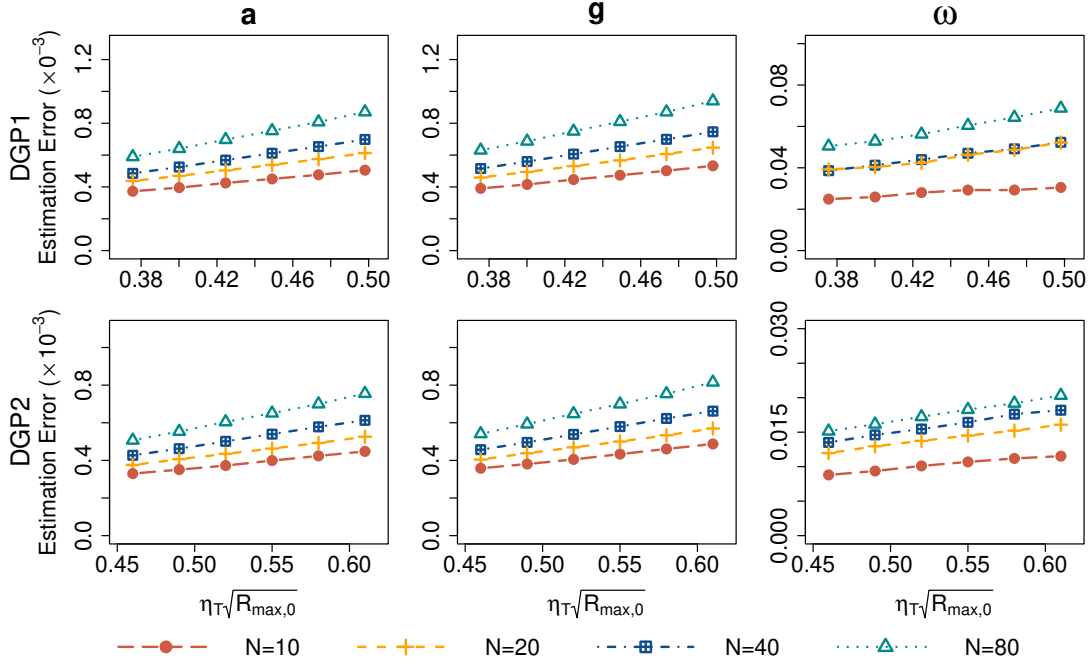}
	\caption{Plots of maximum estimation errors  $\max_{1\leq i\leq N}\|\bm{\widehat{a}}_{i}-\bm{a}_i^*\|_{2}$ (left panel), $\max_{1\leq i\leq N}\|\bm{\widehat{g}}_{i}-\bm{g}_i^*\|_{2}$ (middle panel), and $\max_{1\leq i\leq N} \underline{\alpha}_{i,\ma} \|\bm{\widehat{\omega}}_i-\bm{\omega}^*\|_2$ (right panel) against the theoretical rate $\eta_T\sqrt{R_{\max,0}}$ for the RE. }
	\label{fig:expS1}
\end{figure}

We aim to verify the following error bounds as implied by Theorem \ref{thm:lassorow}: $\max_{1\leq i\leq N}\|\bm{\widehat{a}}_{i}-\bm{a}_i^*\|_{2}\lesssim \eta_{T} \sqrt{R_{\max,0}}$, $\max_{1\leq i\leq N}\|\bm{\widehat{g}}_{i}-\bm{g}_i^*\|_{2}\lesssim \eta_{T} \sqrt{R_{\max,0}}$, and $\max_{1\leq i\leq N} \underline{\alpha}_{i,\ma} \|\bm{\widehat{\omega}}_i-\bm{\omega}^*\|_2\lesssim \eta_{T} \sqrt{R_{\max,0}}$, where $\eta_T=\sqrt{T^{-1}\log N}$. 
We consider a grid of equally spaced values for the theoretical rate $\eta_{T} \sqrt{R_{\max,0}}=\sqrt{3 T^{-1}d\log N}$ within the range of  $\mathcal{I}_1=[0.3756, 0.4981]$ for DGP1 and $\mathcal{I}_2=[0.46,0.61]$ for DGP2, and then obtain $T$ based on the theoretical rate, $N$ and $d$. This leads to the same set of values for  $T\in[55,186]$ as in the first experiment in Section \ref{sec:sim}. Figure \ref{fig:exp1} displays the maximum estimation errors $\max_{1\leq i\leq N}\|\bm{\widehat{a}}_{i}-\bm{a}_i^*\|_{2}$, $\max_{1\leq i\leq N}\|\bm{\widehat{g}}_{i}-\bm{g}_i^*\|_{2}$, and $\max_{1\leq i\leq N} \underline{\alpha}_{i,\ma} \|\bm{\widehat{\omega}}_i-\bm{\omega}^*\|_2$, averaged over 500 replications, against the theoretical rate $\eta_{T} \sqrt{R_{\max,0}}$. 
We observe a linear relationship between the empirical and theoretical rates across all settings. confirming the error rates suggested by  Theorem \ref{thm:lassorow}.

\subsection{Comparison between JE and RE}\label{subsec:compare}

In this experiment, we compare the estimation accuracy of  JE and RE. The data are generated from the proposed model with $(p,r,s)=(1,1,0)$, $\lambda_1=0.6$, $N=20$ or 60, and $T=50,100,150,300$ or $500$, using the same method as in Section \ref{sec:sim}.  Each $\bm{G}_k$ is a row-sparse matrix with two or four nonzero entries in each row, i.e., $R_{i,0}=2d$ or $4d$ for $1\leq i\leq N$.

\begin{figure}[t]
	\centering
	\includegraphics[width=0.9\textwidth]{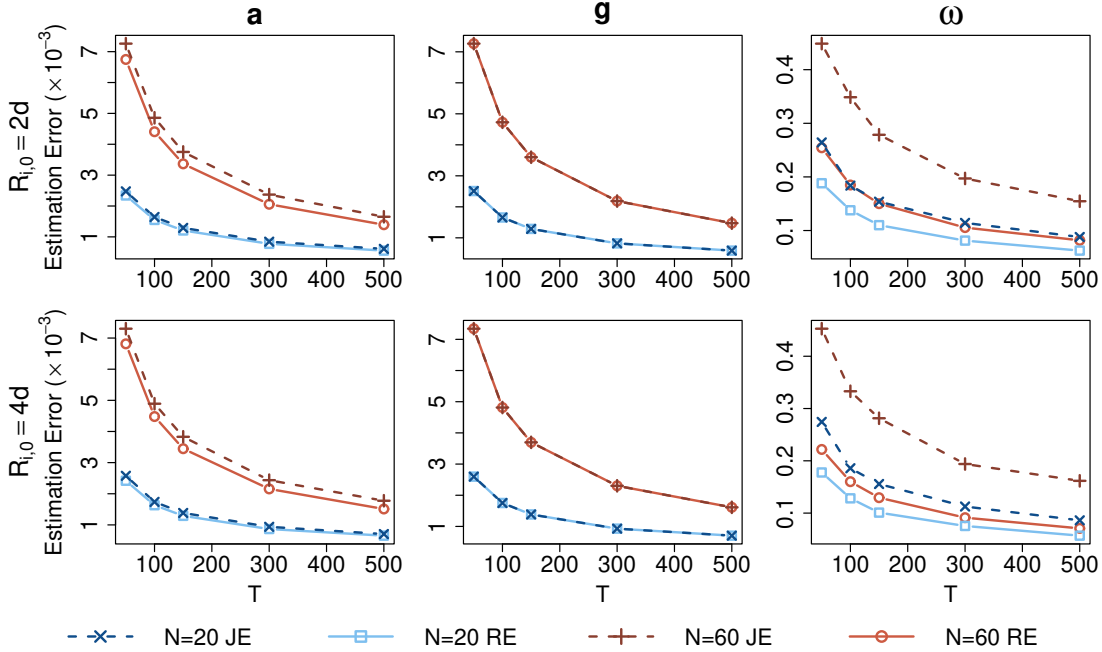}
	\caption{Plots of estimation errors for $\bm{a}$ (left panel), $\bm{g}$ (middle panel) and $\bm{\omega}$ (right panel)  against $T$ for JE and RE  when $R_{i,0}=2d$ (upper panel) or $4d$ (lower panel). }
	\label{fig:expS2}
\end{figure}

By Section \ref{sec:HDmethod} of the main paper, JE and RE result in the error bounds for the overall estimation errors $\|\bm{\widehat{a}}-\bm{a}^*\|_2\lesssim  \eta_{T} \sqrt{R_0}$ and  $\|\bm{\widehat{g}}-\bm{g}^*\|_2\lesssim  \eta_{T} \sqrt{R_0}$, where $R_0=\sum_{i=1}^N R_{i,0}$. However,  from the error bounds $\|\bm{\widehat{\omega}}_i-\bm{\omega}^*\|_2 \lesssim  \underline{\alpha}_{i,\ma}^{-1}  \eta_{T} \sqrt{R_{i,0}}$ for the RE and   $\|\bm{\widehat{\omega}}-\bm{\omega}^*\|_2 \lesssim \underline{\alpha}_{\ma}^{-1}  \eta_{T} \sqrt{R_0}$ for the JE, it is unclear which one will actually perform better in practice. We aim to provide numerical evidence for these questions. 
Figure \ref{fig:expS2} displays the estimation errors, averaged over 500 replications, against $T$. Here the estimation errors for $\bm{a}$ and $\bm{g}$ are computed as $\|\bm{\widehat{a}}-\bm{a}^*\|_2$ and $\|\bm{\widehat{g}}-\bm{g}^*\|_2$, respectively, for both JE and RE. The estimation error for $\bm{\omega}$ is computed as $\|\bm{\widehat{\omega}}-\bm{\omega}^*\|_2$ for the JE and  $\max_{1\leq i\leq N} \|\bm{\widehat{\omega}}_i-\bm{\omega}^*\|_2$ for the RE.  From Figure \ref{fig:expS2}, it can be seen that the estimation errors for $\bm{g}$ based on JE and RE are nearly identical across all settings. However, the RE generally results in smaller estimation errors for $\bm{\omega}$ than the JE. In addition, the estimation errors for $\bm{a}$ based on JE and RE are similar, with RE being slightly superior. This is also expected, because although  JE and RE have the same theoretical error rates for $\|\bm{\widehat{a}}-\bm{a}^*\|_2$, they can differ by a constant factor. Since the RE estimates $\bm{\omega}$ more accurately than the JE, it will naturally lead to smaller estimation errors for $\bm{a}$, as the two estimators yield the nearly identical estimates for $\bm{g}$. 
Overall, RE tends to slightly outperform the JE for the estimation of $\bm{a}$, especially when $N$ is large, which is equivalent to say that $R_q$ is large in this experiment.



\subsection{Sensitivity analysis for initialization of $\{\bm{y}_t, t\leq 0\}$}

\begin{figure}[t]
	\centering
	\includegraphics[width=0.9\textwidth]{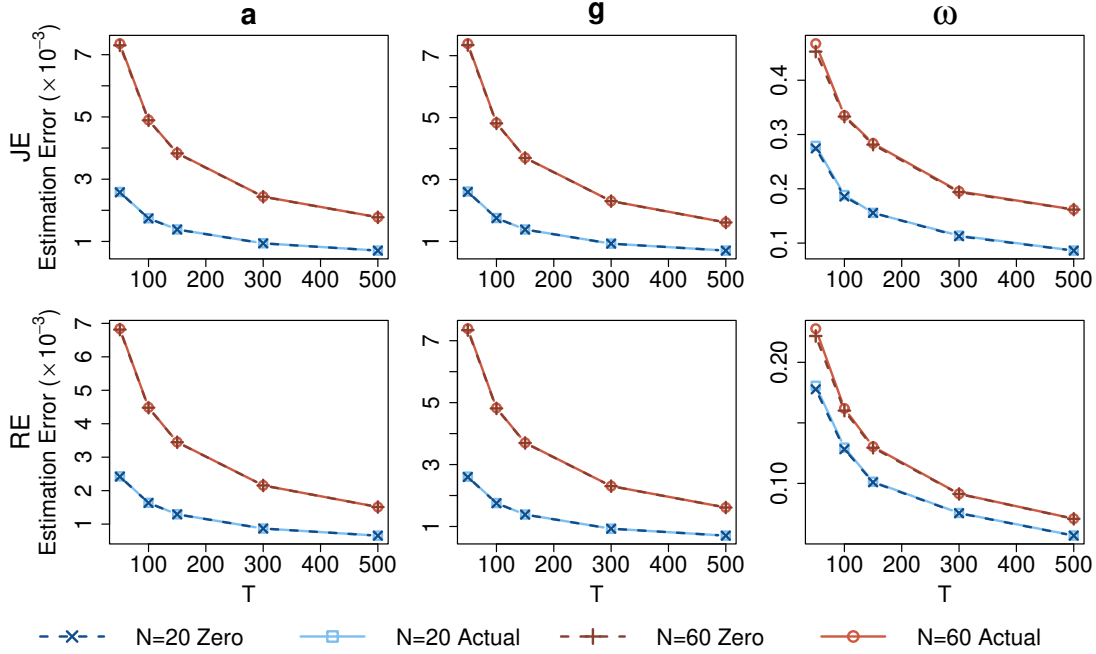}
	\caption{Plots of estimation errors for $\bm{a}$ (left panel), $\bm{g}$ (middle panel) and $\bm{\omega}$ (right panel)  against $T$  based on two initialization methods for JE (upper panel) and RE (lower panel). Zero: initializing $\bm{y}_t=\bm{0}$ for $t\leq 0$; Actual: initializing $\bm{y}_t$ for $t\leq 0$ by their actual values.}
	\label{fig:expS3}
\end{figure}

The aim of the third experiment is to assess the impact of initializing $\bm{y}_t=\bm{0}$ for $t\leq 0$ on the estimation  in finite samples. The data are generated as in Section \ref{subsec:compare}. For both JE and RE, we consider two initialization methods: (a) setting $\bm{y}_t=\bm{0}$ for $t\leq 0$, which is employed in this paper; and (b) setting them to their actual values obtained by generated a longer series. Note that Method (b) serves as a benchmark but is infeasible in practice. The estimation errors are computed as in  Section \ref{subsec:compare}, averaged over 500 replications. Figure \ref{fig:expS3} displays the results under the row sparsity level $R_{i,0}=4d$; the results for the sparser case $R_{i,0}=2d$  are similar and hence omitted. It can be observed that the estimation errors based on the two initialization methods are nearly identical across all settings for both JE and SE. In fact, there are only small visible differences when $T=50$ for the estimation of $\bm{\omega}$. This confirms that the initialization effect is negligible numerically.


\subsection{Computation time and forecast accuracy}\label{subsec:time}
In the last experiment, we assess the computational efficiency and forecast accuracy of the proposed SPVAR($\infty$) model. To highlight its capability to capture VARMA dynamics, instead of generating data from the proposed model, we consider the VARMA($1,1$) process, 
\[
\bm{y}_t = \bm{\Phi}\bm{y}_{t-1}+\bm{\varepsilon}_t - \bm{\Theta}\bm{\varepsilon}_{t-1},
\]
where $\bm{\Phi}=0.5 \bm{I}_N$, $\{\bm{\varepsilon}_t\}$ are $i.i.d.$ following $N(\bm{0}, \sigma^2\bm{I}_N)$ with $\sigma=0.2$, $N\in[10,60]$, and $T=125$. As shown in the proof of Proposition \ref{prop:VARMA}, this process can be written as model \eqref{eq:model-scalar} with order  $p=1$ if we generate $\bm{\Theta}$ according to the Jordan decomposition $\bm{\Theta}=\bm{B}\bm{J}\bm{B}^{-1}$, where $\bm{J}$ is defined as in \eqref{eq:realJ} and $\bm{B}$ is an invertible matrix. Hence, we specify $\bm{J}$ from  $\bm{\omega}$ by setting  $(r,s)=(1,0)$ and $\lambda_1=-0.7$. In addition, we set $\bm{B}=\diag\{\bm{B}_{0}, \bm{I}\}$, where $\bm{B}_{0}\in\mathbb{R}^{3\times 3}$ is a randomly generated orthogonal matrix. Then, based on $\bm{J}, \bm{B}$ and $\bm{\Phi}$, we get the corresponding $\bm{g}$ for model \eqref{eq:model-scalar}, which contains $R_0=N+15$ nonzero entries. The total number of nonzero entries in $\bm{\Phi}$ and $\bm{\Theta}$ is $N+9$. 
The following five competing methods will be compared to JE and RE for the proposed model:
\begin{itemize}[itemsep=2pt,parsep=2pt,topsep=4pt,partopsep=2pt]
	\item [(i)] VAR OLS: As a low-dimensional baseline, we consider the VAR(2) model fitted via the ordinary least squares (OLS) method.
	\item [(ii)] VAR Lasso: Since the VAR($\infty$) process can be approximated by the VAR($P$) model with $P\rightarrow\infty$ as $T\rightarrow\infty$, we consider the sparse VAR($P$) model fitted via the Lasso with $P=\lfloor 1.5\sqrt{T}\rfloor$, following the Stage I estimation in \cite{WBBM21}.
	
	\item[(iii)] VAR HLag: Same as (ii) except that the hierarchical lag (HLag) regularization in \cite{NWBM20} is used instead of the $\ell_1$-regularization.
	
	\item [(iv)] VARMA $\ell_1$: Sparse VARMA($1,1$) model fitted via the two-stage procedure in \cite{WBBM21} with the $\ell_1$-regularization for Stage II.
	
	\item [(v)] VARMA HLag: Same as (iv) except that the HLag regularization is used at Stage II.
\end{itemize}

To assess the out-of-sample forecast accuracy, we compute the $\ell_2$-norm of the prediction error for the one-step ahead forecast at time $T+1$ for the fitted models. All programs are run on a PC with the Intel$^{\text{\textregistered}}$ Core$^\textrm{TM}$ i7 processor with CPU up to 3.00GHz and 16.0GB RAM.  
Methods (i) and (ii)--(v) are implemented by the R packages \texttt{vars} and \texttt{bigtime}, respectively. In the latter package, all estimation procedures are accelerated using C++ via \texttt{Rcpp}. The program for our methods is written entirely in Python.  For a more transparent comparison, we also take into account the following issues:
\begin{itemize}[itemsep=2pt,parsep=2pt,topsep=4pt,partopsep=2pt]
	\item [(a)] For iterative algorithms, the running time depends on both the time per iteration and the number of iterations. However, we are unable to determine the optimal stopping rule  for (ii)--(v) since the existing estimating functions in \texttt{bigtime} do not offer the option of specifying or outputting the number of iterations, which prohibits us from monitoring the performance over iterations. 
	
	\item[(b)] Users can directly control the termination of the algorithms for (ii)--(v) by specifying the convergence threshold value. However, since the convergence criteria are defined for different quantities under different models, they are not comparable across various methods. 
	
	\item [(c)]  All the high-dimensional estimators require certain additional procedures like tuning parameter selection and initialization. They can be time-consuming due to multiple rounds of estimation. The time required is influenced by factors such as grid density and selection criteria, which are not comparable across different methods.
\end{itemize}

In view of the above complications, we adopt the following procedure to simplify the comparison:
\begin{itemize}
	\item For (ii)--(v), we first select the optimal tuning parameters using the cross validation method provided by the \texttt{bigtime} package. This step is not counted towards the reported computation time. Then, fixing the selected tuning parameters, we run two rounds of estimation:
	\begin{itemize}
		\item [R1.] In the first round, by setting the convergence threshold to 
		a very large value (\texttt{eps} $=10^5$), we ensure that the algorithm terminates right after one iteration. We record the computation time of the single iteration\footnote{For methods (iv) and (v), the function for Stage II estimation of the VARMA model in the \texttt{bigtime} package requires specifying a list of at least two candidate values for the tuning parameter. We set both values to the pre-selected optimal tuning parameter. Then by dividing the computation time by two, we record the time corresponding to a single run. In addition, since the Stage I estimation of (iv) (resp. (v)) is exactly the VAR model fitting conducted in (ii) (resp. (iii)), we only report the computation time of Stage II estimation for (iv) (resp. (v)), which is calculated by subtracting the time consumed by (ii) (resp. (iii)).}, which is regarded as the minimum time required for the algorithm. This allows us to optimistically assess the computation time for  (ii)--(v), circumventing the lack of control due to (a) and (b).
		
		\item [R2.] In the second round, we use the default convergence threshold (\texttt{eps} $=10^{-3}$) and let the algorithm run until convergence. Then, we compute the one-step ahead forecast error based on this optimal result.
	\end{itemize}
	
	\item Similarly, for the proposed estimators, we pre-specify the tuning parameter and initial values of our algorithms according to Section \ref{subsec:init}. However, unlike (ii)--(v) for which  we record the computation time of a single iteration due to the unknown optimal stopping rule, we let our algorithms run until convergence. We record the total computation time together with the corresponding one-step ahead forecast error.
\end{itemize}

\begin{figure}[t]
	\centering
	\includegraphics[width=0.85\textwidth]{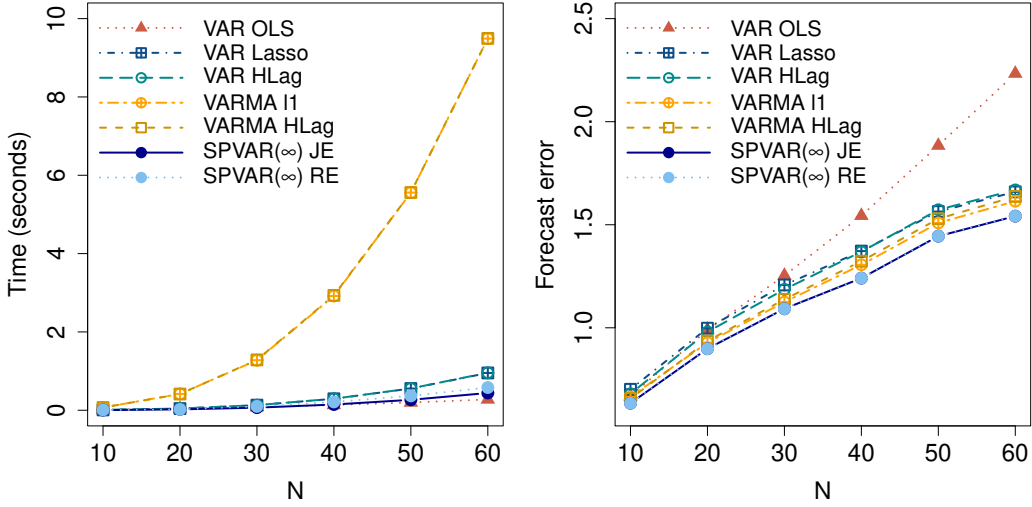}
	\caption{Plots of computation time (left panel) and out-of-sample forecast error  (right panel)  against $N$ for seven methods.}
	\label{fig:exp3}
\end{figure}

Figure \ref{fig:exp3} displays the average computation time  and forecast error based on  100 replications against $N$. According to the left panel, the computation time is ordered as follows: 
\[
\text{VAR OLS $<$ SPVAR($\infty$) $<$ VAR Lasso $\approx$ VAR HLag $\ll$ VARMA l1 $\approx$ VARMA HLag},
\]
where the RE computes slightly slower than the JE, especially for larger $N$. Note that the computation time for the VARMA estimators grows much faster with $N$ than the other methods. 
From the right panel of Figure \ref{fig:exp3}, the forecast error can be ordered as follows:
\[
\text{SPVAR($\infty$) $<$  VARMA l1 $\approx$ VARMA HLag $<$ VAR Lasso $\approx$ VAR HLag $\ll$ VAR OLS},
\]
and the forecast errors based on the JE and RE are nearly identical. 
As expected, the VAR OLS has the worst performance due to overparameterization. Among the high-dimensional methods, those incorporating VARMA dynamics  forecast more accurately than the pure VAR models. In short, this experiment shows that the proposed SPVAR($\infty$) model has the best out-of-sample forecasting performance among all competing models, while enjoying favorable computational efficiency especially compared to the sparse VARMA models. 


\section{More details for the empirical example}\label{asec:emp}

\begin{table}[t]
	\small
	\centering
	\caption{Description of twenty macroeconomic variables, where T represents types of transformation: 1 = no transformation, 2 = first difference, 3 = second difference, 4 = log, 5 = first difference of logged variables, 6 = second difference of logged variables.}
	\label{tbl:macro}
		\begin{tabular}{llll}
			\toprule
			Short name&Mnemonic&T&Description\\
			\midrule
			M1	&	FM1	&	6	&	 Money stock: M1 (bil\$)    \\                    
			M2	&	FM2	&	6	&	 Money stock: M2 (bil\$)      \\           
			Reserves nonbor	&	FMRNBA	&	3	&	 Depository inst reserves: nonborrowed (mil\$)\\
			Reserves tot	&	FMRRA	&	6	&	 Depository inst reserves: total (mil\$) \\
			FFR	&	FYFF	&	2	&	 Interest rate: federal funds (\% per annum)  \\
			10 yr T-bond&FYGT10&2&Interest rate: US treasury const. mat., 10 yr\\
			CPI	&	CPIAUCSL	&	6	&	 CPI: all items    \\                       
			PCED	&	GDP273	&	6	&	 Personal consumption exp.: price index\\  
			Com: spot price (real)	&	PSCCOMR	&	5	&	 Real spot market price index: all commodities\\
			PPI: fin gds	&	PWFSA	&	6	&	 Producer price index: finished goods    \\
			Emp: total	&	CES002	&	5	&	 Employees, nonfarm: total private\\       
			U: all	&	LHUR	&	2	&	 Unemp. rate: All workers, 16 and over (\%) \\  
			Real AHE: goods	&	CES275R	&	5	&	 Real avg hrly earnings, non-farm prod. workers\\
			RGDP	&	GDP251	&	5	&	 Real GDP, quantity index (2000=100)         \\   
			Cons	&	GDP252	&	5	&	 Real personal cons. exp.: quantity Index       \\
			IP: total	&	IPS10	&	5	&	 Industrial production index: total           \\  
			Capacity Util	&	UTL11	&	1	&	 Capacity utilization: manufacturing (SIC)      \\
			HStarts: total	&	HSFR	&	4	&	 Housing starts: total (thousands)           \\   
			Ex rate: avg	&	EXRUS	&	5	&	 US effective exchange rate: index number\\
			S\&P: indust	&	FSPIN	&	5	&	 S\&P's common stock price index: industrials \\
			\bottomrule
		\end{tabular}
\end{table} 

Table \ref{tbl:macro} provides a detailed description  of the twenty macroeconomic variables. 
More discussions about the fitted model based on the proposed JE as reported in the main paper are given as follows. 

As another example, consider the fitted model for the money stock (M2):
\begin{align*}
	y_{\text{M2},t}&= -0.34 y_{\text{10 yr T-bond},t-1} + 0.07 y_{\text{U: all},t-1} \\
	&\hspace{5mm}+ \sum_{h=2}^\infty  (-0.45)^{h-1} (0.29 y_{\text{M2}, t-h} -0.85 y_{\text{10 yr T-bond}, t-h}) +\varepsilon_{\text{M2},t},
\end{align*}
where other lag-one terms with coefficients less than 0.032 in absolute value are suppressed for brevity. Note that $y_{\text{M2},t}$ has an infinite-order AR structure. Moreover, based on the fitted model, two time series are Granger causal (GC) for M2:  the 10-year treasury rate (10 yr T-bond) and the unemployment rate (U: all). The former has both short-term and long-term influence on M2, while the latter's influence on M2 is only short-term.

Other findings about the long-term interactions based on $\bm{\widehat{G}}_2$ are summarized as follows. Firstly, 
there are pronounced long-term interactions among the trio: federal funds rate (FFR), real GDP (RGDP), and real personal consumption expenditures (Cons). The directions of influence are FFR $\rightarrow$ RGDP, FFR $\rightarrow$ Cons, and Cons $\rightarrow$ RGDP. Second, the personal consumption expenditures price index (PCED) is influenced by both the Producer Price Index (PPI)  and the Consumer Price Index (CPI), which is intuitive as they are all price indices. Third, in addition to M2 mentioned above, the diagonal of $\bm{\widehat{G}}_2$ indicates that the following variables are influenced by their own lagged values throughout the past: Reserves tot, CPI, and PPI. In addition, as discussed in Section \ref{subsec:model}, the fitted model suggests that the following variables are GC for RGDP: Cons,  IP: total,  HStarts: total,  S\&P: indust, and FFR. 
However, interestingly, since the columns for RGDP in both $\bm{\widehat{G}}_1$ and $\bm{\widehat{G}}_2$ contain all zeros, RGDP is not GC for any other variables. Thus,  the fitted model suggests that RGDP is driven by  the above fundamental economic and financial indicators but may not be a driving force of any other variables under consideration.

\begin{figure}[t]
	\centering
	\includegraphics[width =0.7\textwidth]{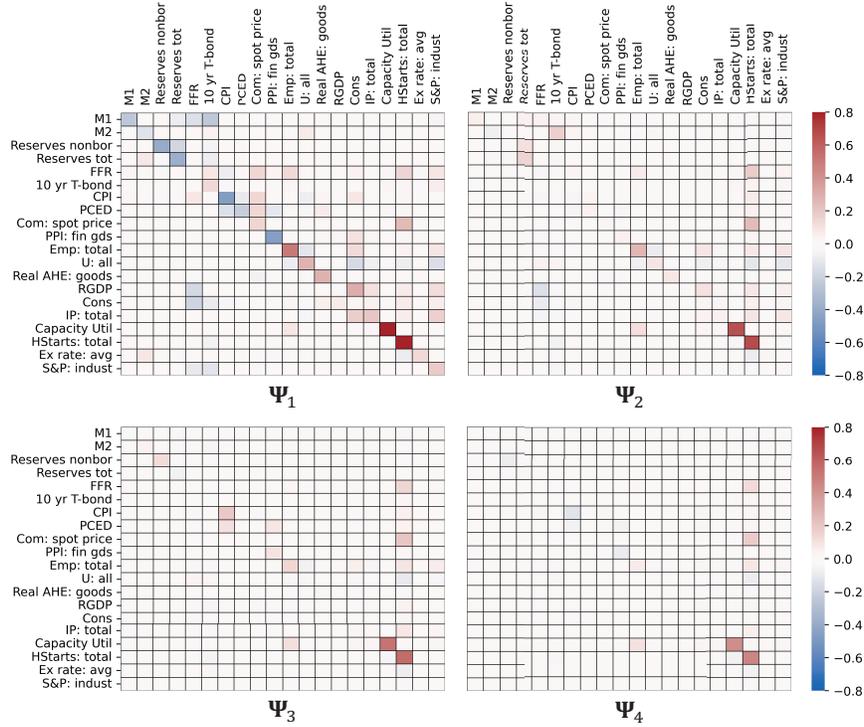}
	\caption{Estimates of $\bm{\Psi}_j$ for $j=1,\dots, 4$ for the VMA($\infty$) representation of the fitted model based on JE.}
	\label{fig:Psimat}
\end{figure}

\begin{figure}[t]
	\centering
	\includegraphics[width = 0.9\textwidth]{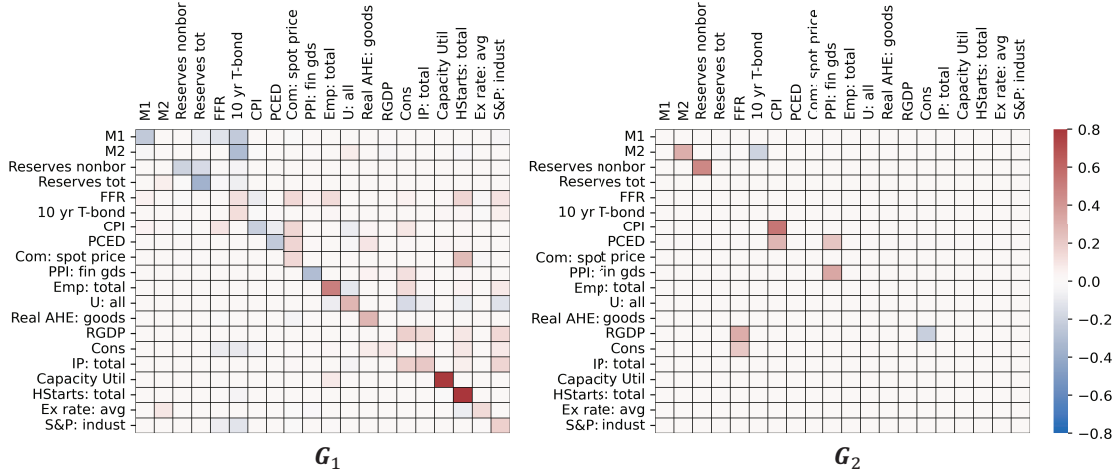} 
	\caption{Estimates of $\bm{G}_1$ and $\bm{G}_2$ for the proposed model based on RE.}
	\label{fig:GmatRE}
\end{figure}

In addition, as noted in Remark \ref{remark:IRA} in the main paper,  we may alternatively consider  the VMA($\infty$) form of the fitted model for the purpose of   impulse response analysis. For the fitted model reported in the main paper, we give the corresponding estimates of $\bm{\Psi}_j$ with $j=1,\dots, 4$  in Figure \ref{fig:Psimat}. It can be observed that the estimated coefficient matrices are all sparse. For example, by examining  $\bm{\Psi}_3$ and  $\bm{\Psi}_4$, we can see that  HStarts: total is particularly influential, as a shock to it will  impact a number of other variables such as FFR, Com: spot price, Emp: total, U: all, and IP: total.

We have also fitted the model using the RE. The estimates of $\bm{G}_1$ and $\bm{G}_2$ based on the RE exhibit a high degree of similarity to those obtained through the JE; see Figure \ref{fig:GmatRE}. Specifically, the estimates of $\bm{G}_1$  based on JE and RE are nearly identical. While the sparsity pattern and signs of the nonzero entries in $\bm{G}_2$  based on the two estimators are very similar, the magnitude of the nonzero entries derived from RE is generally smaller than those obtained from JE. This discrepancy arises from the impact of different estimates of $\lambda_1$. Note that RE provides distinct estimates of $\lambda_1$ across  rows, while JE only has a single estimate of $\lambda_1$ for all rows.

Finally,  Table \ref{tbl:forecast} displays the forecast errors $\|\bm{\widehat{y}}_t-\bm{y}_t\|_2$ for all competing methods over the rolling forecast period $167\leq t\leq 194$; see the main paper for the detailed procedure.

\begin{table}[H] 
	\renewrobustcmd{\bfseries}{\fontseries{b}\selectfont}
	\newrobustcmd{\B}{\bfseries} 
	\centering\small
	\caption{Forecast error (in $\ell_2$ norm) of one-step ahead forecasts for twenty quarterly macroeconomic series. The smallest number in each row is marked in bold.}
	\label{tbl:forecast}
	\begin{tabularx}{0.8\textwidth}{l*{3}{Y}c*{2}{Y}c*{2}{Y}}
		\toprule
		& \multicolumn{3}{c}{VAR} && \multicolumn{2}{c}{VARMA} && \multicolumn{2}{c}{SPVAR($\infty$)}\\\cline{2-4} \cline{6-7} \cline{9-10}
		&	\multicolumn{1}{c}{OLS}	&	\multicolumn{1}{c}{Lasso}	&	\multicolumn{1}{c}{HLag}	&&	\multicolumn{1}{c}{$\ell_1$}	&	\multicolumn{1}{c}{HLag}	&&	\multicolumn{1}{c}{JE}	&	\multicolumn{1}{c}{RE}\\
		\midrule
		Q1-2001	&	4.54	&	4.49	&	4.20	&&	4.11	&	\B 3.81	&&	3.94	&	3.91	\\
		Q2-2001	&	3.29	&	3.44	&	3.38	&&	3.42	&	3.36	&&	\B 3.19	&	3.21	\\
		Q3-2001	&	10.36	&	8.78	&	8.71	&&	8.85	&	8.72	&&	\B 8.68	&	8.69	\\
		Q4-2001	&	12.01	&	11.93	&	11.7	&&	11.65	&	11.84	&&	\B 11.58&	11.62	\\
		Q1-2002	&	6.44	&	\B 3.53	&	4.22	&&	4.42	&	4.42	&&	4.15	&	4.11	\\
		Q2-2002	&	11.55	&	\B 4.15	&	4.26	&&	4.72	&	4.72	&&	5.25	&	4.70	\\
		Q3-2002	&	8.02	&	5.23	&	4.78	&&	5.19	&	4.66	&&	4.82	&	\B 4.65	\\
		Q4-2002	&	8.59	&	2.67	&	2.37	&&	3.33	&	3.33	&&	\B 2.19	&	2.33	\\
		Q1-2003	&	6.38	&	3.60	&	3.62	&&	4.10	&	4.10	&&	3.61	&	\B 3.52	\\
		Q2-2003	&	\B 4.00	&	5.18	&	4.72	&&	5.26	&	4.37	&&	4.42	&	4.47	\\
		Q3-2003	&	6.11	&	4.89	&	4.37	&&	5.25	&	5.16	&&	4.22	&	\B 4.16	\\
		Q4-2003	&	\B 5.36	&	7.09	&	6.17	&&	5.87	&	5.41	&&	5.98	&	5.96	\\
		Q1-2004	&	5.59	&	3.98	&	2.97	&&	4.45	&	3.47	&&	3.12	&	\B 2.92	\\
		Q2-2004	&	5.67	&	\B 3.44	&	3.60	&&	3.76	&	3.76	&&	3.53	&	3.63	\\
		Q3-2004	&	4.09	&	3.46	&	2.99	&&	3.78	&	3.46	&&	\B 2.65	&	2.75	\\
		Q4-2004	&	3.80	&	3.39	&	3.04	&&	\B 2.65	&	2.71	&&	2.96	&	2.98	\\
		Q1-2005	&	3.56	&	3.14	&	2.79	&&	3.45	&	3.32	&&	\B 2.74	&	2.80	\\
		Q2-2005	&	3.64	&	2.66	&	2.54	&&	3.04	&	2.84	&&	\B 2.49	&	2.54	\\
		Q3-2005	&	3.44	&	3.80	&	3.45	&&	3.00	&	\B 2.88	&&	3.10	&	3.23	\\
		Q4-2005	&	3.62	&	2.38	&	2.20	&&	2.84	&	2.37	&&	\B 1.91	&	2.02	\\
		Q1-2006	&	5.38	&	3.29	&	3.23	&&	\B 3.04	&	3.29	&&	3.17	&	3.20	\\
		Q2-2006	&	3.01	&	2.91	&	2.72	&&	3.20	&	3.17	&&	2.58	&	\B 2.54	\\
		Q3-2006	&	2.54	&	2.39	&	2.17	&&	2.39	&	2.39	&&	2.14	&	\B 2.11	\\
		Q4-2006	&	5.90	&	5.08	&	5.03	&&	5.01	&	4.96	&&	\B 4.78	&	4.89	\\
		Q1-2007	&	\B 2.69	&	4.77	&	4.16	&&	3.59	&	3.32	&&	3.73	&	3.71	\\
		Q2-2007	&	4.01	&	\B 2.85	&	3.00	&&	2.96	&	3.03	&&	3.10	&	3.06	\\
		Q3-2007	&	2.96	&	2.82	&	2.38	&&	2.75	&	2.57	&&	\B 2.28	&	2.37	\\
		Q4-2007	&	\B 3.73	&	5.26	&	5.18	&&	4.81	&	4.59	&&	4.89	&	5.05	\\
		\addlinespace[0.5ex]
		Average &	5.367	&	4.307	&	4.069	&&	4.318	&	4.144	&&	3.971	&  \B 3.968	 \\
		\bottomrule
	\end{tabularx}
\end{table} 


\section{Proofs of Proposition  \ref{prop:VARMA} and Theorem \ref{thm:stationary}}\label{sec:proofmain}
\subsection{Proof of Proposition \ref{prop:VARMA}}\label{sec1}

Consider the general VARMA$(p,q)$ model with $p,q\geq0$:
\[
\bm{y}_t =\sum_{i=1}^{p} \bm{\Phi}_i\bm{y}_{t-i}+\bm{\varepsilon}_t - \sum_{j=1}^{q}\bm{\Theta}_j\bm{\varepsilon}_{t-j},\hspace{5mm} t\in\mathbb{Z}.
\] 
Since it will reduce to the VAR($p$) model when $q=0$, in what follows we only need to consider the case where $q\geq1$.
Note that the model above can be written equivalently as
\begin{equation} \label{aeq:VARMA}
	\bm{\varepsilon}_t = \bm{\Theta}_1\bm{\varepsilon}_{t-1}-\cdots-\bm{\Theta}_q\bm{\varepsilon}_{t-q}+\bm{\Phi}(B) \bm{y}_t,
\end{equation} 
where 
$\bm{\Phi}(B) = \bm{I}-\sum_{i=1}^{p}\bm{\Phi}_i B^i=-\sum_{i=0}^{p}\bm{\Phi}_i B^i$, with $\bm{\Phi}_0=-\bm{I}$. Then we have
\begin{align*}
	\underbrace{\left(\begin{matrix}
			\bm{\varepsilon}_t\\\bm{\varepsilon}_{t-1}\\\bm{\varepsilon}_{t-2}\\\vdots\\\bm{\varepsilon}_{t-q+1}\\
		\end{matrix} \right ) }_{\underline{\bm{\varepsilon}}_t}	
	= \underbrace{\left(\begin{matrix}
			\bm{\Theta}_1&\bm{\Theta}_2&\cdots&\bm{\Theta}_{q-1}&\bm{\Theta}_q\\
			\bm{I}&\bm{0}&\cdots&\bm{0}&\bm{0}\\
			\bm{0}&\bm{I}&\cdots&\bm{0}&\bm{0}\\
			\vdots&\vdots&\ddots&\vdots&\vdots\\
			\bm{0}&\bm{0}&\cdots&\bm{I}&\bm{0}
		\end{matrix} \right )}_{\underline{\bm{\Theta}}}  \underbrace{\left(\begin{matrix}
			\bm{\varepsilon}_{t-1}\\\bm{\varepsilon}_{t-2}\\\bm{\varepsilon}_{t-3}\\\vdots\\\bm{\varepsilon}_{t-q}\\
		\end{matrix} \right ) }_{\underline{\bm{\varepsilon}}_{t-1}} + \underbrace{\left(\begin{matrix}
			\bm{\Phi}(B) \bm{y}_t\\\bm{0}\\\bm{0}\\\vdots\\\bm{0}\\
		\end{matrix} \right ) }_{\underline{\bm{y}}_t},
\end{align*}
where $\underline{\bm{\Theta}}\in\mathbb{R}^{Nq\times Nq}$ is the MA companion matrix. By recursion, we have $\underline{\bm{\varepsilon}}_t=\sum_{j=0}^{\infty} \underline{\bm{\Theta}}^j \underline{\bm{y}}_{t-j}$. Let $\bm{P} = (\bm{I}_{N}, \bm{0}_{N\times N(q-1)})$. 
Note that $\bm{P}\underline{\bm{\varepsilon}}_t=\bm{\varepsilon}_t$, and  $\underline{\bm{y}}_t=\bm{P}^\top \bm{\Phi}(B)\bm{y}_t$. Thus,
\begin{equation}\label{eq:VARMA_inf0}
	\bm{\varepsilon}_t=\sum_{j=0}^{\infty}\bm{P} \underline{\bm{\Theta}}^j \bm{P}^\top \bm{\Phi}(B)\bm{y}_{t-j}
	=-\sum_{j=0}^{\infty} \bm{P} \underline{\bm{\Theta}}^j \bm{P}^\top \sum_{i=0}^{p}\bm{\Phi}_i\bm{y}_{t-j-i} 
	=-\sum_{k=0}^{\infty} \left ( \sum_{i=0}^{p\wedge k}\bm{P} \underline{\bm{\Theta}}^{k-i} \bm{P}^\top \bm{\Phi}_i \right )\bm{y}_{t-k}. 
\end{equation} 
Since $\bm{P} \bm{P}^\top = \bm{I}_N$, it follows from \eqref{eq:VARMA_inf0} that the VAR($\infty$) representation of the  VARMA($p,q$)  model can be written as
\begin{equation}\label{eq:VARMA_infa}
	\bm{y}_t =  \sum_{h=1}^{\infty} \underbrace{ \left ( \sum_{i=0}^{p\wedge h} \bm{P} \underline{\bm{\Theta}}^{h-i} \bm{P}^\top \bm{\Phi}_i \right )}_{\bm{A}_h}\bm{y}_{t-h}+\bm{\varepsilon}_t.
\end{equation}
First, we simply set 
\begin{align} \label{eq:init}
	\bm{G}_j=\sum_{i=0}^{j}\bm{P} \underline{\bm{\Theta}}^{j-i} \bm{P}^\top \bm{\Phi}_i=\bm{A}_j,
	\hspace{5mm}\text{for } 1\leq j\leq p,
\end{align}
and then  we only need to focus on the reparameterization of $\bm{A}_h$ for $h>p$. By \eqref{eq:VARMA_infa}, for $j\geq 1$, we have
\begin{equation} \label{eq:VARMA-Am}
	\bm{A}_{p+j} = \bm{P} \underline{\bm{\Theta}}^{j}\left (\sum_{i=0}^{p} \underline{\bm{\Theta}}^{p-i} \bm{P}^\top \bm{\Phi}_i \right ).
\end{equation}
Next we  derive an alternative parameterization for  $\bm{A}_{p+j}$ with $j\geq 1$.

Under the conditions of this proposition,  $\underline{\bm{\Theta}}$ can be decomposed as $\underline{\bm{\Theta}} = \bm{B}\bm{J}\bm{B}^{-1}$,
where $\bm{B}\in\mathbb{R}^{Nq \times Nq}$ is an invertible matrix, and  $\bm{J}=\diag\{\lambda_1, \dots, \lambda_r, \bm{C}_1, \dots, \bm{C}_s, \bm{0}\}$ is the  real Jordan form, which is a real block diagonal matrix with 
\begin{equation*}
	\bm{C}_k=
	\gamma_k\cdot \left(\begin{matrix}
		\cos (\theta_k)& \sin (\theta_k)\\
		- \sin (\theta_k)& \cos (\theta_k)
	\end{matrix}\right) \in\mathbb{R}^{2\times 2}, \hspace{5mm} 1\leq k\leq s;
\end{equation*}
see Chapter 3 in \cite{HJ12}.

Denote $\bm{\widetilde{B}}=\bm{P} \bm{B}$
and
$\bm{\widetilde{B}}_-=\bm{B}^{-1}\left (\sum_{i=0}^{p} \underline{\bm{\Theta}}^{p-i} \bm{P}^\top \bm{\Phi}_i \right )$.
Note that in the special case that $q=1$, we simply have  $\bm{\widetilde{B}}=\bm{B}$; in addition, $\bm{\widetilde{B}}_-=-\bm{B}^{-1}$ if $p=0$ and $\bm{\widetilde{B}}_-=\bm{B}^{-1}(\bm{\Phi}_1-\bm{\Theta}_1)$ if $p=1$. 

Then by \eqref{eq:VARMA-Am} and the Jordan decomposition, for $j\geq1$, we have 
\begin{equation}\label{eq:Ajordan}
	\bm{A}_{p+j} = \bm{\widetilde{B}} \bm{J}^{j} \bm{\widetilde{B}}_-.
\end{equation}

According to the block form of $\bm{J}$, we can partition the $Nq\times Nq$ matrix $\bm{\widetilde{B}}$ vertically and the $Nq\times Nq$ matrix $\bm{\widetilde{B}}_-$ horizontally as
\[
\bm{\widetilde{B}} = (\bm{\widetilde{b}}_1,\dots,\bm{\widetilde{b}}_r, \bm{\widetilde{B}}_{r+1},\dots  \bm{\widetilde{B}}_{r+s},\bm{\widetilde{B}}_{r+s+1})
\]
and
\[
\bm{\widetilde{B}}_-=(\bm{\widetilde{b}}_{-1},\dots, \bm{\widetilde{b}}_{-r}, \bm{\widetilde{B}}_{-(r+1)},\dots, \bm{\widetilde{B}}_{-(r+s)},\bm{\widetilde{B}}_{-(r+s+1)})^\top
\]
where $\bm{\widetilde{b}}_{k}$ and $\bm{\widetilde{b}}_{-k}$ are $N\times 1$ column vectors for $1\leq k\leq r$, $\bm{\widetilde{B}}_{r+k}$ and $\bm{\widetilde{B}}_{-(r+k)}$ are $N\times 2$ matrices for $1\leq k \leq s$, and $\bm{\widetilde{B}}_{r+s+1}$ and $\bm{\widetilde{B}}_{-(r+s+1)}$ are $N\times\big(Nq-(r+2s)\big)$ matrices. Notice that for any $j\geq 1$, $\bm{J}^j=\diag\{\lambda_1^j, \dots, \lambda_r^j, \bm{C}_1^j, \dots, \bm{C}_s^j, \bm{0}\}$, where 
\begin{equation*}
	\bm{C}_k^j=
	\gamma_k^j\cdot \left(\begin{matrix}
		\cos (j\theta_k)& \sin (j\theta_k)\\
		- \sin (j\theta_k)& \cos (j\theta_k)
	\end{matrix}\right) \in\mathbb{R}^{2\times 2}, \hspace{5mm} 1\leq k\leq s.
\end{equation*}

Let $\bm{\widetilde{b}}_{r+k}^{(i)}$ and $\bm{\widetilde{b}}_{-(r+k)}^{(i)}$ be the $i$th column of $\bm{\widetilde{B}}_{r+k}$ and $\bm{\widetilde{B}}_{-(r+k)}$, respectively, where $1\leq k\leq s$ and $i=1,2$. In addition, denote $\bm{\eta}_k=(\gamma_k,\theta_k)$ for $1\leq k\leq s$. Then by \eqref{eq:Ajordan}, , for $j\geq 1$, we can show that
\begin{align}
	\begin{split}\label{eq:A_extend}
		\bm{A}_{p+j}&=\sum_{k=1}^{r}\lambda_k^j\bm{\widetilde{b}}_{k}\bm{\widetilde{b}}_{-k}^\top+\sum_{k=1}^s\bm{\widetilde{B}}_{r+k}\bm{C}_k^j\bm{\widetilde{B}}_{-(r+k)}^\top\\
		&=\sum_{k=1}^r \lambda_k^j \bm{G}_{p+j}+\sum_{m=1}^{s}\Big\{ \gamma_m^j\cos(j\theta_m)\bm{G}_{p+r+2m-1}+ \gamma_m^j\sin(j\theta_m)\bm{G}_{p+r+2m}\Big\}.
	\end{split}
\end{align}
where 
\begin{align*}
		\bm{G}_{p+j} &= \bm{\widetilde{b}}_{k}\bm{\widetilde{b}}_{-k}^\top, \quad 1\leq k \leq r,\\
		\bm{G}_{p+r+2m-1} &= \bm{\widetilde{b}}_{r+m}^{(1)}\bm{\widetilde{b}}_{-(r+m)}^{(1)^\top}+\bm{\widetilde{b}}_{r+m}^{(2)}\bm{\widetilde{b}}_{-(r+m)}^{(2)^\top}, \quad 1\leq m\leq s,\\
		\bm{G}_{p+r+2m} &= \bm{\widetilde{b}}_{r+m}^{(1)}\bm{\widetilde{b}}_{-(r+m)}^{(2)^\top}-\bm{\widetilde{b}}_{r+m}^{(2)}\bm{\widetilde{b}}_{-(r+m)}^{(1)^\top}, \quad 1\leq m\leq s.
\end{align*}
Combining \eqref{eq:init} and \eqref{eq:A_extend} , the proof of this proposition is complete.

\subsection{Proof of Theorem \ref{thm:stationary}}

The proof of  Theorem \ref{thm:stationary} relies on the following lemma.
\begin{lemma}\label{lem:fm}
	For any positive integer $m$, define the function
	\[
	f_m(x)=\sum_{l=2m}^{\infty} {l-m-1 \choose m-1} x^{l-m}.
	\]
	For $0<x<1$, the function $f_m(x)$ takes values on $(0,\infty)$ and can be written as $f_m(x)=x^{m}(1-x)^{-m}$.
\end{lemma}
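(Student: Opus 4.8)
The claim is the closed-form evaluation $f_m(x)=x^m(1-x)^{-m}$ for $0<x<1$, together with the range statement $f_m(x)\in(0,\infty)$. The natural route is to recognize $f_m$ as a shifted version of the generating function of the negative binomial coefficients. First I would substitute $j=l-2m$ so that $\binom{l-m-1}{m-1}=\binom{m-1+j}{m-1}=\binom{m-1+j}{j}$ and $x^{l-m}=x^{m+j}$, giving
\[
f_m(x)=x^m\sum_{j=0}^{\infty}\binom{m-1+j}{j}x^{j}.
\]
Then I would invoke the standard power series identity $\sum_{j=0}^{\infty}\binom{m-1+j}{j}x^j=(1-x)^{-m}$, valid for $|x|<1$ (this is the generalized binomial theorem, or equivalently the $m$-fold Cauchy product of the geometric series $\sum x^j=(1-x)^{-1}$). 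Combining these yields $f_m(x)=x^m(1-x)^{-m}$ immediately.

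For the range statement: once the closed form is in hand, for $0<x<1$ both $x^m>0$ and $(1-x)^{-m}>0$ are finite and strictly positive, so $f_m(x)\in(0,\infty)$; moreover the series converges absolutely in this regime so the manipulations above are justified. If one wants to avoid citing the generalized binomial theorem, an alternative is induction on $m$: the base case $m=1$ is the geometric series $f_1(x)=\sum_{l\ge 2}x^{l-1}=\sum_{k\ge1}x^k=x/(1-x)$, and the inductive step uses the hockey-stick / Pascal recursion on the binomial coefficients, $\binom{l-m-1}{m-1}=\binom{l-m-2}{m-2}+\binom{l-m-2}{m-1}$, to relate $f_m$ to $f_{m-1}$ after re-indexing — but differentiating the identity for $f_{m-1}$ or termwise integration is cleaner, since $\frac{d}{dx}(1-x)^{-(m-1)}$ reproduces the coefficient shift.

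The only mild obstacle is bookkeeping with the index ranges and the binomial-coefficient shift — making sure the lower limit $l=2m$ matches $j=0$ and that $\binom{m-1+j}{j}$ is the correct reindexed coefficient — but this is a routine check rather than a genuine difficulty. No convergence subtleties arise because we restrict to $0<x<1$, where everything is absolutely convergent.
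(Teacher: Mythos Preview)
Your proposal is correct and follows essentially the same route as the paper: re-index via $j=l-2m$ (the paper uses $n$), use the binomial symmetry $\binom{l-m-1}{m-1}=\binom{m-1+j}{j}$, and invoke the power-series identity $\sum_{j\ge0}\binom{m-1+j}{j}x^j=(1-x)^{-m}$ (which the paper derives explicitly via the Taylor expansion of $(1-x)^{-m}(m-1)!$). The positivity/finiteness remark is also handled the same way.
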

\begin{proof}[Proof of Lemma \ref{lem:fm}]
	For any positive integer $m$, by the Taylor expansion of the function $g_m(x)=(1-x)^{-m}(m-1)!$ at $x=0$, it can be shown that
	\begin{equation*}
		g_m(x)=\sum_{n=0}^{\infty}\frac{(n+m-1)! \,x^n}{n!},
	\end{equation*}
	and the above infinite sum converges for $0<x<1$. As a result,
	\begin{align*}
		f_m(x)=\sum_{l=2m}^{\infty} {l-m-1 \choose l-2m} x^{l-m} = \sum_{n=0}^{\infty} {n+m-1 \choose n} x^{n+m}&=\frac{x^{m}}{(m-1)!}\sum_{n=0}^{\infty} \frac{(n+m-1)! \, x^{n}}{n!}\\
		&=	x^{m}(1-x)^{-m},
	\end{align*}
	which takes values on $(0,\infty)$ for $0<x<1$.
\end{proof}

\begin{proof}[Proof of  Theorem \ref{thm:stationary}]
	It can be readily shown that the VMA($\infty$) representation of the VAR($\infty$)  model is
	\begin{equation}\label{aeq:VMA}
		\bm{y}_t = \bm{\varepsilon}_t+ \sum_{h=1}^{\infty}\bm{\Psi}_h\bm{\varepsilon}_{t-h},\quad\text{with}\quad	\bm{\Psi}_h=\sum_{k=1}^{h}\sum_{\substack{\iota_1+\cdots+ \iota_k=h,\\\iota_1,\dots, \iota_k\geq1}}\bm{A}_{\iota_1} \bm{A}_{\iota_2} \cdots \bm{A}_{\iota_k}, \quad h\geq1.
	\end{equation}
	In particular, $\bm{\Psi}_1=\bm{A}_1$.  Note that the process in \eqref{aeq:VMA} is stationary if
	\begin{equation}\label{aeq:Psi}
		\sum_{h=1}^{\infty}\|\bm{\Psi}_h\| < \infty,
	\end{equation}
	where $\|\cdot\|$ is any submultiplicative matrix norm. Thus, we just need to show that \eqref{aeq:Psi} holds under the conditions of Theorem \ref{thm:stationary}. 
	
	When $p=0$,  the condition  that $\max\{|\lambda_1|,\ldots,|\lambda_r|, \gamma_1,\ldots,\gamma_s\}\leq \bar{\rho}$ implies
	$\|\bm{A}_h\| \leq \bar{\rho}^h \sum_{k=1}^{r+2s}\|\bm{G}_{k}\|$ for $h\geq1$. Then, we can show that 
	\begin{align*}
		\sum_{h=1}^{\infty}\|\bm{\Psi}_h\| \leq  \sum_{k=1}^{\infty} \left \{\sum_{\iota_1=1}^\infty \bar{\rho}^{\iota_1} (\sum_{k=1}^{r+2s}\|\bm{G}_{k}\|) \right \}^k = \sum_{k=1}^{\infty} \left \{\frac{\bar{\rho}}{1-\bar{\rho}}  (\sum_{k=1}^{r+2s}\|\bm{G}_{k}\|) \right \}^k <\infty,
	\end{align*}
	under the condition of this theorem.

	Next we consider the case with $p=1$. On the one hand,
	for any $h\geq 2$, we have
	\[
	\bm{A}_h=\sum_{k=1}^{r}\lambda_k^{h-1}\bm{G}_{1+k}+\sum_{k=1}^{s}\gamma_{k}^{h-1}\cos\{(h-1)\theta_k\}\bm{G}_{1+r+2k-1}+\sum_{k=1}^{s}\gamma_{k}^{h-1}\sin\{(h-1)\theta_k\}\bm{G}_{1+r+2k},
	\]
	and hence the condition that
	$\max\{|\lambda_1|,\ldots,|\lambda_r|, \gamma_1,\ldots,\gamma_s\}\leq \bar{\rho}$ implies
	\begin{equation}\label{aeq:Ahub}
		\|\bm{A}_h\| \leq \bar{\rho}^{h-1}\sum_{k=1}^{r+2s}\|\bm{G}_{1+k}\|, \quad h\geq2.
	\end{equation}
	On the other hand,  $\bm{\Psi}_1=\bm{A}_1=\bm{G}_1$. 
	Then, in view of the expression of $\bm{\Psi}_h$ in \eqref{aeq:VMA}, we consider all possible choices of the  indices $\iota_1,\dots, \iota_k\geq1$ and integer $1\leq k\leq h$ such that  $\iota_1+\cdots+ \iota_k=h$. We can categorize them according to how many of $\iota_1,\dots, \iota_k$ are equal to one. First, note that there are at most $h$ ones among them, since their sum must be $h$. In fact, if there are indeed $h$ ones, then we must have $k=h$ and $\iota_1=\cdots=\iota_h=1$,  which corresponds to  $\bm{A}_{\iota_1} \bm{A}_{\iota_2} \cdots \bm{A}_{\iota_h}=\bm{G}_1^h$. Second, it is impossible that exactly $h-1$ of them are equal to one: e.g., if $\iota_1=\cdots=\iota_{h-1}=1$, then we must have $\iota_h=1$, since they must add up to $h$.
	However, it is possible that exactly $h-l$ of  $\iota_1,\dots, \iota_k$ are equal to one, for any $2\leq l\leq h$. In such cases, the other $m=k-(h-l)$ indices (i.e., indices whose values  are no less than two) must add up to $l$. Let the values of these $m$ indices be $\tau_1,\dots, \tau_{m}\geq 2$, which satisfy   $\tau_1+\cdots+\tau_{m}=l$. Then $\bm{A}_{\iota_1} \bm{A}_{\iota_2} \cdots \bm{A}_{\iota_k}$ has the following form:
	\[
	\bm{G}_1^{i_0}\bm{A}_{\tau_1}\bm{G}_1^{i_1}\bm{A}_{\tau_2} \bm{G}_1^{i_2}\bm{A}_{\tau_3}\cdots \bm{G}_1^{i_{{m}-1}}\bm{A}_{\tau_{m}} \bm{G}_1^{i_{m}},
	\]
	where 
	$i_0, i_1,\dots, i_{m}$ are  nonnegative integers such that $i_0+i_1+\cdots+i_{m}=h-l$.
	According to the above categorization, we can rewrite $\bm{\Psi}_h$ for any $h\geq 2$ as
	\[
	\bm{\Psi}_h=\bm{G}_1^h+\sum_{l=2}^{h}\sum_{m=1}^{\lfloor l/2\rfloor}    \sum_{\substack{i_0+i_1+\cdots+ i_{m}=h-l,\\i_0, i_1,\dots, i_{m}\geq0}} \sum_{\substack{\tau_1+\cdots+ \tau_{m}=l,\\\tau_1,\dots, \tau_{m}\geq2}} \bm{G}_1^{i_0}\bm{A}_{\tau_1}\bm{G}_1^{i_1}\bm{A}_{\tau_2}\cdots \bm{G}_1^{i_{{m}-1}}\bm{A}_{\tau_{m}} \bm{G}_1^{i_{m}}.
	\]
	Thus, to prove \eqref{aeq:Psi}, we only need to show that 
	\begin{equation}\label{aeq:Psi1}
		S_1:=\sum_{h=1}^{\infty}\|\bm{G}_1^h\|<\infty
	\end{equation}
	and 
	\begin{align}\label{aeq:Psi2}
		S_2&:=\sum_{h=1}^{\infty}\sum_{l=2}^{h}\sum_{m=1}^{\lfloor l/2\rfloor}    \sum_{\substack{i_0+i_1+\cdots+ i_{m}=h-l,\\i_0, i_1,\dots, i_{m}\geq0}} \sum_{\substack{\tau_1+\cdots+ \tau_{m}=l,\\\tau_1, \dots, \tau_{m}\geq2}} \|\bm{G}_1^{i_0}\| \|\bm{A}_{\tau_1}\| \|\bm{G}_1^{i_1}\| \|\bm{A}_{\tau_2}\|\cdots \|\bm{G}_1^{i_{{m}-1}}\| \|\bm{A}_{\tau_{m}}\| \|\bm{G}_1^{i_{m}}\| \notag\\
		& <\infty.
	\end{align}
	By Theorem  5.6.15 in \cite{HJ12}, \eqref{aeq:Psi1} holds if $\rho(\bm{G}_1)<1$, which is guaranteed under the condition of Theorem \ref{thm:stationary}. Thus, we next focus on $S_2$.
	By \eqref{aeq:Ahub}, $S_2$ is upper bounded by 
	\begin{align}\label{aeq:infsum}
		&\sum_{h=1}^{\infty}\sum_{l=2}^{h}\sum_{m=1}^{\lfloor l/2\rfloor}   \bar{\rho}^{l-m} (\sum_{k=1}^{r+2s}\|\bm{G}_{1+k}\|)^{m} \sum_{\substack{i_0+i_1+\cdots+ i_{m}=h-l,\\i_0, i_1,\dots, i_{m}\geq0}} \sum_{\substack{\tau_1+\cdots+ \tau_{m}=l,\\\tau_1, \dots, \tau_{m}\geq2}} \|\bm{G}_1^{i_0}\| \|\bm{G}_1^{i_1}\| \cdots  \|\bm{G}_1^{i_{m}}\| \notag\\
		&\hspace{5mm} = \sum_{h=1}^{\infty}\sum_{l=2}^{h}\sum_{m=1}^{\lfloor l/2\rfloor}   {l-m-1 \choose m-1} \bar{\rho}^{l-m} (\sum_{k=1}^{r+2s}\|\bm{G}_{1+k}\|)^{m} \sum_{\substack{i_0+i_1+\cdots+ i_{m}=h-l,\\i_0, i_1,\dots, i_{m}\geq0}} \|\bm{G}_1^{i_0}\| \|\bm{G}_1^{i_1}\| \cdots \|\bm{G}_1^{i_{m}}\| \notag\\
		&\hspace{5mm}  = \sum_{m=1}^{\infty}\sum_{l=2m}^{\infty}   {l-m-1 \choose m-1} \bar{\rho}^{l-m} (\sum_{k=1}^{r+2s}\|\bm{G}_{1+k}\|)^{m} \sum_{h=l}^{\infty}   \sum_{\substack{i_0+i_1+\cdots+ i_{m}=h-l,\\i_0, i_1,\dots, i_{m}\geq0}} \|\bm{G}_1^{i_0}\| \|\bm{G}_1^{i_1}\| \cdots \|\bm{G}_1^{i_{m}}\| \notag\\
		&\hspace{5mm}  = \sum_{m=1}^{\infty}  f_m(\bar{\rho}) (\sum_{k=1}^{r+2s}\|\bm{G}_{1+k}\|)^{m} \sum_{i=0}^{\infty}   \sum_{\substack{i_0+i_1+\cdots+ i_{m}=i,\\i_0, i_1,\dots, i_{m}\geq0}}  \|\bm{G}_1^{i_0}\| \|\bm{G}_1^{i_1}\| \cdots \|\bm{G}_1^{i_{m}}\|  \notag\\
		&\hspace{5mm} =  S_1 \sum_{m=1}^{\infty}  \left (\frac{\bar{\rho} }{1-\bar{\rho}} \sum_{k=1}^{r+2s}\|\bm{G}_{1+k}\| S_1  \right )^{m},
	\end{align}
	where $f_m(\cdot)$ is defined as in Lemma \ref{lem:fm}.
	In the first equality above, to calculate the number of cases for $\tau_1,\dots,\tau_{m}$, we exploit the one-to-one correspondence between the  partition $(\tau_1,\dots, \tau_{m})$ such that $\tau_1+\cdots+ \tau_{m}=l$ with $\tau_1\geq2,\dots, \tau_{m}\geq2$ and the partition $(\tau_1^\prime,\dots, \tau_{m}^\prime)$ such that $\tau_1^\prime+\cdots+ \tau_{m}^\prime=l-m$ with $\tau_1^\prime\geq1,\dots, \tau_{m}^\prime\geq1$, where $\tau_1^\prime=\tau_1-1,\dots, \tau_{m}^\prime=\tau_{m}-1$. Thus, the number of partitions $(\tau_1^\prime,\dots, \tau_{m}^\prime)$ as described above is ${l-m-1 \choose m-1}$.
	
	By the condition of Theorem \ref{thm:stationary} and Lemma 5.6.10 in \cite{HJ12}, there exists some small $\epsilon>0$ such that
	\[
	\frac{\bar{\rho}}{1-\bar{\rho}} \sum_{k=1}^{r+2s}\|\bm{G}_{1+k}\| +\epsilon \leq  \frac{\bar{\rho}}{1-\bar{\rho}} \sum_{k=1}^{r+2s}\rho(\bm{G}_{1+k}) + 2\epsilon <1- \rho(\bm{G}_1).
	\]
	Moreover, 
	\[
	S_1\leq (1-\|\bm{G}_1\|)^{-1} < (1- \rho(\bm{G}_1) -\epsilon)^{-1}.
	\]
	As a result, the power series in \eqref{aeq:infsum} is convergent, and then \eqref{aeq:Psi2} is verified. This completes the proof of \eqref{aeq:Psi} in the case with $p=1$.
	
	Lastly, we consider the general case with $p\geq 1$. The proof is similar to that for the case with $p=1$. The key is to recognize the following stacked representation of the model:
	\begin{equation}\label{aeq:stackmodel}
		\bm{\bar{y}}_t= \bm{\underline{G}}_1 \bm{\bar{y}}_{t-1} + \sum_{h=p+1}^{\infty}  \bm{\underline{A}}_h  \bm{\bar{y}}_{t-h} + \bm{\bar{\varepsilon}}_{t},
	\end{equation}
	where 
	\[
	\bm{\bar{y}}_t = \left( \begin{array}{c}
		\bm{y}_t\\\bm{y}_{t-1}\\\vdots \\\bm{y}_{t-p+1}
	\end{array}\right ), \quad 
	\bm{\bar{\varepsilon}}_t = \left( \begin{array}{c}
		\bm{\varepsilon}_t\\\bm{\varepsilon}_{t-1}\\\vdots \\\bm{\varepsilon}_{t-p+1}
	\end{array}\right ), \quad 
	\underline{\bm{G}}_1= \left(\begin{matrix}
		\bm{G}_1&\bm{G}_2&\cdots&\bm{G}_{p-1}&\bm{G}_p\\
		\bm{I}&\bm{0}&\cdots&\bm{0}&\bm{0}\\
		\bm{0}&\bm{I}&\cdots&\bm{0}&\bm{0}\\
		\vdots&\vdots&\ddots&\vdots&\vdots\\
		\bm{0}&\bm{0}&\cdots&\bm{I}&\bm{0}
	\end{matrix} \right ),
	\]
	and 
	\[
	\bm{\underline{A}}_h =
	\left(\begin{matrix}
		\bm{A}_h&\bm{0}&\cdots&\bm{0}\\
		\bm{0}&\bm{0}&\cdots&\bm{0}\\
		\vdots&\vdots&&\vdots\\
		\bm{0}&\bm{0}&\cdots&\bm{0}
	\end{matrix} \right ), \quad h\geq p+1,
	\]
	where 
	\[
	\bm{A}_h=\sum_{k=1}^{r}\lambda_k^{h-p}\bm{G}_{p+j}+\sum_{k=1}^{s}\gamma_{k}^{h-p}\cos\{(h-p)\theta_k\}\bm{G}_{p+r+2k-1}+\sum_{k=1}^{s}\gamma_{k}^{h-p}\sin\{(h-p)\theta_k\}\bm{G}_{p+r+2k}.
	\]
	Observe that the form of $\bm{\bar{y}}_t$ in \eqref{aeq:stackmodel}  is similar to the model equation for $\bm{y}_t$ in the case with $p=1$, where $\underline{\bm{G}}_1$ plays the same role as $\bm{G}_1$. Similar to \eqref{aeq:Ahub}, we have
	\begin{equation*}
		\|\bm{\underline{A}}_h\| \leq \bar{\rho}^{h-p}\sum_{k=1}^{r+2s}\|\bm{G}_{p+j}\|, \quad h\geq p+1.
	\end{equation*}
	Then, by arguments similar to those of \eqref{aeq:Psi1} and \eqref{aeq:Psi2}, to prove \eqref{aeq:Psi}, it suffices to show that 
	\begin{equation*}
		\underline{S}_1:=\sum_{h=1}^{\infty}\|\bm{\underline{G}}_1^h\|<\infty
	\end{equation*}
	and 
	\begin{align*}
		\underline{S}_2&:=\sum_{h=1}^{\infty}\sum_{l=2}^{h}\sum_{m=1}^{\lfloor l/(p+1)\rfloor}    \sum_{\substack{i_0+i_1+\cdots+ i_{m}=h-l,\\i_0, i_1,\dots, i_{m}\geq0}} \sum_{\substack{\tau_1+\cdots+ \tau_{m}=l,\\\tau_1, \dots, \tau_{m}\geq p+1}} \|\bm{\underline{G}}_1^{i_0}\| \|\bm{\underline{A}}_{\tau_1}\| \|\bm{\underline{G}}_1^{i_1}\| \|\bm{\underline{A}}_{\tau_2}\|\cdots \|\bm{\underline{G}}_1^{i_{{m}-1}}\| \|\bm{\underline{A}}_{\tau_{m}}\| \|\bm{\underline{G}}_1^{i_{m}}\| \notag\\
		& <\infty.
	\end{align*}
	Similar to \eqref{aeq:infsum}, we can show that  $\underline{S}_2$ is upper bounded by 
	\begin{align*}
		&\sum_{h=1}^{\infty}\sum_{l=2}^{h}\sum_{m=1}^{\lfloor l/(p+1)\rfloor}   \bar{\rho}^{l-pm} (\sum_{k=1}^{r+2s}\|\bm{G}_{p+j}\|)^{m} \sum_{\substack{i_0+i_1+\cdots+ i_{m}=h-l,\\i_0, i_1,\dots, i_{m}\geq0}} \sum_{\substack{\tau_1+\cdots+ \tau_{m}=l,\\\tau_1, \dots, \tau_{m}\geq p+1}} \|\bm{\underline{G}}_1^{i_0}\| \|\bm{\underline{G}}_1^{i_1}\| \cdots  \|\bm{\underline{G}}_1^{i_{m}}\|\\
		&\hspace{5mm}  = \sum_{m=1}^{\infty}\sum_{l=(p+1)m}^{\infty}   {l-pm-1 \choose m-1} \bar{\rho}^{l-pm} (\sum_{k=1}^{r+2s}\|\bm{G}_{p+j}\|)^{m} \sum_{h=l}^{\infty}   \sum_{\substack{i_0+i_1+\cdots+ i_{m}=h-l,\\i_0, i_1,\dots, i_{m}\geq0}} \|\bm{\underline{G}}_1^{i_0}\| \|\bm{\underline{G}}_1^{i_1}\| \cdots \|\bm{\underline{G}}_1^{i_{m}}\| \notag\\
		&\hspace{5mm}  = \sum_{m=1}^{\infty}  f_m(\bar{\rho}) (\sum_{k=1}^{r+2s}\|\bm{G}_{p+j}\|)^{m} \sum_{i=0}^{\infty}   \sum_{\substack{i_0+i_1+\cdots+ i_{m}=i,\\i_0, i_1,\dots, i_{m}\geq0}}  \|\bm{\underline{G}}_1^{i_0}\| \|\bm{\underline{G}}_1^{i_1}\| \cdots \|\bm{\underline{G}}_1^{i_{m}}\|  \notag\\
		&\hspace{5mm} =  \underline{S}_1 \sum_{m=1}^{\infty}  \left (\frac{\bar{\rho} }{1-\bar{\rho}} \sum_{k=1}^{r+2s}\|\bm{G}_{p+j}\| \underline{S}_1  \right )^{m}.
	\end{align*}
	Following the same arguments as those for the case with $p=1$, we accomplish the proof of this theorem.
\end{proof}

\section{Proofs of Proposition \ref{prop:perturb} and Theorem \ref{thm:lasso}}\label{asec:prop2}
\subsection{Notations}\label{subsec:notations}
This section collects the notations to be used repeatedly in the proofs of  Proposition \ref{prop:perturb} and Theorem \ref{thm:lasso}. Recall that
\[
\bm{a}=(\bm{L}(\bm{\omega})\otimes \bm{I}_{N^2})\bm{g}, \quad\text{or equivalently,}\quad \bm{A}=\bm{G} (\bm{L}(\bm{\omega})\otimes \bm{I}_N)^\top,
\]
where $\bm{a}=\vect(\bm{A})$ and $\bm{g}=\vect(\bm{g})$, with $\bm{A}=(\bm{A}_1,\bm{A}_2,\dots)\in\mathbb{R}^{N\times \infty}$ and  $\bm{G}=(\bm{G}_1,\dots, \bm{G}_d)\in\mathbb{R}^{N\times Nd}$ being the horizontal concatenations of $\{\bm{A}_h\}_{h=1}^\infty$
and  $\{\bm{G}_k\}_{k=1}^d$, respectively, and
\begin{equation*}
	\bm{L}(\bm{\omega})=\left (\begin{matrix}
		\bm{I}_{p} & \bm{0}_{p\times (r+2s)} \\
		\bm{0}_{\infty\times p} & \bm{L}^{\ma}(\bm{\omega})
	\end{matrix}\right ) = \left ( 
	\begin{array}{ccc}
		\bm{I}_p&  \bm{0}_{p\times r } & \bm{0}_{p\times 2s}  \\
		\bm{0}_{\infty\times p}  &\bm{L}^{I}(\bm{\lambda})&\bm{L}^{II}(\bm{\eta})
	\end{array}
	\right ),  
\end{equation*}
where $\bm{L}^{\ma}(\bm{\omega})=(\bm{L}^{I}(\bm{\lambda}), \bm{L}^{II}(\bm{\eta}))$, with 
\[
\bm{L}^{I}(\bm{\lambda})=(\bm{\ell}^{I}(\lambda_1), \dots, \bm{\ell}^{I}(\lambda_r)) \quad\text{and}\quad
\bm{L}^{II}(\bm{\eta})=(\bm{\ell}^{II}(\bm{\eta}_1), \dots, \bm{\ell}^{II}(\bm{\eta}_s)).
\] 
For $h\geq 1$, the $h$th entry of  $\bm{\ell}^{I}(\lambda_j)\in\mathbb{R}^\infty$ is $\ell_{h}^{I}(\lambda_j)=\lambda_j^h$ and the $h$th  row of $\bm{\ell}^{II}(\bm{\eta}_m)\in\mathbb{
	R}^{\infty\times 2}$ is $\ell_{h}^{II}(\bm{\eta}_m)=(\ell_{h}^{II,1}(\bm{\eta}_m),\ell_{h}^{II,2}(\bm{\eta}_m))=(\gamma_m^h\cos(h\theta_m),\gamma_m^h\sin(h\theta_m) )$, where $1\leq j\leq r$ and $1\leq m\leq s$.

Let  $\nabla \bm{L}^{I}(\bm{\lambda})=(	\nabla \bm{\ell}^{I}(\lambda_1), \dots, \nabla \bm{\ell}^{I}(\lambda_r))$ and  $\nabla_{\theta}\bm{L}^{II}(\bm{\eta}) =(\nabla_{\theta}\bm{\ell}^{II}(\bm{\eta}_1), \dots, \nabla_{\theta}\bm{\ell}^{II}(\bm{\eta}_s))$, where 
$\nabla \bm{\ell}^{I}(\lambda_j)$ is the first-order derivative of  $\bm{\ell}^{I}(\lambda_j)$ with respect to $\lambda_j$, and $\nabla_{\theta}\bm{\ell}^{II}(\bm{\eta}_m)$ is the first-order partial derivative of $\bm{\ell}^{II}(\bm{\eta}_m)$ with respect to $\theta_m$.
Define the $\infty\times (d+r+2s)$  matrix  by augmenting $\bm{L}(\bm{\omega})$ with $(r+2s)$ extra columns: 
\begin{equation}\label{eq:Lstk}
	\bm{L}_{\rm{stack}}(\bm{\omega})
	=\left ( 
	\begin{array}{cccc}
		\bm{I}_p& \bm{0}_{p\times r } & \bm{0}_{p\times 2s} & \bm{0}_{p\times (r+2s)}\\
		\bm{0}_{\infty\times p} &\bm{L}^{I}(\bm{\lambda})&\bm{L}^{II}(\bm{\eta})&\bm{P}(\bm{\omega})
	\end{array}
	\right ), \quad \bm{P}(\bm{\omega})=\left ( \nabla \bm{L}^{I}(\bm{\lambda}),  \nabla_{\theta}\bm{L}^{II}(\bm{\eta}) \right ).
\end{equation}
Note that  since $\colsp\{\nabla_{\gamma}\bm{L}^{II}(\bm{\eta})\}=\colsp\{\nabla_{\theta}\bm{L}^{II}(\bm{\eta})\}$,  $\nabla_{\gamma}\bm{L}^{II}(\bm{\eta})$ is not included in $\bm{P}(\bm{\omega})$ to prevent singularity. 

For any $h\geq1$, let $\bm{\Delta}_h = \bm{A}_h-\bm{A}_h^*$. For any $1\leq k\leq d$, let $\bm{D}_k=\bm{G}_k-\bm{G}_k^*$. Define the corresponding horizontal concatenations 
\[
\bm{\Delta}=(\bm{\Delta}_1, \bm{\Delta}_2, \dots)=\bm{A}-\bm{A}^* \quad\text{and}\quad
\bm{D}=(\bm{D}_1,\dots, \bm{D}_d)=\bm{G}-\bm{G}^*. 
\]
Their vectorizations are 
\[
\bm{\delta}=\vect(\bm{\Delta})=\bm{a}-\bm{a}^*  \quad\text{and}\quad   \bm{d}=\vect(\bm{D})=\bm{g}-\bm{g}^*.
\]
In addition, let
\[
\bm{\phi}=\bm{\omega}-\bm{\omega}^*.
\]

Let $\bm{g}_{\rm{stack}}(\bm{\phi}, \bm{d})=\vect(\bm{G}_{\rm{stack}}(\bm{\phi}, \bm{d}))$, where the $N\times N(d+r+2s)$ matrix
\[
\bm{G}_{\rm{stack}}(\bm{\phi}, \bm{d}) = (\bm{D}, \bm{M}(\bm{\phi}))
\]
is formed by concatenating the $N\times Nd$ matrix $\bm{D}$ and the $N\times N (r+2s)$ matrix
\begin{align*}
	\bm{M}(\bm{\phi}) & = \Big( (\lambda_1 - \lambda_1^*)\bm{G}_{p+1}^{*}, \dots,  (\lambda_r - \lambda_r^*)\bm{G}_{p+r}^{*},  \\
	& \hspace{8mm} (\theta_1 - \theta_1^*)\bm{G}_{p+r+1}^{*}- \frac{\gamma_1 -\gamma_1^*}{\gamma_1^*}\bm{G}_{p+r+2}^{*}, (\theta_1- \theta_1^*) \bm{G}_{p+r+2}^{*} + \frac{\gamma_1 -\gamma_1^*}{\gamma_1^*}\bm{G}_{p+r+1}^{*}, \dots\\
	& \hspace{8mm}(\theta_s - \theta_s^*)\bm{G}_{p+r+2s-1}^{*}- \frac{\gamma_s -\gamma_s^*}{\gamma_s^*}\bm{G}_{p+r+2s}^{*}, (\theta_s- \theta_s^*) \bm{G}_{p+r+2s}^{*} + \frac{\gamma_s -\gamma_s^*}{\gamma_s^*}\bm{G}_{p+r+2s-1}^{*}  \Big),
\end{align*}
i.e., $\bm{M}(\bm{\phi})$ is  the horizontal concatenation of  $(\lambda_j - \lambda_j^*)\bm{G}_{p+j}^{*}$ for $1\leq j\leq r$ and $(\theta_m - \theta_m^*)\bm{G}_{p+r+2m-1}^{*}- \frac{\gamma_m -\gamma_m^*}{\gamma_m^*}\bm{G}_{p+r+2m}^{*}$ and $(\theta_m - \theta_m^*) \bm{G}_{p+r+2m}^{*} + \frac{\gamma_m -\gamma_m^*}{\gamma_m^*}\bm{G}_{p+r+2m-1}^{*}$ for $1\leq m\leq s$. Note that given $\bm{\omega}^*$ and $\bm{g}^*$, the function $\bm{M}(\bm{\phi})$ is linear in $\bm{\phi}$. Thus, $\bm{G}_{\rm{stack}}(\bm{\phi}, \bm{d})$ is bilinear in $\bm{\phi}$ and $\bm{d}$.

As will be shown in the proof of Theorem \ref{thm:lasso}, the following terms quantify the effect of initializing $\bm{y}_s=\bm{0}$ for $s\leq 0$:
\begin{align}\label{eq:notation_init}
	\begin{split}
		&S_1(\bm{\Delta}) = \frac{2}{T}\sum_{t=1}^{T}\langle \bm{\varepsilon}_t, \sum_{h=t}^{\infty}\bm{\Delta}_h \bm{y}_{t-h} \rangle,\\
		&S_2(\bm{\Delta})  = \frac{2}{T}\sum_{t=2}^{T}\langle \sum_{h=t}^{\infty}\bm{A}_h^* \bm{y}_{t-h}, \sum_{k=1}^{t-1}\bm{\Delta}_k \bm{y}_{t-k} \rangle,\\
		& S_3(\bm{\Delta}) = \frac{3}{T}\sum_{t=1}^{T}\Big \|\sum_{k=t}^{\infty}\bm{\Delta}_k\bm{y}_{t-k} \Big \|_2^2.
	\end{split}
\end{align}
Let $\bm{x}_t=(\bm{y}_{t-1}^\top,\bm{y}_{t-2}^\top,\dots)^\top$, and $\bm{\widetilde{x}}_{t}= (\bm{y}_{t-1}^\top,\dots,\bm{y}_1^\top,0,0,\dots)^\top$ is the initialized version of $\bm{x}_t$. For any $h\geq1$, let $\bm{\widehat{\Delta}}_h = \bm{\widehat{A}}_h-\bm{A}_h^*$. For any $1\leq k\leq d$, let $\bm{\widehat{D}}_k=\bm{\widehat{G}}_k-\bm{G}_k^*$. Define the corresponding horizontal concatenations 
\[
\bm{\widehat{\Delta}}=(\bm{\widehat{\Delta}}_1, \bm{\widehat{\Delta}}_2, \dots)=\bm{\widehat{A}}-\bm{A}^* \quad\text{and}\quad
\bm{\widehat{D}}=(\bm{\widehat{D}}_1,\dots, \bm{\widehat{D}}_d)=\bm{\widehat{G}}-\bm{G}^*,
\]
and their vectorizations
\[
\bm{\widehat{\delta}}=\vect(\bm{\widehat{\Delta}})=\bm{\widehat{a}}-\bm{a}^*  \quad\text{and}\quad   \bm{\widehat{d}}=\vect(\bm{\widehat{D}})=\bm{\widehat{g}}-\bm{g}^*,
\]
where $\bm{\widehat{a}}=\vect(\bm{\widehat{A}})$ and $\bm{\widehat{g}}=\vect(\bm{\widehat{g}})$, with $\bm{\widehat{A}}=(\bm{\widehat{A}}_1,\bm{\widehat{A}}_2,\dots)\in\mathbb{R}^{N\times \infty}$ and  $\bm{\widehat{G}}=(\bm{\widehat{G}}_1,\dots, \bm{\widehat{G}}_d)\in\mathbb{R}^{N\times Nd}$ being the horizontal concatenations of $\{\bm{\widehat{A}}_h\}_{h=1}^\infty$
and  $\{\bm{\widehat{G}}_k\}_{k=1}^d$, respectively.
Let
\[
\bm{\widehat{\phi}}=\bm{\widehat{\omega}}-\bm{\omega}^*.
\]
Moreover, denote
\[
\bm{\widehat{D}}_\ar=(\bm{\widehat{D}}_{1},\dots, \bm{\widehat{D}}_p)=\bm{\widehat{G}}_\ar-\bm{G}_\ar^*\quad\text{and}\quad
\bm{\widehat{D}}_\ma=(\bm{\widehat{D}}_{p+1},\dots, \bm{\widehat{D}}_d)=\bm{\widehat{G}}_\ma-\bm{G}_\ma^*,
\]
and their vectorizations
\[
\bm{\widehat{d}}_\ar=\vect(\bm{\widehat{D}}_\ar)=\bm{\widehat{g}}_\ar-\bm{g}^*_\ar \quad\text{and}\quad   \bm{\widehat{d}}_\ma=\vect(\bm{\widehat{D}}_\ma)=\bm{\widehat{g}}_\ma-\bm{g}^*_\ma,
\]

Given the constant $c_{\bm{\omega}}>0$ chosen as in \eqref{eq:comega}, we define the local neighborhood of $\bm{\omega}^*$,
\[\bm{\Omega}_1 = \{\bm{\omega}\in\bm{\Omega}\mid \|\bm{\omega}-\bm{\omega}^*\|_2 \leq c_{\bm{\omega}} \}.
\]
In addition, let 
\[
\bm{\Phi}= \{\bm{\phi}=\bm{\omega}-\bm{\omega}^* \mid \bm{\omega}\in\bm{\Omega}\} \quad\text{and}\quad \bm{\Phi}_1= \{\bm{\phi}=\bm{\omega}-\bm{\omega}^* \mid \bm{\omega}\in\bm{\Omega}_1\}. 
\]
Then under the conditions of Theorem \ref{thm:lasso}, we have $\widehat{\bm{\omega}}\in \bm{\Omega}_1$, $\bm{\widehat{\phi}}=\widehat{\bm{\omega}}-\bm{\omega}^*\in\bm{\Phi}_1$, and $\bm{\widehat{\delta}} = \bm{\widehat{a}}-\bm{a}^* \in 
\bm{\Upsilon}$,
where 
\begin{align*}
	\bm{\Upsilon} = \left \{\bm{\delta} = \bm{a}-\bm{a}^*\in\mathbb{R}^{\infty} \mid \bm{a}=(\bm{L}(\bm{\omega})\otimes \bm{I}_{N^2})\bm{g}, \text{ where } \bm{\omega}\in\bm{\Omega}_1 \text{ and } \bm{g}\in\mathbb{R}^{N^2 d} \right \}. 
\end{align*}

Let
\begin{equation}\label{eq:kappatilde}
	\widetilde{\kappa}_1= \kappa_1	\min\{1, \sigma_{\min, L}^2\}  \quad\text{and}\quad \widetilde{\kappa}_2= \kappa_2\max\{1, \sigma_{\max, L}^2\},
\end{equation}
where 
\[
\sigma_{\min, L}=\sigma_{\min}(\bm{L}_{\rm{stack}}(\bm{\omega}^*))\quad\text{and}\quad
\sigma_{\max, L}=\sigma_{\max}(\bm{L}_{\rm{stack}}(\bm{\omega}^*)).
\]
Note that  $\widetilde{\kappa}_1 \leq  \kappa_1\leq \kappa_2\leq  \widetilde{\kappa}_2$, and as will be shown by Lemma \ref{lemma:fullrank},
\[
\widetilde{\kappa}_1 \asymp \kappa_1 \quad\text{and}\quad \kappa_2 \asymp \widetilde{\kappa}_2.
\]

Lastly, we use $C, C_1, C_2, \ldots >0$ (or $c, c_1, c_2, \ldots >0$) to denote generic large (or small) absolute constants whose values can vary from place to place. 
For any matrix $\bm{X}$, let $\sigma_{\max}(\bm{X})$ and $\sigma_{\min}(\bm{X})$  denote its largest and smallest singular values, respectively.

\subsection{Preliminary results}\label{subsec:prelim}

In this section, we provide the important lemmas that are directly used in the proofs of  Proposition \ref{prop:perturb} and Theorem \ref{thm:lasso}. The proofs of these lemmas are relegated to Section \ref{asec:aux}.

The goal of Proposition \ref{prop:perturb} is to establish the local linearity of $\bm{\delta}(\bm{\phi}, \bm{d})$ with respect to $\bm{\phi}$ and $\bm{d}$. Specifically, within a local neighborhood of $\bm{\omega}^*$, we aim to show that
\begin{equation}\label{eq:linearize}
	\bm{\Delta}(\bm{\phi}, \bm{d}) =\bm{A}(\bm{\omega}, \bm{g})-\bm{A}^* \approx  \bm{G}_{\rm{stack}}(\bm{\phi}, \bm{d}) (\bm{L}_{\rm{stack}}(\bm{\omega}^*)\otimes \bm{I}_N)^\top,
\end{equation}
or in vector form,
\begin{equation*} 
	\bm{\delta}(\bm{\phi}, \bm{d}) =\bm{a}(\bm{\omega}, \bm{g})-\bm{a}^* \approx (\bm{L}_{\rm{stack}}(\bm{\omega}^*)\otimes \bm{I}_{N^2}) \bm{g}_{\rm{stack}}(\bm{\phi}, \bm{d}).
\end{equation*}
Note that  $\bm{G}_{\rm{stack}}(\bm{\phi}, \bm{d})$ (or $\bm{g}_{\rm{stack}}(\bm{\phi}, \bm{d})$) is  bilinear in $\bm{\phi}$ and $\bm{d}$; see Section \ref{subsec:notations}. Moreover, it is necessary to show that the $\bm{L}_{\rm{stack}}(\bm{\omega}^*)$ is bounded. This is guaranteed by Assumptions  \ref{assum:statn}(i) and \ref{assum:gap}, as established by  Lemma \ref{lemma:fullrank} below, which is built upon Lemma \ref{cor1}. 

\begin{lemma}\label{cor1}
	Under Assumption \ref{assum:statn}(i), there exists an absolute constant $C_{\ell}\geq 1$ such that for all $\bm{\omega}\in\bm{\Omega}$, $h\geq 1$, $1\leq k\leq r$, $1\leq m\leq s$, and  $\iota=1,2$, it holds
	$|\nabla\ell_{h}^{I}(\lambda_j)|\leq C_{\ell}\bar{\rho}^{h}$, $\|\nabla \ell_{h}^{II,\iota}(\bm{\eta}_m)\|_2\leq C_{\ell}\bar{\rho}^{h}$, $|\nabla^2\ell_{h}^{I}(\lambda_j)|\leq C_{\ell}\bar{\rho}^{h}$, and $\|\nabla^2 \ell_{h}^{II,\iota}(\bm{\eta}_m)\|_{\Fr}\leq C_{\ell}\bar{\rho}^{h}$.
\end{lemma}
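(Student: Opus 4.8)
The plan is to prove Lemma~\ref{cor1} by direct differentiation, since the functions $\ell_h^{I}$ and $\ell_h^{II,\iota}$ have completely explicit monomial/trigonometric forms: $\ell_h^{I}(\lambda)=\lambda^{h}$ and, for $\bm{\eta}=(\gamma,\theta)$, $\ell_h^{II,1}(\bm{\eta})=\gamma^{h}\cos(h\theta)$, $\ell_h^{II,2}(\bm{\eta})=\gamma^{h}\sin(h\theta)$. First I would record all first- and second-order derivatives. For type I: $\nabla\ell_h^{I}(\lambda)=h\lambda^{h-1}$ and $\nabla^{2}\ell_h^{I}(\lambda)=h(h-1)\lambda^{h-2}$. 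For type II, differentiating in $(\gamma,\theta)$, the gradient entries are of the form $h\gamma^{h-1}\{\cos,\sin\}(h\theta)$ and $\pm h\gamma^{h}\{\sin,\cos\}(h\theta)$, while the $2\times2$ Hessian has entries of the form $h(h-1)\gamma^{h-2}\{\cos,\sin\}(h\theta)$, $\pm h^{2}\gamma^{h-1}\{\sin,\cos\}(h\theta)$, and $\pm h^{2}\gamma^{h}\{\cos,\sin\}(h\theta)$. The one structural point worth noting is that the only term with exponent $h-2$ (namely $\nabla^{2}\ell_h^{I}$, and the $\partial_{\gamma\gamma}$ entry of the type-II Hessian) carries the prefactor $h(h-1)$, which vanishes at $h=1$; consequently no negative power of $\lambda$ or $\gamma$ is ever actually realized, and all quantities stay finite even when $\lambda$ or $\gamma$ is close to $0$. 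Note that only Assumption~\ref{assum:statn}(i) is used here; the separability Assumption~\ref{assum:gap} plays no role (it enters later, in Lemma~\ref{lemma:fullrank}, which consumes this lemma).

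Next I would bound. Every trigonometric factor is at most $1$ in absolute value, and by Assumption~\ref{assum:statn}(i) we have $|\lambda_j|\le\bar\rho$ and $\gamma_m\le\bar\rho$ for all $\bm{\omega}\in\bm{\Omega}$. Using $\sqrt{a^{2}+b^{2}}\le|a|+|b|$ for the $\ell_2$-norms of the type-II gradients and $\|\cdot\|_{\Fr}\le$ (sum of absolute entries) for the Hessians, and $h(h-1)\le h^{2}$, each of the four quantities in the lemma is then dominated by $C\,p(h)\,\bar\rho^{\,h-2}$ for a universal constant $C$ and a polynomial $p$ of degree at most two with nonnegative coefficients; concretely, $|\nabla\ell_h^{I}(\lambda_j)|\le h\bar\rho^{h-1}$, $\|\nabla\ell_h^{II,\iota}(\bm{\eta}_m)\|_{2}\le h\bar\rho^{h-1}$, $|\nabla^{2}\ell_h^{I}(\lambda_j)|\le h^{2}\bar\rho^{h-2}$, and $\|\nabla^{2}\ell_h^{II,\iota}(\bm{\eta}_m)\|_{\Fr}\le 2h^{2}\bar\rho^{h-2}$.

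The only step that requires a word of care — and the place I would flag as the main (mild) obstacle — is absorbing the polynomial prefactor into a clean geometric bound. Since $\bar\rho<1$ is a fixed absolute constant, for any fixed $\bar\rho'\in(\bar\rho,1)$ one has $\sup_{h\ge1}p(h)(\bar\rho/\bar\rho')^{h}<\infty$, so $p(h)\bar\rho^{\,h-2}\le C_\ell(\bar\rho')^{h}$ with $C_\ell$ depending only on $\bar\rho$ and $\bar\rho'$; following the convention used throughout the paper I keep writing $\bar\rho$ for this (inconsequentially enlarged) rate, which is harmless since everywhere downstream only the summability of the resulting geometric series is used. Finally, enlarge $C_\ell$ once more so that $C_\ell\ge1$ and take the maximum of the four constants so obtained; this yields a single $C_\ell$ valid simultaneously for all $h\ge1$, $1\le j\le r$, $1\le m\le s$, and $\iota\in\{1,2\}$, which is exactly the assertion of the lemma. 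Apart from this polynomial-times-geometric domination (which is precisely where Assumption~\ref{assum:statn}(i), keeping the moduli bounded away from $1$, is invoked), the argument is pure bookkeeping.
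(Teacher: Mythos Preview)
Your proposal is correct and follows essentially the same route as the paper: explicit differentiation of the monomial/trigonometric forms, then absorption of the polynomial prefactor into the geometric rate using a strict gap below $1$. The only cosmetic difference is the direction of the gap---the paper introduces an auxiliary $\rho_1<\bar\rho$ bounding the parameters and keeps $\bar\rho$ fixed (so the stated rate $\bar\rho^{h}$ is unchanged), whereas you enlarge $\bar\rho$ to some $\bar\rho'$ and rename; the paper's version is slightly cleaner since it matches the lemma as written without relabeling, but the content is identical.
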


\begin{lemma} \label{lemma:fullrank}
	Under Assumption \ref{assum:statn}(i), the matrix $\bm{L}_{\rm{stack}}(\bm{\omega}^*)$ has full rank, and its largest and smallest singular values satisfy 
	\begin{equation*}
		0<	1 \wedge c_{\bar{\rho}}\leq \sigma_{\min}(\bm{L}_{\rm{stack}}(\bm{\omega}^*)) \leq \sigma_{\max}(\bm{L}_{\rm{stack}}(\bm{\omega}^*)) \leq 1\vee C_{\bar{\rho}}.
	\end{equation*}
	where $	C_{\bar{\rho}}=C_{\ell}\sqrt{J}\bar{\rho}(1-\bar{\rho})^{-1}$ and $c_{\bar{\rho}}=0.25^s (\nu_{\mathrm{lower}}^*)^{3J/2}(\nu_{\mathrm{gap}}^*)^{J(J/2-1)}/C_{\bar{\rho}}^{J-1}$, with $J=2(r+2s)$. Moreover, if Assumption \ref{assum:gap} further holds, then $C_{\bar{\rho}}\asymp 1$ and $c_{\bar{\rho}}\asymp 1$.
\end{lemma}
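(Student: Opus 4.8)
\emph{Proof plan for Lemma~\ref{lemma:fullrank}.}

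The starting point is that $\bm{L}_{\rm{stack}}(\bm{\omega}^*)$ is block diagonal: its first $p$ columns are supported on the first $p$ rows, where they form $\bm{I}_p$, while the remaining $J:=2(r+2s)$ columns are supported on rows $p+1,p+2,\dots$. Writing $\bm{L}_0$ for the $\infty\times J$ matrix obtained from these last $J$ columns by deleting their leading $p$ zero rows, i.e.\ $\bm{L}_0=(\bm{L}^{I}(\bm{\lambda}^*),\bm{L}^{II}(\bm{\eta}^*),\bm{P}(\bm{\omega}^*))$, we get $\sigma_{\max}(\bm{L}_{\rm{stack}}(\bm{\omega}^*))=1\vee\sigma_{\max}(\bm{L}_0)$ and $\sigma_{\min}(\bm{L}_{\rm{stack}}(\bm{\omega}^*))=1\wedge\sigma_{\min}(\bm{L}_0)$, so it suffices to control the extreme singular values of $\bm{L}_0$, and full column rank of $\bm{L}_{\rm{stack}}(\bm{\omega}^*)$ is equivalent to full column rank of $\bm{L}_0$. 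For the upper bound I would use $\sigma_{\max}(\bm{L}_0)\le\|\bm{L}_0\|_{\Fr}$: geometric-series summation gives $\|\bm{\ell}^{I}(\lambda_j^*)\|_2^2\le\bar{\rho}^2/(1-\bar{\rho})^2$ and the same for each column of $\bm{L}^{II}(\bm{\eta}^*)$, while Lemma~\ref{cor1} gives $\|\cdot\|_2^2\le C_\ell^2\bar{\rho}^2/(1-\bar{\rho})^2$ for each column of $\bm{P}(\bm{\omega}^*)$; since $C_\ell\ge1$ all $J$ columns obey the latter bound, so $\|\bm{L}_0\|_{\Fr}\le\sqrt{J}\,C_\ell\bar{\rho}/(1-\bar{\rho})=C_{\bar{\rho}}$, which is the claimed bound on $\sigma_{\max}$.

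The heart of the argument is the lower bound on $\sigma_{\min}(\bm{L}_0)$, for which I would recognise the columns of $\bm{L}_0$ as a real confluent Vandermonde system. Let $\bm{v}(\mu)=(\mu^h)_{h\ge1}$ and $\bm{v}'(\mu)=(h\mu^{h-1})_{h\ge1}$, and let $\mathcal{S}^*$ be the set of $r+2s$ distinct nodes $\lambda_1^*,\dots,\lambda_r^*,\gamma_m^*e^{\pm i\theta_m^*}$, all of modulus in $(0,\bar{\rho}]$. Using $\partial_\theta(\gamma^he^{ih\theta})=ih\gamma^he^{ih\theta}$ and $\overline{\bm{v}(\mu)}=\bm{v}(\bar{\mu})$, one checks that $\bm{L}_0=\bm{M}\bm{R}$, where $\bm{M}$ is the $\infty\times J$ matrix with columns $\bm{v}(\mu),\bm{v}'(\mu)$ for $\mu\in\mathcal{S}^*$, and $\bm{R}$ is a fixed $J\times J$ block-diagonal invertible matrix: the block for each real node is the identity, and the $4\times4$ block for each conjugate pair $(\mu_m^*,\bar{\mu}_m^*)$ is the product of the $\mathrm{Re}/\mathrm{Im}\leftrightarrow$ conjugate-pair transform with $\mathrm{diag}(1,1,\mu_m^*,\bar{\mu}_m^*)$, the latter because $\nabla_\theta\bm{\ell}^{II}(\bm{\eta}_m^*)$ corresponds to $\mu_m^*\bm{v}'(\mu_m^*)$ rather than $\bm{v}'(\mu_m^*)$; a short determinant computation then gives $|\det\bm{R}|=0.25^s\prod_{m=1}^s(\gamma_m^*)^2$.

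Since $(\bm{L}_0)^\top\bm{L}_0=\bm{R}^{\HH}\bm{M}^{\HH}\bm{M}\bm{R}$, I would bound its smallest eigenvalue by the elementary inequality $\lambda_{\min}(\bm{G})\ge\det(\bm{G})/\lambda_{\max}(\bm{G})^{J-1}$ for PSD $\bm{G}$, together with $\lambda_{\max}((\bm{L}_0)^\top\bm{L}_0)\le\|\bm{L}_0\|_{\Fr}^2\le C_{\bar{\rho}}^2$, giving
\[
\sigma_{\min}(\bm{L}_0)\ \ge\ \frac{|\det\bm{R}|\,\sqrt{\det(\bm{M}^{\HH}\bm{M})}}{C_{\bar{\rho}}^{J-1}}.
\]
For $\det(\bm{M}^{\HH}\bm{M})$, Cauchy--Binet yields $\det(\bm{M}^{\HH}\bm{M})\ge|\det\widetilde{\bm{M}}|^2$, where $\widetilde{\bm{M}}$ is the submatrix of $\bm{M}$ on rows $1,\dots,J$; factoring the $2\times2$ upper-triangular block $\bigl(\begin{smallmatrix}\mu_i&1\\0&\mu_i\end{smallmatrix}\bigr)$ out of each node's column pair reduces $\widetilde{\bm{M}}$ to a standard confluent Vandermonde with all multiplicities equal to $2$, so $|\det\widetilde{\bm{M}}|=\bigl(\prod_{\mu\in\mathcal{S}^*}|\mu|^2\bigr)\prod|\mu-\nu|^4$, the last product being over unordered pairs of distinct nodes of $\mathcal{S}^*$. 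Bounding $|\mu|\ge\nu^*_{\mathrm{lower}}$ and $|\mu-\nu|\ge\nu^*_{\mathrm{gap}}$ and collecting factors gives $\sigma_{\min}(\bm{L}_0)\ge 0.25^s(\nu^*_{\mathrm{lower}})^{J+2s}(\nu^*_{\mathrm{gap}})^{J(J/2-1)}/C_{\bar{\rho}}^{J-1}$, which is $\ge c_{\bar{\rho}}$ because $\nu^*_{\mathrm{lower}}<1$ forces $(\nu^*_{\mathrm{lower}})^{J+2s}=(\nu^*_{\mathrm{lower}})^{2r+6s}\ge(\nu^*_{\mathrm{lower}})^{3J/2}$; in particular $\sigma_{\min}(\bm{L}_0)>0$, so $\bm{L}_{\rm{stack}}(\bm{\omega}^*)$ has full rank. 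Finally, under Assumption~\ref{assum:gap} the orders $r,s$ (hence $J$) are fixed and $\nu^*_{\mathrm{lower}},\nu^*_{\mathrm{gap}}$ lie in the fixed interval $[c_\nu,\bar{\rho}]\subset(0,1)$, so every factor in $C_{\bar{\rho}}$ and $c_{\bar{\rho}}$ is of constant order and $C_{\bar{\rho}}\asymp1$, $c_{\bar{\rho}}\asymp1$.

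The main obstacle I anticipate is the bookkeeping in the last two paragraphs: pinning down the change-of-basis matrix $\bm{R}$ exactly (and thereby the $0.25^s$ factor and the power of $\nu^*_{\mathrm{lower}}$ it contributes), and verifying the confluent-Vandermonde determinant formula for $\widetilde{\bm{M}}$ with row indices $1,\dots,J$ rather than $0,\dots,J-1$, which is what forces the extra $\prod_i|\mu_i|^2$ factor. The remaining ingredients --- geometric-series bounds on column norms via Lemma~\ref{cor1}, the inequality $\lambda_{\min}(\bm{G})\ge\det(\bm{G})/\lambda_{\max}(\bm{G})^{J-1}$, and Cauchy--Binet --- are routine.
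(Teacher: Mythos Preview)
Your proposal is correct and follows essentially the same route as the paper: block-diagonalise into $\bm I_p$ and the MA block, bound $\sigma_{\max}$ by the Frobenius norm via Lemma~\ref{cor1}, and bound $\sigma_{\min}$ through the determinant of the top $J\times J$ block reduced, after a real--to--complex change of basis, to a confluent Vandermonde. The only cosmetic differences are that the paper passes to the square submatrix $\bm L_{[1:J]}$ directly (using $\sigma_{\min}(\bm L_0)\ge\sigma_{\min}(\bm L_{[1:J]})$) rather than via Cauchy--Binet on the Gram matrix, and its change-of-basis $\bm P_1$ produces columns $(hx_j^h)$ instead of your $(h\mu^{h-1})$, which just shifts a factor $\prod_i|\mu_i|$ between the basis determinant and the Vandermonde determinant; the final bound is identical.
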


The proof of Theorem \ref{thm:lasso} directly relies on Lemmas \ref{lemma:devb}--\ref{lemma:init3} below.

\begin{lemma}[Deviation bound]\label{lemma:devb}
	Under Assumptions \ref{assum:statn} and \ref{assum:error}, if $\|\bm{\widehat{\omega}} - \bm{\omega}^*\|_{2}\leq c_{\bm{\omega}}$, $\sum_{j=0}^{\infty} \|\bm{\Psi}_j^* \|_{\op}^2<\infty$, and $T\gtrsim \log \{N(p\vee 1)\}$, then  with probability at least $1-C e^{-c\log N}$,
	\[
	\frac{1}{T} \left |\sum_{t=1}^{T}\langle \bm{\varepsilon}_t, \bm{\widehat{\Delta}}\bm{x}_t \rangle \right | \leq C_{\dev} \sqrt{\frac{\kappa_2\lambda_{\max}(\bm{\Sigma}_{\varepsilon})\log \{N (p\vee 1)\}}{T}} \left (\|\bm{\widehat{d}}\|_1 +\|\bm{g}_{\ma}^{*}\|_1 \| \bm{\widehat{\phi}}\|_2 \right ), 
	\]
	where  $C_{\dev}>0$ is an absolute constant.
\end{lemma}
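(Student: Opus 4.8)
\textbf{Proof strategy for Lemma \ref{lemma:devb}.} The plan is to reduce the left‑hand side to a deterministic, geometrically‑weighted mixture of the elementary martingale averages $M_h:=\max_{1\le i,j\le N}\bigl|\tfrac1T\sum_{t=1}^T\varepsilon_{it}y_{j,t-h}\bigr|$, so that the dependence on the random $\bm{\widehat\omega}$ and the supremum over the parameter space are both eliminated \emph{before} any concentration inequality is applied. First I would peel off the $\bm{\widehat\omega}$‑dependence algebraically. Writing $\bm{\widehat A}_h=\sum_{k=1}^d\ell_{h,k}(\bm{\widehat\omega})\bm{\widehat G}_k$ and $\bm A_h^*=\sum_{k=1}^d\ell_{h,k}(\bm{\omega}^*)\bm G_k^*$, and adding and subtracting $\sum_k\ell_{h,k}(\bm{\widehat\omega})\bm G_k^*$, gives the split $\bm{\widehat\Delta}_h=\sum_k\ell_{h,k}(\bm{\widehat\omega})\bm{\widehat D}_k+\sum_k\bigl(\ell_{h,k}(\bm{\widehat\omega})-\ell_{h,k}(\bm{\omega}^*)\bigr)\bm G_k^*$. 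Summing against $\bm y_{t-h}$, taking $\langle\bm{\varepsilon}_t,\cdot\rangle$, averaging over $t$, and using $\langle\bm u,\bm M\bm v\rangle=\langle\bm M,\bm u\bm v^\top\rangle$, the two pieces become $\sum_k\langle\bm{\widehat D}_k,\bm Z_k(\bm{\widehat\omega})\rangle$ and $\sum_k\langle\bm G_k^*,\widetilde{\bm Z}_k(\bm{\widehat\omega})\rangle$, where $\bm Z_k(\bm\omega)=\tfrac1T\sum_t\bm{\varepsilon}_t\bigl(\sum_h\ell_{h,k}(\bm\omega)\bm y_{t-h}\bigr)^\top$ and $\widetilde{\bm Z}_k(\bm\omega)=\tfrac1T\sum_t\bm{\varepsilon}_t\bigl(\sum_h(\ell_{h,k}(\bm\omega)-\ell_{h,k}(\bm{\omega}^*))\bm y_{t-h}\bigr)^\top$. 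By the entrywise Hölder inequality, $\bigl|\sum_k\langle\bm{\widehat D}_k,\bm Z_k(\bm{\widehat\omega})\rangle\bigr|\le\|\bm{\widehat d}\|_1\max_k\|\bm Z_k(\bm{\widehat\omega})\|_{\max}$ and $\bigl|\sum_k\langle\bm G_k^*,\widetilde{\bm Z}_k(\bm{\widehat\omega})\rangle\bigr|\le\sum_k\|\bm G_k^*\|_1\,\|\widetilde{\bm Z}_k(\bm{\widehat\omega})\|_{\max}$; since $\ell_{h,k}=\mathbb{I}_{\{h=k\}}$ is $\bm\omega$‑free for $k\le p$, only $k>p$ contributes to the second bound, and $\sum_{k>p}\|\bm G_k^*\|_1=\|\bm g_{\ma}^*\|_1$.

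The key observation is that these $\|\cdot\|_{\max}$ quantities are bounded, \emph{uniformly over $\bm\omega\in\bm\Omega$}, by a deterministic geometric mixture of the $M_h$'s. By Assumption \ref{assum:statn}(i), $\sup_{\bm\omega\in\bm\Omega}|\ell_{h,k}(\bm\omega)|\le\bar{\rho}^{h-p}$ for $k>p$ and $h>p$ (and it vanishes for $h\le p$), while by Lemma \ref{cor1}, $\sup_{\bm\omega\in\bm\Omega}\|\nabla_{\bm\omega}\ell_{h,k}(\bm\omega)\|_2\le C_\ell\bar{\rho}^{h}$. Hence $\|\bm Z_k(\bm\omega)\|_{\max}=\max_{i,j}\bigl|\sum_h\ell_{h,k}(\bm\omega)\cdot\tfrac1T\sum_t\varepsilon_{it}y_{j,t-h}\bigr|\le\sum_{h>p}\bar{\rho}^{h-p}M_h$ for $k>p$, and $\|\bm Z_k(\bm\omega)\|_{\max}=M_k$ for $k\le p$; similarly, since $\bm{\widehat\omega}\in\bm\Omega$ (this is all the hypothesis $\|\bm{\widehat\omega}-\bm{\omega}^*\|_2\le c_{\bm\omega}$ is used for here), $|\ell_{h,k}(\bm{\widehat\omega})-\ell_{h,k}(\bm{\omega}^*)|\le C_\ell\bar{\rho}^{h}\|\bm{\widehat\phi}\|_2$, so $\|\widetilde{\bm Z}_k(\bm{\widehat\omega})\|_{\max}\le C_\ell\|\bm{\widehat\phi}\|_2\sum_{h\ge1}\bar{\rho}^{h}M_h$ for $k>p$. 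Collecting, with $\Xi:=\max_{k\le p}M_k+C_\ell\sum_{h\ge1}\bar{\rho}^{h}M_h$ — a quantity free of $\bm\omega$ and of the estimators — one obtains $\tfrac1T\bigl|\sum_t\langle\bm{\varepsilon}_t,\bm{\widehat\Delta}\bm x_t\rangle\bigr|\le\bigl(\|\bm{\widehat d}\|_1+C\|\bm g_{\ma}^*\|_1\|\bm{\widehat\phi}\|_2\bigr)\Xi$. (Convergence of $\bm{\widehat\Delta}\bm x_t=\sum_{h\ge1}\bm{\widehat\Delta}_h\bm y_{t-h}$ is guaranteed by the geometric decay of $\ell_{h,k}$ and stationarity, Assumption \ref{assum:statn}(ii).) It remains to prove that $\Xi\lesssim\sqrt{\kappa_2\lambda_{\max}(\bm{\Sigma}_\varepsilon)\log\{N(p\vee1)\}/T}$ with probability at least $1-Ce^{-c\log N}$, which yields the claim with $C_{\dev}$ absorbing all constants.

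For that last step, fix $h$ and $i,j$. Since $y_{j,t-h}$ is $\sigma(\{\bm{\varepsilon}_s:s\le t-1\})$‑measurable and, by Assumption \ref{assum:error}, $\mathbb{E}[\varepsilon_{it}\mid\mathcal{F}_{t-1}]=0$ with $\varepsilon_{it}$ conditionally $\sigma^2\lambda_{\max}(\bm{\Sigma}_\varepsilon)$‑sub‑Gaussian, $\{\varepsilon_{it}y_{j,t-h}\}_t$ is a conditionally sub‑Gaussian martingale‑difference sequence. A self‑normalized martingale Bernstein inequality gives $\mathbb{P}\bigl(|\tfrac1T\sum_t\varepsilon_{it}y_{j,t-h}|>x,\ \tfrac1T\sum_t y_{j,t-h}^2\le V\bigr)\le2\exp\{-cTx^2/(\lambda_{\max}(\bm{\Sigma}_\varepsilon)V)\}$, while the VMA($\infty$) representation of $\{\bm y_t\}$ from Theorem \ref{thm:stationary}, together with $\sum_j\|\bm{\Psi}_j^*\|_{\op}^2<\infty$ and the spectral bound $\mu_{\max}(\bm{\Psi}_*)$, yields via a Hanson--Wright‑type bound for the (stationary, sub‑Gaussian) process that $\tfrac1T\sum_t y_{j,t-h}^2\lesssim\kappa_2$ simultaneously for all $j$ and all $h$ with probability $\ge1-Ce^{-cT}$ (the time‑window shift being immaterial by stationarity). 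On this event, a union bound over the $N^2$ pairs $(i,j)$, over the $p$ lags $k\le p$, and — for the tail $\sum_h\bar{\rho}^h M_h$ — over $h\le H\asymp\log(NT)$, with the remainder $\sum_{h>H}\bar{\rho}^h M_h$ absorbed because the $\psi_1$‑norm of $M_h$ is bounded uniformly in $h$ (so that the Orlicz norm of $\sum_{h>H}\bar{\rho}^h M_h$ is $\lesssim\bar{\rho}^H$ times a polynomial in $N,T$), produces, in the regime $T\gtrsim\log\{N(p\vee1)\}$ that keeps the Bernstein bound in its Gaussian branch, the desired bound on $\Xi$; the factor $(p\vee1)$ inside the logarithm comes precisely from the $pN^2$ AR‑lag terms entering the union bound. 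I expect this final step to be the main obstacle: obtaining the self‑normalized martingale deviation bound and the uniform‑in‑$h$ quadratic‑form concentration for a dependent, merely sub‑Gaussian process with the correct $\kappa_2\lambda_{\max}(\bm{\Sigma}_\varepsilon)$ scaling, where the geometric‑decay structure of the $\ell_{h,k}$'s and the auxiliary process‑concentration lemmas of the supplement do the real work.
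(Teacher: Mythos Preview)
Your decomposition and reduction are exactly the paper's: split $\bm{\widehat\Delta}_h$ into the $\bm{\widehat D}_k$-piece and the $(\ell_{h,k}(\bm{\widehat\omega})-\ell_{h,k}(\bm{\omega}^*))\bm G_k^*$-piece, apply entrywise H\"older to land on $\|\bm{\widehat d}\|_1$ and $\|\bm g_{\ma}^*\|_1\|\bm{\widehat\phi}\|_2$ times sup-norms, then push the sup over $\bm\omega$ inside the $h$-sum via the geometric decay of $\ell_{h,k}$ and its gradient, reducing everything to the elementary quantities $M_h=\max_{i,j}|T^{-1}\sum_t\varepsilon_{it}y_{j,t-h}|$, which are then controlled by the martingale bound (Lemma~\ref{lemma:martgl}) conditioned on a Hanson--Wright event for $T^{-1}\sum_t y_{j,t-h}^2$ (Lemma~\ref{lemma:hansonw}).

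The one place where the paper does something a bit different---and cleaner---is the infinite union over $h$. Rather than truncating at $H\asymp\log(NT)$ and invoking an Orlicz-tail argument for $\sum_{h>H}\bar\rho^{h}M_h$, the paper lets the Hanson--Wright slack grow with $h$: it applies the inequality with $\eta=h-p$, so the variance event becomes $T^{-1}\sum_t y_{j,t-h}^2\le\{(h-p)\sigma^2+1\}\kappa_2$ with failure probability $2e^{-c(h-p)T}$. Feeding this into the martingale bound yields $M_h\le\{2(h-p)\sigma^2+1\}\sqrt{C\kappa_2\lambda_{\max}(\bm\Sigma_\varepsilon)\log N/T}$ with failure probability $4e^{-2(h-p)\log N}$, and now the union over \emph{all} $h\ge p+1$ is a convergent geometric series in $h$---no truncation needed---while the linear-in-$h$ inflation in the bound is absorbed by $\bar\rho^{h-p}$ when summed. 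Your route is valid, but this trick eliminates the truncation/Orlicz bookkeeping entirely.
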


\begin{lemma}[Restricted strong convexity] \label{lemma:rsclasso}
	Under Assumptions \ref{assum:statn}--\ref{assum:error}, if $\|\bm{\widehat{\omega}} - \bm{\omega}^*\|_{2}\leq c_{\bm{\omega}}$ and $T\gtrsim(\kappa_2/\kappa_1)^2 \log\{ (\kappa_2/\kappa_1)(\overline{\alpha}_\ma/\underline{\alpha}_\ma) N (p\vee 1)\}$, then 	with probability at least $1-Ce^{-c\kappa_1^2 T/\kappa_2^2}$,
	\begin{equation*}
		\frac{1}{T}\sum_{t=1}^{T}\|\bm{\widehat{\Delta}}\bm{x}_t\|_{2}^2 \geq  C_{\rsc} \left [\kappa_1 \|\bm{\widehat{\Delta}}\|_{\Fr}^2  - \frac{  \kappa_2^2  \log \{N(p\vee1)\}}{\kappa_1 T}  \|\bm{\widehat{d}}\|_1^2 \right ],
	\end{equation*}
	where $C_{\rsc}>0$ is an absolute constant.  
\end{lemma}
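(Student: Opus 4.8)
The claimed inequality is a restricted–strong–convexity (RSC) bound for the \emph{non‑initialized} empirical Gram form $\frac1T\sum_{t=1}^{T}\|\widehat{\bm{\Delta}}\bm{x}_t\|_2^2$, so no initialization correction enters here; the argument mirrors the RSC theory for sparse high‑dimensional VAR models in \cite{basu2015regularized}, adapted to the parametric, partially nonconvex structure through the local linearization of Proposition \ref{prop:perturb}. \emph{Step 1 (reduction to summary predictors).} Since $\|\widehat{\bm{\omega}}-\bm{\omega}^*\|_2\le c_{\bm{\omega}}$, I would Taylor‑expand $\ell_{h,k}(\widehat{\bm{\omega}})$ around $\bm{\omega}^*$, using the derivative bounds of Lemma \ref{cor1}, to write
\[
\widehat{\bm{\Delta}}\bm{x}_t \;=\; \bm{G}_{\rm{stack}}(\widehat{\bm{\phi}},\widehat{\bm{d}})\,\bm{z}_t \;+\; \bm{R}_t,
\qquad
\bm{z}_t:=\bigl(\bm{L}_{\rm{stack}}(\bm{\omega}^*)^\top\otimes\bm{I}_N\bigr)\bm{x}_t\in\mathbb{R}^{N(d+r+2s)},
\]
where the blocks of $\bm{z}_t$ are the ``summary predictors'' obtained by filtering the past with the columns of $\bm{L}_{\rm{stack}}(\bm{\omega}^*)$ (its identity block, its exponential/damped–trigonometric block, and the derivative block $\bm{P}(\bm{\omega}^*)$), and $\bm{R}_t$ is a second‑order remainder whose lag‑$h$ coefficient is $O(\|\widehat{\bm{\phi}}\|_2^2\,C_\ell\bar\rho^{\,h})$ times $\bm{G}_k^*$ or $\widehat{\bm{D}}_k$. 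Using Proposition \ref{prop:perturb} to bound $\|\widehat{\bm{\phi}}\|_2$ by $\|\widehat{\bm{\Delta}}\|_\Fr$ and shrinking the constant $c$ inside $c_{\bm{\omega}}=\min(2,c/\alpha)$ if necessary, I would show $\frac1T\sum_t\|\bm{R}_t\|_2^2$ is at most half the eventual main term and discard it.

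\emph{Step 2 (population bound and spectral control of $\{\bm{z}_t\}$).} The process $\{\bm{z}_t\}$ is stationary with transfer function $\bm{L}_{\rm{stack}}(\bm{\omega}^*)(\cdot)$ applied to $\{\bm{y}_t\}$; by Lemma \ref{lemma:fullrank} its singular values lie in $[\,c_{\bar\rho}\wedge1,\,C_{\bar\rho}\vee1\,]$, and the spectral density of $\{\bm{y}_t\}$ lies in operator norm between $\kappa_1$ and $\kappa_2$. Hence the spectral density of $\{\bm{z}_t\}$ lies between $\widetilde\kappa_1$ and $\widetilde\kappa_2$ of \eqref{eq:kappatilde}, so $\mathbb{E}\|\bm{G}_{\rm{stack}}\bm{z}_t\|_2^2\gtrsim\widetilde\kappa_1\|\bm{G}_{\rm{stack}}\|_\Fr^2$; moreover by the definition of $\bm{M}(\cdot)$ in Section \ref{subsec:notations} together with Proposition \ref{prop:perturb}, $\|\bm{G}_{\rm{stack}}(\widehat{\bm{\phi}},\widehat{\bm{d}})\|_\Fr^2\asymp\|\widehat{\bm{d}}\|_2^2+\underline\alpha_\ma^2\|\widehat{\bm{\phi}}\|_2^2\asymp\|\widehat{\bm{\Delta}}\|_\Fr^2$. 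Lemma \ref{lemma:fullrank} also gives $\widetilde\kappa_1\asymp\kappa_1$ and $\widetilde\kappa_2\asymp\kappa_2$.

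\emph{Step 3 (uniform concentration).} Since $\bm{z}_t$ is built from $\bm{\omega}^*$ rather than $\widehat{\bm{\omega}}$, $\vect(\bm{G}_{\rm{stack}})$ ranges over the Minkowski sum of the \emph{fixed} $(r+2s)$–dimensional subspace $\{\vect(\bm{M}(\bm{\phi})):\bm{\phi}\in\mathbb{R}^{r+2s}\}$ — low‑dimensional because $r,s$ are fixed by Assumption \ref{assum:gap}(ii) — and the free block $\mathbb{R}^{N^2d}$. I would therefore (i) cover the unit sphere of the $(r+2s)$–dimensional piece by a finite $\epsilon$‑net, which contributes only a fixed factor to the failure probability, and (ii) for each net point invoke the standard one‑sided deviation/peeling argument for quadratic forms of sub‑Gaussian linear processes (Hanson–Wright together with the operator‑norm bound $\widetilde\kappa_2$ on the spectral density, exactly as in \cite{basu2015regularized}) to get, uniformly over $\bm{D}\in\mathbb{R}^{N\times Nd}$,
\[
\frac1T\sum_{t=1}^{T}\bigl\|\bm{D}\bm{z}_t^{(1)}+(\text{net term})\bm{z}_t^{(2)}\bigr\|_2^2
\;\gtrsim\; \widetilde\kappa_1\,\bigl\|\vect(\bm{G}_{\rm{stack}})\bigr\|_2^2
\;-\; C\,\widetilde\kappa_2\,\frac{\log(N^2d)}{T}\,\bigl\|\vect(\bm{D})\bigr\|_1^2,
\]
where $\bm{z}_t^{(1)},\bm{z}_t^{(2)}$ are the free‑block and derivative‑block parts of $\bm{z}_t$. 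A union bound over the net, Step 2, the identifications $\widetilde\kappa_1\asymp\kappa_1$, $\widetilde\kappa_2\asymp\kappa_2$ and $\log(N^2d)\asymp\log\{N(p\vee1)\}$ (valid since $r,s$ are fixed), a renormalization of the peeling that shifts the main term to $\kappa_1\|\widehat{\bm{\Delta}}\|_\Fr^2$ at the cost of a tolerance $\kappa_2^2\log\{N(p\vee1)\}/(\kappa_1T)$, and the sample‑size condition $T\gtrsim(\kappa_2/\kappa_1)^2\log\{(\kappa_2/\kappa_1)\alpha N(p\vee1)\}$ (needed to validate the peeling threshold), then yield the stated bound on an event of probability $1-Ce^{-c\kappa_1^2T/\kappa_2^2}$.

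The main obstacle is Step 3: one must run the uniform concentration over the \emph{non‑convex} image set of achievable deviations by \emph{simultaneously} handling the high‑dimensional free block $\widehat{\bm{D}}$ through an $\ell_1$‑cone/peeling argument — which is precisely what produces the $\|\widehat{\bm{d}}\|_1^2$ tolerance — and the low‑dimensional $\widehat{\bm{\phi}}$ direction through a net, so that its contribution is absorbed into the $\kappa_1\|\widehat{\bm{\Delta}}\|_\Fr^2$ main term rather than generating a spurious $\|\bm{g}_\ma^*\|_1\|\widehat{\bm{\phi}}\|_2$ penalty of the kind that appears in Lemma \ref{lemma:devb}. Tracking the $\kappa$‑dependence through the peeling so that the tolerance is exactly $\kappa_2^2\log\{N(p\vee1)\}/(\kappa_1T)$ and the exponent exactly $\kappa_1^2T/\kappa_2^2$, together with the uniform control of the Step 1 remainder, are the remaining delicate points.
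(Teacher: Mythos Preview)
Your Steps~2--3 are on target and match the paper's strategy: lower‐bound the main term via Hanson--Wright plus $\ell_1$ peeling on the free block $\bm{D}$ (claim~(i) in the paper) together with an $\epsilon$‐net over the $(r+2s)$‐dimensional $\bm{M}(\bm{\phi})$ direction (claim~(ii)); the spectral bounds on $\{\bm{z}_t\}$ via Lemma~\ref{lemma:fullrank} are exactly how $\widetilde\kappa_1,\widetilde\kappa_2$ enter.

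The gap is in Step~1. First, your description of the remainder is wrong: expanding $\sum_k\ell_{h,k}(\widehat{\bm{\omega}})\widehat{\bm{G}}_k$ around $(\bm{\omega}^*,\bm{G}^*)$, the piece $\nabla\ell_{h,k}(\bm{\omega}^*)^\top\widehat{\bm{\phi}}\cdot\widehat{\bm{D}}_k$ is \emph{first order in $\widehat{\bm{\phi}}$} (times $\widehat{\bm{D}}_k$), not $O(\|\widehat{\bm{\phi}}\|_2^2)$; this is $\bm{R}_{1h}$ in \eqref{eq:Rhs}. Its empirical quadratic form contributes a term of order $\widetilde\kappa_2\|\widehat{\bm{\phi}}\|_2^2\|\widehat{\bm{d}}_\ma\|_2^2$ (paper's claim~(iii)). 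Second, even granting that, your plan to ``shrink $c$ inside $c_{\bm{\omega}}$'' cannot absorb this into half the main term $\widetilde\kappa_1\|\widehat{\bm{\Delta}}\|_\Fr^2$: you would need $\|\widehat{\bm{\phi}}\|_2\lesssim\sqrt{\widetilde\kappa_1/\widetilde\kappa_2}$, but $c_{\bm{\omega}}$ is the fixed constant of Proposition~\ref{prop:perturb} and does not depend on $\kappa_1/\kappa_2$; nor does Proposition~\ref{prop:perturb} give $\|\widehat{\bm{\phi}}\|_2$ small in absolute terms when $\|\widehat{\bm{\Delta}}\|_\Fr$ is large.

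The paper's remedy is a sphere/homogeneity argument you are missing. Since $\alpha\bm{\Delta}(\bm{\phi},\bm{d})=\bm{\Delta}(\bm{\phi},\alpha\bm{d})$ (linearity in $\bm{d}$ for fixed $\bm{\phi}$), it suffices to establish the RSC inequality on $\bm{\Upsilon}\cap\{\|\bm{\Delta}\|_\Fr=\delta\}$ for a single radius $\delta$; the inequality then scales to all radii. On this sphere, Proposition~\ref{prop:perturb} forces $\|\bm{\phi}\|_2\le\delta/(c_\Delta\underline{\alpha}_\ma)$, so by choosing $\delta$ small enough---explicitly $\delta\lesssim\underline{\alpha}_\ma\sqrt{\widetilde\kappa_1/\widetilde\kappa_2}\,/(1+\alpha)$---the remainder terms (claims~(iii)--(iv), which are $O(\delta^2)$ in $\sqrt{\cdot}$ while the main term is $\Omega(\delta)$) become negligible. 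This is where the $\kappa_2/\kappa_1$ dependence in the sample‐size condition and the failure probability actually originates, not from the peeling alone.
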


\begin{lemma}[Effect of initial values I]\label{lemma:init1}
	Under Assumptions \ref{assum:statn} and \ref{assum:error}, if  $\|\bm{\widehat{\omega}} - \bm{\omega}^*\|_{2}\leq c_{\bm{\omega}}$, $\sum_{j=0}^{\infty} \|\bm{\Psi}_j^* \|_{\op}^2<\infty$, and  $T\gtrsim \log N$, then with probability at least $1 - C(p\vee1) e^{-c\log N}$, 
	\[
	|S_1(\bm{\widehat{\Delta}})| \leq \frac{C_{\init1}\sqrt{\kappa_2\lambda_{\max}(\bm{\Sigma}_{\varepsilon})(p\vee1)\log N}}{T} \left (\|\bm{\widehat{d}}\|_1 + \|\bm{g}_{\ma}^{*}\|_1 \|\bm{\widehat{\phi}}\|_2 \right),
	\]
	where $C_{\init1}>0$ is an absolute constant.
\end{lemma}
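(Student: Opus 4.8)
The plan is to reduce $S_1(\bm{\widehat{\Delta}})$, via Hölder's inequality, to a supremum over the local ball $\bm{\Omega}_1=\{\bm{\omega}\in\bm{\Omega}:\|\bm{\omega}-\bm{\omega}^*\|_2\le c_{\bm{\omega}}\}$ of certain weighted sums of the innovations, and then to bound those sums by a conditional sub-Gaussian argument exploiting the geometric decay encoded in $\bm{L}(\bm{\omega})$. \textbf{Step 1 (decomposition).} Write $\bm{\widehat{\Delta}}_h=\sum_{k=1}^{d}\ell_{h,k}(\bm{\widehat{\omega}})\bm{\widehat{D}}_k+\sum_{k=p+1}^{d}\big(\ell_{h,k}(\bm{\widehat{\omega}})-\ell_{h,k}(\bm{\omega}^*)\big)\bm{G}_k^*$, using that $\ell_{h,k}(\cdot)=\mathbb{I}_{\{h=k\}}$ is $\bm{\omega}$-free for $k\le p$, so the AR block drops out of the second sum and only the MA coefficients $\bm{g}_\ma^*$ appear there. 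Substituting into \eqref{eq:notation_init} splits $S_1(\bm{\widehat{\Delta}})=\mathrm{I}+\mathrm{II}$; expanding the inner products coordinatewise and applying the $\ell_1/\ell_\infty$ Hölder inequality gives $|\mathrm{I}|\le\tfrac{2}{T}\|\bm{\widehat{d}}\|_1 R_1$ and, after a Taylor expansion of $\ell_{h,k}(\bm{\widehat{\omega}})-\ell_{h,k}(\bm{\omega}^*)$ around $\bm{\omega}^*$, $|\mathrm{II}|\le\tfrac{2}{T}\|\bm{g}_\ma^*\|_1\|\bm{\widehat{\phi}}\|_2 R_2$, where
\[
R_1=\sup_{\bm{\omega}\in\bm{\Omega}_1}\max_{i,j,k}\Big|\sum_{t=1}^{T}\varepsilon_{it}\sum_{h\ge t}\ell_{h,k}(\bm{\omega})y_{j,t-h}\Big|,\qquad R_2=\sup_{\bm{\omega}\in\bm{\Omega}_1}\max_{i,j,k>p}\Big\|\sum_{t=1}^{T}\varepsilon_{it}\sum_{h\ge t}\nabla_{\bm{\omega}}\ell_{h,k}(\bm{\omega})y_{j,t-h}\Big\|_2.
\]
Both have the same structure: a sup of linear functionals $\sum_t\varepsilon_{it}w_{jt,k}(\bm{\omega})$ of the innovations, with weights $w_{jt,k}(\bm{\omega})=\sum_{h\ge t}\ell_{h,k}(\bm{\omega})y_{j,t-h}$ (or its $\bm{\omega}$-gradient) depending only on $\{\bm{y}_s:s\le0\}$, since $h\ge t$ forces $t-h\le0$.

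\textbf{Step 2 (conditional sub-Gaussianity).} The key probabilistic fact is that $\bm{\varepsilon}_1,\dots,\bm{\varepsilon}_T$ are jointly independent of $\mathcal{F}_0:=\sigma(\bm{y}_s:s\le0)$, because each $\bm{y}_s$ with $s\le0$ is a measurable function of $\{\bm{\varepsilon}_{s'}:s'\le0\}$. Hence, conditionally on $\mathcal{F}_0$, each $\sum_t\varepsilon_{it}w_{jt,k}(\bm{\omega})$ is a linear combination of the i.i.d. $\sigma^2$-sub-Gaussian vectors $\bm{\xi}_t=\bm{\Sigma}_\varepsilon^{-1/2}\bm{\varepsilon}_t$ with deterministic coefficients, so it is $\big(C\sigma^2\lambda_{\max}(\bm{\Sigma}_\varepsilon)\sum_t w_{jt,k}(\bm{\omega})^2\big)$-sub-Gaussian. \textbf{Step 3 (conditional variance via geometric decay).} By Lemma \ref{cor1}, $|\ell_{h,k}(\bm{\omega})|\vee\|\nabla_{\bm{\omega}}\ell_{h,k}(\bm{\omega})\|_2$ is $\mathbb{I}_{\{h=k\}}$ for $k\le p$ and $\lesssim\bar{\rho}^{h-p}$ for $k>p$, uniformly over $\bm{\Omega}$. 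Applying Cauchy–Schwarz to $w_{jt,k}(\bm{\omega})^2$ and summing over $t$ bounds $\sum_t w_{jt,k}(\bm{\omega})^2$, uniformly in $\bm{\omega}$, by a constant times $U_j+V_j$, where $U_j=\sum_{s=1-p}^{0}y_{j,s}^2$ is the AR contribution (a length-$p$ pre-sample window) and $V_j=\sum_{m\ge0}\bar{\rho}^{m}y_{j,-m}^2$ the MA contribution, with $\mathbb{E}U_j\lesssim(p\vee1)\kappa_2$ and $\mathbb{E}V_j\lesssim\kappa_2$ by stationarity and $\lambda_{\max}(\var(\bm{y}_t))\le\kappa_2$. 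Sub-exponential / Hanson–Wright concentration of these quadratic forms, together with the hypothesis $T\gtrsim\log N$, controls $\max_j(U_j+V_j)$ on an event of probability $\ge1-Cp\,e^{-c\log N}$ — the factor $p$ (hence the $(p\vee1)$ in the failure probability) coming from the union over the AR window.

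\textbf{Step 4 (uniformity and union bound).} Since $\bm{\Omega}_1$ lies in the fixed-dimensional space $(-1,1)^r\times\bm{\varPi}^s$, and by Lemma \ref{cor1} the map $\bm{\omega}\mapsto\sum_t\varepsilon_{it}w_{jt,k}(\bm{\omega})$ is Lipschitz with a random constant of the same order as the quantities already controlled (via the bounds on $\nabla_{\bm{\omega}}\ell_{h,k}$ and $\nabla^2_{\bm{\omega}}\ell_{h,k}$), a standard $\epsilon$-net argument reduces the sup over $\bm{\Omega}_1$ to a maximum over a net of cardinality $(C/\epsilon)^{r+2s}$. Combining the conditional sub-Gaussian tail of Step 2 — with the conditional variance replaced by its high-probability bound from Step 3 — with a union bound over the $N^2d$ index triples and the net gives $R_1,R_2\lesssim\sqrt{\kappa_2\lambda_{\max}(\bm{\Sigma}_\varepsilon)(p\vee1)\log N}$ on an event of probability $\ge1-C(p\vee1)e^{-c\log N}$; plugging into Step 1 yields the claim. (For $\mathrm{II}$ one additionally uses the Taylor bound $|\ell_{h,k}(\bm{\widehat{\omega}})-\ell_{h,k}(\bm{\omega}^*)|\le\|\bm{\widehat{\phi}}\|_2\sup_{\bm{\omega}\in\bm{\Omega}_1}\|\nabla_{\bm{\omega}}\ell_{h,k}(\bm{\omega})\|_2$, which is why $R_2$ is phrased with the gradient and produces the $\|\bm{g}_\ma^*\|_1\|\bm{\widehat{\phi}}\|_2$ term rather than $\|\bm{g}_\ma^*\|_1 c_{\bm{\omega}}$.)

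\textbf{Main obstacle.} The delicate point is combining Steps 3 and 4: the geometrically decaying MA weights localize $\sum_t w_{jt,k}(\bm{\omega})^2$ to $O(1)$ effective pre-sample lags, so unlike the corresponding quantity in the non-truncated deviation bound (Lemma \ref{lemma:devb}), it does not concentrate on its own; one must use $T\gtrsim\log N$ and the per-coordinate (rather than worst-case) conditional variance inside the union bound to keep the single $\log N$ factor from being squared, while the length-$p$ AR window simultaneously contributes the extra $(p\vee1)$. Reconciling these two regimes into the stated clean bound, uniformly over the nonconvex parameter $\bm{\omega}$ in $\bm{\Omega}_1$, is where the real work lies; the proofs of the companion lemmas for $S_2$ and $S_3$ follow the same template but with the single-innovation inner product replaced by a product of two truncated linear processes, which requires an extra conditioning/decoupling step.
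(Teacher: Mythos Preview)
Your route is workable but differs from the paper's in two places, and the paper's is noticeably cleaner.

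\emph{Decomposition.} You split $\bm{\widehat{\Delta}}_h$ into a $\bm{\widehat{D}}$-piece and a $\bm{G}^*_\ma$-piece and then take a single sup over $\bm{\Omega}_1$. The paper instead partitions $S_1$ by \emph{time and lag ranges} into $S_{11}+S_{12}+S_{13}$: the first $p$ time points with AR lags ($S_{11}=\sum_{t\le p}\langle\bm{\varepsilon}_t,\sum_{h=t}^{p}\bm{\widehat{D}}_h\bm{y}_{t-h}\rangle$), the first $p$ time points with MA lags ($S_{12}$), and the remaining times $t>p$ ($S_{13}$, necessarily MA lags since $h\ge t>p$). With this split, the factor $\sqrt{p\vee1}$ arises transparently from $S_{11},S_{12}$ (each a sum of at most $p$ terms in $t$), while $S_{13}$ carries no $p$.

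\emph{No $\epsilon$-net.} The paper avoids discretizing $\bm{\Omega}_1$ altogether. After interchanging $\sum_t$ and $\sum_h$, it pulls the supremum over $\bm{\omega}$ \emph{inside} the lag sum via the uniform bounds $\sup_{\bm{\omega}}|\ell_{h,k}(\bm{\omega})|\le\bar\rho^{\,h-p}$ and $\sup_{\bm{\phi}\in\bm{\Phi}_1}|\ell_{h,k}(\bm{\omega}^*+\bm{\phi})-\ell_{h,k}(\bm{\omega}^*)|/\|\bm{\phi}\|_2\le 2C_\ell\bar\rho^{\,h-p}$ from Lemma~\ref{cor1}. What remains are the $\bm{\omega}$-free quantities $\max_{i,j}\bigl|\sum_{t}\varepsilon_{it}y_{j,t-h}\bigr|$, bounded lag by lag with exactly the martingale/Hanson--Wright machinery already developed for Lemma~\ref{lemma:devb}. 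Because the geometric weight $\bar\rho^{\,h-p}$ sits outside, the per-lag bound can grow polynomially in $h-p$ (taking the Hanson--Wright deviation parameter $\eta\asymp h-p$), and both the bounds and the failure probabilities still sum over $h$. This lag-by-lag device replaces your $\epsilon$-net plus Lipschitz step entirely and recycles the earlier lemma almost verbatim; it is also the paper's concrete answer to what you call the ``main obstacle.''

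Your independence observation $\bm{\varepsilon}_{1:T}\perp\sigma(\bm{y}_s:s\le0)$ is correct and elegant; here it is equivalent to the martingale bound the paper invokes (Lemma~\ref{lemma:martgl}), since all weights are $\mathcal{F}_0$-measurable. One point to tighten in your Step~3: you appeal to $T\gtrsim\log N$ to control $\max_j(U_j+V_j)$, but neither $U_j$ nor $V_j$ involves $T$, so that hypothesis cannot help at this step---these quadratic forms have only $O(p)$ and $O(1)$ effective degrees of freedom, and a naive union over $j$ costs an extra $\log N$ in their high-probability bounds. The paper's lag-by-lag route sidesteps this by never forming the aggregate conditional variance: the small-DOF contribution at lag $h=p+1$ is damped by the weight $\bar\rho$, and for larger $h$ the number of terms grows with $h-p$.
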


\begin{lemma}[Effect of initial values II]\label{lemma:init2}
	Under Assumptions \ref{assum:statn}--\ref{assum:error}, if $T\gtrsim \log \{N (p\vee1)\}$ and $\|\bm{\widehat{\omega}} - \bm{\omega}^*\|_{2}\leq c_{\bm{\omega}}$, then 
	with probability at least $1 - C(p\vee1) e^{-c\log \{N(p\vee1)\}}$, 
	\[
	|S_2(\bm{\widehat{\Delta}})| \leq \frac{C_{\init2} \kappa_2 (p\vee1)^2  }{T} \left (\|\bm{\widehat{d}}\|_1 + \|\bm{g}_{\ma}^{*}\|_1 \|\bm{\widehat{\phi}}\|_2 \right),
	\]
	where $C_{\init2}>0$ is an absolute constant.
\end{lemma}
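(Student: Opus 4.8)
\textbf{Proof sketch for Lemma \ref{lemma:init2}.}
Unlike $S_1(\bm{\widehat{\Delta}})$, the quantity $S_2(\bm{\widehat{\Delta}})$ contains no innovations: it is a purely deterministic bilinear functional of the sample path $\{\bm{y}_t\}$, and it is linear in $\bm{\widehat{\Delta}}$ through its right slot. The plan is to exploit two features. First, the left factor is a pre-sample object: $\sum_{h=t}^{\infty}\bm{A}_h^* \bm{y}_{t-h}=\sum_{s\leq 0}\bm{A}_{t-s}^* \bm{y}_{s}$ depends only on $\{\bm{y}_s:s\leq 0\}$, and by the exponential decay $\|\bm{A}_h^*\|_{\op}\lesssim \bar{\rho}^{\,h-p}$ for $h>p$ (Lemma \ref{cor1}), it is geometrically small in $t$ once $t>p$. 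Second, the right factor $\sum_{k=1}^{t-1}\bm{\widehat{\Delta}}_k \bm{y}_{t-k}$ can be controlled by $\|\bm{\widehat{d}}\|_1$ and $\|\bm{g}_\ma^*\|_1\|\bm{\widehat{\phi}}\|_2$, uniformly over $\bm{\widehat{\omega}}\in\bm{\Omega}_1$. So the strategy is: bound the bilinear form by Cauchy--Schwarz inside each inner product, split the outer sum over $t$ into the regimes $t\leq p$ and $t>p$, and then control the surviving quadratic forms in $\bm{y}$ by concentration.

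I would first record a rowwise bound on $\bm{\widehat{\Delta}}_k$. For $1\leq k\leq p$ one has $\bm{\widehat{\Delta}}_k=\bm{\widehat{D}}_k$, whereas for $k>p$ a mean-value argument applied to the weights $\ell_{k,\cdot}(\cdot)$, together with the derivative bounds of Lemma \ref{cor1} (this is precisely the linearization behind Proposition \ref{prop:perturb}; cf.\ $\bm{G}_{\rm{stack}}$ in Section \ref{subsec:notations}), gives $\|\bm{\widehat{\Delta}}_k\|_{\op}\lesssim \bar{\rho}^{\,k-p}\bigl(\|\bm{\widehat{d}}_\ma\|_1+\|\bm{g}_\ma^*\|_1\|\bm{\widehat{\phi}}\|_2\bigr)$; summing, $\sum_{k\geq1}\|\bm{\widehat{\Delta}}_k\|_{\op}\lesssim \|\bm{\widehat{d}}\|_1+\|\bm{g}_\ma^*\|_1\|\bm{\widehat{\phi}}\|_2=:B$. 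Then
\[
|S_2(\bm{\widehat{\Delta}})|\leq \frac{2}{T}\sum_{t=2}^{T}\Bigl\|\textstyle\sum_{h\geq t}\bm{A}_h^*\bm{y}_{t-h}\Bigr\|_2\,\Bigl(\textstyle\sum_{k=1}^{t-1}\|\bm{\widehat{\Delta}}_k\|_{\op}\|\bm{y}_{t-k}\|_2\Bigr)\leq \frac{2B}{T}\sum_{t=2}^{T}\Bigl\|\textstyle\sum_{h\geq t}\bm{A}_h^*\bm{y}_{t-h}\Bigr\|_2\,\max_{1\leq s\leq t-1}\|\bm{y}_s\|_2 .
\]
For $t>p$, $\bigl\|\sum_{h\geq t}\bm{A}_h^*\bm{y}_{t-h}\bigr\|_2\lesssim \bar{\rho}^{\,t-p}\sum_{j\geq0}\bar{\rho}^{\,j}\|\bm{y}_{-j}\|_2$, and $\sum_{t>p}\bar{\rho}^{\,t-p}=O(1)$; for $2\leq t\leq p$ there are at most $p$ such indices, and for each the left factor is a sum of at most $p$ pre-sample terms $\|\bm{G}_h^*\|_{\op}\|\bm{y}_{t-h}\|_2$ ($t\leq h\leq p$) plus a geometric tail. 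Collecting the two regimes is what produces the factor $(p\vee1)^2/T$ multiplying $B$.

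The remaining quantities are random: the weighted pre-sample sum $\sum_{j\geq0}\bar{\rho}^{\,j}\|\bm{y}_{-j}\|_2$ and the maxima $\max_{s}\|\bm{y}_s\|_2$ over the relevant $\lesssim (p\vee 1)+T$ time points. Under Assumption \ref{assum:error} these are quadratic forms in the i.i.d.\ $\sigma^2$-sub-Gaussian innovations $\{\bm{\xi}_t\}$ with coefficient matrices built from $\{\bm{\Psi}_j^*\}$; using $\lambda_{\max}(\var(\bm{y}_t))\leq\kappa_2$ together with $\sum_j\|\bm{\Psi}_j^*\|_{\op}^2<\infty$, their means are of the order controlled by $\kappa_2$, and the Hanson--Wright concentration already used for Lemmas \ref{lemma:devb}--\ref{lemma:init1}, combined with a union bound over the $\lesssim (p\vee1)$ small-index time points and the $N$ coordinates, shows that they concentrate around their means on an event of probability at least $1-C(p\vee1)e^{-c\log\{N(p\vee1)\}}$, provided $T\gtrsim\log\{N(p\vee1)\}$. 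On that event the claimed bound $|S_2(\bm{\widehat{\Delta}})|\lesssim \kappa_2(p\vee1)^2 T^{-1}\bigl(\|\bm{\widehat{d}}\|_1+\|\bm{g}_\ma^*\|_1\|\bm{\widehat{\phi}}\|_2\bigr)$ follows. The main obstacle is exactly the small-$t$ regime: for $t\leq p$ the autoregressive blocks $\bm{A}_1^*,\dots,\bm{A}_p^*$ do \emph{not} decay, so each of these $\lesssim (p\vee 1)$ time points contributes a pre-sample term with up to $p$ summands — carefully bounding this interaction is what forces the quadratic power $(p\vee1)^2$, and it is also what dictates the $(p\vee1)$ prefactor and the $\log\{N(p\vee1)\}$ rate inside the exponent through the union bound; away from this regime the geometric decay of $\bm{A}_h^*$ is precisely what makes $S_2$ a genuine $O(1/T)$ term, i.e., what makes the zero-initialization negligible.
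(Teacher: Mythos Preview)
Your decomposition into small-$t$ and large-$t$ regimes and your explanation of where the $(p\vee1)^2$ arises are qualitatively correct, but the bounding step
\[
\Bigl|\bigl\langle \textstyle\sum_{h\geq t}\bm{A}_h^*\bm{y}_{t-h},\ \sum_{k<t}\bm{\widehat{\Delta}}_k\bm{y}_{t-k}\bigr\rangle\Bigr|
\;\leq\;\Bigl\|\textstyle\sum_{h\geq t}\bm{A}_h^*\bm{y}_{t-h}\Bigr\|_2\cdot \sum_{k}\|\bm{\widehat{\Delta}}_k\|_{\op}\|\bm{y}_{t-k}\|_2
\]
creates a gap that cannot be closed. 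The quantities $\|\bm{y}_s\|_2$ are not ``of the order controlled by $\kappa_2$'': since $\kappa_1\leq\lambda_{\min}(\bm{\Sigma}_y)\leq\lambda_{\max}(\bm{\Sigma}_y)\leq\kappa_2$, one has $\mathbb{E}\|\bm{y}_s\|_2^2=\trace(\bm{\Sigma}_y)\in[N\kappa_1,\,N\kappa_2]$, and Hanson--Wright will concentrate $\|\bm{y}_s\|_2^2$ around this mean, not around $\kappa_2$. The same $\sqrt{N}$ scaling infects $\bigl\|\sum_{h\geq t}\bm{A}_h^*\bm{y}_{t-h}\bigr\|_2$. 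Hence your final bound becomes $|S_2(\bm{\widehat{\Delta}})|\lesssim T^{-1}\,N\kappa_2\,B$ (ignoring operator-norm issues with $\bm{A}_h^*$ and $\bm{G}_k^*$, which add further uncontrolled factors), whereas the target is $T^{-1}\kappa_2(p\vee1)^2 B$ with no $N$. In the high-dimensional regime this is fatal.

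The paper avoids this by never passing through Euclidean norms of $\bm{y}_t$. Instead it exploits linearity of $S_2$ in $\bm{\widehat{\Delta}}$ via $\ell_1/\ell_\infty$ duality: writing, for example,
\[
\Bigl|\sum_{t}\bigl\langle \textstyle\sum_{h}\bm{A}_h^*\bm{y}_{t-h},\ \bm{\widehat{D}}_k\bm{y}_{t-k}\bigr\rangle\Bigr|
=\Bigl|\bigl\langle \textstyle\sum_{t,h}\bm{A}_h^*\bm{y}_{t-h}\bm{y}_{t-k}^\top,\ \bm{\widehat{D}}_k\bigr\rangle\Bigr|
\leq \|\vect(\bm{\widehat{D}}_k)\|_1\cdot\Bigl\|\textstyle\sum_{t,h}\bm{A}_h^*\bm{y}_{t-h}\bm{y}_{t-k}^\top\Bigr\|_{\max},
\]
and then controlling the entrywise maximum of the random matrix by scalar Hanson--Wright bounds of the form $|\sum_t y_{i,t}\,\bm{y}_{t-h}^\top\bm{a}_{j,h+k}^*|\lesssim\kappa_2$, with a union bound over $i,j,k$ costing only $\log\{N(p\vee1)\}$ in the exponent. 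The point is that max-norms of sub-Gaussian matrices scale with $\log N$, not $N$; your Cauchy--Schwarz route trades this away. To repair your argument you would need to reorganize it so that the dependence on $\bm{\widehat{D}}$ is through $\|\bm{\widehat{d}}\|_1$ against an entrywise quantity, exactly as in the paper's treatment of $S_{11}$ and $S_{21}$.
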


\begin{lemma}[Effect of initial values III] \label{lemma:init3}
	Under Assumptions \ref{assum:statn}--\ref{assum:error}, if $\log N\gtrsim  (\kappa_2/\kappa_1)^2$ and $\|\bm{\widehat{\omega}} - \bm{\omega}^*\|_{2}\leq c_{\bm{\omega}}$, then	with probability at least $1-C  e^ {-c \kappa_1^2 (p\vee1) \log \{N(p\vee1)\}/\kappa_2^2}$,
	\begin{equation*}
		|S_{3}(\bm{\widehat{\Delta}})| \leq  \frac{ C_{\init3}  \kappa_2 (p\vee1)}{T}  \left[\|\bm{\widehat{\Delta}}\|_{\Fr}^2 \log \{N(p\vee1)\} + \|\bm{\widehat{d}}\|_1^2 \right],
	\end{equation*}
	where $C_{\init3}>0$ is an absolute constant.
\end{lemma}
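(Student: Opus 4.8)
The plan is to bound $S_3(\bm{\widehat{\Delta}})$ by first reducing it to a quadratic form in the pre-sample observations, controlling that form in expectation for a frozen direction, and then upgrading the control to hold uniformly over the data-dependent $\bm{\widehat{\Delta}}$.

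\textbf{Reindexing and the $(p\vee1)$ factor.} Recalling the definition of $S_3$ in \eqref{eq:notation_init}, I would substitute $s=k-t$ so that the inner sum $\sum_{k=t}^{\infty}\bm{\Delta}_k\bm{y}_{t-k}=\sum_{s\geq0}\bm{\Delta}_{t+s}\bm{y}_{-s}$ involves only the pre-sample values $\bm{y}_0,\bm{y}_{-1},\dots$, i.e., exactly the observations the initialization sets to $\bm{0}$. For a \emph{deterministic} $\bm{\Delta}$, the stationarity in Assumption \ref{assum:statn}(ii) together with the spectral-density bound $2\pi\lambda_{\max}(f_y)\leq\lambda_{\max}(\bm{\Sigma}_\varepsilon)\mu_{\max}(\bm{\Psi}_*)=\kappa_2$ (in the spirit of \cite{basu2015regularized}) gives $\mathbb{E}\|\sum_{k\geq t}\bm{\Delta}_k\bm{y}_{t-k}\|_2^2\lesssim\kappa_2\sum_{k\geq t}\|\bm{\Delta}_k\|_{\Fr}^2$, whence $\sum_{t=1}^{T}\mathbb{E}\|\cdot\|_2^2\lesssim\kappa_2\sum_{k\geq1}k\|\bm{\Delta}_k\|_{\Fr}^2$. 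Splitting into the AR block $k\leq p$ (at most $p$ lags, where $\bm{\Delta}_k=\bm{D}_k$) and the MA block $k>p$, and invoking the geometric decay $|\ell_{h,j}(\bm{\omega})|,\|\nabla\ell_{h}(\bm{\omega})\|\lesssim\bar{\rho}^{h-p}$ from Lemma \ref{cor1} together with the bounded singular values of $\bm{L}_{\rm{stack}}(\bm{\omega}^*)$ from Lemma \ref{lemma:fullrank} (which make the weighted and unweighted MA energies comparable), I would show $\sum_k k\|\bm{\Delta}_k\|_{\Fr}^2\lesssim(p\vee1)\|\bm{\Delta}\|_{\Fr}^2$. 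This is the source of the $(p\vee1)/T$ prefactor and yields $\mathbb{E}[S_3(\bm{\Delta})]\lesssim\kappa_2(p\vee1)T^{-1}\|\bm{\Delta}\|_{\Fr}^2$ for each fixed $\bm{\Delta}$.

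\textbf{Making the bound uniform.} Because $\bm{\widehat{\Delta}}$ is random and lies in $\bm{\Upsilon}$, the expectation estimate must be promoted to a one-sided deviation bound holding uniformly over that set. Using the local linearization \eqref{eq:linearize}, for $\bm{\omega}\in\bm{\Omega}_1$ every admissible $\bm{\Delta}$ is, up to a negligible second-order remainder, the image of $\bm{G}_{\rm{stack}}(\bm{\phi},\bm{d})$ under the fixed bounded operator $\bm{L}_{\rm{stack}}(\bm{\omega}^*)\otimes\bm{I}_N$; hence for each frozen $\bm{\phi}$ the tail sum is linear in $\bm{d}$ and in $\bm{\phi}$, reducing $S_3$ to a quadratic form whose kernel is a random positive semidefinite matrix assembled from the pre-sample data. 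Via the VMA($\infty$) representation $\bm{y}_{-s}=\sum_j\bm{\Psi}_j^*\bm{\varepsilon}_{-s-j}$, each coordinate of $\sum_{s\geq0}\bm{\Delta}_{t+s}\bm{y}_{-s}$ is a linear functional of the i.i.d. sub-Gaussian $\bm{\xi}$'s, so a Hanson--Wright / sub-exponential concentration inequality applies; a union bound over the $N$ output coordinates supplies the factor $\log\{N(p\vee1)\}$ attached to $\|\bm{\widehat{\Delta}}\|_{\Fr}^2$, while a net over the low-dimensional parameter $\bm{\phi}\in\bm{\Phi}_1$ (of cardinality $e^{O(r+2s)}$) is negligible. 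The residual fluctuation not proportional to the mean is bounded by Hölder's inequality against the entrywise maximum of the pre-sample Gram matrix, producing the $\|\bm{\widehat{d}}\|_1^2$ term. The stated failure probability $Ce^{-c(\kappa_1/\kappa_2)^2(p\vee1)\log\{N(p\vee1)\}}$ is exactly the sub-exponential tail surviving the coordinate union bound, and the hypothesis $\log N\gtrsim(\kappa_2/\kappa_1)^2$ forces this exponent to be at least of order $(p\vee1)$. Assembling the deterministic mean bound with this uniform deviation bound, and absorbing the linearization remainder (controlled by Proposition \ref{prop:perturb} inside $\bm{\Omega}_1$) into the constants, gives $|S_3(\bm{\widehat{\Delta}})|\lesssim\kappa_2(p\vee1)T^{-1}[\|\bm{\widehat{\Delta}}\|_{\Fr}^2\log\{N(p\vee1)\}+\|\bm{\widehat{d}}\|_1^2]$ on the stated event.

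\textbf{Main obstacle.} The principal difficulty is precisely the uniform control in the second step: unlike the deterministic expectation computation, $\bm{\widehat{\Delta}}$ depends on the same data entering the pre-sample sum, so one cannot simply take expectations. The geometric decay of the MA coefficients is essential both to truncate the infinite pre-sample sum to an effectively finite window and to convert $\sum_k k\|\bm{\Delta}_k\|_{\Fr}^2$ into $(p\vee1)\|\bm{\Delta}\|_{\Fr}^2$; the delicate part is arranging the peeling and the $\ell_1$--$\ell_\infty$ split so that the dominant term carries only the $\|\bm{\widehat{\Delta}}\|_{\Fr}^2\log\{N(p\vee1)\}$ penalty while all remaining fluctuation is charged to $\|\bm{\widehat{d}}\|_1^2$, matching the two-term form needed as an input to the proof of Theorem \ref{thm:lasso}.
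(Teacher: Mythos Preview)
Your proposal correctly identifies the broad ingredients (geometric decay of the MA weights, Hanson--Wright for the pre-sample quadratic form, local linearization via $\bm{L}_{\rm stack}(\bm{\omega}^*)$), but the mechanism you propose for making the bound \emph{uniform} over $\bm{\widehat{d}}$ has a genuine gap, and it is the crux of the lemma.

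You write that ``a union bound over the $N$ output coordinates supplies the factor $\log\{N(p\vee1)\}$ attached to $\|\bm{\widehat{\Delta}}\|_{\Fr}^2$,'' with the $\|\bm{\widehat{d}}\|_1^2$ term coming from an $\ell_1$--$\ell_\infty$ residual. But union-bounding over the $N$ rows of $\bm{\widehat{\Delta}}$ does \emph{not} give uniformity over the $N^2d$-dimensional direction $\bm{d}$; within each row you still have an $N d$-dimensional free vector, and a naive net there would cost $Nd\log N$ in the exponent, destroying the stated probability. The paper does not obtain the two-term structure by splitting ``mean + deviation'' and union-bounding coordinates. Instead it reuses, almost verbatim, the restricted-strong-convexity machinery from the proof of Lemma~\ref{lemma:rsclasso}: apply Hanson--Wright pointwise to $\bm{u}^\top(\widehat{\bm{\Sigma}}-\bm{\Sigma})\bm{u}$ at deviation scale $\eta\asymp\log N$, union-bound over the sparse set $\pazocal{K}(2K)=\{\|\bm{u}\|_2\leq1,\|\bm{u}\|_0\leq2K\}$ with $K\asymp\log N$, and invoke Supplementary Lemma~12 of \cite{LW12} to pass from the $2K$-sparse bound to the unrestricted two-term inequality $|\bm{u}^\top\widehat{\bm{\Sigma}}\bm{u}|\lesssim\kappa_2\log N\,\|\bm{u}\|_2^2+\kappa_2\|\bm{u}\|_1^2$ holding for \emph{all} $\bm{u}$. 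The $\log\{N(p\vee1)\}$ factor is thus the \emph{chosen} Hanson--Wright deviation level (needed so the tail survives the sparse covering), not the cost of a coordinate union bound.

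Concretely, the paper first decomposes $S_3(\bm{\Delta})$ into $S_{31}$ (AR lags, $t\leq p$), $S_{32}$ (MA tail, $t\leq p$), and $S_{33}$ ($t\geq p+1$). For $S_{31}$ the argument above is applied directly to each $\|\bm{D}_k\bm{X}_0^k\|_{\Fr}$. For $S_{32}$ one writes $\bm{\Delta}_{[p+1]}\bm{x}_{t-p}=\bm{D}_{\ma}\bm{z}_t+\bm{M}(\bm{\phi})\bm{v}_t+\bm{D}_{\ma}\bm{h}_t(\bm{\phi})+\bm{G}_{\ma}^*\bm{b}_t(\bm{\phi})$ and bounds each piece exactly as in claims (i)--(iv) of the RSC proof (with the $\bm{\phi}$-net of Lemma~\ref{lemma:epsilon-net} handling the $(r+2s)$-dimensional $\bm{M}(\bm{\phi})\bm{V}$ term). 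For $S_{33}$ the analogous quantities $\tilde{\bm{z}}_t,\tilde{\bm{v}}_t,\ldots$ are non-stationary, and one additionally inserts the decay $\|\bm{\ell}_h\|_2\lesssim\bar{\rho}^{h-p}$ before applying Lemma~\ref{alem:rsc3aux}. Your ``peeling and $\ell_1$--$\ell_\infty$ split'' does not yet supply this sparse-cover-plus-LW12 step, which is the device that simultaneously delivers uniformity over $\bm{d}\in\mathbb{R}^{N^2d}$ and the precise two-term form the downstream proof of Theorem~\ref{thm:lasso} requires.
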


\subsection{Proof of Proposition \ref{prop:perturb}}

Note that $\bm{A}_k=\bm{G}_k$ for $1\leq k\leq p$, and for any $h\geq1$,
\begin{align}\label{eq:linear}
	\bm{A}_{p+h} =  \sum_{j=1}^{r}\ell_{h}^{I}(\lambda_j)\bm{G}_{p+j}+\sum_{m=1}^{s}\left\{\ell_{h}^{II,1}(\bm{\eta}_m)\bm{G}_{p+r+2m-1}+\ell_{h}^{II,2}(\bm{\eta}_m)\bm{G}_{p+r+2m}\right\}.
\end{align}
Then $\bm{\Delta}_k=\bm{G}_k-\bm{G}_k^*$ for $1\leq k\leq p$.
Moreover, for any $h\geq 1$, by \eqref{eq:linear} and the Taylor expansion,
\begin{align}\label{eq:delta}
	\bm{\Delta}_{p+h}&=\bm{A}_{p+h}-\bm{A}_{p+h}^* \notag \\
	&=\sum_{j=1}^{r}\Bigg \{\ell_{h}^{I}(\lambda_j^*) +\nabla\ell_{h}^{I}(\lambda_j^*) (\lambda_j-\lambda_j^*) +\frac{1}{2}\nabla^2\ell_{h}^{I}(\widetilde{\lambda}_j) (\lambda_j-\lambda_j^*)^2 \Bigg \}\bm{G}_{p+j}\notag \\
	&\hspace{5mm} +\sum_{m=1}^{s}\Bigg \{\ell_{h}^{II,1}(\bm{\eta}_m^*) +(\bm{\eta}_m-\bm{\eta}_m^*)^\top \nabla \ell_{h}^{II,1}(\bm{\eta}_m^*) \notag\\
	&\hspace{33mm} +\frac{1}{2}(\bm{\eta}_m-\bm{\eta}_m^*)^\top\nabla^2 \ell_{h}^{II,1}(\widetilde{\bm{\eta}}_m)(\bm{\eta}_m-\bm{\eta}_m^*)\Bigg \}\bm{G}_{p+r+2m-1}\notag \\
	&\hspace{5mm} +\sum_{m=1}^{s}\Bigg \{\ell_{h}^{II,2}(\bm{\eta}_m^*) + (\bm{\eta}_m-\bm{\eta}_m^*)^\top \nabla \ell_{h}^{II,2}(\bm{\eta}_m^*) \notag\\ &\hspace{33mm}+\frac{1}{2}(\bm{\eta}_m-\bm{\eta}_m^*)^\top\nabla^2 \ell_{h}^{II,2}(\widetilde{\bm{\eta}}_m) (\bm{\eta}_m-\bm{\eta}_m^*)\Bigg \}\bm{G}_{p+r+2m} -\bm{A}_{p+h}^* \notag\\
	&:=\bm{H}_h+\bm{R}_h,
\end{align}
where $\widetilde{\lambda}_j$ lies between $\lambda_j^*$ and $\lambda_j$ for $1 \leq j \leq r$, $\widetilde{\bm{\eta}}_m$ lies between $\bm{\eta}^*_k$ and $\bm{\eta}_m$ for $1 \leq m \leq s$, the first-order approximation is
\begin{align}\label{eq:Hh}
	\bm{H}_h &=\sum_{j=1}^{r}\ell_{h}^{I}(\lambda_j^*) (\bm{G}_{p+j}-\bm{G}_{p+j}^*) +\sum_{m=1}^{s}\sum_{\iota=1}^2\ell_{h}^{II,\iota}(\bm{\eta}_m^*) (\bm{G}_{p+r+2(m-1)+\iota}-\bm{G}_{p+r+2(m-1)+\iota}^*) \notag\\
	&\hspace{5mm} +\sum_{j=1}^{r}(\lambda_j-\lambda_j^*)\nabla\ell_{h}^{I}(\lambda_j^*) \bm{G}_{p+j}^* +\sum_{m=1}^{s}\sum_{\iota=1}^2(\bm{\eta}_m-\bm{\eta}_m^*)^\top \nabla \ell_{h}^{II,\iota}(\bm{\eta}_m^*)\bm{G}_{p+r+2(m-1)+\iota}^*,
\end{align}
and the remainder is
\begin{align} \label{eq:Rh}
	\bm{R}_h &= \sum_{i=1}^{r}\nabla\ell_{h}^{I}(\lambda_j^*) (\lambda_j-\lambda_j^*) (\bm{G}_{p+j} - \bm{G}_{p+j}^*) \notag\\
	&\hspace{5mm}+ \sum_{m=1}^{s}\sum_{\iota=1}^2(\bm{\eta}_m-\bm{\eta}_m^*)^\top \nabla \ell_{h}^{II,\iota}(\bm{\eta}_m^*)(\bm{G}_{p+r+2(m-1)+\iota} - \bm{G}_{p+r+2(m-1)+\iota}^*) \notag\\
	&\hspace{5mm} +\frac{1}{2} \sum_{j=1}^{r}\nabla^2\ell_{h}^{I}(\widetilde{\lambda}_j) (\lambda_j-\lambda_j^*)^2 \bm{G}_{p+j} \notag\\
	&\hspace{5mm} 
	+\frac{1}{2} \sum_{m=1}^{s}\sum_{\iota=1}^2(\bm{\eta}_m-\bm{\eta}_m^*)^\top\nabla^2 \ell_{h}^{II,\iota}(\widetilde{\bm{\eta}}_m)(\bm{\eta}_m-\bm{\eta}_m^*)\bm{G}_{p+r+2(m-1)+\iota}.
\end{align}
Here for notational simplicity, we have suppressed the dependence of  $\widetilde{\lambda}_j$'s and $\widetilde{\bm{\eta}}_m$'s on $h$.

We first consider $\bm{R}_h$. Denote $\bm{R}_h=\bm{R}_{1h} +\bm{R}_{2h}+\bm{R}_{3h}$, where
\begin{align}\label{eq:Rhs}
	\bm{R}_{1h}=& \sum_{j=1}^{r}\nabla\ell_{h}^{I}(\lambda_j^*)  (\lambda_j-\lambda_j^*) (\bm{G}_{p+j} - \bm{G}_{p+j}^*) \notag\\
	&
	+ \sum_{m=1}^{s}\sum_{\iota=1}^2(\bm{\eta}_m-\bm{\eta}_m^*)^\top \nabla \ell_{h}^{II,\iota}(\bm{\eta}_m^*)(\bm{G}_{p+r+2(m-1)+\iota} - \bm{G}_{p+r+2(m-1)+\iota}^*),\notag\\
	\bm{R}_{2h} =&\frac{1}{2}\sum_{j=1}^{r}\nabla^2\ell_{h}^{I}(\widetilde{\lambda}_j) (\lambda_j-\lambda_j^*)^2 (\bm{G}_{p+j} - \bm{G}_{p+j}^* ) \notag\\
	&+\frac{1}{2} \sum_{m=1}^{s}\sum_{\iota=1}^2(\bm{\eta}_m-\bm{\eta}_m^*)^\top\nabla^2 \ell_{h}^{II,\iota}(\widetilde{\bm{\eta}}_m)(\bm{\eta}_m-\bm{\eta}_m^*)(\bm{G}_{p+r+2(m-1)+\iota} - \bm{G}_{p+r+2(m-1)+\iota}^*), \notag\\
	\bm{R}_{3h} =&\frac{1}{2}\sum_{j=1}^{r}\nabla^2\ell_{h}^{I}(\widetilde{\lambda}_j) (\lambda_j-\lambda_j^*)^2 \bm{G}_{p+j}^* \notag\\
	&+ \frac{1}{2} \sum_{m=1}^{s}\sum_{\iota=1}^2(\bm{\eta}_m-\bm{\eta}_m^*)^\top\nabla^2 \ell_{h}^{II,\iota}(\widetilde{\bm{\eta}}_m)(\bm{\eta}_m-\bm{\eta}_m^*) \bm{G}_{p+r+2(m-1)+\iota}^*.
\end{align}

Note that for any matrix $\bm{Y} = \sum_{k=1}^{d}a_k\bm{X}_k$, 
$\|\bm{Y}\|_{\op}\leq \|\bm{Y}\|_{\Fr}\leq  (\sum_{k=1}^{d}\|\bm{X}_k\|_{\Fr}^2)^{1/2}(\sum_{k=1}^{d}a_k^2)^{1/2}=\|\bm{X}\|_{\Fr}\|\bm{a}\|_2$,
and $\sum_{k=1}^da_k^4\leq (\sum_{k=1}^da_k^2)^2$, where $\bm{a} = (a_1,\dots,a_d)^\top\in\mathbb{R}^d$, and $\bm{X}=(\bm{X}_1,\dots,\bm{X}_{d})$. 
Then, by Lemma \ref{cor1},
\begin{align*}
	\| \bm{R}_{1h}\|_{\Fr}
	&\leq C_{\ell}\bar{\rho}^h \sqrt{\|\bm{\lambda} - \bm{\lambda}^*\|_2^2 + 2\|\bm{\eta} - \bm{\eta}^*\|_2^2} \\
	&\hspace{5mm}\cdot\sqrt{ \sum_{j=1}^{r}\|\bm{G}_{p+j} - \bm{G}_{p+j}^*\|_{\Fr}^2 + \sum_{m=1}^{s}\sum_{\iota=1}^2 \|\bm{G}_{p+r+2(m-1)+\iota} - \bm{G}_{p+r+2(m-1)+\iota}^*\|_{\Fr}^2 } \notag \\
	&\leq \sqrt{2}C_{\ell}\bar{\rho}^h \|\bm{\phi}\|_2 \cdot \|\bm{G}_\ma - \bm{G}_\ma^*\|_{\Fr} \leq \sqrt{2}C_{\ell}\bar{\rho}^h \|\bm{\phi}\|_2 \|\bm{d}\|_{2},
\end{align*}
and similarly,
\begin{align*}
	\| \bm{R}_{2h}\|_{\Fr}
	&\leq \frac{\sqrt{2}}{2} C_{\ell}\bar{\rho}^h \|\bm{\phi}\|_2^2 \cdot \|\bm{G}_\ma - \bm{G}_\ma^*\|_{\Fr} \leq \frac{\sqrt{2}}{2} C_{\ell}\bar{\rho}^h \|\bm{\phi}\|_2^2 \|\bm{d}\|_{2},
\end{align*}
where $\bm{G}_\ma=(\bm{G}_{p+1},\dots, \bm{G}_d)$.
Moreover, by Lemma \ref{cor1} again, we can show that
\begin{equation*}
	\| \bm{R}_{3h}\|_{\Fr} \leq  \frac{\sqrt{2}}{2} C_{\ell}\overline{\alpha}_\ma \bar{\rho}^h \|\bm{\phi}\|_2^2.
\end{equation*}
As a result,
\begin{align} \label{eq:Rnorm1}
	\|\bm{R}_h\|_{\Fr} &\leq \|\bm{R}_{1h}\|_{\Fr} + \|\bm{R}_{2h}\|_{\Fr} + \|\bm{R}_{3h}\|_{\Fr} \notag\\ &\leq    C_{\ell} \bar{\rho}^h \|\bm{\phi}\|_2 \left ( \sqrt{2}  \|\bm{d}\|_{2} + \frac{\sqrt{2}}{2} \|\bm{\phi}\|_2\|\bm{d}\|_{2}+ \frac{\sqrt{2}}{2}  \overline{\alpha}_\ma \|\bm{\phi}\|_2 \right ).
\end{align}

Now consider $\bm{H}_h$ in \eqref{eq:Hh}. 
Notice that for any $h\geq1$ and $1\leq m\leq s$, 
\begin{align*}
	&\nabla_\gamma\ell_{h}^{II,1}(\bm{\eta}_m)=h\gamma_m^{h-1}\cos(h\theta_m)=\frac{1}{\gamma_m}\nabla_\theta\ell_{h}^{II,2}(\bm{\eta}_m),\\
	&\nabla_\gamma\ell_{h}^{II,2}(\bm{\eta}_m)=h\gamma_m^{h-1}\sin(h\theta_m)=-\frac{1}{\gamma_m}\nabla_\theta\ell_{h}^{II,1}(\bm{\eta}_m).
\end{align*}
Thus, the last term on the right side of  \eqref{eq:Hh} can be simplified to 
\begin{align} \label{eq:linearcomb}
	&\sum_{m=1}^{s}\sum_{\iota=1}^2(\bm{\eta}_m-\bm{\eta}_m^*)^\top \nabla \ell_{h}^{II,\iota}(\bm{\eta}_m^*)\bm{G}_{p+r+2(m-1)+\iota}^* \notag\\
	&\hspace{5mm}= \sum_{m=1}^{s}\left[ (\theta_m - \theta_m^*) \bm{G}_{p+r+2m-1}^* - \frac{1}{\gamma_m^*}(\gamma_m -\gamma_m^*)\bm{G}_{p+r+2m}^*\right]\nabla_{\theta} \ell_{h}^{II,1}(\bm{\eta}_m^*)\notag\\
	&\hspace{10mm}+\sum_{m=1}^{s} \left[ (\theta_m - \theta_m^*) \bm{G}_{p+r+2m}^* + \frac{1}{\gamma_m^*}(\gamma_m -\gamma_m^*)\bm{G}_{p+r+2m-1}^*\right]\nabla_{\theta} \ell_{h}^{II,2}(\bm{\eta}_m^*).
\end{align}
Let $\bm{H}=(\bm{H}_1, \bm{H}_2, \dots)$ and  $\bm{R}=(\bm{R}_1, \bm{R}_2, \dots)$. Then by \eqref{eq:Hh} and \eqref{eq:linearcomb} it can be verified that 
\begin{align}\label{eq:stackH}
	\bm{\widetilde{H}}:= (\bm{G}_1-\bm{G}_1^*,\cdots,\bm{G}_p-\bm{G}_p^*,\bm{H})
	&=\bm{D} (\bm{L}(\bm{\omega}^*)\otimes \bm{I}_N)^\top +\bm{M}(\bm{\phi}) ( \bm{P}(\bm{\omega}^*) \otimes \bm{I}_N)^\top \notag\\
	&=\bm{G}_{\rm{stack}}(\bm{\phi},\bm{d})(\bm{L}_{\rm{stack}}(\bm{\omega}^*)\otimes \bm{I}_N)^\top.
\end{align}

Note that 
\begin{equation}\label{eq:Delta}
	\bm{\Delta}=\bm{\widetilde{H}}+(\bm{0}_{N\times N p}, \bm{R}).
\end{equation}
Moreover,
\begin{align*}
	\|\bm{M}(\bm{\phi})\|_{\Fr}^2 &= \sum_{j=1}^{r}(\lambda_j - \lambda_j^*)^2\|\bm{G}_{p+j}^*\|_{\Fr}^2 + \sum_{m=1}^{s} \left\| (\theta_m - \theta_m^*) \bm{G}_{p+r+2m-1}^* - \frac{\gamma_m -\gamma_m^*}{\gamma_m^*}\bm{G}_{p+r+2m}^*\right\|_{\Fr}^2 \notag\\
	&\hspace{5mm}+ \sum_{m=1}^{s}\left\| (\theta_m - \theta_m^*) \bm{G}_{p+r+2m}^* + \frac{\gamma_m -\gamma_m^*}{\gamma_m^*}\bm{G}_{p+r+2m-1}^*\right\|_{\Fr}^2 \notag\\
	&= \sum_{j=1}^{r}(\lambda_j - \lambda_j^*)^2\|\bm{G}_{p+j}^*\|_{\Fr}^2 + \sum_{m=1}^{s} (\theta_m - \theta_m^*)^2(\|\bm{G}_{p+r+2m-1}^*\|_{\Fr}^2 + \|\bm{G}_{p+r+2m}^*\|_{\Fr}^2 ) \notag\\
	&\hspace{5mm}+ \sum_{m=1}^{s}\frac{(\gamma_m - \gamma_m^*)^2}{\gamma_m^{*2}} (\|\bm{G}_{p+r+2m-1}^*\|_{\Fr}^2 + \|\bm{G}_{p+r+2m}^*\|_{\Fr}^2 ),
\end{align*}
which leads to
\begin{equation}\label{eq:DFr}
	\underline{\alpha}_\ma \|\bm{\phi}\|_2  \leq \|\bm{M}(\bm{\phi})\|_{\Fr} \leq \frac{\sqrt{2}\overline{\alpha}_\ma}{\min_{1\leq k\leq s}\gamma_{k}^*} \|\bm{\phi}\|_2.
\end{equation}
By the simple inequalities $ (|x| + |y|) / 2 \leq \sqrt{x^2 + y^2} \leq |x| + |y|$, we have
$0.5(\|\bm{d}\|_{2} + \|\bm{M}(\bm{\phi})\|_{\Fr}) \leq  \|\bm{G}_{\rm{stack}}(\bm{\phi},\bm{d})\|_{\Fr}	\leq \|\bm{d}\|_{2} + \|\bm{M}(\bm{\phi})\|_{\Fr}$,
and thus in view of \eqref{eq:DFr} we further have
\begin{equation}\label{eq:Gstacknorm}
	\frac{1}{2}(\|\bm{d}\|_{2} +\underline{\alpha}_\ma\|\bm{\phi}\|_2) \leq  \|\bm{G}_{\rm{stack}}(\bm{\phi},\bm{d})\|_{\Fr}	\leq \|\bm{d}\|_{2} + \frac{\sqrt{2}\overline{\alpha}_\ma}{\min_{1\leq k\leq s}\gamma_{k}^*} \|\bm{\phi}\|_2.
\end{equation} 
Then  it follows from \eqref{eq:Gstacknorm} that
\[
\frac{\sigma_{\min, L}}{2} (\|\bm{d}\|_{2} +\underline{\alpha}_\ma \|\bm{\phi}\|_2 )\leq \|\bm{\widetilde{H}}\|_{\Fr} \leq  \sigma_{\max, L}\left (\|\bm{d}\|_{2} + \frac{\sqrt{2}\overline{\alpha}_\ma}{\min_{1\leq k\leq s}\gamma_{k}^*}\|\bm{\phi}\|_2\right ),
\]
where  $\sigma_{\min, L}=\sigma_{\min}(\bm{L}_{\rm{stack}}(\bm{\omega}^*))$ and  $\sigma_{\max, L}= \sigma_{\max}(\bm{L}_{\rm{stack}}(\bm{\omega}^*))$.
Combining this with \eqref{eq:Rnorm1}, \eqref{eq:Delta}, \eqref{eq:DFr}, as well as the fact that $\|\bm{G}_\ma - \bm{G}_\ma^*\|_{\Fr}\leq \|\bm{d}\|_{2}$, we have
\begin{align*}
	\|\bm{\Delta}\|_{\Fr} &\leq \|\bm{\widetilde{H}}\|_{\Fr} + \|\bm{R}\|_{\Fr} \\
	&\leq \left\{\sigma_{\max, L} + \frac{\sqrt{2}C_{\ell}}{1-\bar{\rho}} \left (\|\bm{\phi}\|_2+\frac{\|\bm{\phi}\|_2^2}{2}\right )\right\} \|\bm{d}\|_{2}
	+\left(\frac{\sqrt{2} \sigma_{\max, L}}{\min_{1\leq k\leq s}\gamma_{k}^*} + \frac{\sqrt{2}}{2}\cdot \frac{C_{\ell}}{1-\bar{\rho}} \|\bm{\phi}\|_2\right) \overline{\alpha}_\ma \|\bm{\phi}\|_2
\end{align*}
and
\begin{align*}
	\|\bm{\Delta}\|_{\Fr} &\geq \|\bm{\widetilde{H}}\|_{\Fr} - \|\bm{R}\|_{\Fr}\\
	&\geq \left\{\frac{\sigma_{\min, L}}{2}-\frac{\sqrt{2}C_{\ell}}{1-\bar{\rho}}  \left (\|\bm{\phi}\|_2+\frac{\|\bm{\phi}\|_2^2}{2}\right )\right\} \|\bm{d}\|_{2} +\left( \frac{ \sigma_{\min, L}}{2}- \frac{\sqrt{2}}{2}\cdot\frac{C_{\ell}\overline{\alpha}_\ma \|\bm{\phi}\|_2}{(1-\bar{\rho})\underline{\alpha}_\ma}  \right)  \underline{\alpha}_\ma \|\bm{\phi}\|_2. 
\end{align*}
Thus, as long as
\begin{equation}\label{eq:comega}
	\|\bm{\phi}\|_2\leq c_{\bm{\omega}}\leq  \min\left \{2, \frac{\underline{\alpha}_\ma(1-\bar{\rho})\sigma_{\min, L} }{8\sqrt{2}C_{\ell}\overline{\alpha}_\ma }\right \},
\end{equation}
we have
\begin{equation}\label{eq:prop2}
	c_{\Delta}	\left(\|\bm{d}\|_{2} + \underline{\alpha}_\ma\|\bm{\phi}\|_2\right) 
	\leq \|\bm{\Delta}\|_{\Fr} \leq         
	C_{\Delta}	\left(\|\bm{d}\|_{2} +  \overline{\alpha}_\ma\|\bm{\phi}\|_2\right),
\end{equation}
where
\[
c_{\Delta}= \sigma_{\min, L}/4 \quad \text{and}\quad
C_{\Delta}= \sigma_{\max, L} \left (1\vee \frac{\sqrt{2}}{\nu_{\mathrm{lower}}^*} \right ) + \frac{4\sqrt{2} C_{\ell} }{1-\bar{\rho}}.
\]
Finally, by Lemma  \ref{lemma:fullrank}, we have
\[
0<(1\wedge c_{\bar{\rho}})/4 \leq c_{\Delta} \leq C_{\Delta} \leq  (1\vee C_{\bar{\rho}}) \left (1\vee \frac{\sqrt{2}}{\nu_{\mathrm{lower}}^*} \right ) + \frac{4\sqrt{2} C_{\ell} }{1-\bar{\rho}},
\]
i.e., $c_{\Delta}\asymp1$ and $C_{\Delta}\asymp1$, and \eqref{eq:comega} is fulfilled by taking
\begin{equation}\label{eq:comg}
	c_{\bm{\omega}} =  \min\left \{2, \frac{\underline{\alpha}_\ma(1-\bar{\rho})(1\wedge c_{\bar{\rho}}) }{8\sqrt{2}C_{\ell}\overline{\alpha}_\ma }\right \}.
\end{equation}
The proof of this proposition is complete.

\subsection{Proof of Theorem \ref{thm:lasso}}\label{asec:proofThm2}
Note that $\sum_{h=1}^{t-1}\bm{A}_h\bm{y}_{t-h}= \bm{A}\bm{\widetilde{x}}_{t}$, where $\bm{\widetilde{x}}_{t}= (\bm{y}_{t-1}^\top,\dots,\bm{y}_1^\top,0,0,\dots)^\top$ is the initialized version of $\bm{x}_t$. By the optimality of $\bm{\widehat{A}}$, we have
\[
\frac{1}{T}\sum_{t=1}^{T} \| \bm{y}_t - \bm{A}^*\bm{\widetilde{x}}_{t} - \bm{\widehat{\Delta}} \bm{\widetilde{x}}_{t}\|_2^2
\leq \frac{1}{T}\sum_{t=1}^{T} \| \bm{y}_t - \bm{A}^*\bm{\widetilde{x}}_{t}\|_2^2+\lambda_g(\|\bm{g}^*\|_1-\|\widehat{\bm{g}}\|_1),
\]
Then, since $\bm{y}_t - \bm{A}^*\bm{\widetilde{x}}_{t}=\bm{\varepsilon}_t +   \sum_{h=t}^{\infty}\bm{A}_h^* \bm{y}_{t-h}$ and $\bm{\widehat{\Delta}}\bm{\widetilde{x}}_{t}=\bm{\widehat{\Delta}}\bm{x}_{t}-\sum_{k=t}^{\infty}\bm{\widehat{\Delta}}_k \bm{y}_{t-k}$, we have
\begin{align}\label{eq:thm1eq}
	\frac{1}{T}\sum_{t=1}^{T}\|\bm{\widehat{\Delta}}\bm{\widetilde{x}}_{t}\|_2^2 &\leq 
	\frac{2}{T}\sum_{t=1}^{T}\langle \bm{\varepsilon}_t, \bm{\widehat{\Delta}}\bm{\widetilde{x}}_{t} \rangle + \underbrace{\frac{2}{T}\sum_{t=1}^{T}\langle \sum_{h=t}^{\infty}\bm{A}_h^* \bm{y}_{t-h}, \bm{\widehat{\Delta}}\bm{\widetilde{x}}_{t}  \rangle}_{S_2(\bm{\widehat{\Delta}})} +\lambda_g(\|\bm{g}^*\|_1-\|\widehat{\bm{g}}\|_1 ) \notag \\
	&= \frac{2}{T}\sum_{t=1}^{T}\langle \bm{\varepsilon}_t, \bm{\widehat{\Delta}}\bm{x}_t \rangle +\lambda_g(\|\bm{g}^*\|_1-\|\widehat{\bm{g}}\|_1 )+S_2(\bm{\widehat{\Delta}}) - S_1(\bm{\widehat{\Delta}}),
\end{align}
where $S_1(\cdot)$ and $S_2(\cdot)$ are defined as in \eqref{eq:notation_init}.
Moreover, applying the inequality $\|\bm{a}-\bm{b}\|_2^2\geq (3/4)\|\bm{a}\|_2^2-3\|\bm{b}\|_2^2$ with $\bm{a}=\bm{\widehat{\Delta}}\bm{x}_{t}=\sum_{h=1}^{\infty}\bm{\widehat{\Delta}}_h\bm{y}_{t-h}$ and $\bm{b}=\sum_{k=t}^{\infty}\bm{\widehat{\Delta}}_k\bm{y}_{t-k}$, we can lower bound the left-hand side of  \eqref{eq:thm1eq} to further obtain that
\begin{align} \label{eq:thm1eq1}
	\frac{3}{4T}\sum_{t=1}^{T}\|\bm{\widehat{\Delta}}\bm{x}_t\|_{2}^2-S_3(\bm{\widehat{\Delta}})
	&\leq \frac{2}{T}\sum_{t=1}^{T}\langle \bm{\varepsilon}_t, \bm{\widehat{\Delta}}\bm{x}_t \rangle+\lambda_g(\|\bm{g}^*\|_1-\|\widehat{\bm{g}}\|_1 ) + S_2(\bm{\widehat{\Delta}}) - S_1(\bm{\widehat{\Delta}}),
\end{align}
where $S_3(\cdot)$ is defined as in \eqref{eq:notation_init}.
It is worth pointing out that  $S_i(\bm{\widehat{\Delta}})$ for $1\leq i\leq 3$ capture the initialization effect of $\bm{y}_s=\bm{0}$ for $s\leq 0$ on the estimation error, and their upper bounds are given in Lemmas \ref{lemma:init1}--\ref{lemma:init3}.

Next we assume that the high probability events in Lemmas \ref{lemma:devb}--\ref{lemma:init3} all hold and focus on the deterministic analysis. For a threshold $\eta>0$ to be chosen later, define the thresholded subsets
\begin{align*}
	S_\ar(\eta)&=\{ (i,j,k)\mid  |g_{i,j,k}^*|>\eta, i,j\in\{1,\dots, N\}, k\in\{1,\dots, p\}\},\\
	S_\ma(\eta)&=\{ (i,j,k)\mid  |g_{i,j,k}^*|>\eta, i,j\in\{1,\dots, N\}, k\in\{p+1,\dots, d\}\},
\end{align*}
and
\[
S(\eta)=S_\ar(\eta) \cup S_\ma(\eta)=\{ (i,j,k)\mid  |g_{i,j,k}^*|>\eta, i,j\in\{1,\dots, N\}, k\in\{1,\dots, d\}\}.
\]
Define  $S^\complement(\eta)=\{ (i,j,k)\mid i,j\in\{1,\dots, N\}, k\in\{1,\dots, d\}\}\setminus S(\eta)$ as the complementary set of $S(\eta)$. Similarly, the complementary set of $S_\ma(\eta)$ is  $S_\ma^\complement(\eta)=\{ (i,j,k)\mid i,j\in\{1,\dots, N\}, k\in\{p+1,\dots, d\}\}\setminus S_\ma(\eta)$. Let $|S|$ denote the cardinality of a set $S$.
Note that 
\[
R_q\geq \sum_{i=1}^{N}\sum_{j=1}^{N}\sum_{k=1}^{d} |g_{i,j,k}^*|^q \geq \sum_{(i,j,k)\in S(\eta)}  |g_{i,j,k}^*|^q \geq \eta^q |S(\eta)|,
\]
and
\begin{equation*}
	\|\bm{g}_{S^\complement(\eta)}^*\|_1 = \sum_{(i,j,k)\in S^\complement(\eta)} |g_{i,j,k}^*| =  \sum_{(i,j,k)\in S^\complement(\eta)} |g_{i,j,k}^*|^q |g_{i,j,k}^*|^{1-q}.
\end{equation*} 
Thus, we have
\begin{equation}\label{eq:ws1}
	|S(\eta)| \leq R_q \eta^{-q} \quad\text{and} \quad 		\|\bm{g}_{S^\complement(\eta)}^*\|_1 \leq R_q \eta^{1-q}.
\end{equation}
Similarly, we can show that 
\begin{equation}\label{eq:ws2}
	|S_\ma(\eta)| \leq R_q^\ma \eta^{-q} \quad\text{and}\quad \|(\bm{g}^*_{\ma})_{ S_\ma^\complement(\eta)}\|_1 \leq R_q^\ma \eta^{1-q}. 
\end{equation}
By \eqref{eq:ws2}, by choosing $\eta$ such  that
\begin{equation}\label{eq:etacond}
	\eta^{2-q} \leq \frac{(r+2s) \overline{\alpha}_\ma^2}{R_q^\ma},
\end{equation}
we have
\begin{align*}
	\|\bm{g}_{\ma}^{*}\|_1^2 \leq 2\|(\bm{g}_{\ma}^{*})_{S_\ma(\eta)}\|_1^2 +  2\|(\bm{g}_{\ma}^{*})_{S^\complement_\ma(\eta)}\|_1^2 & \leq 2 |S_\ma(\eta)| \|\bm{g}_{\ma}^{*}\|_2^2 + 2 (R_q^\ma \eta^{1-q})^2\notag \\
	&\leq 2 R_q^\ma  \eta^{-q} \left \{ (r+2s) \overline{\alpha}_\ma^2 + R_q^\ma \eta^{2-q} \right \} \notag\\
	& \leq 4 R_q^\ma  \eta^{-q} (r+2s) \overline{\alpha}_\ma^2.
\end{align*}
Then, since $r+2s\lesssim1$ and $(\overline{\alpha}_\ma/\underline{\alpha}_\ma)^2\lesssim R_q/R_q^\ma$,  we further have 
\begin{equation}\label{eq:ws3}
	\underline{\alpha}_\ma^{-2} \|\bm{g}_{\ma}^{*}\|_1^2 \lesssim R_q \eta^{-q}.
\end{equation}

Consider the right-hand side of \eqref{eq:thm1eq}. By Lemma \ref{lemma:devb}, if we choose $\lambda_g$ such that 
\begin{equation}\label{eq:lgcond1}
	\frac{\lambda_g}{4}\geq   C_{\dev} \sqrt{\frac{\kappa_2\lambda_{\max}(\bm{\Sigma}_{\varepsilon})\log \{N(p\vee 1)\}}{T}},
\end{equation}
then we can show that
\begin{align}\label{eq:thm1eq2}
	&\frac{2}{T}\sum_{t=1}^{T}\langle \bm{\varepsilon}_t, \bm{\widehat{\Delta}}\bm{x}_t \rangle +\lambda_g(\|\bm{g}^*\|_1-\|\widehat{\bm{g}}\|_1 ) \notag\\
	&\hspace{5mm} \leq  \frac{\lambda_g}{2}(\|\bm{\widehat{d}}\|_1 +\|\bm{g}_{\ma}^{*}\|_1 \| \bm{\widehat{\phi}}\|_2)+ \lambda_g(\|\bm{g}^*\|_1-\|\bm{g}_{S(\eta)}^*+\widehat{\bm{d}}_{S^\complement(\eta)}\|_1 + \|\bm{g}_{S^\complement(\eta)}^*+\widehat{\bm{d}}_{S(\eta)}\|_1 ) \notag\\
	&\hspace{5mm} \leq \frac{\lambda_g}{2}(\|\widehat{\bm{d}}_{S(\eta)}\|_1+\|\widehat{\bm{d}}_{S^\complement(\eta)}\|_1 +\|\bm{g}_{\ma}^{*}\|_1 \| \bm{\widehat{\phi}}\|_2 ) +\lambda_g(2\|\bm{g}_{S^\complement(\eta)}^*\|_1+ \|\widehat{\bm{d}}_{S(\eta)}\|_1 -\|\widehat{\bm{d}}_{S^\complement(\eta)}\|_1) \notag\\
	&\hspace{5mm} \leq \frac{\lambda_g}{2}\left (4\|\bm{g}_{S^\complement(\eta)}^*\|_1 + 3\|\widehat{\bm{d}}_{S(\eta)}\|_1-\|\widehat{\bm{d}}_{S^\complement(\eta)}\|_1 +  \|\bm{g}_{\ma}^{*}\|_1 \| \bm{\widehat{\phi}}\|_2\right ).
\end{align}
In addition, since $T\gtrsim \kappa_2 (p\vee1)^4$, it follows from  Lemmas \ref{lemma:init1} and \ref{lemma:init2} that  
\begin{align}\label{eq:thm1eq7}
	S_2(\bm{\widehat{\Delta}}) - S_1(\bm{\widehat{\Delta}}) & \leq \frac{\lambda_{g}}{4}\left (\|\bm{\widehat{d}}\|_1 + \|\bm{g}_{\ma}^{*}\|_1 \|\bm{\widehat{\phi}}\|_2 \right) \notag\\
	& =\frac{\lambda_{g}}{4}\left (\|\widehat{\bm{d}}_{S(\eta)}\|_1+\|\widehat{\bm{d}}_{S^\complement(\eta)}\|_1 + \|\bm{g}_{\ma}^{*}\|_1 \|\bm{\widehat{\phi}}\|_2\right).
\end{align}
Combining \eqref{eq:thm1eq}, \eqref{eq:thm1eq2} and \eqref{eq:thm1eq7}, we have
\begin{align*}
	0\leq \frac{1}{T}\sum_{t=1}^{T}\|\bm{\widehat{\Delta}}\bm{\widetilde{x}}_t\|_{2}^2  &\leq \frac{2}{T}\sum_{t=1}^{T}\langle \bm{\varepsilon}_t, \bm{\widehat{\Delta}}\bm{x}_t \rangle +\lambda_g(\|\bm{g}^*\|_1-\|\widehat{\bm{g}}\|_1 )+S_2(\bm{\widehat{\Delta}}) - S_1(\bm{\widehat{\Delta}}) \notag \\
	&\leq \frac{\lambda_{g}}{4}\left (8\|\bm{g}_{S^\complement(\eta)}^*\|_1 + 7\|\widehat{\bm{d}}_{S(\eta)}\|_1-\|\widehat{\bm{d}}_{S^\complement(\eta)}\|_1 +  3\|\bm{g}_{\ma}^{*}\|_1 \| \bm{\widehat{\phi}}\|_2\right),
\end{align*}
which implies 
\[
\|\bm{\widehat{d}}\|_1=\|\widehat{\bm{d}}_{S(\eta)}\|_1+\|\widehat{\bm{d}}_{S^\complement(\eta)}\|_1 \leq   8\|\bm{g}_{S^\complement(\eta)}^*\|_1 + 8\|\widehat{\bm{d}}_{S(\eta)}\|_1 +  3\|\bm{g}_{\ma}^{*}\|_1 \| \bm{\widehat{\phi}}\|_2.
\]
Then, by the Cauchy-Schwarz inequalty, \eqref{eq:prop2}, \eqref{eq:ws1}, and \eqref{eq:ws3}, we can further show that
\begin{align}\label{eq:thm1eq4}
	\|\bm{\widehat{d}}\|_1^2 & \leq 3\left (64\|\bm{g}_{S^\complement(\eta)}^*\|_1^2 + 64\|\widehat{\bm{d}}_{S(\eta)}\|_1^2  + 9 \|\bm{g}_{\ma}^{*}\|_1^2 \| \bm{\widehat{\phi}}\|_2^2 \right ) \notag\\
	&\leq 192 \|\bm{g}_{S^\complement(\eta)}^*\|_1^2 + c_{\Delta}^{-2} \|\bm{\widehat{\Delta}}\|_{\Fr}^2 \left \{192 |S(\eta)| + 27 \underline{\alpha}_\ma^{-2} \|\bm{g}_{\ma}^{*}\|_1^2  \right \}\notag\\
	&\leq  192 \|\bm{g}_{S^\complement(\eta)}^*\|_1^2 + C_1 c_{\Delta}^{-2} R_q \eta^{-q} \|\bm{\widehat{\Delta}}\|_{\Fr}^2,
\end{align}
for an absolute constant $C_1>0$.
Similarly, from \eqref{eq:thm1eq2} and \eqref{eq:thm1eq7}, we can deduce that
\begin{align}\label{eq:thm1eq5}
	&\frac{2}{T}\sum_{t=1}^{T}\langle \bm{\varepsilon}_t, \bm{\widehat{\Delta}}\bm{x}_t \rangle +\lambda_g(\|\bm{g}^*\|_1-\|\widehat{\bm{g}}\|_1 )  + S_2(\bm{\widehat{\Delta}}) - S_1(\bm{\widehat{\Delta}}) \notag \\
	& \hspace{5mm}\leq \frac{\lambda_g}{4}\left (8\|\bm{g}_{S^\complement(\eta)}^*\|_1 + 8\|\widehat{\bm{d}}_{S(\eta)}\|_1 +  3\|\bm{g}_{\ma}^{*}\|_1 \| \bm{\widehat{\phi}}\|_2\right ) \notag\\
	&\hspace{5mm} \leq \frac{\lambda_g}{2} \left \{4\|\bm{g}_{S^\complement(\eta)}^*\|_1 + C_2 c_{\Delta}^{-1} R_q^{1/2} \eta^{-q/2} \|\bm{\widehat{\Delta}}\|_{\Fr} \right \},
\end{align}
for an absolute constant $C_2>0$.

By Lemmas \ref{lemma:rsclasso} and  \ref{lemma:init3}, we can show that 
\[
\frac{3}{4T}\sum_{t=1}^{T}\|\bm{\widehat{\Delta}}\bm{x}_t\|_{2}^2-S_3(\bm{\widehat{\Delta}}) \geq \frac{C_{\rsc} \kappa_1}{2}  \|\bm{\widehat{\Delta}}\|_{\Fr}^2 - \frac{\kappa_2}{T}\left \{ C_{\init3} (p\vee1) + \frac{3}{4} C_{\rsc} \frac{\kappa_2}{\kappa_1} \log \{N(p\vee1)\}\right \} \|\bm{\widehat{d}}\|_1^2.
\]
which, in conjunction with \eqref{eq:thm1eq4}, leads to
\begin{equation}\label{eq:thm1eq6}
	\frac{3}{4T}\sum_{t=1}^{T}\|\bm{\widehat{\Delta}}\bm{x}_t\|_{2}^2-S_3(\bm{\widehat{\Delta}}) \geq \frac{C_{\rsc} \kappa_1}{4}  \|\bm{\widehat{\Delta}}\|_{\Fr}^2 -  \frac{C_3\kappa_2^2 (p\vee1) \log \{N(p\vee1)\} }{\kappa_1 T}\|\bm{g}_{S^\complement(\eta)}^*\|_1^2, 
\end{equation}
where $C_3>0$ is an absolute constant, if we further have
\begin{equation}\label{eq:Tcond3}
	T \gtrsim R_q \eta^{-q} (\kappa_2/\kappa_1)^2 (p\vee1)  \log \{N(p\vee1)\}. 
\end{equation}

Combining  \eqref{eq:thm1eq1}, \eqref{eq:thm1eq5}, and \eqref{eq:thm1eq6}, we have
\[
\frac{C_{\rsc} \kappa_1}{4}  \|\bm{\widehat{\Delta}}\|_{\Fr}^2 -  \frac{C_3\kappa_2^2(p\vee1) \log \{N(p\vee1)\} }{\kappa_1 T}\|\bm{g}_{S^\complement(\eta)}^*\|_1^2 
\leq  \frac{\lambda_g}{2} \left \{4\|\bm{g}_{S^\complement(\eta)}^*\|_1 + C_2 c_{\Delta}^{-1} R_q^{1/2} \eta^{-q/2} \|\bm{\widehat{\Delta}}\|_{\Fr} \right \}.
\]
Consider the following two cases.

\textit{Case (i):} First suppose that 
$\frac{C_{\rsc} \kappa_1}{8}  \|\bm{\widehat{\Delta}}\|_{\Fr}^2 \geq   \frac{C_3\kappa_2^2(p\vee1) \log \{N(p\vee1)\} }{\kappa_1 T}\|\bm{g}_{S^\complement(\eta)}^*\|_1^2$. 
Then 
\[
\frac{C_{\rsc} \kappa_1}{8}  \|\bm{\widehat{\Delta}}\|_{\Fr}^2 \leq \frac{\lambda_g}{2} \left \{4\|\bm{g}_{S^\complement(\eta)}^*\|_1 + C_2 c_{\Delta}^{-1} R_q^{1/2} \eta^{-q/2} \|\bm{\widehat{\Delta}}\|_{\Fr} \right \},
\]
which involves a quadratic form in $\|\bm{\widehat{\Delta}}\|_{\Fr}$. By computing the zeros of this quadratic form, we can show that
\[
\|\bm{\widehat{\Delta}}\|_{\Fr}^2 \leq \frac{32 C_2^2}{C_{\rsc}^2c_{\Delta}^2} \cdot \frac{\lambda_{g}^2R_q\eta^{-q}}{\kappa_1^2} + \frac{32}{C_{\rsc}} \cdot\frac{\lambda_{g} \|\bm{g}_{S^\complement(\eta)}^*\|_1}{\kappa_1}.
\]

\textit{Case (ii):} Otherwise, we must have $\frac{C_{\rsc} \kappa_1}{8}  \|\bm{\widehat{\Delta}}\|_{\Fr}^2 \leq   \frac{C_3\kappa_2^2(p\vee1) \log \{N(p\vee1)\} }{\kappa_1 T} \|\bm{g}_{S^\complement(\eta)}^*\|_1^2$.

Combining the two cases above, we can apply \eqref{eq:ws1} and \eqref{eq:Tcond3} to show  that
\begin{align*}
	\|\bm{\widehat{\Delta}}\|_{\Fr}^2 & \leq \frac{32 C_2^2}{C_{\rsc}^2c_{\Delta}^2} \cdot \frac{\lambda_{g}^2R_q\eta^{-q}}{\kappa_1^2} + \frac{32}{C_{\rsc}} \cdot\frac{\lambda_{g} \|\bm{g}_{S^\complement(\eta)}^*\|_1}{\kappa_1} + \frac{8C_3}{C_{\rsc}} \cdot \frac{\kappa_2^2 (p\vee1) \log \{N(p\vee1)\}}{\kappa_1^2 T} \|\bm{g}_{S^\complement(\eta)}^*\|_1^2\\
	&\leq \frac{32 C_2^2}{C_{\rsc}^2c_{\Delta}^2} \cdot \frac{\lambda_{g}^2R_q\eta^{-q}}{\kappa_1^2} + \frac{32}{C_{\rsc}} \cdot\frac{\lambda_{g} R_q \eta^{1-q}}{\kappa_1} +\frac{8C_3}{C_{\rsc}} \cdot (R_q \eta^{-q} )^{-1}(R_q \eta^{1-q})^2\\
	&\lesssim \left (\frac{\lambda_{g} }{\kappa_1} \right )^{2-q} R_q = \eta^{2-q} R_q,
\end{align*}
if we choose 
\[
\eta = \frac{\lambda_{g}}{\kappa_1}.
\]
Thus, taking $\lambda_{g}$ as its lower bound in \eqref{eq:lgcond1}, i.e., 
$\lambda_g \asymp \sqrt{\kappa_2\lambda_{\max}(\bm{\Sigma}_{\varepsilon})\log \{N(p\vee 1)\}/T}$, we have
\[
\|\bm{\widehat{\Delta}}\|_{\Fr}^2 \lesssim \left [\frac{\kappa_2 \lambda_{\max}(\bm{\Sigma}_{\varepsilon})\log \{N(p\vee 1)\}}{\kappa_1^2 T} \right ]^{1-q/2} R_q,
\]
and subsequently,
\[
\frac{1}{T}\sum_{t=1}^{T}\|\bm{\widehat{\Delta}}\bm{\widetilde{x}}_{t}\|_2^2 \lesssim \lambda_{g}  \eta^{1-q} R_q =\left [\frac{\kappa_2 \lambda_{\max}(\bm{\Sigma}_{\varepsilon})\log \{N(p\vee 1)\}}{\kappa_1^2 T} \right ]^{1-q/2}  \frac{R_q}{\kappa_1^{1-q}}, 
\]
where the latter follows from \eqref{eq:thm1eq} and \eqref{eq:thm1eq5}. 
On the one hand,  with the above choice of $\eta$, condition \eqref{eq:Tcond3} can be guaranteed if 
\begin{equation}\label{eq:Tcond}
	R_q \lesssim  \frac{\lambda_{\max}(\bm{\Sigma}_{\varepsilon})}{\kappa_2 (p\vee 1) } \cdot  \left [\frac{\kappa_1^2 T}{\kappa_2 \lambda_{\max}(\bm{\Sigma}_{\varepsilon}) \log \{N(p\vee1)\}} \right ]^{1-q/2}.
\end{equation}
Under condition \eqref{eq:Tcond}, since  $r+2s \lesssim 1$, we can show that a sufficient condition for \eqref{eq:etacond} is
\begin{equation}\label{eq:Rqcond}
	\frac{\lambda_{\max}(\bm{\Sigma}_{\varepsilon})}{\kappa_2 (p\vee 1)} \lesssim  \overline{\alpha}_\ma^2  R_q/R_q^\ma.
\end{equation}
Finally, combining the tail probabilities in Lemmas \ref{lemma:devb}--\ref{lemma:init3} and the required conditions including \eqref{eq:Tcond} and \eqref{eq:Rqcond}, we accomplish the proof of this theorem.


\section{Proofs of Proposition \ref{prop:perturbrow} and Theorem \ref{thm:lassorow}}\label{asec:RE}
\subsection{Notations}\label{subsec:notationsrow}
For $1\leq i\leq N$, denote $\bm{\delta}_i=\bm{a}_i-\bm{a}_i^*=(\bm{\delta}_{i,1}^\top,  \bm{\delta}_{i,2}^\top, \dots)^\top \in\mathbb{R}^{\infty}$ and $\bm{d}_i=\bm{g}_i-\bm{g}_i^*$, where $\bm{\delta}_{i,h}=\bm{a}_{i,h}-\bm{a}_{i,h}^{*} = \sum_{k=1}^{d} \ell_{h,k}(\bm{\omega}) \bm{g}_{i,k} - \sum_{k=1}^{d} \ell_{h,k}(\bm{\omega}^*) \bm{g}_{i,k}^{*}$ for $h\geq1$. Given $\bm{\omega}^*$ and $\bm{g}_i^*$, define
\begin{equation*}
	\bm{g}_{i,\rm{stack}}(\bm{\phi}, \bm{d}_i)=(\bm{d}_i^\top, (\bm{m}_i(\bm{\phi}))^\top )^\top \in\mathbb{R}^{N(d+r+2s)},
\end{equation*}
where $\bm{m}_i(\bm{\phi})\in\mathbb{R}^{N(r+2s)}$ is the following linear mapping of $\bm{\phi}$,
\begin{equation*}
	\bm{m}_i(\bm{\phi})=\left (\begin{matrix}
		(\lambda_1 - \lambda_1^*)\bm{g}_{i,p+1}^{*}\\
		\vdots\\
		(\lambda_r - \lambda_r^*)\bm{g}_{i,p+r}^{*}\\
		(\theta_1 - \theta_1^*)\bm{g}_{i,p+r+1}^{*}- \frac{\gamma_1 -\gamma_1^*}{\gamma_1^*}\bm{g}_{i,p+r+2}^{*}\\
		(\theta_1 - \theta_1^*) \bm{g}_{i,p+r+2}^{*} + \frac{\gamma_1 -\gamma_1^*}{\gamma_1^*}\bm{g}_{i,p+r+1}^{*}\\
		\vdots\\
		(\theta_s - \theta_s^*)\bm{g}_{i,p+r+2s-1}^{*}- \frac{\gamma_s -\gamma_s^*}{\gamma_s^*}\bm{g}_{i,p+r+2s}^{*}, \\
		(\theta_s - \theta_s^*) \bm{g}_{i,p+r+2s}^{*} + \frac{\gamma_s -\gamma_s^*}{\gamma_s^*}\bm{g}_{i,p+r+2s-1}^{*} 
	\end{matrix} \right ).
\end{equation*}
Note that $\bm{g}_{i,\rm{stack}}(\bm{\phi}, \bm{d}_i)$ and $\bm{m}_i(\bm{\phi})$ correspond to the $i$th row of $\bm{G}_{\rm{stack}}(\bm{\phi}, \bm{d}_i)$ and $\bm{M}(\bm{\phi})$, respectively; see Section \ref{subsec:notations}. In addition, for $1\leq i\leq N$, let  $\bm{\widehat{\delta}}_i=\bm{\widehat{a}}_i-\bm{a}_i^*$, where $\bm{\widehat{a}}_i=(\bm{\widehat{a}}_{i,1}^\top,  \bm{\widehat{a}}_{i,2}^\top, \dots)^\top\in\mathbb{R}^{\infty}$, 
$\bm{\widehat{d}}_i=\bm{\widehat{g}}_i-\bm{g}_i^*$, and $\bm{\widehat{\phi}}_i=\bm{\widehat{\omega}}_i-\bm{\omega}^*$. 

As will be shown in the proof of Theorem \ref{thm:lassorow}, the following terms quantify the effect of initializing $\bm{y}_s=\bm{0}$ for $s\leq 0$:
\begin{align}\label{eq:notation_initrow}
	\begin{split}
		&S_1(\bm{\delta}_i) = \frac{2}{T}\sum_{t=1}^{T}\langle \varepsilon_{i,t}, \sum_{h=t}^{\infty}\bm{\delta}_{i,h}^\top \bm{y}_{t-h} \rangle\\
		&S_2(\bm{\delta}_i)  = \frac{2}{T}\sum_{t=2}^{T}\langle \sum_{h=t}^{\infty}\bm{a}_{i,h}^{*\top} \bm{y}_{t-h}, \sum_{k=1}^{t-1}\bm{\delta}_{i,k}^\top \bm{y}_{t-k} \rangle\\
		& S_3(\bm{\delta}_i) = \frac{3}{T}\sum_{t=1}^{T}\Big (\sum_{k=t}^{\infty} \bm{\delta}_{i,k}^\top \bm{y}_{t-k} \Big )^2.
	\end{split}
\end{align}
Here we use the notations $S_i(\cdot)$'s for convenience, while their definitions in this section are different from those in \eqref{eq:notation_init}. 

\subsection{Preliminary results}\label{subsec:prelimrow}

The proofs of Proposition \ref{prop:perturbrow} and Theorem \ref{thm:lassorow} can be regarded as special cases of those of Proposition \ref{prop:perturb} and Theorem \ref{thm:lasso} with a univariate response variable. 

In Proposition \ref{prop:perturbrow}, the goal  is to establish the local linearity of $\bm{\delta}_i(\bm{\phi}, \bm{d})$ with respect to $\bm{\phi}$ and $\bm{d}_i$. That is,  within a local neighborhood of $\bm{\omega}^*$, we aim to show that
\begin{equation} \label{eq:linearizerow}
	\bm{\delta}_i(\bm{\phi}, \bm{d}_i) =\bm{a}_i(\bm{\omega}, \bm{g}_i)-\bm{a}_i^* \approx (\bm{L}_{\rm{stack}}(\bm{\omega}^*)\otimes \bm{I}_N) \bm{g}_{i,\rm{stack}}(\bm{\phi}, \bm{d}_i).
\end{equation}
Note that \eqref{eq:linearizerow} corresponds to the $i$th row of \eqref{eq:linearize}.

The proof of Theorem \ref{thm:lassorow} directly relies on Lemmas \ref{lemma:devbrow}--\ref{lemma:init3row} below. Their proofs  are straightforward univariate versions of those of Lemmas \ref{lemma:devb}--\ref{lemma:init3}, and hence are omitted.

\begin{lemma}[Deviation bound]\label{lemma:devbrow}
	Under Assumptions \ref{assum:statn} and \ref{assum:error}, if  $\|\bm{\widehat{\omega}}_i - \bm{\omega}^*\|_{2}\leq c_{i,\bm{\omega}}$, $\sum_{j=0}^{\infty} \|\bm{\Psi}_j^* \|_{\op}^2<\infty$, and $T\gtrsim \log \{N(p\vee 1)\}$, then  with probability at least $1-C e^{-c\log N}$,
	\[
	\frac{1}{T} \left |\sum_{t=1}^{T}\langle \varepsilon_{i,t}, \bm{\widehat{\delta}}_i^\top \bm{x}_t \rangle \right | \leq C_{\dev} \sqrt{\frac{\kappa_2\lambda_{\max}(\bm{\Sigma}_{\varepsilon})\log \{N (p\vee 1)\}}{T}} \left (\|\bm{\widehat{d}}_i\|_1 +\|\bm{g}_{i, \ma}^{*}\|_1 \| \bm{\widehat{\phi}}_i\|_2 \right ), 
	\]
	where  $C_{\dev}>0$ is an absolute constant.
\end{lemma}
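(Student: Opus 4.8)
\textbf{Proof proposal for Lemma \ref{lemma:devbrow}.}

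The plan is to treat this as the scalar (single-row) analogue of the multivariate deviation bound in Lemma \ref{lemma:devb}, and to carry over essentially the same decomposition and concentration machinery. First I would write $\bm{\widehat{\delta}}_i = \bm{\widehat{a}}_i - \bm{a}_i^*$ and use the local-linearity result of Proposition \ref{prop:perturbrow}: since $\|\bm{\widehat{\omega}}_i - \bm{\omega}^*\|_2 \le c_{i,\bm{\omega}}$, we can write $\bm{\widehat{\delta}}_i = (\bm{L}_{\rm{stack}}(\bm{\omega}^*)\otimes\bm{I}_N)\bm{g}_{i,\rm{stack}}(\bm{\widehat{\phi}}_i,\bm{\widehat{d}}_i) + \bm{r}_i$, where the remainder $\bm{r}_i$ (the row-wise version of $(\bm{0}, \bm{R})$ in the proof of Proposition \ref{prop:perturb}) satisfies $\|\bm{r}_i\|_2 \lesssim \bar{\rho}(1-\bar{\rho})^{-1}C_\ell \|\bm{\widehat{\phi}}_i\|_2(\|\bm{\widehat{d}}_i\|_2 + \overline{\alpha}_{i,\ma}\|\bm{\widehat{\phi}}_i\|_2)$. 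The idea is that $T^{-1}\sum_t \langle \varepsilon_{i,t}, \bm{\widehat{\delta}}_i^\top\bm{x}_t\rangle$ is a linear functional of $\bm{\widehat{\delta}}_i$; bounding it reduces to controlling $\|T^{-1}\sum_t \varepsilon_{i,t}\bm{x}_t^\top\|$ in a suitable dual norm (coordinatewise $\ell_\infty$ over the finitely many columns corresponding to $\bm{G}_k$, $1\le k\le d$, together with the columns of $\bm{P}(\bm{\omega}^*)$ contracting against $\bm{g}_{i,\ma}^*$).

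The key steps, in order: (1) Reduce to controlling the empirical cross-covariance vector $\bm{u}_{i} := T^{-1}\sum_{t=1}^T \varepsilon_{i,t}\,\bm{x}_t$, truncated/aggregated through $\bm{L}_{\rm{stack}}(\bm{\omega}^*)$. Concretely, $\langle \varepsilon_{i,t}, \bm{\widehat{\delta}}_i^\top\bm{x}_t\rangle = \bm{g}_{i,\rm{stack}}(\bm{\widehat{\phi}}_i,\bm{\widehat{d}}_i)^\top (\bm{L}_{\rm{stack}}(\bm{\omega}^*)^\top\otimes\bm{I}_N)\,\varepsilon_{i,t}\bm{x}_t + \varepsilon_{i,t}\bm{r}_i^\top\bm{x}_t$; apply Hölder with the $\ell_1$-norm of $\bm{g}_{i,\rm{stack}}$ (which is $\lesssim \|\bm{\widehat{d}}_i\|_1 + \|\bm{g}_{i,\ma}^*\|_1\|\bm{\widehat{\phi}}_i\|_2$ by the explicit form of $\bm{m}_i$) against the $\ell_\infty$-norm of the aggregated cross-covariance. (2) Establish that $\{\bm{y}_t\}$ and hence $\bm{x}_t$ is a sub-Gaussian linear process with spectral-density operator norm controlled by $\kappa_2$, using Assumption \ref{assum:error}, the VMA($\infty$) representation from Theorem \ref{thm:stationary}, and $\sum_j\|\bm{\Psi}_j^*\|_{\op}^2 < \infty$; this gives, via the Hanson–Wright / sub-Gaussian tail bounds for linear processes (the scalar analogue of Proposition 2.4 of \citep{basu2015regularized}), that each aggregated coordinate of $\bm{u}_i$ is $\sqrt{\kappa_2\lambda_{\max}(\bm{\Sigma}_\varepsilon)/T}$-sub-Gaussian. (3) Union bound over the $O(N(p\vee 1))$ relevant coordinates (note $d = p+r+2s$ with $r,s$ fixed, so the effective number is $\asymp N(p\vee 1)$, matching the $\log\{N(p\vee1)\}$ in the statement) and over the $t$-index, giving $\|\bm{u}_i\|_\infty^{(\text{agg})} \lesssim \sqrt{\kappa_2\lambda_{\max}(\bm{\Sigma}_\varepsilon)\log\{N(p\vee1)\}/T}$ with probability $\ge 1-Ce^{-c\log N}$ once $T\gtrsim\log\{N(p\vee1)\}$. (4) Handle the remainder term $T^{-1}\sum_t \varepsilon_{i,t}\bm{r}_i^\top\bm{x}_t$: since $\|\bm{r}_i\|_2$ is quadratically small in $\|\bm{\widehat{\phi}}_i\|_2 \le c_{i,\bm{\omega}}$, and since $c_{i,\bm{\omega}}$ is a constant, this term is of strictly smaller order than the main term and can be absorbed into the stated bound (its own concentration is again sub-Gaussian via the same linear-process argument).

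I expect the main obstacle to be step (4), or more precisely making the remainder control \emph{uniform}: because $\bm{\widehat{\omega}}_i$ is random and data-dependent, one cannot simply plug a fixed $\bm{\phi}$ into the concentration bound. The standard remedy is a discretization / covering argument over the constant-radius ball $\{\bm{\phi}: \|\bm{\phi}\|_2\le c_{i,\bm{\omega}}\}$ in $\mathbb{R}^{r+2s}$, which is low-dimensional (since $r,s$ are fixed by Assumption \ref{assum:gap}(ii)), so the covering number is $O(1)$ in $N$ and $T$ and contributes only a constant to the union bound — this is precisely why the fixed-dimensionality of $\bm{\omega}$ is exploited throughout the paper. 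Once the sup over this ball is taken care of, the bound holds simultaneously for the (random) $\bm{\widehat{\phi}}_i$, and the proof concludes by collecting the $\ell_1$/$\ell_\infty$ Hölder bound from step (1) with the concentration estimate from step (3) and the negligible-remainder bound from step (4). As the excerpt notes, this is a ``straightforward univariate version'' of Lemma \ref{lemma:devb}, so modulo the bookkeeping above, no genuinely new idea is needed.
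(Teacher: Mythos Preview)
Your overall strategy—reduce to an $\ell_1/\ell_\infty$ Hölder bound against an aggregated empirical cross-covariance, then concentrate each coordinate via martingale/sub-Gaussian tail bounds and union-bound over $O(N(p\vee1))$ coordinates—is correct and matches the paper's high-level logic. However, the specific decomposition you use is \emph{different} from the paper's, and your treatment of the remainder in step (4) is imprecise.

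The paper (in its proof of Lemma \ref{lemma:devb}, of which Lemma \ref{lemma:devbrow} is declared the ``straightforward univariate version'') does \emph{not} Taylor-expand through $\bm{L}_{\rm{stack}}(\bm{\omega}^*)$. Instead, for the MA part it uses the algebraic identity
\[
\widehat{\bm{g}}_{i,\ma}^\top\{\bm{L}^{\ma}(\widehat{\bm{\omega}}_i)\otimes\bm{I}_N\}^\top - \bm{g}_{i,\ma}^{*\top}\{\bm{L}^{\ma}(\bm{\omega}^*)\otimes\bm{I}_N\}^\top
= \widehat{\bm{d}}_{i,\ma}^\top\{\bm{L}^{\ma}(\widehat{\bm{\omega}}_i)\otimes\bm{I}_N\}^\top
+ \bm{g}_{i,\ma}^{*\top}\bigl[\{\bm{L}^{\ma}(\widehat{\bm{\omega}}_i)-\bm{L}^{\ma}(\bm{\omega}^*)\}\otimes\bm{I}_N\bigr]^\top,
\]
and then takes a supremum over $\bm{\omega}\in\bm{\Omega}$ (resp.\ $\bm{\phi}\in\bm{\Phi}_1$) on the two resulting max-norms. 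The key point is that these suprema are handled \emph{without any covering}: since $|\ell_{h,k}(\bm{\omega})|\le\bar{\rho}^{h-p}$ and $|\ell_{h,k}(\bm{\omega}^*+\bm{\phi})-\ell_{h,k}(\bm{\omega}^*)|\le 2C_\ell\bar{\rho}^{h-p}\|\bm{\phi}\|_2$ uniformly in $\bm{\omega},\bm{\phi}$, the sup is pulled inside and replaced by an exponentially weighted sum $\sum_{h\ge p+1}\bar{\rho}^{h-p}\max_{i,j}|T^{-1}\sum_t\varepsilon_{i,t}y_{j,t-h}|$, each term of which is controlled by Lemma \ref{lemma:martgl} with a union bound over $h$ that is summable. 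There is no Taylor remainder and no $\epsilon$-net.

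Your route via $\bm{L}_{\rm{stack}}(\bm{\omega}^*)$ can be made to work, but step (4) as stated needs correction. The remainder $\bm{r}_i$ is \emph{not} ``quadratically small in $\|\bm{\widehat{\phi}}_i\|_2$'': the piece $\bm{r}_{i,1h}$ in \eqref{eq:Rhsrow} is linear in $\bm{\phi}$ and linear in $\bm{d}_{i,\ma}$, so it is of the same order as the main term, not strictly smaller. What saves you is that after applying $\ell_1/\ell_\infty$ Hölder to each of $\bm{r}_{i,1h},\bm{r}_{i,2h},\bm{r}_{i,3h}$ separately, the $\ell_1$ factors are $\lesssim\|\bm{\widehat{d}}_{i,\ma}\|_1$ or $\lesssim\|\bm{g}_{i,\ma}^*\|_1\|\bm{\widehat{\phi}}_i\|_2$ (using $\|\bm{\widehat{\phi}}_i\|_2\le c_{i,\bm{\omega}}$ only to absorb extra powers of $\|\bm{\phi}\|_2$), and the $\ell_\infty$ factors—which now involve $\nabla\ell_h$ and $\nabla^2\ell_h$ evaluated at random intermediate points $\widetilde{\lambda}_j,\widetilde{\bm{\eta}}_m$—can be controlled by the same uniform decay bounds $|\nabla\ell_h|,|\nabla^2\ell_h|\le C_\ell\bar{\rho}^h$ from Lemma \ref{cor1}, exactly as the paper does. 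In other words, the covering you propose is unnecessary: the exponential decay already gives uniformity over $\bm{\omega}$ for free. Once you use that observation, your approach and the paper's essentially coincide.
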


\begin{lemma}[Restricted strong convexity] \label{lemma:rsc_lassorow}
	Under Assumptions \ref{assum:statn}--\ref{assum:error}, if $\|\bm{\widehat{\omega}}_i - \bm{\omega}^*\|_{2}\leq c_{i,\bm{\omega}}$ and $T\gtrsim(\kappa_2/\kappa_1)^2 \log\{ (\kappa_2/\kappa_1)(\overline{\alpha}_{i,\ma}/\underline{\alpha}_{i,\ma}) N (p\vee 1)\}$, then 	with probability at least $1-Ce^{-c\kappa_1^2 T/\kappa_2^2}$,
	\begin{equation*}
		\frac{1}{T}\sum_{t=1}^{T}(\bm{\widehat{\delta}}_i^\top \bm{x}_t)^2 \geq  C_{\rsc} \left [ \kappa_1 \|\bm{\widehat{\delta}}_i
		\|_{2}^2  - \frac{  \kappa_2^2 \log \{N(p\vee1)\} }{\kappa_1 T}  \|\bm{\widehat{d}}_i\|_1^2 \right ],
	\end{equation*}
	where $C_{\rsc}>0$ is an absolute constant.  
\end{lemma}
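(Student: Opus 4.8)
Since \eqref{eq:lassorow} is the scalar-response analogue of \eqref{eq:lasso}, the claim is the one-dimensional specialization of Lemma~\ref{lemma:rsclasso}, and I would prove both by the same route; I describe it for the rowwise version. Fix $1\le i\le N$. The plan is the classical restricted-strong-convexity recipe for high-dimensional time-series regression: write $\frac1T\sum_{t=1}^T(\bm{\delta}_i^\top\bm{x}_t)^2=\mathbb{E}[(\bm{\delta}_i^\top\bm{x}_t)^2]+\{\text{empirical deviation}\}$, bound the population term from below, and control the deviation term \emph{uniformly} over the relevant set of deviations $\bm{\delta}_i=\bm{a}_i-\bm{a}_i^*$. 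The only new ingredient relative to the i.i.d./finite-lag literature is that $\bm{x}_t=(\bm{y}_{t-1}^\top,\bm{y}_{t-2}^\top,\dots)^\top$ is infinite-dimensional and $\bm{\delta}_i$ is not free but constrained to the parametric manifold $\bm{a}_i=(\bm{L}(\bm{\omega})\otimes\bm{I}_N)\bm{g}_i$; Proposition~\ref{prop:perturbrow} is what lets us linearize this manifold inside the constant-radius neighbourhood $\|\bm{\widehat{\omega}}_i-\bm{\omega}^*\|_2\le c_{i,\bm{\omega}}$.

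\textbf{Population term.} First I would show that for every $\bm{v}\in\ell_2$, $\mathbb{E}[(\bm{v}^\top\bm{x}_t)^2]=\bm{v}^\top\mathrm{Cov}(\bm{x}_t)\bm{v}\in[\kappa_1\|\bm{v}\|_2^2,\kappa_2\|\bm{v}\|_2^2]$. This is the block-Toeplitz/spectral-density bound of \cite{basu2015regularized}: under stationarity (Assumption~\ref{assum:statn}(ii)) with $\bm{y}_t=\bm{\Psi}_*(B)\bm{\varepsilon}_t$ and $\sum_j\|\bm{\Psi}_j^*\|_{\op}^2<\infty$, the covariance operator of $\bm{x}_t$ has symbol $\tfrac1{2\pi}\bm{\Psi}_*(e^{-\mathrm{i}\theta})\bm{\Sigma}_\varepsilon\bm{\Psi}_*^{\HH}(e^{-\mathrm{i}\theta})$, and combining its eigenvalue bounds with Parseval's identity gives exactly $\mathbb{E}[(\bm{v}^\top\bm{x}_t)^2]\ge\lambda_{\min}(\bm{\Sigma}_\varepsilon)\mu_{\min}(\bm{\Psi}_*)\|\bm{v}\|_2^2=\kappa_1\|\bm{v}\|_2^2$ and the matching upper bound with $\kappa_2$. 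Applying this with $\bm{v}=\bm{\widehat{\delta}}_i$ yields $\mathbb{E}[(\bm{\widehat{\delta}}_i^\top\bm{x}_t)^2]\ge\kappa_1\|\bm{\widehat{\delta}}_i\|_2^2$. (Since $\bm{\delta}_{i,h}$ decays geometrically in $h$ beyond lag $p$ for any parametric $\bm{a}_i$, the infinite-dimensional quadratic form is well defined and the $\ell_2$ operator bounds apply directly; alternatively one truncates at lag $M\asymp\log(N\vee T)$ with negligible error.)

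\textbf{Empirical deviation term.} The core step is: with probability at least $1-Ce^{-c\kappa_1^2T/\kappa_2^2}$, \emph{simultaneously for all} $\bm{\delta}_i$ of the parametric form with $\|\bm{\omega}-\bm{\omega}^*\|_2\le c_{i,\bm{\omega}}$,
\begin{equation*}
\Big|\tfrac1T\textstyle\sum_{t=1}^T(\bm{\delta}_i^\top\bm{x}_t)^2-\mathbb{E}[(\bm{\delta}_i^\top\bm{x}_t)^2]\Big|\;\le\;\tfrac12\kappa_1\|\bm{\delta}_i\|_2^2+C\tfrac{\kappa_2^2}{\kappa_1}\tfrac{\log\{N(p\vee1)\}}{T}\|\bm{d}_i\|_1^2 .
\end{equation*}
I would derive this from: (i) a Hanson--Wright-type concentration bound for $\frac1T\sum_t(\bm{v}^\top\bm{x}_t)^2$ — rewrite it as $\frac1T\|\mathcal{X}_{\bm v}\bm{\xi}\|_2^2$ for a deterministic matrix built from $\bm v$ and $\{\bm{\Psi}_j^*\}$ and the i.i.d.\ $\sigma^2$-sub-Gaussian innovations $\bm{\xi}$ (Assumption~\ref{assum:error}), whose Gram matrix has operator norm $\lesssim\kappa_2$, giving a one-sided sub-exponential tail of order $\exp(-cT\min\{t/\kappa_2,(t/\kappa_2)^2\})$ around the mean for each fixed $\bm v$; and (ii) a discretization/peeling argument. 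For (ii) I would use Proposition~\ref{prop:perturbrow} to linearize: within the neighbourhood, $\bm{\delta}_i$ equals $(\bm{L}_{\mathrm{stack}}(\bm{\omega}^*)\otimes\bm{I}_N)\bm{g}_{i,\mathrm{stack}}(\bm{\phi},\bm{d}_i)$ up to a negligible quadratic remainder (controlled via Lemma~\ref{cor1} provided $c_{i,\bm{\omega}}$ is a small enough constant), so it is parameterized by $(\bm{d}_i,\bm{\phi})\in\mathbb{R}^{Nd}\times\mathbb{R}^{r+2s}$ with $\bm{d}_i$ weakly sparse and $\bm{\phi}$ of \emph{fixed} dimension; taking $\epsilon$-nets of $\ell_1$-balls in the $Nd\asymp N(p\vee1)$ coordinates of $\bm{d}_i$ (entropy $\asymp k\log\{N(p\vee1)\}$, since $r,s$ are fixed) and of a ball of radius $c_{i,\bm{\omega}}$ in the $r+2s$ coordinates of $\bm{\phi}$ (entropy $O(1)$), and peeling over the scale $\|\bm{\delta}_i\|_2$, converts the per-point tails into the uniform bound above. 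The condition $T\gtrsim(\kappa_2/\kappa_1)^2\log\{(\kappa_2/\kappa_1)\alpha_iN(p\vee1)\}$ with $\alpha_i=\overline\alpha_{i,\ma}/\underline\alpha_{i,\ma}$ is exactly what makes the $\tfrac12\kappa_1\|\bm{\delta}_i\|_2^2$ slack dominate the entropy cost, the factor $\alpha_i$ entering because the linear map $\bm{g}_{i,\mathrm{stack}}\mapsto\bm{\delta}_i$ has condition number $\asymp\alpha_i$ by Proposition~\ref{prop:perturbrow}. Subtracting from the population lower bound gives $\frac1T\sum_t(\bm{\widehat{\delta}}_i^\top\bm{x}_t)^2\ge\frac12\kappa_1\|\bm{\widehat{\delta}}_i\|_2^2-C\frac{\kappa_2^2}{\kappa_1}\frac{\log\{N(p\vee1)\}}{T}\|\bm{\widehat{d}}_i\|_1^2$, i.e.\ the asserted bound.

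\textbf{Main obstacle.} The delicate point is keeping the $\ell_1$ term in the \emph{coefficient} parameterization, $\|\bm{\widehat{d}}_i\|_1$, rather than in $\bm{\widehat{\delta}}_i$ itself — whose $\ell_1$ norm would drag in the infinite lag tail and a spurious $\|\bm{g}_{i,\ma}^*\|_1\|\bm{\widehat{\phi}}_i\|_2$ factor. This is handled by the linearization above: one discretizes in the $(\bm{d}_i,\bm{\phi})$ coordinates, so the only $\ell_1$-penalized directions are the $Nd$ coordinates of $\bm{d}_i$, while the $(r+2s)$ directions of $\bm{\phi}$ are too few to carry an $\ell_1$ penalty and their perturbation is absorbed into $\tfrac12\kappa_1\|\bm{\widehat{\delta}}_i\|_2^2$ via Proposition~\ref{prop:perturbrow}. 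A secondary technical point is the correct treatment of temporal dependence in the Hanson--Wright step, for which $\sum_j\|\bm{\Psi}_j^*\|_{\op}^2<\infty$ together with the spectral bounds $\mu_{\min}(\bm{\Psi}_*),\mu_{\max}(\bm{\Psi}_*)$ are used exactly as in \cite{basu2015regularized}. Since everything above is the verbatim scalar-response reduction of the proof of Lemma~\ref{lemma:rsclasso}, no genuinely new estimate is required.
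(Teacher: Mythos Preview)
Your proposal is correct and follows essentially the same route as the paper, which itself omits the proof of this lemma as the straightforward univariate specialization of Lemma~\ref{lemma:rsclasso}. Your sketch captures all the key ingredients of that proof --- Hanson--Wright concentration under sub-Gaussian innovations, the spectral-density population bounds $\kappa_1\le\lambda(\mathrm{Cov}(\bm{x}_t))\le\kappa_2$, the linearization of $\bm{\delta}_i$ via Proposition~\ref{prop:perturbrow}, sparse-set covering in the $Nd$ coordinates of $\bm{d}_i$ combined with a finite $\epsilon$-net in the fixed-dimensional $\bm{\phi}$ --- and you correctly identify the main obstacle (keeping the $\ell_1$ penalty on $\bm{d}_i$ rather than $\bm{\delta}_i$) and its resolution; the paper's implementation of Lemma~\ref{lemma:rsclasso} organizes things by first splitting $\bm{\Delta}\bm{x}_t$ into four explicit pieces (two linear terms $\bm{D}\bm{z}_t$, $\bm{M}(\bm{\phi})\bm{v}_t$ and two Taylor remainders) and bounding each separately rather than your unified population-plus-deviation split, but this is a presentational rather than substantive difference.
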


\begin{lemma}[Effect of initial values I]\label{lemma:init1row}
	Under Assumptions \ref{assum:statn} and \ref{assum:error}, if  $\|\bm{\widehat{\omega}}_i - \bm{\omega}^*\|_{2}\leq c_{i,\bm{\omega}}$, $\sum_{j=0}^{\infty} \|\bm{\Psi}_j^* \|_{\op}^2<\infty$, and  $T\gtrsim \log N$, then with probability at least $1 - C(p\vee1) e^{-c\log N}$, 
	\[
	|S_1(\bm{\widehat{\delta}}_i
	)| \leq \frac{C_{\init1}\sqrt{\kappa_2\lambda_{\max}(\bm{\Sigma}_{\varepsilon})(p\vee1)\log N}}{T} \left (\|\bm{\widehat{d}}_i\|_1 + \|\bm{g}_{i,\ma}^{*}\|_1 \|\bm{\widehat{\phi}}_i\|_2 \right),
	\]
	where $C_{\init1}>0$ is an absolute constant.
\end{lemma}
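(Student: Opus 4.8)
\textbf{Proof proposal for Lemma \ref{lemma:init1row}.}

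The plan is to reduce this lemma to the univariate analogue of Lemma \ref{lemma:init1} (Effect of initial values I), which is already stated in the excerpt and whose multivariate version we may invoke in structure. The quantity to bound is $S_1(\bm{\widehat{\delta}}_i) = \frac{2}{T}\sum_{t=1}^{T}\langle \varepsilon_{i,t}, \sum_{h=t}^{\infty}\bm{\widehat{\delta}}_{i,h}^\top \bm{y}_{t-h}\rangle$, i.e.\ the discrepancy introduced by truncating the infinite past at $t$ when forming the fitted regressor for row $i$. First I would observe, exactly as in the proof of Proposition \ref{prop:perturbrow}, that on the event $\|\bm{\widehat{\omega}}_i - \bm{\omega}^*\|_2 \le c_{i,\bm{\omega}}$ the coefficient sequence $\bm{\widehat{\delta}}_{i,h}$ decays geometrically: using Lemma \ref{cor1} and the linear expansion $\bm{a}_{i,h} = \sum_k \ell_{h,k}(\bm{\omega})\bm{g}_{i,k}$, one gets $\|\bm{\widehat{\delta}}_{i,h}\|_2 \lesssim \bar{\rho}^{\,h-p}\big(\|\bm{\widehat{d}}_i\|_{2} + \bar{\alpha}_{i,\ma}\|\bm{\widehat{\phi}}_i\|_2\big)$ for $h \ge p+1$, with only finitely many ($\le p$) non-decaying terms contributing trivially. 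This is the row-$i$ restriction of the bound on $\bm{R}$ and $\bm{H}$ from the proof of Proposition \ref{prop:perturb}, so I would quote it rather than rederive it.

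Next I would split $\sum_{h=t}^{\infty}\bm{\widehat{\delta}}_{i,h}^\top\bm{y}_{t-h}$ according to whether the index exceeds $p$, handle the at-most-$p$ near-term terms separately (they only occur for $t \le p$, contributing the $(p\vee 1)$ factor), and for the geometric tail write $S_1(\bm{\widehat{\delta}}_i)$ as a bilinear form in $\bm{\varepsilon}$ and in the past $\bm{y}$'s. The standard device here is to peel off the dependence on the (random) estimator $\bm{\widehat{\delta}}_i$ by a duality / supremum argument: $|S_1(\bm{\widehat{\delta}}_i)| \le \big(\sup_{\bm{v}} \ldots\big)\cdot\big(\|\bm{\widehat{d}}_i\|_1 + \|\bm{g}_{i,\ma}^*\|_1\|\bm{\widehat{\phi}}_i\|_2\big)$, where the supremum is over a suitable unit ball of coefficient sequences with geometric envelope, and is controlled by an $\ell_\infty$-type concentration bound for the cross term $\frac{1}{T}\sum_t \varepsilon_{i,t}\, y_{j,t-h}$ over $1\le j\le N$, $t\le h$. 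Under Assumption \ref{assum:error} (sub-Gaussian errors) and the stationarity with $\sum_j\|\bm{\Psi}_j^*\|_{\op}^2 < \infty$ from Assumption \ref{assum:statn} plus the hypothesis of the lemma, each such sum is sub-exponential with the right scale; a union bound over $N$ coordinates and over the $O(p)$ near-terms, together with $T \gtrsim \log N$, yields the factor $\sqrt{\kappa_2\lambda_{\max}(\bm{\Sigma}_\varepsilon)(p\vee 1)\log N}/T$ and the tail probability $1 - C(p\vee 1)e^{-c\log N}$. Because $\langle \bm{\varepsilon}_t, \cdot\rangle$ here is just a scalar inner product rather than a vector one, the argument is literally the single-row specialization of the proof of Lemma \ref{lemma:init1}.

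The main obstacle — or rather, the only nontrivial bookkeeping — is keeping the $(p\vee 1)$ powers correct and making sure the geometric-tail supremum is taken over the right norm so that the final bound features $\|\bm{\widehat{d}}_i\|_1$ and $\|\bm{g}_{i,\ma}^*\|_1\|\bm{\widehat{\phi}}_i\|_2$ rather than $\ell_2$ norms; this is why one needs the duality step (passing from $\bm{v}^\top(\cdot)$ with $\|\bm{v}\|_2$-control to an $\ell_\infty$-on-the-data times $\ell_1$-on-the-coefficients bound) rather than a naive Cauchy–Schwarz. Since every ingredient — the geometric decay of $\bm{\widehat{\delta}}_{i,h}$, the sub-exponential tail for the error–data cross moments, the union bound — has an exact multivariate counterpart already used for Lemma \ref{lemma:init1}, I would write the proof as ``identical to the proof of Lemma \ref{lemma:init1} upon replacing the vector response $\bm{y}_t$ by its $i$th coordinate $y_{i,t}$, the matrices $\bm{\widehat{\Delta}}_h$ by the row vectors $\bm{\widehat{\delta}}_{i,h}$, and $\bm{g}_{\ma}^*$, $\bm{\widehat{d}}$, $\bm{\widehat{\phi}}$ by $\bm{g}_{i,\ma}^*$, $\bm{\widehat{d}}_i$, $\bm{\widehat{\phi}}_i$ respectively,'' and then flag any place where the reduction is not completely mechanical.
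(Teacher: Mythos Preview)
Your proposal is correct and matches the paper's approach exactly: the paper states that the proofs of Lemmas \ref{lemma:devbrow}--\ref{lemma:init3row} ``are straightforward univariate versions of those of Lemmas \ref{lemma:devb}--\ref{lemma:init3}, and hence are omitted,'' which is precisely the reduction you describe. Your sketch of the underlying argument (the three-way split of $S_1$ according to $t\le p$ versus $t\ge p+1$ and $h\le p$ versus $h\ge p+1$, the $\|\cdot\|_1\times\|\cdot\|_{\max}$ duality, and the concentration bounds on $\sum_t \varepsilon_{i,t}y_{j,t-h}$) also tracks the actual proof of Lemma \ref{lemma:init1} accurately.
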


\begin{lemma}[Effect of initial values II]\label{lemma:init2row}
	Under Assumptions \ref{assum:statn}--\ref{assum:error}, if  $\|\bm{\widehat{\omega}}_i - \bm{\omega}^*\|_{2}\leq c_{i,\bm{\omega}}$ and $T\gtrsim \log \{N (p\vee1)\}$, then 
	with probability at least $1 - C(p\vee1) e^{-c\log \{N(p\vee1)\}}$, 
	\[
	|S_2(\bm{\widehat{\delta}}_i
	)| \leq \frac{C_{\init2} \kappa_2(p\vee1)^2}{T} \left (\|\bm{\widehat{d}}_i\|_1 + \|\bm{g}_{i,\ma}^{*}\|_1 \|\bm{\widehat{\phi}}_i\|_2 \right),
	\]
	where $C_{\init2}>0$ is an absolute constant.
\end{lemma}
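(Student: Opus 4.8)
The plan is to obtain Lemma \ref{lemma:init2row} as the one-coordinate specialization of its multivariate counterpart, Lemma \ref{lemma:init2}, so I will describe that argument directly in the rowwise setting. Write $u_t := \sum_{h=t}^{\infty} \bm{a}_{i,h}^{*\top}\bm{y}_{t-h}$ for the scalar ``initial-value tail'' of $y_{i,t}$ --- the part of the true conditional mean of $y_{i,t}$ that involves the unobserved past $\{\bm{y}_s:s\leq 0\}$ --- so that, writing $\bm{\widehat{\delta}}_{i,k}$ for the $i$th row of $\bm{\widehat{\Delta}}_k$,
\[
S_2(\bm{\widehat{\delta}}_i)=\frac{2}{T}\sum_{t=2}^{T} u_t \Big(\sum_{k=1}^{t-1}\bm{\widehat{\delta}}_{i,k}^\top \bm{y}_{t-k}\Big).
\]
The strategy is to (a) expand the inner sum so that only $\|\bm{\widehat{d}}_i\|_1$ and $\|\bm{g}_{i,\ma}^{*}\|_1\|\bm{\widehat{\phi}}_i\|_2$ survive as ``variable'' factors, and (b) bound the resulting empirical-correlation factors on a high-probability event.

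First I would split the inner sum at lag $p$. For $k\leq p$, $\bm{\widehat{\delta}}_{i,k}$ is the $i$th row of $\bm{\widehat{D}}_k$, so expanding entrywise gives $\sum_{t}u_t\bm{\widehat{\delta}}_{i,k}^\top\bm{y}_{t-k}=\sum_{j=1}^N [\bm{\widehat{D}}_k]_{ij}\sum_{t}u_t[\bm{y}_{t-k}]_j$, whence the $k\leq p$ part is at most $\|\bm{\widehat{d}}_{i,\ar}\|_1\cdot \max_{1\leq j\leq N,\,1\leq k\leq p}\big|\tfrac1T\sum_t u_t[\bm{y}_{t-k}]_j\big|$. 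For $k>p$, writing $k=p+h$, I would substitute the decomposition of $\bm{\widehat{\Delta}}_{p+h}$ from \eqref{eq:delta}--\eqref{eq:Rh}: its $i$th row is a combination of the rows of $\bm{\widehat{D}}_\ma$ with coefficients $\ell_h^{I}(\widehat{\lambda}_j),\ell_h^{II,\iota}(\widehat{\bm{\eta}}_m)$ (each $\lesssim\bar\rho^h$, by Lemma \ref{cor1}) plus a combination of the rows of $\bm{G}^*_\ma$ whose coefficients are $O(\bar\rho^h\|\bm{\widehat{\phi}}_i\|_2)$ (the first-order terms $(\widehat{\lambda}_j-\lambda_j^*)\nabla\ell_h^I$, etc., together with the second-order remainders, all geometrically decaying since $\|\bm{\widehat{\phi}}_i\|_2\leq c_{i,\bm{\omega}}\leq 2$). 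After exchanging sums, the $k>p$ part is bounded by $\big(\|\bm{\widehat{d}}_{i,\ma}\|_1+\|\bm{g}_{i,\ma}^{*}\|_1\|\bm{\widehat{\phi}}_i\|_2\big)$ times $\max_{1\leq j\leq N}\big|\tfrac1T\sum_t u_t\sum_{h\geq 1}\bar\rho^h[\bm{y}_{t-p-h}]_j\big|$.

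It then remains to control, on a high-probability event, the $O(N(p\vee1))$ empirical correlations above. Under Assumption \ref{assum:statn}(i) the AR coefficients decay geometrically --- $\|\bm{A}_h^*\|_\op\lesssim\bar\rho^{(h-p)\vee 0}\sum_j\|\bm{G}^*_{p+j}\|_\op$ for $h>p$ and $\bm{A}_h^*=\bm{G}_h^*$ for $h\leq p$ --- so $|u_t|\lesssim \bar\rho^{(t-p)\vee 0}\sum_{s\geq0}\bar\rho^s\|\bm{y}_{-s}\|_2+\mathbb{I}_{\{t\leq p\}}\,(p\vee1)\max_{s\leq 0}\|\bm{y}_s\|_2\max_k\|\bm{G}^*_k\|_\op$. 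Summing over $t=2,\dots,T$, the $t>p$ range contributes $O\big(\sum_{s\geq0}\bar\rho^s\|\bm{y}_{-s}\|_2\big)$ (geometric in $t$), while the at-most-$(p\vee1)$ values $t\leq p$ each contribute $O(p\vee1)$, which is precisely where the $(p\vee1)^2$ factor arises. Since $\{\bm{y}_t\}$ is a stationary linear process in sub-Gaussian innovations (Assumptions \ref{assum:statn}(ii) and \ref{assum:error}), $\|\bm{y}_s\|_2$ has uniformly sub-Gaussian tails with parameter $\lesssim\sqrt{\kappa_2}$; moreover each empirical correlation $\tfrac1T\sum_t u_t[\bm{y}_{t-k}]_j$ (and its $\bar\rho$-weighted analogue) is a quadratic form in the innovation sequence with geometrically decaying weights, whose mean and sub-exponential norm are of order $\kappa_2$ by the spectral bounds $\mu_{\max}(\bm{\Psi}_*)$ and $\lambda_{\max}(\bm{\Sigma}_\varepsilon)$. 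A Hanson--Wright / Bernstein bound together with a union bound over the $O(N(p\vee1))$ terms (using $T\gtrsim\log\{N(p\vee1)\}$ to stay in the sub-Gaussian regime) then produces the event of probability at least $1-C(p\vee1)e^{-c\log\{N(p\vee1)\}}$ on which all these factors are $O(\kappa_2)$, giving $|S_2(\bm{\widehat{\delta}}_i)|\leq \tfrac{C\kappa_2(p\vee1)^2}{T}\big(\|\bm{\widehat{d}}_i\|_1+\|\bm{g}_{i,\ma}^{*}\|_1\|\bm{\widehat{\phi}}_i\|_2\big)$, the claimed bound.

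The main obstacle is bookkeeping rather than any single hard estimate: one must propagate the bilinear-in-$(\bm{\widehat{\phi}}_i,\bm{\widehat{d}}_i)$ structure of $\bm{\widehat{\Delta}}$ carefully so that only $\|\bm{\widehat{d}}_i\|_1$ and $\|\bm{g}_{i,\ma}^{*}\|_1\|\bm{\widehat{\phi}}_i\|_2$ --- and not, say, $\|\bm{\widehat{\delta}}_i\|_2$ --- appear, and one must split the $t\leq p$ regime (where $u_t$ does not decay) from the $t>p$ regime to land exactly on the $(p\vee1)^2$ dependence; controlling the unobserved past $\sum_{s\geq0}\bar\rho^s\|\bm{y}_{-s}\|_2$ rests on stationarity together with $\bar\rho<1$.
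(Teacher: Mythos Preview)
Your overall plan---split the inner sum at lag $p$, extract $\|\bm{\widehat{d}}_i\|_1$ and $\|\bm{g}_{i,\ma}^{*}\|_1\|\bm{\widehat{\phi}}_i\|_2$ via H\"older after substituting the Taylor expansion \eqref{eq:delta}--\eqref{eq:Rh}, then control the remaining empirical correlations by concentration---is the right one and matches the paper's rowwise specialization of the proof of Lemma~\ref{lemma:init2}. The paper additionally splits by $t$ (into $2\leq t\leq p+1$ versus $t\geq p+2$), giving the three pieces $S_{21},S_{22},S_{23}$ analogous to \eqref{eq:S20}; your treatment of the $t\leq p$ regime plays the role of $S_{21}$.

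There is, however, a genuine gap in the probabilistic step. The claim that ``$\|\bm{y}_s\|_2$ has uniformly sub-Gaussian tails with parameter $\lesssim\sqrt{\kappa_2}$'' is false: $\mathbb{E}\|\bm{y}_s\|_2^2=\trace(\bm{\Sigma}_y)$ can be as large as $N\kappa_2$, so bounding $|u_t|$ pointwise via $\|\bm{y}_{t-h}\|_2$ (and likewise the factor $\max_k\|\bm{G}_k^*\|_\op$) would inject an $N$-dependence that is not in the statement. Your alternative sentence---that each $\tfrac1T\sum_t u_t\,y_{j,t-k}$ is a quadratic form in the innovations---is the correct route, but you cannot mix the two: the accounting you give for the $(p\vee1)^2$ factor relies on the $\|\bm{y}_s\|_2$ bound, not on the quadratic-form argument.

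The paper never isolates $|u_t|$. After pulling out the $\bm{\widehat{D}}_k$-entries, what remains for each fixed $(j,k)$ is a sum of scalar cross-terms of the form $\sum_t y_{j,t-k}\,\bm{y}_{t-h}^\top\bm{a}_{i,h}^*$; each is a single quadratic form in the sub-Gaussian innovations with scale governed by $\kappa_2\|\bm{a}_{i,h}^*\|_2$ (via Lemmas~\ref{lemma:hansonw} and~\ref{lemma:Wcov}), and the geometric decay $\|\bm{a}_{i,h}^*\|_2\lesssim\bar\rho^{h-p}$ for $h>p$ makes the sum over $h$ converge. The $(p\vee1)^2$ then arises from two separate $p$-length sums (over $t$ in the range $t\leq p+1$, and over the AR lags $h\leq p$ where no $\bar\rho$-decay is available), not from summing a pointwise bound on $|u_t|$. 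If you rewrite your argument to work entirely through these scalar quadratic forms---dropping the $\|\bm{y}_s\|_2$ step---you recover exactly the paper's proof.
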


\begin{lemma}[Effect of initial values III] \label{lemma:init3row}
	Under Assumptions \ref{assum:statn}--\ref{assum:error}, if  $\|\bm{\widehat{\omega}}_i - \bm{\omega}^*\|_{2}\leq c_{i,\bm{\omega}}$ and $\log N\gtrsim  (\kappa_2/\kappa_1)^2$, then	with probability at least  $1-C  e^ {-c \kappa_1^2 (p\vee1) \log \{N(p\vee1)\}/\kappa_2^2}$,
	\begin{equation*}
		|S_{3}(\bm{\widehat{\delta}}_i
		)| \leq  \frac{ C_{\init3}  \kappa_2 (p\vee1)}{T}  \left [\|\bm{\widehat{\delta}}_i
		\|_{2}^2 \log \{N(p\vee1)\}+ \|\bm{\widehat{d}}_i\|_1^2 \right],
	\end{equation*}
	where $C_{\init3}>0$ is an absolute constant.
\end{lemma}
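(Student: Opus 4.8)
Since $S_3(\bm{\delta}_i) = (3/T)\sum_{t=1}^T\big(\sum_{k=t}^\infty \bm{\delta}_{i,k}^\top\bm{y}_{t-k}\big)^2$ defined in \eqref{eq:notation_initrow} is the scalar-response specialization of the quantity $S_3(\bm{\Delta})$ in \eqref{eq:notation_init}, the plan is to reproduce the proof of Lemma \ref{lemma:init3} with each matrix product $\bm{\widehat{\Delta}}_k\bm{y}_{t-k}$ replaced by the scalar $\bm{\widehat{\delta}}_{i,k}^\top\bm{y}_{t-k}$ and each Frobenius norm $\|\bm{\widehat{\Delta}}_k\|_\Fr$ replaced by the Euclidean norm $\|\bm{\widehat{\delta}}_{i,k}\|_2$. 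The essential mechanism is that the inner sum runs over lags $k\geq t$, i.e. over the non-positive time indices $t-k\leq 0$ that were discarded by the initialization $\bm{y}_s=\bm{0}$; I would control it by combining the exponential decay of the coefficients $\bm{\widehat{\delta}}_{i,k}$ in $k$ with a sub-exponential concentration bound for $\{\|\bm{y}_s\|_2^2\}$.

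First I would split the inner sum into its autoregressive block ($1\leq k\leq p$) and its moving-average block ($k>p$). In the AR block the constraint $k\geq t$ forces $t\leq p$, so at most $(p\vee1)$ summands $t$ contribute and there $\bm{\widehat{\delta}}_{i,k}=\bm{\widehat{d}}_{i,k}$; after Cauchy--Schwarz this block is bounded by $(p\vee1)T^{-1}$ times a random factor and $\|\bm{\widehat{d}}_i\|_1^2$, which accounts for both the $(p\vee1)$ prefactor and the $\|\bm{\widehat{d}}_i\|_1^2$ term. In the MA block I would invoke the Taylor-expansion estimate underlying Proposition \ref{prop:perturbrow} (the row analogue of \eqref{eq:Rnorm1}--\eqref{eq:prop2}, which itself rests on Lemma \ref{cor1}) to obtain $\|\bm{\widehat{\delta}}_{i,k}\|_2\lesssim \bar{\rho}^{k-p}(\|\bm{\widehat{d}}_i\|_2+\overline{\alpha}_{i,\ma}\|\bm{\widehat{\phi}}_i\|_2)$, whose squared geometric sum is $\lesssim \|\bm{\widehat{\delta}}_i\|_2^2$ by the norm equivalence in Proposition \ref{prop:perturbrow}; this yields the $\|\bm{\widehat{\delta}}_i\|_2^2$ term.

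The quantitative step is a weighted Cauchy--Schwarz: writing $\big(\sum_{k=t}^\infty\|\bm{\widehat{\delta}}_{i,k}\|_2\|\bm{y}_{t-k}\|_2\big)^2\leq\big(\sum_k\bar{\rho}^{-(k-p)}\|\bm{\widehat{\delta}}_{i,k}\|_2^2\big)\big(\sum_k\bar{\rho}^{k-p}\|\bm{y}_{t-k}\|_2^2\big)$, the first factor is the deterministic geometric sum handled above, while the second factor is a random quantity over negative-time indices. Summing over $t$ and using Assumption \ref{assum:error} together with $\sum_j\|\bm{\Psi}_j^*\|_{\op}^2<\infty$, each $\|\bm{y}_s\|_2^2$ is sub-exponential with scale $\asymp\kappa_2$; a union bound over the $O(N(p\vee1))$ relevant indices then gives, on an event of probability at least $1-Ce^{-c\kappa_1^2(p\vee1)\log\{N(p\vee1)\}/\kappa_2^2}$ (using $\log N\gtrsim(\kappa_2/\kappa_1)^2$), the uniform control that produces the factor $\kappa_2(p\vee1)\log\{N(p\vee1)\}/T$ multiplying $\|\bm{\widehat{\delta}}_i\|_2^2$.

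The main obstacle will be the concentration over the discarded negative-time tail: I must bound the weighted sums $\sum_{t=1}^T\sum_{k\geq t}\bar{\rho}^{k-p}\|\bm{y}_{t-k}\|_2^2$ uniformly over all $t$ while retaining the geometric weights, and verify that the per-coordinate sub-exponential tails of $y_{j,s}^2$ combine under the stated probability without degrading the $(p\vee1)$ and $\log\{N(p\vee1)\}$ factors. A secondary delicate point is allocating the Cauchy--Schwarz budget so that the finitely many AR lags contribute only the $\ell_1$ term $\|\bm{\widehat{d}}_i\|_1^2$ while the geometrically decaying MA block contributes the $\ell_2$ term $\|\bm{\widehat{\delta}}_i\|_2^2$, matching the two norms exactly as stated.
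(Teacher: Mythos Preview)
Your overall plan---specialize the proof of Lemma~\ref{lemma:init3} to a single row, split into the AR block $1\le k\le p$ and the MA block $k>p$, and use the Taylor-expansion bounds underlying Proposition~\ref{prop:perturbrow} for the latter---is exactly what the paper does (the paper simply omits the proof, calling it a ``straightforward univariate version'' of Lemma~\ref{lemma:init3}). However, the concrete quantitative step you propose contains a gap that would cost you a factor of~$N$.

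The problem is your displayed weighted Cauchy--Schwarz, which first bounds $\lvert\bm{\widehat{\delta}}_{i,k}^\top\bm{y}_{t-k}\rvert\le\|\bm{\widehat{\delta}}_{i,k}\|_2\,\|\bm{y}_{t-k}\|_2$ and then needs to control $\|\bm{y}_{t-k}\|_2^2$. You assert that ``each $\|\bm{y}_s\|_2^2$ is sub-exponential with scale $\asymp\kappa_2$,'' but this is false: $\mathbb{E}\|\bm{y}_s\|_2^2=\trace(\bm{\Sigma}_y)$ lies between $N\kappa_1$ and $N\kappa_2$ by Lemma~\ref{lemma:Wcov}(i), so the scale is $\asymp N\kappa_2$, not $\kappa_2$. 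Your bound would therefore read $\kappa_2 N(p\vee1)T^{-1}[\cdots]$ rather than $\kappa_2(p\vee1)T^{-1}[\cdots]$, and the extra $N$ cannot be absorbed anywhere.

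The fix is not to pass through $\|\bm{y}_{t-k}\|_2$ at all. The paper's proof of Lemma~\ref{lemma:init3} keeps the quadratic-form structure: for a fixed direction $\bm{u}\in\mathbb{R}^N$ one has $T_1^{-1}\sum_t(\bm{u}^\top\bm{y}_{t-j})^2\lesssim\kappa_2\|\bm{u}\|_2^2$ with high probability (Lemma~\ref{alem:rsc3aux}), with no $N$. Since $\bm{\widehat{\delta}}_{i,k}$ is random, uniformity is obtained via the sparse-set / Supplementary Lemma~12 of \cite{LW12} device, exactly as in \eqref{eq:S31c}--\eqref{eq:S31} and \eqref{eq:S32a}--\eqref{eq:S33}. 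Concretely, the rowwise specialization of those displays replaces $\bm{D}_k$ by the row vector $\bm{d}_{i,k}^\top$, $\bm{D}_{\ma}$ by $\bm{d}_{i,\ma}^\top$, $\bm{M}(\bm{\phi})$ by $\bm{m}_i(\bm{\phi})^\top$, and Frobenius norms by Euclidean norms; the four-term decomposition \eqref{eq:S32triangle}--\eqref{eq:S33triangle} and the Hanson--Wright bounds go through verbatim and deliver the $\kappa_2(p\vee1)[\|\bm{\widehat{\delta}}_i\|_2^2\log\{N(p\vee1)\}+\|\bm{\widehat{d}}_i\|_1^2]$ form directly, without your weighted Cauchy--Schwarz.
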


\subsection{Proof of Proposition \ref{prop:perturbrow}}\label{asec:prop2row}
Note that  $\bm{a}_{i,k}= \bm{g}_{i,k}$  for $1\leq k\leq p$, and
\begin{align*}
	\bm{a}_{i,p+h} =  \sum_{j=1}^{r}\ell_{h}^{I}(\lambda_j)\bm{g}_{i,p+j}+\sum_{m=1}^{s}\left\{\ell_{h}^{II,1}(\bm{\eta}_m)\bm{g}_{i,p+r+2m-1}+\ell_{h}^{II,2}(\bm{\eta}_m)\bm{g}_{i,p+r+2m}\right\},\quad\forall h\geq 1.
\end{align*}
Then $\bm{\delta}_{i,k}=\bm{g}_{i,k}-\bm{g}_{i,k}^{*}$ for $1\leq k\leq p$, and  by the Taylor expansion, for any $h\geq 1$, we have
\begin{align}\label{eq:deltarow}
	\bm{\delta}_{i,p+h}&=\bm{a}_{i,p+h}-\bm{a}_{i,p+h}^{*} \notag \\
	&=\sum_{j=1}^{r}\Bigg \{\ell_{h}^{I}(\lambda_j^*) +\nabla\ell_{h}^{I}(\lambda_j^*) (\lambda_j-\lambda_j^*) +\frac{1}{2}\nabla^2\ell_{h}^{I}(\widetilde{\lambda}_j) (\lambda_j-\lambda_j^*)^2 \Bigg \}\bm{g}_{i,p+j}\notag \\
	&\hspace{5mm} +\sum_{m=1}^{s}\Bigg \{\ell_{h}^{II,1}(\bm{\eta}_m^*) +(\bm{\eta}_m-\bm{\eta}_m^*)^\top \nabla \ell_{h}^{II,1}(\bm{\eta}_m^*) \notag\\
	&\hspace{33mm} +\frac{1}{2}(\bm{\eta}_m-\bm{\eta}_m^*)^{\top}\nabla^2 \ell_{h}^{II,1}(\widetilde{\bm{\eta}}_j)(\bm{\eta}_m-\bm{\eta}_m^*)\Bigg \}\bm{g}_{i,p+r+2m-1}\notag \\
	&\hspace{5mm} +\sum_{m=1}^{s}\Bigg \{\ell_{h}^{II,2}(\bm{\eta}_m^*) + (\bm{\eta}_m-\bm{\eta}_m^*)^\top \nabla \ell_{h}^{II,2}(\bm{\eta}_m^*) \notag\\ &\hspace{33mm}+\frac{1}{2}(\bm{\eta}_m-\bm{\eta}_m^*)^{\top}\nabla^2 \ell_{h}^{II,2}(\widetilde{\bm{\eta}}_j) (\bm{\eta}_m-\bm{\eta}_m^*)\Bigg \}\bm{g}_{i,p+r+2m} -\bm{a}_{i,p+h}^{*} \notag\\
	&:=\bm{h}_{i,h}+\bm{r}_{i,h},
\end{align}
where $\widetilde{\lambda}_j$ lies between $\lambda_j^*$ and $\lambda_j$ for $1 \leq j \leq r$, $\widetilde{\bm{\eta}}_j$ lies between $\bm{\eta}_m^*$ and $\bm{\eta}_m$ for $1 \leq m \leq s$, the first-order approximation is
\begin{align}\label{eq:Hhrow}
	\bm{h}_{i,h} &=\sum_{j=1}^{r}\ell_{h}^{I}(\lambda_j^*) (\bm{g}_{i,p+j}-\bm{g}_{i,p+j}^{*}) +\sum_{m=1}^{s}\sum_{\iota=1}^2\ell_{h}^{II,\iota}(\bm{\eta}_m^*) (\bm{g}_{i,p+r+2(m-1)+\iota}-\bm{g}_{i,p+r+2(m-1)+\iota}^{*}) \notag\\
	&\hspace{5mm} +\sum_{j=1}^{r}(\lambda_j-\lambda_j^*)\nabla\ell_{h}^{I}(\lambda_j^*) \bm{g}_{i,p+j}^{*} +\sum_{m=1}^{s}\sum_{\iota=1}^2(\bm{\eta}_m-\bm{\eta}_m^*)^\top \nabla \ell_{h}^{II,\iota}(\bm{\eta}_m^*)\bm{g}_{i,p+r+2(m-1)+\iota}^{*},
\end{align}
and the remainder is
\begin{align} \label{eq:Rhrow}
	\bm{r}_{i,h} &= \sum_{i=1}^{r}\nabla\ell_{h}^{I}(\lambda_j^*) (\lambda_j-\lambda_j^*) (\bm{g}_{i,p+j} - \bm{g}_{i,p+j}^{*}) \notag\\
	&\hspace{5mm}+ \sum_{m=1}^{s}\sum_{\iota=1}^2(\bm{\eta}_m-\bm{\eta}_m^*)^\top \nabla \ell_{h}^{II,\iota}(\bm{\eta}_m^*)(\bm{g}_{i,p+r+2(m-1)+\iota} - \bm{g}_{i,p+r+2(m-1)+\iota}^{*}) \notag\\
	&\hspace{5mm} +\frac{1}{2} \sum_{j=1}^{r}\nabla^2\ell_{h}^{I}(\widetilde{\lambda}_j) (\lambda_j-\lambda_j^*)^2 \bm{g}_{i,p+j} \notag\\
	&\hspace{5mm} 
	+\frac{1}{2} \sum_{m=1}^{s}\sum_{\iota=1}^2(\bm{\eta}_m-\bm{\eta}_m^*)^{\top}\nabla^2 \ell_{h}^{II,\iota}(\widetilde{\bm{\eta}}_j)(\bm{\eta}_m-\bm{\eta}_m^*)\bm{g}_{i,p+r+2(m-1)+\iota}.	
\end{align}
Here for notational simplicity, we have suppressed the dependence of  $\widetilde{\lambda}_j$'s and $\widetilde{\bm{\eta}}_j$'s on $i,h$.

We first consider $\bm{r}_{i,h}$. Denote $\bm{r}_{i,h}=\bm{r}_{i,1h} +\bm{r}_{i,2h}+\bm{r}_{i,3h}$, where
\begin{align}\label{eq:Rhsrow}
	\bm{r}_{i,1h}=& \sum_{j=1}^{r}\nabla\ell_{h}^{I}(\lambda_j^*)  (\lambda_j-\lambda_j^*) (\bm{g}_{i,p+j} - \bm{g}_{i,p+j}^{*}) \notag\\
	&
	+ \sum_{m=1}^{s}\sum_{\iota=1}^2(\bm{\eta}_m-\bm{\eta}_m^*)^\top \nabla \ell_{h}^{II,\iota}(\bm{\eta}_m^*)(\bm{g}_{i,p+r+2(m-1)+\iota} - \bm{g}_{i,p+r+2(m-1)+\iota}^{*}), \notag\\
	\bm{r}_{i,2h} =&\frac{1}{2}\sum_{j=1}^{r}\nabla^2\ell_{h}^{I}(\widetilde{\lambda}_j) (\lambda_j-\lambda_j^*)^2 (\bm{g}_{i,p+j} - \bm{g}_{i,p+j}^{*} ) \notag	\\
	&+\frac{1}{2} \sum_{m=1}^{s}\sum_{\iota=1}^2(\bm{\eta}_m-\bm{\eta}_m^*)^{\top}\nabla^2 \ell_{h}^{II,\iota}(\widetilde{\bm{\eta}}_j)(\bm{\eta}_m-\bm{\eta}_m^*)(\bm{g}_{i,p+r+2(m-1)+\iota} - \bm{g}_{i,p+r+2(m-1)+\iota}^{*}), \notag\\
	\bm{r}_{i,3h} =&\frac{1}{2}\sum_{j=1}^{r}\nabla^2\ell_{h}^{I}(\widetilde{\lambda}_j) (\lambda_j-\lambda_j^*)^2 \bm{g}_{i,p+j}^{*} \notag\\
	&+ \frac{1}{2} \sum_{m=1}^{s}\sum_{\iota=1}^2(\bm{\eta}_m-\bm{\eta}_m^*)^{\top}\nabla^2 \ell_{h}^{II,\iota}(\widetilde{\bm{\eta}}_j)(\bm{\eta}_m-\bm{\eta}_m^*) \bm{g}_{i,p+r+2(m-1)+\iota}^{*}.
\end{align}

Similar to the proof of Proposition \ref{prop:perturb}, by Lemma \ref{cor1}, we can show that
\begin{align*}
	\| \bm{r}_{i,1h}\|_{2}
	&\leq C_{\ell}\bar{\rho}^h \sqrt{\|\bm{\lambda} - \bm{\lambda}^*\|_2^2 + 2\|\bm{\eta} - \bm{\eta}^*\|_2^2} \\
	&\hspace{5mm}\cdot\sqrt{ \sum_{j=1}^{r}\|\bm{g}_{i,p+j} - \bm{g}_{i,p+j}^{*}\|_{2}^2 + \sum_{m=1}^{s}\sum_{\iota=1}^2 \|\bm{g}_{i,p+r+2(m-1)+\iota} - \bm{g}_{i,p+r+2(m-1)+\iota}^{*}\|_{2}^2 } \notag \\
	&\leq \sqrt{2}C_{\ell}\bar{\rho}^h \|\bm{\phi}\|_2 \cdot \|\bm{g}_{i,\ma} - \bm{g}_{i,\ma}^{*}\|_{2} \leq \sqrt{2}C_{\ell}\bar{\rho}^h \|\bm{\phi}\|_2 \|\bm{d}_i\|_{2},
\end{align*}
and similarly,
\begin{align*}
	\| \bm{r}_{i,2h}\|_{2}
	&\leq \frac{\sqrt{2}}{2} C_{\ell}\bar{\rho}^h \|\bm{\phi}\|_2^2 \cdot \|\bm{g}_{i,\ma} - \bm{g}_{i,\ma}^{*}\|_{2} \leq \frac{\sqrt{2}}{2} C_{\ell}\bar{\rho}^h \|\bm{\phi}\|_2^2 \|\bm{d}_i\|_{2}.
\end{align*}
Moreover, by Lemma \ref{cor1} again, we can show that
\begin{equation*}
	\| \bm{r}_{i,3h}\|_{2} \leq  \frac{\sqrt{2}}{2} C_{\ell}\overline{\alpha}_{i,\ma} \bar{\rho}^h \|\bm{\phi}\|_2^2.
\end{equation*}
As a result,
\begin{align} \label{eq:Rnorm1row}
	\|\bm{r}_{i,h}\|_{2} &\leq \|\bm{r}_{i,1h}\|_{2} + \|\bm{r}_{i,2h}\|_{2} + \|\bm{r}_{i,3h}\|_{2} \notag\\ &\leq    C_{\ell} \bar{\rho}^h \|\bm{\phi}\|_2 \left ( \sqrt{2}  \|\bm{d}_i\|_{2} + \frac{\sqrt{2}}{2} \|\bm{\phi}\|_2\|\bm{d}_i\|_{2}+ \frac{\sqrt{2}}{2}  \overline{\alpha}_{i,\ma} \|\bm{\phi}\|_2 \right ).
\end{align}

Now consider $\bm{h}_{i,h}$ in \eqref{eq:Hhrow}. 
Notice that for any $h\geq1$ and $1\leq j\leq s$, 
\begin{align*}
	&\nabla_\gamma\ell_{h}^{II,1}(\bm{\eta}_m)=h\gamma_m^{h-1}\cos(h\theta_m)=\frac{1}{\gamma_m}\nabla_\theta\ell_{h}^{II,2}(\bm{\eta}_m),\\
	&\nabla_\gamma\ell_{h}^{II,2}(\bm{\eta}_m)=h\gamma_m^{h-1}\sin(h\theta_m)=-\frac{1}{\gamma_m}\nabla_\theta\ell_{h}^{II,1}(\bm{\eta}_m).
\end{align*}
Thus, the last term on the right side of  \eqref{eq:Hhrow} can be simplified to 
\begin{align} \label{eq:linearcombrow}
	&\sum_{m=1}^{s}\sum_{\iota=1}^2(\bm{\eta}_m-\bm{\eta}_m^*)^\top \nabla \ell_{h}^{II,\iota}(\bm{\eta}_m^*)\bm{g}_{i,p+r+2(m-1)+\iota}^{*} \notag\\
	&\hspace{5mm}= \sum_{m=1}^{s}\left[ (\theta_m - \theta_m^*) \bm{g}_{i,p+r+2m-1}^{*} - \frac{1}{\gamma_m^*}(\gamma_m -\gamma_m^*)\bm{g}_{i,p+r+2m}^{*}\right]\nabla_{\theta} \ell_{h}^{II,1}(\bm{\eta}_m^*) \notag\\
	&\hspace{10mm}+\sum_{m=1}^{s} \left[ (\theta_m - \theta_m^*) \bm{g}_{i,p+r+2m}^{*} + \frac{1}{\gamma_m^*}(\gamma_m -\gamma_m^*)\bm{g}_{i,p+r+2m-1}^{*}\right]\nabla_{\theta} \ell_{h}^{II,2}(\bm{\eta}_m^*).
\end{align}
Let $\bm{h}_i=(\bm{h}_{i,1}^\top, \bm{h}_{i,2}^\top, \dots)^\top$ and  $\bm{r}_i=(\bm{r}_{i,1}^\top, \bm{r}_{i,2}^\top, \dots)^\top$. Then by \eqref{eq:Hhrow} and \eqref{eq:linearcombrow} it can be verified that 
\begin{align}\label{eq:stacjHrow}
	\bm{\widetilde{h}}_i:= ((\bm{g}_{i,1}-\bm{g}_{i,1}^{*})^\top,\cdots,(\bm{g}_{i,p}-\bm{g}_{i,p}^{*})^\top,\bm{h}_i^\top)^\top&=(\bm{L}(\bm{\omega}^*) \otimes \bm{I}_N)\bm{d}_i + (\bm{P}(\bm{\omega}^*)\otimes \bm{I}_N)\bm{m}_i(\bm{\phi})\notag\\
	&=(\bm{L}_{\rm{stacj}}(\bm{\omega}^*)\otimes \bm{I}_N)\bm{g}_{i,\rm{stacj}}(\bm{\phi},\bm{d}_i).
\end{align}

Note that 
\begin{equation}\label{eq:Deltarow}
	\bm{\delta}_i=\bm{\widetilde{h}}_i+
	\left (\begin{matrix}
		\bm{0}_{Np}\\
		\bm{r}_i
	\end{matrix}\right )
\end{equation}
Moreover,
\begin{align*}
	\|\bm{m}_i(\bm{\phi})\|_{2}^2 &= \sum_{j=1}^{r}(\lambda_j - \lambda_j^*)^2\|\bm{g}_{i,p+j}^{*}\|_{2}^2 + \sum_{m=1}^{s} \left\| (\theta_m - \theta_m^*) \bm{g}_{i,p+r+2m-1}^{*} - \frac{\gamma_m -\gamma_m^*}{\gamma_m^*}\bm{g}_{i,p+r+2m}^{*}\right\|_{2}^2 \notag\\
	&\hspace{5mm}+ \sum_{m=1}^{s}\left\| (\theta_m - \theta_m^*) \bm{g}_{i,p+r+2m}^{*} + \frac{\gamma_m -\gamma_m^*}{\gamma_m^*}\bm{g}_{i,p+r+2m-1}^{*}\right\|_{2}^2 \notag\\
	&= \sum_{j=1}^{r}(\lambda_j - \lambda_j^*)^2\|\bm{g}_{i,p+j}^{*}\|_{2}^2 + \sum_{m=1}^{s} (\theta_m - \theta_m^*)^2(\|\bm{g}_{i,p+r+2m-1}^{*}\|_{2}^2 + \|\bm{g}_{i,p+r+2m}^{*}\|_{2}^2 ) \notag\\
	&\hspace{5mm}+ \sum_{m=1}^{s}\frac{(\gamma_m - \gamma_m^*)^2}{\gamma_m^{*2}} (\|\bm{g}_{i,p+r+2m-1}^{*}\|_{2}^2 + \|\bm{g}_{i,p+r+2m}^{*}\|_{2}^2 ),
\end{align*}
which leads to
\begin{equation}\label{eq:DFrrow}
	\underline{\alpha}_{i,\ma} \|\bm{\phi}\|_2  \leq \|\bm{m}_i(\bm{\phi})\|_{2} \leq \frac{\sqrt{2}\overline{\alpha}_{i,\ma}}{\min_{1\leq j\leq s}\gamma_m^*} \|\bm{\phi}\|_2.
\end{equation}
By the simple inequalities $ (|x| + |y|) / 2 \leq \sqrt{x^2 + y^2} \leq |x| + |y|$, we have
$0.5(\|\bm{d}_i\|_{2} + \|\bm{m}_i(\bm{\phi})\|_{2}) \leq  \|\bm{g}_{i,\rm{stacj}}(\bm{\phi},\bm{d}_i)\|_{2}	\leq \|\bm{d}_i\|_{2} + \|\bm{m}_i(\bm{\phi})\|_{2}$,
and thus in view of \eqref{eq:DFrrow} we further have
\begin{equation}\label{eq:Gstacjnormrow}
	\frac{1}{2}(\|\bm{d}_i\|_{2} +\underline{\alpha}_{i,\ma}\|\bm{\phi}\|_2) \leq  \|\bm{g}_{i,\rm{stacj}}(\bm{\phi},\bm{d}_i)\|_{2}	\leq \|\bm{d}_i\|_{2} + \frac{\sqrt{2}\overline{\alpha}_{i,\ma}}{\min_{1\leq j\leq s}\gamma_m^*} \|\bm{\phi}\|_2.
\end{equation} 
Then  it follows from \eqref{eq:Gstacjnormrow} that
\[
\frac{\sigma_{\min, L}}{2} (\|\bm{d}_i\|_{2} +\underline{\alpha}_{i,\ma} \|\bm{\phi}\|_2 )\leq \|\bm{\widetilde{h}}\|_{2} \leq  \sigma_{\max, L}\left (\|\bm{d}_i\|_{2} + \frac{\sqrt{2}\overline{\alpha}_{i,\ma}}{\min_{1\leq j\leq s}\gamma_m^*}\|\bm{\phi}\|_2\right ).
\]
Combining this with \eqref{eq:Rnorm1row}, \eqref{eq:Deltarow}, \eqref{eq:DFrrow}, as well as the fact that $\|\bm{g}_{i,\ma} - \bm{g}_{i,\ma}^{*}\|_{2}\leq \|\bm{d}_i\|_{2}$, we have
\begin{align*}
	\|\bm{\delta}_i\|_{2} &\leq \|\bm{\widetilde{h}}_i\|_{2} + \|\bm{r}_i\|_{2} \\
	&\leq \left\{\sigma_{\max, L} + \frac{\sqrt{2}C_{\ell}}{1-\bar{\rho}} \left (\|\bm{\phi}\|_2+\frac{\|\bm{\phi}\|_2^2}{2}\right )\right\} \|\bm{d}_i\|_{2}
	+\left(\frac{\sqrt{2}\overline{\alpha}_{i,\ma} \sigma_{\max, L}}{\min_{1\leq j\leq s}\gamma_m^*} + \frac{\sqrt{2}}{2}\cdot \frac{C_{\ell}\overline{\alpha}_{i,\ma}}{1-\bar{\rho}} \|\bm{\phi}\|_2\right)  \|\bm{\phi}\|_2
\end{align*}
and
\begin{align*}
	\|\bm{\delta}_i\|_{2} &\geq \|\bm{\widetilde{h}}_i\|_{2} - \|\bm{r}_i\|_{2}\\
	&\geq \left\{\frac{\sigma_{\min, L}}{2}-\frac{\sqrt{2}C_{\ell}}{1-\bar{\rho}}  \left (\|\bm{\phi}\|_2+\frac{\|\bm{\phi}\|_2^2}{2}\right )\right\} \|\bm{d}_i\|_{2} +\left( \frac{\underline{\alpha}_{i,\ma} \sigma_{\min, L}}{2}- \frac{\sqrt{2}}{2}\cdot\frac{C_{\ell}\overline{\alpha}_{i,\ma}}{1-\bar{\rho}} \|\bm{\phi}\|_2\right)  \|\bm{\phi}\|_2. 
\end{align*}
Thus, as long as
\begin{equation}\label{eq:comegarow}
	\|\bm{\phi}\|_2\leq c_{i,\bm{\omega}}\leq  \min\left \{2, \frac{\underline{\alpha}_{i,\ma}(1-\bar{\rho})\sigma_{\min, L} }{8\sqrt{2}C_{\ell}\overline{\alpha}_{i,\ma} }\right \},
\end{equation}
we have
\begin{equation}\label{eq:prop2row}
	c_{\Delta}	\left(\|\bm{d}_i\|_{2} +\|\bm{\phi}\|_2\right) 
	\leq \|\bm{\delta}_i\|_{2} \leq         
	C_{\Delta}	\left(\|\bm{d}_i\|_{2} +  \|\bm{\phi}\|_2\right),
\end{equation}
where $c_{\Delta}$ and $C_{\Delta}$ are absolute constants defined as in the proof of Proposition \ref{prop:perturb}. By Lemma  \ref{lemma:fullrank}, \eqref{eq:comegarow} is fulfilled by taking 
\begin{equation}\label{eq:comgrow}
	c_{i,\bm{\omega}} =  \min\left \{2, \frac{\underline{\alpha}_{i,\ma}(1-\bar{\rho})(1\wedge c_{\bar{\rho}}) }{8\sqrt{2}C_{\ell}\overline{\alpha}_{i,\ma}}\right \}.
\end{equation}
The proof of this proposition is complete.

\subsection{Proof of Theorem \ref{thm:lassorow}}
The proof of  this theorem  closely mirrors  that of Theorem \ref{thm:lasso}.
Note that $\sum_{h=1}^{t-1}\bm{a}_{i,h}^\top\bm{y}_{t-h}= \bm{a}_{i}^\top\bm{\widetilde{x}}_{t}$, where $\bm{\widetilde{x}}_{t}= (\bm{y}_{t-1}^\top,\dots,\bm{y}_1^\top,0,0,\dots)^\top$ is the initialized version of $\bm{x}_t$. By the optimality of $\bm{\widehat{a}}_{i}$, we have
\[
\frac{1}{T}\sum_{t=1}^{T} ( y_{i,t} - \bm{a}_i^{*\top}\bm{\widetilde{x}}_{t} - \bm{\widehat{\delta}}_i^\top \bm{\widetilde{x}}_{t})^2
\leq \frac{1}{T}\sum_{t=1}^{T} ( y_{i,t} - \bm{a}_i^{*\top}\bm{\widetilde{x}}_{t})^2+\lambda_g(\|\bm{g}_i^*\|_1-\|\bm{\widehat{g}}_i\|_1),
\]
Then, since $y_{i,t} - \bm{a}_i^{*\top}\bm{\widetilde{x}}_{t}=\varepsilon_{i,t} +   \sum_{h=t}^{\infty}\bm{a}_{i,h}^{*\top} \bm{y}_{t-h}$ and $\bm{\widehat{\delta}}_i^\top\bm{\widetilde{x}}_{t}=\bm{\widehat{\delta}}_i^\top \bm{x}_{t}-\sum_{k=t}^{\infty}\bm{\widehat{\delta}}_{i,k}^\top \bm{y}_{t-k}$, we have
\begin{align}\label{eq:thm1eqrow}
	\frac{1}{T}\sum_{t=1}^{T}(\bm{\widehat{\delta}}_i^\top\bm{\widetilde{x}}_{t})^2 &\leq 
	\frac{2}{T}\sum_{t=1}^{T}\langle \varepsilon_{i,t}, \bm{\widehat{\delta}}_i^\top\bm{\widetilde{x}}_{t} \rangle + \underbrace{\frac{2}{T}\sum_{t=1}^{T}\langle \sum_{h=t}^{\infty}\bm{a}_{i,h}^{*\top} \bm{y}_{t-h}, \bm{\widehat{\delta}}_i^\top\bm{\widetilde{x}}_{t}  \rangle}_{S_{2}(\bm{\widehat{\delta}}_i)} +\lambda_g(\|\bm{g}_i^*\|_1-\|\bm{\widehat{g}}_i\|_1 ) \notag \\
	&= \frac{2}{T}\sum_{t=1}^{T}\langle \varepsilon_{i,t}, \bm{\widehat{\delta}}_i^\top\bm{x}_t \rangle +\lambda_g(\|\bm{g}_i^*\|_1-\|\bm{\widehat{g}}_i\|_1 )+S_{2}(\bm{\widehat{\delta}}_i) - S_1(\bm{\widehat{\delta}}_i),
\end{align}
where $S_1(\cdot)$ and $S_2(\cdot)$ are defined as in \eqref{eq:notation_initrow}.
Moreover, similar to \eqref{eq:thm1eq1}, we can lower bound the left-hand side of  \eqref{eq:thm1eqrow} to further obtain that
\begin{align} \label{eq:thm1eq1row}
	\frac{3}{4T}\sum_{t=1}^{T}(\bm{\widehat{\delta}}_i^\top\bm{x}_t)^2-S_3(\bm{\widehat{\delta}}_i)
	&\leq \frac{2}{T}\sum_{t=1}^{T}\langle \varepsilon_{i,t}, \bm{\widehat{\delta}}_i^\top\bm{x}_t \rangle+ \lambda_g(\|\bm{g}_i^*\|_1-\|\bm{\widehat{g}}_i\|_1 )+ S_2(\bm{\widehat{\delta}}_i) - S_1(\bm{\widehat{\delta}}_i),
\end{align}
where $S_3(\cdot)$ is defined as in \eqref{eq:notation_initrow}.

Next we assume that the high probability events in Lemmas \ref{lemma:devbrow}--\ref{lemma:init3row} all hold and focus on the deterministic analysis. For a threshold $\eta>0$ to be chosen later, define the thresholded subsets
\begin{align*}
	S_{i,\ar}(\eta)&=\{ (j,k)\mid  |g_{i,j,k}^*|>\eta, j\in\{1,\dots, N\}, k\in\{1,\dots, p\}\},\\
	S_{i,\ma}(\eta)&=\{ (j,k)\mid  |g_{i,j,k}^*|>\eta, j\in\{1,\dots, N\}, k\in\{p+1,\dots, d\}\},
\end{align*}
and
\[
S_i(\eta)=S_{i,\ar}(\eta) \cup S_{i,\ma}(\eta)=\{ (j,k)\mid  |g_{i,j,k}^*|>\eta, j\in\{1,\dots, N\}, k\in\{1,\dots, d\}\}.
\]
Define  $S_i^\complement(\eta)=\{ (j,k)\mid j\in\{1,\dots, N\}, k\in\{1,\dots, d\}\}\setminus S_i(\eta)$ as the complementary set of $S_i(\eta)$. Similarly, the complementary set of $S_{i,\ma}(\eta)$ is  $S_{i,\ma}^\complement(\eta)=\{ (j,k)\mid j\in\{1,\dots, N\}, k\in\{p+1,\dots, d\}\}\setminus S_{i,\ma}(\eta)$. 

Note that 
\[
R_{i,q}\geq \sum_{j=1}^{N}\sum_{k=1}^{d} |g_{i,j,k}^*|^q \geq \sum_{(j,k)\in S_i(\eta)}  |g_{i,j,k}^*|^q \geq \eta^q |S_i(\eta)|,
\]
and
\begin{equation*}
	\|(\bm{g}_i^*)_{S_i^\complement(\eta)}\|_1 = \sum_{(j,k)\in S_i^\complement(\eta)} |g_{i,j,k}^*| =  \sum_{(j,k)\in S_i^\complement(\eta)} |g_{i,j,k}^*|^q |g_{i,j,k}^*|^{1-q}.
\end{equation*} 
Thus, we have
\begin{equation}\label{eq:ws1row}
	|S_i(\eta)| \leq R_{i,q} \eta^{-q} \quad\text{and} \quad 		\|(\bm{g}_i^*)_{S_i^\complement(\eta)}\|_1 \leq R_{i,q} \eta^{1-q}.
\end{equation}
Similarly, we can show that 
\begin{equation}\label{eq:ws2row}
	|S_{i,\ma}(\eta)| \leq R_{i,q}^\ma \eta^{-q} \quad\text{and}\quad \|(\bm{g}^*_{i,\ma})_{ S_{i,\ma}^\complement(\eta)}\|_1 \leq R_{i,q}^\ma \eta^{1-q}. 
\end{equation}

By \eqref{eq:ws2row}, by choosing $\eta$ such  that
\begin{equation}\label{eq:etacondrow}
	\eta^{2-q} \leq \frac{(r+2s) \overline{\alpha}_{i,\ma}^2}{R_{i,q}^\ma},
\end{equation}
we have
\begin{align*}
	\|\bm{g}_{i,\ma}^{*}\|_1^2 \leq 2\|(\bm{g}_{i,\ma}^{*})_{S_{i,\ma}(\eta)}\|_1^2 +  2\|(\bm{g}_{i,\ma}^{*})_{S_{i,\ma}^\complement(\eta)}\|_1^2 & \leq 2 |S_{i,\ma}(\eta)| \|\bm{g}_{i,\ma}^{*}\|_2^2 + 2 (R_{i,q}^\ma \eta^{1-q})^2\notag \\
	&\leq 2 R_{i,q}^\ma  \eta^{-q} \left \{ (r+2s) \overline{\alpha}_{i,\ma}^2 + R_{i,q}^\ma \eta^{2-q} \right \} \notag\\
	& \leq 4 R_{i,q}^\ma  \eta^{-q} (r+2s) \overline{\alpha}_{i,\ma}^2.
\end{align*}
Then, since $r+2s\lesssim1$ and $(\overline{\alpha}_{i,\ma}/\underline{\alpha}_{i,\ma})^2\lesssim R_{i,q}/R_{i,q}^\ma$,  we further have 
\begin{equation}\label{eq:ws3row}
	\underline{\alpha}_{i,\ma}^{-2} \|\bm{g}_{i,\ma}^{*}\|_1^2 \lesssim R_{i,q} \eta^{-q}.
\end{equation}

Consider the right-hand side of \eqref{eq:thm1eqrow}. By Lemma \ref{lemma:devbrow}, if we choose $\lambda_g$ such that 
\begin{equation}\label{eq:lgcond1row}
	\frac{\lambda_g}{4}\geq   C_{\dev} \sqrt{\frac{\kappa_2\lambda_{\max}(\bm{\Sigma}_{\varepsilon})\log \{N(p\vee 1)\}}{T}},
\end{equation}
then we can show that 
\begin{align}\label{eq:thm1eq2row}
	& \frac{2}{T}\sum_{t=1}^{T}\langle \varepsilon_{i,t}, \bm{\widehat{\delta}}_i^\top\bm{x}_t \rangle+\lambda_g(\|\bm{g}_i^*\|_1-\|\bm{\widehat{g}}_i\|_1 ) \notag\\
	&\hspace{5mm} \leq  \frac{\lambda_g}{2}(\|\bm{\widehat{d}}_i\|_1 +\|\bm{g}_{i,\ma}^{*}\|_1 \| \bm{\widehat{\phi}}_i\|_2)+ \lambda_g(\|\bm{g}_i^*\|_1-\|\bm{g}_{S_i(\eta)}^*+(\widehat{\bm{d}}_i)_{S_i^\complement(\eta)}\|_1 + \|(\bm{g}_i^*)_{S_i^\complement(\eta)}+(\widehat{\bm{d}}_i)_{S_i(\eta)}\|_1 ) \notag\\
	&\hspace{5mm} \leq \frac{\lambda_g}{2}(\|(\widehat{\bm{d}}_i)_{S_i(\eta)}\|_1+\|(\widehat{\bm{d}}_i)_{S_i^\complement(\eta)}\|_1 +\|\bm{g}_{i,\ma}^{*}\|_1 \| \bm{\widehat{\phi}}_i\|_2 ) +\lambda_g(2\|(\bm{g}_i^*)_{S_i^\complement(\eta)}\|_1+ \|(\widehat{\bm{d}}_i)_{S_i(\eta)}\|_1 -\|(\widehat{\bm{d}}_i)_{S_i^\complement(\eta)}\|_1) \notag\\
	&\hspace{5mm} \leq \frac{\lambda_g}{2}\left (4\|(\bm{g}_i^*)_{S_i^\complement(\eta)}\|_1 + 3\|(\widehat{\bm{d}}_i)_{S_i(\eta)}\|_1-\|(\widehat{\bm{d}}_i)_{S_i^\complement(\eta)}\|_1 +  \|\bm{g}_{i,\ma}^{*}\|_1 \| \bm{\widehat{\phi}}_i\|_2\right ).
\end{align}
In addition, since $T\gtrsim \kappa_2 (p\vee1)^4$, it follows from  Lemmas \ref{lemma:init1row} and \ref{lemma:init2row} that  
\begin{align}\label{eq:thm1eq7row}
	S_2(\bm{\widehat{\delta}}_i) - S_1(\bm{\widehat{\delta}}_i) & \leq \frac{\lambda_{g}}{4}\left (\|\bm{\widehat{d}}_i\|_1 + \|\bm{g}_{i,\ma}^{*}\|_1 \|\bm{\widehat{\phi}}_i\|_2 \right) \notag\\
	& =\frac{\lambda_{g}}{4}\left (\|(\widehat{\bm{d}}_i)_{S_i(\eta)}\|_1+\|(\widehat{\bm{d}}_i)_{S_i^\complement(\eta)}\|_1 + \|\bm{g}_{i,\ma}^{*}\|_1 \|\bm{\widehat{\phi}}_i\|_2\right).
\end{align}
Combining \eqref{eq:thm1eqrow}, \eqref{eq:thm1eq2row} and \eqref{eq:thm1eq7row}, we have
\begin{equation*}
	0\leq \frac{1}{T}\sum_{t=1}^{T}(\bm{\widehat{\delta}}_i^\top\bm{\widetilde{x}}_t)^2 
	\leq \frac{\lambda_{g}}{4}\left (8\|(\bm{g}_i^*)_{S_i^\complement(\eta)}\|_1 + 7\|(\widehat{\bm{d}}_i)_{S_i(\eta)}\|_1-\|(\widehat{\bm{d}}_i)_{S_i^\complement(\eta)}\|_1 +  3\|\bm{g}_{i,\ma}^{*}\|_1 \| \bm{\widehat{\phi}}_i\|_2\right),
\end{equation*}
which implies 
\[
\|\bm{\widehat{d}}_i\|_1=\|(\widehat{\bm{d}}_i)_{S_i(\eta)}\|_1+\|(\widehat{\bm{d}}_i)_{S_i^\complement(\eta)}\|_1 \leq   8\|(\bm{g}_i^*)_{S_i^\complement(\eta)}\|_1 + 8\|(\widehat{\bm{d}}_i)_{S_i(\eta)}\|_1 +  3\|\bm{g}_{i,\ma}^{*}\|_1 \| \bm{\widehat{\phi}}_i\|_2.
\]
Then, by the Cauchy-Schwarz inequalty, \eqref{eq:prop2row}, \eqref{eq:ws1row}, and \eqref{eq:ws3row}, we can further show that
\begin{align}\label{eq:thm1eq4row}
	\|\bm{\widehat{d}}_i\|_1^2 & \leq 3\left (64\|(\bm{g}_i^*)_{S_i^\complement(\eta)}\|_1^2 + 64\|(\widehat{\bm{d}}_i)_{S_i(\eta)}\|_1^2  + 9 \|\bm{g}_{i,\ma}^{*}\|_1^2 \| \bm{\widehat{\phi}}_i\|_2^2 \right ) \notag\\
	&\leq 192 \|(\bm{g}_i^*)_{S_i^\complement(\eta)}\|_1^2 + c_{\Delta}^{-2} \|\bm{\widehat{\delta}}_i\|_{2}^2 \left \{192 |S_i(\eta)| + 27 \underline{\alpha}_{i,\ma}^{-2} \|\bm{g}_{i,\ma}^{*}\|_1^2  \right \}\notag\\
	&\leq  192 \|(\bm{g}_i^*)_{S_i^\complement(\eta)}\|_1^2 + C_1 c_{\Delta}^{-2} R_{i,q} \eta^{-q} \|\bm{\widehat{\delta}}_i\|_{2}^2,
\end{align}
for an absolute constant $C_1>0$.
Similarly, from \eqref{eq:thm1eq2row} and \eqref{eq:thm1eq7row}, we can deduce that
\begin{align}\label{eq:thm1eq5row}
	& \frac{2}{T}\sum_{t=1}^{T}\langle \varepsilon_{i,t}, \bm{\widehat{\delta}}_i^\top\bm{x}_t \rangle+\lambda_g(\|\bm{g}_i^*\|_1-\|\bm{\widehat{g}}_i\|_1 ) + S_2(\bm{\widehat{\delta}}_i) - S_1(\bm{\widehat{\delta}}_i) \notag \\
	& \hspace{5mm}\leq \frac{\lambda_g}{4}\left (8\|(\bm{g}_i^*)_{S_i^\complement(\eta)}\|_1 + 8\|(\widehat{\bm{d}}_i)_{S_i(\eta)}\|_1 +  3\|\bm{g}_{i,\ma}^{*}\|_1 \| \bm{\widehat{\phi}}_i\|_2\right ) \notag\\
	&\hspace{5mm} \leq \frac{\lambda_g}{2} \left \{4\|(\bm{g}_i^*)_{S_i^\complement(\eta)}\|_1 + C_2 c_{\Delta}^{-1} R_{i,q}^{1/2} \eta^{-q/2} \|\bm{\widehat{\delta}}_i\|_{2} \right \},
\end{align}
for an absolute constant $C_2>0$.

By Lemmas \ref{lemma:rsc_lassorow} and  \ref{lemma:init3row}, we can show that 
\[
\frac{3}{4T}\sum_{t=1}^{T}(\bm{\widehat{\delta}}_i^\top\bm{x}_t)^2-S_3(\bm{\widehat{\delta}}_i) \geq \frac{C_{\rsc} \kappa_1}{2}  \|\bm{\widehat{\delta}}_i\|_{2}^2 - \frac{\kappa_2}{T}\left \{ C_{\init3} (p\vee1) + \frac{3}{4} C_{\rsc} \frac{\kappa_2}{\kappa_1} \log \{N(p\vee1)\}\right \} \|\bm{\widehat{d}}_i\|_1^2.
\]
which, in conjunction with \eqref{eq:thm1eq4row}, leads to
\begin{equation}\label{eq:thm1eq6row}
	\frac{3}{4T}\sum_{t=1}^{T}(\bm{\widehat{\delta}}_i^\top\bm{x}_t)^2-S_3(\bm{\widehat{\delta}}_i) \geq \frac{C_{\rsc} \kappa_1}{4}  \|\bm{\widehat{\delta}}_i\|_{2}^2 -  \frac{C_3\kappa_2^2 (p\vee1) \log \{N(p\vee1)\} }{\kappa_1 T}\|(\bm{g}_i^*)_{S_i^\complement(\eta)}\|_1^2, 
\end{equation}
where $C_3>0$ is an absolute constant, if we further have
\begin{equation}\label{eq:Tcond3row}
	T \gtrsim R_{i,q} \eta^{-q} (\kappa_2/\kappa_1)^2 (p\vee1)  \log \{N(p\vee1)\}. 
\end{equation}

Combining  \eqref{eq:thm1eq1row}, \eqref{eq:thm1eq5row}, and \eqref{eq:thm1eq6row}, we have
\[
\frac{C_{\rsc} \kappa_1}{4}  \|\bm{\widehat{\delta}}_i\|_{2}^2 -  \frac{C_3\kappa_2^2(p\vee1) \log \{N(p\vee1)\} }{\kappa_1 T}\|(\bm{g}_i^*)_{S_i^\complement(\eta)}\|_1^2 
\leq  \frac{\lambda_g}{2} \left \{4\|(\bm{g}_i^*)_{S_i^\complement(\eta)}\|_1 + C_2 c_{\Delta}^{-1} R_{i,q}^{1/2} \eta^{-q/2} \|\bm{\widehat{\delta}}_i\|_{2} \right \}.
\]
Consider the following two cases.

\textit{Case (i):} First suppose that 
$\frac{C_{\rsc} \kappa_1}{8}  \|\bm{\widehat{\delta}}_i\|_{2}^2 \geq   \frac{C_3\kappa_2^2(p\vee1) \log \{N(p\vee1)\} }{\kappa_1 T}\|(\bm{g}_i^*)_{S_i^\complement(\eta)}\|_1^2$. 
Then 
\[
\frac{C_{\rsc} \kappa_1}{8}  \|\bm{\widehat{\delta}}_i\|_{2}^2 \leq \frac{\lambda_g}{2} \left \{4\|(\bm{g}_i^*)_{S_i^\complement(\eta)}\|_1 + C_2 c_{\Delta}^{-1} R_{i,q}^{1/2} \eta^{-q/2} \|\bm{\widehat{\delta}}_i\|_{2} \right \},
\]
which involves a quadratic form in $\|\bm{\widehat{\delta}}_i\|_{2}$. By computing the zeros of this quadratic form, we can show that
\[
\|\bm{\widehat{\delta}}_i\|_{2}^2 \leq \frac{32 C_2^2}{C_{\rsc}^2c_{\Delta}^2} \cdot \frac{\lambda_{g}^2R_{i,q}\eta^{-q}}{\kappa_1^2} + \frac{32}{C_{\rsc}} \cdot\frac{\lambda_{g} \|(\bm{g}_i^*)_{S_i^\complement(\eta)}\|_1}{\kappa_1}.
\]

\textit{Case (ii):} Otherwise, we must have $\frac{C_{\rsc} \kappa_1}{8}  \|\bm{\widehat{\delta}}_i\|_{2}^2 \leq   \frac{C_3\kappa_2^2(p\vee1) \log \{N(p\vee1)\} }{\kappa_1 T} \|(\bm{g}_i^*)_{S_i^\complement(\eta)}\|_1^2$.

Combining the two cases above, we can apply \eqref{eq:ws1row} and \eqref{eq:Tcond3row} to show  that
\begin{align*}
	\|\bm{\widehat{\delta}}_i\|_{2}^2 & \leq \frac{32 C_2^2}{C_{\rsc}^2c_{\Delta}^2} \cdot \frac{\lambda_{g}^2R_{i,q}\eta^{-q}}{\kappa_1^2} + \frac{32}{C_{\rsc}} \cdot\frac{\lambda_{g} \|(\bm{g}_i^*)_{S_i^\complement(\eta)}\|_1}{\kappa_1} + \frac{8C_3}{C_{\rsc}} \cdot \frac{\kappa_2^2 (p\vee1) \log \{N(p\vee1)\}}{\kappa_1^2 T} \|(\bm{g}_i^*)_{S_i^\complement(\eta)}\|_1^2\\
	&\leq \frac{32 C_2^2}{C_{\rsc}^2c_{\Delta}^2} \cdot \frac{\lambda_{g}^2R_{i,q}\eta^{-q}}{\kappa_1^2} + \frac{32}{C_{\rsc}} \cdot\frac{\lambda_{g} R_{i,q} \eta^{1-q}}{\kappa_1} +\frac{8C_3}{C_{\rsc}} \cdot (R_{i,q} \eta^{-q} )^{-1}(R_{i,q} \eta^{1-q})^2\\
	&\lesssim \left (\frac{\lambda_{g} }{\kappa_1} \right )^{2-q} R_{i,q} = \eta^{2-q} R_{i,q},
\end{align*}
if we choose 
\[
\eta = \frac{\lambda_{g}}{\kappa_1}.
\]
Thus, taking $\lambda_{g}$ as its lower bound in \eqref{eq:lgcond1row}, i.e., 
$\lambda_g \asymp \sqrt{\kappa_2\lambda_{\max}(\bm{\Sigma}_{\varepsilon})\log \{N(p\vee 1)\}/T}$, we have
\[
\|\bm{\widehat{\delta}}_i\|_{2}^2 \lesssim \left [\frac{\kappa_2 \lambda_{\max}(\bm{\Sigma}_{\varepsilon})\log \{N(p\vee 1)\}}{\kappa_1^2 T} \right ]^{1-q/2} R_{i,q},
\]
and subsequently,
\[
\frac{1}{T}\sum_{t=1}^{T}(\bm{\widehat{\delta}}_i^\top\bm{\widetilde{x}}_{t})^2 \lesssim \lambda_{g}  \eta^{1-q} R_{i,q} =\left [\frac{\kappa_2 \lambda_{\max}(\bm{\Sigma}_{\varepsilon})\log \{N(p\vee 1)\}}{\kappa_1^2 T} \right ]^{1-q/2}  \frac{R_{i,q}}{\kappa_1^{1-q}}, 
\]
where the latter follows from \eqref{eq:thm1eqrow} and \eqref{eq:thm1eq5row}. 
On the one hand,  with the above choice of $\eta$, condition \eqref{eq:Tcond3row} can be guaranteed if 
\begin{equation}\label{eq:Tcondrow}
	R_{i,q} \lesssim  \frac{\lambda_{\max}(\bm{\Sigma}_{\varepsilon})}{\kappa_2 (p\vee 1) } \cdot  \left [\frac{\kappa_1^2 T}{\kappa_2 \lambda_{\max}(\bm{\Sigma}_{\varepsilon}) \log \{N(p\vee1)\}} \right ]^{1-q/2}.
\end{equation}
Under condition \eqref{eq:Tcondrow}, since  $r+2s \lesssim 1$, we can show that a sufficient condition for \eqref{eq:etacondrow} is
\begin{equation}\label{eq:Rqcondrow}
	\frac{\lambda_{\max}(\bm{\Sigma}_{\varepsilon})}{\kappa_2 (p\vee 1)} \lesssim  \overline{\alpha}_{i,\ma}^2  R_{i,q}/R_{i,q}^\ma.
\end{equation}
Finally, combining the tail probabilities in Lemmas \ref{lemma:devbrow}--\ref{lemma:init3row} and the required conditions including \eqref{eq:Tcondrow} and \eqref{eq:Rqcondrow}, we accomplish the proof of this theorem.


\section{Proof of Theorem \ref{thm:selection}}\label{asec:bic}
\subsection{Irreducibility condition}

Lemma \ref{lem:irreduc} provides  the irreducibility condition for the orders $(p,r,s)$ of model \eqref{eq:model-scalar}. To better understand result (i) in this lemma,  it is worth noting that the order $p$ has a more intricate impact on the parameterization than $r$ and $s$, due to the dependence of the functions $\ell_{h,k}(\cdot)$'s on $p$. For example, suppose that $(p, r, s)=(1,1,0)$, i.e., $\bm{y}_t=\bm{G}_1\bm{y}_{t-1}+\sum_{h=2}^{\infty}\lambda_1^{h-1} \bm{G}_2\bm{y}_{t-h}+\bm{\varepsilon}_t$. Decreasing $p$ to zero leads to the reduced model
$\bm{y}_t=\sum_{h=1}^{\infty}\lambda_1^h \bm{G}\bm{y}_{t-h}+\bm{\varepsilon}_t$.
Note that the latter cannot be obtained by simply setting $\bm{G}_1=\bm{0}$. However, if the equality $\bm{G}_1=\bm{G}_2$ is satisfied, then the reduced model will be fulfilled with $\bm{G}=\lambda_1^{-1}\bm{G}_1$. 



\begin{lemma}[Irreducibility of model orders]\label{lem:irreduc}
	Consider the parameterization of $\bm{A}_h$ for $h\geq1$ with model orders $(p,r,s)$ in \eqref{eq:linearcomb2}, i.e.,
	\begin{align}\label{aeq:linearcomb2}
		\begin{split}
			\bm{A}_{h} 
			&=\sum_{k=1}^{p}\mathbb{I}_{\{h=k\}}\bm{G}_{k}
			+ 
			\sum_{j=1}^{r} \mathbb{I}_{\{h\geq p+1\}} \lambda_j^{h-p} \bm{G}_{p+j}\\
			&\hspace{5mm} +\sum_{m=1}^{s} \mathbb{I}_{\{h\geq p+1\}} \gamma_{m}^{h-p} \left [ \cos\{(h-p) \theta_{m}\} \bm{G}_{p+r+2m-1} + \sin\{(h-p)\theta_{m}\} \bm{G}_{p+r+2m} \right],
		\end{split}
	\end{align}	
	where $\lambda_j\in(-1,1)$ for $1\leq j\leq r$ are distinct, and $\bm{\eta}_m=(\gamma_m,\theta_m)^\top \in \bm{\varPi}$ for $1\leq m\leq s$ are distinct, with $\bm{\varPi} = [0,1)\times (0,\pi)$.
	\begin{itemize}
		\item [(i)] If $\bm{G}_p=\sum_{j=1}^{r} \mathbb{I}_{\{\lambda_j\neq0\}} \bm{G}_{p+j} +\sum_{m=1}^{s} \mathbb{I}_{\{\gamma_m\neq0\}} \bm{G}_{p+r+2m-1}$, then the order $p$ can be reduced to $p-1$. Otherwise, the order $p$ is irreducible.
		\item [(ii)] If there exists $1\leq j\leq r$ such that $\lambda_j=0$ or $\bm{G}_{p+j}=\bm{0}$, then the order $r$ can be reduced to $r-1$. Otherwise, the order $r$ is irreducible.
		\item [(iii)] If there exists $1\leq m\leq s$ such that $\gamma_m=0$ or $\bm{G}_{p+r+2m-1}=\bm{G}_{p+r+2m}=\bm{0}$, then the order $s$ can be reduced to $s-1$.  Otherwise, the order $s$ is irreducible.
	\end{itemize}
\end{lemma}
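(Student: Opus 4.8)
The plan is to prove each of the three irreducibility claims by translating "the order can be reduced" into the existence of an alternative parameterization with smaller orders that produces the same sequence $\{\bm{A}_h\}_{h\ge1}$, and "the order is irreducible" into the impossibility of such a parameterization. The key technical fact underpinning everything is that the coefficient sequences $\{\ell_{h,k}(\bm{\omega})\}_{h\ge1}$ attached to distinct $\lambda_j$'s and distinct $\bm{\eta}_m$'s are linearly independent as sequences in $h$; this follows from the same Vandermonde/damped-sinusoid argument used to establish $\sigma_{\min}(\bm{L}_{\mathrm{stack}}(\bm{\omega}^*))>0$ in Lemma~\ref{lemma:fullrank}, applied here to $\bm{L}(\bm{\omega})$ itself (augmented with the leading $\bm{I}_p$ block). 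So the first step I would carry out is to record this linear-independence lemma explicitly: for distinct nonzero $\lambda_1,\dots,\lambda_r$ and distinct $\bm{\eta}_1,\dots,\bm{\eta}_s$ with $\gamma_m\ne0$, the infinitely many sequences $\{\lambda_j^h\}_h$, $\{\gamma_m^h\cos(h\theta_m)\}_h$, $\{\gamma_m^h\sin(h\theta_m)\}_h$, together with the $p$ coordinate sequences $\mathbb{I}_{\{h=k\}}$, are linearly independent. This reduces the matching of two parameterizations to matching coefficients term by term.

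For part (ii) and part (iii), the "reducible" directions are immediate: if $\lambda_j=0$ then $\mathbb{I}_{\{h\ge p+1\}}\lambda_j^{h-p}=0$ for all $h$, so the term drops out and we may delete $\lambda_j$ and $\bm{G}_{p+j}$, lowering $r$ by one; if $\bm{G}_{p+j}=\bm{0}$ the same conclusion holds; and symmetrically for $s$ when $\gamma_m=0$ or both $\bm{G}_{p+r+2m-1}=\bm{G}_{p+r+2m}=\bm{0}$ (note a damped-sinusoid pair with $\gamma_m=0$ contributes nothing since $h-p\ge1$ in those terms). The converse directions follow from the linear-independence lemma: if all $\lambda_j\ne0$ and all $\bm{G}_{p+j}\ne\bm{0}$, and all $\gamma_m\ne0$ with $(\bm{G}_{p+r+2m-1},\bm{G}_{p+r+2m})\ne(\bm{0},\bm{0})$, then any alternative representation with orders $(p,r',s')$ and $r'<r$ or $s'<s$ would have to reproduce a sequence $\{\bm{A}_h\}$ whose expansion in the (linearly independent) basis genuinely uses the $j$-th real mode or $m$-th complex mode, a contradiction — here I would be slightly careful that lowering $r$ or $s$ while holding $p$ fixed cannot create the needed mode, and that changing $p$ simultaneously is handled by part (i)'s analysis of how $p$ interacts with the exponents.

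The genuinely subtle case is part (i), the irreducibility of $p$, because decreasing $p$ by one shifts the exponents $h-p$ in every $\ell_{h,k}$ and also removes the dependence on $\bm{y}_{t-p}$ via the indicator $\mathbb{I}_{\{h=p\}}$ while simultaneously changing $\mathbb{I}_{\{h\ge p+1\}}$ to $\mathbb{I}_{\{h\ge p\}}$. So the plan for (i) is: write out $\bm{A}_h$ for orders $(p,r,s)$ and for the candidate reduced orders $(p-1,r,s)$ with new matrices $\widetilde{\bm{G}}_k$ and the same $\bm{\omega}$ (one should first argue $\bm{\omega}$ must be unchanged, up to relabeling, by the separability/linear-independence of the modes), and compare term by term in $h$. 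For $h\le p-1$ the AR blocks force $\widetilde{\bm{G}}_k=\bm{G}_k$, $1\le k\le p-1$. At $h=p$, the $(p-1,r,s)$-model contributes $\sum_j\lambda_j\widetilde{\bm{G}}_{p-1+j}+\sum_m\gamma_m(\cos\theta_m\,\widetilde{\bm{G}}_{p-1+r+2m-1}+\sin\theta_m\,\widetilde{\bm{G}}_{p-1+r+2m})$ whereas the $(p,r,s)$-model contributes $\bm{G}_p$; for $h\ge p+1$ one gets, after matching the (linearly independent) $\lambda_j^{h-p}$ and $\gamma_m^{h-p}\{\cos,\sin\}((h-p)\theta_m)$ coefficients, the relations $\widetilde{\bm{G}}_{p-1+j}=\lambda_j^{-1}\bm{G}_{p+j}$ and, via the rotation identity for $\bm{C}_m$, $\widetilde{\bm{G}}_{p-1+r+2m-1},\widetilde{\bm{G}}_{p-1+r+2m}$ expressed as a $\gamma_m^{-1}$-rescaled rotation of $\bm{G}_{p+r+2m-1},\bm{G}_{p+r+2m}$ by angle $-\theta_m$. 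Substituting these back into the $h=p$ equation yields exactly $\bm{G}_p=\sum_j\mathbb{I}_{\{\lambda_j\ne0\}}\bm{G}_{p+j}+\sum_m\mathbb{I}_{\{\gamma_m\ne0\}}\bm{G}_{p+r+2m-1}$ (the indicators appearing because a zero $\lambda_j$ or $\gamma_m$ contributes nothing at lag $p$ in the reduced model yet its block is unconstrained). Thus this equality is both necessary and sufficient for reducibility of $p$, which is the statement. I expect the bookkeeping in this $h=p$ matching — in particular getting the rotation-by-$-\theta_m$ substitution and the indicator functions exactly right — to be the main obstacle; everything else is a direct application of the linear-independence lemma inherited from Lemma~\ref{lemma:fullrank}.
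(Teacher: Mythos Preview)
Your proposal is correct and follows essentially the same route as the paper's proof: for (i) you match the two parameterizations term by term in $h$, recover the explicit formulas $\widetilde{\bm{G}}_{p-1+j}=\lambda_j^{-1}\bm{G}_{p+j}$ and the $\gamma_m^{-1}$-rescaled rotation for the sinusoidal pair, and substitute back at $h=p$ to obtain the stated equality as a necessary and sufficient condition; for (ii)--(iii) you invoke linear independence of the exponential/damped-sinusoid sequences, which is exactly what the paper does (the paper streamlines this step by passing to the complex form $v_m=\gamma_m e^{i\theta_m}$, $u_m=\gamma_m e^{-i\theta_m}$ so that all modes become pure exponentials, but the content is the same). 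Your remark that one must first argue the reduced parameterization shares the same $\bm{\omega}$ up to relabeling is a point the paper leaves implicit, so your write-up would in fact be slightly more complete there.
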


\begin{proof}[Proof of Lemma \ref{lem:irreduc}]
	Let us first prove (i). Let $\widetilde{p}=p-1$. If $\bm{G}_p=\sum_{j=1}^{r}\bm{G}_{p+j}+\sum_{m=1}^{s} \bm{G}_{p+r+2m-1}$, then it can be readily verified that for $h\geq1$,
	\begin{align}\label{aeq:reduce}
		\begin{split}
			\bm{A}_{h} 
			&=\sum_{k=1}^{\widetilde{p}}\mathbb{I}_{\{h=k\}}\bm{\widetilde{G}}_{k}
			+ 
			\sum_{j=1}^{r} \mathbb{I}_{\{h\geq \widetilde{p}+1\}} \lambda_j^{h-\widetilde{p}} \bm{\widetilde{G}}_{\widetilde{p}+j}\\
			&\hspace{5mm} +\sum_{m=1}^{s} \mathbb{I}_{\{h\geq \widetilde{p}+1\}} \gamma_{m}^{h-\widetilde{p}} \left [ \cos\{(h-\widetilde{p}) \theta_{m}\} \bm{\widetilde{G}}_{\widetilde{p}+r+2m-1} + \sin\{(h-\widetilde{p})\theta_{m}\} \bm{\widetilde{G}}_{\widetilde{p}+r+2m} \right],
		\end{split}
	\end{align}
	where 	$\bm{\widetilde{G}}_{k}=\bm{G}_k$ for $1\leq k\leq \widetilde{p}$, 	$\bm{\widetilde{G}}_{\widetilde{p}+j}=  \mathbb{I}_{\{\lambda_j\neq0\}} \lambda_j^{-1}\bm{G}_{p+j}$ for $1\leq j\leq r$,  
	and
	\begin{align*}
		\bm{\widetilde{G}}_{\widetilde{p}+r+2m-1} & =  \mathbb{I}_{\{\gamma_m\neq0\}} \gamma_{m}^{-1} \left \{ \cos (\theta_m) \bm{G}_{\widetilde{p}+r+2m-1} - \sin ( \theta_m) \bm{G}_{\widetilde{p}+r+2m} \right \},\\
		\bm{\widetilde{G}}_{\widetilde{p}+r+2m} & = \mathbb{I}_{\{\gamma_m\neq0\}} \gamma_{m}^{-1} \left \{ \sin( \theta_m) \bm{G}_{\widetilde{p}+r+2m-1} + \cos( \theta_m) \bm{G}_{\widetilde{p}+r+2m}\right \},
	\end{align*}
	for $1\leq m\leq s$. In other words, the order $p$ can be reduced to $\widetilde{p}$. 
	
	Now suppose that  $\bm{G}_p\neq \sum_{j=1}^{r}\bm{G}_{p+j}+\sum_{m=1}^{s} \bm{G}_{p+r+2m-1}$. If  \eqref{aeq:linearcomb2} can be reduced to the form in \eqref{aeq:reduce}, then we must have $\bm{G}_k=\bm{\widetilde{G}}_{k}$ for $1\leq k\leq \widetilde{p}$,  
	\[
	\bm{G}_p =	\sum_{j=1}^{r}  \lambda_j \bm{\widetilde{G}}_{\widetilde{p}+j}\\
	+\sum_{m=1}^{s} \left \{  (\gamma_{m} \cos\theta_m) \bm{\widetilde{G}}_{\widetilde{p}+r+2m-1} + (\gamma_{m} \sin\theta_m) \bm{\widetilde{G}}_{\widetilde{p}+r+2m} \right \},
	\]
	$\bm{G}_{p+j}=\lambda_j\bm{\widetilde{G}}_{\widetilde{p}+j}$ for $1\leq j\leq r$,  and 
	\begin{align*}
		\bm{G}_{\widetilde{p}+r+2m-1} & = \gamma_{m}\cos (\theta_m) \bm{\widetilde{G}}_{\widetilde{p}+r+2m-1} + \gamma_{m}\sin (\theta_m) \bm{\widetilde{G}}_{\widetilde{p}+r+2m},\\
		\bm{G}_{\widetilde{p}+r+2m} & = -\gamma_{m}\sin (\theta_m) \bm{\widetilde{G}}_{\widetilde{p}+r+2m-1} + \gamma_{m}\cos (\theta_m) \bm{\widetilde{G}}_{\widetilde{p}+r+2m},
	\end{align*}
	for $1\leq m\leq s$. However, this implies  $\bm{G}_p = \sum_{j=1}^{r}\bm{G}_{p+j}+\sum_{m=1}^{s} \bm{G}_{p+r+2m-1}$, resulting in a contradiction. Thus, (i) is proved.

	To establish (ii) and (iii), it is helpful to rewrite  \eqref{aeq:linearcomb2} in the form of
	\begin{align}\label{aeq:linearcomb2complex}
		\begin{split}
			\bm{A}_{h} 
			&=\sum_{k=1}^{p}\mathbb{I}_{\{h=k\}}\bm{G}_{k}
			+ 
			\sum_{j=1}^{r} \mathbb{I}_{\{h\geq p+1\}} \lambda_j^{h-p} \bm{G}_{p+j}\\
			&\hspace{5mm} +\sum_{m=1}^{s} \mathbb{I}_{\{h\geq p+1\}} \left \{v_{m}^{h-p}   \bm{H}_{p+r+2m-1} +  u_{m}^{h-p}   \bm{H}_{p+r+2m} \right\}, \quad h\geq1,
		\end{split}
	\end{align}	
	where $v_m=\gamma_m e^{i\theta_m}$, $u_m=\gamma_m e^{-i\theta_m}$,
	$\bm{H}_{p+r+2m-1} = (\bm{G}_{p+r+2m-1} - i  \bm{G}_{p+r+2m})/2$, 
	and 
	$\bm{H}_{p+r+2m} = (\bm{G}_{p+r+2m-1} + i \bm{G}_{p+r+2m})/2$, 
	for $1\leq m\leq s$, with $i$ denoting the imaginary unit. Note that  $\bm{H}_{p+r+2m-1}=\bm{H}_{p+r+2m}=\bm{0}$ if and only if $\bm{G}_{p+r+2m-1}=\bm{G}_{p+r+2m}=\bm{0}$. Then the first part of  (ii) and (iii)  is obvious.
	
	Lastly, note that if $\gamma_m\neq 0$ for $1\leq m\leq s$, then  $v_1,\dots, v_s, u_1,\dots, u_s$  are all distinct and nonzero. As a result,  the second part of (ii) and (iii)  is a straightforward consequence of the linear independence of exponential functions.
\end{proof}

\subsection{Reparameterization with maximum orders}
We  show that any model of order $\pazocal{M}=(p,r,s)\in \mathcal{M}=\{(p,r,s)\mid 0\leq p \leq \overline{p}, 0\leq r \leq \overline{r}, 0\leq s \leq \overline{s} \}$  can be expressed as one of maximum orders   $\overline{\pazocal{M}}=(\overline{p},\overline{r}, \overline{s})$, with the corresponding parameters  determined by the original ones.
Let $\delta_p=\overline{p}-p$, $\delta_r=\overline{r}-r$, $\delta_s=\overline{s}-s$, and $\delta_d=\overline{d}-d$. The proof of Lemma \ref{lem:restrict} is straightforward by elementary algebra.  

\begin{lemma}[Reparameterization with maximum orders]\label{lem:restrict}
	Suppose that $\bm{A}_h=\bm{A}_h(\bm{\omega}, \bm{g})$ for $h\geq 1$ is parameterized as in \eqref{aeq:linearcomb2} with   model orders $\pazocal{M}=(p,r,s)\in \mathcal{M}$, where  $\bm{{\omega}} \in (-1,1)^r \times \bm{\varPi}^s$ and $\bm{{g}}\in\mathbb{R}^{N^2 d}$. 
	Then $\bm{A}_h$ for $h\geq1$ can be  expressed with orders   $\overline{\pazocal{M}}=(\overline{p},\overline{r}, \overline{s})$ as follows,
	\begin{align*}
		\begin{split}
			\bm{A}_h(\bm{\overline{\omega}}, \bm{\overline{g}}) 
			&=\sum_{k=1}^{\overline{p}}\mathbb{I}_{\{h=k\}}\overline{\bm{G}}_{k}
			+ 
			\sum_{j=1}^{\overline{r}} \mathbb{I}_{\{h\geq \overline{p}+1\}} \overline{\lambda}_j^{h-\overline{p}} \overline{\bm{G}}_{\overline{p}+j}\\
			&\hspace{5mm} +\sum_{m=1}^{\overline{s}} \mathbb{I}_{\{h\geq \overline{p}+1\}} \overline{\gamma}_{m}^{h-\overline{p}} \left [ \cos\{(h-\overline{p}) \overline{\theta}_{m}\} \overline{\bm{G}}_{\overline{p}+\overline{r}+2m-1} + \sin\{(h-\overline{p})\overline{\theta}_{m}\} \overline{\bm{G}}_{\overline{p}+\overline{r}+2m} \right],
		\end{split}
	\end{align*}	
	where the  parameter vector $\bm{\overline{\omega}}=(\overline{\lambda}_1,\dots, \overline{\lambda}_{\overline{r}}, \bm{\overline{\eta}}_1^\top, \dots, \bm{\overline{\eta}}_{\overline{s}}^\top)^\top \in  (-1,1)^{\overline{r}}\times \bm{\varPi}^{\overline{s}}$ and the matrices $\overline{\bm{G}}_{k}$ for $1\leq k\leq \overline{d}$ are given by
	\begin{align*}
		\overline{\lambda}_j&=\mathbb{I}_{\{1\leq j\leq r\}} \lambda_j \quad \text{for} \quad  1\leq j\leq \overline{r},\quad
		\bm{\overline{\eta}}_m=\mathbb{I}_{\{1\leq m\leq s\}}\bm{\eta}_m  \quad \text{for} \quad 1\leq m\leq \overline{s},\\
		\overline{\bm{G}}_{k} & = \bm{G}_k \quad \text{for} \quad 1\leq k\leq p,\\
		\bm{\overline{G}}_{p+k}&= \sum_{j=1}^{r} \lambda_j^k \bm{G}_{p+j} \\
		&\hspace{5mm} +  \sum_{m=1}^{s} \gamma_m^k\left \{ \cos(k\theta_m) \bm{G}_{p+r+2m-1}+ \sin(k\theta_m) \bm{G}_{p+r+2m}\right\} \quad \text{for} \quad 1\leq k\leq  \delta_p,\\
		\bm{\overline{G}}_{\overline{p}+j} &= \mathbb{I}_{\{1\leq j\leq r\}}  \lambda_j^{\delta_p} \bm{G}_{p+j} \quad \text{for} \quad 1\leq j\leq \overline{r},\\
		\bm{\overline{G}}_{\overline{p}+\overline{r}+2m-1} &= \mathbb{I}_{\{1\leq m\leq s\}} \gamma_m^{\delta_p} \left \{ \cos(\delta_p \theta_m)\bm{G}_{p+r+2m-1} + \sin(\delta_p \theta_m)\bm{G}_{p+r+2m} \right \}  \quad \text{for} \quad 1\leq m\leq \overline{s},\\
		\bm{\overline{G}}_{\overline{p}+\overline{r}+2m} &= \mathbb{I}_{\{1\leq m\leq s\}} \gamma_m^{\delta_p} \left \{ - \sin(\delta_p \theta_m)\bm{g}_{i,p+r+2m} + \cos(\delta_p \theta_m)\bm{G}_{p+r+2m} \right \} \quad \text{for} \quad 1\leq m\leq \overline{s},
	\end{align*}
	and $\bm{\overline{g}}=\vect(\bm{\overline{G}})$ with $\bm{\overline{G}}=(\bm{\overline{G}}_1, \dots, \bm{\overline{G}}_{\overline{d}})\in\mathbb{R}^{N\times N \overline{d}}$.
\end{lemma}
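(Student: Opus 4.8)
The plan is to verify the claimed identity $\bm{A}_h(\bm{\overline{\omega}},\bm{\overline{g}})=\bm{A}_h(\bm{\omega},\bm{g})$ for every $h\geq 1$, handling three regimes separately according to where $h$ sits relative to $p$ and $\overline{p}$, and then to check that $\bm{\overline{\omega}}$ as defined indeed lies in $(-1,1)^{\overline{r}}\times\bm{\varPi}^{\overline{s}}$. Throughout I would use that in the maximum-order parameterization, for $1\leq h\leq\overline{p}$ only the block $\sum_{k=1}^{\overline{p}}\mathbb{I}_{\{h=k\}}\overline{\bm{G}}_k$ contributes, so $\bm{A}_h(\bm{\overline{\omega}},\bm{\overline{g}})=\overline{\bm{G}}_h$ there, while for $h\geq\overline{p}+1$ only the two decay blocks contribute.

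\emph{Regime $1\leq h\leq p$.} Here \eqref{aeq:linearcomb2} gives $\bm{A}_h(\bm{\omega},\bm{g})=\bm{G}_h$, and the target form gives $\overline{\bm{G}}_h$; these agree by the definition $\overline{\bm{G}}_k=\bm{G}_k$ for $1\leq k\leq p$. \emph{Regime $p+1\leq h\leq\overline{p}$.} Write $h=p+k$ with $1\leq k\leq\delta_p$. Then \eqref{aeq:linearcomb2} yields $\bm{A}_h=\sum_{j=1}^{r}\lambda_j^{k}\bm{G}_{p+j}+\sum_{m=1}^{s}\gamma_m^{k}\{\cos(k\theta_m)\bm{G}_{p+r+2m-1}+\sin(k\theta_m)\bm{G}_{p+r+2m}\}$, while the target form gives $\overline{\bm{G}}_{p+k}$; by the stated formula for $\overline{\bm{G}}_{p+k}$, $1\leq k\leq\delta_p$, these are literally the same expression. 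Both of these cases are immediate substitutions and need no work beyond matching indices.

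\emph{Regime $h\geq\overline{p}+1$.} This is the only computational step. I would pass to the complex-exponential form of the parameterization used in the proof of Lemma~\ref{lem:irreduc} (see \eqref{aeq:linearcomb2complex}), writing the $m$th conjugate-pair contribution as $v_m^{h-p}\bm{H}_{p+r+2m-1}+u_m^{h-p}\bm{H}_{p+r+2m}$ with $v_m=\gamma_m e^{i\theta_m}$, $u_m=\gamma_m e^{-i\theta_m}$ and $\bm{H}_{p+r+2m-1},\bm{H}_{p+r+2m}$ the two complex combinations of $\bm{G}_{p+r+2m-1},\bm{G}_{p+r+2m}$. The key is the factorizations $\lambda_j^{h-p}=\lambda_j^{h-\overline{p}}\lambda_j^{\delta_p}$ and $v_m^{h-p}=v_m^{h-\overline{p}}v_m^{\delta_p}$; absorbing the $\delta_p$-powers into the coefficient matrices and converting back to real form via the planar rotation through angle $\delta_p\theta_m$ scaled by $\gamma_m^{\delta_p}$ produces exactly $\overline{\bm{G}}_{\overline{p}+j}=\lambda_j^{\delta_p}\bm{G}_{p+j}$ and the stated rotated pair $\overline{\bm{G}}_{\overline{p}+\overline{r}+2m-1},\overline{\bm{G}}_{\overline{p}+\overline{r}+2m}$ (correcting the evident misprint $\bm{g}_{i,p+r+2m}\to\bm{G}_{p+r+2m-1}$ in $\overline{\bm{G}}_{\overline{p}+\overline{r}+2m}$), after which $v_m^{h-\overline{p}}$, $u_m^{h-\overline{p}}$ and $\lambda_j^{h-\overline{p}}$ reassemble into the cosine/sine monomials of the target form. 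For the padded indices $r<j\leq\overline{r}$ and $s<m\leq\overline{s}$, the indicator factors make $\overline{\bm{G}}_{\overline{p}+j}=\overline{\bm{G}}_{\overline{p}+\overline{r}+2m-1}=\overline{\bm{G}}_{\overline{p}+\overline{r}+2m}=\bm{0}$; taking $\overline{\lambda}_j=0\in(-1,1)$ and $\overline{\gamma}_m=0$ makes $\overline{\lambda}_j^{h-\overline{p}}=\overline{\gamma}_m^{h-\overline{p}}=0$ (using $h-\overline{p}\geq 1$), so these extra summands vanish; this also settles $\bm{\overline{\omega}}\in(-1,1)^{\overline{r}}\times\bm{\varPi}^{\overline{s}}$, noting that for the empty pairs $\overline{\theta}_m$ may be fixed to any value in $(0,\pi)$ since it always appears multiplied by $\overline{\gamma}_m^{h-\overline{p}}=0$.

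Collecting the three regimes, which jointly cover all $h\geq1$, gives $\bm{A}_h(\bm{\overline{\omega}},\bm{\overline{g}})=\bm{A}_h(\bm{\omega},\bm{g})$ for all $h$, which is the assertion. I do not expect a genuine obstacle: the content is elementary algebra, and the only points that need a bit of care are the sign bookkeeping in the complex-to-real conversion of the rotation in the third regime, and the cosmetic matter of keeping the padded angles $\overline{\theta}_m$ inside the open interval $(0,\pi)$.
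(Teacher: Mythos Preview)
Your proposal is correct and matches the paper's approach: the paper's entire proof is the single sentence ``The proof of Lemma~\ref{lem:restrict} is straightforward by elementary algebra,'' and your three-regime verification via the factorization $\lambda_j^{h-p}=\lambda_j^{h-\overline{p}}\lambda_j^{\delta_p}$ and the angle-addition (equivalently, complex-exponential) identity is exactly the elementary computation being alluded to. Your catch of the misprint $\bm{g}_{i,p+r+2m}\to\bm{G}_{p+r+2m-1}$ in the formula for $\overline{\bm{G}}_{\overline{p}+\overline{r}+2m}$ is correct, and your remark that the padded $\overline{\theta}_m$ must be placed in $(0,\pi)$ rather than at $0$ to keep $\bm{\overline{\omega}}\in(-1,1)^{\overline{r}}\times\bm{\varPi}^{\overline{s}}$ is a genuine (if cosmetic) refinement of the statement as written.
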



\subsection{Restricted parameter space}\label{subsec:restrictedspace} 
Based on Lemma \ref{lem:restrict}, this section provides a useful intermediate result for the proof of Theorem \ref{thm:selection}. It allows us to establish a connection  between the parameter space of any $\pazocal{M}\in \mathcal{M}_{\textup{mis}}$ and that of $\pazocal{M}^*$; see Proposition \ref{prop:space} below.

The relationship between $(\bm{\overline{\omega}}, \bm{\overline{g}})$ and $(\bm{\omega}, \bm{g})$ in Lemma \ref{lem:restrict} can be equivalently written as
\begin{equation}\label{aeq:restrictvec}
	\overline{\bm{\omega}}=\bm{\overline{R}}_1^{\pazocal{M}} \bm{\omega} \quad\text{and}\quad \overline{\bm{g}}=(\bm{\overline{R}}_2^{\pazocal{M}}(\bm{\omega})  \otimes \bm{I}_{N^2})\bm{g}.
\end{equation}
Here $\bm{\overline{R}}_1^{\pazocal{M}}$ is a $(\overline{r}+2\overline{s})\times (r+2s)$ constant matrix,
\begin{equation*}
	\bm{\overline{R}}_1^{\pazocal{M}} = \left ( \begin{array}{ll}
		\bm{I}_{r} & \bm{0}_{r\times 2s}\\
		\bm{0}_{\delta_r\times r} & \bm{0}_{\delta_r\times 2s}\\
		\bm{0} &\bm{I}_{2s}\\
		\bm{0} &\bm{0}_{2\delta_s\times 2s}
	\end{array} \right ),
\end{equation*}
and the function $\bm{\overline{R}}_2^{\pazocal{M}}: (-1,1)^r \times \bm{\varPi}^s \rightarrow \mathbb{R}^{\overline{d}\times d}$ is defined as 
\begin{equation*}
	\bm{\overline{R}}_2^{\pazocal{M}}(\bm{\omega}) 
	= \left (
	\begin{array}{lll}
		\bm{I}_p & \bm{0}_{p\times r} & \bm{0}_{p\times 2s}\\
		\bm{0}_{\delta_p\times p} &\bm{L}_1(\bm{\lambda}) & \bm{L}_2(\bm{\eta})\\
		\bm{0}_{r\times p} &\bm{D}_1(\bm{\lambda}) &\bm{0}_{r\times 2s}\\
		\bm{0}_{\delta_r\times p} &\bm{0}_{\delta_r\times r} &\bm{0}_{\delta_r\times 2s}\\
		\bm{0} &\bm{0}_{2s\times r} & \bm{D}_2(\bm{\eta})\\
		\bm{0} &\bm{0}_{2\delta_s\times r}&\bm{0}_{2\delta_s\times 2s}
	\end{array} \right ),
\end{equation*}
where $\bm{L}_1(\bm{\lambda})$ is a $\delta_p\times r$  matrix whose $k$th row is $(\lambda_1^k,\dots, \lambda_r^k)$, $\bm{L}_2(\bm{\eta})$ is a $\delta_p\times 2s$ matrix whose $k$th row is $(\gamma_1^k\cos(k\theta_1), \gamma_1^k\sin(k\theta_1), \dots, \gamma_s^k\cos(k\theta_s), \gamma_s^k\sin(k\theta_s))$, for $1\leq k\leq \delta_p$, $\bm{D}_1(\bm{\lambda})=\diag\{\lambda_1^{\delta_p}, \dots, \lambda_r^{\delta_p}\}$ is an $r\times r$ diagonal matrix, and  $\bm{D}_2(\bm{\eta})=\diag\{\bm{B}(\bm{\eta}_1, \delta_p), \dots, \bm{B}(\bm{\eta}_s, \delta_p)\}$ is a $2s\times 2s$ block diagonal matrix whose $m$th block is  
\[
\bm{B}(\bm{\eta}_m, \delta_p) = \left (
\begin{matrix}
	\gamma_m^{\delta_p}\cos(\delta_p\theta_m) & \gamma_m^{\delta_p}\sin(\delta_p\theta_m)\\
	-\gamma_m^{\delta_p}\sin(\delta_p\theta_m) & \gamma_m^{\delta_p}\cos(\delta_p\theta_m) 
\end{matrix} \right ) \quad \text{for}\quad 1\leq m\leq s.
\]
In particular, when $\delta_r=0$  or $\delta_s=0$, the corresponding zero rows in $\bm{\overline{R}}_1^{\pazocal{M}}$ and $\bm{\overline{R}}_2^{\pazocal{M}}(\cdot)$ will disappear. When $\delta_p=0$,  $\bm{L}_1(\cdot)$ and $\bm{L}_2(\cdot)$ will disappear, while $\bm{D}_1(\cdot)=\bm{I}_r$ and $\bm{D}_2(\cdot)=\bm{I}_{2s}$, and then $\bm{\overline{R}}_2^{\pazocal{M}}(\cdot)$ will reduce to the constant block diagonal matrix, $\bm{\overline{R}}_2^{\pazocal{M}} =\diag\{\bm{I}_{\overline{p}}, \bm{\overline{R}}_1^{\pazocal{M}}\}$.

By Lemma \ref{lem:restrict}, for any $\pazocal{M}=(p,r,s)\in \mathcal{M}$, the following constraints are satisfied by  $\bm{\overline{\omega}}$ and  $\overline{\bm{G}}_{k}$ for $1\leq k\leq \overline{d}$:
\begin{equation}\label{aeq:restrictomg}
	\overline{\lambda}_{r+1}=\cdots=\overline{\lambda}_{\overline{r}} =0, \quad 
	\bm{\overline{\eta}}_{s+1}=\cdots=\bm{\overline{\eta}}_{\overline{s}}=\bm{0},
\end{equation}
and 
\begin{align}\label{aeq:restrictg}
	\begin{split}
		\bm{\overline{G}}_{p+k} &= \sum_{j=1}^{\overline{r}} \overline{\lambda}_j^{k-\delta_p} \bm{\overline{G}}_{\overline{p}+j} +  \sum_{m=1}^{\overline{s}} \overline{\gamma}_m^{k-\delta_p}  \cos\{(k-\delta_p)\overline{\theta}_m\}\bm{\overline{G}}_{\overline{p}+\overline{r}+2m-1}  \\
		&\hspace{5mm} +  \sum_{m=1}^{\overline{s}} \overline{\gamma}_m^{k-\delta_p} \sin\{(k-\delta_p)\overline{\theta}_m\}\bm{\overline{G}}_{\overline{p}+\overline{r}+2m} \quad \text{for} \quad 1\leq k\leq  \delta_p,\\
		\bm{\overline{G}}_{\overline{p}+r+1} &=\cdots=\bm{\overline{G}}_{\overline{p}+\overline{r}}= \bm{0},\quad 
		\bm{\overline{G}}_{\overline{p}+\overline{r}+2s+1} =\cdots = \bm{\overline{G}}_{\overline{p}+\overline{r}+2\overline{s}} = \bm{0}.
	\end{split}
\end{align}
These constraints can be written in vector form as
\begin{equation}\label{aeq:restrictvec2}
	\bm{\overline{C}}_1^{\pazocal{M}} \bm{\overline{\omega}}=\bm{0} \quad\text{and}\quad \left (\bm{\overline{C}}_2^{\pazocal{M}}( \bm{\overline{\omega}})  \otimes \bm{I}_{N^2} \right ) \bm{\overline{g}}=\bm{0}.
\end{equation}
Here $\bm{\overline{C}}_1^{\pazocal{M}}\in\mathbb{R}^{(\delta_r+2\delta_s)\times (\overline{r}+2\overline{s})}$ is a constant matrix encoding the $(\delta_r+2\delta_s)$   constraints on $\bm{\overline{\omega}}$ as stated in \eqref{aeq:restrictomg}, 
\[
\bm{\overline{C}}_1^{\pazocal{M}}=
\left ( \begin{matrix}
	\bm{0}_{\delta_r\times r} & \bm{I}_{\delta_r} & \bm{0} & \bm{0}\\
	\bm{0} & \bm{0} & \bm{0}_{2\delta_s\times 2s} & \bm{I}_{2\delta_s} 
\end{matrix} \right ),
\]
and $\bm{\overline{C}}_2^{\pazocal{M}}: (-1,1)^{\overline{r}} \times \bm{\varPi}^{\overline{s}}  \rightarrow \mathbb{R}^{\delta_d \times \overline{d}}$ encodes the $\delta_d$ constraints on $\bm{\overline{g}}$ for any given $\bm{\overline{\omega}}$ as stated in \eqref{aeq:restrictg},
\[
\bm{\overline{C}}_2^{\pazocal{M}}( \bm{\overline{\omega}}) =
\left ( 
\begin{matrix}
	\bm{0}_{\delta_p\times p} & \bm{I}_{\delta_p} & \bm{L}_3(\bm{\overline{\lambda}}) & \bm{0} & \bm{L}_4(\bm{\overline{\eta}}) & \bm{0}\\
	\bm{0} & \bm{0} & \bm{0} & \bm{I}_{\delta_r} & \bm{0} & \bm{0}\\
	\bm{0} & \bm{0} & \bm{0} & \bm{0} & \bm{0} & \bm{I}_{2\delta_s}
\end{matrix}
\right ),
\]
where  $\bm{L}_3(\bm{\overline{\lambda}})$ is a $\delta_p\times r$  matrix whose $k$th row is $(\overline{\lambda}_1^{k-\delta_p},\dots, \overline{\lambda}_r^{k-\delta_p})$, and $\bm{L}_4(\bm{\overline{\eta}})$ is a $\delta_p\times 2s$ matrix whose $k$th row is
\[
(\overline{\gamma}_1^{k-\delta_p}\cos\{(k-\delta_p)\overline{\theta}_1\}, \overline{\gamma}_1^{k-\delta_p}\sin\{(k-\delta_p)\overline{\theta}_1\}, \dots, \overline{\gamma}_s^{k-\delta_p}\cos\{(k-\delta_p)\overline{\theta}_s\}, \overline{\gamma}_s^{k-\delta_p}\sin\{(k-\delta_p)\overline{\theta}_s\}),
\]
for $1\leq k\leq \delta_p$.
Note that $\bm{\overline{C}}_1^{\pazocal{M}}$ and $\bm{\overline{C}}_2^{\pazocal{M}}(\cdot)$ are intrinsically determined by  $\bm{\overline{R}}_1^{\pazocal{M}}$ and $\bm{\overline{R}}_2^{\pazocal{M}}(\cdot)$ in \eqref{aeq:restrictvec}, respectively. In fact, it holds 
\[
\bm{L}_3(\bm{\overline{\lambda}}) = \bm{L}_1(\bm{\lambda}) \bm{D}_1^{-1}(\bm{\lambda})
\quad\text{and}\quad 
\bm{L}_4(\bm{\overline{\eta}}) = \bm{L}_2(\bm{\eta}) \bm{D}_2^{-1}(\bm{\eta}),
\]
since
$\overline{\lambda}_j=\mathbb{I}_{\{1\leq j\leq r\}} \lambda_j$ for $1\leq j\leq \overline{r}$, and 
$\bm{\overline{\eta}}_m=\mathbb{I}_{\{1\leq m\leq s\}}\bm{\eta}_m$ for $1\leq m\leq \overline{s}$.

As indicated by \eqref{aeq:restrictvec2}, increasing $p$ by one amounts to deleting a particular row from $\bm{\overline{C}}_2^{\pazocal{M}}( \bm{\overline{\omega}})$, while increasing $r$ (or $s$) by one is equivalent to deleting a particular row (or a pair of rows) from both $\bm{\overline{C}}_1^{\pazocal{M}}$ and $\bm{\overline{C}}_2^{\pazocal{M}}( \bm{\overline{\omega}})$. The following proposition is a direct consequence of the above discussion. It also establishes the  monotonicity of $\bm{\Gamma}_{\pazocal{M}}$ in $\pazocal{M}$ along a single direction of $p,r$ or $s$.

\begin{proposition}[Restricted parameter spaces]\label{prop:space}
	Any model \eqref{eq:model-scalar} with orders $\pazocal{M}=(p,r,s)\in \mathcal{M}$ can be reparameterized as the model with orders $\overline{\pazocal{M}}=(\overline{p},\overline{r}, \overline{s})$ and the corresponding parameter vectors $\bm{\overline{\omega}}$ and $\overline{\bm{g}}$ belonging to the restricted parameter space,
	\begin{align*}
		\bm{\Gamma}_{\pazocal{M}} &=\left \{ \bm{\overline{\omega}}\in (-1,1)^{\overline{r}} \times \bm{\varPi}^{\overline{s}} , \; \bm{\overline{g}} \in\mathbb{R}^{N^2\overline{d}}: \bm{\overline{C}}_1^{\pazocal{M}} \bm{\overline{\omega}}=\bm{0} \text{ and }
		(\bm{\overline{C}}_2^{\pazocal{M}}( \bm{\overline{\omega}}) \otimes \bm{I}_{N^2}  )  \bm{\overline{g}}=\bm{0} \right \}\\
		&=  \left \{ \bm{\overline{\omega}}=\bm{\overline{R}}_1^{\pazocal{M}} \bm{\omega}, \; \overline{\bm{g}}=(\bm{\overline{R}}_2^{\pazocal{M}}(\bm{\omega})  \otimes \bm{I}_{N^2})\bm{g}: \bm{\omega}\in (-1,1)^r \times \bm{\varPi}^s \text{ and }  \bm{g} \in\mathbb{R}^{N^2 d} \right \}.
	\end{align*}
	Moreover, 
	$\bm{\Gamma}_{\pazocal{M}}\subset \bm{\Gamma}_{\pazocal{M}^\prime}$, for any $\pazocal{M}^\prime$  obtained by increasing one of the $p,r,s$ in $\pazocal{M}$ by one.
\end{proposition}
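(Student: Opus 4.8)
The plan is to read the statement off from Lemma~\ref{lem:restrict} (the explicit max-order reparameterization) together with the block structure of the matrices $\bm{\overline{C}}_1^{\pazocal{M}}$ and $\bm{\overline{C}}_2^{\pazocal{M}}(\cdot)$. First, the ``$\bm{\overline{R}}$-image'' description of $\bm{\Gamma}_{\pazocal{M}}$ is essentially a restatement of Lemma~\ref{lem:restrict}: its formulas, put in vector form, are exactly $\bm{\overline{\omega}}=\bm{\overline{R}}_1^{\pazocal{M}}\bm{\omega}$ and $\bm{\overline{g}}=(\bm{\overline{R}}_2^{\pazocal{M}}(\bm{\omega})\otimes\bm{I}_{N^2})\bm{g}$, so every model of orders $\pazocal{M}$ can be rewritten with orders $\overline{\pazocal{M}}$ and parameters of this form, and conversely any such pair arises from an orders-$\pazocal{M}$ model by undoing the substitutions. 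Hence the real content is the equality of this image with the constraint set $\{(\bm{\overline{\omega}},\bm{\overline{g}}):\bm{\overline{C}}_1^{\pazocal{M}}\bm{\overline{\omega}}=\bm{0},\ (\bm{\overline{C}}_2^{\pazocal{M}}(\bm{\overline{\omega}})\otimes\bm{I}_{N^2})\bm{\overline{g}}=\bm{0}\}$, plus the monotonicity claim.

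For the inclusion ``image $\subseteq$ constraint set'' I would verify the matrix identities $\bm{\overline{C}}_1^{\pazocal{M}}\bm{\overline{R}}_1^{\pazocal{M}}=\bm{0}$ and $\bm{\overline{C}}_2^{\pazocal{M}}(\bm{\overline{R}}_1^{\pazocal{M}}\bm{\omega})\,\bm{\overline{R}}_2^{\pazocal{M}}(\bm{\omega})=\bm{0}$ for every $\bm{\omega}$. The first holds because the rows of $\bm{\overline{C}}_1^{\pazocal{M}}$ pick out precisely the coordinate blocks of $\bm{\overline{R}}_1^{\pazocal{M}}\bm{\omega}$ that are identically zero; the second reduces, block by block, to the relations $\bm{L}_3(\bm{\overline{\lambda}})\bm{D}_1(\bm{\lambda})=\bm{L}_1(\bm{\lambda})$ and $\bm{L}_4(\bm{\overline{\eta}})\bm{D}_2(\bm{\eta})=\bm{L}_2(\bm{\eta})$ recorded above, which hold since $\overline{\lambda}_j=\lambda_j$, $\bm{\overline{\eta}}_m=\bm{\eta}_m$, and $\bm{B}(\bm{\eta}_m,\delta_p)$ is $\gamma_m^{\delta_p}$ times a rotation. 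Equivalently, one may simply invoke the already-displayed consequences \eqref{aeq:restrictomg}--\eqref{aeq:restrictg} of Lemma~\ref{lem:restrict}.

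For the reverse inclusion ``constraint set $\subseteq$ image'', take $(\bm{\overline{\omega}},\bm{\overline{g}})$ satisfying the constraints. From $\bm{\overline{C}}_1^{\pazocal{M}}\bm{\overline{\omega}}=\bm{0}$ we get $\overline{\lambda}_{r+1}=\cdots=\overline{\lambda}_{\overline{r}}=0$ and $\bm{\overline{\eta}}_{s+1}=\cdots=\bm{\overline{\eta}}_{\overline{s}}=\bm{0}$; set $\lambda_j:=\overline{\lambda}_j$ $(1\le j\le r)$, $\bm{\eta}_m:=\bm{\overline{\eta}}_m$ $(1\le m\le s)$, so $\bm{\omega}\in(-1,1)^r\times\bm{\varPi}^s$ and $\bm{\overline{R}}_1^{\pazocal{M}}\bm{\omega}=\bm{\overline{\omega}}$. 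For the matrices, put $\bm{G}_k:=\bm{\overline{G}}_k$ for $1\le k\le p$, recover $\bm{G}_{p+j}$ from $\bm{\overline{G}}_{\overline{p}+j}=\lambda_j^{\delta_p}\bm{G}_{p+j}$, and recover $(\bm{G}_{p+r+2m-1},\bm{G}_{p+r+2m})$ by inverting the (invertible) map $\bm{B}(\bm{\eta}_m,\delta_p)$. Then row-blocks~2 and~3 of $\bm{\overline{C}}_2^{\pazocal{M}}(\bm{\overline{\omega}})$ force the ``new'' blocks $\bm{\overline{G}}_{\overline{p}+r+1},\dots,\bm{\overline{G}}_{\overline{p}+\overline{r}}$ and $\bm{\overline{G}}_{\overline{p}+\overline{r}+2s+1},\dots,\bm{\overline{G}}_{\overline{d}}$ to vanish, as the image requires, while row-block~1 of $\bm{\overline{C}}_2^{\pazocal{M}}(\bm{\overline{\omega}})$ is precisely the identity that $\bm{\overline{G}}_{p+1},\dots,\bm{\overline{G}}_{\overline{p}}$ equal the prescribed linear combinations of the $\bm{\overline{G}}_{\overline{p}+\cdot}$'s, i.e.\ of the $\bm{G}_{p+\cdot}$'s, which is exactly \eqref{aeq:restrictg}; hence $(\bm{\overline{R}}_2^{\pazocal{M}}(\bm{\omega})\otimes\bm{I}_{N^2})\bm{g}=\bm{\overline{g}}$. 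The monotonicity $\bm{\Gamma}_{\pazocal{M}}\subset\bm{\Gamma}_{\pazocal{M}'}$ is then immediate from the constraint-matrix description: raising $p$, $r$, or $s$ by one deletes a single equation from the system defining $\bm{\Gamma}_{\pazocal{M}}$ (the $k=1$ row of the $\delta_p$-block in $\bm{\overline{C}}_2$ when $p$ increases, or one zero-constraint row of $\bm{\overline{C}}_1$ and of $\bm{\overline{C}}_2$ when $r$ or $s$ increases), the surviving rows being unchanged up to the reindexing $k\mapsto k+1$; so every point of $\bm{\Gamma}_{\pazocal{M}}$ still satisfies the fewer constraints defining $\bm{\Gamma}_{\pazocal{M}'}$.

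I expect no deep obstacle, only careful bookkeeping, and it concentrates in the ``constraint set $\subseteq$ image'' step when $\delta_p\ge1$: there the constraint on $\bm{\overline{g}}$ is genuinely nonlinear in $\bm{\overline{\omega}}$, and one must check that the reconstructed $\bm{G}$-matrices satisfy the ``inversion'' relations for the $\bm{\overline{G}}_{\overline{p}+\cdot}$ blocks and the ``companion'' relations \eqref{aeq:restrictg} for the $\bm{\overline{G}}_{p+1},\dots,\bm{\overline{G}}_{\overline{p}}$ blocks simultaneously. The degenerate cases $\lambda_j=0$ or $\gamma_m=0$ (which $\bm{\varPi}=[0,1)\times(0,\pi)$ permits) need a small separate remark: then $\bm{\overline{G}}_{\overline{p}+j}$, respectively the corresponding complex-pair block, is forced to be zero, so the relevant $\bm{G}_{p+\cdot}$ may be taken to be zero and the companion relations collapse consistently.
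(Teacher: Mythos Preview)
Your proposal is correct and follows essentially the same approach as the paper: the paper states that the proposition is ``a direct consequence of the above discussion'' (namely Lemma~\ref{lem:restrict} and the explicit block forms of $\bm{\overline{R}}_1^{\pazocal{M}},\bm{\overline{R}}_2^{\pazocal{M}}(\cdot),\bm{\overline{C}}_1^{\pazocal{M}},\bm{\overline{C}}_2^{\pazocal{M}}(\cdot)$), and your argument simply spells out that bookkeeping, including the row-deletion observation for monotonicity that the paper also records. Your flagging of the degenerate cases $\lambda_j=0$ or $\gamma_m=0$ is a useful caveat that the paper leaves implicit (the matrix $\bm{L}_3(\bm{\overline{\lambda}})$ involves negative powers and the identity $\bm{L}_3=\bm{L}_1\bm{D}_1^{-1}$ presupposes invertibility of $\bm{D}_1$).
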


\subsection{Proof of Theorem \ref{thm:selection}}

In this proof, we will focus on the JE, since the proof for the RE will be similar.  
Since $\overline{p}, \overline{r}$ and $\overline{s}$ are assumed to be fixed, $\mathcal{M}$ contains a fixed number of candidate models. 
To prove this theorem, it suffices to show that for each $\pazocal{M}\in  \mathcal{M}_{\textup{over}}\cup\mathcal{M}_{\textup{mis}}$,\[
\mathbb{P}\left \{\textup{BIC} ({\pazocal{M}}) > \textup{BIC}(\pazocal{M}^*)\right \} \to 0 \quad\text{as}\quad T\to \infty,
\]
where
$\mathcal{M}_{\textup{over}}= \{\pazocal{M}\in \mathcal{M}\mid  p\geq p^*,  r\geq r^* \text{ and } s\geq s^*\}\setminus \pazocal{M}^*$
and 
$\mathcal{M}_{\textup{mis}}= \{\pazocal{M}\in \mathcal{M}\mid  p< p^*,  r< r^* \text{ or } s< s^*\}$. 
For any   $\pazocal{M}=(p,r,s)\in\mathcal{M}$, define the unregularized population  minimizer:
\[
(\bm{\omega}^{\circ}_{\pazocal{M}},\bm{g}^{\circ}_{\pazocal{M}}) = \argmin_{\bm{\omega}\in(-1,1)^r \times \bm{\varPi}^s, \bm{g}\in \mathbb{R}^{N^2d}} \mathbb{E}\{\mathbb{L}_T(\bm{\omega},\bm{g}) \}.
\]
Note that when $\pazocal{M}=\pazocal{M}^*$, we simply have $(\bm{\omega}^{\circ}_{\pazocal{M}}, \bm{g}^{\circ}_{\pazocal{M}})=(\bm{\omega}^*,\bm{g}^*)$.  In addition, denote
\[
\widetilde{\varphi}_{T,\pazocal{M}} = \tau_N \left [\frac{\log\{N(p\vee1)\}}{T}\right ]^{1-q/2}.
\]

Let $\widehat{\bm{\omega}}$ and $\widehat{\bm{g}}$ denote the estimators obtained from fitting the correctly specified model, i.e., $\pazocal{M}^*$.
Note that 
\begin{equation}\label{eq:bic1}
	\textup{BIC}(\pazocal{M})-\textup{BIC}(\pazocal{M}^*)=
	\log \left(1 + \frac{D_{\pazocal{M}}}{ \widetilde{\mathbb{L}}_{T}(\widehat{\bm{\omega}},\widehat{\bm{g}}) }\right) + (d \widetilde{\varphi}_{T,\pazocal{M}}- d^*\widetilde{\varphi}_{T,\pazocal{M}^*})\log T, 
\end{equation}
where 
\[
D_{\pazocal{M}}= \widetilde{\mathbb{L}}_{T}(\widehat{\bm{\omega}}_{\pazocal{M}},\widehat{\bm{g}}_{\pazocal{M}}) - \widetilde{\mathbb{L}}_{T}(\widehat{\bm{\omega}},\widehat{\bm{g}})= D_{\pazocal{M}, 1}-D_{\pazocal{M}^*, 2}+D_{\pazocal{M}, 3},
\]
with $D_{\pazocal{M},1} =\widetilde{\mathbb{L}}_T(\widehat{\bm{\omega}}_{\pazocal{M}},\widehat{\bm{g}}_{\pazocal{M}})- \mathbb{E}\{\mathbb{L}_T(\bm{\omega}^{\circ}_{\pazocal{M}},\bm{g}^{\circ}_{\pazocal{M}})\}$,  $D_{\pazocal{M}^*, 2} =\widetilde{\mathbb{L}}_T(\widehat{\bm{\omega}},\widehat{\bm{g}})- \mathbb{E}\{\mathbb{L}_T(\bm{\omega}^*,\bm{g}^*)\}$, and $D_{\pazocal{M},3} =\mathbb{E}\{\mathbb{L}_T(\bm{\omega}^{\circ}_{\pazocal{M}},\bm{g}^{\circ}_{\pazocal{M}})\} - \mathbb{E}\{\mathbb{L}_T(\bm{\omega}^*,\bm{g}^*)\}$. 
By the proof of Theorem \ref{thm:lasso} or \ref{thm:lassorow}, we can directly show that 
\begin{equation}\label{eq:bicerr2}
	D_{\pazocal{M}^*, 2}=O_p(N\widetilde{\varphi}_{T,\pazocal{M}^*}).
\end{equation}

Recall that  $\bm{a}=\vect(\bm{A})$,  where $\bm{A}=(\bm{A}_1,\bm{A}_2,\dots)$ is the horizontal concatenation of $\{\bm{A}_h\}_{h=1}^\infty$. Note that $\bm{a}=(\bm{L}(\bm{\omega})\otimes \bm{I}_{N^2})\bm{g}$. Throughout our proof, we will suppress the dependence of $\bm{L}(\cdot)$ on $\pazocal{M}$ for simplicity. Analogously, for any $\pazocal{M}\in\mathcal{M}$, we can define $\bm{\widehat{a}}_{\pazocal{M}}=\vect(\bm{\widehat{A}}_{\pazocal{M}})=(\bm{L}(\bm{\widehat{\omega}_{\pazocal{M}}})\otimes \bm{I}_{N^2})\bm{\widehat{g}}_{\pazocal{M}}$ and $\bm{a}_{\pazocal{M}}^{\circ}=\vect(\bm{A}_{\pazocal{M}}^{\circ})=(\bm{L}(\bm{\omega}_{\pazocal{M}}^{\circ})\otimes \bm{I}_{N^2})\bm{g}_{\pazocal{M}}^{\circ}$.  
Moreover, by Proposition \ref{prop:space}, we can write
\[
\mathbb{E}\{\mathbb{L}_T(\bm{\omega}^{\circ}_{\pazocal{M}},\bm{g}^{\circ}_{\pazocal{M}})\} =\mathbb{E}\{\|\bm{y}_t-(\bm{x}_t^\top \otimes \bm{I}_N)\bm{a}_{\pazocal{M}}^{\circ}\|_2^2\}
=\min_{(\bm{\omega}, \bm{g})\in \bm{\Gamma}_{\pazocal{M}}}\mathbb{E}\left \{\|\bm{y}_t-(\bm{x}_t^\top \otimes \bm{I}_N)\bm{a}(\bm{\omega}, \bm{g})\|_2^2\right \}.
\]

\smallskip
\noindent
\textbf{(i) Misspecified models:}
Let $\pazocal{M}\in\mathcal{M}_{\textup{mis}}$. The key of this analysis is to derive a lower bound for  $D_{\pazocal{M},3}$ based on Proposition \ref{prop:space} and then show that it dominates both $D_{\pazocal{M},1}$ and $D_{\pazocal{M}^*,2}$. 

Denote 
$\mathcal{L}(\bm{a})=\mathbb{E}\{\|\bm{y}_t-(\bm{x}_t^\top \otimes \bm{I}_N)\bm{a}\|_2^2\}$.  By Lemma \ref{lemma:Wcov}, $\lambda_{\min} \left \{\mathbb{E}(\bm{x}_t \bm{x}_t^\top )  \otimes \bm{I}_N \right \} = \lambda_{\min} \left \{\mathbb{E}(\bm{x}_t \bm{x}_t^\top ) \right \} \geq \kappa_1$. Then, by the Taylor expansion and Proposition \ref{prop:space}, we have
\begin{align*}
	D_{\pazocal{M},3} =\mathcal{L}(\bm{a}_{\pazocal{M}}^{\circ}) -\mathcal{L}(\bm{a}^{*}) 
	&=(\bm{a}_{\pazocal{M}}^{\circ}-\bm{a}^*)^\top  \left \{\mathbb{E}(\bm{x}_t \bm{x}_t^\top )  \otimes \bm{I}_N \right \} (\bm{a}_{\pazocal{M}}^{\circ}-\bm{a}^*) \notag\\ &\geq \kappa_1 \|\bm{a}_{\pazocal{M}}^{\circ}-\bm{a}^*\|_2^2  \geq \delta_{\pazocal{M}},
\end{align*}
where $\delta_{\pazocal{M}}=   \kappa_1  \inf_{(\bm{\omega}, \bm{g})\in \bm{\Gamma}_{\pazocal{M}}} \|(\bm{L}(\bm{\omega})\otimes \bm{I}_{N^2})\bm{g}-\bm{a}^*\|_2^2$. Note that by Assumption \ref{assum:signaljoint}(i) and the boundedness of $d^*$, we have $\delta_{\pazocal{M}}\gg N d^* \widetilde{\varphi}_{T,\pazocal{M}^*} \log T$. As a result, it follows from \eqref{eq:bicerr2} that
$D_{\pazocal{M}^*, 2}=o_p(\delta_{\pazocal{M}})$.
Moreover, 
Assumption \ref{assum:signaljoint}(ii) implies
$D_{\pazocal{M},1}=o_p(\delta_{\pazocal{M}})$.

Lastly, since $\log(1+x)\geq \min\{0.5x, \log2\}$ for any $x>0$ and $\widetilde{\mathbb{L}}_{T}(\widehat{\bm{\omega}},\widehat{\bm{g}})=E(\|\bm{\varepsilon}_t\|_2^2)+D_{\pazocal{M}^*,2}=O_p(N)$, by combining \eqref{eq:bic1} with the results above, we can show that
\[
\textup{BIC}(\pazocal{M})-\textup{BIC}(\pazocal{M}^*)
\geq  \min\left \{\frac{0.5 D_{\pazocal{M}}}{ \widetilde{\mathbb{L}}_{T}(\widehat{\bm{\omega}},\widehat{\bm{g}})}, \log2 \right \} +  (d \widetilde{\varphi}_{T,\pazocal{M}}- d^*\widetilde{\varphi}_{T,\pazocal{M}^*})\log T > 0,
\]
as $T\rightarrow\infty$.

\smallskip
\noindent
\textbf{(ii) Overspecified models:}
Let $\pazocal{M}\in\mathcal{M}_{\textup{over}}$. First, we can show that 
\[
\min_{\bm{a}\in\mathbb{R}^\infty}\mathbb{E}\left \{\|\bm{y}_t-(\bm{x}_t^\top \otimes \bm{I}_N)\bm{a}\|_2^2\right \}=\mathbb{E}\{\|\bm{\varepsilon}_t\|_2^2 \}
\]
and this minimum is attained at $\bm{a}^*=\bm{a}(\bm{\omega}^*, \bm{g}^*)$. Moreover, since $(\bm{\omega}^*, \bm{g}^*)\in\bm{\Gamma}_{\pazocal{M}^*}\subset \bm{\Gamma}_{\pazocal{M}}$, we have 
$\mathbb{E}\{\mathbb{L}_T(\bm{\omega}^{\circ}_{\pazocal{M}},\bm{g}^{\circ}_{\pazocal{M}})\} 
=\min_{(\bm{\omega}, \bm{g})\in \bm{\Gamma}_{\pazocal{M}}}\mathbb{E}\left \{\|\bm{y}_t-(\bm{x}_t^\top \otimes \bm{I}_N)\bm{a}(\bm{\omega}, \bm{g})\|_2^2\right \}=\mathbb{E}\{\|\bm{\varepsilon}_t\|_2^2 \}$,
with the minimum attained at some $(\bm{\omega}^{\circ}_{\pazocal{M}}, \bm{g}^{\circ}_{\pazocal{M}})$ such that  $\bm{a}_{\pazocal{M}}^{\circ}=\bm{a}^*$. 
Thus, 
\begin{equation}\label{eq:bicerr1}
	D_{\pazocal{M},3}=0.
\end{equation}

In addition, we can show that
\begin{equation}\label{eq:bicerr}
	D_{\pazocal{M},1}= O_p(N\widetilde{\varphi}_{T,\pazocal{M}}).
\end{equation}
Since  $\bm{A}_{\pazocal{M}}^{\circ}=\bm{A}^*$, by the optimality of $\bm{\widehat{A}}_{\pazocal{M}}$, we have
\begin{equation*}
	\frac{3}{4T}\sum_{t=1}^{T}\|\bm{\widehat{\Delta}}_{\pazocal{M}}\bm{x}_t\|_{2}^2-S_3(\bm{\widehat{\Delta}}_{\pazocal{M}})
	\leq \frac{2}{T}\sum_{t=1}^{T}\langle \bm{\varepsilon}_{t}, \bm{\widehat{\Delta}}_{\pazocal{M}}\bm{x}_t \rangle+\lambda_g(\|\bm{g}^{*}\|_1-\|\widehat{\bm{g}}_{\pazocal{M}}\|_1 ) + S_2(\bm{\widehat{\Delta}}_{\pazocal{M}}) - S_1(\bm{\widehat{\Delta}}_{\pazocal{M}}),
\end{equation*}
where  $\bm{\widehat{\Delta}}_{\pazocal{M}}=\bm{\widehat{A}}_{\pazocal{M}}-\bm{A}^*$, and $S_i(\cdot)$ for $1\leq i\leq 3$ are defined as in the proof of Theorem \ref{thm:lasso}. The remainder of the proof can be completed by modifying that of Theorem \ref{thm:lasso}. 
This involves adapting Proposition \ref{prop:perturb} for $\pazocal{M}\in\mathcal{M}_{\textup{over}}$.
To this end, we define the following notations:  Let
$\bm{g}_{\pazocal{M}}= (\bm{g}_{\pazocal{M},\ar}^\top, \bm{g}_{\pazocal{M},\ma}^\top)^\top\in\mathbb{R}^{N^2d}$, where $\bm{g}_{\pazocal{M},\ar}=\vect((\bm{G}_1,\dots, \bm{G}_p))$ and $\bm{g}_{\pazocal{M},\ma}=\vect((\bm{G}_{p+1},  \dots, \bm{G}_{d}))$. We can partition any $\bm{\omega}_{\pazocal{M}}\in (-1,1)^r \times \bm{\varPi}^s$ into two subvectors: $\bm{\omega}_{\pazocal{M}^*}\in(-1,1)^{r^*} \times \bm{\varPi}^{s^*}$ and $\bm{\omega}_{\pazocal{M}^{\delta}}\in(-1,1)^{\delta_r} \times \bm{\varPi}^{\delta_s}$, where $\delta_r=r-r^*$ and $\delta_s=s-s^*$. Accordingly, partition $\bm{g}_{\pazocal{M},\ma}$ into two subvectors: $\bm{g}_{\pazocal{M}^*,\ma}\in\mathbb{R}^{N^2 (r+2s)}$ and  $\bm{g}_{\pazocal{M}^{\delta},\ma}\in\mathbb{R}^{N^2 (\delta_r+2\delta_s)}$.
Then, 
let $\bm{a}_{\pazocal{M},\ar}=\vect((\bm{A}_1,\dots, \bm{A}_p))$ and $\bm{a}_{\pazocal{M},\ma}=\vect((\bm{A}_{p+1}, \bm{A}_{p+2}, \dots))$.

Note that $\bm{a}_{\pazocal{M},\ar}=\bm{g}_{\pazocal{M},\ar}$ and $\bm{a}_{\pazocal{M},\ma}=(\bm{L}^{\ma}(\bm{\omega})\otimes \bm{I}_{N^2})\bm{g}_{\pazocal{M},\ma}=(\bm{L}^{\ma}(\bm{\omega}_{\pazocal{M}^*})\otimes \bm{I}_{N^2})\bm{g}_{\pazocal{M}^*,\ma}+\bm{a}_{\pazocal{M}^\delta,\ma}$, where $\bm{a}_{\pazocal{M}^\delta,\ma}=(\bm{L}^{\ma}(\bm{\omega}_{\pazocal{M}^\delta})\otimes \bm{I}_{N^2})\bm{g}_{\pazocal{M}^\delta,\ma}$.
By a method similar to that for deriving \eqref{aeq:restrictvec}, we can show that $\bm{\omega}^{\circ}_{\pazocal{M}^\delta}=\bm{0}$ and $\bm{g}^{\circ}_{\pazocal{M}^\delta,\ma}=\bm{0}$, which are subvectors of $\bm{\omega}^{\circ}_{\pazocal{M}}$ and $\bm{g}^{\circ}_{\pazocal{M}}$, respectively.
Thus, $\bm{a}_{\pazocal{M}^\delta,\ma}^{\circ}=(\bm{L}^{\ma}(\bm{\omega}_{\pazocal{M}^\delta}^{\circ})\otimes \bm{I}_{N^2})\bm{g}_{\pazocal{M}^\delta,\ma}^{\circ}=\bm{0}$.
Then, by adapting the proof of Proposition \ref{prop:perturb}, under Assumptions \ref{assum:statn}(i) and \ref{assum:gap}, we can show that if $\|\bm{\omega}_{\pazocal{M}^*} - \bm{\omega}^*\|_2\leq c_{\bm{\omega}}$, then $\|\bm{a}_{\pazocal{M}^\delta,\ma}\|_{2}+\|\bm{g}_{\pazocal{M},\ar}-\bm{g}_{\pazocal{M},\ar}^{\circ}\|_{2} + \|\bm{g}_{\pazocal{M}^*,\ma}-\bm{g}_{\pazocal{M}^*,\ma}^{\circ}\|_{2} + \underline{\alpha}_\ma \|\bm{\omega}_{\pazocal{M}^*} - \bm{\omega}^*\|_2 \lesssim \|\bm{\Delta}_{\pazocal{M}}\|_{\Fr}^2\lesssim \|\bm{a}_{\pazocal{M}^\delta,\ma}\|_{2}+ \|\bm{g}_{\pazocal{M},\ar}-\bm{g}_{\pazocal{M},\ar}^{\circ}\|_{2} + \|\bm{g}_{\pazocal{M}^*,\ma}-\bm{g}_{\pazocal{M}^*,\ma}^{\circ}\|_{2} + \overline{\alpha}_\ma \|\bm{\omega}_{\pazocal{M}^*} - \bm{\omega}^*\|_2$. Along the lines of this adaptation, we can modify  the proof of Theorem \ref{thm:lasso} to show that 
\[
D_{\pazocal{M},1} \lesssim \left [\frac{\kappa_2 \lambda_{\max}(\bm{\Sigma}_{\varepsilon})\log \{N(p\vee 1)\}}{\kappa_1^2 T} \right ]^{1-q/2}  \frac{R_q}{\kappa_1^{1-q}} \lesssim \widetilde{\varphi}_{T,\pazocal{M}},
\]
with high probability, and hence \eqref{eq:bicerr}, provided that $\bm{\widehat{\omega}}_{\pazocal{M}}$ contains a subvector $\bm{\widehat{\omega}}_{\pazocal{M}^*}$ satisfying $\|\bm{\widehat{\omega}}_{\pazocal{M}^*} - \bm{\omega}^*\|_2\leq c_{\bm{\omega}}$.

Now using the inequality $\log(1+x)\leq x$, we have
\[
\log \left(1 + \frac{D_{\pazocal{M}}}{ \widetilde{\mathbb{L}}_{T}(\widehat{\bm{\omega}},\widehat{\bm{g}}) }\right)\geq -  \frac{D_{\pazocal{M}}}{ \widetilde{\mathbb{L}}_{T}(\widehat{\bm{\omega}},\widehat{\bm{g}}) }.
\]
Additionally, note that $\widetilde{\mathbb{L}}_{T}(\widehat{\bm{\omega}},\widehat{\bm{g}})=\mathbb{E}\{\mathbb{L}_T(\bm{\omega}^*,\bm{g}^*)\}+D_{\pazocal{M}^*,2}=E(\|\bm{\varepsilon}_t\|_2^2)+D_{\pazocal{M}^*,2}$, where $E(\|\bm{\varepsilon}_t\|_2^2) \asymp N$.  Finally, since $\widetilde{\varphi}_{T,\pazocal{M}} > \widetilde{\varphi}_{T,\pazocal{M}^*}$, it follows from \eqref{eq:bic1}--\eqref{eq:bicerr} that
\begin{align*}
	\textup{BIC}(\pazocal{M})-\textup{BIC}(\pazocal{M}^*)&\geq (d \widetilde{\varphi}_{T,\pazocal{M}}- d^*\widetilde{\varphi}_{T,\pazocal{M}^*})\log T - O_p(N (\widetilde{\varphi}_{T,\pazocal{M}}-\widetilde{\varphi}_{T,\pazocal{M}^*}) / N)\\
	& = (d-d^*) \widetilde{\varphi}_{T,\pazocal{M}}\log T + O_p((\widetilde{\varphi}_{T,\pazocal{M}}-\widetilde{\varphi}_{T,\pazocal{M}^*})(d^*\log T - 1)) >0,
\end{align*}
as $T\rightarrow\infty$.
The proof  of this theorem is complete.


\section{Proofs of auxiliary lemmas}\label{asec:aux}
\subsection{Proof of Lemma \ref{cor1}}
By definition,  $\ell_{h}^{I}(\lambda_j)=\lambda_j^h$ for $1\leq j\leq r$, and $\ell_{h}^{II,1}(\bm{\eta}_m)=\gamma_m^h\cos(h\theta_m)$ and  $\ell_{h}^{II,2}(\bm{\eta}_m)=\gamma_m^h\sin(h\theta_m)$ for $1\leq m\leq s$. 
Then their first-order derivatives are $\nabla\ell_{h}^{I}(\lambda_j)=h\lambda_j^{h-1}$, $\nabla_\gamma\ell_{h}^{II,1}(\bm{\eta}_m)=h\gamma_m^{h-1}\cos(h\theta_m)$, 
$\nabla_\theta\ell_{h}^{II,1}(\bm{\eta}_m)=-h\gamma_m^{h}\sin(h\theta_m)$, $\nabla_\gamma\ell_{h}^{II,2}(\bm{\eta}_m)=h\gamma_m^{h-1}\sin(h\theta_m)$, and
$\nabla_\theta\ell_{h}^{II,2}(\bm{\eta}_m)=h\gamma_m^{h}\cos(h\theta_m)$. Their second-order derivatives are
$\nabla^2\ell_{h}^{I}(\lambda_j)=h(h-1)\lambda_j^{h-2}$, 
$\nabla^2_\gamma\ell_{h}^{II,1}(\bm{\eta}_m)=h(h-1)\gamma_m^{h-2}\cos(h\theta_m)$,
$\nabla^2_{\gamma\theta}\ell_{h}^{II,1}(\bm{\eta}_m)=-h^2\gamma_m^{h-1}\sin(h\theta_m)$, $\nabla^2_\theta\ell_{h}^{II,1}(\bm{\eta}_m)=-h^2\gamma_m^{h}\cos(h\theta_m)$, 
$\nabla^2_\gamma\ell_{h}^{II,2}(\bm{\eta}_m)=h(h-1)\gamma_m^{h-2}\sin(h\theta_m)$, 
$\nabla^2_{\gamma\theta}\ell_{h}^{II,2}(\bm{\eta}_m)=h^2\gamma_m^{h-1}\cos(h\theta_m)$, and $\nabla^2_\theta\ell_{h}^{II,2}(\bm{\eta}_m)=-h^2\gamma_m^{h}\sin(h\theta_m)$. By Assumption \ref{assum:statn}(i), there exists $\rho_1>0$ such that $\max\{|\lambda_1|,\ldots,|\lambda_r|, \gamma_1,\ldots,\gamma_s\} \leq \rho_1< \bar{\rho}$. Thus,
\[\max_{1\leq j\leq r, 1\leq m\leq s, \iota=1,2}\left \{|\nabla\ell_{h}^{I}(\lambda_j)|, \|\nabla \ell_{h}^{II,\iota}(\bm{\eta}_m)\|_2, |\nabla^2\ell_{h}^{I}(\lambda_j)|, \|\nabla^2 \ell_{h}^{II,\iota}(\bm{\eta}_m)\|_{\Fr}\right \}\leq C_{\ell}\bar{\rho}^{h}.
\]
by choosing $C_{\ell}$ dependent on $\rho_1$ and $\bar{\rho}$ such that $C_{\ell}\geq 2h^2(\rho_1/\bar{\rho})^{h-2}\bar{\rho}^{-2}$ for all $h\geq 1$. Note that such a $0<C_{\ell}<\infty$ exists and is an absolute constant.

\subsection{Proof of Lemma \ref{lemma:fullrank}}
For simplicity, we omit the superscript ``*'' in all notations below. Consider the following partitions of the $\infty \times (p+J)$ matrix $\bm{L}_{\rm{stack}}(\bm{\omega})$:
\[
\bm{L}_{\rm{stack}}(\bm{\omega})=\left ( 
\begin{array}{cc}
	\bm{I}_p& \bm{0}\\
	\bm{0}&\bm{L}_{\rm{stack}}^{\ma}(\bm{\omega})
\end{array}
\right )
=\left (\begin{matrix}
	\bm{I}_p&\bm{0}\\
	\bm{0}&\bm{L}_{[1:J]}(\bm{\omega})\\
	\bm{0}&\bm{L}_{\rm{Rem}}(\bm{\omega})
\end{matrix}\right ),
\]
where $\bm{L}_{\rm{stack}}^{\ma}(\bm{\omega})=\left (\bm{L}^{I}(\bm{\lambda}), \bm{L}^{II}(\bm{\eta}), \nabla \bm{L}^{I}(\bm{\lambda}), \nabla_{\theta}  \bm{L}^{II}(\bm{\eta})\right )$ is further partitioned into two blocks, the $J\times J$ block
$\bm{L}_{[1:J]}(\bm{\omega})$ and  the $\infty\times J$ remainder block $\bm{L}_{\rm{Rem}}(\bm{\omega})$. Note that for $1\leq h\leq J$, the $h$th row of $\bm{L}_{[1:J]}(\bm{\omega})$ is 
\[
\bm{L}_{h}(\bm{\omega}):=\left (\left ( \bm{\ell}_h^{I}(\bm{\lambda}) \right )^\top,  \left (\bm{\ell}_h^{II}(\bm{\eta})\right )^\top, \left ( \nabla \bm{\ell}_h^{I}(\bm{\lambda}) \right )^\top,   \left (\nabla_{\theta}\bm{\ell}_h^{II}(\bm{\eta})\right )^\top \right ),
\]
where $\bm{\ell}_h^{I}(\bm{\lambda}) =(\lambda_1^h,\dots, \lambda_r^h)^\top$, $\nabla \bm{\ell}_h^{I}(\bm{\lambda}) =(h\lambda_1^{h-1},\dots, h\lambda_r^{h-1})^\top$, and
\begin{align*}
	\bm{\ell}_h^{II}(\bm{\eta}) &=\left (\gamma_1^h\cos(h\theta_1),\gamma_1^h\sin(h\theta_1), \dots, \gamma_s^h\cos(h\theta_s),\gamma_s^h\sin(h\theta_s) \right )^\top,\\
	\nabla_{\theta}\bm{\ell}_h^{II}(\bm{\eta}) &= \left (-h\gamma_1^h\sin(h\theta_1),h\gamma_1^h\cos(h\theta_1), \dots, -h\gamma_s^h\sin(h\theta_s),h\gamma_s^h\cos(h\theta_s) \right )^\top.
\end{align*}
For $h\geq 1$, the $h$th row of $\bm{L}_{\rm{Rem}}(\bm{\omega})$ is $\bm{L}_{J+h}(\bm{\omega})$.

By Lemma \ref{cor1}, we have $\|\bm{L}_{\rm{stack}}^{\ma}(\bm{\omega})\|_{\Fr}\leq \sqrt{J\sum_{h=1}^{\infty} C_L^2\bar{\rho}^{2h}}\leq  C_L\sqrt{J}\bar{\rho}(1-\bar{\rho})^{-1}=C_{\bar{\rho}}$. Then
\begin{align}\label{eq:sigmax}
	\sigma_{\max}(\bm{L}_{\rm{stack}}(\bm{\omega}))\leq \max\left \{1,\sigma_{\max}(\bm{L}_{\rm{stack}}^{\ma}(\bm{\omega}))\right \}\leq  \max\left \{1,\|\bm{L}_{\rm{stack}}^{\ma}(\bm{\omega})\|_{\Fr}\right \}
	\leq \max \{1, C_{\bar{\rho}} \}
\end{align}
and
\begin{align}\label{eq:maxsigmasub}
	\sigma_{\max}(\bm{L}_{[1:J]}(\bm{\omega}))\leq \|\bm{L}_{[1:J]}(\bm{\omega})\|_{\Fr}
	\leq\|\bm{L}_{\rm{stack}}^{\ma}(\bm{\omega})\|_{\Fr}	\leq  C_{\bar{\rho}}.
\end{align}

It remains to derive a lower bound of $\sigma_{\min}(\bm{L}_{\rm{stack}}(\bm{\omega}))$. To this end, we first derive a lower bound of $\sigma_{\min}(\bm{L}_{[1:J]}(\bm{\omega}))$ by lower bounding the determinant of $\bm{L}_{[1:J]}(\bm{\omega})$.
For any $(\gamma, \theta)\in[0,1)\times(-\pi/2, \pi/2)$, it can be verified that 
\[
\left (\gamma^h\cos(h\theta),\gamma^h\sin(h\theta) \right ) \underbrace{\left(\begin{matrix}1&1\\i&-i\end{matrix}\right)}_{:=\bm{C}_1} = \left ((\gamma e^{i\theta})^h,(\gamma e^{-i\theta})^h\right )
\]
and 
\[
\left (-h\gamma^h\sin(h\theta),h\gamma^h\cos(h\theta) \right ) \underbrace{\left(\begin{matrix}-i&i\\1&1\end{matrix}\right)}_{:=\bm{C}_2} =\left (h(\gamma e^{i\theta})^h,h(\gamma e^{-i\theta})^h\right ).
\]

Let $\bm{P}_1 = \diag( \bm{I}_r, \bm{C}_1, \dots, \bm{C}_1, \bm{I}_r, \bm{C}_2, \dots, \bm{C}_2)$ be a $J\times J$ block diagonal matrix consisting of two identity matrices $\bm{I}_r$ and $s$ repeated blocks of $\bm{C}_1$ and $\bm{C}_2$.
We then have $\det(\bm{P}_1)=(-2i)^{2s}=4^s$, and
\begin{align*}
	\bm{L}_{[1:J]}(\bm{\omega})\bm{P}_1 = \left(\begin{matrix}
		x_1&x_2&\cdots&x_{r+2s}&x_1&x_2&\cdots&x_{r+2s}\\
		x_1^2&x_2^2&\cdots&x_{r+2s}^2&2x_1^2&2x_2^2&\cdots&2x_{r+2s}^2\\
		\vdots&\vdots&\ddots&\vdots&\vdots&\vdots&\ddots&\vdots\\
		x_1^{J}&x_2^{J}&\cdots&x_{r+2s}^{J}&Jx_1^{J}&Jx_2^{J}&\cdots&Jx_{r+2s}^{J}
	\end{matrix}\right) := \bm{P}_2 \in \mathbb{R}^{J \times J},
\end{align*}
where $x_j = \lambda_j$ for $1\leq j\leq r$, while $x_{r+2m-1} = \gamma_m e^{i\theta_m}$ and $x_{r+2m} = \gamma_me^{-i\theta_m}$ for $ 1\leq m\leq s$, and $i$ is the imaginary unit.

We subtract the $h$th column of $\bm{P}_2$ from its  $(r+2s+h)$th column, for all $1\leq h \leq r+2s$, and obtain a matrix with the same determinant as $\bm{P}_2$ as follows,
\begin{align*}
	\bm{P}_3=\left(\begin{matrix}
		x_1&x_2&\cdots&x_{r+2s}&0&0&\cdots&0\\
		x_1^2&x_2^2&\cdots&x_{r+2s}^2&x_1^2&x_2^2&\cdots&x_{r+2s}^2\\
		\vdots&\vdots&\ddots&\vdots&\vdots&\vdots&\ddots&\vdots\\
		x_1^{J}&x_2^{J}&\cdots&x_{r+2s}^{J}&(J-1)x_1^{J}&(J-1)x_2^{J}&\cdots&(J-1)x_{r+2s}^{J}
	\end{matrix}\right).
\end{align*}
Note that $\bm{P}_3=\bm{P}_4\bm{P}_5$, where 
\[
\bm{P}_4=\left(\begin{matrix}
	1&1&\cdots&1&0&0&\cdots&0\\
	x_1&x_2&\cdots&x_{r+2s}&x_1&x_2&\cdots&x_{r+2s}\\
	x_1^2&x_2^2&\cdots&x_{r+2s}^2&2x_1^2&2x_2^2&\cdots&2x_{r+2s}^2\\
	\vdots&\vdots&\ddots&\vdots&\vdots&\vdots&\ddots&\vdots\\
	x_1^{J-1}&x_2^{J-1}&\cdots&x_{r+2s}^{J-1}&(J-1)x_1^{J-1}&(J-1)x_2^{J-1}&\cdots&(J-1)x_{r+2s}^{J-1}
\end{matrix}\right)
\] 
is a generalized Vandermonde matrix \citep{li2008special}, and $\bm{P}_5=\diag\{x_1, \dots, x_{r+2s},x_1,\dots,x_{r+2s}\}$. By \cite{li2008special}, 
$|\det(\bm{P}_4)|=\prod_{i=1}^{r+2s}x_i\prod_{1\leq k<h\leq r+2s}(x_h - x_k)^4$. 
As a result,
\begin{equation*}
	|\det(\bm{P}_2)|=|\det(\bm{P}_3)|=|\det(\bm{P}_4)||\det(\bm{P}_5)|= \prod_{h=1}^{r+2s}|x_h|^3\prod_{1\leq h<k\leq r+2s}(x_h - x_k)^4\geq \nu_{\mathrm{lower}}^{3J/2}\nu_{\mathrm{gap}}^{J(J/2-1)}.
\end{equation*}
It follows that
\begin{align}\label{eq:det2}
	|\det( \bm{L}_{[1:J]}(\bm{\omega}) )| =\frac{|\det(\bm{P}_2)|}{|\det(\bm{P}_1)|} \geq 0.25^s \nu_{\mathrm{lower}}^{3J/2}\nu_{\mathrm{gap}}^{J(J/2-1)}>0,
\end{align}
and hence $\bm{L}_{[1:J]}(\bm{\omega})$ is full-rank. Moreover, combining \eqref{eq:maxsigmasub} and \eqref{eq:det2}, we have
\begin{align}\label{eq:minsigmasub}
	\sigma_{\min}(\bm{L}_{[1:J]}(\bm{\omega})) \geq \frac{	|\det( \bm{L}_{[1:J]}(\bm{\omega}) )| }{\sigma_{\max}^{J-1}(\bm{L}_{[1:J]}(\bm{\omega}))} \geq \frac{0.25^s  \nu_{\mathrm{lower}}^{3J/2}\nu_{\mathrm{gap}}^{J(J/2-1)}}{C_{\bar{\rho}}^{J-1}}=c_{\bar{\rho}} >0.
\end{align}

Finally, similar to \eqref{eq:sigmax}, by the Courant–Fischer theorem, it can be shown that
\begin{align*}
	\sigma_{\min}(\bm{L}_{\rm{stack}}(\bm{\omega}))\geq \min\left \{1,\sigma_{\min}(\bm{L}_{\rm{stack}}^{\ma}(\bm{\omega}))\right \}\geq  \min\left \{1, \sigma_{\min}(\bm{L}_{[1:J]}(\bm{\omega}))\right \},
\end{align*}
which, together with \eqref{eq:minsigmasub}, leads to a lower bound of 	$\sigma_{\min}(\bm{L}_{\rm{stack}}(\bm{\omega}))$.
In view of the aforementioned lower bound and  the upper bound in \eqref{eq:sigmax}, the inequalities in the lemma are verified. 
Lastly, when $r$ and $s$ are bounded from above, we immediately have $C_{\bar{\rho}}\asymp 1$ and $c_{\bar{\rho}}\asymp 1$. The proof of this lemma is complete.

\subsection{Proof of Lemma \ref{lemma:devb} (Deviation bound)}
Since $\widehat{\bm{\Delta}}_h=\bm{\widehat{G}}_h-\bm{G}_h^*=\bm{\widehat{D}}_h$ for $1\leq h\leq p$, we have
\begin{equation}\label{eq:devb1}
	\frac{1}{T} \left |\sum_{t=1}^{T}\langle \bm{\varepsilon}_t, \bm{\widehat{\Delta}}\bm{x}_t \rangle \right | \leq  \frac{1}{T}  \left | \sum_{t=1}^{T}\langle \bm{\varepsilon}_t, \sum_{h=1}^{p} \bm{\widehat{D}}_h\bm{y}_{t-h} \rangle \right |  +  \frac{1}{T} \left | \sum_{t=1}^{T}\langle \bm{\varepsilon}_t, \sum_{h=p+1}^\infty \widehat{\bm{\Delta}}_h \bm{y}_{t-h} \rangle \right |,
\end{equation}
where the first term on the right-hand side is suppressed if $p=0$. Without loss of generality, we assume that $p\geq1$ in what follows.
First, it can be verified that
\begin{align}\label{eq:devb2}
	\frac{1}{T}  \left | \sum_{t=1}^{T}\langle \bm{\varepsilon}_t, \sum_{h=1}^{p} \bm{\widehat{D}}_h\bm{y}_{t-h} \rangle \right |  =  \frac{1}{T}  \left | \sum_{t=1}^{T}\langle \bm{\varepsilon}_t, \widehat{\bm{D}}_{\ar}\bm{x}_t^p \rangle \right | 
	& = 	\left |\Big \langle \frac{1}{T} \sum_{t=1}^{T}\bm{\varepsilon}_t (\bm{x}_t^{p})^{\top}, \widehat{\bm{D}}_{\ar}  \Big \rangle \right | \notag\\
	&\leq \|\widehat{\bm{d}}_{\ar}\|_1  \left \| \frac{1}{T} \sum_{t=1}^{T}\bm{\varepsilon}_t (\bm{x}_t^{p})^{\top} \right \|_{\max},
\end{align}
where $\bm{x}_t^p=(\bm{y}_{t-1}^\top, \dots, \bm{y}_{t-p}^\top)^\top$.
For the second term on the right-hand side of \eqref{eq:devb1}, since 
\begin{align*}
	\sum_{h=p+1}^\infty \widehat{\bm{\Delta}}_h \bm{y}_{t-h} &=  \left [ \widehat{\bm{G}}_{\ma} \{\bm{L}^{\ma}(\widehat{\bm{\omega}})\otimes \bm{I}_N\}^\top - \bm{G}_{\ma}^{*} \{ \bm{L}^{\ma}(\bm{\omega}^*)\otimes \bm{I}_N\}^\top  \right ] \bm{x}_{t-p}\\
	&=  \widehat{\bm{D}}_{\ma} \{ \bm{L}^{\ma}(\widehat{\bm{\omega}}) \otimes \bm{I}_N\}^\top \bm{x}_{t-p} + \bm{G}_{\ma}^{*} \left [\left \{\bm{L}^{\ma}(\widehat{\bm{\omega}}) -\bm{L}^{\ma}(\bm{\omega}^*) \right \}\otimes\bm{I}_N\right ]^\top \bm{x}_{t-p},
\end{align*}
we have
\begin{align}\label{eq:devb3}
	&\frac{1}{T} \left | \sum_{t=1}^{T}\langle \bm{\varepsilon}_t, \sum_{h=p+1}^\infty \widehat{\bm{\Delta}}_h \bm{y}_{t-h} \rangle \right | \notag\\
	& \hspace{5mm}  \leq 
	\left |\Big \langle \frac{1}{T} \sum_{t=1}^{T}\bm{\varepsilon}_t\bm{x}_{t-p}^\top \{ \bm{L}^{\ma}(\widehat{\bm{\omega}}) \otimes \bm{I}_N\}, \widehat{\bm{D}}_{\ma} \Big \rangle \right | \notag\\
	&\hspace{10mm} + 
	\left |\Big\langle \frac{1}{T}\sum_{t=1}^{T} \bm{\varepsilon}_t \bm{x}_{t-p}^\top  \left [\left \{\bm{L}^{\ma}(\widehat{\bm{\omega}}) -\bm{L}^{\ma}(\bm{\omega}^*) \right \}\otimes\bm{I}_N\right ], \bm{G}_{\ma}^{*} \Big\rangle\right | \notag\\
	& \hspace{5mm}  \leq  \|\widehat{\bm{d}}_{\ma}\|_1 \sup_{\bm{\omega}\in\bm{\Omega}} \left \| \frac{1}{T} \sum_{t=1}^{T}\bm{\varepsilon}_t \bm{x}_{t-p}^\top \{ \bm{L}^{\ma}(\bm{\omega}) \otimes \bm{I}_N\}\right \|_{\max} \notag\\
	&\hspace{10mm} +  \|\bm{g}_{\ma}^{*}\|_1 \sup_{\bm{\phi}\in \bm{\Phi}_1} \left \| \frac{1}{T}\sum_{t=1}^{T} \bm{\varepsilon}_t \bm{x}_{t-p}^\top  \left [\left \{\bm{L}^{\ma}(\bm{\omega}^*+\bm{\phi}) -\bm{L}^{\ma}(\bm{\omega}^*) \right \}\otimes\bm{I}_N\right ] \right \|_{\max}, 
\end{align}
where we use the property that $\bm{\widehat{\phi}}\in\bm{\Phi}_1$.

To prove this lemma, it suffices to establish the following intermediate results:
\begin{itemize}
	\item [(i)] With probability at least $1-4 e^{-2\log(Np)}$,
	\begin{equation}\label{eq:lemdvb0}
		\left \| \frac{1}{T} \sum_{t=1}^{T}\bm{\varepsilon}_t (\bm{x}_t^{p})^{\top} \right \|_{\max} \leq C_1 \sqrt{\frac{\kappa_2\lambda_{\max}(\bm{\Sigma}_{\varepsilon})\log(Np)}{T}},
	\end{equation}
	where $C_1>0$ is an absolute constant.
	
	\item [(ii)] With probability at least  $1- 5  e^{-4\log N}$,
	\begin{equation}\label{eq:lemdvb1}
		\sup_{\bm{\omega}\in\bm{\Omega}} \left \| \frac{1}{T}   \sum_{t=1}^{T}\bm{\varepsilon}_t \bm{x}_{t-p}^\top \{ \bm{L}^{\ma}(\bm{\omega}) \otimes \bm{I}_N\}\right \|_{\max} \leq C_2 \sqrt{\frac{\kappa_2\lambda_{\max}(\bm{\Sigma}_{\varepsilon})\log N}{T}}
	\end{equation}
	and
	\begin{equation}\label{eq:lemdvb2}
		\sup_{\bm{\phi}\in\bm{\Phi}_1} \frac{ \left \| \sum_{t=1}^{T} \bm{\varepsilon}_t \bm{x}_{t-p}^\top  \left [\left \{\bm{L}^{\ma}(\bm{\omega}^*+\bm{\phi}) -\bm{L}^{\ma}(\bm{\omega}^*) \right \}\otimes\bm{I}_N\right ] \right \|_{\max} }{T \|\bm{\phi}\|_2 }  \leq C_3 \sqrt{\frac{\kappa_2\lambda_{\max}(\bm{\Sigma}_{\varepsilon})\log N}{T}},
	\end{equation}
	where $C_2, C_3>0$ are absolute constants.
\end{itemize}

\noindent\textbf{Proof of \eqref{eq:lemdvb0}:}  Note that 
\[
\left \|  \frac{1}{T}  \sum_{t=1}^{T}\bm{\varepsilon}_t (\bm{x}_t^{p})^{\top} \right \|_{\max} =  \max_{1\leq i, j\leq N,1\leq k\leq p} \left | \frac{1}{T} \sum_{t=1}^{T}\varepsilon_{i,t} y_{j,t-k} \right |. 
\]
We begin by considering any fixed triplet $(i, j, k)$ such that $1\leq i,j\leq N$ and $1\leq k\leq p$.
Let $\bm{\iota}_i\in\mathbb{R}^N$ be the $i$th unit vector, which consists of all zeros except that the $i$th entry is one. Applying Lemma \ref{lemma:hansonw} with $T_0=-k$, $T_1=T$, $\bm{w}_t=\bm{y}_t$, and $\bm{M}=\bm{\iota}_j^\top$, together with Lemma \ref{lemma:Wcov}(i), we have
\begin{equation*}
	\mathbb{P}\left \{\left |\frac{1}{T}\sum_{t=1}^{T} y_{j,t-k}^2 - \mathbb{E}(y_{j,t-k}^2)\right | \geq \eta \sigma^2 \lambda_{\max}(\bm{\Sigma}_\varepsilon)\mu_{\max}(\bm{\Psi}_*)\right \} \leq 2e^{-c_{\HW}\min(\eta, \eta^2)T},
\end{equation*}
for any $\eta>0$. In addition, by  Lemma \ref{lemma:Wcov}(i), $\mathbb{E}(y_{j,t-k}^2)=\bm{\iota}_j^\top\mathbb{E}(\bm{y}_{t-k}\bm{y}_{t-k}^\top) \bm{\iota}_j\leq 	\lambda_{\max}(\bm{\Sigma}_\varepsilon)\mu_{\max}(\bm{\Psi}_*) = \kappa_2$. 
Thus, by taking $\eta=(2\sigma^2)^{-1}$, we have
\begin{equation}\label{eq:reunivar}
	\mathbb{P}\left (\frac{1}{T}\sum_{t=1}^{T} y_{j,t-k}^2 \geq  1.5\kappa_2 \right ) \leq 2e^{-c T},
\end{equation}
where $c=c_{\HW}\min\{(2\sigma^2)^{-1},(2\sigma^2)^{-2}\}$. Then  we can show that for any $K>0$,
\begin{align}\label{eq:dvb2}
	&\mathbb{P}\left (\left | \frac{1}{T} \sum_{t=1}^{T}\varepsilon_{i,t} y_{j,t-k} \right |  \geq K \right ) \notag\\
	& \hspace{5mm}\leq 	\mathbb{P}\left ( \left | \sum_{t=1}^{T}\varepsilon_{i,t} y_{j,t-k}\right |\geq KT, \; \sum_{t=1}^{T}y_{j,t-k}^2 \leq 1.5\kappa_2 T\right ) + \mathbb{P}\left (\frac{1}{T} \sum_{t=1}^{T}y_{j,t-k}^2 \geq 1.5\kappa_2 \right )\notag\\
	& \hspace{5mm}\leq 2e^{-K^2T/\{3\sigma^2 \kappa_2 \lambda_{\max}(\bm{\Sigma}_{\varepsilon})\}} + 2e^{-c T},
\end{align}
where we applied  Lemma \ref{lemma:martgl}(i) with $a=KT$ and $b=1.5\kappa_2 T$ in the last inequality. 
As a result, by applying \eqref{eq:dvb2} with
\[
K=\sqrt{\frac{6\sigma^2\kappa_2\lambda_{\max}(\bm{\Sigma}_{\varepsilon})\log(N^2 p)}{T}},
\]
if $T\geq 2c^{-1} \log(N^2 p)$, then it can be verified that
\begin{align}\label{eq:dvb3}
	&\mathbb{P}\left \{\max_{1\leq i, j\leq N,1\leq k\leq p} \left | \frac{1}{T} \sum_{t=1}^{T}\varepsilon_{i,t} y_{j,t-k} \right |  \geq \sqrt{\frac{6\sigma^2\kappa_2\lambda_{\max}(\bm{\Sigma}_{\varepsilon})\log(N^2 p)}{T}} \right \} \notag\\
	&\hspace{5mm}\leq N^2p  \max_{1\leq i, j\leq N,1\leq k\leq p} \mathbb{P}\left \{ \left | \frac{1}{T} \sum_{t=1}^{T}\varepsilon_{i,t} y_{j,t-k} \right |  \geq  \sqrt{\frac{6\sigma^2\kappa_2\lambda_{\max}(\bm{\Sigma}_{\varepsilon})\log(N^2 p)}{T}} \right \}\notag\\
	&\hspace{5mm} \leq 2e^{-\log(N^2p)}+ 2e^{-c T + \log(N^2p)}
	\leq 4 e^{-\log(N^2p)}.
\end{align}
Hence, \eqref{eq:lemdvb0} proved.

\smallskip
\noindent\textbf{Proof of \eqref{eq:lemdvb1}:}
Note that by Assumption \ref{assum:statn}(i), for all $\bm{\omega}\in\bm{\Omega}$, we have $0<|\ell_{h,k}(\bm{\omega})|\leq \bar{\rho}^{h-p}$ if $h\geq p+1$ and $p+1\leq k\leq d$. Then we can show that
\begin{align}\label{eq:dvb1}
	& \sup_{\bm{\omega}\in\bm{\Omega}} \left \| \frac{1}{T}   \sum_{t=1}^{T}\bm{\varepsilon}_t \bm{x}_{t-p}^\top \{ \bm{L}^{\ma}(\bm{\omega}) \otimes \bm{I}_N\}\right \|_{\max} \notag \\
	&\hspace{5mm}= \sup_{\bm{\omega}\in\bm{\Omega}}\max_{1\leq i, j\leq N,p+1\leq k\leq d}\left | \frac{1}{T} \sum_{t=1}^{T}\varepsilon_{i,t} \sum_{h=p+1}^\infty\ell_{h,k}(\bm{\omega})y_{j,t-h} \right |\notag \\
	&\hspace{5mm}= \sup_{\bm{\omega}\in\bm{\Omega}}\max_{1\leq i, j\leq N,p+1\leq k\leq d}\left | \sum_{h=p+1}^\infty\ell_{h,k}(\bm{\omega}) \left (\frac{1}{T}  \sum_{t=1}^{T} \varepsilon_{i,t}y_{j,t-h} \right ) \right |\notag\\
	&\hspace{5mm}\leq  \sum_{h=p+1}^\infty \sup_{\bm{\omega}\in\bm{\Omega}}\max_{p+1\leq k\leq d}|\ell_{h,k}(\bm{\omega})| \max_{1\leq i, j\leq N}  \left | \frac{1}{T} \sum_{t=1}^{T}\varepsilon_{i,t} y_{j,t-h} \right |\notag \\
	&\hspace{5mm}\leq 
	\sum_{h=p+1}^\infty \bar{\rho}^{h-p} \max_{1\leq i, j\leq N}  \left | \frac{1}{T} \sum_{t=1}^{T}\varepsilon_{i,t} y_{j,t-h} \right |. 
\end{align}

To establish an upper bound for the weighted infinite sum in \eqref{eq:dvb1}, we first consider a fixed  triplet $(i, j, h)$ such that  $1\leq i,j\leq N$ and $h\geq p+1$. By the same arguments as  those for \eqref{eq:reunivar} except that we take $\eta=h-p$, we can show that
\begin{equation}\label{eq:reunivarj}
	\mathbb{P}\left \{\frac{1}{T}\sum_{t=1}^{T} y_{j,t-h}^2 \geq  \{(h-p)\sigma^2+1\}\kappa_2 \right \} \leq 2e^{-c(h-p) T}.
\end{equation}
Similar to \eqref{eq:dvb2}, for any $K>0$, it follows that
\begin{align*}
	&\mathbb{P}\left (\left | \frac{1}{T} \sum_{t=1}^{T}\varepsilon_{i,t} y_{j,t-h} \right |  \geq K \right ) \notag\\
	& \hspace{5mm}\leq 	\mathbb{P}\left [\left |\sum_{t=1}^{T}\varepsilon_{i,t} y_{j,t-h} \right |\geq KT, \; \sum_{t=1}^{T}y_{j,t-h}^2 \leq \{(h-p)\sigma^2+1\}\kappa_2 T\right ] + \mathbb{P}\left [\frac{1}{T} \sum_{t=1}^{T}y_{j,t-h}^2 \geq \{(h-p)\sigma^2+1\}\kappa_2 \right ]\notag\\
	& \hspace{5mm}\leq 2 e^{-K^2T/[2\{(h-p)\sigma^2+1\}\sigma^2 \kappa_2 \lambda_{\max}(\bm{\Sigma}_{\varepsilon})]} + 2e^{-c_{\HW} (h-p)T}.
\end{align*}
Applying the above result with 
\[
K=\sqrt{\frac{4\{(h-p)\sigma^2+1\}(h-p+1)\sigma^2\kappa_2\lambda_{\max}(\bm{\Sigma}_{\varepsilon})\log (N^2)}{T}},
\]
if $T\geq 4c^{-1} \log (N^2)$, similar to \eqref{eq:dvb3}, for any fixed  $h\geq p+1$, we have
\begin{align*}
	&\mathbb{P}\left [\max_{1\leq i, j\leq N} \left | \frac{1}{T} \sum_{t=1}^{T}\varepsilon_{i,t} y_{j,t-h} \right |  \geq \sqrt{\frac{4\{(h-p)\sigma^2+1\}(h-p+1)\sigma^2\kappa_2\lambda_{\max}(\bm{\Sigma}_{\varepsilon})\log (N^2)}{T}} \right ]\notag\\
	&\hspace{5mm}\leq N^2  \max_{1\leq i, j\leq N} \mathbb{P}\left [ \left | \frac{1}{T} \sum_{t=1}^{T}\varepsilon_{i,t} y_{j,t-h} \right |  \geq \sqrt{\frac{4\{(h-p)\sigma^2+1\}(h-p+1)\sigma^2\kappa_2\lambda_{\max}(\bm{\Sigma}_{\varepsilon})\log (N^2)}{T}} \right ]\notag\\
	&\hspace{5mm} \leq 2e^{-2(h-p+1)\log(N^2)+\log (N^2)}+ 2e^{-c_{\HW} (h-p)T + \log (N^2)}
	\leq 4 e^{-2(h-p)\log (N^2)}.
\end{align*}
Note that $\{(h-p)\sigma^2+1\}(h-p+1)\sigma^2\leq \{2(h-p)\sigma^2+1\}^2$. Thus,
\[
\mathbb{P}\left [\max_{1\leq i, j\leq N} \left | \frac{1}{T} \sum_{t=1}^{T}\varepsilon_{i,t} y_{j,t-h} \right |  \geq \{2(h-p)\sigma^2+1\} \sqrt{\frac{4\kappa_2\lambda_{\max}(\bm{\Sigma}_{\varepsilon})\log (N^2)}{T}} \right ]  \leq  4 e^{-2(h-p)\log (N^2)},
\]
which  can be further strengthened to a union bound for all $h\geq p+1$ as follows:
\begin{align}\label{eq:dvb4}
	&\mathbb{P}\left [\forall h\geq p+1:  \max_{1\leq i, j\leq N} \left | \frac{1}{T} \sum_{t=1}^{T}\varepsilon_{i,t} y_{j,t-h} \right |  \geq  \{2(h-p)\sigma^2+1\} \sqrt{\frac{4\kappa_2\lambda_{\max}(\bm{\Sigma}_{\varepsilon})\log (N^2)}{T}}  \right ]\notag\\
	&\hspace{5mm}  \leq \sum_{h=p+1}^{\infty} 2 e^{-2(h-p)\log (N^2)} \leq 5 e^{-4\log N},
\end{align}
where the last inequality holds as long as $N\geq 2$.
Combining \eqref{eq:dvb1} with \eqref{eq:dvb4}, we have
\begin{align*}
	\sup_{\bm{\omega}\in\bm{\Omega}} \left \| \frac{1}{T} \sum_{t=1}^{T}\bm{\varepsilon}_t \bm{x}_{t-p}^\top \{ \bm{L}^{\ma}(\bm{\omega}) \otimes \bm{I}_N\}\right \|_{\max} 
	&\leq \sum_{h=p+1}^{\infty}  \bar{\rho}^{h-p} \{2(h-p)\sigma^2+1\} \sqrt{\frac{4\kappa_2\lambda_{\max}(\bm{\Sigma}_{\varepsilon})\log (N^2)}{T}} \\
	& \lesssim \sqrt{\frac{\kappa_2\lambda_{\max}(\bm{\Sigma}_{\varepsilon})\log N}{T}},
\end{align*}
with probability at least $1- 5 e^{-4\log N}$. Thus, \eqref{eq:lemdvb1} is proved.

\smallskip
\noindent\textbf{Proof of \eqref{eq:lemdvb2}:}
For any $h\geq1$ and $1\leq k\leq r$, by the Taylor expansion, we have
\[
\ell_{h}^{I}(\lambda_k)-\ell_{h}^{I}(\lambda_k^*)=\nabla\ell_{h}^{I}(\lambda_k^*) (\lambda_k-\lambda_k^*) +\frac{1}{2}\nabla^2\ell_{h}^{I}(\widetilde{\lambda}_k) (\lambda_k-\lambda_k^*)^2, 
\] 
where $\widetilde{\lambda}_k$ lies between $\lambda_k^*$ and $\lambda_k$. Then, by Lemma \ref{cor1}, for any $\bm{\omega}=\bm{\omega}^*+\bm{\phi}$ with $\bm{\phi}\in \bm{\Phi}_1$, 
\[
\max_{1\leq k\leq r}|\ell_{h}^{I}(\lambda_k)-\ell_{h}^{I}(\lambda_k^*)|\leq C_{\ell}\bar{\rho}^h \|\bm{\phi}\|_2 +\frac{1}{2}C_{\ell}\bar{\rho}^h \|\bm{\phi}\|_2^2 \leq  2C_{\ell}\bar{\rho}^h \|\bm{\phi}\|_2, \quad \forall h\geq1,
\]
where we used the fact that $\|\bm{\phi}\|_2\leq c_{\bm{\omega}}\leq 2$ for all $\bm{\phi}\in \bm{\Phi}_1$. By a similar argument, for any $\bm{\omega}=\bm{\omega}^*+\bm{\phi}$ with $\bm{\phi}\in \bm{\Phi}_1$, we can show that
\[
\max_{1\leq k\leq s, \iota=1,2}|\ell_{h}^{II,\iota}(\bm{\eta}_k)-\ell_{h}^{II,\iota}(\bm{\eta}_k^*)|\leq  2C_{\ell}\bar{\rho}^h \|\bm{\phi}\|_2, \quad \forall h\geq1.
\]
As a result,
\begin{equation*}
	\sup_{\bm{\phi}\in\bm{\Phi}_1}\max_{p+1\leq k\leq d} \frac{|\ell_{h,k}(\bm{\omega}^*+\bm{\phi})-\ell_{h,k}(\bm{\omega}^*)|}{ \|\bm{\phi}\|_2} \leq 2C_{\ell}\bar{\rho}^{h-p}, \quad \forall h\geq p+1.
\end{equation*}
Then it follows that 
\begin{align*}
	&\sup_{\bm{\phi}\in\bm{\Phi}_1} \frac{ \left \| \sum_{t=1}^{T} \bm{\varepsilon}_t \bm{x}_{t-p}^\top  \left [\left \{\bm{L}^{\ma}(\bm{\omega}^*+\bm{\phi}) -\bm{L}^{\ma}(\bm{\omega}^*) \right \}\otimes\bm{I}_N\right ] \right \|_{\max} }{T \|\bm{\phi}\|_2 }  \\
	&\hspace{5mm}
	= \sup_{\bm{\phi}\in\bm{\Phi}_1} \max_{1\leq i, j\leq N,p+1\leq k\leq d}\frac{\left | \sum_{t=1}^{T}\varepsilon_{i,t} \sum_{h=p+1}^\infty\{\ell_{h,k}(\bm{\omega}^*+\bm{\phi})-\ell_{h,k}(\bm{\omega}^*)\}y_{j,t-h} \right | }{T \|\bm{\phi}\|_2 }\\
	&\hspace{5mm}\leq 
	\sum_{h=p+1}^\infty \sup_{\bm{\phi}\in\bm{\Phi}_1}	\max_{p+1\leq k\leq d} \frac{|\ell_{h,k}(\bm{\omega}^*+\bm{\phi})-\ell_{h,k}(\bm{\omega}^*)|}{ \|\bm{\phi}\|_2}  \max_{1\leq i, j\leq N}\left | \frac{1}{T} \sum_{t=1}^{T}\varepsilon_{i,t} y_{j,t-h} \right | \\
	& \hspace{5mm} \leq 2 C_{\ell}\sum_{h=p+1}^\infty \bar{\rho}^{h-p}  \max_{1\leq i, j\leq N}\left | \frac{1}{T} \sum_{t=1}^{T}\varepsilon_{i,t} y_{j,t-h} \right |,
\end{align*}
which is similar to \eqref{eq:dvb1}. Similar to the method for \eqref{eq:lemdvb1},  we accomplish the proof of \eqref{eq:lemdvb2} by combining the above result with \eqref{eq:dvb4}.

\smallskip

Lastly, in view of \eqref{eq:devb1}--\eqref{eq:lemdvb2}, and the fact that $\|\widehat{\bm{d}}_{\ar}\|_1+\|\widehat{\bm{d}}_{\ma}\|_1=\|\widehat{\bm{d}}\|_1$, we accomplish the proof of this lemma by taking $C_{\dev}=\max_{1\leq i\leq 3}C_i>0$ and combining the tail probabilities for \eqref{eq:lemdvb0}--\eqref{eq:lemdvb2}.

\subsection{Proof of Lemma \ref{lemma:rsclasso} (Restricted strong convexity)}\label{asec:rsc}
By  the proof of Proposition \ref{prop:perturb}, we can write 
\[
\bm{\Delta}=\bm{D}\{  \bm{L}(\bm{\omega}^*) \otimes \bm{I}_N \}^\top + \bm{M}(\bm{\phi}) \{ \bm{P}(\bm{\omega}^*)\otimes \bm{I}_N\}^\top +(\bm{0}_{N\times Np}, \bm{R}),
\]  
where the remainder term $\bm{R}$ depends on both $\bm{\phi}$ and $\bm{D}$; see \eqref{eq:stackH} and \eqref{eq:Delta} for details.

Let  $\bm{Q}(\bm{\phi})=\left ( q_{h,j}(\bm{\phi}) \right )$ and $\bm{S}(\bm{\phi})=\left ( s_{h,j}(\bm{\phi}) \right )$ be $\infty\times (r+2s)$ matrices whose entries are
\begin{align*}
	q_{h,j}(\bm{\phi})&=\nabla\ell_{h}^{I}(\lambda_j^*)  (\lambda_j-\lambda_j^*)+\frac{1}{2}\nabla^2\ell_{h}^{I}(\widetilde{\lambda}_j) (\lambda_j-\lambda_j^*)^2,\\
	s_{h,j}(\bm{\phi})&= \frac{1}{2}\nabla^2\ell_{h}^{I}(\widetilde{\lambda}_j) (\lambda_j-\lambda_j^*)^2,\\
	q_{h,r+2(m-1)+\iota}(\bm{\phi})&=(\bm{\eta}_m-\bm{\eta}_m^*)^\top \nabla \ell_{h}^{II,\iota}(\bm{\eta}_m^*) +\frac{1}{2}(\bm{\eta}_m-\bm{\eta}_m^*)^{\prime}\nabla^2 \ell_{h}^{II,\iota}(\widetilde{\bm{\eta}}_m)(\bm{\eta}_m-\bm{\eta}_m^*),\\
	s_{h,r+2(m-1)+\iota}(\bm{\phi})&=  \frac{1}{2}(\bm{\eta}_m-\bm{\eta}_m^*)^{\prime}\nabla^2 \ell_{h}^{II,\iota}(\widetilde{\bm{\eta}}_m)(\bm{\eta}_m-\bm{\eta}_m^*),
\end{align*}
where $h\geq1$, $1\leq j\leq r$, $1\leq m\leq s$, $\iota=1,2$, and  $\widetilde{\lambda}_j$'s and $\widetilde{\bm{\eta}}_m$'s are defined as in \eqref{eq:delta}; that is, $\widetilde{\lambda}_j$ lies between $\lambda_j^*$ and $\lambda_j$ for $1 \leq j \leq r$, and $\widetilde{\bm{\eta}}_m$ lies between $\bm{\eta}^*_m$ and $\bm{\eta}_m$ for $1 \leq m \leq s$, and we suppress their dependence on $h$ for notational simplicity. 
Then, by the definition of $\bm{R}_h$'s in \eqref{eq:Rhs}, we can write
\[
\bm{R}=\bm{D}_{\ma}\{\bm{Q}(\bm{\phi}) \otimes\bm{I}_N\}^\top+ \bm{G}_{\ma}^{*}\{ \bm{S}(\bm{\phi})\otimes\bm{I}_N\}^\top.
\]

Denote 
\begin{align}\label{eq:ts-z}
	\begin{split}
		\bm{Z}&=(\bm{z}_1,\dots, \bm{z}_T), \quad \bm{z}_t =\left \{\bm{L}(\bm{\omega}^*)\otimes \bm{I}_N \right \}^\top\bm{x}_{t},\\
		\bm{V}&=(\bm{v}_1,\dots, \bm{v}_T), \quad \bm{v}_t =\left \{\bm{P}(\bm{\omega}^*)\otimes \bm{I}_N \right \}^\top\bm{x}_{t},\\
		\bm{H}(\bm{\phi})&=(\bm{h}_{1}(\bm{\phi}),\dots, \bm{h}_{T}(\bm{\phi})), \quad \bm{h}_t(\bm{\phi})=\left \{\bm{Q}(\bm{\phi})\otimes \bm{I}_N \right \}^\top\bm{x}_{t-p},\\ 
		\bm{B}(\bm{\phi})&=(\bm{b}_{1}(\bm{\phi}),\dots, \bm{b}_{T}(\bm{\phi})), \quad \bm{b}_{t}(\bm{\phi})=\left \{\bm{S}(\bm{\phi})\otimes \bm{I}_N \right \}^\top\bm{x}_{t-p},
	\end{split}
\end{align}
and $\bm{X}=(\bm{x}_1, \dots, \bm{x}_T)$. Combining  all results above, we have
\begin{align*}
	\bm{\Delta}\bm{x}_t&=\left [\bm{D}\{\bm{L}(\bm{\omega}^*)\otimes\bm{I}_N\}^\top+\bm{M}(\bm{\phi})\{\bm{P}(\bm{\omega}^*)\otimes\bm{I}_N\}^\top \right ]\bm{x}_t\\
	&\hspace{5mm} +\left [\bm{D}_{\ma}\{ \bm{Q}(\bm{\phi})\otimes\bm{I}_N\}^\top 
	+\bm{G}_{\ma}^{*}\{ \bm{S}(\bm{\phi})\otimes\bm{I}_N\}^\top\right ]\bm{x}_{t-p}\\
	&=\bm{D} \bm{z}_t + \bm{M}(\bm{\phi})\bm{v}_t +\bm{D}_{\ma}\bm{h}_{t}(\bm{\phi}) + \bm{G}_{\ma}^{*} \bm{b}_{t}(\bm{\phi}),
\end{align*}
or equivalently,
\[
\bm{\Delta}\bm{X}=\bm{D} \bm{Z} + \bm{M}(\bm{\phi}) \bm{V} +\bm{D}_{\ma}\bm{H}(\bm{\phi})+\bm{G}_{\ma}^{*}\bm{B}(\bm{\phi}).
\]
By the triangle inequality and the fact that $(|x| + |y|) / 2 \leq \sqrt{x^2 + y^2}$ for any $x,y\in\mathbb{R}$, we have
\begin{equation}\label{eq:triangle}
	\|\bm{\Delta}\bm{X}\|_{\Fr} \geq 0.5\|\bm{D} \bm{Z}\|_{\Fr} +0.5 \|\bm{M}(\bm{\phi}) \bm{V}\|_{\Fr}-\|\bm{D}_{\ma}\bm{H}(\bm{\phi})\|_{\Fr}-\|\bm{G}_{\ma}^{*}\bm{B}(\bm{\phi})\|_{\Fr}.
\end{equation}

We need to lower bound the first term and upper bound the other three terms on the right-hand side of \eqref{eq:triangle}. We state the following intermediate results for deriving these bounds and relegate their proofs to the end of this subsection:
\begin{itemize}
	\item [(i)] If 
	$T\geq 4c_1^{-1} (r+2s)^2(\kappa_2/\widetilde{\kappa}_1)^2 \log(Nd)$, with probability at least $1-2e^{-0.5c_1\widetilde{\kappa}_1^2 T/\{(r+2s)^2\kappa_2^2\}}$,
	\[
	\frac{1}{\sqrt{T}}\|\bm{D} \bm{Z}\|_{\Fr} \geq\frac{\sqrt{\widetilde{\kappa}_1} }{2} \|\bm{d}\|_{2} - \sqrt{\frac{ (r+2s)^2\kappa_2^2 \log(Nd)}{c_1 \widetilde{\kappa}_1  T} } \|\bm{d}\|_1, \quad \forall \bm{d}\in \mathbb{R}^{N^2 d},
	\]
	where $c_1>0$ is an absolute constant, and $\bm{d}=\vect(\bm{D})$.
	\item [(ii)] If
	$T\geq 2  c_2^{-1} (r+2s)^3 (\kappa_2/\widetilde{\kappa}_1)^2
	\max\left\{\log (12 u_\phi^3/l_\phi^3)+0.5\log(3 \widetilde{\kappa}_2/\widetilde{\kappa}_1), \log (6 u_\phi/l_\phi) \right\}$, with probability at least $1- 2e^{-0.5 c_2\widetilde{\kappa}_1^2 T/\{(r+2s)^2\kappa_2^2\}}$,
	\[
	\frac{ \widetilde{\kappa}_1 l_\phi^2}{8 u_\phi^4}
	\leq 
	\inf_{\bm{\phi}\in\bm{\Phi}} \frac{\|\bm{M}(\bm{\phi}) \bm{V}\|_{\Fr}^2}{T\| \bm{\phi}\|_2^2}
	\leq 
	\sup_{\bm{\phi}\in\bm{\Phi}} \frac{\|\bm{M}(\bm{\phi}) \bm{V}\|_{\Fr}^2}{T\| \bm{\phi}\|_2^2} 
	\leq
	\frac{6\widetilde{\kappa}_2 u_\phi^2}{l_\phi^4}, 
	\] 
	where $c_2>0$ is an absolute constant, $l_\phi=(\sqrt{2}\overline{\alpha}_\ma)^{-1} \min_{1\leq k\leq s}\gamma_{k}^*$, and $u_\phi=\underline{\alpha}_\ma^{-1}$.
	\item [(iii)]  If  $T \geq 4c_{\HW}^{-1}\log\{N(r+2s)\}$, then with probability at least $ 1-4e^ {-0.5c_{\HW} T}$, 
	\begin{equation*}
		\sup_{\bm{\phi}\in \bm{\Phi}_1} \frac{ \|\bm{D}_{\ma} \bm{H}(\bm{\phi})\|_{\Fr}^2 }{T \|\bm{\phi}\|_2^2} \leq C_4 (r+2s) \widetilde{\kappa}_2 \left [ \|\bm{d}_{\ma}\|_{2}^2+ \frac{ 4 \log\{N(r+2s)\}}{c_{\HW} T} \|\bm{d}_{\ma}\|_1^2 \right ], \quad \forall \bm{d}_{\ma}\in \mathbb{R}^{N^2 (r+2s)},
	\end{equation*}
	where  $c_{\HW}>0$ is defined as in Lemma \ref{alem:rsc3aux}, and $C_4>0$ is an absolute constant.
	\item [(iv)] If  $T\geq 2c_{\HW}^{-1}\log N$, then with probability at least  $1-4 e^{-0.5c_{\HW} T}$,
	\[
	\sup_{\bm{\phi}\in \bm{\Phi}_1}	\frac{\|\bm{G}_{\ma}^{*} \bm{B}(\bm{\phi})\|_{\Fr}^2}{T \|\bm{\phi}\|_2^4} \leq  C_4 \overline{\alpha}_\ma^2  (r+2s)^2 \widetilde{\kappa}_2.
	\] 
\end{itemize}

Now we prove this lemma based on the above results. First note that $\bm{\Delta}=\bm{\Delta}(\bm{\phi},\bm{d})$ is linear in $\bm{d}$ for any fixed $\bm{\phi}$. That is, for any $\alpha\neq 0$, it holds 
\[
\alpha\bm{\Delta}(\bm{\phi},\bm{d})=(\alpha\bm{D}+\alpha\bm{G}^*)\{\bm{L}(\bm{\phi}+\bm{\omega}^*)\otimes\bm{I}_N\}^\top-\alpha\bm{G}^*\{\bm{L}(\bm{\omega}^*)\otimes\bm{I}_N\}^\top=\bm{\Delta}(\bm{\phi},\alpha\bm{d}),
\] 
where we suppress the dependence of $\bm{\Delta}$ on $\bm{\omega}^*$ and $\bm{g}^*$ (or $\alpha \bm{g}^*$) since they are fixed. As a result, it suffices  to show that the conclusion stated in this lemma holds uniformly over the intersection of $\bm{\Upsilon}$ and $\pazocal{S}(\delta)$ with high probability, where $\pazocal{S}(\delta)= \{\bm{\Delta} \in \mathbb{R}^{N\times \infty} \mid\|\bm{\Delta}\|_{\Fr}=\delta\}$ is a sphere, for some radius $\delta>0$ such that $\bm{\Upsilon}\cap\pazocal{S}(\delta)$ is nonempty. The reason is that the same conclusion will remain true if we multiply $\bm{\Delta}$ by any $\alpha\neq 0$.

We restrict our attention to $\bm{\Delta}=\bm{\Delta}(\bm{\phi},\bm{d})\in\bm{\Upsilon}\cap\pazocal{S}(\delta)$ with the  radius $\delta\in(0, c_{\Delta} c_{\bm{\omega}})$, where $c_{\Delta}>0$ is defined as in \eqref{eq:prop2} in the proof of Proposition \ref{prop:perturb}. The specific $\delta$ will be chosen later.  
Note that by \eqref{eq:prop2},  for a sufficiently small $\delta$, if $\|\bm{\Delta} \|_{\Fr}= \delta$, then
\begin{equation}\label{eq:rsc-norm}
	\delta C_{\Delta}^{-1} \leq \|\bm{d}\|_{2} \leq \delta c_{\Delta}^{-1} 
	\quad\text{and} \quad
	\delta C_{\Delta}^{-1} \overline{\alpha}_\ma^{-1} \leq \| \bm{\phi}\|_2 \leq \delta c_{\Delta}^{-1} \underline{\alpha}_\ma^{-1} \leq c_{\bm{\omega}}.
\end{equation}
The second inequality in \eqref{eq:rsc-norm} indicates that $\bm{\Upsilon}\cap\pazocal{S}(\delta)\neq \emptyset$. 


Note that $0<\kappa_2\leq  \widetilde{\kappa}_2$. Combining  the high probability events in claims (i)--(iv) with \eqref{eq:triangle} and \eqref{eq:rsc-norm}, we have the following result that holds uniformly for all $\bm{\Delta}=\bm{\Delta}(\bm{\phi},\bm{d})\in\bm{\Upsilon}\cap\pazocal{S}(\delta)$: 
\begin{align*}
	\frac{\|\bm{\Delta}\bm{X}\|_{\Fr}}{\sqrt{T}}
	&\geq \frac{1}{2}\left \{ \frac{\sqrt{\widetilde{\kappa}_1} }{2} \|\bm{d}\|_{2} - \sqrt{\frac{ (r+2s)^2\widetilde{\kappa}_2^2 \log(Nd)}{c_1 \widetilde{\kappa}_1  T} } \|\bm{d}\|_1 + \sqrt{\frac{ \widetilde{\kappa}_1 l_\phi^2}{8 u_\phi^4} } \| \bm{\phi}\|_2 \right \} \\
	&\hspace{5mm} -  \sqrt{ C_4 (r+2s) \widetilde{\kappa}_2 } \left [ \|\bm{d}_{\ma}\|_{2}+ \sqrt{\frac{ 4 \log\{N(r+2s)\}}{c_{\HW} T} } \|\bm{d}_{\ma}\|_1 +  \sqrt{\overline{\alpha}_\ma^2  (r+2s)} \|\bm{\phi}\|_2 \right ] \|\bm{\phi}\|_2
	\\
	&\geq \frac{C_{\Delta}^{-1}  \left(2 +\overline{\alpha}_\ma^{-1}\sqrt{2l_\phi^2/u_\phi^4}\right) }{8} \sqrt{\widetilde{\kappa}_1} \cdot \delta  -  c_{\Delta}^{-2} \sqrt{C_4 \left \{ 1 + (\overline{\alpha}_\ma /\underline{\alpha}_\ma)^2 (r+2s) \right \} (r+2s) \widetilde{\kappa}_2 } \cdot \delta^2\\
	&\hspace{5mm} 
	-  \left (\sqrt{\frac{ (r+2s) \widetilde{\kappa}_2}{c \widetilde{\kappa}_1} } +   \sqrt{\frac{ 4  C_4   C_{\Delta}^{-2} }{c_{\HW}} } \cdot \delta \right )\sqrt{\frac{(r+2s)\widetilde{\kappa}_2  \log(Nd)}{T}}\|\bm{d}\|_1,
\end{align*}
where we used the fact that $\sqrt{x^2+y^2}\leq |x|+|y|$ in the first inequality. Since $\overline{\alpha}_\ma^{-1}\sqrt{2l_\phi^2/u_\phi^4} = (\underline{\alpha}_\ma/\overline{\alpha}_\ma)^2 \min_{1\leq k\leq s}\gamma_{k}^* \leq \bar{\rho}<1$, by choosing
\[
0<\delta \leq \min \left [ \frac{3 C_{\Delta}^{-1} \sqrt{\widetilde{\kappa}_1 / \widetilde{\kappa}_2} }{16    c_{\Delta}^{-2}  \sqrt{C_4\left \{ 1 + (\overline{\alpha}_\ma/\underline{\alpha}_\ma)^2  (r+2s) \right \}  }}, \; \sqrt{\frac{c_{\HW} (r+2s) \widetilde{\kappa}_2 /  \widetilde{\kappa}_1 }{16 C_4 C_{\Delta}^{-2} c }},\; c_{\Delta}\underline{\alpha}_\ma c_{\bm{\omega}}\right ]
\]
in the above inequality, then for all $\bm{\Delta}\in\bm{\Upsilon}\cap\pazocal{S}(\delta)$ it holds uniformly that
\begin{equation}\label{eq:rsc1}
	\frac{1}{\sqrt{T}}\|\bm{\Delta}\bm{X}\|_{\Fr} \geq \frac{  3 \sqrt{\widetilde{\kappa}_1} }{16 C_{\Delta}}  \cdot  \|\bm{\Delta}\|_{\Fr} -  \sqrt{ \frac{ (r+2s)^2 \widetilde{\kappa}_2^2 \log(Nd)}{ c \widetilde{\kappa}_1 T} } \cdot \|\bm{d}\|_1.
\end{equation}
As mentioned earlier, for any $\alpha\neq 0$, we have $\alpha\bm{\Delta}(\bm{\phi},\bm{d})=\bm{\Delta}(\bm{\phi},\alpha\bm{d})$ and hence
\[
\frac{1}{\sqrt{T}}\|(\alpha\bm{\Delta})\bm{X}\|_{\Fr} \geq \frac{  3 \sqrt{\widetilde{\kappa}_1}  }{16 C_{\Delta}} \cdot  \|\alpha \bm{\Delta}\|_{\Fr} - \sqrt{ \frac{ (r+2s)^2 \widetilde{\kappa}_2^2 \log(Nd)}{ c \widetilde{\kappa}_1 T} } \cdot \| \alpha \bm{d}\|_1. 
\]
This shows that \eqref{eq:rsc1}  will remain true uniformly for all $\bm{\Delta}\in\bm{\Upsilon}\cap\pazocal{S}(\alpha \delta)$ with any $\alpha\neq 0$, and hence \eqref{eq:rsc1} holds for all $\bm{\Delta}\in\bm{\Upsilon}$.

Note that for any $x,y,z\geq0$, if $x\geq y-z$, then $y^2\leq (x+z)^2\leq 2(x^2+z^2)$ and hence $x^2\geq y^2/2-z^2$. As a result,
\eqref{eq:rsc1} implies that
\[
\frac{1}{T}\sum_{t=1}^{T}\|\bm{\Delta}\bm{x}_t\|_{2}^2 =\frac{1}{T}\|\bm{\Delta}\bm{X}\|_{\Fr}^2 \geq C \left \{\widetilde{\kappa}_1 \|\bm{\Delta}\|_{\Fr}^2  - \frac{  (r+2s)^2 \widetilde{\kappa}_2^2 \log \{N(p\vee1)\}}{\widetilde{\kappa}_1 T}  \|\bm{d}\|_1^2 \right\}.
\]
Finally, note that $\widetilde{\kappa}_i \asymp \kappa_i$ for $i=1,2$, and $r+2s\lesssim1$. Combining all tails probabilities and conditions on $T$ from claims (i)--(iv), we accomplish the proof of this lemma.

\bigskip
Below we give the proofs of claims (i)--(iv).

\bigskip
\noindent\textbf{Proof of (i):} Note that 
\[
\frac{1}{T}\|\bm{D} \bm{Z}\|_{\Fr}^2=\frac{1}{T}\trace( \bm{Z}^\top\bm{D}^\top\bm{D} \bm{Z})= \trace\left (\bm{D} \widehat{\bm{\Sigma}}_z \bm{D}^\top\right )=\vect(\bm{D}^\top)^\top  (\bm{I}_N \otimes \widehat{\bm{\Sigma}}_z ) \vect(\bm{D}^\top),
\]
where $\widehat{\bm{\Sigma}}_z=\bm{Z}\bm{Z}^\top/T=T^{-1}\sum_{t=1}^{T}\bm{z}_t \bm{z}_t^\top$. 
Then, the result of this lemma can be rewritten as 
\begin{equation}\label{eq:re1}
	|\bm{u}^\top  (\bm{I}_N\otimes \widehat{\bm{\Sigma}}_z) \bm{u}|^{1/2} \geq \frac{\sqrt{\widetilde{\kappa}_1}}{2} \|\bm{u}\|_2-  \sqrt{\frac{ (r+2s)^2\kappa_2^2 \log(Nd)}{c_1 \widetilde{\kappa}_1  T} }
	\|\bm{u}\|_1, \quad \forall \bm{u}\in\mathbb{R}^{N^2d},
\end{equation}	
with probability at least $1-2e^{-0.5c_1\widetilde{\kappa}_1^2  T/\{(r+2s)^2\kappa_2^2\}}$.

Let $\bm{\Sigma}_z=\mathbb{E}(\bm{z}_t \bm{z}_t^\top)$.	In addition, let $\underline{\bm{z}}_T=(\bm{z}_T^\top, \dots, \bm{z}_1^\top)^\top$, and denote its covariance matrix by
\[
\underline{\bm{\Sigma}}_z=\mathbb{E}(\underline{\bm{z}}_T\underline{\bm{z}}_T^\top)=\left (\bm{\Sigma}_z(j-i)\right )_{1\leq i,j\leq T},
\]
where  $\bm{\Sigma}_z(\ell) = \mathbb{E}(\bm{z}_t\bm{z}_{t-\ell}^\top)$ is the lag-$\ell$ autocovariance matrix of $\bm{z}_t$ for $\ell\in\mathbb{Z}$, and $\bm{\Sigma}_z(0)=\bm{\Sigma}_z$. 
We will first prove the following intermediate result:
\begin{equation}\label{eq:re2}
	\left |\bm{u}^\top \{\bm{I}_N\otimes (\widehat{\bm{\Sigma}}_z - \bm{\Sigma}_z)\}\bm{u} \right | \leq\frac{\widetilde{\kappa}_1}{4} \|\bm{u}\|_2^2 + \frac{(r+2s)^2 \kappa_2^2 \log(Nd)}{c_1 \widetilde{\kappa}_1  T}
	\|\bm{u}\|_1^2, \quad \forall \bm{u}\in\mathbb{R}^{N^2d}, 
\end{equation}	
with probability at least $1-2e^{-0.5c_1\widetilde{\kappa}_1^2  T/\{(r+2s)^2\kappa_2^2\}}$.

Denote $\bm{U}=\bm{L}^\top(\bm{\omega}^*)\otimes \bm{I}_N$, and let $\bm{\ell}_h(\bm{\omega}^*)$ be the $h$th row of $\bm{L}(\bm{\omega}^*)$ for $h\geq1$. Then 
$\bm{z}_t =\bm{U}\bm{x}_{t}=\sum_{h=1}^\infty \bm{U}_h\bm{y}_{t-h}$ and $\bm{U}=(\bm{U}_1, \bm{U}_2, \dots)$,
where $\bm{U}_h=\bm{\ell}_h(\bm{\omega}^*)\otimes \bm{I}_N$  for $h\geq1$. By the definition of $\bm{L}(\bm{\omega}^*)$, we have
$\|\bm{\ell}_h(\bm{\omega}^*)\|_{2}=1$ for $1\leq h\leq p$ and $\|\bm{\ell}_h(\bm{\omega}^*)\|_{2}\leq \sqrt{r+2s}\bar{\rho}^h$ for $h\geq p+1$,
which implies
\[
\sum_{h=1}^{\infty}\|\bm{U}_h\|_{\op}=\sum_{h=1}^{\infty}\|\bm{\ell}_h(\bm{\omega}^*)\|_{2}\leq \sqrt{r+2s}\bar{\rho}(1-\bar{\rho})^{-1}.
\]
In addition,  we have
\[ 
\sigma_{\min}(\bm{U})\geq \sigma_{\min, L}. 
\]
Consequently, applying Lemma \ref{lemma:Wcov}(ii) with $\bm{w}_t=\bm{z}_t$,  we can show that
\begin{equation}\label{eq:sigmaz0}
	\lambda_{\min}(\bm{\Sigma}_z) \geq \kappa_1\sigma_{\min}^2(\bm{U}) \geq \widetilde{\kappa}_1 
\end{equation}
and 
\begin{equation}\label{eq:sigmaz}
	\lambda_{\max}(\underline{\bm{\Sigma}}_z)\leq (r+2s) \bar{\rho}^2(1-\bar{\rho})^{-2} \kappa_2.
\end{equation}

Note that $T^{-1}\sum_{t=1}^{T}\|\bm{u}^\top \bm{z}_t\|_2^2 =\bm{u}^\top  \widehat{\bm{\Sigma}}_z\bm{u}$ and $\mathbb{E}(\|\bm{u}^\top \bm{z}_t\|_2^2)= \bm{u}^\top\bm{\Sigma}_z\bm{u}$.
Furthermore, since $\bm{z}_t=\mathcal{W}(B)\bm{y}_t=\mathcal{W}(B)\bm{\Psi}_*(B)\bm{\varepsilon}_{t}$ is a zero-mean and stationary time series, where $\mathcal{W}(B) = \sum_{i=1}^{\infty}\bm{W}_i B^i$, we can  apply Lemma \ref{lemma:hansonw} with $T_0=0$, $T_1=T$, $\bm{w}_t=\bm{z}_t$, $\bm{M}=\bm{u}^\top$, and  $\eta=\widetilde{\kappa}_1/\{108\sigma^2(r+2s) \bar{\rho}^2(1-\bar{\rho})^{-2} \kappa_2\}$, in conjunction with \eqref{eq:sigmaz}, 
to obtain  the following pointwise bound: for any $\bm{u}\in\mathbb{R}^{Nd}$ with $\|\bm{u}\|_2\leq 1$,
\begin{equation}\label{eq:hansonw}
	\mathbb{P}\left \{ \bm{u}^\top  (\widehat{\bm{\Sigma}}_z - \bm{\Sigma}_z)\bm{u} \geq \widetilde{\kappa}_1/108\right \} \leq 2\exp\left [-c_1\widetilde{\kappa}_1^2 T/\{(r+2s)^2  \kappa_2^2\} \right ],
\end{equation}
where $c_1=c_{\HW}\min[\{108\sigma^{2}\bar{\rho}^{2}(1-\bar{\rho})^{-2}\}^{-1}, \{108\sigma^{2}\bar{\rho}^{2}(1-\bar{\rho})^{-2}\}^{-2}]$. 

Let $\pazocal{K}(2K)=\{\bm{u}\in\mathbb{R}^{Nd}: \|\bm{u}\|_2\leq 1, \|\bm{u}\|_0\leq 2 K\}$ be a set of sparse vectors, where $K\geq1$ is an integer to be specified later. Then, by arguments similar to the proof of Lemma F.2 in \cite{basu2015regularized}, we can strengthen \eqref{eq:hansonw} to the union bound that holds for all  $\bm{u}\in\pazocal{K}(2K)$ as follows:
\begin{equation*}
	\mathbb{P}\left \{ \sup_{\bm{u}\in \pazocal{K}(2K)}\bm{u}^\top  (\widehat{\bm{\Sigma}}_z - \bm{\Sigma}_z)\bm{u} \geq  \widetilde{\kappa}_1/108\right \} \leq 2\exp\left[ -c_1\widetilde{\kappa}_1^2 T/\{(r+2s)^2  \kappa_2^2\}+2K\log(Nd)\right ],
\end{equation*}
Now we choose $K=\lceil 0.25c_1\widetilde{\kappa}_1^2 T/\{(r+2s)^2 \kappa_2^2\log(Nd)\} \rceil\geq 1$. Thus, applying Supplementary Lemma 12 in \cite{LW12}, we have
\begin{align*}
	&	\mathbb{P}\left \{ \forall \bm{u}\in\mathbb{R}^{Nd}: |\bm{u}^\top ( \widehat{\bm{\Sigma}}_z - \bm{\Sigma}_z) \bm{u}| \leq \frac{\widetilde{\kappa}_1}{4} \|\bm{u}\|_2^2 + \frac{(r+2s)^2  \kappa_2^2 \log(Nd)}{c_1 \widetilde{\kappa}_1  T}
	\|\bm{u}\|_1^2 \right \}\\
	&\hspace{5mm} \geq 1-2\exp\left [-0.5c_1\widetilde{\kappa}_1^2 T/\{(r+2s)^2 \kappa_2^2\} \right ],
\end{align*}	
and hence \eqref{eq:re2}.
Furthermore, by \eqref{eq:sigmaz0} and the inequality $|x+y|^{1/2}\leq |x|^{1/2}+|y|^{1/2}$, for all $\bm{u}\in\mathbb{R}^{N^2d}$, we have
\begin{align*}
	\sqrt{\widetilde{\kappa}_1} \|\bm{u}\|_2 \leq \lambda_{\min}^{1/2}(\bm{\Sigma}_z)\|\bm{u}\|_2  & \leq |\bm{u}^\top (\bm{I}_N\otimes\bm{\Sigma}_z) \bm{u}|^{1/2}\\
	&
	\leq |\bm{u}^\top  (\bm{I}_N\otimes \widehat{\bm{\Sigma}}_z) \bm{u}|^{1/2}+ |\bm{u}^\top \{ \bm{I}_N\otimes(\widehat{\bm{\Sigma}}_z-\bm{\Sigma}_z) \}\bm{u}|^{1/2}.
\end{align*}	
Finally,   combining this with \eqref{eq:re2} and the inequality $\sqrt{x^2+y^2}\leq |x|+|y|$, we have \eqref{eq:re1}. This completes the proof of (i).

\smallskip
\noindent\textbf{Proof of (ii):}
It is worth noting that $\bm{M}(\bm{\phi})$ is linear in $\bm{\phi}$, which implies that
\begin{equation}\label{eq:Xi1}
	\frac{\bm{M}(\bm{\phi})}{\|\bm{M}(\bm{\phi}) \|_{\Fr}}\in
	\bm{\Xi}_1=\{\bm{M} \in \bm{\Xi}\mid \|\bm{M}\|_{\Fr} = 1\}, \quad\forall \bm{\phi}\in \bm{\Phi},
\end{equation}
where $\bm{\Xi} = \left\{ \bm{M}(\bm{\phi}) \in  \mathbb{R}^{N \times N  (r+2s)} \mid \bm{\phi}\in \bm{\Phi} \right\}$. To prove the result of this lemma, we begin by establishing the following intermediate result: 
\begin{equation}\label{eq:reclaim2}
	\mathbb{P}\left ( \frac{\widetilde{\kappa}_1 l_\phi^2}{8 u_\phi^2} \leq  \inf_{\bm{M}\in\bm{\Xi}_1} \frac{1}{T} \|\bm{M} \bm{V}\|_{\Fr}^2 \leq \sup_{\bm{M}\in\bm{\Xi}_1} \frac{1}{T} \|\bm{M} \bm{V}\|_{\Fr}^2 \leq  \frac{6\widetilde{\kappa}_2 u_\phi^2}{l_\phi^2} \right ) \geq  1- 2e^{-0.5 c_2\widetilde{\kappa}_1^2 T/\{(r+2s)^2\kappa_2^2\}}.
\end{equation}

Similar to the proof of claim (i), let $\bm{\Sigma}_v=\mathbb{E}(\bm{v}_t \bm{v}_t^\top)$. In addition,  let $\underline{\bm{v}}_T=(\bm{v}_T^\top, \dots, \bm{v}_1^\top)^\top$, and denote its covariance matrix by
\[
\underline{\bm{\Sigma}}_v=\mathbb{E}(\underline{\bm{v}}_T\underline{\bm{v}}_T^\top)=\left (\bm{\Sigma}_v(j-i)\right )_{1\leq i,j\leq T},
\]
where  $\bm{\Sigma}_v(\ell) = \mathbb{E}(\bm{v}_t\bm{v}_{t-\ell}^\top)$ is the lag-$\ell$ autocovariance matrix of $\bm{v}_t$ for $\ell\in\mathbb{Z}$, and $\bm{\Sigma}_v(0)=\bm{\Sigma}_v$. 

Denote $\bm{U}=\bm{P}^\top(\bm{\omega}^*)\otimes \bm{I}_N$ and let $\bm{p}_h(\bm{\omega}^*)$ be the $h$th row of $\bm{P}(\bm{\omega}^*)$ for $h\geq1$. Then 
$\bm{v}_t =\bm{U}\bm{x}_{t}=\sum_{h=1}^\infty \bm{U}_h\bm{y}_{t-h}$ and $\bm{U}=(\bm{U}_1, \bm{U}_2, \dots)$,
where $\bm{U}_h=\bm{p}_h(\bm{\omega}^*)\otimes \bm{I}_N$  for $h\geq1$. By the definition of $\bm{P}(\bm{\omega}^*)$, we have $\|\bm{p}_h(\bm{\omega}^*)\|_{2}\leq \sqrt{r+2s}C_{\ell}\bar{\rho}^h$ for $h\geq 1$,
which implies
\[
\sum_{h=1}^{\infty}\|\bm{U}_h\|_{\op}=\sum_{h=1}^{\infty}\|\bm{p}_h(\bm{\omega}^*)\|_{2}\leq \sqrt{r+2s}C_{\ell}\bar{\rho}(1-\bar{\rho})^{-1}.
\]
In addition,  we have
\[ 
\sigma_{\min, L} \leq \sigma_{\min}(\bm{U})\leq \sigma_{\max}(\bm{U})\leq \sigma_{\max, L}.
\]
Consequently, applying Lemma \ref{lemma:Wcov}(ii) with $\bm{w}_t=\bm{v}_t$,  we can show that
\begin{equation}\label{eq:sigmav0}
	\widetilde{\kappa}_1 \leq \kappa_1\sigma_{\min}^2(\bm{U}) \leq \lambda_{\min}(\bm{\Sigma}_v) \leq  \lambda_{\max}(\bm{\Sigma}_v)  \leq  \kappa_2\sigma_{\max}^2(\bm{U}) \leq  \widetilde{\kappa}_2
\end{equation}
and 
\begin{equation}\label{eq:sigmav}
	\lambda_{\max}(\underline{\bm{\Sigma}}_v)\leq(r+2s) C_{\ell}^2\bar{\rho}^2(1-\bar{\rho})^{-2}  \kappa_2,
\end{equation}

Note that $T^{-1}\|\bm{M} \bm{V}\|_{\Fr}^2=T^{-1}\sum_{t=1}^{T}\|\bm{M}\bm{v}_t\|_2^2 =\trace( \bm{M}\widehat{\bm{\Sigma}}_v\bm{M}^\top)$, where $\widehat{\bm{\Sigma}}_v=\bm{V}\bm{V}^\top/T=T^{-1}\sum_{t=1}^{T}\bm{v}_t \bm{v}_t^\top$,  and $\mathbb{E}(\|\bm{M}\bm{v}_t\|_2^2)=\trace( \bm{M}\bm{\Sigma}_v\bm{M}^\top)$. By \eqref{eq:sigmav0}, for any  $\bm{M} \in  \mathbb{R}^{N \times N (r+2s)}$, we have
\[
\widetilde{\kappa}_1 \|\bm{M}\|_{\Fr}^2 \leq \lambda_{\min}(\bm{\Sigma}_v)\|\bm{M}\|_{\Fr}^2\leq
\mathbb{E}\left (\|\bm{M}\bm{v}_t\|_2^2 \right )
\leq\lambda_{\max}(\bm{\Sigma}_v) \|\bm{M}\|_{\Fr}^2\leq \widetilde{\kappa}_2 \|\bm{M}\|_{\Fr}^2.
\]
Moreover, by Lemma \ref{lemma:hansonw} with $T_0=0$, $T_1=T$, $\bm{w}_t=\bm{v}_t$, and  $\eta= \widetilde{\kappa}_1/\{2\sigma^2 (r+2s) C_{\ell}^2\bar{\rho}^2 (1-\bar{\rho})^{-2} \kappa_2  \}$, in conjunction with \eqref{eq:sigmav}, we can show that for any  $\bm{M} \in  \mathbb{R}^{N \times N (r+2s)}$,
\begin{align*}
	\mathbb{P}\left \{\left |\frac{1}{T}\sum_{t=1}^{T}\|\bm{M}\bm{v}_t\|_2^2 - \mathbb{E}\left (\|\bm{M}\bm{v}_t\|_2^2 \right )\right | \geq \frac{\widetilde{\kappa}_1}{2} \|\bm{M}\|_{\Fr}^2 \right \} 
	&\leq 2\exp\left [-c_2\widetilde{\kappa}_1^2 T/\{(r+2s)^2\kappa_2^2\}  \right ].
\end{align*}
where $c_2=c_{\HW}\min[\{2\sigma^{2}C_{\ell}^2\bar{\rho}^2 (1-\bar{\rho})^{-2}\}^{-1}, \{2\sigma^{2}C_{\ell}^2\bar{\rho}^2 (1-\bar{\rho})^{-2}\}^{-2}]$. As a result, we have the following pointwise bound: for any  $\bm{M} \in  \mathbb{R}^{N \times N  (r+2s)}$,
\begin{equation}\label{eq:hansonw1}
	\mathbb{P}\left(  \frac{\widetilde{\kappa}_1}{2} \|\bm{M}\|_{\Fr}^2\leq \frac{1}{T} \|\bm{M} \bm{V}\|_{\Fr}^2 \leq  \frac{3\widetilde{\kappa}_2}{2} \|\bm{M}\|_{\Fr}^2  \right)  
	\geq 1-2\exp\left [-c_2\widetilde{\kappa}_1^2 T/\{(r+2s)^2\kappa_2^2\}  \right ].
\end{equation}

Next we strengthen the above pointwise bound to a union bound  that holds for all $\bm{M}\in\bm{\Xi}_1$. Let $\bm{\bar{\Xi}}(\epsilon_0)$ be a minimal generalized $\epsilon_0$-net of $\bm{\Xi}_1$ in the Frobenius norm, where $0<\epsilon_0<1$ will be chosen later. 
By  Lemma \ref{lemma:epsilon-net}(ii), any $\bm{M}\in\bm{\bar{\Xi}}(\epsilon_0)$ satisfies $l_\phi/u_\phi\leq \|\bm{M}\|_{\Fr}\leq u_\phi/l_\phi$. Define the event 
\[
\mathcal{E}(\epsilon_0)=\left \{\forall \cm{M}\in\bm{\bar{\Xi}}(\epsilon_0): \sqrt{\frac{ \widetilde{\kappa}_1 l_\phi^2}{2 u_\phi^2}} < \frac{1}{\sqrt{T}}\|\bm{M} \bm{V}\|_{\Fr} <\sqrt{\frac{ 3\widetilde{\kappa}_2 u_\phi^2}{2 l_\phi^2}} \right \}.
\]
Then, by the pointwise bounds in \eqref{eq:hansonw1} and the covering number in Lemma \ref{lemma:epsilon-net}(i), 
we have
\begin{align}\label{eq:event}
	\mathbb{P}\{\stcomp{\mathcal{E}}(\epsilon_0)\}&\leq e^{(r+2s) \log\{3/(c_{\bm{M}}\epsilon_0)\}} \max_{\bm{M}\in\bm{\bar{\Xi}}(\epsilon_0)} \mathbb{P} \left [\stcomp{\left \{ \frac{\widetilde{\kappa}_1 l_\phi^2}{2 u_\phi^2} \leq \frac{1}{T}\|\bm{M}\bm{V}\|_{\Fr}^2 \leq \frac{ 3\widetilde{\kappa}_2 u_\phi^2}{2 l_\phi^2} \right \} }\right ] \notag\\
	&\leq  2\exp\left[- c_2\widetilde{\kappa}_1^2 T/\{(r+2s)^2\kappa_2^2\}  +(r+2s) \log\{3u_\phi/(l_\phi\epsilon_0)\}\right].
\end{align}
By  Lemma \ref{lemma:epsilon-net}(iii), it holds
\begin{equation}\label{eq:upper}
	\mathcal{E}(\epsilon_0)\subset  \left \{ \max_{\bm{M}\in\bm{\bar{\Xi}}(\epsilon_0)}\frac{1}{\sqrt{T}}\|\bm{M} \bm{V}\|_{\Fr}\leq \sqrt{\frac{ 3 \widetilde{\kappa}_2 u_\phi^2}{2 l_\phi^2}}\right \}\subset \left \{ \sup_{\bm{M}\in\bm{\Xi}_1}\frac{1}{\sqrt{T}}\|\bm{M} \bm{V}\|_{\Fr}\leq \frac{\sqrt{ 3\widetilde{\kappa}_2u_\phi^2/(2 l_\phi^2 )} }{1-\epsilon_0}\right \}.
\end{equation}
Moreover,  by a method similar to that for the proof of Lemma \ref{lemma:epsilon-net}(iii), for any $\bm{M}\in\bm{\Xi}_1$ and its corresponding $\bar{\bm{M}}\in\bm{\bar{\Xi}}(\epsilon_0)$ defined therein, we can show that
\begin{align*}
	\frac{1}{\sqrt{T}}	\| \bm{M} \bm{V}\|_{\Fr}
	&	\geq \frac{1}{\sqrt{T}}\| \bar{\bm{M}}_{(1)} \bm{V}\|_{\Fr} - \frac{1}{\sqrt{T}}\| (\bm{M}-\bar{\bm{M}})_{(1)}\bm{V}\|_{\Fr} \\
	&\geq \min_{\bar{\bm{M}} \in \bm{\bar{\Xi}}(\epsilon)} \frac{1}{\sqrt{T}}\| \bar{\bm{M}}_{(1)} \bm{V}\|_{\Fr}  -  \epsilon_0	\sup_{\bm{M} \in \bm{\Xi}_1} \frac{1}{\sqrt{T}} \| \bm{M} \bm{V}\|_{\Fr}.
\end{align*}
Taking the infimum over all  $\bm{M}\in\bm{\Xi}_1$ and combining the result with \eqref{eq:upper},  we can show that on the event $\mathcal{E}(\epsilon_0)$, it holds
\begin{align*}
	\inf_{\bm{M}\in\bm{\Xi}_1}\frac{1}{\sqrt{T}}	\| \bm{M} \bm{V}\|_{\Fr} &\geq \sqrt{\frac{\widetilde{\kappa}_1 l_\phi^2}{2 u_\phi^2}} - \epsilon_0\cdot \frac{\sqrt{3\widetilde{\kappa}_2 u_\phi^2/(2 l_\phi^2)}}{1-\epsilon_0} 
	\geq \sqrt{\frac{\widetilde{\kappa}_1 l_\phi^2}{2 u_\phi^2}}-2 \epsilon_0 \sqrt{\frac{3\widetilde{\kappa}_2 u_\phi^2}{2 l_\phi^2}}
\end{align*}
if $0<\epsilon_0\leq1/2$. Thus, by setting 
\[
\epsilon_0= \min\left\{\frac{l_\phi^2}{4 u_\phi^2}\sqrt{\frac{\widetilde{\kappa}_1}{3 \widetilde{\kappa}_2}}, \frac{1}{2} \right \},
\] 
we have
\begin{equation}\label{eq:lower}
	\mathcal{E}(\epsilon_0)\subset \left \{ \inf_{\bm{M}\in\bm{\Xi}_1}\frac{1}{\sqrt{T}}\|\bm{M} \bm{V}\|_{\Fr}\geq \frac{\sqrt{\widetilde{\kappa}_1 l_\phi^2/(2 u_\phi^2 )}}{2}\right \}.
\end{equation} 
Consequently, with the above choice of $\epsilon_0$, we have
\[
\mathcal{E}(\epsilon_0) \subset \left \{ \frac{\widetilde{\kappa}_1 l_\phi^2}{8 u_\phi^2} \leq \inf_{\bm{M}\in\bm{\Xi}_1} \frac{1}{T} \|\bm{M} \bm{V}\|_{\Fr}^2   \leq 
\sup_{\bm{M}\in\bm{\Xi}_1} \frac{1}{T} \|\bm{M} \bm{V}\|_{\Fr}^2  \leq \frac{6\widetilde{\kappa}_2 u_\phi^2}{l_\phi^2}  \right \},
\]
which, together with \eqref{eq:event}, implies that 
\[
\mathbb{P}\left (  \frac{\widetilde{\kappa}_1 l_\phi^2}{8 u_\phi^2} \leq \inf_{\bm{M}\in\bm{\Xi}_1} \frac{1}{T} \|\bm{M} \bm{V}\|_{\Fr}^2   \leq 
\sup_{\bm{M}\in\bm{\Xi}_1}\frac{1}{T} \|\bm{M} \bm{V}\|_{\Fr}^2  \leq \frac{6\widetilde{\kappa}_2 u_\phi^2}{l_\phi^2} \right ) \geq 1- 2e^{-0.5 c_2\widetilde{\kappa}_1^2 T/\{(r+2s)^2\kappa_2^2\}}
\]
under the condition on $T$ stated in (ii). Then \eqref{eq:reclaim2} follows immediately. By combining 
\eqref{eq:Xi1}, \eqref{eq:reclaim2},  and the bounds in \eqref{eq:DFr}, we accomplish the proof of (ii).

\smallskip
\noindent\textbf{Proof of (iii):}
Similar to the proof of claim (i), we can show that
\[
\frac{1}{T}\|\bm{D}_{\ma} \bm{H}(\bm{\phi})\|_{\Fr}^2= \trace\left \{ \bm{D}_{\ma} \widehat{\bm{\Sigma}}_H(\bm{\phi}) \bm{D}_\ma^\top\right \}
=\vect(\bm{D}_\ma^\top)^\top  \{ \bm{I}_N \otimes \widehat{\bm{\Sigma}}_H(\bm{\phi})\} \vect(\bm{D}_\ma^\top),
\]
where $\widehat{\bm{\Sigma}}_H(\bm{\phi})=\bm{H}(\bm{\phi})\bm{H}^\top(\bm{\phi})/T=T^{-1}\sum_{t=1}^{T}\bm{h}_t(\bm{\phi}) \bm{h}_t^\top(\bm{\phi})$. Then, the high probability event stated in this lemma is equivalent to 
\[
\sup_{\bm{\phi}\in \bm{\Phi}_1} \frac{|\bm{u}^\top  \{\bm{I}_N \otimes \widehat{\bm{\Sigma}}_H(\bm{\phi}) \} \bm{u}|}{\|\bm{\phi}\|_2^2} \leq C_4 (r+2s) \widetilde{\kappa}_2\left [\|\bm{u}\|_2^2+\frac{4 \log\{N(r+2s)\}}{c_{\HW}  T}  \|\bm{u}\|_1^2\right], \quad \forall \bm{u}\in\mathbb{R}^{N^2(r+2s)}.
\]
Thus, similar to the proof of \eqref{eq:re2}, it suffices to show that  with probability at least $1-4e^ {-0.5c_{\HW} T}$,
\begin{equation}\label{eq:reclaim3}
	\sup_{\bm{\phi}\in \bm{\Phi}_1} \frac{|\bm{u}^\top  \widehat{\bm{\Sigma}}_H(\bm{\phi}) \bm{u} |}{\| \bm{\phi}\|_2^2}   \leq C_4 (r+2s) \widetilde{\kappa}_2\left [\|\bm{u}\|_2^2+\frac{4 \log\{N(r+2s)\}}{c_{\HW}  T}  \|\bm{u}\|_1^2\right], \quad \forall \bm{u}\in\mathbb{R}^{N(r+2s)}.
\end{equation}

To prove \eqref{eq:reclaim3}, we first aim to establish an upper bound of $|\bm{u}^\top  \widehat{\bm{\Sigma}}_H(\bm{\phi}) \bm{u} |$ for a fixed $\bm{u}=(\bm{u}_1^\top, \dots, \bm{u}_{r+2s}^\top)^\top\in\mathbb{R}^{N(r+2s)}$, where $\bm{u}_k\in\mathbb{R}^N$ for $1\leq k\leq r+2s$. 
Note that $\bm{h}_t(\bm{\phi})=\sum_{h=1}^{\infty}\{\bm{q}_h(\bm{\phi})\otimes \bm{I}_N\}\bm{y}_{t-p-h}$, where  $\bm{q}_h(\bm{\phi})=(q_{h,1}(\bm{\phi}), q_{h,2}(\bm{\phi}), \dots)^\top$ is the transpose of the $h$th row of $\bm{Q}(\bm{\phi})$.  
Then
\begin{align*}
	|\bm{u}^\top  \widehat{\bm{\Sigma}}_H(\bm{\phi}) \bm{u} |&=\left |\frac{1}{T}\sum_{t=1}^T \bm{u}^\top \bm{h}_t(\bm{\phi}) \bm{h}_t^\top(\bm{\phi})\bm{u} \right |\\
	&\leq \frac{1}{T}\sum_{i=1}^{\infty}\sum_{h=1}^{\infty} \left |\sum_{t=1}^{T} \bm{u}^\top\{\bm{q}_i(\bm{\phi})\otimes \bm{I}_N\}\bm{y}_{t-p-i} \bm{y}_{t-p-h}^\top\{\bm{q}_h^\top(\bm{\phi})\otimes \bm{I}_N\} \bm{u} \right | \\
	&\leq \frac{1}{T}\sum_{i=1}^{\infty} \left ( \sum_{t=1}^T  [\bm{u}^\top\{\bm{q}_i(\bm{\phi})\otimes \bm{I}_N\}\bm{y}_{t-p-i}]^2 \right )^{1/2} 
	\sum_{h=1}^{\infty} \left (\sum_{t=1}^T  [\bm{u}^\top\{\bm{q}_h(\bm{\phi})\otimes \bm{I}_N\}\bm{y}_{t-p-h}]^2 \right )^{1/2}\\
	&= \left \{\sum_{h=1}^{\infty} \left ( \frac{1}{T}  \sum_{t=1}^T  [\bm{u}^\top\{\bm{q}_h(\bm{\phi})\otimes \bm{I}_N\}\bm{y}_{t-p-h}]^2 \right )^{1/2} \right \}^2.
\end{align*}
In addition,
\begin{align*}
	\frac{1}{T}  \sum_{t=1}^T  [\bm{u}^\top\{\bm{q}_h(\bm{\phi})\otimes \bm{I}_N\}\bm{y}_{t-p-h}]^2 &
	= \frac{1}{T}  \sum_{t=1}^T \left \{ \sum_{h=1}^{\infty}\sum_{k=1}^{r+2s}   q_{h,k}(\bm{\phi})  \bm{u}_k^\top\bm{y}_{t-p-h} \right \}^2 \\
	& \leq \frac{1}{T}  \sum_{t=1}^T \sum_{h=1}^{\infty}\sum_{k=1}^{r+2s}   q_{h,k}^2(\bm{\phi})  \sum_{k=1}^{r+2s} (\bm{u}_k^\top\bm{y}_{t-p-h})^2\\
	& = \|\bm{q}_{h}(\bm{\phi})\|_2^2 
	\sum_{k=1}^{r+2s} \frac{1}{T} \sum_{t=1}^T  (\bm{u}_k^\top\bm{y}_{t-p-h})^2\\
	&\leq \|\bm{q}_{h}(\bm{\phi})\|_2^2 \left \{ 
	\sum_{k=1}^{r+2s} \sqrt{\frac{1}{T} \sum_{t=1}^T  (\bm{u}_k^\top\bm{y}_{t-p-h})^2 } \right \}^2.
\end{align*}
Furthermore, by Lemma \ref{cor1} and a method similar to that for upper bounding $\| \bm{R}_{1h}\|_{\Fr}$ and $\| \bm{R}_{2h}\|_{\Fr}$ in the proof of Proposition \ref{prop:perturb}, we can show that  
\begin{align*}
	\| \bm{q}_{h}(\bm{\phi})\|_{2}
	&\leq \sqrt{2}C_{\ell}\bar{\rho}^h \|\bm{\phi}\|_2 + \frac{\sqrt{2}}{2} C_{\ell}\bar{\rho}^h \|\bm{\phi}\|_2^2 \leq 2\sqrt{2}C_{\ell}\bar{\rho}^h \|\bm{\phi}\|_2, \quad\forall \bm{\phi}\in \bm{\Phi}_1.
\end{align*}
Combining the above results, we have
\begin{equation*}
	|\bm{u}^\top  \widehat{\bm{\Sigma}}_H(\bm{\phi}) \bm{u} | 
	\leq  \left \{ 2\sqrt{2}C_{\ell} \|\bm{\phi}\|_2 \sum_{k=1}^{r+2s} \sum_{h=1}^{\infty} \bar{\rho}^h  \sqrt{\frac{1}{T} \sum_{t=1}^T  (\bm{u}_k^\top\bm{y}_{t-p-h})^2 }\right \}^2, \quad\forall \bm{\phi}\in \bm{\Phi}_1.
\end{equation*}
Hence, by Lemma \ref{alem:rsc3aux}, if $T\geq c_{\HW}^{-1}\log 2$, for any fixed $\bm{u}\in\mathbb{R}^{N(r+2s)}$,  it holds with probability at least $1-4e^{-c_{\HW} T}$ that
\begin{align}\label{eq:alem1}
	\sup_{\bm{\phi}\in \bm{\Phi}_1} \frac{|\bm{u}^\top  \widehat{\bm{\Sigma}}_H(\bm{\phi}) \bm{u} |}{\| \bm{\phi}\|_2^2} &\leq 8 C_{\ell}^2  \left \{ \sum_{k=1}^{r+2s} \sum_{h=1}^{\infty} \bar{\rho}^h  \sqrt{\frac{1}{T} \sum_{t=1}^T  (\bm{u}_k^\top\bm{y}_{t-p-h})^2 }\right \}^2 \notag\\
	& \leq 8 C_{\ell}^2 (r+2s) \sum_{k=1}^{r+2s} \sum_{h=1}^{\infty} \bar{\rho}^{2h} \frac{1}{T} \sum_{t=1}^T  (\bm{u}_k^\top\bm{y}_{t-p-h})^2 \notag\\
	&\leq 8 C_{\ell}^2 (r+2s) \sum_{k=1}^{r+2s} \sum_{h=1}^{\infty} \bar{\rho}^{2h} \lambda_{\max}(\bm{\Sigma}_\varepsilon) \mu_{\max}(\bm{\Psi}_*) (h \sigma^2+1) \|\bm{u}_k\|_2^2 \notag\\
	& = C_4 (r+2s) \kappa_2 \|\bm{u}\|_2^2 \leq C_4 (r+2s) \widetilde{\kappa}_2 \|\bm{u}\|_2^2,
\end{align}
where  
$0<C_4=8 C_{\ell}^2 \sum_{h=1}^{\infty} \bar{\rho}^{2h} (h \sigma^2+1) <\infty$ is an absolute constant.

Next we strengthen the above bound to \eqref{eq:reclaim3} by a method similar to that for \eqref{eq:re2} in the proof of claim (i).
Let $\pazocal{K}(2K)=\{\bm{u}\in\mathbb{R}^{N(r+2s)}: \|\bm{u}\|_2\leq 1, \|\bm{u}\|_0\leq 2 K\}$ be a set of sparse vectors, where $K\geq1$ is an integer to be specified later. Then, by arguments similar to the proof of  Lemma F.2 in \cite{basu2015regularized}, we have the union bound:
\begin{equation*}
	\mathbb{P}\left \{ \sup_{\bm{u}\in \pazocal{K}(2K)}\sup_{\bm{\phi}\in \bm{\Phi}_1} \frac{|\bm{u}^\top  \widehat{\bm{\Sigma}}_H(\bm{\phi}) \bm{u} |}{\| \bm{\phi}\|_2^2}  \geq  C_4 (r+2s)\widetilde{\kappa}_2 \|\bm{u}\|_2^2 \right \} \leq 4e^{-c_{\HW} T+2K\log\{N(r+2s)\}},
\end{equation*}
By choosing $K=\lceil 0.25c_{\HW} T/\log\{N(r+2s)\} \rceil \geq 1$ and using  Supplementary Lemma 12 in \cite{LW12}, we can readily verify \eqref{eq:reclaim3} and thus accomplish the proof of (iii).

\smallskip
\noindent\textbf{Proof of (iv):}
Similar to the proof of claim (iii), we have
\begin{equation}\label{eq:alem4}
	\frac{1}{T}\|\bm{G}_{\ma}^{*} \bm{B}(\bm{\phi})\|_{\Fr}^2= \trace\left \{ \bm{G}_{\ma}^{*} \widehat{\bm{\Sigma}}_b(\bm{\phi}) \bm{G}_{\ma}^{* \top}\right \}
	=\vect(\bm{G}_{\ma}^{* \top})^\top  \{\bm{I}_N\otimes \widehat{\bm{\Sigma}}_b(\bm{\phi}) \} \vect(\bm{G}_{\ma}^{* \top}),
\end{equation}
where $\widehat{\bm{\Sigma}}_b(\bm{\phi})=\bm{B}(\bm{\phi})\bm{B}^\top(\bm{\phi})/T=T^{-1}\sum_{t=1}^{T}\bm{b}_t(\bm{\phi}) \bm{b}_t^\top(\bm{\phi})$.  
Moreover, we can establish an upper bound of $|\bm{u}^\top  \widehat{\bm{\Sigma}}_b(\bm{\phi}) \bm{u} |$ for any fixed $\bm{u}\in\mathbb{R}^{N(r+2s)}$. Note that $\bm{b}_t(\bm{\phi})=\sum_{h=1}^{\infty}\{\bm{s}_h(\bm{\phi})\otimes \bm{I}_N\}\bm{y}_{t-p-h}$, where  $\bm{s}_h(\bm{\phi})=(s_{h,1}(\bm{\phi}), s_{h,2}(\bm{\phi}), \dots)^\top$ is the transpose of the $h$th row of $\bm{S}(\bm{\phi})$. In addition,  by Lemma \ref{cor1} and a method similar to that for upper bounding $\| \bm{R}_{3h}\|_{\Fr}$  in the proof of Proposition \ref{prop:perturb}, we can show that  
\begin{equation*}
	\| \bm{s}_{h}(\bm{\phi})\|_{2}
	\leq \frac{\sqrt{2}}{2}C_{\ell}\bar{\rho}^h \|\bm{\phi}\|_2^4,  \quad\forall \bm{\phi}\in \bm{\Phi}_1.
\end{equation*}
Then  by Lemma \ref{alem:rsc3aux}, along the lines of \eqref{eq:alem1} it can be readily proved that  if $T\geq c_{\HW}^{-1}\log 2$, for any fixed $\bm{u}\in\mathbb{R}^{N(r+2s)}$,  with probability at least $1-4e^{-c_{\HW} T}$,
\begin{align*}
	\sup_{\bm{\phi}\in \bm{\Phi}_1} \frac{|\bm{u}^\top  \widehat{\bm{\Sigma}}_b(\bm{\phi}) \bm{u} |}{\| \bm{\phi}\|_2^4} 
	& \leq C_4 (r+2s) \kappa_2 \|\bm{u}\|_2^2 \leq C_4 (r+2s)\widetilde{\kappa}_2 \|\bm{u}\|_2^2,
\end{align*}
where $C_4>0$ is the absolute constant defined as in \eqref{eq:alem1}. For simplicity, denote $\vect(\bm{G}_{\ma}^{* \top})=(\bm{u}_1^\top, \dots, \bm{u}_{N}^\top)^\top\in\mathbb{R}^{N^2(r+2s)}$, where $\bm{u}_i\in\mathbb{R}^{N(r+2s)}$ for $1\leq i\leq N$.
Then 
\begin{align*}
	&	\mathbb{P}\left \{\sup_{\bm{\phi}\in \bm{\Phi}_1} \frac{|\bm{u}^\top  \{\bm{I}_N \otimes \widehat{\bm{\Sigma}}_b(\bm{\phi}) \} \bm{u}|}{\|\bm{\phi}\|_2^4} \geq  C_4 (r+2s) \widetilde{\kappa}_2 \|\bm{u}\|_2^2 \right \}\\
	&\hspace{5mm}\leq \mathbb{P}\left \{\sum_{i=1}^{N}\sup_{\bm{\phi}\in \bm{\Phi}_1} \frac{|\bm{u}_i^\top   \widehat{\bm{\Sigma}}_b(\bm{\phi}) \bm{u}_i|}{\|\bm{\phi}\|_2^4} \geq  C_4 (r+2s) \widetilde{\kappa}_2 \sum_{i=1}^{N}\|\bm{u}_i\|_2^2 \right \}\\
	&\hspace{5mm}\leq \sum_{i=1}^{N} \mathbb{P}\left \{\sup_{\bm{\phi}\in \bm{\Phi}_1} \frac{|\bm{u}_i^\top   \widehat{\bm{\Sigma}}_b(\bm{\phi}) \bm{u}_i|}{\|\bm{\phi}\|_2^4} \geq  C_4 (r+2s)\widetilde{\kappa}_2 \|\bm{u}_i\|_2^2 \right \}\\
	&\hspace{5mm}\leq 4 e^{-c_{\HW} T+\log N} \leq 4 e^{-c_{\HW} T/2},
\end{align*}
if $T\geq 2c_{\HW}^{-1}\log N$.
Note that $\|\bm{G}_{\ma}^{*}\|_{\Fr}^2 \leq (r+2s)\overline{\alpha}_\ma^2$.
Combining these results with \eqref{eq:alem4}, we accomplish the proof of (iv).

\subsection{Proof of Lemma \ref{lemma:init1} (Effect of initial values I))}

Note that
\begin{equation}\label{eq:S10}
	S_1(\bm{\widehat{\Delta}}) = \frac{2}{T}\sum_{t=1}^{T}\langle \bm{\varepsilon}_t, \sum_{h=t}^{\infty}\bm{\widehat{\Delta}}_h \bm{y}_{t-h} \rangle = \frac{2}{T}\sum_{i=1}^{3}S_{1i}(\bm{\widehat{\Delta}}),
\end{equation}
where
\begin{align*}
	&S_{11}(\bm{\widehat{\Delta}}) = \sum_{t=1}^{p}\langle \bm{\varepsilon}_t, \sum_{h=t}^{p}\bm{\widehat{\Delta}}_h \bm{y}_{t-h} \rangle = \sum_{t=1}^{p}\langle \bm{\varepsilon}_t, \sum_{h=t}^{p} \bm{\widehat{D}}_h \bm{y}_{t-h} \rangle,\\
	&S_{12}(\bm{\widehat{\Delta}}) = \sum_{t=1}^{p}\langle \bm{\varepsilon}_t, \sum_{h=p+1}^{\infty}\bm{\widehat{\Delta}}_h \bm{y}_{t-h} \rangle, \quad\text{and}\quad
	S_{13}(\bm{\widehat{\Delta}})  = \sum_{t=p+1}^{T}\langle \bm{\varepsilon}_t, \sum_{h=t}^{\infty}\bm{\widehat{\Delta}}_h \bm{y}_{t-h} \rangle,
\end{align*}
with $\bm{\widehat{D}}_h=\bm{\widehat{G}}_h-\bm{G}_h^*=\bm{\widehat{\Delta}}_h$ for $1\leq h\leq p$.  Without loss of generality, we assume that $p\geq 1$; otherwise, $S_{11}(\bm{\widehat{\Delta}})$ will simply disappear.

Note that
\begin{align*}
	|S_{11}(\bm{\widehat{\Delta}})| = \left | \sum_{h=1}^{p} \sum_{t=1}^{h} \langle \bm{\varepsilon}_t, \bm{\widehat{D}}_h \bm{y}_{t-h} \rangle \right |
	= \left | \sum_{h=1}^{p} \langle \sum_{t=1}^{h}  \bm{\varepsilon}_t \bm{y}_{t-h}^\top, \bm{\widehat{D}}_h \rangle \right |
	&\leq \sum_{h=1}^{p} \|\vect(\bm{\widehat{D}}_h)\|_1 \left \| \sum_{t=1}^{h}  \bm{\varepsilon}_t \bm{y}_{t-h}^\top \right \|_{\max}\\
	&\leq \|\bm{\widehat{d}}_{\ar}\|_1  \max_{1\leq h\leq p} \left \| \sum_{t=1}^{h}  \bm{\varepsilon}_t \bm{y}_{t-h}^\top \right \|_{\max}.
\end{align*}
For any fixed $1\leq h\leq p$, by a method similar to that for claim (i) in the proof of Lemma \ref{lemma:devb}, we can show that
\[
\mathbb{P}\left \{\left \| \sum_{t=1}^{h}\bm{\varepsilon}_t \bm{y}_{t-h}^{\top} \right \|_{\max} \leq C_1 \sqrt{h\kappa_2\lambda_{\max}(\bm{\Sigma}_{\varepsilon})\log N} \right \} \geq 1-4 e^{-2\log N}.
\]
As a result, with probability at least $1-4p e^{-2\log N}$, we have 
\begin{equation}\label{eq:S11}
	|S_{11}(\bm{\widehat{\Delta}})| \leq  C_1 \|\bm{\widehat{d}}_{\ar}\|_1\sqrt{p\kappa_2\lambda_{\max}(\bm{\Sigma}_{\varepsilon})\log N }.
\end{equation}

For $S_{12}(\bm{\widehat{\Delta}})$, similar to \eqref{eq:devb3}, we have
\begin{align*}
	|S_{12}(\bm{\widehat{\Delta}})| &\leq  \|\bm{\widehat{d}}_{\ma}\|_1 \sup_{\bm{\omega}\in\bm{\Omega}} \left \| \sum_{t=1}^{p}\bm{\varepsilon}_t \bm{x}_{t-p}^\top \{ \bm{L}^{\ma}(\bm{\omega}) \otimes \bm{I}_N\}\right \|_{\max} \notag\\
	&\hspace{5mm} +  \|\bm{g}_{\ma}^{*}\|_1 \sup_{\bm{\phi}\in \bm{\Phi}_1} \left \| \sum_{t=1}^{p} \bm{\varepsilon}_t \bm{x}_{t-p}^\top  \left [\left \{\bm{L}^{\ma}(\bm{\omega}^*+\bm{\phi}) -\bm{L}^{\ma}(\bm{\omega}^*) \right \}\otimes\bm{I}_N\right ] \right \|_{\max}
\end{align*}
By a method similar to that for claim (ii) in the proof of Lemma \ref{lemma:devb}, we can show that with probability at least  $1- 4  e^{-4 \log N}$,
\begin{align*}
	&\sup_{\bm{\omega}\in\bm{\Omega}} \left \| \sum_{t=1}^{p}\bm{\varepsilon}_t \bm{x}_{t-p}^\top \{ \bm{L}^{\ma}(\bm{\omega}) \otimes \bm{I}_N\}\right \|_{\max} \leq C_2 \sqrt{p\kappa_2\lambda_{\max}(\bm{\Sigma}_{\varepsilon})\log N},\\
	&\sup_{\bm{\phi}\in\bm{\Phi}_1} \frac{ \left \| \sum_{t=1}^{p} \bm{\varepsilon}_t \bm{x}_{t-p}^\top  \left [\left \{\bm{L}^{\ma}(\bm{\omega}^*+\bm{\phi}) -\bm{L}^{\ma}(\bm{\omega}^*) \right \}\otimes\bm{I}_N\right ] \right \|_{\max} }{\|\bm{\phi}\|_2 }  \leq C_3 \sqrt{p\kappa_2\lambda_{\max}(\bm{\Sigma}_{\varepsilon})\log N}.
\end{align*}
Therefore, with probability at least  $1- 5 e^{-4 \log N}$,
\begin{equation}\label{eq:S12}
	|S_{12}(\bm{\widehat{\Delta}})| \leq \sqrt{p}(C_2+C_3) ( \|\bm{\widehat{d}}_{\ma}\|_1  + \|\bm{g}_{\ma}^{*}\|_1 \|\bm{\widehat{\phi}}\|_2 ) \sqrt{\kappa_2\lambda_{\max}(\bm{\Sigma}_{\varepsilon})\log N},
\end{equation}

Now we handle $S_{13}(\bm{\widehat{\Delta}})$.
For any $t\geq p+1$, let $\bm{\widehat{\Delta}}_{[t]}=(\bm{\widehat{\Delta}}_t, \bm{\widehat{\Delta}}_{t+1}, \dots)$ be the horizontal concatenation of $\{\bm{\widehat{\Delta}}_h\}_{h\geq t}$. For any $h\geq1$, let $\bm{L}^{\ma}_{[h]}(\bm{\omega})$ be the matrix obtained by removing the first $h-1$ rows of $\bm{L}^{\ma}(\bm{\omega})$. For any $t\geq p+1$, we have
\begin{align*}
	\sum_{h=t}^\infty \widehat{\bm{\Delta}}_h \bm{y}_{t-h} &= \bm{\widehat{\Delta}}_{[t]} \bm{x}_{1} 
	=  \left [ \widehat{\bm{G}}_{\ma} \{\bm{L}^{\ma}_{[t-p]}(\widehat{\bm{\omega}})\otimes \bm{I}_N\}^\top - \bm{G}_{\ma}^{*} \{\bm{L}^{\ma}_{[t-p]}(\bm{\omega}^*) \otimes \bm{I}_N\}^\top \right ]\bm{x}_{1}\\
	&=  \widehat{\bm{D}}_{\ma} \{ \bm{L}^{\ma}_{[t-p]}(\widehat{\bm{\omega}}) \otimes \bm{I}_N\}^\top \bm{x}_{1} + \bm{G}_{\ma}^{*} \left [\left \{\bm{L}^{\ma}_{[t-p]}(\widehat{\bm{\omega}}) -\bm{L}^{\ma}_{[t-p]}(\bm{\omega}^*) \right \}\otimes\bm{I}_N\right ]^\top \bm{x}_{1}.
\end{align*}
Thus, we can apply arguments similar to those for claim (ii) in the proof of Lemma \ref{lemma:devb} to handle  $S_{13}(\bm{\widehat{\Delta}})$. First, similar to \eqref{eq:devb3}, we can show that
\begin{align*}
	|S_{13}(\bm{\widehat{\Delta}})| & \leq   \|\bm{\widehat{d}}_{\ma}\|_1 \sup_{\bm{\omega}\in\bm{\Omega}} \left \|\sum_{t=p+1}^{T}\bm{\varepsilon}_t \bm{x}_{1}^\top \{ \bm{L}^{\ma}_{[t-p]}(\bm{\omega}) \otimes \bm{I}_N\}\right \|_{\max} \notag\\
	&\hspace{5mm} +  \|\bm{g}_{\ma}^{*}\|_1 \sup_{\bm{\phi}\in \bm{\Phi}_1} \left \| \sum_{t=p+1}^{T} \bm{\varepsilon}_t \bm{x}_{1}^\top  \left [\left \{\bm{L}^{\ma}_{[t-p]}(\bm{\omega}^*+\bm{\phi}) -\bm{L}^{\ma}_{[t-p]}(\bm{\omega}^*) \right \}\otimes\bm{I}_N\right ] \right \|_{\max}. 
\end{align*}
Similar to \eqref{eq:dvb1}, we can show that
\begin{align}\label{eq:init1eq1}
	& \sup_{\bm{\omega}\in\bm{\Omega}} \left \|\sum_{t=p+1}^{T}\bm{\varepsilon}_t \bm{x}_{1}^\top \{ \bm{L}^{\ma}_{[t-p]}(\bm{\omega}) \otimes \bm{I}_N\}\right \|_{\max} = \sup_{\bm{\omega}\in\bm{\Omega}}\max_{1\leq i, j\leq N,p+1\leq k\leq d}\left |  \sum_{t=p+1}^{T}\varepsilon_{i,t} \sum_{h=t}^\infty\ell_{h,k}(\bm{\omega})y_{j,t-h} \right |\notag \\
	&\hspace{5mm}\leq  \sum_{h=p+1}^\infty \sup_{\bm{\omega}\in\bm{\Omega}}\max_{p+1\leq k\leq d}|\ell_{h,k}(\bm{\omega})| \max_{1\leq i, j\leq N}  \left | \sum_{t=p+1}^{h \wedge T}\varepsilon_{i,t} y_{j,t-h} \right |\notag \\
	&\hspace{5mm}\leq 
	\sum_{h=p+1}^\infty \bar{\rho}^{h-p} \max_{1\leq i, j\leq N}  \left | \sum_{t=p+1}^{h \wedge T}\varepsilon_{i,t} y_{j,t-h} \right |,
\end{align}
and, similar to \eqref{eq:dvb4}, it can be verified that
\begin{align*}
	&\mathbb{P}\left \{\forall h\geq p+1:  \max_{1\leq i, j\leq N} \left |  \sum_{t=p+1}^{h\wedge T}\varepsilon_{i,t} y_{j,t-h} \right |  \geq \{2(h-p)\sigma^2+1\} \sqrt{8(h-p)\kappa_2\lambda_{\max}(\bm{\Sigma}_{\varepsilon})\log N}  \right \}\notag\\
	&\hspace{5mm}  \leq 5 e^{-4 \log N}.
\end{align*}
As a result, with probability at least $1-5 e^{-4 \log N}$, we have
\begin{align*}
	\sup_{\bm{\omega}\in\bm{\Omega}} \left \|\sum_{t=p+1}^{T}\bm{\varepsilon}_t \bm{x}_{1}^\top \{ \bm{L}^{\ma}_{[t-p]}(\bm{\omega}) \otimes \bm{I}_N\}\right \|_{\max} 
	&\leq \sum_{h=p+1}^{\infty}  \bar{\rho}^{h-p} \{2(h-p)\sigma^2+1\} \sqrt{8(h-p)\kappa_2\lambda_{\max}(\bm{\Sigma}_{\varepsilon})\log N} \\
	& \lesssim \sqrt{\kappa_2\lambda_{\max}(\bm{\Sigma}_{\varepsilon})\log N}.
\end{align*}
Furthermore, along the lines of  \eqref{eq:lemdvb2}, we can simultaneously derive the upper bound:
\[
\sup_{\bm{\phi}\in \bm{\Phi}_1} \frac{\left \| \sum_{t=p+1}^{T} \bm{\varepsilon}_t \bm{x}_{1}^\top  \left [\left \{\bm{L}^{\ma}_{[t-p]}(\bm{\omega}^*+\bm{\phi}) -\bm{L}^{\ma}_{[t-p]}(\bm{\omega}^*) \right \}\otimes\bm{I}_N\right ] \right \|_{\max} }{\|\bm{\phi}\|_2} \lesssim \sqrt{\kappa_2\lambda_{\max}(\bm{\Sigma}_{\varepsilon})\log N}.
\] 
In view of the above results,  with probability at least $1-5 e^{-4 \log N}$, we have
\begin{equation}\label{eq:S13}
	|S_{13}(\bm{\widehat{\Delta}})| \leq  C_5 ( \|\bm{\widehat{d}}_{\ma}\|_1  + \|\bm{g}_{\ma}^{*}\|_1 \|\bm{\widehat{\phi}}\|_2 ) \sqrt{\kappa_2\lambda_{\max}(\bm{\Sigma}_{\varepsilon})\log N},
\end{equation}
for some absolute constant $C_5>0$.

Let $C_{\init1}=2(C_1+C_2+C_3+C_5)$, where $C_i$'s are from \eqref{eq:S11}--\eqref{eq:S13}. By \eqref{eq:S10}--\eqref{eq:S13} and the fact that $\|\widehat{\bm{d}}_{\ar}\|_1+\|\bm{\widehat{d}}_{\ma}\|_1=\|\widehat{\bm{d}}\|_1$, we accomplish the proof of this lemma.

\subsection{Proof of Lemma \ref{lemma:init2} (Effect of initial values II)}

Similar to the proof of Lemma \ref{lemma:init1}, consider the partition
\begin{equation}\label{eq:S20}
	S_2(\bm{\widehat{\Delta}}) = \frac{2}{T}\sum_{t=2}^{T}\langle \sum_{h=t}^{\infty}\bm{A}_h^* \bm{y}_{t-h}, \sum_{k=1}^{t-1}\bm{\widehat{\Delta}}_k \bm{y}_{t-k} \rangle = \frac{2}{T}\sum_{i=1}^{3}S_{2i}(\bm{\widehat{\Delta}}),
\end{equation}
where
\begin{align*}
	&S_{21}(\bm{\widehat{\Delta}}) = \sum_{t=2}^{p+1}\langle \sum_{h=t}^{\infty}\bm{A}_h^* \bm{y}_{t-h}, \sum_{k=1}^{t-1}\bm{\widehat{\Delta}}_k \bm{y}_{t-k} \rangle = \sum_{t=2}^{p+1}\langle \sum_{h=t}^{\infty}\bm{A}_h^* \bm{y}_{t-h}, \sum_{k=1}^{t-1}\bm{\widehat{D}}_k \bm{y}_{t-k} \rangle,\\
	&S_{22}(\bm{\widehat{\Delta}}) = \sum_{t=p+2}^{T}\langle \sum_{h=t}^{\infty}\bm{A}_h^* \bm{y}_{t-h}, \sum_{k=1}^{p}\bm{\widehat{\Delta}}_k \bm{y}_{t-k} \rangle = \sum_{t=p+2}^{T}\langle \sum_{h=t}^{\infty}\bm{A}_h^* \bm{y}_{t-h}, \sum_{k=1}^{p}\bm{\widehat{D}}_k \bm{y}_{t-k} \rangle, \\
	&S_{23}(\bm{\widehat{\Delta}})  = \sum_{t=p+2}^{T}\langle \sum_{h=t}^{\infty}\bm{A}_h^* \bm{y}_{t-h}, \sum_{k=p+1}^{t-1}\bm{\widehat{\Delta}}_k \bm{y}_{t-k} \rangle,
\end{align*}
with $\bm{\widehat{D}}_h=\bm{\widehat{G}}_h-\bm{G}_h^*=\bm{\widehat{\Delta}}_h$ for $1\leq h\leq p$. Without loss of generality, we assume that $p\geq 1$; otherwise, $S_{21}(\bm{\widehat{\Delta}})$ will simply disappear. The above partition allows us to upper bound  $|S_{2i}(\bm{\widehat{\Delta}})|$ by arguments similar to that for $S_{1i}(\bm{\widehat{\Delta}})$ in the proof of Lemma \ref{lemma:init1}, for each $1\leq i\leq 3$.  

Specifically, we begin by considering  $S_{21}(\bm{\widehat{\Delta}})$. Note that
\begin{align}\label{eq:S21a}
	|S_{21}(\bm{\widehat{\Delta}})| &= \left | \sum_{k=1}^{p} \langle \sum_{t=k+1}^{p+1}  \sum_{h=t}^{\infty}  \bm{A}_h^* \bm{y}_{t-h}, \bm{\widehat{D}}_k \bm{y}_{t-k} \rangle \right |
	= \left | \sum_{k=1}^{p} \langle  \sum_{h=k+1}^{\infty}  \sum_{t=k+1}^{h\wedge(p+1)} \bm{A}_h^* \bm{y}_{t-h} \bm{y}_{t-k}^\top, \bm{\widehat{D}}_k \rangle \right | \notag\\ 
	&\leq \sum_{k=1}^{p} \|\vect(\bm{\widehat{D}}_k)\|_1 \left \|  \sum_{h=k+1}^{\infty}  \bm{A}_h^* \sum_{t=k+1}^{h\wedge(p+1)}  \bm{y}_{t-h} \bm{y}_{t-k}^\top  \right \|_{\max} \notag\\
	&\leq \|\bm{\widehat{d}}_{\ar}\|_1 \cdot \max_{1\leq k\leq p} \left \| \sum_{h=k+1}^{\infty}  \bm{A}_h^* \sum_{t=k+1}^{h\wedge(p+1)}  \bm{y}_{t-h} \bm{y}_{t-k}^\top \right \|_{\max}.
\end{align}
Let $\bm{a}_{i,h}^* \in\mathbb{R}^N$ denote the $i$th row vector of $\bm{A}_h^*$,  for $1\leq i\leq N$ and $h\geq 1$. We can show that
\begin{align} \label{eq:S21b}
	\max_{1\leq k\leq p} \left \| \sum_{h=k+1}^{\infty}  \bm{A}_h^* \sum_{t=k+1}^{h\wedge(p+1)}  \bm{y}_{t-h} \bm{y}_{t-k}^\top \right \|_{\max} &= \max_{1\leq k\leq p}  \max_{1\leq i, j\leq N} \left | \sum_{h=k+1}^{\infty} \sum_{t=k+1}^{h\wedge(p+1)}  y_{i,t-k} \bm{y}_{t-h}^\top \bm{a}_{j,h}^* \right | \notag\\
	&\leq \max_{1\leq k\leq p}  \sum_{h=k+1}^{\infty}  \max_{1\leq i, j\leq N}  \left | \sum_{t=k+1}^{h\wedge(p+1)}  y_{i,t-k} \bm{y}_{t-h}^\top \bm{a}_{j,h}^* \right | \notag\\
	&=  \max_{1\leq k\leq p}  \sum_{h=1}^{\infty} \max_{1\leq i, j\leq N} \left | \sum_{t=1}^{h\wedge(p+1-k)}  y_{i,t} \bm{y}_{t-h}^\top \bm{a}_{j,h+k}^* \right | \notag\\
	&\leq  \sum_{h=1}^{\infty} \max_{1\leq k\leq p}  \max_{1\leq i, j\leq N} \left | \sum_{t=1}^{h\wedge(p+1-k)}  y_{i,t} \bm{y}_{t-h}^\top \bm{a}_{j,h+k}^* \right |,
\end{align}
where the second last equality follows from a change of variables. For any fixed  $(i,h,k,j)$ with $1\leq i,j\leq N$, $1\leq k\leq p$ and $h\geq1$, note that $h\wedge(p+1-k)\leq p$.  

We first focus on the case where $h\geq p+1$.
Similar to \eqref{eq:reunivarj}, we can show that
\begin{equation*}
	\mathbb{P}\left \{\frac{1}{p}\left | \sum_{t=1}^{h\wedge(p+1-k)}  y_{i,t} \bm{y}_{t-h}^\top \bm{a}_{j,h+k}^* \right | \geq  \{(h-p)\sigma^2+1\}\kappa_2 \|\bm{a}_{j,h+k}^*\|_2 \right \} \leq 2e^{-c(h-p) T}.
\end{equation*}
By Lemma \ref{cor1}, for any $1\leq j\leq N$ and $h\geq p+1$ we have 
\begin{equation}\label{eq:adecay}
	\|\bm{a}_{j,h+k}^*\|_2 \leq C\bar{\rho}^{h-p},
\end{equation}
for some absolute constant $C>0$.
As a result, if $T\geq 4c^{-1} \log (N^2p)$, then
\begin{align*}
	&	\mathbb{P}\left \{ \max_{1\leq k\leq p}  \max_{1\leq i, j\leq N}  \left | \sum_{t=1}^{h\wedge(p+1-k)}  y_{i,t} \bm{y}_{t-h}^\top \bm{a}_{j,h+k}^* \right | \geq C\kappa_2  \{(h-p)\sigma^2+1\} p \bar{\rho}^{h-p} \right \} \\ 
	&\hspace{5mm} \leq 2N^2pe^{-c(h-p) T} \leq 2 e^{-4(h-p) \log(Np)},
\end{align*}
which  can be further strengthened to a union bound for all $h\geq p+1$ as follows:
\begin{align*}
	&\mathbb{P}\Bigg\{\forall h\geq p+1:   \max_{1\leq k\leq p}   \max_{1\leq i, j\leq N} \left | \sum_{t=1}^{h\wedge(p+1-k)}  y_{i,t} \bm{y}_{t-h}^\top \bm{a}_{j,h+k}^* \right |  \geq C\kappa_2  \{(h-p)\sigma^2+1\}  p \bar{\rho}^{h-p} \Bigg\}\notag\\
	&\hspace{5mm}  \leq \sum_{h=p+1}^{\infty} 2 e^{-4(h-p)\log (Np)} \leq 3 e^{-4\log (Np)},
\end{align*}
where the last inequality holds as long as $N\geq 2$. In addition, for each $1\leq h\leq p$, by a similar method, we can show that
\begin{align*}
	&\mathbb{P}\left \{  \max_{1\leq k\leq p}  \max_{1\leq i, j\leq N} \left | \sum_{t=1}^{h\wedge(p+1-k)}  y_{i,t} \bm{y}_{t-h}^\top \bm{a}_{j,h+k}^* \right |   \geq C\kappa_2  (2\sigma^2+1)p \right \}\\
	&\hspace{5mm} \leq  2 e^{-4\log (Np)}, 
\end{align*}
Combining the above results with \eqref{eq:S21a} and \eqref{eq:S21b}, we have   with probability at least $1-(3+4p) e^{-4\log (Np)}$,
\begin{equation}\label{eq:S21}
	|S_{21}(\bm{\widehat{\Delta}})| \lesssim   p \|\bm{\widehat{d}}_{\ar}\|_1 \kappa_2.
\end{equation}

Next, for $i=2$ and 3, the upper bound for $|S_{2i}(\bm{\widehat{\Delta}})|$ can be readily  established by combining techniques we have used above for $|S_{21}(\bm{\widehat{\Delta}})|$ and methods similar to those for $|S_{1i}(\bm{\widehat{\Delta}})|$ in the proof of Lemma \ref{lemma:init1}. That is, for each $i=2$ and 3, we can  show that
with probability at least $1- Cp e^{-c\log (Np)}$,
\begin{equation}\label{eq:S2i}
	|S_{2i}(\bm{\widehat{\Delta}})| \lesssim  p ( \|\bm{\widehat{d}}_{\ma}\|_1  + \|\bm{g}_{\ma}^{*}\|_1 \|\bm{\widehat{\phi}}\|_2 ) \kappa_2.
\end{equation}
Since the proof of this result follows closely the lines of \eqref{eq:S12} and \eqref{eq:S13} in the proof of Lemma \ref{lemma:init1} (with only slight modifications to exploit the decay property similar to  \eqref{eq:adecay}),  but will be rather tedious, we omit the details here. 

Combining \eqref{eq:S20}, \eqref{eq:S21}, \eqref{eq:S2i}, and the fact that $\|\widehat{\bm{d}}_{\ar}\|_1+\|\bm{\widehat{d}}_{\ma}\|_1=\|\widehat{\bm{d}}\|_1$, we accomplish the proof of this lemma.

\subsection{Proof of Lemma \ref{lemma:init3} (Effect of initial values III)}
For any $t\geq p+1$, let $\bm{\Delta}_{[t]}=(\bm{\Delta}_t, \bm{\Delta}_{t+1}, \dots)$ be the horizontal concatenation of $\{\bm{\Delta}_h\}_{h\geq t}$. Note that
\begin{align}\label{eq:S30}
	|S_3(\bm{\Delta})|=\frac{3}{T}\sum_{t=1}^{T}\Big \|\sum_{k=t}^{\infty}\bm{\Delta}_k\bm{y}_{t-k} \Big \|_2^2 &=\frac{3}{T} \Bigg \{\sum_{t=1}^{p} \Big \|\sum_{k=t}^{\infty}\bm{\Delta}_k\bm{y}_{t-k} \Big \|_2^2 + \underbrace{\sum_{t=p+1}^{T} \Big \|\sum_{k=t}^{\infty}\bm{\Delta}_k\bm{y}_{t-k} \Big \|_2^2 }_{ S_{33}(\bm{\Delta}) } \Bigg \} \notag \\&\leq \frac{3}{T} \left \{2\sum_{i=1}^{2}S_{3i}(\bm{\Delta}) + S_{33}(\bm{\Delta}) \right \},
\end{align}
where 
\begin{align*}
	S_{31}(\bm{\Delta}) & = \sum_{t=1}^{p} \Big \|\sum_{k=t}^{p}\bm{\Delta}_k\bm{y}_{t-k} \Big \|_2^2 = \sum_{t=1}^{p} \Big \|\sum_{k=t}^{p}\bm{D}_k\bm{y}_{t-k} \Big \|_2^2,\\
	S_{32}(\bm{\Delta}) & =\sum_{t=1}^{p} \Big \|\sum_{k=p+1}^{\infty}\bm{\Delta}_k\bm{y}_{t-k} \Big \|_2^2 = \sum_{t=1}^{p}\|\bm{\Delta}_{[p+1]} \bm{x}_{t-p}\|_2^2, \\
	S_{33}(\bm{\Delta}) & =\sum_{t=p+1}^{T} \Big \|\sum_{k=t}^{\infty}\bm{\Delta}_k\bm{y}_{t-k} \Big \|_2^2 = \sum_{t=p+1}^{T} \|\bm{\Delta}_{[t]}\bm{x}_1\|_2^2,
\end{align*}
with $\bm{D}_h=\bm{G}_h-\bm{G}_h^*=\bm{\Delta}_h$ for $1\leq h\leq p$. Without loss of generality, we assume that $p\geq 1$; otherwise, both $S_{31}(\bm{\Delta})$ and $S_{32}(\bm{\Delta})$ will simply disappear.

We first consider $S_{31}(\bm{\Delta})$. For any $k\geq 1$, denote $\bm{X}_0^k=(\bm{y}_0, \dots, \bm{y}_{1-k})$. It can be verified that 
\begin{align}\label{eq:S31a}
	S_{31}(\bm{\Delta}) & =\sum_{t=1}^{p} \langle \sum_{k=t}^{p}\bm{D}_k\bm{y}_{t-k}, \sum_{j=t}^{p}\bm{D}_j\bm{y}_{t-j} \rangle = \sum_{k=1}^{p}\sum_{j=1}^{p} \sum_{t=1}^{k\wedge j}\langle \bm{D}_k\bm{y}_{t-k}, \bm{D}_j\bm{y}_{t-j} \rangle \notag\\
	&\leq \sum_{k=1}^{p}\sum_{j=1}^{p} \left (\sum_{t=1}^{k\wedge j} \|\bm{D}_k\bm{y}_{t-k}\|_2^2\right )^{1/2}
	\left (\sum_{t=1}^{k\wedge j} \|\bm{D}_j\bm{y}_{t-j}\|_2^2\right )^{1/2} \notag\\
	&\leq \left \{\sum_{k=1}^{p} \left (\sum_{t=1}^{k} \|\bm{D}_k\bm{y}_{t-k}\|_2^2\right )^{1/2}\right \}^2 
	= \left (\sum_{k=1}^{p} \|\bm{D}_k\bm{X}_0^k\|_{\Fr}\right)^2. 
\end{align}

For each fixed $1\leq k\leq p$, we can apply techniques similar to those for the proof of claim (i) in Section  \ref{asec:rsc} to upper bound $\|\bm{D}_k\bm{X}_0^k\|_{\Fr}$.
Specifically, note that
\begin{equation}\label{eq:S31b}
	\frac{1}{k}\|\bm{D}_k\bm{X}_0^k\|_{\Fr}^2=\frac{1}{k}\trace( \bm{X}_0^{k\top}\bm{D}_k^\top\bm{D}_{k} \bm{X}_0^k)= \trace\left (\bm{D}_k \widehat{\bm{\Sigma}}_y^k \bm{D}_{k}^\top\right )=\vect(\bm{D}_{k}^\top)^\top  (\bm{I}_N \otimes \widehat{\bm{\Sigma}}_y^k ) \vect(\bm{D}_{k}^\top),
\end{equation}
where $\widehat{\bm{\Sigma}}_y^k=\bm{X}_0^k\bm{X}_0^{k\top}/k=k^{-1}\sum_{t=1}^{k}\bm{y}_{t-k} \bm{y}_{t-k}^\top$. Similar to \eqref{eq:hansonw}, by applying Lemmas \ref{lemma:hansonw}(ii) and \ref{lemma:Wcov}, where we take $T_0=0$, $T_1=k$, $\bm{w}_t=\bm{y}_{t-k}$, $\bm{M}=\bm{u}^\top$, and  $\eta=\log N/(108\sigma^2)$, we can derive 
the following pointwise bound: for any $\bm{u}\in\mathbb{R}^{N}$ with $\|\bm{u}\|_2\leq 1$,
\begin{equation*}
	\mathbb{P}\left \{ \bm{u}^\top  (\widehat{\bm{\Sigma}}_y^k - \bm{\Sigma}_y)\bm{u} \geq \kappa_2 \log N/108\right \} \leq 2e^{-c k \log N},
\end{equation*}
where $c=c_{\HW}\min\{(108\sigma^{2})^{-1}, (108\sigma^{2})^{-2}\}$. Let $\pazocal{K}(2K)=\{\bm{u}\in\mathbb{R}^{N}: \|\bm{u}\|_2\leq 1, \|\bm{u}\|_0\leq 2 K\}$ be a set of sparse vectors, where $K\geq1$ is an integer to be specified later. Then, by arguments similar to the proof of Lemma F.2 in \cite{basu2015regularized}, we can strengthen the above pointwise bound to the union bound as follows:
\begin{equation*}
	\mathbb{P}\left \{ \sup_{\bm{u}\in \pazocal{K}(2K)}\bm{u}^\top  (\widehat{\bm{\Sigma}}_y^k - \bm{\Sigma}_y)\bm{u} \geq  \kappa_2 \log N/108\right \} \leq 2e^{-c k \log N +2K\log N},
\end{equation*}
Now we choose $K=\lceil 0.25c k \log N \rceil$. Consequently, by Supplementary Lemma 12 in \cite{LW12}, we  have
\begin{equation*}
	\mathbb{P}\left \{ \forall \bm{u}\in\mathbb{R}^{N}: |\bm{u}^\top ( \widehat{\bm{\Sigma}}_y^k - \bm{\Sigma}_y) \bm{u}| \leq \frac{\kappa_2 \log N}{4} \|\bm{u}\|_2^2 + \frac{\kappa_2}{c k}
	\|\bm{u}\|_1^2 \right \} \geq 1-2e^ {-0.5c k \log N}.
\end{equation*}	
This further implies that 
\begin{equation*}
	\mathbb{P}\left \{ \forall \bm{u}\in\mathbb{R}^{N^2}: |\bm{u}^\top \{\bm{I}_N\otimes ( \widehat{\bm{\Sigma}}_y^k - \bm{\Sigma}_y)\} \bm{u}| \leq \frac{\kappa_2 \log N}{4} \|\bm{u}\|_2^2 + \frac{\kappa_2 }{c k}
	\|\bm{u}\|_1^2 \right \} \geq 1-2e^ {-0.5c k \log N}.
\end{equation*}	
Furthermore, by Lemma \ref{lemma:Wcov}, we have $|\bm{u}^\top (\bm{I}_N\otimes\bm{\Sigma}_y) \bm{u}| \leq \kappa_2 \|\bm{u}\|_2^2 \leq 2\kappa_2 \log N \|\bm{u}\|_2^2$ if $N\geq 2$. As a result, for any $1\leq k\leq p$, we have
\begin{align*}
	|\bm{u}^\top  (\bm{I}_N\otimes \widehat{\bm{\Sigma}}_y^k) \bm{u}| &\leq |\bm{u}^\top (\bm{I}_N\otimes\bm{\Sigma}_y) \bm{u}|+ |\bm{u}^\top \{\bm{I}_N\otimes ( \widehat{\bm{\Sigma}}_y^k - \bm{\Sigma}_y)\} \bm{u}|\\
	&\leq \frac{9\kappa_2 \log N }{4} \|\bm{u}\|_2^2 + \frac{\kappa_2}{c} \|\bm{u}\|_1^2, \quad \forall  \bm{u}\in\mathbb{R}^{N^2},
\end{align*}
with probability at least $1-2e^ {-0.5c \log N}$. Then,  applying the inequality $|x+y|^{1/2}\leq |x|^{1/2}+|y|^{1/2}$,   from the above result we further have 
\begin{equation*}
	\mathbb{P}\left \{ \forall \bm{u}\in\mathbb{R}^{N^2}:|\bm{u}^\top  (\bm{I}_N\otimes \widehat{\bm{\Sigma}}_y^k) \bm{u}|^{1/2} \leq \sqrt{\frac{9\kappa_2 \log N }{4}} \|\bm{u}\|_2 + \sqrt{\frac{\kappa_2}{c} }\|\bm{u}\|_1 \right \} \geq 1-2e^ {-0.5c \log N}.
\end{equation*}	
Thus, in view of \eqref{eq:S31b}, for any $1\leq k\leq p$, letting $\bm{u}=\vect(\bm{D}_{k}^\top)^\top$, we have
\begin{equation}\label{eq:S31c}
	\frac{\|\bm{D}_k\bm{X}_0^k\|_{\Fr}}{\sqrt{k}} \leq  \sqrt{\frac{9\kappa_2 \log N}{4} }\|\bm{D}_k\|_{\Fr} + \sqrt{\frac{ \kappa_2}{c} } \|\bm{D}_k\|_1, \quad \forall \bm{D}_k\in \mathbb{R}^{N\times N},
\end{equation}
with probability at least $1-2e^ {-0.5c \log N}$. This, together with \eqref{eq:S31a}, implies that
\begin{align}\label{eq:S31}
	S_{31}(\bm{\Delta}) &\leq  p\left ( \sqrt{\frac{9\kappa_2 \log N}{4} } \sum_{k=1}^{p} \|\bm{D}_k\|_{\Fr} + \sqrt{\frac{ \kappa_2}{c} } \sum_{k=1}^{p} \|\bm{D}_k\|_1  \right)^2 \notag\\
	& \lesssim (\kappa_2  p\log N) \|\bm{d}_{\ar}\|_{2}^2 + \kappa_2 p\|\bm{d}_{\ar}\|_{1}^2, \quad \forall \bm{\Delta}\in\bm{\Upsilon},
\end{align}
with probability at least $1-2e^ {-0.5c \log N}$.

Next we consider $S_{32}(\bm{\Delta})$. The method will be similar to that for Lemma \ref{lemma:rsclasso}. Specifically, by \eqref{eq:stackH} and \eqref{eq:Delta}, we can show that
\[
\bm{\Delta}_{[p+1]}=\bm{D}_{\ma}\{\bm{L}^\ma(\bm{\omega}^*) \otimes\bm{I}_N\}^\top+\bm{M}(\bm{\phi}) \{\bm{P}(\bm{\omega}^*)\otimes\bm{I}_N\}^\top+\bm{D}_{\ma}\{ \bm{Q}(\bm{\phi})\otimes\bm{I}_N\}^\top+\bm{G}_{\ma}^{*}\{ \bm{S}(\bm{\phi})\otimes\bm{I}_N\}^\top,
\]  
where  $\bm{P}(\bm{\omega}^*), \bm{Q}(\bm{\phi}), \bm{S}(\bm{\phi}) \in\mathbb{R}^{\infty \times (r+2s)}$ and 
$\bm{M}(\bm{\phi})\in\mathbb{R}^{N\times N(r+2s)}$ are defined as in the proof of Lemma \ref{lemma:rsclasso}. For simplicity, with a slight modification to the notation in \eqref{eq:ts-z}, we define  
\begin{align}\label{eq:S32notation}
	\begin{split}
		\bm{Z}_{-p}&=(\bm{z}_{1-p},\dots, \bm{z}_{0}), \quad \bm{z}_t =\left \{\bm{L}^\ma(\bm{\omega}^*)\otimes \bm{I}_N \right \}^\top\bm{x}_{t},\\
		\bm{V}_{-p}&=(\bm{v}_{1-p},\dots, \bm{v}_{0}), \quad \bm{v}_t =\left \{\bm{P}(\bm{\omega}^*)\otimes \bm{I}_N \right \}^\top\bm{x}_{t},\\
		\bm{H}_{-p}(\bm{\phi})&=(\bm{h}_{1-p}(\bm{\phi}),\dots, \bm{h}_{0}(\bm{\phi})), \quad \bm{h}_t(\bm{\phi})=\left \{\bm{Q}(\bm{\phi})\otimes \bm{I}_N \right \}^\top\bm{x}_{t},\\ 
		\bm{B}_{-p}(\bm{\phi})&=(\bm{b}_{1-p}(\bm{\phi}),\dots, \bm{b}_{0}(\bm{\phi})), \quad \bm{b}_{t}(\bm{\phi})=\left \{\bm{S}(\bm{\phi})\otimes \bm{I}_N \right \}^\top\bm{x}_{t},
	\end{split}
\end{align}
and $\bm{X}_{-p}=(\bm{x}_{1-p}, \dots, \bm{x}_{0})$.
Consequently,
\begin{equation*}
	\bm{\Delta}_{[p+1]}\bm{x}_{t}
	=\bm{D}_{\ma} \bm{z}_{t} + \bm{M}(\bm{\phi}) \bm{v}_{t} +\bm{D}_{\ma}\bm{h}_{t}(\bm{\phi}) + \bm{G}_{\ma}^{*}\bm{b}_{t}(\bm{\phi}),
\end{equation*}
and then
\[
\bm{\Delta}_{[p+1]}\bm{X}_{-p}=\bm{D}_{\ma} \bm{Z}_{-p} + \bm{M}(\bm{\phi}) \bm{V}_{-p} +\bm{D}_{\ma}\bm{H}_{-p}(\bm{\phi})+\bm{G}_{\ma}^{*}\bm{B}_{-p}(\bm{\phi}).
\]
Moreover, by the triangle inequality,
\begin{align}\label{eq:S32triangle}
	S_{32}^{1/2}(\bm{\Delta}) &=\left \{\sum_{t=1}^{p}\|\bm{\Delta}_{[p+1]}\bm{x}_{t-p}\|_{2}^2 \right \}^{1/2}= \|\bm{\Delta}_{[p+1]}\bm{X}_{-p}\|_{\Fr} \notag\\
	&\leq \|\bm{D}_{\ma} \bm{Z}_{-p}\|_{\Fr}+\|\bm{M}(\bm{\phi}) \bm{V}_{-p}\|_{\Fr}+\|\bm{D}_{\ma}\bm{H}_{-p}(\bm{\phi})\|_{\Fr}+\|\bm{G}_{\ma}^{*}\bm{B}_{-p}(\bm{\phi})\|_{\Fr}.
\end{align}
Now our task is to upper bound  each of the four terms on the right-hand side of \eqref{eq:S32triangle}. It is worth noting the resemblance of the above terms to those in \eqref{eq:triangle}. In fact, although claim (i) in the proof of Lemma \ref{lemma:rsclasso} focuses on the lower bound, similar techniques can be used to derive an upper bound for $\|\bm{D}_{\ma} \bm{Z}_{-p}\|_{\Fr}$; see also the arguments that lead to \eqref{eq:S31c} above. Specifically, we can show that 
\begin{equation}\label{eq:S32a}
	\frac{\|\bm{D}_{\ma} \bm{Z}_{-p}\|_{\Fr}}{\sqrt{p}} \leq  \sqrt{\frac{9(r+2s)\kappa_2 \log (Np)}{4} }\|\bm{d}_{\ma}\|_{2} + \sqrt{\frac{(r+2s)\kappa_2}{c} } \|\bm{d}_{\ma}\|_1, \quad \forall \bm{d}_{\ma}\in \mathbb{R}^{N^2 p},
\end{equation}
with probability at least $1-2e^ {-0.5c \log (Np)}$.  

Furthermore, by arguments similar to those for \eqref{eq:hansonw1}, we have for any  $\bm{M} \in  \mathbb{R}^{N \times N(r+2s)}$ the pointwise bound:
\begin{equation*}
	\mathbb{P}\left( \frac{\|\bm{M} \bm{V}_{-p}\|_{\Fr} }{\sqrt{p}} \leq \sqrt{ \widetilde{\kappa}_2 \{1+ \log (Np)\}} \|\bm{M}\|_{\Fr}  \right)  
	\geq 1- 2e^{-2c\widetilde{\kappa}_1^2 p\log(Np)/\{(r+2s)\kappa_2\}^2}.
\end{equation*}
To strengthen it to a union bound that holds for all $\bm{M}\in\bm{\Xi}_1$, consider a minimal generalized $1/2$-net $\bm{\bar{\Xi}}(1/2)$ of $\bm{\Xi}_1$ in the Frobenius norm. 
By  Lemma \ref{lemma:epsilon-net}(ii), any $\bm{M}\in\bm{\bar{\Xi}}(1/2)$ satisfies $\|\bm{M}\|_{\Fr}\leq u_\phi / l_\phi$. 
Then, by the discretization and covering number in Lemma \ref{lemma:epsilon-net}, we can show that
\begin{align*}
	&\mathbb{P}\left [ \sup_{\bm{M}\in\bm{\Xi}_1}\frac{\|\bm{M} \bm{V}_{-p}\|_{\Fr}}{\sqrt{p}}\geq 2 (u_\phi / l_\phi)\sqrt{ \widetilde{\kappa}_2 \{1+ \log (Np)\} } \right ]\\
	&\leq \mathbb{P}\left[ \max_{ \bm{M}\in \bm{\bar{\Xi}}(1/2)} \frac{\|\bm{M} \bm{V}_{-p}\|_{\Fr}}{\sqrt{p}}\geq  (u_\phi / l_\phi) \sqrt{ \widetilde{\kappa}_2\{1+ \log (Np)\}} \right]\\
	&\leq e^{(r+2s) \log(6/c_{\bm{M}})} \max_{\bm{M}\in\bm{\bar{\Xi}}(1/2)} \mathbb{P} \left [\frac{\|\bm{M} \bm{V}_{-p}\|_{\Fr}}{\sqrt{p}} \geq  (u_\phi / l_\phi) \sqrt{ \widetilde{\kappa}_2 \{1+ \log (Np)\}}  \right ] \notag\\
	&\leq  2\exp\left[- 2c\widetilde{\kappa}_1^2 p\log(Np)/\{(r+2s)\kappa_2\}^2 +(r+2s) \log(6u_\phi/l_\phi) \right].
\end{align*}
Combining this  with 
\eqref{eq:Xi1} and the upper bound in \eqref{eq:DFr},
under the condition that $\log(Np)\geq  c^{-1} (r+2s)^2 (\kappa_2/\widetilde{\kappa}_1)^2
\log (6 u_\phi/l_\phi)$, we have
\begin{equation}\label{eq:S32b}
	\sup_{\bm{\phi}\in\bm{\Phi}} \frac{\|\bm{M}(\bm{\phi}) \bm{V}_{-p}\|_{\Fr}}{\sqrt{p}\| \bm{\phi}\|_2} 
	\leq
	(u_\phi / l_\phi^2) \sqrt{\widetilde{\kappa}_2 \{1+ \log (Np)\} }, 
\end{equation}
with probability at least $1- 2e^{-c \widetilde{\kappa}_1^2 p\log(Np)/\{(r+2s)\kappa_2\}^2 }$.  

We can also derive upper bounds for the third and last terms in \eqref{eq:S32triangle} by slightly modifying the proofs of claims (iii) and (iv) in the proof of Lemma \ref{lemma:rsclasso}, respectively. Denote $\widehat{\bm{\Sigma}}_h^{p}(\bm{\phi})=\bm{H}_{-p}(\bm{\phi})\bm{H}_{-p}^\top(\bm{\phi})/p=p^{-1}\sum_{t=1}^{p}\bm{h}_{t-p}(\bm{\phi}) \bm{h}_{t-p}^\top(\bm{\phi})$. Along the lines of \eqref{eq:alem1} we can show that for any fixed $\bm{u}\in\mathbb{R}^{N(r+2s)}$,     if $p\log \{N(r+2s)\}\geq \max\{1, c_{\HW}^{-1}\log 2\}$,  then with probability at least $1-4e^{-c_{\HW} p\log\{N(r+2s)\}}$,
\begin{equation*}
	\sup_{\bm{\phi}\in \bm{\Phi}_1} \frac{|\bm{u}^\top  \widehat{\bm{\Sigma}}_h^p(\bm{\phi}) \bm{u} |}{\| \bm{\phi}\|_2^2} \leq   C_4\widetilde{\kappa}_2 \|\bm{u}\|_2^2 \log\{N(r+2s)\},
\end{equation*}
where $C_4>0$ is the absolute constant defined as in the proof of Lemma \ref{lemma:rsclasso}.  Note that, however, a bit different from \eqref{eq:alem1},  the above result is obtained by taking $\eta=\log \{N(r+2s)\}$ when applying Lemma \ref{alem:rsc3aux}. Then, by a method similar to that for  \eqref{eq:reclaim3} but taking the sparsity level $K=\lceil 0.25c_{\HW} p\log \{N(r+2s)\}\rceil$, we can show that  with probability at least $1-4e^{-0.5c_{\HW} p\log\{N(r+2s)\}}$,
\begin{equation*}
	\sup_{\bm{\phi}\in \bm{\Phi}_1} \frac{|\bm{u}^\top  \widehat{\bm{\Sigma}}_h^p(\bm{\phi}) \bm{u} |}{\| \bm{\phi}\|_2^2}   \leq C_4\widetilde{\kappa}_2\left [\log\{N(r+2s)\}\|\bm{u}\|_2^2+\frac{4}{c_{\HW} p}  \|\bm{u}\|_1^2\right], \quad \forall \bm{u}\in\mathbb{R}^{N(r+2s)}.
\end{equation*}
Thus, analogous to the result of claim (iii) in the proof of Lemma \ref{lemma:rsclasso}, it then follows that
\begin{equation}\label{eq:S32c}
	\sup_{\bm{\phi}\in \bm{\Phi}_1} \frac{ \|\bm{D}_{\ma}\bm{H}_{-p}(\bm{\phi})\|_{\Fr}^2 }{p \|\bm{\phi}\|_2^2} \leq C_4 \widetilde{\kappa}_2 \left [ \log\{N(r+2s)\} \|\bm{d}_{\ma}\|_{2}^2+ \frac{ 4}{c_{\HW} p} \|\bm{d}_{\ma}\|_1^2 \right ], \quad \forall \bm{d}_{\ma}\in \mathbb{R}^{N^2 (r+2s)},
\end{equation}
with probability at least $1-4e^{-0.5c_{\HW} p\log\{N(r+2s)\}}$. In addition, we can  derive an upper bound for the last term in \eqref{eq:S32triangle} by a slight modification to the proof of claim (iv) in Section \ref{asec:rsc} in the same spirit as above.  The key is to apply Lemma \ref{alem:rsc3aux} with $\eta=(2\log N)/(c_{\HW} p)$. It can be readily verified that if $2\log N\geq c_{\HW} p$, then
\begin{equation}\label{eq:S32d}
	\sup_{\bm{\phi}\in \bm{\Phi}_1}	\frac{\|\bm{G}_{\ma}^{*} \bm{B}_{-p}(\bm{\phi})\|_{\Fr}^2}{p \|\bm{\phi}\|_2^4} \leq  C_4 \overline{\alpha}_\ma^2  (r+2s) \widetilde{\kappa}_2 \cdot \frac{2\log N}{c_{\HW} p},
\end{equation}
with probability at least  $1-4 e^{-\log N}$. Therefore, in view of \eqref{eq:S32triangle}--\eqref{eq:S32d}, by a method similar to that for the proof of Lemma \ref{lemma:rsclasso}, we can show that 
\begin{equation}\label{eq:S32}
	S_{32}(\bm{\Delta})\lesssim \{\widetilde{\kappa}_2 (r+2s) p \log \{N(p\vee1)\}\} \|\bm{\Delta}\|_{\Fr}^2  + \widetilde{\kappa}_2 p \|\bm{d}_{\ma}\|_1^2, \quad \forall \bm{\Delta}\in\bm{\Upsilon},
\end{equation}
with probability at least $1-2e^ {-0.5c \log (Np)}- 2e^{-c(\widetilde{\kappa}_1/\widetilde{\kappa}_2)^2 p\log(Np) }- 4e^{-0.5c_{\HW} p\log\{N(r+2s)\}}-4 e^{-\log N}=1-Ce^ {-c (\widetilde{\kappa}_1/\widetilde{\kappa}_2)^2 p \log \{N(p\vee1)\}}$.

Lastly, we derive an upper bound for  $S_{33}(\bm{\Delta})$. In fact, the method will be very similar to that for $S_{32}(\bm{\Delta})$.  For any $h\geq1$, let $\bm{L}^{\ma}_{[h]}(\bm{\omega})$ be the matrix obtained by removing the first $h-1$ rows of $\bm{L}^{\ma}(\bm{\omega})$. Similarly, let  
$\bm{P}_{[h]}(\bm{\omega}^*), \bm{Q}_{[h]}(\bm{\phi})$, and $\bm{S}_{[h]}(\bm{\phi})$  be the matrices obtained by removing the first $h-1$ rows of 
$\bm{P}(\bm{\omega}^*), \bm{Q}(\bm{\phi})$, and $\bm{S}(\bm{\phi})$, respectively. Then for any $t\geq p+1$, we have
\begin{align*}
	\bm{\Delta}_{[t]}&=\bm{D}_{\ma}\{ \bm{L}^\ma_{[t-p]}(\bm{\omega}^*)\otimes\bm{I}_N\}^\top +\bm{M}(\bm{\phi}) \{\bm{P}_{[t-p]}(\bm{\omega}^*)\otimes\bm{I}_N\}^\top\\
	&\hspace{5mm}+\bm{D}_{\ma}\{ \bm{Q}_{[t-p]}(\bm{\phi})\otimes\bm{I}_N\}^\top+\bm{G}_{\ma}^{*}\{ \bm{S}_{[t-p]}(\bm{\phi})\otimes\bm{I}_N\}^\top.
\end{align*}
As a result, we can show that
\begin{equation*}
	\bm{\Delta}_{[t]}\bm{x}_{1}
	=\bm{D}_{\ma} \bm{\widetilde{z}}_{t} + \bm{M}(\bm{\phi}) \bm{\widetilde{v}}_{t} +\bm{D}_{\ma}\bm{\widetilde{h}}_{t}(\bm{\phi}) + \bm{G}_{\ma}^{*}\bm{\widetilde{b}}_{t}(\bm{\phi}),
\end{equation*}
and further
\begin{align}\label{eq:S33triangle}
	S_{33}(\bm{\Delta}) &=\sum_{t=p+1}^{T}\|\bm{\Delta}_{[t]}\bm{x}_{1}\|_{2}^2 \notag\\
	&\leq 4\sum_{t=p+1}^{T} \left \{\|\bm{D}_\ma \bm{\widetilde{z}}_{t}\|_{2}^2 + \|\bm{M}(\bm{\phi}) \bm{\widetilde{v}}_{t}\|_2^2 +\|\bm{D}_{\ma}\bm{\widetilde{h}}_{t}(\bm{\phi})\|_2^2 + \|\bm{G}_{\ma}^{*}\bm{\widetilde{b}}_{t}(\bm{\phi})\|_2^2 \right \} \notag\\
	&= 4 \left \{\|\bm{D}_{\ma} \bm{\widetilde{Z}}\|_{\Fr}^2+\|\bm{M}(\bm{\phi}) \bm{\widetilde{V}}\|_{\Fr}^2+\|\bm{D}_{\ma}\bm{\widetilde{H}}(\bm{\phi})\|_{\Fr}^2+\|\bm{G}_{\ma}^{*}\bm{\widetilde{B}}(\bm{\phi})\|_{\Fr}^2 \right \},
\end{align}
where
\begin{align*}
	\begin{split}
		\bm{\widetilde{Z}}&=(\bm{\widetilde{z}}_{p+1},\dots, \bm{\widetilde{z}}_{T}), \quad \bm{\widetilde{z}}_t =\left \{\bm{L}^\ma_{[t-p]}(\bm{\omega}^*)\otimes \bm{I}_N \right \}^\top\bm{x}_{1},\\
		\bm{\widetilde{V}}&=(\bm{\widetilde{v}}_{p+1},\dots, \bm{\widetilde{v}}_{T}), \quad \bm{\widetilde{v}}_t =\left \{\bm{P}_{[t-p]}(\bm{\omega}^*)\otimes \bm{I}_N \right \}^\top\bm{x}_{1},\\
		\bm{\widetilde{H}}(\bm{\phi})&=(\bm{\widetilde{h}}_{p+1}(\bm{\phi}),\dots, \bm{\widetilde{h}}_{T}(\bm{\phi})), \quad \bm{\widetilde{h}}_t(\bm{\phi})=\left \{\bm{Q}_{[t-p]}(\bm{\phi})\otimes \bm{I}_N \right \}^\top\bm{x}_{1},\\ 
		\bm{\widetilde{B}}(\bm{\phi})&=(\bm{\widetilde{b}}_{p+1}(\bm{\phi}),\dots, \bm{\widetilde{b}}_{T}(\bm{\phi})), \quad \bm{\widetilde{b}}_{t}(\bm{\phi})=\left \{\bm{S}_{[t-p]}(\bm{\phi})\otimes \bm{I}_N \right \}^\top\bm{x}_{1}.
	\end{split}
\end{align*}
It then remains to derive upper bounds for each of the four summands in \eqref{eq:S33triangle}. Despite the resemblance of the above to \eqref{eq:S32triangle}, it is important to recognize that $\bm{\widetilde{z}}_t, \bm{\widetilde{v}}_t, \bm{\widetilde{h}}_t(\bm{\phi})$ and $\bm{\widetilde{b}}_t(\bm{\phi})$ are not stationary, unlike $\bm{z}_t, \bm{v}_t, \bm{h}_t(\bm{\phi})$ and $\bm{b}_t(\bm{\phi})$. Indeed, the key to establishing upper bounds for the terms in \eqref{eq:S33triangle} is to exploit the property that the magnitude of these variables diminishes exponentially fast as $t$ increases.   For succinctness, we will demonstrate the key trick using $\|\bm{D}_{\ma} \bm{\widetilde{Z}}\|_{\Fr}^2$ as an example. The other three summands in \eqref{eq:S33triangle} can be handled by using the same trick in conjunction with methods for upper bounding the analogous terms in \eqref{eq:S32triangle}. 

Note that  by the Cauchy-Schwarz inequality,
\begin{align*}
	\left \{\sum_{t=p+1}^{h} \Big\| \sum_{k=p+1}^{d} \ell_{h,k}(\bm{\omega}^*)\bm{D}_k\bm{y}_{t-h} \Big\|_2^2\right \}^{1/2} & \leq \sqrt{r+2s} \left \{\sum_{t=p+1}^{h} \sum_{k=p+1}^{d} |\ell_{h,k}(\bm{\omega}^*)|^2 \|\bm{D}_k\bm{y}_{t-h} \|_2^2\right \}^{1/2}\\
	& \leq \bar{\rho}^{h-p} \sqrt{r+2s}  \left \{  \sum_{k=p+1}^{d} \sum_{t=p+1}^{h} \|\bm{D}_k\bm{y}_{t-h} \|_2^2\right \}^{1/2}\\
	&\leq \bar{\rho}^{h-p} \sqrt{r+2s}  \sum_{k=p+1}^{d} \left \{   \sum_{t=p+1}^{h} \|\bm{D}_k\bm{y}_{t-h} \|_2^2\right \}^{1/2}\\
	&=\bar{\rho}^{h-p} \sqrt{r+2s} \sum_{k=p+1}^{d} \| \bm{D}_k \bm{X}_0^{h-p}\|_{\Fr},
\end{align*}
where $\bm{X}_0^{h-p}=(\bm{y}_{p+1-h}, \dots, \bm{y}_{0})$. This leads to 
\begin{align*}
	\|\bm{D}_{\ma} \bm{\widetilde{Z}}\|_{\Fr}^2 &
	= \sum_{t=p+1}^{T} \Big\|\sum_{k=p+1}^{d} \bm{D}_k\sum_{h=t}^{\infty}\ell_{h,k}(\bm{\omega}^*)\bm{y}_{t-h} \Big \|_2^2\\
	&=\sum_{t=p+1}^{T} \Big\langle \sum_{h=t}^{\infty} \sum_{k=p+1}^{d} \ell_{h,k}(\bm{\omega}^*)\bm{D}_k\bm{y}_{t-h}, \sum_{h=t}^{\infty}\sum_{i=p+1}^{d} \ell_{h,i}(\bm{\omega}^*)\bm{D}_i\bm{y}_{t-i} \Big \rangle\\
	&= \sum_{h=p+1}^{\infty} \sum_{h=p+1}^{\infty} \sum_{t=p+1}^{h\wedge h\wedge T} \Big\langle \sum_{k=p+1}^{d} \ell_{h,k}(\bm{\omega}^*)\bm{D}_k\bm{y}_{t-h}, \sum_{i=p+1}^{d} \ell_{h,i}(\bm{\omega}^*)\bm{D}_i\bm{y}_{t-i} \Big \rangle\\
	&\leq \left [\sum_{h=p+1}^{\infty} \left \{\sum_{t=p+1}^{h\wedge T} \Big\| \sum_{k=p+1}^{d} \ell_{h,k}(\bm{\omega}^*)\bm{D}_k\bm{y}_{t-h} \Big\|_2^2\right \}^{1/2} \right ]^2\\
	&\leq (r+2s) \left [ \sum_{k=p+1}^{d} \sum_{h=p+1}^{\infty} \bar{\rho}^{h-p} \| \bm{D}_k \bm{X}_0^{h-p}\|_{\Fr}  \right ]^2.
\end{align*}
By Lemma \ref{alem:rsc3aux} and a method similar to that for \eqref{eq:S31c},  for any fixed $p+1\leq k\leq d$, we can show that  
\begin{equation*}
	\frac{\|\bm{D}_k\bm{X}_0^{h-p}\|_{\Fr}}{\sqrt{h-p}} \leq  \sqrt{\frac{9\kappa_2 (h-p)\log N}{4} }\|\bm{D}_k\|_{\Fr} + \sqrt{\frac{ \kappa_2 (h-p)}{c} } \|\bm{D}_k\|_1, \quad \forall \bm{D}_k\in \mathbb{R}^{N\times N}, \forall h\geq p+1
\end{equation*}
with probability at least $1-4e^{-0.5c_{\HW} \log N}$. As a result, we have
\[
\|\bm{D}_{\ma} \bm{\widetilde{Z}}\|_{\Fr}^2  \lesssim (r+2s)\left \{(\kappa_2  \log N) \|\bm{d}_{\ma}\|_{2}^2 + \kappa_2 \|\bm{d}_{\ma}\|_{1}^2 \right \}, \quad \forall \bm{d}_{\ma}\in \mathbb{R}^{N^2 (r+2s)},
\]
with probability at least $1-4(r+2s)e^{-0.5c_{\HW} \log N}$.  Along the same lines, we can establish upper bounds for  the other three summands in \eqref{eq:S33triangle} and obtain
\begin{equation}\label{eq:S33}
	S_{33}(\bm{\Delta})\lesssim (r+2s)\left \{(\kappa_2 \log N) \|\bm{\Delta}\|_{\Fr}^2  + \kappa_2 \|\bm{d}_{\ma}\|_1^2\right \}, \quad \forall \bm{\Delta}\in\bm{\Upsilon},
\end{equation}
with probability at least $1-C (r+2s) e^ {-c (\widetilde{\kappa}_1/\widetilde{\kappa}_2)^2 p \log \{N(p\vee1)\}}$. 

Finally, note that $\widetilde{\kappa}_i \asymp \kappa_i$ for $i=1,2$. Thus, combining  \eqref{eq:S30}, \eqref{eq:S31}, \eqref{eq:S32} and \eqref{eq:S33}, we have
\begin{equation*}
	|S_{3}(\bm{\Delta})| \leq  \frac{ C_{\init3}  \kappa_2 (r+2s) }{T} \left(  \|\bm{\Delta}\|_{\Fr}^2 \log N +\|\bm{d}\|_1^2 \right), \quad \forall \bm{\Delta}\in\bm{\Upsilon},
\end{equation*}
with probability at least $1-C (r+2s) e^ {-c (\kappa_1/\kappa_2)^2 p \log \{N(p\vee1)\}}$. Since $\widehat{\bm{\Delta}} \in\bm{\Upsilon}$, the proof is complete.

\subsection{Additional lemmas for proofs of Lemmas \ref{lemma:devb}--\ref{lemma:init3}} \label{asec:aux2}
This section contains several lemmas used to establish Lemmas \ref{lemma:devb}--\ref{lemma:init3}. Their proofs are given in Section \ref{asec:auxproof}.

Firstly, in Lemmas \ref{lemma:hansonw}--\ref{lemma:Wcov} below, we adopt the following notations.  Let $\{\bm{w}_t\}$ be a generic time series taking values in $\mathbb{R}^M$, 
where $M$ is an arbitrary positive integer. If $\{\bm{w}_t\}$ is stationary with mean zero, then we denote the  covariance matrix of $\bm{w}_t$ by
$\bm{\Sigma}_w = \mathbb{E}(\bm{w}_t\bm{w}_t^\top)$.
In addition,  let $\underline{\bm{w}}_T=(\bm{w}_T^\top, \dots, \bm{w}_1^\top)^\top$, and denote its covariance matrix by
\[
\underline{\bm{\Sigma}}_w=\mathbb{E}(\underline{\bm{w}}_T\underline{\bm{w}}_T^\top)=\left (\bm{\Sigma}_w(j-i)\right )_{1\leq i,j\leq T},
\]
where  $\bm{\Sigma}_w(\ell) = \mathbb{E}(\bm{w}_t\bm{w}_{t-\ell}^\top)$ is the lag-$\ell$ autocovariance matrix of $\bm{w}_t$ for $\ell\in\mathbb{Z}$, and $\bm{\Sigma}_w(0)=\bm{\Sigma}_w$. 
For a particular time series $\{\bm{y}_t\}$, accordingly we define $\bm{\Sigma}_y=\mathbb{E}(\bm{y}_t\bm{y}_t^\top)$ and  $\underline{\bm{\Sigma}}_y=\mathbb{E}(\underline{\bm{y}}_T\underline{\bm{y}}_T^\top)=\left (\bm{\Sigma}_y(j-i)\right )_{1\leq i,j\leq T}$, where  $\underline{\bm{y}}_T=(\bm{y}_{T}^\top, \dots, \bm{y}_{1}^\top)^\top$,  $\bm{\Sigma}_y (\ell) = \mathbb{E}(\bm{y}_t\bm{y}_{t-\ell}^\top)$  is the lag-$\ell$ covariance matrix of $\bm{y}_t$ for $\ell\in \mathbb{Z}$, and $\bm{\Sigma}_y=\bm{\Sigma}_y(0)$.

\begin{lemma}[Hanson-Wright inequalities for stationary time series]\label{lemma:hansonw}
	Suppose that Assumption \ref{assum:error}  holds for $\{\bm{\varepsilon}_t\}$, and $\{\bm{w}_t\}$ is a time series with the VMA($\infty$) representation,
	\[
	\bm{w}_t= \sum_{j=1}^\infty \bm{\Psi}_j^w  \bm{\varepsilon}_{t-j},
	\]
	where 
	$\bm{\Psi}_j^w\in\mathbb{R}^{M\times N}$ for all $j$, and $\sum_{j=1}^{\infty}\|\bm{\Psi}_j^w \|_{\op}<\infty$. Let  $T_0$ be a fixed integer, and let $T_1$  be a fixed positive integer.
	Then, for any $\bm{M}\in\mathbb{R}^{Q\times M}$ with $Q\geq 1$ and any $\eta>0$, it holds
	\begin{equation*}
		\mathbb{P}\left \{\left |\frac{1}{T_1}\sum_{t=T_0+1}^{T_0+T_1}\|\bm{M}\bm{w}_t\|_2^2 - \mathbb{E}\left (\|\bm{M}\bm{w}_t\|_2^2\right )\right | \geq \eta \sigma^2 \lambda_{\max}(\underline{\bm{\Sigma}}_w)\|\bm{M}\|_{\Fr}^2\right \} \leq 2e^{-c_{\HW}\min(\eta, \eta^2)T_1}.
	\end{equation*}
	\end{lemma}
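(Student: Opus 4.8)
\textbf{Proof plan for Lemma \ref{lemma:hansonw}.}
The plan is to reduce the statement to a single Hanson--Wright inequality for a Gaussian-like quadratic form and then transfer it to the sub-Gaussian setting via a comparison argument. First I would stack the relevant observations. Write $\bm{w}_t = \sum_{j\ge 1}\bm{\Psi}_j^w \bm{\varepsilon}_{t-j}$ and, recalling $\bm{\varepsilon}_t=\bm{\Sigma}_\varepsilon^{1/2}\bm{\xi}_t$ with $\bm{\xi}_t$ having i.i.d.\ mean-zero, $\sigma^2$-sub-Gaussian, mutually independent coordinates, express the block vector $(\bm{w}_{T_0+1}^\top,\dots,\bm{w}_{T_0+T_1}^\top)^\top$ as a (doubly infinite) linear image $\bm{\Lambda}\bm{\zeta}$ of a single infinite i.i.d.\ sub-Gaussian sequence $\bm{\zeta}=(\bm{\xi}_s)_{s}$, where $\bm{\Lambda}$ collects the coefficient matrices $\bm{\Psi}_j^w\bm{\Sigma}_\varepsilon^{1/2}$ shifted appropriately. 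Then $\sum_{t=T_0+1}^{T_0+T_1}\|\bm{M}\bm{w}_t\|_2^2 = \bm{\zeta}^\top \bm{\Lambda}^\top (\bm{I}_{T_1}\otimes \bm{M}^\top\bm{M})\bm{\Lambda}\bm{\zeta} =: \bm{\zeta}^\top \bm{Q}\bm{\zeta}$ is a quadratic form in $\bm{\zeta}$ with $\mathbb{E}(\bm{\zeta}^\top\bm{Q}\bm{\zeta}) = \sigma^2\trace(\bm{Q})$ equal to $\sigma^{-2}$ times... actually equal to $\sum_t \mathbb{E}\|\bm{M}\bm{w}_t\|_2^2$ after accounting for $\var(\bm{\xi}_s)=\bm{I}_N$. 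The summability $\sum_j\|\bm{\Psi}_j^w\|_\op<\infty$ guarantees $\bm{\Lambda}$ is a bounded operator, so $\bm{Q}$ is trace-class and the quadratic form is well defined a.s.

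The core estimate is the Hanson--Wright inequality: for a sub-Gaussian vector $\bm{\zeta}$ with independent coordinates and a symmetric PSD (here) matrix $\bm{Q}$,
\[
\mathbb{P}\bigl(|\bm{\zeta}^\top\bm{Q}\bm{\zeta}-\mathbb{E}\bm{\zeta}^\top\bm{Q}\bm{\zeta}| \ge u\bigr) \le 2\exp\Bigl(-c\min\Bigl(\tfrac{u^2}{\sigma^4\|\bm{Q}\|_\Fr^2},\tfrac{u}{\sigma^2\|\bm{Q}\|_\op}\Bigr)\Bigr),
\]
which extends to the infinite-dimensional trace-class case by a standard truncation/limit argument (apply to the $n\times n$ principal block of $\bm{Q}$ and let $n\to\infty$, using trace-class convergence of $\|\bm{Q}\|_\Fr,\|\bm{Q}\|_\op$ and dominated convergence). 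Next I would bound the two norms of $\bm{Q}$. For the operator norm, $\|\bm{Q}\|_\op \le \|\bm{\Lambda}\|_\op^2\,\|\bm{I}_{T_1}\otimes\bm{M}^\top\bm{M}\|_\op = \|\bm{\Lambda}\|_\op^2\|\bm{M}\|_\op^2$; a spectral-density argument (the block-Toeplitz structure of $\bm{\Lambda}^\top\bm{\Lambda}$, or directly $\lambda_{\max}(\underline{\bm\Sigma}_w)$ since $\bm{\Lambda}\bm{\Sigma}_\zeta\bm{\Lambda}^\top$ relates to $\underline{\bm\Sigma}_w$) gives $\|\bm\Lambda\|_\op^2 \lesssim \lambda_{\max}(\underline{\bm\Sigma}_w)/\sigma^2$ up to an absolute constant, so $\|\bm{Q}\|_\op \lesssim \sigma^{-2}\lambda_{\max}(\underline{\bm\Sigma}_w)\|\bm M\|_\op^2$. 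For the Frobenius norm, $\|\bm{Q}\|_\Fr^2 = \trace(\bm{Q}^2) \le \|\bm{Q}\|_\op\trace(\bm{Q})$, and $\trace(\bm{Q}) = \sigma^{-2}\sum_t\mathbb{E}\|\bm M\bm w_t\|_2^2 \le \sigma^{-2}T_1\lambda_{\max}(\underline{\bm\Sigma}_w)\|\bm M\|_\Fr^2$ since $\mathbb{E}\|\bm M\bm w_t\|_2^2 = \trace(\bm M\bm\Sigma_w\bm M^\top) \le \lambda_{\max}(\bm\Sigma_w)\|\bm M\|_\Fr^2 \le \lambda_{\max}(\underline{\bm\Sigma}_w)\|\bm M\|_\Fr^2$. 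Combining, $\|\bm Q\|_\Fr^2 \lesssim \sigma^{-4}T_1\lambda_{\max}(\underline{\bm\Sigma}_w)^2\|\bm M\|_\Fr^2\|\bm M\|_\op^2 \le \sigma^{-4}T_1\lambda_{\max}(\underline{\bm\Sigma}_w)^2\|\bm M\|_\Fr^4$.

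Finally I would plug $u = \eta\sigma^2\lambda_{\max}(\underline{\bm\Sigma}_w)\|\bm M\|_\Fr^2 T_1$ into the Hanson--Wright bound: the first term in the $\min$ becomes $c\,u^2/(\sigma^4\|\bm Q\|_\Fr^2) \gtrsim c\,\eta^2 T_1$, and the second becomes $c\,u/(\sigma^2\|\bm Q\|_\op) \gtrsim c\,\eta T_1$ (here the $\|\bm M\|_\op \le \|\bm M\|_\Fr$ comparison absorbs the mismatch between $\|\bm M\|_\op$ and $\|\bm M\|_\Fr$ cleanly in the PSD setting since only the ratios matter), yielding $\mathbb{P}(\dots) \le 2\exp(-c_{\mathrm{HW}}\min(\eta,\eta^2)T_1)$ after dividing through by $T_1$. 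The main obstacle is the first paragraph's bookkeeping: handling the doubly infinite index set (observations $\bm w_{T_0+1},\dots,\bm w_{T_0+T_1}$ depend on $\bm\varepsilon_s$ for all $s \le T_0+T_1$, so $\bm\zeta$ is genuinely infinite), making the operator $\bm\Lambda$ well defined and trace-class, and justifying the passage from finite truncations to the limit in Hanson--Wright; the norm bounds themselves are routine once $\|\bm\Lambda\|_\op^2 \lesssim \lambda_{\max}(\underline{\bm\Sigma}_w)$ is established, which is where the $\sum_j\|\bm\Psi_j^w\|_\op<\infty$ hypothesis and a Toeplitz/spectral-density estimate do the work.
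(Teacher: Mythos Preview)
Your approach is essentially the paper's: stack the observations, write $\sum_t\|\bm{M}\bm{w}_t\|_2^2$ as a quadratic form $\bm{\xi}^\top\bm{P}^\top\bm{P}\,\bm{\xi}$ in the i.i.d.\ sub-Gaussian vector $\bm{\xi}$, apply Hanson--Wright, and bound $\|\bm{P}^\top\bm{P}\|_{\op}$ and $\|\bm{P}^\top\bm{P}\|_{\Fr}$ via $\|\bm{P}^\top\bm{P}\|_{\Fr}^2\le\|\bm{P}^\top\bm{P}\|_{\op}\trace(\bm{P}^\top\bm{P})$. The one simplification you overlook is that, since $\var(\bm{\xi}_t)=\bm{I}_N$ under Assumption~\ref{assum:error}, the identity $\bm{\Lambda}\bm{\Lambda}^\top=\underline{\bm{\Sigma}}_w$ holds \emph{exactly} (in the paper's notation, $\underline{\bm{\Psi}}^w(\underline{\bm{\Psi}}^{w})^\top=\underline{\bm{\Sigma}}_w$), so $\|\bm{\Lambda}\|_{\op}^2=\lambda_{\max}(\underline{\bm{\Sigma}}_w)$ immediately with no spectral-density or Toeplitz detour needed; the paper likewise does not linger on the infinite-dimensional truncation issue you flag.
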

	
	\begin{lemma}[Martingale concentration inequality]\label{lemma:martgl} 
Suppose that Assumption \ref{assum:error} holds for $\{\bm{\varepsilon}_t\}$. Let 
$\mathcal{F}_t=\sigma\{\bm{\varepsilon}_t, \bm{\varepsilon}_{t-1}, \dots\}$ for $t\in\mathbb{Z}$ be a filtration.  Let $\{\bm{y}_t\}$ be a zero-mean  time series, where  $\bm{y}_t=(y_{1, t},\dots, y_{N,t})^\top \in\mathbb{R}^N$ is $\mathcal{F}_{t-1}$-measurable.
Let $T_0$ be a fixed integer,  and let $T_1$ be a fixed positive integer.  
Fix $1\leq i,j\leq N$ and $k\geq 1$. 	For any $a,b>0$, we have 
\begin{equation*}
	\mathbb{P}\left \{ \left |\sum_{t=T_0+1}^{T_0+T_1}\varepsilon_{i,t} y_{j,t-k} \right |\geq a, \; \sum_{t=T_0+1}^{T_0+T_1} y_{j,t-k}^2 \leq b\right \} \leq  2\exp\left \{-\frac{a^2}{2\sigma^2 \lambda_{\max}(\bm{\Sigma}_{\varepsilon})b}\right \}.
\end{equation*}
\end{lemma}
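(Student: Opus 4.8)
The plan is to run the standard exponential-supermartingale argument (of Freedman/Bernstein type), adapted to the situation where the conditional variance proxy is itself the random quantity $y_{j,t-k}^2$ whose partial sum appears in the conditioning event.

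First I would record the conditional sub-Gaussianity of the innovations. Writing $\bm{\varepsilon}_t = \bm{\Sigma}_\varepsilon^{1/2}\bm{\xi}_t$ as in Assumption~\ref{assum:error}, the $i$th coordinate $\varepsilon_{i,t} = (\bm{\Sigma}_\varepsilon^{1/2}\bm{e}_i)^\top\bm{\xi}_t$ is a fixed linear combination of the independent $\sigma^2$-sub-Gaussian entries of $\bm{\xi}_t$, so for every real $u$, $\mathbb{E}[e^{u\varepsilon_{i,t}}] \le \exp(u^2\sigma^2\|\bm{\Sigma}_\varepsilon^{1/2}\bm{e}_i\|_2^2/2) = \exp(u^2\sigma^2(\bm{\Sigma}_\varepsilon)_{ii}/2) \le \exp(u^2\sigma^2\lambda_{\max}(\bm{\Sigma}_\varepsilon)/2)$. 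Since $\bm{\xi}_t$ is independent of $\mathcal{F}_{t-1}=\sigma\{\bm{\varepsilon}_{t-1},\bm{\varepsilon}_{t-2},\dots\}$, the identical bound holds conditionally on $\mathcal{F}_{t-1}$.

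Next I would set up the supermartingale. Fix $\lambda>0$, write $c=\sigma^2\lambda_{\max}(\bm{\Sigma}_\varepsilon)$, and for $T_0\le t\le T_0+T_1$ put $S_t = \sum_{s=T_0+1}^{t}\varepsilon_{i,s}y_{j,s-k}$ and $V_t=\sum_{s=T_0+1}^{t}y_{j,s-k}^2$, with $S_{T_0}=V_{T_0}=0$. Because $\bm{y}_t$ is $\mathcal{F}_{t-1}$-measurable and $k\ge1$, the scalar $y_{j,t-k}$ is $\mathcal{F}_{t-1}$-measurable; conditioning on $\mathcal{F}_{t-1}$ and applying the bound above with $u=\lambda y_{j,t-k}$ gives $\mathbb{E}[e^{\lambda\varepsilon_{i,t}y_{j,t-k}}\mid\mathcal{F}_{t-1}] \le e^{\lambda^2 c\, y_{j,t-k}^2/2}$. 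Hence $M_t := \exp(\lambda S_t - \tfrac{\lambda^2 c}{2}V_t)$ obeys $\mathbb{E}[M_t\mid\mathcal{F}_{t-1}]\le M_{t-1}$, i.e.\ $(M_t)$ is a nonnegative supermartingale with $M_{T_0}=1$, so in particular $\mathbb{E}[M_{T_0+T_1}]\le1$ (no optional stopping is needed since $T_1$ is deterministic).

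Finally I would convert this into the stated joint tail bound. On the event $\{S_{T_0+T_1}\ge a\}\cap\{V_{T_0+T_1}\le b\}$ one has, for $\lambda>0$, $\lambda S_{T_0+T_1}-\tfrac{\lambda^2 c}{2}V_{T_0+T_1}\ge \lambda a-\tfrac{\lambda^2 c}{2}b$, so $M_{T_0+T_1}\ge e^{\lambda a-\lambda^2 cb/2}$ there, and Markov's inequality yields $\mathbb{P}(S_{T_0+T_1}\ge a,\;V_{T_0+T_1}\le b)\le e^{-\lambda a+\lambda^2 cb/2}$; taking $\lambda = a/(cb)$ makes the exponent $-a^2/(2cb)$. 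Running the identical argument with $-\varepsilon_{i,t}$ in place of $\varepsilon_{i,t}$ (its conditional MGF bound is the same) controls $\{S_{T_0+T_1}\le -a\}$, and adding the two probabilities produces the factor $2$ and exactly the claimed inequality. I do not expect a real obstacle; the only points needing care are the measurability bookkeeping — $y_{j,t-k}$ being $\mathcal{F}_{t-1}$-measurable and $\bm{\xi}_t$ being independent of $\mathcal{F}_{t-1}$ — and tracking the direction of the supermartingale inequality, both routine.
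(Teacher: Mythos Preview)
Your proposal is correct and is essentially the same approach as the paper's: the paper simply observes that $\varepsilon_{i,t}$ is $\sigma^2\lambda_{\max}(\bm{\Sigma}_\varepsilon)$-sub-Gaussian and then invokes Lemma~4.2 of \cite{simchowitz2018learning} as a black box, whereas you spell out the exponential-supermartingale argument that underlies that cited lemma. Your measurability bookkeeping and the sub-Gaussian parameter computation match the paper's one-line claim exactly.
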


\begin{lemma}[Bounds for covariance matrices of stationary time series]\label{lemma:Wcov}
Suppose that Assumption \ref{assum:error} holds for $\{\bm{\varepsilon}_t\}$, and  
$\{\bm{y}_t\}$ has the VMA($\infty$) representation,  $\bm{y}_t =\bm{\Psi}_*(B)\bm{\varepsilon}_{t}$, where $\bm{\Psi}_*(B) = \sum_{j=0}^{\infty}\bm{\Psi}_j^* B^j$,  $B$ is the backshift operator, $\bm{\Psi}_0^*=\bm{I}_N$, and  $\sum_{j=0}^{\infty}\|\bm{\Psi}_j^* \|_{\op}<\infty$.
Let 
\[
\kappa_1=	\lambda_{\min}(\bm{\Sigma}_\varepsilon)\mu_{\min}(\bm{\Psi}_*) \quad\text{and}\quad \kappa_2=\lambda_{\max}(\bm{\Sigma}_\varepsilon)\mu_{\max}(\bm{\Psi}_*),
\] 
where
$\mu_{\min}(\bm{\Psi}_*) = \min_{|z|=1}\lambda_{\min}(\bm{\Psi}_*(z)\bm{\Psi}_*^{\HH}(z))$, $\mu_{\max}(\bm{\Psi}_*) = \max_{|z|=1}\lambda_{\max}(\bm{\Psi}_*(z)\bm{\Psi}_*^{\HH}(z))$, and $\bm{\Psi}_*^{\HH}(z)$ is the conjugate transpose of $\bm{\Psi}_*(z)$.
\begin{itemize}
	\item [(i)] It holds
	\begin{equation*}
		\kappa_1 \leq \lambda_{\min}(\underline{\bm{\Sigma}}_y)\leq \lambda_{\max}(\underline{\bm{\Sigma}}_y) \leq \kappa_2
		\quad \text{and}\quad
		\kappa_1 \leq \lambda_{\min}(\bm{\Sigma}_y)\leq \lambda_{\max}(\bm{\Sigma}_y) \leq  \kappa_2.
	\end{equation*}
	\item [(ii)]  Define  the time series $\{\bm{w}_t\}$ by
	$\bm{w}_t=\bm{U}\bm{x}_t=\sum_{i=1}^\infty \bm{U}_i\bm{y}_{t-i}$,
	where $\bm{x}_t=(\bm{y}_{t-1}^\top, \bm{y}_{t-2}^\top, \dots)^\top$, $\bm{U}=(\bm{U}_1, \bm{U}_2, \dots)\in\mathbb{R}^{M\times \infty}$, and $\bm{U}_i$'s are $M\times N$ blocks such that $\sum_{i=1}^{\infty}\|\bm{U}_i\|_{\op}<\infty$. Then, $\{\bm{w}_t\}$ is a zero-mean stationary time series. Moreover,
	\begin{equation}\label{eq:sigmaw0}
		\kappa_1 \sigma_{\min}^2(\bm{U})   \leq \lambda_{\min}(\bm{\Sigma}_w) \leq  \lambda_{\max}(\bm{\Sigma}_w)  \leq \kappa_2 \sigma_{\max}^2(\bm{U})
	\end{equation}
	and 
	\begin{equation}\label{eq:sigmaw}
		\lambda_{\max}(\underline{\bm{\Sigma}}_w)\leq \kappa_2 \left (\sum_{i=1}^\infty\|\bm{U}_i\|_{\op}\right )^2.
	\end{equation}
\end{itemize}
\end{lemma}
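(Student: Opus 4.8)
The plan is to work in the frequency domain, following the classical analysis of \cite{basu2015regularized}. Let $f_y(\theta) = (2\pi)^{-1}\bm{\Psi}_*(e^{-i\theta})\bm{\Sigma}_\varepsilon\bm{\Psi}_*^{\HH}(e^{-i\theta})$ be the spectral density of $\{\bm{y}_t\}$; since $\sum_{j\geq0}\|\bm{\Psi}_j^*\|_{\op}<\infty$, the series $\bm{\Psi}_*(e^{-i\theta})$ converges uniformly, $f_y(\cdot)$ is continuous, and $\bm{\Sigma}_y(\ell) = \int_{-\pi}^{\pi} e^{i\ell\theta}f_y(\theta)\,d\theta$ for every $\ell\in\mathbb{Z}$ (a direct computation from the VMA representation). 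The first step is the pointwise eigenvalue sandwich $\kappa_1\bm{I}_N \preceq 2\pi f_y(\theta)\preceq \kappa_2\bm{I}_N$ for all $\theta$: for $\bm{u}\in\mathbb{C}^N$, setting $\bm{z} = \bm{\Psi}_*^{\HH}(e^{-i\theta})\bm{u}$ gives $\bm{u}^{\HH}\bm{\Psi}_*(e^{-i\theta})\bm{\Sigma}_\varepsilon\bm{\Psi}_*^{\HH}(e^{-i\theta})\bm{u} = \bm{z}^{\HH}\bm{\Sigma}_\varepsilon\bm{z}$, which lies between $\lambda_{\min}(\bm{\Sigma}_\varepsilon)\|\bm{z}\|_2^2$ and $\lambda_{\max}(\bm{\Sigma}_\varepsilon)\|\bm{z}\|_2^2$, while $\|\bm{z}\|_2^2 = \bm{u}^{\HH}\bm{\Psi}_*(e^{-i\theta})\bm{\Psi}_*^{\HH}(e^{-i\theta})\bm{u}$ lies between $\lambda_{\min}(\bm{\Psi}_*(e^{-i\theta})\bm{\Psi}_*^{\HH}(e^{-i\theta}))\|\bm{u}\|_2^2$ and $\lambda_{\max}(\bm{\Psi}_*(e^{-i\theta})\bm{\Psi}_*^{\HH}(e^{-i\theta}))\|\bm{u}\|_2^2$; taking the infimum and supremum over $|e^{-i\theta}|=1$ and invoking the definitions of $\mu_{\min}(\bm{\Psi}_*)$ and $\mu_{\max}(\bm{\Psi}_*)$ yields the claim.

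For part (i), fix $T$ and $\bm{v} = (\bm{v}_1^\top,\dots,\bm{v}_T^\top)^\top$. Substituting $\bm{\Sigma}_y(b-a) = \int_{-\pi}^\pi e^{i(b-a)\theta}f_y(\theta)\,d\theta$ into $\bm{v}^\top\underline{\bm{\Sigma}}_y\bm{v} = \sum_{a,b}\bm{v}_a^\top\bm{\Sigma}_y(b-a)\bm{v}_b$ gives $\bm{v}^\top\underline{\bm{\Sigma}}_y\bm{v} = \int_{-\pi}^\pi\bm{\phi}(\theta)^{\HH}f_y(\theta)\bm{\phi}(\theta)\,d\theta$ with $\bm{\phi}(\theta) = \sum_{t=1}^T e^{it\theta}\bm{v}_t$. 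Combining the sandwich with the orthogonality identity $(2\pi)^{-1}\int_{-\pi}^\pi\|\bm{\phi}(\theta)\|_2^2\,d\theta = \sum_{t}\|\bm{v}_t\|_2^2 = \|\bm{v}\|_2^2$ gives $\kappa_1\|\bm{v}\|_2^2\leq\bm{v}^\top\underline{\bm{\Sigma}}_y\bm{v}\leq\kappa_2\|\bm{v}\|_2^2$, i.e. the bounds for $\underline{\bm{\Sigma}}_y$. The bounds for $\bm{\Sigma}_y$ follow at once since $\bm{\Sigma}_y = \bm{\Sigma}_y(0) = \int_{-\pi}^\pi f_y(\theta)\,d\theta$, or equivalently because $\bm{\Sigma}_y$ is the leading $N\times N$ principal block of $\underline{\bm{\Sigma}}_y$.

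For part (ii), first note that because $\sum_{i\geq1}\|\bm{U}_i\|_{\op}<\infty$ and $\mathbb{E}\|\bm{y}_t\|_2^2 = \trace(\bm{\Sigma}_y)<\infty$ with $\{\bm{y}_t\}$ stationary, Minkowski's inequality shows the partial sums $\sum_{i=1}^n\bm{U}_i\bm{y}_{t-i}$ are Cauchy in $L^2$, so $\bm{w}_t$ is well defined, zero-mean and stationary. For \eqref{eq:sigmaw0}, write $\bm{u}^\top\bm{\Sigma}_w\bm{u} = \mathbb{E}\{((\bm{U}^\top\bm{u})^\top\bm{x}_t)^2\}$ for $\bm{u}\in\mathbb{R}^M$; applying part (i) to finite truncations of $\bm{U}^\top\bm{u}$ and passing to the limit, this lies between $\kappa_1\|\bm{U}^\top\bm{u}\|_2^2$ and $\kappa_2\|\bm{U}^\top\bm{u}\|_2^2$, and $\|\bm{U}^\top\bm{u}\|_2^2 = \bm{u}^\top(\bm{U}\bm{U}^\top)\bm{u} = \bm{u}^\top(\sum_i\bm{U}_i\bm{U}_i^\top)\bm{u}$ lies between $\sigma_{\min}^2(\bm{U})\|\bm{u}\|_2^2$ and $\sigma_{\max}^2(\bm{U})\|\bm{u}\|_2^2$. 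For \eqref{eq:sigmaw}, expand $\underline{\bm{w}}_T = \sum_{i\geq1}(\bm{I}_T\otimes\bm{U}_i)\underline{\bm{y}}_T^{(i)}$, where $\underline{\bm{y}}_T^{(i)} = (\bm{y}_{T-i}^\top,\dots,\bm{y}_{1-i}^\top)^\top$ has covariance $\underline{\bm{\Sigma}}_y$ by stationarity; for any unit $\bm{v}$, Minkowski's inequality in $L^2$ together with $\|\bm{I}_T\otimes\bm{U}_i\|_{\op} = \|\bm{U}_i\|_{\op}$ gives $(\bm{v}^\top\underline{\bm{\Sigma}}_w\bm{v})^{1/2}\leq\lambda_{\max}(\underline{\bm{\Sigma}}_y)^{1/2}\sum_{i\geq1}\|\bm{U}_i\|_{\op}$, and squaring and applying part (i) yields \eqref{eq:sigmaw}.

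I do not expect any deep difficulty: the argument is essentially bookkeeping. The main obstacle will be making the infinite-dimensional steps in part (ii) fully rigorous — verifying $L^2$-convergence of the linear filter defining $\bm{w}_t$, justifying the passage from finite sections of the semi-infinite block-Toeplitz operator $(\bm{\Sigma}_y(i-j))_{i,j\geq1}$ to its limiting spectral bounds, and checking the Kronecker-product/shift manipulations for $\underline{\bm{w}}_T$ term by term in $L^2$. A secondary point of care is the pointwise eigenvalue sandwich for the complex Hermitian matrix $\bm{\Psi}_*(e^{-i\theta})\bm{\Sigma}_\varepsilon\bm{\Psi}_*^{\HH}(e^{-i\theta})$, which must be obtained via the factorization $\bm{z} = \bm{\Psi}_*^{\HH}(e^{-i\theta})\bm{u}$ rather than by naively multiplying eigenvalues, but this is routine.
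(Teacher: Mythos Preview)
Your proposal is correct. Part (i) follows the paper exactly: both use the spectral density $f_y(\theta)=(2\pi)^{-1}\bm{\Psi}_*(e^{-i\theta})\bm{\Sigma}_\varepsilon\bm{\Psi}_*^{\HH}(e^{-i\theta})$, the pointwise sandwich $\kappa_1\bm{I}\preceq 2\pi f_y(\theta)\preceq\kappa_2\bm{I}$, and the Parseval-type identity for the quadratic form $\bm{v}^\top\underline{\bm{\Sigma}}_y\bm{v}$; the paper simply cites \cite{basu2015regularized} for the last step, while you write it out.

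Part (ii) is where you diverge. The paper stays in the frequency domain throughout: for \eqref{eq:sigmaw0} it expands $\bm{u}^\top\bm{\Sigma}_w\bm{u}=\int_{-\pi}^{\pi}\bm{u}^\top\mathcal{W}(e^{-i\theta})f_y(\theta)\mathcal{W}^{\HH}(e^{-i\theta})\bm{u}\,d\theta$ with $\mathcal{W}(z)=\sum_j\bm{U}_jz^j$, sandwiches $f_y$, and then uses orthogonality to evaluate $\int_{-\pi}^{\pi}\bm{u}^\top\mathcal{W}(e^{-i\theta})\mathcal{W}^{\HH}(e^{-i\theta})\bm{u}\,d\theta=2\pi\bm{u}^\top\bm{U}\bm{U}^\top\bm{u}$; for \eqref{eq:sigmaw} it identifies $f_w(\theta)=\mathcal{W}(e^{-i\theta})f_y(\theta)\mathcal{W}^{\HH}(e^{-i\theta})$ as the spectral density of $\{\bm{w}_t\}$ and bounds $\lambda_{\max}(\underline{\bm{\Sigma}}_w)\leq 2\pi\max_\theta\lambda_{\max}(f_w(\theta))\leq\kappa_2\max_\theta\|\mathcal{W}(e^{-i\theta})\|_{\op}^2\leq\kappa_2(\sum_i\|\bm{U}_i\|_{\op})^2$. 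Your route is purely time-domain: you recover \eqref{eq:sigmaw0} by viewing $\bm{u}^\top\bm{\Sigma}_w\bm{u}$ as a quadratic form in the semi-infinite block-Toeplitz covariance of $\bm{x}_t$ and passing finite-section bounds from part (i) to the limit, and you obtain \eqref{eq:sigmaw} via the decomposition $\underline{\bm{w}}_T=\sum_i(\bm{I}_T\otimes\bm{U}_i)\underline{\bm{y}}_T^{(i)}$ together with Minkowski in $L^2$. Your approach is slightly more elementary in that it avoids introducing the spectral density of the filtered process and reduces everything to part (i); the paper's approach is more unified and makes the role of the transfer function $\mathcal{W}(\cdot)$ explicit, which is conceptually cleaner but requires one more citation (Priestley) for $f_w$. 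Either route is fine.
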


\begin{lemma}\label{alem:rsc3aux}
Suppose that the conditions in Lemma \ref{lemma:Wcov} hold,  $T_0$ is a fixed integer, and  $T_1$  is a fixed positive integer.
For any $\bm{u}\in\mathbb{R}^N$ and  $\eta\geq 1$, if $\eta T_1 \geq c_{\HW}^{-1}\log 2$, then
\begin{equation*}
	\mathbb{P}\left \{\forall j\geq 1: \frac{1}{T_1} \sum_{t=T_0+1}^{T_0+T_1}  (\bm{u}^\top\bm{y}_{t-j})^2   \leq  \kappa_2 (\eta j \sigma^2+1) \|\bm{u}\|_2^2 \right \} \geq 1-4e^{-c_{\HW}\eta T_1},
\end{equation*}
where $c_{\HW}>0$ is the absolute constant in Lemma  \ref{lemma:hansonw}.
\end{lemma}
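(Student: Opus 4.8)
The plan is to prove the statement by combining, for each fixed lag $j$, the Hanson--Wright concentration bound of Lemma~\ref{lemma:hansonw} with the covariance bounds of Lemma~\ref{lemma:Wcov}(i), and then taking a union bound over $j\geq 1$ with a geometrically decaying tail. First I would reindex: writing $s=t-j$, we have $T_1^{-1}\sum_{t=T_0+1}^{T_0+T_1}(\bm{u}^\top\bm{y}_{t-j})^2 = T_1^{-1}\sum_{s=T_0-j+1}^{T_0-j+T_1}(\bm{u}^\top\bm{y}_s)^2$, which is exactly a quantity of the form treated in Lemma~\ref{lemma:hansonw} with $\bm{w}_t=\bm{y}_t$, $\bm{M}=\bm{u}^\top\in\mathbb{R}^{1\times N}$ (so that $\|\bm{M}\|_{\Fr}^2=\|\bm{u}\|_2^2$), starting index $T_0-j$, and window length $T_1$; here $\{\bm{y}_t\}$ satisfies the required VMA($\infty$) structure with $\sum_j\|\bm{\Psi}_j^*\|_{\op}<\infty$ by the hypothesis inherited from Lemma~\ref{lemma:Wcov}.

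Next, for a fixed $j$ I would apply Lemma~\ref{lemma:hansonw} with the deviation parameter chosen as $\eta_j=\eta j$. Since $\eta\geq 1$ and $j\geq 1$ we have $\eta_j\geq 1$, hence $\min(\eta_j,\eta_j^2)=\eta_j=\eta j$, and the lemma gives, with probability at least $1-2e^{-c_{\HW}\eta j T_1}$,
\[
\left|\frac{1}{T_1}\sum_{t=T_0+1}^{T_0+T_1}(\bm{u}^\top\bm{y}_{t-j})^2 - \mathbb{E}\{(\bm{u}^\top\bm{y}_t)^2\}\right| \leq \eta j\,\sigma^2\lambda_{\max}(\underline{\bm{\Sigma}}_y)\|\bm{u}\|_2^2 .
\]
By Lemma~\ref{lemma:Wcov}(i), $\lambda_{\max}(\underline{\bm{\Sigma}}_y)\leq\kappa_2$ and $\mathbb{E}\{(\bm{u}^\top\bm{y}_t)^2\}=\bm{u}^\top\bm{\Sigma}_y\bm{u}\leq\lambda_{\max}(\bm{\Sigma}_y)\|\bm{u}\|_2^2\leq\kappa_2\|\bm{u}\|_2^2$, so the displayed event implies $T_1^{-1}\sum_{t}(\bm{u}^\top\bm{y}_{t-j})^2\leq \kappa_2(\eta j\sigma^2+1)\|\bm{u}\|_2^2$, which is precisely the $j$th event in the intersection we want to control.

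Finally I would take a union bound: the probability that the desired inequality fails for some $j\geq 1$ is at most $\sum_{j=1}^{\infty}2e^{-c_{\HW}\eta j T_1}=2\,e^{-c_{\HW}\eta T_1}/(1-e^{-c_{\HW}\eta T_1})$. Under the hypothesis $\eta T_1\geq c_{\HW}^{-1}\log 2$ we have $e^{-c_{\HW}\eta T_1}\leq 1/2$, so $1-e^{-c_{\HW}\eta T_1}\geq 1/2$ and the sum is bounded by $4e^{-c_{\HW}\eta T_1}$, giving the claimed probability $1-4e^{-c_{\HW}\eta T_1}$. There is no real obstacle here; the only points requiring a little care are the verification that the scaling $\eta_j=\eta j$ keeps $\min(\eta_j,\eta_j^2)=\eta_j$ (which is where $\eta\geq 1$ is used) and the precise bookkeeping in the geometric sum that makes the constant $4$ and the threshold $\eta T_1\geq c_{\HW}^{-1}\log 2$ match the statement exactly.
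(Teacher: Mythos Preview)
Your proposal is correct and follows essentially the same approach as the paper's proof: apply Lemma~\ref{lemma:hansonw} for each fixed $j$ with deviation parameter $\eta j$, use Lemma~\ref{lemma:Wcov}(i) to bound $\lambda_{\max}(\underline{\bm{\Sigma}}_y)$ and $\mathbb{E}\{(\bm{u}^\top\bm{y}_t)^2\}$ by $\kappa_2$, and then take a union bound over $j\geq 1$ via the geometric series. The only cosmetic difference is that the paper writes $\bm{w}_t=\bm{y}_{t-j}$ directly, whereas you reindex with $s=t-j$ and shift the starting point to $T_0-j$; these are equivalent formulations.
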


Lastly, the proof of Lemma \ref{lemma:rsclasso} also relies on Lemma \ref{lemma:epsilon-net} below.
Let
\begin{equation*}
\bm{\Xi} = \left\{ \bm{M}(\bm{\phi}) \in  \mathbb{R}^{N \times N (r+2s)} \mid \bm{\phi} \in \bm{\Phi} \right\}\quad \text{and}\quad \bm{\Xi}_1=\{\bm{M} \in \bm{\Xi}\mid \|\bm{M}  \|_{\Fr} = 1\},
\end{equation*}
where $\bm{M}(\bm{\phi})$ is defined as in Section \ref{subsec:notations}. The following definition is used in Lemma \ref{lemma:epsilon-net}.

\begin{definition}[Generalized $\epsilon$-net of $\bm{\Xi}_1$]\label{def}
For any $\epsilon>0$, we say that
$\bm{\bar{\Xi}}(\epsilon)$ is a generalized $\epsilon$-net of $\bm{\Xi}_1$ if $\bm{\bar{\Xi}}(\epsilon)\subset\bm{\Xi}$, and for any $\bm{M}(\bm{\phi}) \in \bm{\Xi}_1$, there exists $\bm{M}(\bm{\bar{\phi}})\in \bm{\bar{\Xi}}(\epsilon)$ such that $\| \bm{M}(\bm{\phi})- \bm{M}(\bm{\bar{\phi}})\|_{\Fr} \leq \epsilon$. However,  $\bm{\bar{\Xi}}(\epsilon)$ is not required to be a subset of $\bm{\Xi}_1$; that is, $\bm{\bar{\Xi}}(\epsilon)$ may not be an $\epsilon$-net of $\bm{\Xi}_1$.
\end{definition}

\begin{lemma}[Covering number and discretization for $\bm{\Xi}_1$]\label{lemma:epsilon-net}
For any $0<\epsilon<1$, let $\bm{\bar{\Xi}}(\epsilon)$ be a minimal generalized $\epsilon$-net of $\bm{\Xi}_1$ in the Frobenius norm. 
\begin{itemize}
	\item [(i)] The cardinality of $\bm{\bar{\Xi}}(\epsilon)$ satisfies
	\[
	\log |\bm{\bar{\Xi}}(\epsilon)| \leq (r+2s)\log\{3u_\phi/(l_\phi \epsilon)\}, 
	\]
	where $l_\phi=(\sqrt{2}\overline{\alpha}_\ma)^{-1} \min_{1\leq k\leq s}\gamma_{k}^*$ and $u_\phi=\underline{\alpha}_\ma^{-1}$.
	
	\item[(ii)] For any $\bm{M}\in \bm{\bar{\Xi}}(\epsilon)$, it holds $l_\phi/u_\phi\leq \|\bm{M}\|_{\Fr}\leq u_\phi/l_\phi$.
	
	\item[(iii)] For any matrix $\bm{V} \in \mathbb{R}^{N(r+2s) \times T}$, it holds 
	\begin{align*} 
		\sup_{ \bm{M}\in \bm{\Xi}_1} \|\bm{M} \bm{V}\|_{\Fr} \leq (1- \epsilon)^{-1} \max_{ \bm{M} \in \bm{\bar{\Xi}}(\epsilon)}\|\bm{M} \bm{V}\|_{\Fr}.
	\end{align*}
\end{itemize} 
\end{lemma}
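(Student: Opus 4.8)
The plan is to reduce the entire lemma to a covering problem in the low-dimensional parameter space $\mathbb{R}^{r+2s}$, exploiting that $\bm{\phi}\mapsto\bm{M}(\bm{\phi})$ is a \emph{linear} map whose Frobenius-norm distortion is pinned down by \eqref{eq:DFr}, namely $l_\phi\|\bm{M}(\bm{\phi})\|_{\Fr}\le\|\bm{\phi}\|_2\le u_\phi\|\bm{M}(\bm{\phi})\|_{\Fr}$ with $l_\phi=(\sqrt{2}\overline{\alpha}_\ma)^{-1}\min_{1\le k\le s}\gamma_{k}^*$ and $u_\phi=\underline{\alpha}_\ma^{-1}$ (and $l_\phi<u_\phi$ since $\overline{\alpha}_\ma\ge\underline{\alpha}_\ma$ and $\gamma_k^*<1$). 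Writing $\mathcal{S}=\{\bm{\phi}\in\bm{\Phi}:\|\bm{M}(\bm{\phi})\|_{\Fr}=1\}$, so that $\bm{\Xi}_1=\bm{M}(\mathcal{S})$, we have $\mathcal{S}\subseteq\{l_\phi\le\|\bm{\phi}\|_2\le u_\phi\}$, and by linearity $\|\bm{M}(\bm{\phi})-\bm{M}(\bm{\phi}')\|_{\Fr}=\|\bm{M}(\bm{\phi}-\bm{\phi}')\|_{\Fr}\le l_\phi^{-1}\|\bm{\phi}-\bm{\phi}'\|_2$; hence $\bm{M}$ carries any $(l_\phi\epsilon)$-net of $\mathcal{S}$ to an $\epsilon$-net of $\bm{\Xi}_1$ in the Frobenius norm.

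For part (i), I would take a minimal $(l_\phi\epsilon)$-net $\mathcal{N}$ of $\mathcal{S}$ with $\mathcal{N}\subseteq\mathcal{S}$; the Euclidean balls $\{B(\bm{\phi},l_\phi\epsilon/2):\bm{\phi}\in\mathcal{N}\}$ are pairwise disjoint and contained in $B(\bm{0},u_\phi+l_\phi\epsilon/2)$, so a volume comparison in $\mathbb{R}^{r+2s}$ yields $|\mathcal{N}|\le(1+2u_\phi/(l_\phi\epsilon))^{r+2s}\le(3u_\phi/(l_\phi\epsilon))^{r+2s}$, the last step using $l_\phi\epsilon<l_\phi<u_\phi$. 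Then $\bm{M}(\mathcal{N})\subseteq\bm{\Xi}_1\subseteq\bm{\Xi}$ is a generalized $\epsilon$-net in the sense of Definition \ref{def}, so the \emph{minimal} generalized $\epsilon$-net $\bm{\bar{\Xi}}(\epsilon)$ has cardinality at most $|\mathcal{N}|$, which gives $\log|\bm{\bar{\Xi}}(\epsilon)|\le(r+2s)\log\{3u_\phi/(l_\phi\epsilon)\}$. For part (ii), minimality forces every $\bm{M}\in\bm{\bar{\Xi}}(\epsilon)$ to lie within Frobenius distance $\epsilon$ of at least one element of $\bm{\Xi}_1$ (otherwise it could be deleted), so $1-\epsilon\le\|\bm{M}\|_{\Fr}\le1+\epsilon$; combined with $l_\phi/u_\phi<1<u_\phi/l_\phi$ and $\epsilon<1$, together with a (mild) sharpening of the lower bound using that the net may be taken of the form $\bm{M}(\mathcal{N})$ with $\mathcal{N}$ a net of the full shell $\{l_\phi\le\|\bm{\phi}\|_2\le u_\phi\}$ — in which case \eqref{eq:DFr} applied entrywise gives $\|\bm{M}\|_{\Fr}\in[l_\phi/u_\phi,u_\phi/l_\phi]$ directly — one obtains $l_\phi/u_\phi\le\|\bm{M}\|_{\Fr}\le u_\phi/l_\phi$.

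For part (iii) I would run the standard discretization argument. Fix any $\bm{V}$, let $\bm{M}\in\bm{\Xi}_1$ and pick $\bm{\bar{M}}\in\bm{\bar{\Xi}}(\epsilon)$ with $\|\bm{M}-\bm{\bar{M}}\|_{\Fr}\le\epsilon$; then $\|\bm{M}\bm{V}\|_{\Fr}\le\|\bm{\bar{M}}\bm{V}\|_{\Fr}+\|(\bm{M}-\bm{\bar{M}})\bm{V}\|_{\Fr}$. Since $\bm{M}-\bm{\bar{M}}=\bm{M}(\bm{\phi}-\bm{\bar{\phi}})$ lies in the linear span of $\bm{\Xi}$, and $\bm{\Xi}_1$ contains (a dilate of) the unit sphere of that span — this uses that $\bm{\Phi}$ contains a neighbourhood of $\bm{0}$ and the homogeneity $\bm{M}(c\bm{\phi})=c\bm{M}(\bm{\phi})$ — one gets $\|(\bm{M}-\bm{\bar{M}})\bm{V}\|_{\Fr}\le\|\bm{M}-\bm{\bar{M}}\|_{\Fr}\sup_{\bm{M}'\in\bm{\Xi}_1}\|\bm{M}'\bm{V}\|_{\Fr}\le\epsilon\sup_{\bm{M}'\in\bm{\Xi}_1}\|\bm{M}'\bm{V}\|_{\Fr}$. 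Taking the supremum over $\bm{M}\in\bm{\Xi}_1$ and rearranging gives $\sup_{\bm{M}\in\bm{\Xi}_1}\|\bm{M}\bm{V}\|_{\Fr}\le(1-\epsilon)^{-1}\max_{\bm{\bar{M}}\in\bm{\bar{\Xi}}(\epsilon)}\|\bm{\bar{M}}\bm{V}\|_{\Fr}$.

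The main obstacle is the mismatch between the constrained parameter set $\bm{\Phi}$ and the image set $\bm{\Xi}_1$: because $\bm{\Xi}_1$ is the image of $\bm{\Phi}$ rather than of all of $\mathbb{R}^{r+2s}$, it need not be exactly the unit sphere of $\mathrm{span}(\bm{\Xi})$, and step (iii) relies on renormalized differences $\bm{M}(\bm{\phi}-\bm{\bar{\phi}})/\|\bm{M}(\bm{\phi}-\bm{\bar{\phi}})\|_{\Fr}$ again belonging to $\bm{\Xi}_1$. Making that rigorous — via interiority of $\bm{\omega}^*$ in $\bm{\Omega}$ and the homogeneity of $\bm{M}(\cdot)$ — is the one point requiring care; the remaining work is routine bookkeeping of the constants $l_\phi$ and $u_\phi$ against \eqref{eq:DFr}.
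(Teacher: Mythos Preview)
Your proposal is correct and follows essentially the same route as the paper: construct the net as the image under $\bm{M}(\cdot)$ of a low-dimensional $(l_\phi\epsilon)$-net in $\mathbb{R}^{r+2s}$, use \eqref{eq:DFr} for the Lipschitz bound and the volumetric covering number, and run the standard discretization for (iii) via $(\bm{M}-\bar{\bm{M}})/\|\bm{M}-\bar{\bm{M}}\|_{\Fr}\in\bm{\Xi}_1$. The only cosmetic difference is that the paper nets the full shell $\bm{\Pi}=\{l_\phi\le\|\bm{\phi}\|_2\le u_\phi\}$ rather than $\mathcal{S}$, and then reads (ii) off directly from \eqref{eq:DFr} applied to $\bm{\bar{\Pi}}(\epsilon)\subset\bm{\Pi}$; your first construction with $\mathcal{N}\subseteq\mathcal{S}$ actually makes (ii) trivial since then $\bm{M}(\mathcal{N})\subseteq\bm{\Xi}_1$ and every element has Frobenius norm exactly $1$ --- so the ``mild sharpening'' you propose is unnecessary. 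Your caveat about whether the renormalized difference in (iii) lands back in $\bm{\Xi}_1$ (i.e., whether the rescaled $\bm{\phi}-\bm{\bar{\phi}}$ lies in $\bm{\Phi}$) is well taken; the paper simply asserts this, implicitly relying on the same interiority and homogeneity you mention.
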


\subsection{Proofs of Lemmas \ref{lemma:hansonw}--\ref{lemma:epsilon-net}}\label{asec:auxproof}

\begin{proof}[Proof of Lemma \ref{lemma:hansonw}]
First it is obvious that $\{\bm{w}_t\}$  is a zero-mean stationary time series. Without loss of generality, we let $T_0=0$ and $T_1=T$ in what follows. 

Under Assumption \ref{assum:error}, $\bm{\varepsilon}_{t}=	\bm{\Sigma}_\varepsilon^{1/2}\bm{\xi}_t$, and  all coordinates of the vector $\bm{\xi} = (\bm{\xi}_{T-1}^\top, \bm{\xi}_{T-2}^\top, \dots)^\top$ are independent and $\sigma^2$-sub-Gaussian with mean zero and variance one.
In addition, by the vector MA($\infty$) representation of $\bm{w}_t$, we have  $\underline{\bm{w}}_T = \underline{\bm{\Psi}}^w\bm{\xi}$, 
where
\begin{align*}
	\underset{TM \times \infty }{\underline{\bm{\Psi}}^w}= \left( \begin{matrix}
		\bm{\Psi}_1^w \bm{\Sigma}_\varepsilon^{1/2}&\bm{\Psi}_2^w \bm{\Sigma}_\varepsilon^{1/2} &\bm{\Psi}_3^w \bm{\Sigma}_\varepsilon^{1/2}&\cdots&\bm{\Psi}_{T}^w \bm{\Sigma}_\varepsilon^{1/2}&\cdots\\
		&\bm{\Psi}_1^w \bm{\Sigma}_\varepsilon^{1/2}&\bm{\Psi}_2^w \bm{\Sigma}_\varepsilon^{1/2}&\cdots&\bm{\Psi}_{T-1}^w \bm{\Sigma}_\varepsilon^{1/2}&\cdots\\
		&&\ddots&&\vdots&&\\
		&&&&\bm{\Psi}_1^w \bm{\Sigma}_\varepsilon^{1/2} &\cdots
	\end{matrix}\right).
\end{align*}
Then, it holds
\begin{equation}\label{eq:varw}
	\underline{\bm{\Sigma}}_w=\mathbb{E}(\underline{\bm{w}}_T\underline{\bm{w}}_T^\top)=\underline{\bm{\Psi}}^w(\underline{\bm{\Psi}}^{w})^\top.
\end{equation}

Define the vector $\underline{\bm{m}}_T=((\bm{M}\bm{w}_T)^\top, \dots,  (\bm{M}\bm{w}_1)^\top)^\top=(\bm{I}_T\otimes \bm{M})\underline{\bm{w}}_T$.  Then $\underline{\bm{m}}_T=\bm{P}\bm{\xi}$, where   $\bm{P}=(\bm{I}_T\otimes \bm{M})  \underline{\bm{\Psi}}^w$. As a result,
$\sum_{t=1}^{T}\|\bm{M}\bm{w}_t\|_2^2=\underline{\bm{m}}_T^\top \underline{\bm{m}}_T =\bm{\xi}^\top\bm{P}^\top\bm{P}\bm{\xi}$. Similar to \eqref{eq:varw}, it follows from the Hanson-Wright inequality that
for any $\iota>0$,
\begin{equation}\label{eq:hansonw2}
	\mathbb{P}\left ( \left |\sum_{t=1}^{T} \|\bm{M} \bm{w}_t\|_2^2 - T\mathbb{E}\left (\|\bm{M}\bm{w}_t\|_2^2\right )\right | \geq \iota\right ) \leq 2\exp\left\{ - c_{\HW} \min\left( \frac{\iota}{\sigma^2\|\bm{P}^\top\bm{P}\|_{\op}}, \frac{\iota^2}{\sigma^4 \|\bm{P}^\top\bm{P}\|_{\Fr}^2}\right)\right\}.
\end{equation}
By \eqref{eq:varw}, we have
$\|\bm{P}^\top\bm{P}\|_{\op}=\|\bm{P}\bm{P}^\top\|_{\op}\leq 
\|\bm{M}\bm{M}^\top\|_{\op} \| \underline{\bm{\Psi}}^w(\underline{\bm{\Psi}}^{w})^\top\|_{\op}
\leq \lambda_{\max}(\underline{\bm{\Sigma}}_w)\|\bm{M}\|_{\Fr}^2$. Moreover, 
\begin{align*}
	\trace(\bm{P}^\top\bm{P})=\trace(\bm{P}\bm{P}^\top)&=\trace\{(\bm{I}_T\otimes \bm{M})\underline{\bm{\Sigma}}_w (\bm{I}_T\otimes \bm{M}^\top)\}\\
	&= \textrm{vec}(\bm{I}_T\otimes \bm{M})^\top(\underline{\bm{\Sigma}}_w\otimes\bm{I}_{TQ})\textrm{vec}(\bm{I}_T\otimes \bm{M})
	\leq T\lambda_{\max}(\underline{\bm{\Sigma}}_w) \|\bm{M}\|_{\Fr}^2,
\end{align*}
where the second equality follows from \eqref{eq:varw}. As a result,
\[
\|\bm{P}^\top\bm{P}\|_{\Fr}\leq \sqrt{\|\bm{P}^\top\bm{P}\|_{\op}\trace(\bm{P}^\top\bm{P})}
\leq 
\sqrt{\|\bm{P}\bm{P}^\top\|_{\op}\trace(\bm{P}\bm{P}^\top)} \leq  \sqrt{T} \lambda_{\max}(\underline{\bm{\Sigma}}_w) \|\bm{M}\|_{\Fr}^2.
\]
Taking $\iota=\eta\sigma^2T\lambda_{\max}(\underline{\bm{\Sigma}}_w) \|\bm{M}\|_{\Fr}^2$ in \eqref{eq:hansonw2}, the proof of this lemma is complete.
\end{proof}


\begin{proof}[Proof of Lemma \ref{lemma:martgl}]
By Assumption \ref{assum:error}, $\varepsilon_{i,t}$ is $\sigma^2 \lambda_{\max}(\bm{\Sigma}_{\varepsilon})$-sub-Gaussian. Then, the result follows from  Lemma 4.2 in \cite{simchowitz2018learning}.
\end{proof}


\begin{proof}[Proof of Lemma \ref{lemma:Wcov}]
\noindent\textbf{Proof of (i):}  Consider the spectral density of $\{\bm{y}_t\}$,
\begin{equation*}
	\bm{f}_y(\theta) = (2\pi)^{-1} \bm{\Psi}_*(e^{-i\theta})\bm{\Sigma}_{\varepsilon}\bm{\Psi}_*^{\HH}(e^{-i\theta}), \hspace{5mm}\theta\in [-\pi, \pi].
\end{equation*}
Let 
\[
\mathpzc{M}(\bm{f}_y) = \max_{\theta\in[-\pi,\pi]} \lambda_{\max}(\bm{f}_y(\theta))
\quad\text{and}\quad
\mathpzc{m}(\bm{f}_y) = \min_{\theta\in[-\pi,\pi]} \lambda_{\min}(\bm{f}_y(\theta))
\]
Along the lines of \cite{basu2015regularized},  it holds 
\[
2\pi \mathpzc{m}(\bm{f}_y) \leq \lambda_{\min}(\underline{\bm{\Sigma}}_y)\leq \lambda_{\max}(\underline{\bm{\Sigma}}_y) \leq  2\pi \mathpzc{M}(\bm{f}_y),
\]
\[
2\pi \mathpzc{m}(\bm{f}_y) \leq \lambda_{\min}(\bm{\Sigma}_y)\leq \lambda_{\max}(\bm{\Sigma}_y) \leq  2\pi \mathpzc{M}(\bm{f}_y),
\]
and
\begin{equation}\label{eq:fy}
	\lambda_{\min}(\bm{\Sigma}_\varepsilon)\mu_{\min}(\bm{\Psi}_*) \leq 2\pi \mathpzc{m}(\bm{f}_y)\leq 2\pi \mathpzc{M}(\bm{f}_y) \leq \lambda_{\max}(\bm{\Sigma}_\varepsilon)\mu_{\max}(\bm{\Psi}_*);
\end{equation}
see Proposition 2.3 therein.
Thus, (i) is proved.

\smallskip
\noindent\textbf{Proof of (ii):} Since $\sum_{i=1}^{\infty}\|\bm{U}_i\|_{\op}<\infty$ and $\{\bm{y}_t\}$ is   stationary with mean zero, the time series $\bm{w}_t=\mathcal{W}(B)\bm{y}_t=\mathcal{W}(B)\bm{\Psi}_*(B)\bm{\varepsilon}_{t}$ is also zero-mean and stationary, where $\mathcal{W}(B) = \sum_{i=1}^{\infty}\bm{U}_i B^i$.

For any $\ell\in \mathbb{Z}$, denote by $\bm{\Sigma}_y (\ell) = \mathbb{E}(\bm{y}_t\bm{y}_{t-\ell}^\top)$ the lag-$\ell$ covariance matrix of $\bm{y}_t$, and then $\bm{\Sigma}_y(\ell) = \int_{-\pi}^{\pi} \bm{f}_y(\theta) e^{i \ell \theta} d\theta$. 
For any fixed $\bm{u} \in \mathbb{R}^{N}$ with $\|\bm{u}\|_2 = 1$, 
\begin{align}\label{eq:eigenW}
	\bm{u}^\top \bm{\Sigma}_w \bm{u} &= \bm{u}^\top \mathbb{E}\left (\sum_{j=1}^{\infty} \bm{U}_{j}\bm{y}_{t-j} \sum_{k=1}^{\infty} \bm{U}^\top_{k}\bm{y}_{t-k} \right )\bm{u}  \notag\\
	& =\bm{u}^\top \sum_{j=1}^{\infty} \sum_{k=1}^{\infty} \bm{U}_{j}\bm{\Sigma}_y(k-j) \bm{U}^\top_{k} \bm{u} \notag\\
	& = \int_{-\pi}^{\pi} \sum_{j=1}^{\infty} \sum_{k=1}^{\infty} \bm{u}^\top  \bm{U}_j\bm{f}_y(\theta)  e^{-i(j-k)\theta} \bm{U}_k^\top \bm{u} \, d\theta \notag\\
	& = \int_{-\pi}^{\pi} \bm{u}^\top \mathcal{W}(e^{-i\theta}) \bm{f}_y(\theta)  \mathcal{W}^{\HH}(e^{-i\theta}) \bm{u} \, d\theta,
\end{align}
where  $\mathcal{W}(z) = \sum_{j=1}^{\infty}\bm{U}_jz^j$ for $z\in \mathbb{C}$, and $\mathcal{W}^{\HH}(e^{-i\theta})=\big\{ \mathcal{W}(e^{i\theta}) \big\}^\top$ is the  conjugate transpose of $\mathcal{W}(e^{-i\theta})$.
Since $\bm{f}_y(\theta)$ is Hermitian, $\bm{u}^\top \mathcal{W}(e^{-i\theta}) \bm{f}_y(\theta)  \mathcal{W}^{\HH}(e^{-i\theta})\bm{u}$ is  real for all $\theta \in [-\pi, \pi]$. Then it is easy to see that
\[
\mathpzc{m}(\bm{f}_y) \cdot  \bm{u}^\top \mathcal{W}(e^{-i\theta}) \mathcal{W}^{\HH}(e^{-i\theta}) \bm{u}  \leq
\bm{u}^\top \mathcal{W}(e^{-i\theta}) \bm{f}_y(\theta) \mathcal{W}^{\HH}(e^{-i\theta}) \bm{u}
\leq \mathpzc{M}(\bm{f}_y) \cdot \bm{u}^\top \mathcal{W}(e^{-i\theta}) \mathcal{W}^{\HH}(e^{-i\theta}) \bm{u}.
\]
Moreover, since $\int_{-\pi}^{\pi} e^{i\ell\theta} d \theta = 0$ for any $\ell\neq0$, we can show that
\begin{align*}
	\int_{-\pi}^{\pi}\bm{u}^\top \mathcal{W}(e^{-i\theta}) \mathcal{W}^{\HH}(e^{-i\theta}) \bm{u} \, d\theta 
	&	=\int_{-\pi}^{\pi} \sum_{j=1}^{\infty} \sum_{k=1}^{\infty} \bm{u}^\top  \bm{U}_j e^{-i(j-k)\theta} \bm{U}_k^\top \bm{u} \, d\theta \\
	&	= 2\pi \bm{u}^\top  \bm{U} \bm{U}^\top \bm{u}.
\end{align*}
which, together with the fact of $\|\bm{u}\|_2 = 1$, implies that
\begin{align} \label{eq:WW}
	2\pi \sigma_{\min}^2(\bm{U}) \leq \int_{-\pi}^{\pi}\bm{u}^\top \mathcal{W}(e^{-i\theta}) \mathcal{W}^{\HH}(e^{-i\theta})  \bm{u} \, d\theta  \leq 2\pi \sigma_{\max}^2(\bm{U}).
\end{align}
In view of \eqref{eq:fy}--\eqref{eq:WW}, we accomplish the proof of \eqref{eq:sigmaw0}.

To verify \eqref{eq:sigmaw}, note that   the spectral density of $\{\bm{w}_t\}$ is  
\[
\bm{f}_w(\theta) = \mathcal{W}(e^{-i\theta})\bm{f}_y(\theta)\mathcal{W}^{\HH}(e^{-i\theta}), \hspace{5mm}\theta\in [-\pi, \pi];\]  
see Section 9.2 of \cite{Priestley81}. Then 
\begin{align*}
	\mathpzc{M}(\bm{f}_w)=\max_{\theta\in[-\pi,\pi]} \lambda_{\max}(\bm{f}_w(\theta)) & \leq \mathpzc{M}(\bm{f}_y) \max_{\theta\in[-\pi,\pi]} \lambda_{\max}\{\mathcal{W}(e^{-i\theta})\mathcal{W}^{\HH}(e^{-i\theta})\}\\
	&= \mathpzc{M}(\bm{f}_y) \max_{\theta\in[-\pi,\pi]} \left \| \sum_{j=1}^{\infty}  \bm{U}_j e^{-ij\theta} \right \|_{\op}^2\\
	&\leq \mathpzc{M}(\bm{f}_y) \left (\sum_{j=1}^{\infty}  \|\bm{U}_j\|_{\op} \right )^2 
\end{align*}
In addition, by a method similar to the proof of Proposition 2.3 in \cite{basu2015regularized}, we can show that 
\[
\lambda_{\max}(\underline{\bm{\Sigma}}_w)\leq 2\pi \mathpzc{M}(\bm{f}_w).
\]
Combining the above results with \eqref{eq:fy}, the proof of \eqref{eq:sigmaw} is complete.
\end{proof}


\begin{proof}[Proof of Lemma \ref{alem:rsc3aux}]
We first fix $j\geq 1$. Applying Lemma \ref{lemma:hansonw}(ii) with $\bm{M}=\bm{u}^\top$ and $\bm{w}_t=\bm{y}_{t-j}$, together with the result
\[
\lambda_{\max}(\underline{\bm{\Sigma}}_w)=\lambda_{\max}(\underline{\bm{\Sigma}}_y)\leq  \kappa_2
\]
as implied by Lemma \ref{lemma:Wcov}(i), we can show that 
\begin{equation*}
	\mathbb{P}\left \{\left |\frac{1}{T_1} \sum_{t=T_0+1}^{T_0+T_1} (\bm{u}^\top\bm{y}_{t-j})^2 -\mathbb{E}\{(\bm{u}^\top\bm{y}_{t-j})^2\}\right | \geq  \eta j \sigma^2\kappa_2  \|\bm{u}\|_2^2 \right \} \leq 2e^{-c_{\HW}\min(\eta j, \eta^2 j^2)T_1}= 2e^{-c_{\HW} j\eta  T_1}.
\end{equation*}
holds for any $\eta>0$. In addition, by Lemma \ref{lemma:Wcov}(i),
\[
\mathbb{E}\{(\bm{u}^\top\bm{y}_{t-j})^2\}\leq \lambda_{\max}(\bm{\Sigma}_y)  \|\bm{u}\|_2^2 \leq  \kappa_2 \|\bm{u}\|_2^2.
\]
Thus, we further have
\begin{equation*}
	\mathbb{P}\left \{\frac{1}{T_1} \sum_{t=T_0+1}^{T_0+T_1} (\bm{u}^\top\bm{y}_{t-j})^2 \geq  \kappa_2 (\eta j \sigma^2+1)\right \} \leq 2e^{-cj \eta  T_1}.
\end{equation*}
By considering the union bound over all $j\geq1$, we have
\[
\mathbb{P}\left \{\exists j\geq 1: \frac{1}{T_1} \sum_{t=T_0+1}^{T_0+T_1} (\bm{u}^\top\bm{y}_{t-j})^2  \geq  \kappa_2 (\eta j \sigma^2+1) \right \} \leq \sum_{j=1}^{\infty}2e^{-cj\eta T_1}\leq 4e^{-c_{\HW}\eta T_1},
\]
if $\eta T_1\geq c_{\HW}^{-1}\log 2$.  The proof is complete.
\end{proof}
\begin{proof}[Proof of Lemma \ref{lemma:epsilon-net}]
\textbf{Proof of (i):}	
Note that if $\|\bm{M}(\bm{\phi})\|_{\Fr}=1$, it follows from \eqref{eq:DFr} that
$l_\phi \leq \|\bm{\phi}\|_2\leq u_\phi$. This implies
$\bm{\Xi}_1\subset \{\bm{M}(\bm{\phi})\mid \bm{\phi} \in \bm{\Pi}\}$,
where
\[
\bm{\Pi}=\{\bm{\phi}\in\mathbb{R}^{r+2s}\mid l_\phi \leq \|\bm{\phi}\|_2\leq u_\phi\}.
\]
Hence, the problem of covering $\bm{\Xi}_1$ can be converted into that of covering $\bm{\Pi}$.

For any fixed $\epsilon >0 $,  let $\bm{\bar{\Pi}}(\epsilon)$ be a minimal $(l_\phi \epsilon)$-net for $\bm{\Pi}$ in the Euclidean norm. Denote
\[
\bm{\bar{\Xi}}(\epsilon) = \left\{ \bm{M}(\bm{\phi}) \in  \mathbb{R}^{N \times N  (r+2s)} \mid  \bm{\phi}  \in \bm{\bar{\Pi}}(\epsilon) \right\}.
\]
Thus,  for every $\bm{M}(\bm{\phi}) \in \bm{\Xi}_1$, there exists $\bm{M}( \bm{\bar{\phi}})\in \bm{\bar{\Xi}}(\epsilon)$ with    $\bm{\bar{\phi}} \in \bm{\bar{\Pi}}(\epsilon)$ such that $\|\bm{\phi} - \bm{\bar{\phi}}\|_2\leq l_{\phi}\epsilon$. By \eqref{eq:DFr}, we further have
\begin{align*}
	\|\bm{M}(\bm{\phi})- \bm{M}( \bm{\bar{\phi}})\|_{\Fr} = \|\bm{M}(\bm{\phi} - \bm{\bar{\phi}}) \|_{\Fr} \leq \epsilon.
\end{align*}
In addition, note that $\bm{\bar{\Xi}}(\epsilon)\subset \bm{\Xi}$. Therefore, $\bm{\bar{\Xi}}(\epsilon)$ is a generalized $\epsilon$-net of $\bm{\Xi}_1$.
Moreover, by a standard volumetric argument (see also Corollary 4.2.13 in \cite{Vershynin2018} for details), the cardinality of $\bm{\bar{\Pi}}(\epsilon)$ satisfy
\begin{align*}
	\log|\bm{\bar{\Pi}}(\epsilon)| \leq (r+2s)\log\{3u_\phi/(l_\phi \epsilon)\}.
\end{align*}
Noting that  $|\bm{\bar{\Xi}}(\epsilon)| \leq  |\bm{\bar{\Pi}}(\epsilon)|$, the proof of (i) is complete.

\smallskip\noindent
\textbf{Proof of (ii):} Since 
$\bm{\bar{\Pi}}(\epsilon)\subset\bm{\Pi}$, we have
\[\bm{\bar{\Xi}}(\epsilon)\subset \left\{ \bm{M}(\bm{\phi}) \in  \mathbb{R}^{N \times N (r+2s)} \mid  \bm{\phi}  \in \bm{\Pi}\right\}.\] 
Then by \eqref{eq:DFr}, for any $\bm{M}\in \bm{\bar{\Xi}}(\epsilon)$, it holds
\[
l_\phi/u_\phi=\underline{\alpha}_\ma l_\phi  \leq \|\bm{M}(\bm{\phi})\|_{\Fr} \leq \frac{\sqrt{2}\overline{\alpha}_\ma}{\min_{1\leq k\leq s}\gamma_{k}^*} u_\phi =u_\phi/l_\phi.
\]
Thus, (ii) is proved.

\smallskip\noindent
\textbf{Proof of (iii):} From the proof of (i), for every $\bm{M}:=\bm{M}(\bm{\phi}) \in \bm{\Xi}_1$, there exists $\bar{\bm{M}}:=\bm{M}( \bm{\bar{\phi}})\in \bm{\bar{\Xi}}(\epsilon)$ with    $\bm{\bar{\phi}} \in \bm{\bar{\Pi}}(\epsilon)$ such that $\|\bm{M}- \bar{\bm{M}}\|_{\Fr} = \|\bm{M}(\bm{\phi} - \bm{\bar{\phi}}) \|_{\Fr} \leq \epsilon$. In addition, since $\bm{M}(\bm{\phi})$ is linear in $\bm{\phi}$, we have $(\bm{M}- \bar{\bm{M}})/\|\bm{M}- \bar{\bm{M}}\|_{\Fr}=\bm{M}(\bm{\phi} - \bm{\bar{\phi}})/\|\bm{M}(\bm{\phi} - \bm{\bar{\phi}}) \|_{\Fr} \in\bm{\Xi}_1$. Then for any $\bm{M}\in\bm{\Xi}_1$, we can show that
\[
\| \bm{M}_{(1)} \bm{V}\|_{\Fr}  
\leq \| \bar{\bm{M}} \bm{V}\|_{\Fr} + \| (\bm{M}-\bar{\bm{M}}) \bm{V}\|_{\Fr} 
\leq \max_{\bar{\bm{M}} \in \bm{\bar{\Xi}}(\epsilon)} \| \bar{\bm{M}} \bm{V}\|_{\Fr}  + \epsilon\sup_{\bm{M} \in \bm{\Xi}_1} \| \bm{M} \bm{V}\|_{\Fr}.
\]
Taking supremum over all  $\bm{M} \in \bm{\Xi}_1$ on both sides, we accomplish the proof of Lemma \ref{lemma:epsilon-net}.
\end{proof}

\putbib[SparseSARMA]
\end{bibunit}

\end{document}